\newif\ifarxiv\arxivtrue  
\newcommand{\specSym}{\leadsto}
\newcommand{\flowtr}[4]{#2:\: #1\specSym #3\:[#4]} 
\newcommand{\flowty}[3]{#1\specSym #2\:[#3]} 
\newcommand{\flowtyf}[2]{#1\specSym #2} 
\newcommand{\flowtyC}[3]{#1\mathord{\specSym}#2\:[#3]} 
\newcommand{\rflowtr}[4]{#2:\: #1\rspecSym #3\:[#4]} 
\newcommand{\rflowty}[3]{#1\rspecSym #2\:[#3]} 
\newcommand{\rflowtyf}[2]{#1\rspecSym #2} 
\newcommand{\rHPflowtr}[6]{\proves^{#1}_{#2}\rflowtr{#3}{#4}{#5}{#6}} 
\newcommand{\rHVflowtr}[6]{\models^{#1}_{#2}\rflowtr{#3}{#4}{#5}{#6}} 
\newcommand{\HPflowtr}[6]{\proves^{#1}_{#2}\flowtr{#3}{#4}{#5}{#6}} 
\newcommand{\HVflowtr}[6]{\models^{#1}_{#2}\flowtr{#3}{#4}{#5}{#6}} 
\newcommand{\conjInv}{\mathbin{\owedge}}
\newcommand{\prePostImply}{\Rrightarrow} 
\newcommand{\Img}{\mbox{\large\textbf{`}}}
\newcommand{\allfields}{\mathsf{any}} 
\newcommand{\emptymod}{\emptyeff} 
\newcommand{\emptyeff}{\text{\protect\tiny$\bullet$}}
\newcommand{\eff}{\varepsilon} 
\newcommand{\effe}{\eta}
\newcommand{\wri}[1]{\keyw{wr}\,#1} 
\newcommand{\rd}[1]{\keyw{rd}\,#1} 
\newcommand{\rw}[1]{\keyw{rw}\,#1} 
\newcommand{\ol}{\overline} 
\renewcommand{\emptyset}{\varnothing}
\renewcommand{\setminus}{\backslash} 
\newcommand{\powerset}{\mathbb{P}}
\newcommand{\mifthenelse}[3]{#1\;\MIF\;#2\;\MELSE\;#3}
\newcommand{\MIF}{\mathsf{if}}
\newcommand{\MELSE}{\mathsf{else}}
\newcommand{\proves}{\vdash}
\newcommand{\aftqua}{.\:}
\newcommand{\all}[2]{\forall #1 \aftqua #2}
\newcommand{\some}[2]{\exists #1 \aftqua #2}
\newcommand{\unioneff}[2]{(\mathord{+} #1 \aftqua #2)}
\newcommand{\subst}[3]{\substScript{#1}{#2}{#3}}
\newcommand{\substScript}[3]{{#1}^{#2}_{#3}}
\newcommand{\union}{\mathbin{\mbox{\small$\cup$}}}
\newcommand{\intersect}{\mathbin{\mbox{\small$\cap$}}}
\newcommand{\imp}{\Rightarrow}
\newcommand{\eqdef}{\mathrel{\,\hat{=}\,}}
\newcommand{\means}[1]{\llbracket\, #1 \,\rrbracket}
\newcommand{\keyw}[1]{\ensuremath{\mathsf{#1}}} 
\newcommand{\Region}{\keyw{rgn}}
\newcommand{\sing}[1]{\{#1\}} 
\newcommand{\varblock}[2]{\keyw{var}~ #1 ~\keyw{in}~ #2} 
\newcommand{\NEW}{\keyw{new}}
\newcommand{\WHILE}{\keyw{while}}
\newcommand{\IF}{\keyw{if}}
\newcommand{\FI}{\keyw{fi}}
\newcommand{\DO}{\keyw{do}}
\newcommand{\OD}{\keyw{od}}
\newcommand{\bWHILE}{{\color{blue}\WHILE}}
\newcommand{\bIF}{{\color{blue}\IF}}
\newcommand{\bTHEN}{{\color{blue}\THEN}}
\newcommand{\bFI}{{\color{blue}\FI}}
\newcommand{\bDO}{{\color{blue}\DO}}
\newcommand{\bOD}{{\color{blue}\OD}}
\newcommand{\THEN}{\keyw{then}}
\newcommand{\new}[1]{\NEW\;#1}
\newcommand{\skipc}{\keyw{skip}}
\newcommand{\Endcall}{\keyw{ecall}}
\newcommand{\Endvar}{\keyw{evar}}
\newcommand{\Endlet}{\keyw{elet}} 
\newcommand{\seqc}[2]{\ensuremath{{#1}\:;{#2}}} 
\newcommand{\ifc}[3]{\keyw{if}\ {#1}\ \keyw{then}\ {#2}\ \keyw{else}\ {#3}}
\newcommand{\whilec}[2]{\keyw{while}\ {#1}\ \keyw{do}\ {#2}}
\newcommand{\BE}{\mathbin{=}}
\newcommand{\letcom}[3]{\keyw{let}~#1 \BE #2~\keyw{in}~#3} 
\newcommand{\self}{\keyw{self}}
\newcommand{\mself}{\mathit{self}} 
\newcommand{\meth}{\keyw{meth}}
\newcommand{\INT}{\keyw{int}}
\newcommand{\BOOL}{\keyw{bool}}
\newcommand{\NULL}{\keyw{null}}
\newcommand{\cfg}{\mathit{cfg}} 
\newcommand{\Align}{\mathconst{align}} 
\newcommand{\allowTo}{\mathord{\to}} 
\newcommand{\allowDep}{\mathord{\Rightarrow}} 
\newcommand{\successorTo}{\hookrightarrow}
\newcommand{\AllowedDep}[8]{#1,#2\overset{#8}\allowDep #3,#4 \models^{#6}_{#7} #5}
\newcommand{\uequiv}{\mathrel{\cong}} 
\newcommand{\trans}[1]{\mathrel{\overset{{#1}}{\protect{\raisebox{0pt}[.6ex][0pt]{$\longmapsto$}}}}}
\newcommand{\tranStar}[1]{\mathrel{\overset{{#1}}{\protect{\raisebox{0pt}[.6ex][0pt]{$\longmapsto$}}}%
              \raisebox{.6ex}[0pt][0pt]{\small$*$}}}
\newcommand{\biTrans}[1]{\mathrel{\overset{{#1}}{\protect{\raisebox{0pt}[.8ex][0pt]{$\Longmapsto$}}}}}
\newcommand{\biTranStar}[1]{\mathrel{\overset{{#1}}{\protect{\raisebox{0pt}[.8ex][0pt]{$\Longmapsto$}}}%
              \raisebox{.6ex}[0pt][0pt]{\small$*$}}}
\newcommand{\configm}[3]{\langle #1,\: #2,\: #3\rangle} 
\newcommand{\configc}[1]{\langle #1 \rangle} 
\newcommand{\mathconst}[1]{\text{\textsl{#1}}} 
\newcommand{\bnd}{\mathconst{bnd}}
\newcommand{\mdl}{\mathconst{mdl}} 
\newcommand{\topm}{\mathconst{topm}} 
\newcommand{\imports}{\preceq} %
\newcommand{\irimports}{\prec} 
\newcommand{\preOfSpec}{\mathconst{pre}} 
\newcommand{\Active}{\mathconst{Active}} 
\newcommand{\agree}{\mathconst{Agree}} 
\newcommand{\rlocs}{\mathconst{rlocs}} 
\newcommand{\wlocs}{\mathconst{wlocs}} 
\newcommand{\locations}{\mathconst{locations}} 
\newcommand{\Lagree}{\mathconst{Lagree}} 
\newcommand{\freshRefs}{\mathconst{freshRefs}} 
\newcommand{\freshLocs}{\mathconst{freshL}}
\newcommand{\snap}{\mathconst{snap}}
\newcommand{\Asnap}{\mathconst{Asnap}}
\newcommand{\Bsnap}{\mathconst{Bsnap}}
\newcommand{\Bmon}{\mathconst{Bmon}}
\newcommand{\fields}{\mathconst{Fields}} 
\newcommand{\Class}{\mathconst{DeclClass}} 
\newcommand{\fresh}{\mathconst{Fresh}} 
\newcommand{\varfresh}{\mathconst{FreshVar}} 
\newcommand{\update}[3]{[#1\, |\, #2\scol\, #3]} 
\newcommand{\scol}{\mathord{:}} 
\newcommand{\extend}[3]{[#1 \mathord{+} #2\scol\, #3]} 
\newcommand{\drop}[2]{#1\mathbin{\!\upharpoonright\!}#2} 
\newcommand{\Type}{\keyw{type}} 
\newcommand{\type}{\mathconst{Type}} 
\newcommand{\Vars}[1]{\mathconst{Vars}(#1)} 
\newcommand{\dom}{\mathconst{dom}\,} 
\newcommand{\rng}{\mathconst{rng}\,}
\newcommand{\Default}[1]{\mathconst{default}(#1)}
\newcommand{\semNull}{\mathconst{null}} 
\newcommand{\Refset}{\mathconst{Ref}}
\newcommand{\True}{\keyw{true}}
\newcommand{\False}{\keyw{false}}
\newcommand{\Rall}[1][]{\mathit{R}}
\newcommand{\Emp}{\emptyset} 
\newcommand{\lloc}{\keyw{alloc}} 
\newcommand{\old}{\keyw{old}} 
\newcommand{\disj}[2]{#1\mathbin{\mbox{\#}}#2} 
\newcommand{\ind}[2]{#1 \indSymbol #2}
\newcommand{\indSymbol}{\mathbin{\cdot\mbox{\small\textbf{/}}.}}
\newcommand{\ftpt}{\mathconst{ftpt}} 
\newcommand{\fra}[2]{#1\mathrel{\keyw{frm}}#2} 
\newcommand{\rprel}[2]{\Rprel{\pi}{#1}{#2}}
\newcommand{\Rprel}[3]{#2\stackrel{#1}{\sim}#3}
\newcommand{\RprelT}[3]{#2\stackrel{#1}{\approx}#3} 
\newcommand{\RprelTS}[3]{#2 \approxeq _{#1} #3} 
\newcommand{\configr}[5]{\langle #1,\: #2|#3,\:  #4|#5\rangle}
\newcommand{\syncbi}[1]{\lfloor #1 \rfloor}
\newcommand{\Syncbi}[1]{\llfloor #1 \rrfloor} 
\newcommand{\splitbi}[2]{(#1|#2)}
\newcommand{\Splitbi}[2]{(#1\mid#2)} 
\newcommand{\smallSplitSym}{\mbox{\tiny$|$}} 
\newcommand{\ifcbi}[3]{\keyw{if}\ {#1}\ \keyw{then}\ {#2}\ \keyw{else}\ {#3}}
\newcommand{\whilecbi}[2]{\keyw{while}\ {#1}\ \keyw{do}\ {#2}}
\newcommand{\letcombi}[3]{\keyw{let}~#1 \BE #2~\keyw{in}~#3} 
\newcommand{\varblockbi}[2]{\keyw{var}~ #1 ~\keyw{in}~ #2} 
\newcommand{\splitbir}[2]{(#1 |^\text{\tiny\(\!\triangleright\)} #2)}
\newcommand{\Splitbir}[2]{(#1 \mid^\text{\tiny\(\!\triangleright\)} #2)} 
\newcommand{\whilecbiA}[3]{\keyw{while}\ {#1} \cdot {#2}\ \keyw{do}\ {#3}}
\newcommand{\Fault}{\lightning}
\def\rightharpoonupfill{$\mathsurround=0pt \mathord- \mkern-6mu
  \cleaders\hbox{$\mkern-2mu \mathord- \mkern-2mu$}\hfill
  \mkern-6mu \mathord\rightharpoonup$}
\def\leftharpoonupfill{$\mathsurround=0pt \mathord\leftharpoonup \mkern-6mu
  \cleaders\hbox{$\mkern-2mu \mathord- \mkern-2mu$}\hfill
  \mkern-6mu \mathord-$}
\def\overleftharpoonup#1{\vbox{\ialign{##\crcr
      \leftharpoonupfill\crcr\noalign{\kern-1pt\nointerlineskip}
      $\hfil\displaystyle{#1}\hfil$\crcr}}}
\def\overrightharpoonup#1{\vbox{\ialign{##\crcr
      \rightharpoonupfill\crcr\noalign{\kern-1pt\nointerlineskip}
      $\hfil\displaystyle{#1}\hfil$\crcr}}}
\def\overleftrightharpoonup#1{\vbox{\ialign{##\crcr
      \leftharpoonupfill\hspace*{-.7em}\rightharpoonupfill\crcr\noalign{\kern-1pt\nointerlineskip}
      $\hfil\displaystyle{#1}\hfil$\crcr}}}
\newcommand{\Left}[1]{\overleftharpoonup{#1}} 
\newcommand{\Right}[1]{\overrightharpoonup{#1}}
\newcommand{\rTow}{\mathconst{r2w}}
\newcommand{\wTor}{\mathconst{w2r}}
\newcommand{\writes}{\mathconst{wrs}}
\newcommand{\reads}{\mathconst{rds}}
\newcommand{\locEq}{\mathconst{locEq}} 
\newcommand{\LocEq}{\mathconst{LocEq}} 
\newcommand{\written}{\mathconst{wrttn}}
\newcommand{\rspecSym}{\ensuremath{\mathrel{\text{\small$\thickapprox\hspace*{-.4ex}>$}}}}
\renewcommand{\P}{\mathcal{P}}  
\renewcommand{\S}{\mathcal{S}} 
\newcommand{\M}{\mathcal{M}}  
\newcommand{\N}{\mathcal{N}}  
\newcommand{\Q}{\mathcal{Q}} 
\newcommand{\R}{\mathcal{R}} 
\newcommand{\Z}{\mathbb{Z}} 
\newcommand{\weave}{\looparrowright} 
\newcommand{\Agr}{\ensuremath{\mathbb{A}}}
\newcommand{\lorbi}{\lor}
\newcommand{\eqbi}{\mathrel{\ddot{=}}}
\newcommand{\later}{\Diamond} 
\newcommand{\always}{\mathord{\text{\small$\Box$}}}
\newcommand{\leftex}[1]{ \text{\small$\langle\hspace*{-2.2pt}[$} #1 \text{\small $\langle\hspace*{-.65ex}]$}
}
\newcommand{\rightex}[1]{ \text{\small$[\hspace*{-2.5pt}\rangle$} #1 \text{\small$]\hspace*{-2.2pt}\rangle$} 
}
\newcommand{\leftF}[1]{\leftex{#1}}
\newcommand{\rightF}[1]{\rightex{#1}}
\newcommand{\Both}[1]{\ensuremath{\mathbb{B}} #1}
\newtheorem{remark}{Remark}
\renewcommand{\phi}{\varphi}
\newcommand{\dt}[1]{\index{#1}\textbf{\emph{#1}}} 
\newcommand{\rn}[1]{\textsc{#1}} 
\newcommand{\RL}{\textsc{RL}}
\newcommand{\RLI}{\textsc{RLI}}
\newcommand{\RLII}{\textsc{RLII}}
\newcommand{\RLIII}{\textsc{RLIII}} 
\definecolor{light-gray}{gray}{0.88}
\definecolor{dark-gray}{gray}{0.25}
\newcommand{\ghostbox}[1]{\colorbox{light-gray}{#1}} 
\newcommand{\WhyRel}{WhyRel}
\newcommand{\gmid}{\mathrel{{\color{blue}\mid}}}
\newcommand{\ruleSOF}{
\protect\inferrule*[left=SOF]{ 
  \Phi,\Theta \HPflowtr{}{M}{P}{C}{Q}{\eff}  \\
  \models \fra{\bnd(N)}{I} \\
  N\in \Theta\\ N\neq M \\ 
  \all{m\in\Phi}{\mdl(m)\not\imports N} \\
  \mbox{$C$ binds no $N$-method}
}{ \Phi,(\Theta \conjInv I) \HPflowtr{}{M}{P\land I}{C}{Q\land I}{\eff} }
}
\newcommand{\ruleCtxIntro}{
\protect\inferrule*[left=CtxIntro]
{ \Phi\HPflowtr{}{M}{P}{A}{Q}{\eff}  \\
  P \imp \ind{\bnd(\mdl(m))}{\eff} \\ 
  P \imp \ind{\bnd(\mdl(m))}{\rTow(\eff)}  
}{ \Phi,\: m:\flowty{R}{S}{\effe} \HPflowtr{}{M}{P}{A}{Q}{\eff} }
}
\newcommand{\rulerWhile}{
\protect\inferrule*[left=rWhile]
{
\Phi\rHPflowtr{}{}{\Q\land\neg\P\land\neg\P'\land\leftF{E}\land\rightF{E'}}{CC}{\Q}{\eff|\eff'} \\
\Phi \rHPflowtr{}{}{\Q\land\P\land\leftF{E}}{\splitbi{\Left{CC}}{\skipc}}{\Q}{\eff|\emptyeff} \\
\Phi \rHPflowtr{}{}{\Q\land\P'\land\rightF{E'}}{\splitbi{\skipc}{\Right{CC}}}{\Q}{\emptyeff|\eff'} \\
\ind{\unioneff{N\in\Phi,N\neq M}{\bnd(N)}}{\rTow(\ftpt(E))} \\ 
\ind{\unioneff{N\in\Phi,N\neq M}{\bnd(N)}}{\rTow(\ftpt(E'))} \\ 
\Q\imp E\eqbi E' \lorbi (\P\land\leftF{E}) \lorbi (\P'\land\rightF{E'}) \\
\eff\mbox{ is }\Left{\Q}/\eff\mbox{-immune}\\
\eff'\mbox{ is }\Right{\Q}/\eff'\mbox{-immune}
}{
\Phi \rHPflowtr{}{}{\Q}{\whilecbiA{E\smallSplitSym E'}{\P\smallSplitSym \P'}{CC}}{\Q\land\leftF{\neg E}\land\rightF{\neg E'}}{\eff,\ftpt(E)|\eff',\ftpt(E')}
}}
\newcommand{\ruleLink}{
\protect\inferrule*[left=Link]  
{
\Phi,\Theta \proves_{\mdl(m_i)} B_i: \Theta(m_i) \\
\Phi,\Theta \proves_{\emptymod} C: \flowty{P}{Q}{\eff} \\
\dom(\Theta)=\ol{m} \\
\all{N\in\Phi,L\in\Theta}{N\not\imports L} \\
\all{N,L}{ N\in\Theta \land N\irimports L \imp L\in(\Phi,\Theta)} 
} 
{
\Phi\proves_{\emptymod} \letcom{\ol{m}}{\ol{B}}{C} : \flowty{P}{Q}{\eff} 
}}
\newcommand{\ruleCall}{
\protect\inferrule*[left=Call\quad]
{}{ m\scol \flowty{P}{Q}{\eff} \proves_{\emptymod} m(): \flowty{P}{Q}{\eff} }
}
\newcommand{\ruleModIntro}{
\protect\inferrule*[left=ModIntro]
{ \Phi \proves_{\emptymod} A: \flowty{P}{Q}{\eff}  \\ 
  \Phi \proves_{\emptymod} A: \flowty{P\land \Bsnap_M}{\Bmon_M}{\eff}  \\ 
  \mbox{if $M\in\Phi$ then $A$ is a call}
} {
  \Phi \proves_M A: \flowty{P}{Q}{\eff}  
}}
\newcommand{\ruleIf}{
\protect\inferrule*[left=If]
 { \Phi \HPflowtr{}{M}{ P \land E}{C_1}{Q}{\eff}\\
   \Phi \HPflowtr{}{M}{P \land \neg E}{C_2}{Q}{\eff} \\
   \ind{\unioneff{N\in\Phi,N\neq M}{\bnd(N)}}{\rTow(\ftpt(E)) }
 } 
 { \Phi \HPflowtr{}{M}{P}{\ifc{E}{C_1}{C_2}}{P'}{\eff,\ftpt(E)} }
}
\newcommand{\ruleCtxIntroInOne}{
\protect\inferrule*[left=CtxIntroIn1]
{ \Phi\HPflowtr{}{M}{P}{C}{Q}{\eff}  \\
\mdl(m)\in\Phi 
} {
\Phi,m\scol\flowty{R}{S}{\effe} \HPflowtr{}{M}{P}{C}{Q}{\eff}  
}}
\newcommand{\ruleCtxIntroInTwo}{
\protect\inferrule*[left=CtxIntroIn2]
{ \Phi \proves_M A : \flowty{P}{Q}{\eff}  \\  \mdl(m)=M 
\\ A \mbox{ is not a call}
}{ \Phi,m\scol\flowty{R}{S}{\effe} \HPflowtr{}{M}{P}{A}{Q}{\eff}  
}}
\newcommand{\ruleCtxIntroCall}{
\protect\inferrule*[left=CtxIntroCall]
{ \Phi \proves_M p(): \flowty{P}{Q}{\eff}  \\ 
  \Phi \proves_M p(): \flowty{P\land \Bsnap_N}{\Bmon_N}{\eff}  \\ 
  N = \mdl(m) \\
  \mdl(p)\imports\mdl(m)
} {
  \Phi,m\scol\flowty{R}{S}{\effe} \HPflowtr{}{M}{P}{p()}{Q}{\eff}  
}}
\newcommand{\ruleConseq}{
\protect\inferrule*[left=Conseq]{ 
\Phi\HPflowtr{}{M}{P}{C}{Q}{\eff} \\
P_1\imp P\\ 
Q\imp Q_1\\ 
P_1\models \eff \leq \eff_1
}{ 
\Phi\HPflowtr{}{M}{P_1}{C}{Q_1}{\eff_1}
}}
\newcommand{\rulerLink}{%
\protect\inferrule*[left=rLink]{ 
\Phi,\Theta \proves_{\emptymod} \Syncbi{C} : \rflowty{\P}{\Q}{\eff} \\
\Phi,\Theta \proves_{\mdl(m)} \splitbi{B}{B'} : \Theta_2(m)  \\
\Phi_0,\Theta_0 \proves_{\mdl(m)} B : \Theta_0(m)  \\
\Phi_1,\Theta_1 \proves_{\mdl(m)} B' : \Theta_1(m) \\
\delta = \unioneff{L\in(\Phi,\Theta)}{\bnd(L)} \\
(\Phi,\Theta) \prePostImply \LocEq_{\delta}(\dot{\Phi},\dot{\Theta}) \\
\P \imp pre(\locEq_{\delta}(\flowty{P}{Q}{\eff})) \\
\all{N\in\Phi,L\in\Theta}{N\not\imports L} \\
\all{N,L}{ N\in\Theta \land N\irimports L \imp L\in(\Phi,\Theta)} \\
\mbox{$C$ is let-free} 
}{ 
\Phi \proves_{\emptymod} 
  \letcombi{m}{\splitbi{B}{B'}}{\Syncbi{C}} : \rflowty{\P}{\Q}{\eff} 
}}
\newcommand{\rulerSOF}{%
\protect\inferrule*[left=rSOF]{ 
\LocEq_\delta(\Phi,\Theta) \proves_M \Syncbi{C} :  \locEq_\delta(\flowty{P}{Q}{\eff})\\
\models \fra{ \bnd(N) | \bnd(N) }{ \N } \\
\N\imp\always\N \\
N\neq M \\ 
N\in \Theta\\ 
\all{m\in\Phi}{\mdl(m)\not\imports N} \\
\delta = \unioneff{L\in(\Phi,\Theta),L\neq M}{  \bnd(L) } \\
\mbox{$C$ is let-free} 
}{ 
    \LocEq_\delta(\Phi), \, \LocEq_\delta(\Theta)\conjInv\N  \proves_M
    \Syncbi{C} \: : \: \locEq_\delta (\flowty{P}{Q}{\eff}) \conjInv \N
}}
\newcommand{\rulerLocEq}{%
\protect\inferrule*[left=rLocEq]{ 
	\Phi\HPflowtr{}{M}{P}{C}{Q}{\eff}\\
	P\models \wTor(\eff)\leq\reads(\eff)\\
\delta=\unioneff{N\in\Phi,N\neq M}{ \bnd(N) } \\
\mbox{$C$ is let-free}
}{ 
\LocEq_\delta(\Phi) \proves_M \Syncbi{C}:\locEq_\delta(\flowty{P}{Q}{\eff})
}}
\newcommand{\rulerIf}{%
\protect\inferrule*[left=rIf]
{
\Phi \rHPflowtr{}{M}{\P\land\leftF{E}\land\rightF{E'}}{CC}{\Q}{\eff|\eff'} \\
\Phi \rHPflowtr{}{M}{\P\land\leftF{\neg E}\land\rightF{\neg E'}}{DD}{\Q}{\eff|\eff'} \\
\P\imp E\eqbi E' \\
\delta=\unioneff{N\in\Phi,N\neq M}{\bnd(N)} \\
\ind{\delta}{\rTow(\ftpt(E))} \\
\ind{\delta}{\rTow(\ftpt(E'))} \\
}{
\Phi \rHPflowtr{}{M}{\P}{\ifcbi{E\smallSplitSym E'}{CC}{DD}}{\Q}{\eff,\ftpt(E)|\eff',\ftpt(E')}
}}
\newcommand{\rulerConseq}{
\protect\inferrule*[left=rConseq]{ 
 \Phi\rHPflowtr{}{}{\P}{CC}{\Q}{\eff|\eff'}\\
 \R\imp \P \\
 \Q \imp \S \\
 \P\models (\eff|\eff')\leq(\effe|\effe') 
}{ 
 \Phi\rHPflowtr{}{}{\R}{CC}{\S}{\effe|\effe'}
}}
\newcommand{\rulerCall}{
\protect\inferrule*[left=rCall]{
  \Phi_0 \proves m() : \Phi_0(m) \\
  \Phi_1 \proves m() : \Phi_1(m) 
}{ 
   \Phi \proves \syncbi{m()} : \Phi_2(m) 
}}
\begin{document}


\title{A Relational Program Logic with Data Abstraction and Dynamic Framing
\ifarxiv{\normalsize \quad [version with index]}\fi 
}




\author{Anindya Banerjee}
\email{anindya.banerjee@imdea.org}
\affiliation{
\institution{IMDEA Software Institute}
\city{Pozuelo de Alarcon}
\state{Madrid}
\country{Spain}}
\orcid{0000-0001-9979-1292}
\author{Ramana Nagasamudram}
\email{rnagasam@stevens.edu}
\affiliation{
\institution{Stevens Institute of Technology}
\city{Hoboken}
\state{New Jersey}
\country{USA}}
\author{David A. Naumann}
\email{naumann@cs.stevens.edu}
\affiliation{
\institution{Stevens Institute of Technology}
\city{Hoboken}
\state{New Jersey}
\country{USA}}
\orcid{0000-0002-7634-6150}
\author{Mohammad Nikouei}
\email{snikouei@stevens.edu}
\affiliation{
\institution{Stevens Institute of Technology}
\city{Hoboken}
\state{New Jersey}
\country{USA}}

\renewcommand{\shortauthors}{Banerjee, Nagasamudram, Naumann, Nikouei}

\begin{abstract}
Dedicated to Tony Hoare.
\\[2ex]
In a paper published in 1972 Hoare articulated the fundamental notions of
hiding invariants and simulations.
Hiding: invariants on encapsulated data representations need not be mentioned in
specifications that comprise the API of a module.
Simulation: correctness of a new data representation and implementation
can be established by proving simulation between the old and new implementations
using a coupling relation defined on the encapsulated state.
These results were formalized semantically and for a simple model of state,
though the paper claimed this could be extended to encompass dynamically
allocated objects.
In recent years, progress has been made towards formalizing the claim,
for simulation, though mainly in semantic developments.
In this article, hiding and simulation are combined with the idea in Hoare's 1969 paper:
a logic of programs.
For an object-based language with dynamic allocation,
we introduce a relational Hoare logic
with stateful frame conditions
that formalizes encapsulation, hiding of invariants, and
couplings that relate two implementations.
Relations and other assertions are expressed in first-order logic.
Specifications can express a wide range of relational properties such as
conditional equivalence and noninterference with declassification.
The proof rules facilitate relational reasoning by means of convenient alignments and
are shown sound with respect to a conventional operational semantics.  A derived proof rule for equivalence of linked programs directly embodies representation independence.
Applicability to representative examples
is demonstrated using an SMT-based implementation.
\end{abstract}

\begin{CCSXML}
<ccs2012>
<concept>
<concept_id>10003752.10003790.10003806</concept_id>
<concept_desc>Theory of computation~Programming logic</concept_desc>
<concept_significance>500</concept_significance>
</concept>
<concept>
<concept_id>10003752.10003790.10011741</concept_id>
<concept_desc>Theory of computation~Hoare logic~Relational Hoare logic</concept_desc>
<concept_significance>500</concept_significance>
</concept>
<concept>
<concept_id>10003752.10010124</concept_id>
<concept_desc>Theory of computation~Semantics and reasoning</concept_desc>
<concept_significance>500</concept_significance>
</concept>
</ccs2012>
\end{CCSXML}

\ccsdesc[500]{Theory of computation~Programming logic}
\ccsdesc[500]{Theory of computation~Hoare logic}
\ccsdesc[500]{Theory of computation~Semantics and reasoning}

\keywords{%
relational properties,
relational verification, 
logics of programs,
data abstraction, 
representation independence,
product programs,
automated verification
}

\maketitle

\section{Introduction}\label{sec:intro}

Data abstraction has been a cornerstone of software development methodology since the seventies. Yet it is surprisingly difficult to achieve in a reliable manner in modern programming languages that permit manipulation of the global heap via dynamic allocation, shared mutable objects, and callbacks. Aliasing can violate conventional syntactic means of encapsulation (modules, classes, packages, access modifiers) and therefore can undercut the fundamental guarantee of abstraction: equivalence of client behavior under change of a module's data structure representations.

The theory of data abstraction is well-known since Hoare's seminal paper~\cite{Hoare:data:rep}.
Its main ingredients are
the encapsulation of effects,
hidden invariants (that is, private invariants that do not appear in a method's interface specifications, so that clients are exempt from having to establish them for calls to the method),
and relational reasoning: coupling relations and simulations.
Hoare's paper provides a semantic formalization of these ideas using a simple model of state and it claims that the ideas can be extended to encompass dynamically allocated objects.

The justification of Hoare's claim is a primary focus of this article, which is in the context of two strands of recent work. One strand has made progress on automating proofs of conditional equivalence and relational properties in general, based on automated theorem proving (e.g., SMT) and techniques to decompose relational reasoning by expressing alignment of executions in terms of ``product programs''.
The other strand has made progress towards formalizing Hoare's claim
in semantic theories of representation independence (simulation and logical relations).
This article brings the strands together using the idea in Hoare's 1969 paper~\cite{Hoare69}: a logic of programs.
In this way we address three goals:

\textbf{Modular reasoning about relational properties of object-based programs.}
Such properties include not just equivalence but many others such as noninterference.
Conditional equivalence, for example, is needed to justify bug fixes and refactorings (regression verification), taking into account
preconditions that capture usage context.
Conditional noninterference expresses information flow security policies with declassification;
similar dependency properties express context conditions for compiler optimizations.
Modular reasoning requires \emph{procedural abstraction,} i.e., reasoning about code under hypotheses in the form of method contracts.
It requires \emph{local reasoning}, based on frame conditions.
And it requires \emph{data abstraction}, based on program modules and encapsulated data representations.

\textbf{Automated reasoning.}
We aim to facilitate verification using what have been called auto-active verification tools~\cite{LeinoM10} like Why3 and Dafny.
Users may be expected to provide source level annotations (contracts and data invariants)  and alignment hints (to decompose relational reasoning) but are not expected to guide proof tactics or provide full functional specifications.
The latter is a key point. It is difficult for developers to formulate full functional specs
of applications and libraries, and such specs would often need mathematical types not amenable to automated provers.  Experience shows the value of weak specs of input validity and data structure consistency.
Frame conditions are particularly useful for the developer and for the reasoning system~\cite{HatcliffLLMP12}.

\textbf{Foundational justification.}
We aim for tools that yield strong evidence of correctness based on accurate program semantics.
In this article we consider sequential programs at the source level, with idealizations---unbounded integers, heap, stack---that often are used to simplify specs and facilitate automated theorem proving.
We carefully model dynamic allocation at the level of abstraction of garbage-collected languages such as Java and ML.  The ultimate goal is tools for languages used in practice, for which semantics should be
machine-checked and based on the compiler and machine model.

\paragraph{Summary of the state of the art with respect to these goals.}
To position our work we give a quick summary; thorough discussion with citations can be found in Section~\ref{sec:related}.

There are several mature automated verifiers for \emph{unary} (non-relational) verification,
including local reasoning by separation logic and by stateful frame conditions (``dynamic frames''),
based on SMT solvers and other techniques for proof automation including inference of annotations
and decentralized invariants~\cite{RegLogJrnI,Filliatre2021} 
to lessen the need for induction.
While abstract data types are commonly supported in specifications, encapsulation of heap structures remains a difficult challenge.
For relational reasoning, there has been good progress in automation;
this  has made clear the need for both lockstep 
alignment of subcomputations using relational formulas
and ``asynchronous'' alignments using unary reasoning. 
Automated verifiers have varying degrees of foundational justification, but a standard technique is well established:
verification conditions are based on a Hoare logic which in turn is proved sound.

The semantic theory of data abstraction is well understood for a wide range of languages, mostly focused on syntactic means of encapsulation including type polymorphism, but also considering state-based notions like ownership using specialized types or program annotations.  These theories account for heap encapsulation and simulation but have not been well connected with general program reasoning: in brief, they say why simulation implies program equivalence but do not say how to prove simulation.
Some of this theory has been incorporated in interactive verification tools, for example based on the Coq proof assistant.
In such a setting, the powerful ambient logic makes it possible to express all the theory, and recent work includes
relational program logics that feature local reasoning and hiding.
These works focus on concurrency and higher order programs, and have many complications needed to address those challenges---far from the simplicity of first-order specs supported by automated provers
and accessible to ordinary developers.

\paragraph{Our contribution, in a nutshell.}
This article presents a full-featured, general \emph{relational program logic} that supports \emph{modular reasoning} about both unary and relational properties of object-based programs.
The logic formalizes state-based encapsulation and the hiding of invariants and coupling relations,
including a proof rule for equivalence of linked programs which directly embodies the theory of representation independence.
The logic uses a form of product program,\footnote{Some authors restrict the term 
``product'' to mean a representation that is itself a program.
Our usage is looser, encompassing representations like pairs of programs~\cite{Francez83} and our custom syntax.}
called ``biprogram'', to designate alignments of subprograms
to facilitate use of simple relational assertions that are  amenable to automated proof.
The verification conditions are all first-order, without need for inductive predicates,
and \emph{amenable to SMT-based automation}.
A \emph{foundational justification} is provided: detailed soundness proofs
with respect to standard operational semantics.

%

\paragraph{Outline and reader's guide.}

Section~\ref{sec:synop} summarizes the problem, the approach taken, and the contributions of this article.
Section~\ref{sec:Usyntax} presents most of the syntactic ingredients of the unary logic,
including effect expressions, unary specs and correctness judgments.
Novel syntactic elements are explained informally via examples
and an extended example illustrates encapsulation and modular linking.

Section~\ref{sec:biprograms} first presents the syntactic ingredients of the relational logic---biprograms, relation formulas, relational specs and correctness judgments---and then presents a series of 
examples to illustrate alignment, relations on heap structures, and relational modular linking.


After Sects.~\ref{sec:synop}--\ref{sec:biprograms}, readers who are not interested in semantic details may wish to skip to Section~\ref{sec:unaryLog} which presents the rules of the unary logic, and then skip again to Section~\ref{sec:rellog} which presents the rules of the relational logic, including the modular linking rule and its derivation from simpler rules.

Section~\ref{sec:unarySem} defines the semantics of programs and unary correctness judgments;
it is based on standard small-step semantics but we need a number of notions concerning agreement and dependency, leading to the novel and subtle semantics of encapsulation.
Section~\ref{sec:biprog} gives the semantics of biprograms and relational correctness.
Section~\ref{sec:cases} sketches the use of a prototype tool to evaluate viability of the logic's proof obligations for SMT-based verification.
Section~\ref{sec:related} surveys related work and
Section~\ref{sec:discuss} concludes.


A lengthy appendix provides proofs and additional details, none of which should be needed
to understand the contents of the article. Nonetheless, cross-references to the appendix are included.
There is also a glossary of symbols and a table of metavariables (Section~\ref{sec:app:guide}). 
The article is self-contained but includes some remarks to cater for readers who are familiar with prior work on region logic on which we build.

\section{Synopsis}\label{sec:synop}

\subsection{Modular reasoning about relational properties}\label{sec:modrel}

\begin{wrapfigure}{R}{0.33\textwidth}  
\begin{lstlisting}   % Not whyrel code; this is meant to be easily read

module MCell
   class Cell
   meth Cell(c: Cell) /* constructor */
   meth cget (c: Cell) : int /* pure */
   meth cset (c: Cell, v: int)
      requires { c <> null }
      ensures { cget(c) = v }
\end{lstlisting}
\caption{Example interface.}\label{fig:MCell}
\vspace*{-3ex} 
\end{wrapfigure}
To introduce the problem addressed in this article, we begin by sketching Hoare's story about proofs of correctness of data representations.
Often a software component is revised with the intent to improve some characteristic such as performance while preserving its functional behavior.  As a minimal example consider this program in an idealized object-based language,
with integer global variables \whyg{x,y}.
\begin{lstlisting}[basicstyle=\linespread{0.5}\sffamily]
  var c: Cell in c := new Cell; x := x+1; cset(c,x); y := cget(c)
\end{lstlisting}
It is a client of the interface in Figure~\ref{fig:MCell}.
An obvious implementation of the module\footnote{Classes are instantiable.
For our purposes, modules are static~\cite{OHearnYangReynoldsInfoToplas,RegLogJrnII},
like packages in Java and other languages.}
is for class \whyg{Cell} to declare an integer field \whyg{val} that stores the value.
Suppose we change the implementation: store the negated value, in a field named \whyg{f}, and let \whyg{cget} return its negation.
Client programs like the one above should not be affected by this change, at the usual level of abstraction (e.g, ignoring timing).
To be specific, we have equivalence of the two programs obtained by linking the client with one or the other implementation of the module.
(Equivalence means equal inputs lead to equal outputs.)
This has nothing to do with the specific client. The point of data abstraction is to free the client programmer from dependence on internal representations, and to free the library programmer from needing to reason about specific clients.

The (relational) reasoning here is familiar in practice and in theories of \dt{representation independence}.
There is a coupling relation that connects the two data representations; in this case, for corresponding object references 
$o,o'$ of type \whyg{Cell}, 
\begin{equation}\label{eq:coupleCell}
\mbox{the value of field $o'.\code{f}$ is the negation of $o.\code{val}$.}
\end{equation}
This relation is maintained, by paired execution of the two implementations, for each method of the module and for all instances of the class.
The fields are encapsulated within the module, so a client can neither falsify the relation nor behave differently from related states since the visible part of the relation is the identity.

Figure~\ref{fig:align} depicts steps of two executions of the example client, linked with alternate implementations of the methods it calls.
The top line indicates a relation between the initial states of the left and right executions.
The client's precondition $P$ holds in both ($\Both{}$),
and the initial states agree ($\Agr$) on the part of the state that is client-visible.
Unknown to the client, the module coupling relation $\M$
is established by the constructors and can be assumed in reasoning about
the calls, provided the method's implementations preserve the relation.
A client step, like \whyg{x:=x+1} here, should preserve $\M$ for reasons of encapsulation.
The bottom line indicates agreement on the final result.
Each method has alternate implementations; the ones for \whyg{cset}
are labelled (as $B,B'$) for expository purposes.


\begin{figure}[h]
\begin{footnotesize}
\begin{tabular}{ll}
\begin{diagram}[h=3.7ex,w=1em]
{}         & \hLine^{\Both P \land \Agr vis}     & {} \\
\dDotsto^{c:=\new{\mathsf{Cell}}}  &                       & \dDotsto_{c:=\new{\mathsf{Cell}}} \\
{}         & \hLine^{\Agr vis\land\M}            & {}\\
\dDotsto^{x:=x+1}  &                           & \dDotsto_{x:=x+1} \\
{}         & \hLine^{\Agr vis\land\M}            & {}\\
\dDotsto^{\mathsf{cset}(c,x)}_B &                          & \dDotsto_{\mathsf{cset}(c,x)}^{B'} \\
{}         & \hLine^{\Agr vis\land\M}            & {} \\
\dDotsto^{y:=\mathsf{cget}(c)} &                         & \dDotsto_{y:=\mathsf{cget}(c)} \\
{}         & \hLine^{\Agr vis}                  & {} \\
\end{diagram}
&
\parbox{.5\textwidth}{
$\Both P$ ---both initial states satisfy $P$
\\[1ex]
$\Agr vis$ ---two states agree on client-visible locations
\\[1ex]
$\M$ ---coupling relation on encapsulated locations
\\[1ex]
$B,B'$ ---alternate implementations of a method
}
\end{tabular}
\end{footnotesize}
\caption{Two executions, with relations between aligned points.}
\label{fig:align}
\end{figure}

In this work, we introduce a logic in which one can specify relational properties such as the preservation of a coupling relation by the two implementations $B,B'$, as well as equivalence of the two linked programs for a client $C$.  Moreover the equivalence can be inferred directly from the preservation property.
Equivalence is expressed in local terms, referring just to the part of the state that $C$ acts on:
In the example client program, the pre-relation is agreement on the value of \whyg{x} and the post-relation is agreement on \whyg{y}.
If $C$ is part of a larger context then a relational frame rule
can be applied to infer that relations on separate parts of the state are also maintained by $C$ as discussed later.

\paragraph{Encapsulation.}

The above reasoning depends crucially on encapsulation, and many programming languages have features intended to provide encapsulation.
In unary verification, encapsulation serves to protect invariants on internal data structures.
It is well known,
and often experienced in practice,
that references and mutable state can break encapsulation in conventional languages like Java and ML.
There has been considerable research on methodologies using type annotations and assertions to enforce disciplines including ownership for the sake of encapsulation and local reasoning.
This work focuses on heap encapsulation, without commitment to any specific discipline, but provides a framework in which such disciplines can be used.

In this article, encapsulation is at the granularity of a module, not a class or object.
Thus the implementation of a method \whyg{cswap(c, d: Cell)} that swaps the values of two cells can exploit that the cells have the same internal representation.
However, it is often useful for each instance of an abstraction, say a cell or a stack, to ``own''
some locations that are separate from those of other instances,
so we can do framing at the granularity of an instance.
This is manifest in frame conditions, as we will see for \whyg{cset},
and it is also manifest in invariants.  For example, a module for stacks implemented using
linked lists has the invariant that distinct stacks use disjoint list nodes.



Let us sketch how encapsulation and module invariants can be formalized in a unary logic.
The linking of a client $C$ with a method implementation $B$ can be represented by a simple construct,  $\letcom{m}{B}{C}$ that binds $B$ to method name $m$.
(For clarity we ignore parameters and consider a single method rather than simultaneous linkage
of several methods.)
The \emph{modular linking rule} looks as follows,
where we use notation $C:\flowtyf{P}{Q}$ instead of the usual Hoare triple $\{P\}C\{Q\}$
(for partial correctness).\footnote{Following
   O'Hearn et al~\cite{OHearnYangReynoldsInfoToplas,RegLogJrnII},
   we use the term modular for information hiding, not just procedural abstraction.}
\begin{equation}\label{eq:mismatch}
\inferrule{ 
m:\flowtyf{R}{S} \: \proves \: C: \flowtyf{P}{Q} \\
m:\flowtyf{(R\land I)}{(S\land I)} \: \proves \: B: \flowtyf{(R\land I)}{(S\land I)} \\
}{
\proves \letcom{m}{B}{C} : \flowtyf{P}{Q}
}
\end{equation}
The first premise says $C$ is correct under the hypothesis that $m$ satisfies the spec $\flowtyf{R}{S}$.
(The general form allows other hypotheses, which are retained in the conclusion.)
The second premise says the body $B$ of $m$ satisfies a different spec,
$\flowtyf{R\land I}{S\land I}$ (and assumes the same, as needed in case of recursive calls to $m$ in $B$).
The spec $\flowtyf{R}{S}$ should be understood as the interface on which $C$ relies---indeed, $C$ is \emph{modularly correct} in the sense that it satisfies its spec when linked with any correct implementation of $m$, so $C$ never calls $m$ outside its specified precondition $R$.
In the verification of $B$, the internal invariant $I$ can be assumed initially and must be reestablished.  The invariant is \emph{hidden} from clients of the module.


As displayed, rule (\ref{eq:mismatch}) is obviously unsound 
because $C$ might write a location on which $I$ depends and then call $m$ in a state where $I$ does not hold.  
The idea is to prevent that by encapsulation, for which we are required to
\begin{list}{}{}
\item[(E1)] delimit the module's ``internal locations'',
\item[(E2)] ensure that the module's private invariant $I$ depends only on those locations,
\item[(E3)] frame the effects of $C$ and ensure its writes are separate from the internal locations,
and
\item[(E4)] arrange that $I$ is established initially (e.g., by module initialization and object constructors).
\end{list}

\paragraph{Relational modular linking} 

Encapsulation licenses more than just the hiding of invariants.
Once the requirements (E1)--(E4) are met in a way that makes (\ref{eq:mismatch}) sound, we can contemplate
the adaptation of (\ref{eq:mismatch}) to relational reasoning and in particular proving equivalence of two linkages, $\letcom{m}{B}{C}$ and $\letcom{m}{B'}{C}$.  
The labels (E1)--(E4) are used to also refer to the requirements as adapted to relational reasoning.

The two linkages
cannot be expected to behave identically: $B$ and $B'$ typically have different internal state on which they act differently.
What can be expected is that from initial states that are equivalent in terms of client-visible locations,
the two linkages yield final states that are equivalent on visible locations,
as indicated by the deliberately vague ``$vis$'' in Figure~\ref{fig:align}.
We say equivalent states because $B$ and $B'$ may do different allocations;
so the resulting heap structure should be isomorphic but need not be identical.
(For many purposes one wants to reason at the source language level of abstraction,
ignoring differences due to timing, code size, and absolute addresses; that is our focus.)
Given that we have framing (E3), it suffices to establish ``local equivalence'' in the sense
that initial agreement on locations readable by $C$ leads to final agreement on locations writable by $C$---and on freshly allocated locations.
Agreement on other visible locations should then follow.

We write $\splitbi{B}{B'}: \rflowtyf{\R}{\S}$,
for relations $\R$ and $\S$ on states, to say
that pairs of terminated executions of programs $B$ and $B'$, from states related by $\R$,
end in states related by $\S$.
For example, $\splitbi{C}{C}: \rflowtyf{\Agr x}{\Agr y}$ says two runs of $C$ from states that agree on the value of $x$ end in states that agree on the value of $y$.
The relational generalization of (\ref{eq:mismatch}) is a \emph{relational modular linking rule}
of this form:
\begin{equation}\label{eq:mismatchR}
\inferrule{ 
m:\flowtyf{R}{S} \: \proves \: C: \flowtyf{P}{Q} \\
m: \ldots 
   \;\proves\;
   \splitbi{B}{B'}:\, \rflowtyf{ \Both{R}\land\Agr in\land \M \, }{ \, \Both{S}\land \Agr out \land \M }
}{
\proves
\Splitbi{\letcom{m}{B}{C}}{\letcom{m}{B'}{C}} :
   \rflowtyf{ \Both{P}\land \Agr vis }{ \Both{Q}\land \Agr vis } \\
}
\end{equation}
The first premise is unary correctness of $C$ assuming the interface spec of $m$ as in rule (\ref{eq:mismatch}).
The conclusion of (\ref{eq:mismatchR}) expresses local equivalence of the two linkages,
under precondition $P$.
The second premise relates the two implementations $B$ and $B'$
and is meant to say that if the client-visible ``input'' locations
are in agreement then the resulting visible outputs are in agreement.
In addition, a relation $\M$ is conjoined to the pre- and post-condition.
A coupling relation $\M$ usually has three conjuncts: it says the left state satisfies some
invariant $I$ on the internal state used by $B$,
the right state satisfies invariant $I'$ on the internal state
used by $B'$, and there is some connection between the internal states.
(We often use ``left'' and ``right'' in connection with two programs, states, or executions to be related.)
The hypothesis for $m$ in the second premise is the same spec as proved for $\splitbi{B}{B'}$,
following the pattern in (\ref{eq:mismatch}).
We elide that hypothesis for readability: relational reasoning involves two of everything and the notations quickly become cluttered!
As with the modular linking rule (\ref{eq:mismatch}),
the relational modular linking rule (\ref{eq:mismatchR})
is unsound unless we satisfy requirements (E1)--(E4).
For relational reasoning, (E2) and (E4) are adapted to relations,
and (E3) is strengthened to ensure separation for reads,
as one would expect to avoid dependence on internal representations.

\paragraph{Alignment.}

One technique for proving some relation on final states is to leverage functional specs:
a strong constraint on the output values, such as $out=f(in)$ for some mathematical function $f$, entails that initial agreement on $in$ leads to final agreement on $out$.  But the need to find and prove functional specs can often be avoided through judicious \emph{alignment}
of intermediate points in execution.  This technique is used to prove soundness of (\ref{eq:mismatchR}).  To illustrate, consider an instantiation of the general rule in which the three methods in Figure~\ref{fig:MCell} are bound simultaneously
(\whyg{cset}, \whyg{cget}, and the \whyg{Cell} constructor).
We show that two executions of the example client can be aligned as in Figure~\ref{fig:align}, with the indicated relations holding at the aligned points.  After the two constructor calls, the resulting states should agree on visible locations and be related by the coupling, according to the premise proved for the constructor.  From any pair of states related by $\Agr x\land\Agr c\land \M$, two executions of \whyg{x:=x+1} maintain agreement on visible variables including $x$, and according to (E3) this step in the client code is not touching internal locations on which $\M$ depends, so $\M$ continues to hold.  From any pair of states related by $\Agr vis\land \M$, a pair of calls to \whyg{cset} results in states  related, by the premise for \whyg{cset}.  Similarly for \whyg{cget}.
In fact $\M$ relates the final states in Figure~\ref{fig:align} but we omit it there,
to emphasize that it is an ingredient of proof rather than the property of ultimate interest.

In a good alignment, most of the intermediate relations are
agreements ($\Agr$) that amount to simple equalities connecting values in locations of the two states.
Finding and exploiting good alignments is essential in order to leverage automatic theorem provers.
For \whyg{cset(c,v)} in Figure~\ref{fig:MCell},
the first implementation is \whyg{c.val:= v; return c.val} and the second is
\whyg{c.f:= -v; return -c.f}.
If we align their executions at the semicolons,
we can assert the coupling relation (\ref{eq:coupleCell}) at that point,
by unary reasoning about the effect of the two field updates.
Again by unary reasoning about the return expressions
we get that the same values are returned, as needed for the final
agreement on visible variable $y$.
Alignment does not eliminate the need for unary/functional reasoning, but rather
reduces it to small program fragments for which precise semantics can be computed by a theorem prover.

Alignment can be expressed by means of a product program, that is, a program, or some kind of automaton, whose executions correspond to paired executions of the given programs.  We call this well known technique the \emph{product principle}:
to prove a correctness judgment $\splitbi{C}{C'}: \rflowtyf{\R}{\S}$
relating programs $C$ and $C'$,  it suffices to prove the spec for some product program whose executions cover the executions of $C$ and $C'$.

To emphasize the role of alignment we consider another example,
not about representation independence but about secure information flow.
The following program acts on a linked list of integer values, where each node
has a boolean field, \whyg{pub}, meant to indicate that this value is public.
\begin{equation}\label{eq:sumpub}
sumpub: \qquad 
\mbox{\lstinline{s:=0;  p:=head;  while p <> null do if p.pub then s:=s+p.val fi;  p:=p.nxt od}}
\end{equation}
We want to specify and prove that this does not reveal any information about non-public values.
Suppose we can define $listpub(p)$ to be the mathematical list of public values reached from \whyg{p}.
To express that the final value of $s$ depends only on public elements of the list
we use the spec $\rflowtyf{\Agr listpub(p)}{\Agr s}$.
The program satisfies the unary spec
$\flowtyf{true}{s=sum(listpub(head))}$,
and any program that satisfies this must also satisfy
$\rflowtyf{\Agr listpub(head)}{\Agr s}$.
But we can prove the relational spec without recourse to the unary spec.
At points in execution where two runs have passed the same number of public nodes,
the relation $\Agr s\land \Agr listpub(p)$ holds;
this suggests an alignment where it suffices to use
relational invariant $\Agr s\land \Agr listpub(p)$.
Adding the same value to $s$ on both sides maintains $\Agr s$ and
there is no need to reason that $s$ is the sum of previously traversed public values.
The same relational invariant should suffice if sum is replaced by a more complicated function.
The alignment can be described as follows: consider an iteration just on the left (resp.\ right), if the next left (resp.\ right) node is not public; and simultaneous execution of the body on both sides, if both next nodes are public.

We cannot in fact define $listpub$ as a function of $p$, owing to the possibility of cycles in the heap.
Instead we use an inductive relation when we work out the details of this example Section~\ref{sec:weave}.

\paragraph{Summary of ingredients needed.}

To achieve the three goals in Section~\ref{sec:intro} we need:
\begin{itemize}
\item A unary logic of functional correctness under hypotheses (for procedure-modularity), that supports framing (for local reasoning) and encapsulation (for hiding and abstraction).
To support a wide range of programming patterns, the logic should support reasoning in terms of
encapsulation at the granularity of an object which ``owns'' some internal state, say representing an instance of an ADT.  It should also support reasoning at the granularity of a module,
where many instances of multiple classes may share the internal representation.
It should encompass flexible patterns of sharing in data structures and between clients and components.

\item A relational logic with framing and encapsulation,
in which the relation formulas in specs and intermediate assertions are sufficiently expressive to describe data structures with dynamically allocated objects.  Agreement ``modulo renaming'' is needed to reason at the level of abstraction of Java/ML which provide reference equality and preclude arithmetic comparisons and operations on pointers,
to express local equivalence and other relations.
The logic must provide means to reason with alignments that admit simple intermediate relations.
Examples like the $sumpub$ program in (\ref{eq:sumpub}) show the need to use state-dependent alignments
in addition to alignments of control structure.
\end{itemize}
These ingredients need to be provided in ways that facilitate verification tools that leverage
automated provers especially SMT solvers.  Reasoning under hypotheses is straighforward to implement,
but effective expression of specs and alignment is less obvious.


\subsection{An approach based on region logic} \label{sec:approach}

Our relational logic is based on prior work in which ghost state is used in frame conditions to  describe sets of heap locations.
This approach, dubbed \emph{dynamic frames}~\cite{KassiosFM},
has been shown to be amenable to SMT-based automated reasoning in verification tools~\cite{SmansFAC10,Leino10,RosenbergBN12,piskac2014grasshopper}, and
shown to be effective in expressing relations on dynamically allocated data structures~\cite{AmtoftBB06,BanerjeeNN16}.
In particular we build on a series of articles on \emph{region logic} (\RL); it provides a methodologically neutral basis for heap encapsulation with sufficient generality for sequential first-order object-based programs featuring callbacks between modules.
We refer to key articles as \RLI~\cite{RegLogJrnI}, \RLII~\cite{RegLogJrnII},
and \RLIII~\cite{BanerjeeNN18},\index{RLI}\index{RLII}\index{RLIII}
and summarize key ideas in the following.

\paragraph{Framing.}

In current tools, the most common form of frame condition is a ``modifies clause'' that lists some expressions, meant to designate the writable locations.  A reads clause is similar. In the formalization of \RL, specifications are written in the compact form $\flowty{pre}{post}{\text{\emph{frame}}}$
where the effect expressions in the frame condition are tagged by keywords $\wri{}$ and $\rd{}$ to designate writables and readables.
We use $\rw$ to abbreviate the possibility to both read and write.
In this work, a \dt{region} is a set of object references.
For example, a possible spec of \whyg{cset(c,v)} is
$\flowty{c\neq\NULL}{cget(c)=v}{\rw{\{c\}\Img\allfields}}$
where the postcondition refers to the mathematical interpretation of the pure method \whyg{cget}~(as in \RLIII).
The singleton region $\sing{c}$ is used in the frame condition.
In the \emph{image expression } $\{c\}\Img\allfields$, the token $\allfields$ is a
data group~\cite{LeinoEtalDG} that abstracts from field names.
Concrete field names can also be used in image expressions, e.g., $\sing{c}\Img val$.
This example designates a single location, which may as well be written $c.val$.
But the image notation can be used for larger sets of heap locations.
For variable $r$ of type region,
$r\Img val$ designates the set of $val$ fields of all \whyg{Cell} objects in $r$.
So $\rd{r\Img val}$ in a frame condition allows any of these fields to be read.

Following separation logic, \RL\ features local reasoning in the form
of a \emph{frame rule}, but achieves this with ordinary first-order assertions.
For an example, strengthening the precondition of \whyg{cset(c,v)}
gives $\flowty{c\neq\NULL\land d\neq c}{cget(c)=v}{\rw{\{c\}\Img\allfields}}$.
The frame rule lets us add $d.val=z$ to the pre- and post-condition.
Why? Because the condition $d.val=z$ cannot be falsified:
the writes allowed by the frame condition are separate from
what is read\footnote{For a formula's meaning to depend on a location is different
from a program reading the location during execution.
However, these two notions have closely related extensional semantics based on agreement between states.
So, following the \RL\ articles, we use the terminology and notation of read effects for both.}
by the formula $d.val=z$.
In case of the variables $d$ and $z$, this is a matter of checking that $d$ and $z$ are not writable.  
Distinctness of field names can be used similarly.
But here, $\rw{\{c\}\Img\allfields}$ allows that $c.val$ can be written
and $val$ also occurs in the formula $d.val=z$.
Separation holds because the regions $\sing{c}$ and $\sing{d}$ are disjoint, written
$\disj{\sing{c}}{\sing{d}}$, which follows from precondition $d\neq c$.
As in the frame rule of separation logic~\cite{OHearnRY01},
this reasoning is inherently state dependent;
separation would not hold if variables $d$ and $c$ held the same reference.
Our frame rule has this form:
\begin{equation}\label{eq:frameRule}
\begin{array}{l}
\mbox{from} \quad C:\flowty{P}{Q}{\eff} \quad \mbox{infer} \quad  C:\flowty{P\land R}{Q\land R}{\eff} \\
\mbox{provided that locations read by $R$ are separate from locations writable according to $\eff$.}
\end{array}
\end{equation}
In the frame rule of \RL, 
separation is expressed by a conjunction of set disjointness formulas derived syntactically from 
the frame condition $\eff$ and the read effects of $R$.
In this example, the relevant effects are $\wri{c.val}$ and $\rd{d.val}$
and there is a single disjointness formula: $\disj{\sing{c}}{\sing{d}}$.
This formula is obtained by applying the separator function $\indSymbol$ introduced later, in Figure~\ref{fig:sepdef}.

\paragraph{Encapsulation.}

\RLII\ features \emph{dynamic boundaries}, in which
the idea of dynamic frame is adapted to encapsulation for module interfaces.
The dynamic boundary of a module is simply an effect expression
that designates the locations meant to be internal to the module.
Technically, it is a read effect, in keeping with its role to cover
the footprint of the module invariant.
In addition to the usual meaning of a partial correctness judgment,
there is an additional obligation: the program must not \emph{write} locations within the boundary of any module other than its own module.

\tikzset{
  pics/sobj/.style args={#1}{
    code={
      \node[draw,rounded corners,minimum height=1cm,minimum width=1.25cm] (-main) {};
      \node[above of=-main,xshift=-0.75cm,yshift=-0.35cm] (-name) {#1};
      \node[] (-text) {\whyg{Stack}};
    }
  }
}

\tikzset{
  nobj/.pic={
    \node[draw,rounded corners,minimum height=0.75cm,minimum width=1cm] (-main) {};
    \node[] (-text) {\whyg{Node}};
  }
}

\begin{wrapfigure}{r}{0.40\textwidth} 
  \centering
  \begin{tikzpicture}[scale=0.75,transform shape,>=stealth]
    \pic (stack-1) at (0,0) {sobj={$s_1$}};
    \pic (node-11) at (2,0) {nobj};
    \pic (node-12) at (3.5,0) {nobj};
    \pic (node-13) at (5,0) {nobj};

    \draw[->] (stack-1-main.east) to (node-11-main.west);
    \draw[->] (node-11-main.east) to (node-12-main.west);
    \draw[->] (node-12-main.east) to (node-13-main.west);

    \pic (stack-2) at (0,-1.75) {sobj={$s_2$}};
    \pic (node-21) at (2,-1.75) {nobj};
    \pic (node-22) at (3.5,-1.75) {nobj};

    \draw[->] (stack-2-main.east) to (node-21-main.west);
    \draw[->] (node-21-main.east) to (node-22-main.west);

    \node[rectangle,draw=blue,thick,dashed,rounded corners,
    minimum width=2cm,minimum height=3.5cm,
    left of=stack-1-main,xshift=0.95cm,yshift=-0.75cm] (pool-main) {};
    \node[above of=pool-main,yshift=1cm,xshift=-0.5cm] (pool-text) {$pool$};

    \node[rectangle,draw=blue,thick,dashed,rounded corners,
    minimum width=4.5cm,minimum height=1.2cm,
    left of=node-11-main,xshift=2.5cm] (s1-rep-main) {};
    \node[above of=s1-rep-main,yshift=-0.2cm,xshift=-1.5cm] (s1-rep-main-text) {$s_1.rep$};

    \node[rectangle,draw=blue,thick,dashed,rounded corners,
    minimum width=3cm,minimum height=1.2cm,
    left of=node-21-main,xshift=1.75cm] (s2-rep-main) {};
    \node[above of=s2-rep-main,yshift=-0.2cm,xshift=-0.75cm] (s2-rep-main-text) {$s_2.rep$};
  \end{tikzpicture}
\caption{The pool and rep idiom.}\label{fig:pool-and-rep}
\end{wrapfigure}

For the example module \whyg{MCell}, the dynamic boundary (omitted from Figure~\ref{fig:MCell}) is formulated in terms of a ghost variable, $pool$, of type region.  The postcondition of the \whyg{Cell} constructor says the new cell is added to $pool$.
The boundary is $\rd{pool},\rd{pool\Img\allfields}$, so
clients must not write the variable $pool$ or any field of an object in $pool$.
One could as well achieve this effect using module-scoped field names, so let us briefly consider a less degenerate example: a module for stacks.

In addition to ghost variable $pool$ containing all instances of the stack class,
that class would have a ghost field $rep$ of type region.
In an implementation using linked lists, each stack's list nodes would be in its $rep$,
and the module invariant would specify some ``object invariant''
for each stack together with its nodes.
This is depicted in Figure~\ref{fig:pool-and-rep}.
In an implementation using arrays, $rep$ would contain the stack's array,
and the module invariant would express some condition that holds for each stack object and its array.
Of course there is a single interface for the module.
Method frame conditions
will refer to $pool$ and $rep$, and not expose implementation details.
To facilitate per-instance framing, an invariant like
$s\neq t\imp \disj{s.rep}{t.rep}$ is used, which says the representations for distinct stacks are disjoint.
A suitable dynamic boundary is
$\rd{pool}, \rd{pool\Img\allfields}, \rd{pool\Img rep\Img\allfields}$.
It designates fields of the stack objects in $pool$ and also fields of all their rep objects.
(Array slots can be viewed as fields.)
The mentioned invariant enables use of the frame rule to consider updates of a single instance,
and it is suitable to be included in the module interface for use by clients.
(Either as explicit conjunct in method pre- and post-conditions,
or declared as a \emph{public invariant} for syntactic sugar.)
For example, \whyg{s.push(n)} writes $s.rep\Img\allfields$;
in states where $s\neq t$ this preserves the value of \whyg{t.top()} which reads $t.rep\Img \allfields$---and preservation holds in virtue of frame conditions, without recourse to postconditions that specify functional behavior.

In summary, a module interface comprises a collection of method specs,
and a dynamic boundary. A module implementation maintains an internal invariant $I$, the footprint of which
should be framed by the boundary.
The invariant $I$ should be such that it follows from the initial conditions of the main program.
For example, universal quantification over elements of $pool$ holds when $pool$ is empty.
An alternate approach is to require clients to call a module initializer.

\paragraph{Modular linking.}

Following the lead of O'Hearn et al.~\cite{OHearnYangReynoldsInfoToplas},
the logic in \RLII\ derives a modular linking rule like (\ref{eq:mismatch})
from two simpler rules:
An obviously-sound rule for the linking construct ($\letcom{m}{B}{C}$)
and a \emph{second order frame rule}
that accounts for hiding of invariants on encapsulated state.
A minimalistic formalization of modules is used, to keep the focus on the main ideas.
The unary correctness judgment takes the form
$\Phi \proves_M C: \flowty{P}{Q}{\eff}$
with $M$ the name of the module in which $C$ is to be used.
It says that, under hypotheses $\Phi$ and precondition $P$, command $C$ stays within the effects $\eff$ and
establishes $Q$ if it terminates---and in addition, $C$ \emph{respects} the boundaries
of any modules in $\Phi$ other than its own module $M$.
This formalizes requirement (E3).
In \RLII, ``respect of dynamic boundaries'' means not writing locations inside them.
In the present article, we must strengthen respect to prohibit reading, to ensure that $C$ has
no dependency---neither reads nor writes---on the internal representation of modules other than its own.

\subsection{Relational region logic}\label{sec:rrl}

Our relational specs have the form $\rflowty{\P}{\Q}{\eff|\eff'}$ where $\P$ (resp.\ $\Q$)
is the relational pre- (resp.\ post-)condition.  There is a separate frame condition $\eff$ for the left execution and $\eff'$ for the right.  Often those are the same, in which case
we abbreviate as $\rflowty{\P}{\Q}{\eff}$.
The meaning of frame conditions and encapsulation is the same as in the unary logic.
Leaving effects aside, there are several ways one could interpret a spec
$\splitbi{C}{C'}: \rflowty{\P}{\Q}{\eff|\eff'}$
in regards to termination.
All ways consider a pair of initial states, say $\sigma,\sigma'$, that satisfy $\P$.
The ``$\forall\exists$ interpretation'' says that for every execution of $C$ from $\sigma$,  terminating in a state $\tau$, there is an execution of $C'$ from $\sigma'$ that terminates in a state related to $\tau$ by $\Q$.
The $\forall\exists$ interpretation asserts relative termination and 
caters for nondeterminacy.
The ``$\forall\forall$ interpretation''
was already mentioned just before (\ref{eq:mismatchR}):
every pair of terminating runs of $C$ and $C'$ from $\P$-related states end in $\Q$-related states.
The $\forall\forall$ form is fine for deterministic programs which is what we consider, and it is simpler, so we use it.

For relation formulas we build directly on image expressions.
Agreements are interpreted in terms of a partial bijection
between the dynamically allocated references of the left and right states,
as commonly used to account for bijective renaming of references 
at the Java/ML level of abstraction~\cite{BanerjeeNaumann02c,BanerjeeNaumannJFP05,BartheR05,Beringer11}; we call these \emph{refperms}.
For region expression $G$, the relation $\Agr G\Img f$
asserts agreement on $f$-fields for objects in $G$ that correspond
according to the refperm.
We do not require every allocated reference to be in the refperm: this is important,
to specify relational properties that allow differences in allocation behavior.
Examples of such differences include internal data structures
and reasoning about secure information flow 
(under low branch condition, allocated locations can be added to the refperm, but not under high branch condition).

We formulate the logic in terms of an explicit representation for product programs which designate alignments.
The biprogram form $\splitbi{C}{C'}$ indicates no alignment except for the initial and final states. 
Other biprogram forms express, for example, that iterations of a loop are to be aligned in lockstep,
or conditionally as needed for the $sumpub$ example (\ref{eq:sumpub}).
For the implementations of \whyg{cset}, the alignment described earlier
is expressed as \whyg{(c.val:= v | c.f:= -v); (return c.val | return -c.f)}.

A judgment for $\splitbi{C}{C'}$ directly entails the expected relation between unary executions of commands $C$ and $C'$ (as confirmed by our adequacy theorem). 
The choice to use a different alignment of $C$ with $C'$ is formalized by an explicit
proof rule.
The rule is formulated in terms of a \emph{weaving relation} that connects a biprogram with a more tightly aligned version, typically chosen because it admits use of simpler relational assertions.
The rule says that properties of the woven program hold also for $\splitbi{C}{C'}$.

Given that we confine attention to sequential code, it seems natural to expect that
programs are deterministic, but we also aim for reasoning at the source code level
abstraction---for which determinacy is unrealistic owing to dynamic allocation!
The behavior of an allocator typically depends on things that are not visible
at the source level.
There is no need to make unrealistic assumptions.
Our program semantics allows that the allocator may be nondeterministic (while not assuming that it is ``maximally nondeterministic'' as often done
in the literature).
Our program semantics is \emph{quasi-deterministic} in the sense that outcomes are unique up to bijective renaming of references.
Our relation formulas do not allow pointer arithmetic or comparisons other than equality,
so they are invariant under renaming.
These design decisions entail some complications in the technical development, but
ensure that interesting programs do provably satisfy expected $\forall\forall$ properties.

As already mentioned,
the unary modular linking rule (\ref{eq:mismatch}) is derived (in \RLII)
from two simpler rules:
a basic linking rule, where assumed and proved specs match exactly, together with a second order frame rule.
Our novel relational modular linking rule (\ref{eq:mismatchR}) is derived from
a relational linking rule, a relational second order frame rule, and a third rule.
The third rule lifts a unary correctness judgment to a relational judgment that says
a program is locally equivalent to itself.  For this to be proved, it is stated in a stronger form: a program can be aligned with itself in lockstep such that local equivalence holds at each intermediate step.

As for the goal of foundational justification,
our approach is to work directly with a conventional operational semantics for unary correctness, for which we formulate a semantics of encapsulation.
The biprogram semantics is based directly on that,
so that soundness for rules in the relational logic has a direct connection---adequacy theorem---to unary semantics.
One benefit from carrying out the development in terms of this elementary semantics
is
that one can see that most of the soundness proofs can be adapted
easily to total correctness (both runs always terminate)
and to relative termination (right run terminates whenever left does).

\subsection{Contributions}

We highlight the following contributions.

\emph{A unary logic for modular reasoning about sequential object-based programs using first-order assertions.}
The key contribution and most difficult definition to get right is the extensional semantics of encapsulation, which is part of the meaning of correctness judgments.
Small-step operational semantics is used so we can define what it means for a given
step to be outside the boundaries of all modules but its own.
We build on the semantics in \RLII\ but completely revamp it to handle encapsulation of 
reads in addition to writes.
Dynamic boundaries are taken from \RLII; 
most of the proof rules of \RLII\ need little or no revision, but they must all be re-proved for the new semantics. 
Owing to the need for quasi-determinacy (for $\forall\forall$ extensional semantics
of read effects),
the new semantics of hypothetical judgments quantifies over possible denotations (called \emph{context interpretations}) rather than a single ``least refined'' denotation as
in \RLII\ and in O'Hearn et al~\cite{OHearnYangReynoldsInfoToplas}.
We present detailed soundness proofs of the key rules (Theorem~\ref{thm:unarysoundness}).

\emph{A relational logic.}
The logic relies on unary judgments for reasoning about atomic commands and for enforcing encapsulation.
Relational assertions are first-order formulas.
Our presentation focuses on data abstraction,
because this is the first relational logic to embody representation independence as a proof
rule using only first-order means.
But the logic is general, with a full range of rules that facilitate reasoning with convenient alignments.

We present detailed soundness proofs of the key rules (Theorem~\ref{thm:sound}).
Formally, judgments of the relational logic give properties of biprograms;
the adequacy Theorem~\ref{thm:biprogram-soundness}
connects those properties with the expected properties
in terms of paired unary executions in standard semantics (the product principle).

\emph{Demonstration of suitability for automation via case studies in a prototype relational verifier.}
The prototype translates biprograms and verification conditions specific to our logic, which are all first-order, into Why3 code and lemmas, proved using SMT solvers~(\url{why3.lri.fr}).
The modular linking rules (unary and relational) are implemented by generating suitable Why3 specs for the programs involved.
The case studies include noninterference, program transformations,
and representation independence.

\subsection{About the proofs}

The most difficult technical result is the \emph{lockstep alignment lemma} (Lemma~\ref{lem:rloceq}).
It brings together the semantics of encapsulation in the unary logic,
which involves a single context interpretation,
with the semantics of relational correctness---which involves three context interpretations, to account for un-aligned calls as well as aligned calls and relational specs.

The direct use of small-step semantics makes for lengthy soundness
proofs that require, in some cases, intricate inductive hypotheses.
But transition semantics is a critical ingredient
for a first-order definition of heap encapsulation.
It was quite difficult to arrive at rules for relational linking and second order framing that are provably sound.  Several variations on the semantics of encapsulation turned out to be sound for the unary linking and second order frame rules but failed to validate a sufficiently strong lockstep alignment property on which relational linking can be based.

Aside from lockstep alignment, the soundness proofs for linking rely on denotational semantics which in turn relies on quasi-determinacy.  This property is also used to establish embedding/projection results on which the adequacy theorem is based.

The semantics of correctness judgments is extensional in the sense that it refers only to behavior in a standard transition semantics---no instrumentation artifacts.  Like in \RLII, it does rely on use of transition semantics in order to express that control is currently within a specific module and outside the boundaries of other modules in scope.
This affects which program transformations are correctness-preserving;
more on this in Section~\ref{sec:uequiv}.

Once the right definitions, lemmas, and induction hypotheses have been determined, the soundness proofs 
go by induction on traces, with many details to check. We relegate them to appendices.

\subsection{Current limitations}\label{sec:curlim}

The formal development omits some features that were handled in the prior works on which we build: parameters,
private methods, constructor methods, pure methods for abstraction in specs.
These are all compatible with the formal development;
all are implemented in the prototype and used in exposition.
The theory is compatible with standard forms of encapsulation based on scoping mechanisms
(e.g., module scoped variables),
which for practical purposes should be leveraged as much as possible; for simplicity we refrain from
formalizing such mechanisms.\footnote{Specs involving explicit footprints are more
  verbose than those based on separation logic,
  and our minimalist formalization of modules increases verbosity.
  This article does not propose concrete syntax for practical use,
  but the issue is addressed in some related work (Section~\ref{sec:related}).
}  
The prototype also supports public invariants;
as noted in connection with the stack example, these are
important for client reasoning about boundaries using patterns like ownership.
Public invariants need not be formalized in the theory, as they can
be explicitly included in method specs.

The simplicity of our semantic framework (e.g., standard semantics of formulas and programs) may facilitate foundational justification of a verifier,
but we have not formally proved the correctness of our prototype.

There are two technical limitations.
First, the semantics of encapsulation and the proved rules
handle collections of modules with both import hierarchy and callbacks.
But the key rules for relational linking and
relational second order framing (\rn{rSOF}) only handle simultaneous linking of a collection of modules.
This is enough to model linking as implemented in a verifier.
However, one may hope for a theory that accounts for distinct inference steps that successively link different layers of hierarchy, as in our unary logic.
To achieve this, the lockstep alignment lemma needs to be strengthened to ensure
agreements for already-linked methods.
This requires to further complicate an already intricate theory.
In this article we just sketch the issue (Section~\ref{sec:nestedX}).

Second, the current formulation has a technical condition (boundary monotonicity) that prevents release of encapsulated locations,
in the sense of reasoning with specs that describe outward ownership transfer.
(Inward transfer is fine.)
Modules can create new objects for clients, as in the shared handle objects for priority queues,
one of our running examples.
But a location that has been within the boundary must stay there.
Overcoming this restriction, or finding
idiomatic specification patterns that dodge it, is left to future work.
Both inward and outward transfer are possible in \RLII\ (an example is in Section~2.2 of that article).

Addressing the limitations is the subject of ongoing and future work.

\section{Programs: their syntax and specifications}\label{sec:Usyntax}

\begin{figure}[t]
  \begin{lstlisting}
class Pnode { val: int; key: int; sibling: Pnode; child: Pnode; prev: Pnode; }

class Pqueue { head: Pnode; size: int; ghost rep: rgn; }

meth Pqueue (self:Pqueue) =
  self.rep := {null}; pool := pool ** {self};

meth insert (self:Pqueue, val:int, key:int): Pnode =
   result := new Pnode(val, key);
   self.rep := self.rep ** {result};
   if self.head = null then self.head := result;
   else self.head := link(self, self.head, result) fi;
\end{lstlisting}
  \caption{Excerpts of priority queue (PQ) implementation (in the syntax of our prototype).}\label{fig:PQueue1}
\end{figure}

This section defines the syntax of programs and their unary specifications and correctness judgments.
Subsections~\ref{sec:progtype}--\ref{sec:ucorr} 
collect together almost all the syntactic forms
and definitions concerning syntax, using a few examples to explain unusual things.
Section~\ref{sec:eg:encap} gives more holistic examples
to illustrate how the syntax is used and why we need various syntactic elements,
focusing on how requirements (E1)--(E4) for encapsulation in Section~\ref{sec:modrel} are expressed and checked.

\subsection{Programs and Typing}\label{sec:progtype}

A running example is introduced in Figure~\ref{fig:PQueue1}.
We consider the priority queue module \whyg{PQ} which exposes a class whose instances represent priority queues that store integer values and priorities, referred to as ``keys'' (smaller key means higher priority)~\cite{Weiss}.
Our implementations (based on~\cite{Weiss}) use pairing heaps, where each queue contains a $head$ field that points to a \whyg{Pnode} object and each \whyg{Pnode} contains $sibling$, $prev$, and $child$ fields that point to other \whyg{Pnode}s.
The $rep$ field of a queue is used to hold 
references to the objects notionally owned by the queue.


The syntax of programs in our formal development is in Figure~\ref{fig:bnf}.
The grammar includes biprograms, to which we return in Section~\ref{sec:biprograms}.
Field read and write commands are written with dereferencing implicit,
as in Java (though using the symbol $:=$) and are desugared to have a single heap access which simplifies proof rules.
The $\keyw{let}$ construct, featured in the modular linking rule (\ref{eq:mismatch}),
represents scoped method declarations.\footnote{We use the short term ``method'' for what should properly be 
called procedure. The term ``method'' usually implies dynamic dispatch which is beyond the scope of this article.}
Some examples, like Figure~\ref{fig:PQueue1}, use the syntax of our prototype, in which keyword \whyg{meth} corresponds to the $\keyw{let}$ construct.
Examples use some syntax sugars implemented in our prototype,
e.g., invocation of method \whyg{link} in an update of field \whyg{self.head} (Figure~\ref{fig:PQueue1}).
A method named after a class (e.g, \whyg{Pqueue}) is meant to be used as a constructor,
i.e., invoked on a newly allocated object, the fields of which are initialized with default values (null for classes, $\Emp$ for regions).

To lessen the need for uninteresting transitions in program semantics,
we equate certain syntactic forms.
For example, there is no transition from $(\skipc;C)$ to $C$ because we consider them to be the same syntactic object, see Figure~\ref{fig:synident}.
Working with syntax trees up to (i.e., quotiented by) syntactic equivalence
is done in the previous \RL\ articles and elsewhere.\footnote{See, e.g.,~\cite{AptOld3}.
   We use the symbol $\equiv$ because
   it is used for structural congruences in process algebra,
   which have the same purpose of streamlining the transition system.}
We sometimes use the symbol $\equiv$ for equality of other syntactic forms, like variables, just to emphasize that they are syntactic.

\begin{figure}[t]
  \begin{small}
    \[\begin{array}{l@{\hspace{.5em}}l@{\hspace{.2em}}r@{\hspace{.3em}}l}
  \multicolumn{4}{l}{m\in \mathconst{MethName}\qquad x,y,r\in \mathconst{VarName} \qquad f,g\in \mathconst{FieldName} \qquad   K\in \mathconst{DeclaredClassName} } \\[1ex]
\mbox{(Classes)} &   &::=&
\keyw{class} ~ K ~ \{ \ol{f\scol T} \}
   \quad\mbox{(overline indicates finite lists)} \\
\mbox{(Types)} & T &::=&
\INT \gmid \BOOL \gmid \Region \gmid K \quad \mbox{(and math types, in specs and ghost code)} 
\\[.2ex]
\mbox{(Prog.\ expr.)}& E &::=& x \gmid n \gmid \NULL \gmid {E\otimes E }
\quad \mbox{where $n$ is in $\Z$ and $\otimes$ is in $\{=,+,-,*,\geq,\land,\ldots\}$ }\\[.2ex]
\mbox{(Region\ expr.)} & G  & ::= & x \gmid \Emp \gmid \sing{E} \gmid G\Img f \gmid G/K \gmid G \otimes G
  \quad \mbox{where $\otimes$ is in $\{\cup,\cap,\setminus\}$ }\\[.2ex]
\mbox{(Expressions)} & F & ::= & E \gmid G  \\[.2ex]
\mbox{(Atomic com.)} & A & ::= & \skipc \gmid m() \gmid x := F \gmid x := \new{K} \gmid x := x.f \gmid x.f := x  \\[.2ex]
\mbox{(Commands)} & C &::=& A \gmid \letcom{m()}{C}{C} \gmid \ifc{E}{C}{C} \gmid \whilec{E}{C} \gmid \seqc{C}{C}
\gmid \varblock {x\scol T}{C}  \\[.2ex]

\mbox{(Biprograms)} & CC &::=& \splitbi{C}{C} \gmid \syncbi{A}
  \gmid \letcombi{m()}{\splitbi{C}{C}}{CC}
  \gmid \varblockbi{x\scol T\smallSplitSym x\scol T}{CC}
\gmid \seqc{CC}{CC} \\[.2ex]
&&&
\gmid \ifcbi{E\smallSplitSym E}{CC}{CC}
\gmid \whilecbiA{E\smallSplitSym E}{\P\smallSplitSym \P}{CC}
\\[.2ex]
\multicolumn{4}{l}{
  \mbox{Syntax sugar: $\whilecbi{E\smallSplitSym E'}{CC}$
abbreviates $\whilecbiA{E\smallSplitSym E'}{\False\smallSplitSym \False}{CC}$.}}
\\[.2ex]
\multicolumn{4}{l}{
  \mbox{Identifiers: $B,C,D$ for commands, $BB,CC,DD$ for biprograms.}}

\end{array}\]
\end{small}
\caption{Programs and biprograms. For relation formulas $\P$ see Figure~\ref{fig:relFormulas}.}
\label{fig:bnf}
\end{figure}

\begin{figure}[t]
\begin{small}
\( (\skipc;C) \equiv C \quad (C;\skipc) \equiv C \quad (C_0;C_1);C_2 \equiv C_0;(C_1;C_2) \)  \\
\( \splitbi{\skipc}{\skipc} \equiv \syncbi{\skipc} \quad \syncbi{\skipc};CC \equiv CC
\quad CC;\syncbi{\skipc} \equiv CC
   \quad (CC_0;CC_1);CC_2 \equiv CC_0;(CC_1;CC_2) \)
\end{small}
\caption{Syntactic equivalence \ghostbox{$\equiv$} of programs and biprogams.}
\label{fig:synident}
\end{figure}

Programs and specs are typed in a conventional way.
A \dt{typing context} $\Gamma$ maps variable names to data types
and method names to the token $\meth$, written as usual as lists,
e.g., $x\scol T,y\scol T,m\scol \meth$.
(In the formalization we omit method parameters and results.)
Various definitions refer to a typing context typically meant to be the global
variables, including ghost variables which may be of type $\Region$ (region).
We do not formalize ghost variables as such~\cite{FilliatreGP16,RegLogJrnI}.

The idea of ghost code is to instrument a program with extra state for
the sake of reasoning, in such a way that the termination and behavior of 
the original program is not affected.  This can be formalized in terms of
a rule for elimination of ghost state~\cite{OwickiGries,FilliatreGP16,RegLogJrnI}.
We refrain from doing so in this article; the additions would not be illuminating.

A class is just a named record type.  In the formal development
we assume an ambient \dt{class table} that declares some class types and the types of their fields.  For simplicity this has global scope.
We assume that field names in different class declarations are distinct, so any declared field $f$ determines a unique class, $\Class(f)$,\index{$\Class$}
that declares it, and also a type, which we write $f:T$.

Section~\ref{sec:approach} introduced the region expressions used in frame conditions.
In addition to (mutable) variables of type region, there are set operations like union, singleton,
subtraction ($\setminus$), and image expressions. 
The expression $\sing{x}$ denotes the singleton set containing the value of $x$.
For $G$ a region expression, the image expression $G\Img f$ is the empty region if $f:\INT$.
If $f$ is of some class type, $G\Img f$ is the set of current values of $f$-fields
of objects (i.e., object references) in $G$.  
For $f$ of type $\Region$ the image is the union of the field values.  For example, in the idiom using global variable $pool:\Region$
containing some objects with field $rep:\Region$, the image
$pool\Img rep$ is the union of their $rep$ fields.
The type restriction expression $G/K$ denotes the elements of $G$ of type $K$ (which excludes null).

As usual in program logics, field access and update is limited to the primitive forms
$x:=y.f$ and $x.f:=y$. 
In specs and ghost code, a dereference chain like $x.f.g.h$ (for reference type fields) can be expressed by the region expression $\sing{x}\Img f\Img g\Img h$; if $x$ is null the value is the empty set.

\begin{wrapfigure}{R}{0.40\textwidth}  
\begin{small}
\(
  \inferrule{ \Gamma\proves E:K }
  { \Gamma\proves \sing{E}:\Region}
\quad
  \inferrule{\Gamma \proves G: \Region }
  {\Gamma \proves G\Img f: \Region}
\)
\end{small}
\caption{Region expression typing (selected).}\label{fig:rgnTyping}
\end{wrapfigure}

Owing to the simple model of classes,
the notation \ghostbox{$G\Img\allfields$} can be defined as shorthand for $G\Img\ol{f}$ where
$\ol{f}$ is the list of all field names.
An implementation can support user-defined data groups which
can be used to abstract from specific sets of fields~\cite{LeinoEtalDG}.

The typing rules for expressions and commands are straightforward and omitted,
with the exception of those in Figure~\ref{fig:rgnTyping}.
We highlight those because
we allow  $f$ in an image expression $G\Img f$ to have any type;
as noted above, its value is empty unless $f$ has region or class type.\footnote{Typing in \RLI,\RLII\ is slightly more restrictive.}

Program variables are partitioned into two sets, ordinary variables and \dt{spec-only variables}.\footnote{As
  in \RLII,
   we rely on a partition of ordinary variables into \dt{locals}, which are bound by
   $\keyw{var}$ (and in \RLII\ also method parameters), and \dt{globals}; but
   we ignore the distinction where possible.
   Also, typing rules impose the \dt{hygiene property} that variable and method names
   are not re-declared; this facilitates modeling of states and environments as maps.
   } 
The distinguished variable $\lloc:\Region$ is an ordinary variable, but it is treated specially: It is present in all states, and is automatically updated in the transition semantics by the transition for $\NEW$, so in every state its value is exactly the set of allocated references.
Spec-only variables are used in specs to ``snapshot'' initial values for reference in the postcondition.
Spec-only variables do not occur in code, even ghost code, or in effects.\footnote{Spec-only variables are also used in \RLII. But here we also disallow the use of $\lloc$ in ghost code, which was not necessary in \RLII, so we have additional need to snapshot $\lloc$.}
In our prototype, ``old'' expressions are used to abbreviate the use of snapshot variables~\cite{LeavensBR06}.

Commands are typed in a context $\Gamma$.
We omit the straightforward rules for typing of commands,
except to note that a call $\Gamma\proves m()$ is well formed only if $m:\meth$ is in $\Gamma$.
To streamline the formal development we omit parameters for methods; by-value parameters can be handled straightforwardly as in \RLII\ and \RLIII.\footnote{As in those works,
   we also disallow $\keyw{let}$-commands inside let-bound commands and biprograms:
   in $\letcom{m}{B}{C}$ there must be no $\keyw{let}$ in $B$.
   (By modeling only top-level method declarations, we simplify the semantics.)
   We also disallow free occurrences of local variables in $B$; thus in
   $\varblock{x\scol T}{\letcom{m}{B}{C}}$  the module code $B$ can't refer to $x$.
  In practice, let is only used outermost.
} 

Program expressions $E$ are heap independent.
For expressions of reference type, the only constant is $\NULL$ and the only operation is equality test, written $=$.
Region expressions can depend on the heap but are always defined.
Null dereference faults only occur in the primitive load and store commands $x:=y.f$ and $x.f:=y$.
By contrast, if $x$ is null then $\sing{x}\Img f$ is defined to be empty.

\subsection{Modules}\label{sec:modules}

Assume given a set \mathconst{ModName} of module names, and map
\index{$\mdl$}
$\mdl:\mathconst{MethName}\to\mathconst{ModName}$
that associates each method with its module.
Usually we use letters $M,N,L$ for module names,
but there is a distinguished module name, $\emptymod$,
that serves both as main program and as default module in the proof rules for atomic commands.
Assume given a preorder $\imports$ (read ``imports'') on \mathconst{ModName},
which models the reflexive transitive closure of the import relation
of a complete program. 
We write $\irimports$ for the irreflexive part.
Cycles are allowed, as needed for interdependent modules that respect each other's encapsulation boundaries.
A module interface includes a spec for each method.
The function $\bnd$ \index{$\bnd$} from $\mathconst{ModName}$ to effect expressions
associates each module with its dynamic boundary, which is thus part of its interface along with its method specs.
This lightweight formalization of modules is adapted from \RLII\ (its Section~6.1).

For the \whyg{PQ} interface in Figure~\ref{fig:PQintr}, $\mdl(\code{insert})=\code{PQ}$.
In one of our case studies, the main program implements Dijkstra's single-source shortest-paths (SSSP) algorithm, as a client of \whyg{PQ} and another module \whyg{Graph}.
The import relations are then $\emptymod\irimports\code{PQ}$ and
$\emptymod\irimports \code{Graph}$.

A module $M$ specifies a dynamic boundary $\bnd(M)$.
The boundary can be expressed using regions and data groups for abstraction,
to cater for implementations that have differing internals.
This is why there is a single type, $\Region$, for sets of references of any type.
Well-formedness conditions for boundaries are defined in Section~\ref{sec:unaryspec}.

A proper module system would include module-scoped variables and fields that need not be part of the interface and need not be the same in different implementations of a module $N$.
Our simplified  formulation streamlines the formal development, because we do not need
syntax, typing contexts, etc.\ for a full-fledged module calculus, nor correctness judgments for modules.
But this comes at a price: some well-formedness conditions on correctness judgments (in the following subsections) and side conditions (in proof rules) merely serve to express lexical scoping that could be handled more neatly using a proper module system.

\begin{figure}[t]
\begin{lstlisting}
module PQ =
  public pool: rgn
  boundary { pool, pool`any, pool`rep`any }

  meth Pqueue (self: Pqueue)                  /* constructor */
  meth isEmpty (self: Pqueue) : bool
  meth findMin (self: Pqueue) : Pnode

  meth insert (self: Pqueue, val: int, key: int) : Pnode
     requires { self <> null /\ self iin pool }
     ensures  { not (isEmpty(self)) /\ result iin self.rep /\ result.val = val /\ result.key = key }
     writes { {self}`any, self.rep`any, alloc } reads { {self}`any, self.rep`any, alloc }

  meth deleteMin (self: Pqueue)
  meth decreaseKey (self: Pqueue, handle: Pnode, key: int)
end
\end{lstlisting}
  \caption{Priority queue interface \whyg{PQ}, eliding private methods and most specs.}
  \label{fig:PQintr}
\end{figure}

\subsection{Unary specifications}
\label{sec:unaryspec}

We assume a first-order signature providing primitive type, function,
and predicate symbols for use in specs and in ghost code.
Predicate formulas are in Figure~\ref{fig:preds}.
The  \dt{points-to} relation $x.f=E$ says that $x$ is non-null and the value of field $f$ equals
the value of $E$.
For examples, see the postcondition of \whyg{insert} in Figure~\ref{fig:PQintr}.
The predicate $\Type(G,\ol{K})$ says that every non-null reference in $G$ has
one of the class types in the list $\ol{K}$.

Typing of unary predicate formulas $P$ is straightforward.
For example, the points-to formula $x.f=E$ is well formed (\dt{wf}) in $\Gamma$
provided $\Gamma(x)$ is some type $K$ that declares $f:T$ and $E$ has type $T$.
An expression $E$ counts as an atomic formula if it has type $\BOOL$;
this includes equality tests.
The signature may include equality at other math types, with standard interpretation.

Quantifiers at a class type $K$ range over allocated references of type $K$.
The logic does not require quantification at type $\Region$ but we include it
to simplify the grammar.
It is often useful to bound the range of quantification at reference type to a specific region,
in the form $\all{x:K}{x\in G \imp P}$, to facilitate framing.
(This is explored in \RLI.)
In sugared form: $\all{x:K\in G}{P}$.

\begin{figure}[t]
\begin{small}
\(
\begin{array}{lrl@{\hspace{.2em}}}
P & ::= & E \mid x.f=E \mid G  \subseteq G \mid \Type(G,\ol{K}) \mid R(\ol{F})
            \quad (\mbox{atomic formulas, where $R$ is in the signature}) \\[.2ex]
& & \mid P \land P \mid P\imp P \mid (\all{x:T}{P})
\\[.2ex]
\multicolumn{3}{l}{
  \mbox{Syntax sugar: } \disj{G}{H} \eqdef G\intersect H \subseteq \sing{\NULL}
\mbox{ and } x\in G \eqdef  \sing{x}\subseteq G 
  \mbox{ and standard defs of $\neg$, $\lor$, and $(\some{x:T}{P})$.}
} \\
\multicolumn{3}{l}{\mbox{Precedence: $\land$ binds more tightly than $\imp$ and less tightly than relations like $=,\subseteq$.}} \\
\multicolumn{3}{l}{\mbox{Associativity: $P\imp Q\imp R$ means $P\imp(Q\imp R)$.}}
\end{array}
\)
\end{small}
\caption{State predicates. For expression forms $E$, $F$ and $G$ see Figure~\ref{fig:bnf}.}
\label{fig:preds}
\end{figure}

\paragraph{Effect expressions.}

A \dt{spec} \ghostbox{$\flowty{P}{Q}{\eff}$} comprises
precondition $P$, postcondition $Q$, and frame condition $\eff$.
Frame conditions are \dt{effect expressions} $\eff$, defined by
\begin{equation}\label{eq:effects}
\begin{array}{llcl}
\mbox{(Left-expression)}   & LE & ::= & x \gmid G\Img f \\
\mbox{(Effect expression)} & \eff & ::= & \rd{LE} \gmid \wri{LE} \gmid \eff,\eff \gmid \emptyeff
\end{array}
\end{equation}
\dt{Left-expressions}, $LE$, are a subset of expressions (category $F$ in Figure~\ref{fig:bnf}).
They have l-values, as discussed below, and are used in effects and in agreement formulas.\footnote{For readers familiar with prior \RL\ articles:
Effect expressions are exactly the same
as in previous articles; we have changed the grammar for clarity.}
An effect $\eff$ is wf in $\Gamma$ provided each of its left-expressions is.

Notation: Besides $\eff$ we often use identifiers $\effe$ and $\delta$ for effect expressions.
We use the short term \dt{effect} for effect expressions, including compound ones like
$\rd{x},\wri{x},\wri{\sing{x}\Img f}$.
The singleton image $\wri{\sing{x}\Img f}$ can be abbreviated as $\wri{x.f}$.
We use the abbreviation $\rw{}$ to mean $\rd{}$ and $\wri{}$.
The empty effect is given explicit notation $\emptyeff$
for clarity in certain parts of the development, but we omit it when confusion seems unlikely.
We often treat compound effects as sets of atomic reads and writes.
We also omit repeated tags, e.g., $\rd{x,y}$ abbreviates $\rd{x},\rd{y}$;
and then reads are separated from writes by semicolon, e.g.,
$\rd{x,y};\wri{z,w}$.

\paragraph{l-value and r-value.}

In common usage, the term r-value refers to the meaning of an expression in contexts like the right side of an assignment. For those expressions allowed on the left of an assignment, the l-value is the location to be assigned and the r-value is 
the current contents of that location~\cite{Strachey00}.
In our language there are two forms of mutable location: variables 
and heap locations.  A heap location is a pair $(o,f)$ where $o$ is an object reference and $f$ a field name; we write the pair as $o.f$.

We identify a subset of expressions, called left-expressions (\ref{eq:effects}),
which have an l-value ---in addition to the r-values described in Sec.~\ref{sec:progtype} (and formalized in Figure~\ref{fig:rexpsem}).
In general, the l-value of a left-expression designates a set of locations.
In frame conditions, left-expressions are interpreted for their l-values 
as is common in spec languages. 
(Note that our left-expression form $G\Img f$ is not an assignment target.)

In the write effect $\wri{x}$, the l-value of expression $x$ is a single location, the variable $x$ itself, independent of the current state.
For the left-expression  $\sing{x}\Img f$, the l-value is again a single location, namely 
$o.f$ where $o$ is the r-value of $x$ in the current state ---unless that value is null, in which case the l-value is the empty set.

Consider a variable $r:\Region$.
The l-value of $r\Img f$ is the set of $o.f$ where $o$ is a non-null reference
that is an element of the current value of $r$.  (We may say ``object in $r$'' to be casual.)

What about the l-value of $r\Img f \Img g$? It is the set of $o.g$ where $o$ is a non-null reference
in the region $r\Img f$---that is, $o$ is an element of the r-value of $r\Img f$.
In case $f$ has type $\INT$, that region is empty.
In case $f$ has some class type $K$, the region $r\Img f$ is the set of contents of $f$ fields of objects in $r$.  So, for $o.g$ to be in the l-value of $r\Img f\Img g$ means $o$ is the value in $p.f$ for some non-null reference $p$ in $r$.

Suppose instead that $f$ has type $\Region$.
Then the r-value of $r\Img f$ is defined to be the union of the values of the $f$-fields of objects in $r$.
(We use the union in order to avoid sets of sets.)
So, for $o.g$ to be in the l-value of $r\Img f\Img g$ means $o$ is an element of the set $p.f$ for
some non-null $p$ in $r$.

\begin{sloppypar}
In general, the l-value of a left-expression is dependent on the state, for the values of variables and for the values of fields of allocated objects.
For example, consider the private method, \whyg{link}, used internally by \whyg{insert}
(Figure~\ref{fig:PQueue1}). The ascribed effect of method \whyg{link} is
$\rw{\sing{\self}\Img rep\Img child,
  \sing{\self}\Img rep\Img sibling,
  \sing{\self}\Img rep\Img prev}$.
Here, $\{\self\}\Img rep$ is used for its r-value which is a set of objects in the $rep$ field
(the same as $\self.rep$), and the left-expression
$\{\self\}\Img rep\Img child$ is used in the effect to refer to the locations of the child fields of all the \whyg{Pnode}s in $\self\Img rep$.
\end{sloppypar}

\paragraph{Dynamic boundary and operations on effects.}

For expressions and atomic formulas, read effects can be computed syntactically by the \dt{footprint function}, $\ftpt$, defined in Figure~\ref{fig:foot}.
For example, the private invariant for the \whyg{PQ} module (Figure~\ref{fig:PQintr})
includes $q.rep\Img prev\subseteq q.rep$.  Its footprint, computed by $\ftpt$, is
$\rd{q}, \rd{\sing{q}\Img rep}, \rd{\sing{q}\Img rep\Img prev}$,
which can be abbreviated as $\rd{q, \sing{q}\Img rep, q.rep\Img prev}$.
It has a closure property, framed reads, that will play a role in reasoning about encapsulation.


\begin{figure}[t]
\begin{small}
\(
\begin{array}{lll}
\ftpt(x) &\eqdef& \rd{x}  \\[.2ex]
\ftpt(\Emp) &\eqdef& \emptyeff  \\[.2ex]
\ftpt(\sing{E}) &\eqdef& \ftpt(E) \\[.2ex]
\ftpt(G/K) &\eqdef& \ftpt(G)   \\[.2ex]
\ftpt(G\Img f) &\eqdef& \rd{G\Img f}, \ftpt(G)   \\[.2ex]
\ftpt(F_1 \odot F_2) &\eqdef& \ftpt(F_1), \ftpt(F_2)
\quad\mbox{ for $\odot$ in $\{\union\;,\intersect\;,\setminus\;, +\;,-\}$ } \\[.2ex]
\ftpt(G_0\subseteq G_1) &\eqdef& \ftpt(G_0),\ftpt(G_1)\\[.2ex]
\ftpt(x.f = F) &\eqdef& \rd{x},\rd{\sing{x}\Img f}, \ftpt(F)\\[.2ex]
\ftpt(E=E') &\eqdef& \ftpt(E), \ftpt(E')
\end{array}
\)
\end{small}
\caption{Footprints of expressions and atomic formulas.}
\label{fig:foot}
\end{figure}

\begin{definition}[\textbf{framed reads; candidate dynamic boundary}]
\label{def:framedreadsDyn}
An effect $\eff$ has \dt{framed reads} provided that for every $\rd{G\Img f}$ in $\eff$,
its footprint $\ftpt(G)$ is in $\eff$.
A \dt{candidate dynamic boundary} is an effect that has framed reads, has no write effects,
and has no spec-only or local variables.
\end{definition}

In addition to the well-formedness assumption that the module import relation, $\imports$, is a preorder, we also assume that every declared boundary, $\bnd(M)$, is a candidate dynamic boundary.
The distinguished \dt{default module} name $\emptymod$ has empty boundary: $\bnd(\emptymod)=\emptyeff$.
For a finite set $X\subseteq\mathconst{ModName}$, we use the abbreviation
\ghostbox{$\unioneff{N\in X}{\bnd(N)}$}
for the catenation (union) of the boundaries.
Note that such combined boundaries are themselves candidate dynamic boundaries.
For \whyg{PQ}, the dynamic boundary, $\bnd(\code{PQ})$,
is $\rd{pool, pool\Img\allfields, pool\Img rep\Img\allfields}$.

The syntactic operation of \dt{effect subtraction},
\ghostbox{$\eff\setminus\effe$},
is used to formulate local equivalence specs;
in particular we subtract a dynamic boundary from a method's frame condition.
Subtraction is defined as follows.
First, put $\eff$ and $\effe$ into the following normal form:\footnote{After replacing
the data group $\allfields$ with the fields it stands for.}
No field occurs outermost in more than one field read or more than one field write.
This can be achieved by merging $\rd{G\Img f},\rd{H\Img f}$ into $\rd{(G\union H)\Img f}$
and likewise for write.
(Occurrences of field images within $G$ and $H$, not being outermost, are untouched.)
Assuming $\eff,\effe$ are in normal form,
define $\eff\setminus\effe$ to be $(\delta_0,\delta_1,\delta_2,\delta_3)$ where
\begin{equation}\label{eq:effsubtract}
\hspace*{-2em}\begin{array}{l}
   \delta_0 =  \{ \rd{x} \mid \rd{x}\in\eff \mbox{ and }\rd{x}\notin\effe \} \\
   \delta_1 =  \{ \rd{G\Img f} \mid \rd{G\Img f} \in \eff \mbox{ and }
                                     \effe \mbox{ has no $f$ read} \}
                \union \{ \rd{(G\setminus H)\Img f} \mid
                                \rd{G\Img f} \in \eff \mbox{ and } \rd{H\Img f} \in \effe \}
\end{array}
\end{equation}
and $\delta_2,\delta_3$ are defined the same way for writes.
For example, let $r$ and $s$ be region variables.
Then $(\rd{r},\rd{s},\rd{(r\union s)\Img nxt},\rd{r\Img val})\setminus
       (\rd{r},\rd{\sing{x}\Img nxt})$
is $\rd{s},\rd{((r\union s)\setminus\sing{x})\Img nxt},\rd{r\Img val}$.

\begin{figure}[t]
\(
\begin{array}{lcl}
\ind{\rd{G_1\Img f}}{\wri{G_2\Img g}} &=& \mbox{ if $f\equiv g$ or $f\equiv\allfields$ or $g\equiv\allfields$ then $\disj{G_1}{G_2}$
  else $\True$} \\
\ind{\rd{y}}{\wri{x}} &=&
 \mbox{ if $x\equiv y$ then $\False$ else $\True$}\\
\ind{\delta}{\eff} &=& \True \quad\mbox{for all other pairs of atomic effects} \\
\ind{\delta}{\eff} &=& \True \quad\mbox{in case $\delta$ or $\eff$ is empty}\\
\ind{(\eff,\delta)}{ \effe} &=& (\ind{\eff}{\effe}) \land (\ind{\delta}{\effe}) \\
\ind{\delta}{(\eff,\effe)}&=& (\ind{\delta}{\eff})\land(\ind{\delta}{\effe})
\end{array}
\)
 \caption{The separator function \ghostbox{$\indSymbol$} is defined by recursion on effects.}
 \label{fig:sepdef}
\end{figure}

The separator function \ghostbox{$\indSymbol$}, 
mentioned in connection with the frame rule (\ref{eq:frameRule})
is defined by structural recursion on effects (Figure~\ref{fig:sepdef}).\footnote{This is unchanged from prior work (\RLI,\RLII).  The data group ``$\allfields$''
can be expanded to all the field names.  Computing
$\ind{ \rd{G\Img f} }{ \wri{H\Img\allfields}}$
yields the formula $\disj{G}{H}$.}
Given effects $\eff,\effe$ it generates a formula $\ind{\eff}{\effe}$ that implies
the read effects in $\eff$ are disjoint locations from the writes in $\effe$.
Please note that $\indSymbol$ is not syntax in the logic; it's a function in the metalanguage that is used to obtain formulas, dubbed \dt{separator formulas}, from effects.
For example, $\ind{\rd{r\Img nxt}}{\wri{r\Img val}}$ is the formula $true$ and
$\ind{\rd{r\Img nxt}}{\wri{s\Img nxt}}$ is the disjointness formula\footnote{Note that
   $\disj{r}{s}$ allows $r$ and/or $s$ to contain null; this is ok because there are
   no heap locations based on null.}
$\disj{r}{s}$.
Note that $\ind{\eff}{\effe}$ is identical to
$\ind{\reads(\eff)}{\writes(\effe)}$
where $\reads$\index{$\reads$} keeps just the read effects and
$\writes$\index{$\writes$} the writes.
The separator function can  be used to obtain disjointness conditions for
two read effects, say $\eff$ and $\effe$, by using the function we call
$\rTow$\index{$\rTow$} which discards write effects and changes reads to writes,
as in $\ind{\eff}{\rTow(\effe)}$.
Function $\wTor$\index{$\wTor$} does the opposite.
The upcoming Example~\ref{ex:PQframe} shows a use of $\indSymbol$ and the frame rule.

\subsection{Unary correctness judgments}\label{sec:ucorr}

On the way to formalizing correctness judgments, we first consider specs.
Spec-only variables are implicitly scoped over the spec but not explicitly declared.

\begin{definition}[\textbf{wf spec}]
\label{def:wfspec}
A spec $\flowty{P}{Q}{\eff}$ is well formed (\dt{wf}) in context $\Gamma$ if
\begin{itemize}
\item $\Gamma$ has no spec-only variables, and $\eff$ is wf in $\Gamma$.

\item $P$ and $Q$ are wf in $\Gamma,\hat{\Gamma}$,
for some $\hat{\Gamma}$ that declares only spec-only variables.\footnote{Here is what is needed to formalize method parameters. 
They can be referenced in the pre- and postcondition.
The frame must not allow write of a parameter, for the usual reason in Hoare logic that
the postcondition should refer to the initial value.
The frame should not allow read of a parameter: The call rule reflects that what is read is the argument expression in the call.
The linking rule allows the body of a method to read its parameters (see \RLIII).
} 
\item In $P$, every occurrence of a spec-only variable $s$ is in an equation
$s=F$ that is a top-level conjunct of $P$,
where $F$ has no spec-only variables;
and every spec-only variable in $Q$ occurs in $P$.
\end{itemize}
\end{definition}
The last item says spec-only variables are used as ``snapshot'' variables.\footnote{In
Def.~\ref{def:wfspec}, $\hat{\Gamma}$  is uniquely determined from the other conditions.
This is why we can leave types of spec-only variables implicit.
Their scope is also not explicit, but in the semantics they are scoped over the pre- and post-states.
We can refer to ``the spec-only variables of $P$'' as a succinct way to refer to those used in the spec.
} 
In this article, the $'$ symbol is often used for identifiers on the right side of a pair,
so we avoid it for other decorative purposes, instead using $\hat{hats}$ and $\dot{dots}$.

A \dt{hypothesis context} $\Phi$ (context, for short)
maps some procedure names to specs
and is written as a comma-separated list of entries $m: \flowty{P}{Q}{\eff}$.

A \dt{correctness judgment}  has the form
\ghostbox{
\(
\Phi \proves^\Gamma_M  C: \flowty{P}{Q}{\eff}
\)
}
where $\Phi$ is a hypothesis context and $M$ is a module name.  The judgment is for code of the \dt{current module} $M$.
We distinguish two kinds of method calls in $C$:
\dt{environment calls} are those where a called method is bound by let within $C$;
the others, \dt{context calls}, are those where a called method is specified in $\Phi$.
Informally, the correctness judgment says executions of $C$ from $P$-states read and write only as allowed by $\eff$, and $Q$ holds in the final state if execution terminates.
A context call to $m$ in $\Phi$ may involve reading and writing encapsulated state for the module, $\mdl(m)$, of $m$, and these effects must be allowed by $\eff$.
Commands are given small step semantics, with bodies of let-bound methods kept in an environment.
The judgment also says that, aside from context calls, steps of $C$ must neither read nor write locations encapsulated by any module in $\Phi$ except its own module $M$.  These conditions must hold for any correct implementation of $\Phi$, so the judgment expresses ``modular correctness''~\cite{Leavens-Naumann15}.

Typically, in a judgment $\Phi\proves_M C:\ldots$
we will have $M\imports N$ for each $N$ in $\Phi$ (i.e., each $N$ for which some $m$ in $\Phi$ has $\mdl(m)=N$). However, we do not want to say $\Phi$ must contain every $N$ with $M\imports N$, because we use ``small axioms''~\cite{OHearnRY01} to specify atomic commands, which are stated in terms of the minimum relevant context. 
Additional hypotheses can be added using ``context introduction'' rules
with side conditions that enforce encapsulation, as discussed in Sects.~\ref{sec:eg:encap} and~\ref{sec:encap}.
At the point in a proof where a client $C$ is linked with implementations of its context $\Phi$, the judgment for $C$ will include all methods of the modules in $\Phi$, and all transitive imports.

Because we are not formalizing a separate calculus of modules and module judgments,
some module-related scoping and typing conditions are associated with correctness judgments for commands.
The lack of an explicit binder for the spec-only variables of a spec also requires some care.

\begin{definition}[\textbf{wf correctness judgment}]
\label{def:wfjudge}
\index{wf}
A correctness judgment
\(\Phi \proves^\Gamma_M  C: \flowty{P}{Q}{\eff} \)
is wf if
\begin{itemize}
\item $\Phi$ is wf, i.e., each spec in $\Phi$ is wf in $\Gamma$ and they
have disjoint spec-only variables.\footnote{The latter condition loses no generality, since spec-only variables have scope over a single spec,
and distinctness helps streamline notation in some soundness proofs.}
\item No spec-only variables, nor $\lloc$, occur in $C$.
\item No methods occur in $\Gamma$, and $C$ is wf\footnote{\label{fn:label}Strictly speaking,
  we assume that for any subprogram of the form $\ifc{E}{C}{D}$, we have $C\nequiv D$.
  This loses no generality: it can be enforced using labels, or through the addition
  of   dummy assignments.  This is needed in order to express, in the definitions for
  encapsulation (Def.~\ref{def:valid}), that two executions follow exactly the same
  control path. } 
in the typing context that extends $\Gamma$ to declare the methods in $\Phi$.
\item for all $N$ with $N\in\Phi$ or $N=M$, the candidate dynamic boundary $\bnd(N)$ is wf in $\Gamma$.
\item $\flowty{P}{Q}{\eff}$ is wf in $\Gamma$,
and its spec-only variables are distinct from those in $\Phi$.
\end{itemize}
\end{definition}

For example,
\[ m: \flowty{\True}{x>0}{\rw{x}}
\proves^{x:\INT,y:\INT}_{\emptymod}
x:=0;m() : \flowty{ x\leq 0 }{x > 0}{\rw{x}}
\]
is a wf judgment; in particular we have the typing  $x\scol\INT,y\scol\INT,m\scol\meth\proves x:=0;m()$.


\begin{example}
\label{ex:PQversions}
This example illustrates boundaries and specs.
To specify the priority queue ADT (Figure~\ref{fig:PQintr}),
we use an ownership idiom mentioned earlier (Section~\ref{sec:approach}). A ghost variable $pool:\Region$ is used to keep track of queue instances and each queue's $rep$ field contains objects it
notionally owns.
For a particular implementation, the private invariant includes conditions that imply all allocated queues have valid representations.

In one of our case studies we verify two implementations of the \whyg{PQ} module using pairing heaps~\cite{Weiss},
both using objects of class \whyg{Pnode}.
The private invariant of both versions includes the condition that for each $q\in pool$,
$q.rep\Img sibling \union q.rep\Img prev \union q.rep\Img child \subseteq q.rep$.
This says the $rep$ of $q$ is closed under these field images.
An interesting feature of this example is that clients manipulate \whyg{Pnode} references,
as ``handles'' returned by \whyg{insert}, but must respect encapsulation by not reading or writing the fields.

The leaves of the pairing heap are represented using $\NULL$ for the child in one implementation
and using references to a sentinel \whyg{Pnode} in the other.
One benefit of using sentinels is that certain checks for $\NULL$ can be avoided;
our motivation is simply to exemplify two different but similar data structures.

As per Figure~\ref{fig:PQintr} the dynamic boundary, $\bnd(\code{PQ})$,
is $\rd{pool, pool\Img\allfields, pool\Img rep\Img\allfields}$.
To reason that operations on one priority queue have no effect on others,
the public invariant expresses  disjointness
following the idiom mentioned in Section~\ref{sec:approach}:
\begin{equation}\label{eq:pubinvar}
\all{p, q\in pool}{p\neq q \imp \disj{p.rep}{q.rep} \land p\notin q.rep}
\end{equation}
While it is convenient for a module to declare a public invariant, there is no subtle semantics:
a public invariant simply abbreviates a predicate that is conjoined to
the pre- and post-conditions of the module's method specs.
That invariant is typically framed by the boundary, in which case clients easily maintain the invariant (and use it in their loop invariants).

As an example spec, consider the one for \whyg{PQ}'s \whyg{insert} (Figure~\ref{fig:PQintr}).
Abbreviating the parameters as $q,v,k$, a call $\code{insert}(q,v,k)$
adds to a given queue $q$, a \whyg{Pnode} with value $v$ and key $k$. Its spec is
\[
\begin{array}{lcl}
q\neq\NULL\land q\in pool
&\;\specSym\;&
\neg\code{isEmpty}(q)\land res\in q.rep\land res.val=v\land res.key=k
\\
&& [ \rw{\sing{q}\Img\allfields,q.rep\Img\allfields,\lloc} ]
\end{array}
\]
where $res$ is the return value, which references the inserted \whyg{Pnode}.  This pointer to an internal object
serves as handle for a client to increase the priority, for which
purpose it calls $\code{decreaseKey}(q,n,k)$ with spec
\[
\begin{array}{l}
 q\neq \NULL\land q\in pool \land \neg\code{isEmpty}(q)\land n\neq\NULL \land k\leq n.key \land n\in q.rep \\
\specSym \;  n.key = k \; [  \rw{\sing{q}\Img\allfields,q.rep\Img\allfields}  ]
\end{array}
\]
Clients see these pre- and postconditions conjoined with the public invariant.
\qed\end{example}

\begin{example}
\label{ex:PQframe}  
The separator function ($\indSymbol$) is used in the frame rule (\ref{eq:frameRule})
(formalized in Figure~\ref{fig:proofrulesU}).
To illustrate, consider a program with variables $p:\code{Pqueue}$ and $q:\code{Pqueue}$.
In accord with Example~\ref{ex:PQversions},
the proof rule for method call gives a judgment like this (eliding hypothesis context):
\[
n:= \code{insert}(q,v,k): \flowty{R}{S}{\rd{q,v,k};\wri{n};\rw{\sing{q}\Img\allfields,q.rep\Img\allfields,\lloc} }
\]
where $R,S$ are the
pre- and post-condition of $insert$'s spec.
Note that the call reads the arguments, and writes the result, in addition to the effects of the method spec (Figure~\ref{fig:PQintr}).

Consider the formula $p\neq q$.  It depends only on $p$ and $q$, which are not written by the displayed call to $insert$;
so the frame rule lets us infer
\[
n:= \code{insert}(q,v,k): \flowty{R\land p\neq q}{S\land p\neq q}{\rd{q,v,k};\wri{n}; \rw{\sing{q}\Img\allfields,q.rep\Img\allfields,\lloc} }
\]
To be precise, the rule requires a \emph{framing judgment} confirming that $\rd{p,q}$ covers
the footprint of formula $p\neq q$. (This is formalized in Section~\ref{sec:framing}
and used in rule \rn{Frame} which appears in Figure~\ref{fig:proofrulesU}.)
That is, $p\neq q$ is ``framed by $\rd{p,q}$''.
The rule also requires to compute a separator for the reads of the formula ($\rd{p,q}$) 
and the writes of the command,
namely
$\ind{\rd{p,q}}{\wri{\sing{q}\Img\allfields,q.rep\Img\allfields,\lloc}}$
(see Figure~\ref{fig:sepdef})
and show it follows from the precondition.
In this case the separator formula is simply $\True$; the only locations read are the variables
$p$ and $q$, and the only variable written is $\lloc$.

Now consider the formula $\code{isEmpty}(p)$.  The spec of  \whyg{isEmpty} has frame condition
$\rd{\sing{\mself}\Img size}$, so the formula $\code{isEmpty}(p)$
is framed by $\rd{p,p.size}$, which abbreviates $\rd{p},\rd{\sing{p}\Img size}$.
The \rn{Frame} rule lets us add the formula before and after the call $n:=\code{insert}(q,v,k)$:
\[
\flowty{R\land p\neq q\land \code{isEmpty}(p)}{S\land p\neq q\land \code{isEmpty}(p)}{\rd{q,v,k}, \rw{\sing{q}\Img\allfields,q.rep\Img\allfields,\lloc} }
\]
Here the separator is
$\ind{\rd{p},\rd{\sing{p}\Img size}}{\wri{\sing{q}\Img\allfields,q.rep\Img\allfields,\lloc}}$.
Unfolding the definition of $\indSymbol$,
and using that the data group, $\allfields$, covers every field including $size$,
we get the formula $\disj{\sing{p}}{\sing{q}} \land \disj{\sing{p}}{\sing{q}\Img rep}$.
Rule \rn{Frame} requires that the separator follows from the precondition.
The first conjunct, $\disj{\sing{p}}{\sing{q}}$, follows from precondition $p\neq q$.
The second conjunct follows using (\ref{eq:pubinvar}) which implies both $p\notin q.rep$ and $q\notin p.rep$.
\qed\end{example}

\paragraph{Summary.}

So far we introduced the syntax of commands, 
unary specs and unary correctness judgments.
The symbol $\equiv$ is sometimes used for equality of syntactic objects
like variable names, and especially in the case of commands and biprograms
which we identify up to the equivalences in Figure~\ref{fig:synident}.

There are also a number of meta-operators on syntax which are used pervasively and should not be confused with the syntax:
effect subtraction ($\eff\setminus\effe$),
separator ($\ind{\eff}{\effe}$),
footprint ($\ftpt(\effe)$),
converting write effects to reads ($\wTor$), etc.
There is no concrete syntax for modules; instead
there are meta-operators for the boundary $\bnd(M)$
of the module named $M$,
the import relation $\imports$ on module names, and
the module name $\mdl(m)$ associated with method $m$.


Appendix Section~\ref{sec:app:guide} has a table of notations and a table of metavariables.

\subsection{Encapsulation in unary reasoning about modules and clients}\label{sec:eg:encap}

\begin{figure}[t]
\begin{lstlisting}
module UnionFind

   class Ufind {id: IntArray; part: partition; rep: rgn;}

   public pool : rgn
   boundary { pool, pool`any, pool`rep`any }

   meth Ufind(self:Ufind, k:int) : unit
   meth find(self:Ufind, k:int) : int
   meth union(self:Ufind, x:int, y:int) : unit
end.
\end{lstlisting}
  \caption{Excerpts of union-find interface, eliding private methods and specs.}
\label{fig:UnionFind}
\end{figure}

In this subsection we consider how the requirements (E1)--(E4) for encapsulation in Section~\ref{sec:modrel}, are met in the unary logic.
Figure~\ref{fig:UnionFind} shows the interface of a module that provides a class whose instances are union-find structures.
The first requirement for encapsulation, (E1), is to delimit some locations internal to the
module.  That is the purpose of the dynamic boundary, which in the logic would be written
$\rd{pool},\rd{pool\Img \allfields},\rd{pool\Img rep\Img\allfields}$ (in accord with Def.~\ref{def:framedreadsDyn})
and abbreviated as $\rd{pool, pool\Img\allfields, pool\Img rep\Img\allfields}$.
An equivalent formulation of the boundary is
$\rd{pool, (pool\union pool\Img rep)\Img \allfields}$.

In this example we follow the idiom, and even the naming convention, sketched in Sec.~\ref{sec:approach} for a module providing stacks.
Aside from $rep$, the boundary does not mention specific fields but rather uses
the data group $\allfields$ for the sake of abstraction.

Because $\rd{pool}$ is in the boundary of \whyg{UnionFind}, client programs may neither read nor write this variable.
It serves in specs to designate references to, at least, the \whyg{Ufind} instances managed by the module;
so the constructor method \whyg{Ufind}, which should be
invoked on newly allocated \whyg{Ufind} objects, adds the new object to $pool$.
The boundary includes $\rd{pool\Img\allfields}$, which says fields of these objects
may neither be read nor written by client programs.
In specs and reasoning about clients, the $rep$ field of a \whyg{Ufind} is important:
it is used to delimit the locations modified by method calls on that instance,
and a public invariant of the module says distinct \whyg{Ufind} instances have
disjoint $rep$.
This enables reasoning that performing an operation on one
\whyg{Ufind} does not affect the state of another \whyg{Ufind} ---which is locality, not encapsulation.
Fields of objects in $rep$ are encapsulated by the module, as expressed by $\rd{pool\Img rep\Img\allfields}$.
Here $pool\Img rep$ is the union of the $rep$ fields of all allocated \whyg{Ufind}s.

We consider an implementation based on the quick-find data structure~\cite{SedgewickWayne}.
Math type \whyg{partition} represents a partition on a set of numbers $0\dots n-1$.
It is used in ghost code and specs, in particular the private invariant
which says each queue $p$ satisfies a predicate defined on its internal representation, which is an array
referenced by field $id$.

\begin{lstlisting}
  predicate ufInv (p: Ufind) =
    p.id <> null /\
    let n = p.id.len in
       size(p.part) = n /\  p.rep = { p.id } /\
       (forall x:int. 0 <= x < n -> 0 <= p.id[x] < n /\ p.id[p.id[x]] = p.id[x]) /\
       (forall x:int, y:int. 0 <= x < n /\ 0 <= y < n -> ( y iin pfind(x,p.part) <-> p.id[x] = p.id[y]) )

  private invariant *?$I_{qf}$?* = forall p: Ufind iin pool. ufInv(p)
\end{lstlisting}

The union-find implementation uses a representative element for each block of the partition,
with $id[x]$ being the representative of $x$,
for each $x$ in $0\ldots n-1$.
If $x$ is a representative then $id[x]=x$.
The private invariant says that for any $x$, $id[x]$ is a representative: $p.id[p.id[x]] = p.id[x]$.
The last conjunct says $x$ and $y$ have the same representative in $p.id$ just if they
are in the same block of the abstract partition.
The ghost field $rep$ has nothing to do with representatives;
as in our usual idiom it holds references to the internal representation objects, in this case just the $id$.

Requirement (E2) for encapsulation is that a private invariant depends only on locations within
the boundary.  This is formalized in the logic by a \emph{framing judgment} which in our example is written
$\models \fra{ (\rd{pool}, \rd{(pool\union pool\Img rep)\Img \allfields}) }{ I_{qf} }$.
As formalized later, its meaning is that if $I_{qf}$ holds in some state,
then it holds in any other state that agrees on the values in the locations designated
by the read effect.
Looking at its definition, $I_{qf}$ depends on only one variable, $pool$.
The heap locations on which it depends are in expressions $p.id$ and index expressions $p.id[x]$.  As we have $p.id \in p.rep$, by the invariant,
and the slots of the array are effectively fields of $id$,
these heap locations are indeed covered by $\rd{(pool\union pool\Img rep)\Img \allfields}$.
The meaning of the framing judgment can be encoded as a universally quantified formula;
this and other framing judgments in our case studies are easily validated by SMT solvers.

Here we consider the quick-find implementation, which for the \whyg{find} method is:
\begin{lstlisting}
  meth find (self: Ufind, k: int) : int
  = result := self.id[k]
\end{lstlisting}
A key postcondition of the spec of \whyg{find} is that $result \in \mathit{pfind}(k,\mself.part)$,
where $\mathit{pfind}$ is the function that returns the block of the abstract partition that contains $k$.
The postcondition holds in virtue of conditions in the private invariant, including that $id[k]$
is a representative, for any $k$, and the connection between $\mself.part$ and $\mself.id$.

\newcommand{\deltaUF}{\delta_{\mbox{\relsize{-1}uf}}}

\paragraph{Encapsulation of a client.}

As a case study we have verified Kruskal's minimum spanning tree algorithm as client, but for present purposes
we consider a very simple client.
\begin{lstlisting}
  uf:=new Ufind(100); x:=new Thing; x.f:=y; z := find(uf,1)
\end{lstlisting}
To verify the client code, its hypothesis context needs to include the module specs, in particular for
\whyg{find}.  So \whyg{UnionFind} is in scope and its boundary must be respected
by the client.
The logic enforces encapsulation of clients, i.e., requirement (E3),
using separation checks similar to those for frame based reasoning as in Example~\ref{ex:PQframe}.

To explain the checks, let us write $\deltaUF$ for the boundary of \whyg{UnionFind}.
The command $x:=\new{\code{Thing}}$ has frame $\wri{x},\rw{\lloc}$.
Respect of $\deltaUF$ by this command is formulated in terms of
the separator function, in this case $\ind{\deltaUF}{\wri{x,\lloc}}$.
Unfolding the definition (Figure~\ref{fig:sepdef}) yields the formula
$\True \land\True$.
The only variable designated by $\deltaUF$ is $pool$, and this is distinct from $x$ and from $\lloc$.
The proof obligation here also rules out client code that assigns or reads $pool$.
In general it is untenable to include $\rd{\lloc}$ in a boundary,
or even an image expression mentioning $\lloc$, because clients typically do allocation.

The command $x.f:=y$ has frame condition $\rd{x},\rd{y},\wri{\sing{x}\Img f}$.
For the write to be outside the boundary, the obligation can be written
$\ind{\deltaUF}{\wri{\sing{x}\Img f}}$.  Unfolding by definition of the separator function,
and expanding the abbreviation $\allfields$ to be all field names in scope,
we get a conjunction of $true$s (because the read and written variables are distinct)
and two nontrivial conjuncts:
$\disj{pool}{\sing{x}}$ and $\disj{pool\Img rep}{\sing{x}}$.  That is, the assigned
object must be in neither $pool$ nor any $rep$ fields of objects in $pool$.
One way this obligation can be proved is via freshness: neither $pool$ nor $rep$ have been updated since $x$ was assigned a fresh object.
A related idiom used in some method specs is a postcondition that says all fresh objects are in $\mself.rep$, which a client can use to reason that its own regions remain disjoint.
In a postcondition, the fresh references are denoted by $\lloc\setminus\old(\lloc)$.
In the formal logic state predicates only refer to a single state, so a postcondition must
be expressed in the same way that tools desugar ``old'' expressions.  That is, a fresh spec-only variable,
say $r$, is used to snapshot the initial value: the precondition includes $r=\lloc$ and the
idiomatic postcondition is now $\lloc\setminus r \subseteq \mself.rep$.

We are not finished with $x.f:=y$.  In addition to its writes, its reads must be outside the boundary, 
specifically, $x$ and $y$ must be outside $\deltaUF$.
This can be written
$\ind{\deltaUF}{\wri{x},\wri{y}}$.  Why $\wri{}$? 
Just so we can use the separator function $\ind{}{}$ unchanged from
prior work, though it is defined to separate read effects from writes. 
(The proof rule for field update uses another metafunction, $\rTow$, to convert the reads to writes.)



As an example of how encapsulation checks can fail, consider a bad client of the \whyg{PQ} interface
(Figure~\ref{fig:PQintr}) that calls \whyg{insert} and assigns the returned \whyg{Pnode} to variable $nd$, and then writes the $key$ field of $nd$ ---potentially invalidating a private invariant.
The boundary of \whyg{PQ} is similar to the one for \whyg{UnionFind},
so the separator formula is
$\disj{pool}{\sing{nd}} \land \disj{pool\Img rep}{\sing{nd}}$.
This is not valid, since the value of $nd$ is in $pool\Img rep$.

So far we saw how the frame conditions of atomic commands give rise to proof obligations
that ensure the client reads and writes are to locations disjoint from the locations designated by the boundary.  Please note that the interpretation of the boundary is at the point in execution where
the atomic command has its effects.
This does not make a difference for variables, in the sense that a separator
$\ind{\rd{x}}{\wri{y}}$ is just true or false depending on whether the variable names are distinct.
It does make a difference for heap locations, designated
by expressions like $pool\Img \allfields$ and $\sing{x}\Img f$;  in this case the obligation
$\disj{pool}{\sing{x}}$ discussed above
must hold in the pre-state of the assignment command $x.f:=y$.

Loops and conditionals also incur an encapsulation obligation that their test expressions
read outside the boundary.  In our desugared syntax (Figure~\ref{fig:bnf}) these expressions are heap independent.
In the example the check is simply that variable $pool$ does not occur in a test expression,
since the other locations in the boundary are heap locations.
Here is an example where a test crosses the boundary of \whyg{PQ}.
\begin{lstlisting}
q := new Pqueue(); nd := insert(q,0,0); if nd.prev <> null then q := null fi; nd := insert(q,1,1)
\end{lstlisting}
This client works fine with the first implementation of \whyg{PQ} since  $nd.prev$ will be null.
But for the implementation with sentinels, the second call to \whyg{insert} will fault due to null
dereference.
The client is not representation independent and the read of $nd.prev$ will fail the encapsulation check.

In our prototype, \WhyRel, encapsulation checks like this are straightforward.
At points where the encapsulation check is state dependent, like $x.f:=y$, \WhyRel\ generates an assert statement that encodes the disjointness obligation (Section~\ref{sec:cases}).
In the logic, encapsulation checks are disentangled from other reasoning considerations
by the context introduction proof rules.  The modules whose boundary must be respected
are those of the methods in the hypothesis context, given using the $\mdl$ function defined in Sec.~\ref{sec:modules}.
The technical details are not conceptually important, and are explained in Section~\ref{sec:encap}.

In summary, encapsulation requirement (E3) is achieved by checking separation
from the relevant boundaries, for each part of the client command.
Separation is checked the same way as it is for the ordinary \rn{Frame} rule,
using formulas generated from the effects using the separator function ($\ind{}{}$).
For effects on variables it is true or false depending on whether the requisite variables are distinct,
but for effects on heap locations (load and store commmands, method calls) the separation checks are region disjointness formulas that must hold at the relevant points in control flow.

\paragraph{Modular linking.}

Suppose we verify the client, using the public specs, and discharge the proof obligations,
just discussed, for encapsulation.  We verify the implementation of \whyg{find}, \whyg{union}, etc
using the private invariant $I_{qf}$, i.e., assuming it as precondition and establishing it as post,
in accord with the modular linking rule sketched as (\ref{eq:mismatch}) in Section~\ref{sec:modrel}.
Having verified the client and the implementations of module methods, we would like to conclude that
the linked program is correct, i.e., satisfies the client spec
as per rule (\ref{eq:mismatch}).
The private invariant is hidden from the client, in the sense that the method bodies are verified
for specs that include it, but it is omitted from the hypotheses used to verify the client.
There is one more requirement for this to be sound, namely (E4): the client precondition
implies the private invariant of the module.
An appropriate such precondition is $pool=\emptyset$, the default value for regions,
which implies $I_{qf}$ owing to its quantification over $pool$.

The intuition that justifies (\ref{eq:mismatch}) is that, given the client's respect for the boundary, any judgment $D:\flowty{P}{Q}{\eff}$ about a client subprogram $D$
yields $D:\flowty{P\land I}{Q\land I}{\eff}$ by an application of the frame rule
(because the encapsulation obligation ensured the footprint of the private invariant $I$ is disjoint from the effects
in $\eff$).
In particular, at a point where the client has established public precondition $R$ of a method
that has been verified using precondition $R\land I$, we do in fact have $R\land I$.
For example, having proved the judgment
$\code{find}:\flowtyf{R}{S} \proves C:\flowtyf{P}{Q}$
(omitting frame condition)
together with the encapsulation obligations for client $C$,
we have
\[ \code{find}:\flowtyf{R\land I_{qf}}{S\land I_{qf}} \proves C:\flowtyf{P\land I_{qf}}{Q\land I_{qf}} \]
This is formalized as the \dt{second order frame rule},
\rn{SOF} in Figure~\ref{fig:proofrulesU}.
The modular linking rule (\ref{eq:mismatch}) is a consequence of \rn{SOF} together
with the obvious linking rule that requires the method bodies to satisfy
exactly the specs assumed by the client.
Please note that all formulas involved in the specs are first-order; the \rn{SOF} rule is called second order only in the sense that the framed formula is conjoined to specs in the hypothesis context as well as to the
consequent of the judgment.

\paragraph{On dynamic boundaries.}

In this article we repeatedly use the idiom with $pool$ and $rep$,
but this is merely one convenient way to write specs that support module-based encapsulation and per-instance local reasoning.  Ghost variables and fields can just as well be used to express hierarchical ownership
or cooperating clusters of objects as in design patterns like subject-observer.  Such examples can
be found in \RLI--III.

A key point is that the dynamic boundary is part of a module interface, and should be 
expressed in such a way that different module implementations can have different internal data structures.
Thus the same dynamic boundary may denote different locations for different implementations.
This can be achieved using ghost state, data groups, and pure methods.
In this article we only formalize a single data group, $\allfields$, and we omit pure methods (see Sect.~\ref{sec:curlim}).

To prove the disjointnesses needed for client code to be outside a boundary, one can rely on invariants 
that constrain the relevant ghost state.  
For this purpose it is convenient for a module interface to include public invariants 
such as 
(\ref{eq:pubinvar}) in Example~\ref{ex:PQversions}.


\section{Biprograms: syntax and relational reasoning}\label{sec:biprograms}

This section formalizes  biprograms (Section~\ref{sec:biprogram:syn}), 
relation formulas (Section~\ref{sec:relform:syn}),
relational specs and correctness judgments (Section~\ref{sec:relspec}).
Section~\ref{sec:eg:relverif} uses an example to illustrate how regions are used in relation formulas 
and how biprograms express convenient alignments.
Section~\ref{sec:weave} defines the weaving relation and explains its use to account for helpful alignments.
Section~\ref{sec:eg:encapR} sketches example of relational modular linking.

In this section, as in Section~\ref{sec:Usyntax}, we use the syntax of our prototype
for program code, together with the math notations of the formal logic.
We use syntax sugar and also some features that are not formalized in the logic,
namely parameters and return values (see Section~\ref{sec:curlim}), for the sake of
readable examples.
More about the prototype can be found in Section~\ref{sec:cases}.

\subsection{Biprograms}\label{sec:biprogram:syn}

Figure~\ref{fig:bnf} gives the grammar of biprograms.
A biprogram $CC$ represents a pair of commands, which are given by syntactic projections defined in Figure~\ref{fig:synProj}.
For example, the left projection 
$\Left{\splitbi{\skipc}{x:=0};\splitbi{y:=0}{z:=1}}$ is $y:=0$,
taking into account that we identify
$\skipc;y:=0$ with $y:=0$ (see Figure~\ref{fig:synident}).
The symbol $|$ is used throughout the article, in program and spec syntax and also
as alternate notation for pairing in the metalanguage, when the pair represents a pair of states or similar.\footnote{A small version of the symbol is used, interchangeably, for clarity in some contexts such as grammar rules.}

\begin{wrapfigure}{R}{0.48\textwidth} 
\begin{footnotesize}
\(
\begin{array}[t]{lll@{\hspace*{3em}}l}
\Left{\splitbi{C}{C'}} &  \eqdef  & C \\[.5ex]
\Left{\syncbi{A}} &  \eqdef  & A \\[.5ex]
\Left{\ifcbi{E\smallSplitSym E'}{BB}{CC}} &  \eqdef  & \ifc{E}{\Left{BB}}{\Left{CC}} \\[.5ex]
\Left{\whilecbiA{E\smallSplitSym E'}{\P\smallSplitSym \P'}{CC}} &  \eqdef  & \whilec{E}{\Left{CC}} \\[.5ex]
\Left{\seqc{BB}{CC}} &  \eqdef  & \seqc{\Left{BB}}{\, \Left{CC}} \\[.5ex]
\Left{\varblockbi{x\scol T\smallSplitSym x'\scol T'}{CC}} &  \eqdef  & \varblock{x\scol T}{\Left{CC}} \\[.5ex]
\Left{\letcombi{m}{\splitbi{C}{C'}}{CC}} &  \eqdef  & \letcom{m}{C}{\Left{CC}}
\\[1ex]
\multicolumn{3}{l}{
\mbox{Symmetrically, }
\Right{\splitbi{C}{C'}}   \eqdef   C', \;
\Right{\syncbi{A}}  \eqdef  A, \;
\mbox{etc.}
}
\end{array}
\)
\end{footnotesize}
\caption{Syntactic projections $\protect\Left{\rule{0pt}{1ex}}$ and $\protect\Right{\rule{0pt}{1ex}}$ of biprograms.}
\label{fig:synProj}
\end{wrapfigure}

Biprograms are given small-step semantics.
The \dt{bi-com} form $\splitbi{C}{D}$ represents executions of commands $C$ and $D$ which are meant to be aligned on their initial state and, if they terminate, final state.
Their execution steps are interleaved
(i.e., dovetailed, in the terminology of automata theory),
to ensure that the traces of $\splitbi{C}{D}$ cover
all traces of $C$ and $D$ by making progress on both sides even if one diverges.
The parentheses of bi-coms are obligatory and the operator binds less tightly than others:
$\splitbi{A;B}{C;D}$ is the same as $\splitbi{(A;B)}{(C;D)}$.
In Section~\ref{sec:weave}
we consider how the other biprogram forms are introduced
for a verification problem specified using a bi-com.
For now we briefly explain the other forms.

The \dt{sync} form $\syncbi{A}$ represents two executions of the atomic command $A$, aligned as a single step.
This is mainly of interest for allocations and method calls.
For a call, $\syncbi{m()}$ indicates that a relational spec should be used to reason about the two calls.
For an allocation, the form $\syncbi{x:=\new{K}}$
has a proof rule in which the two new references are considered in agreement, i.e., ``added to the refperm''.
In the grammar (Figure~\ref{fig:bnf}), the bi-var form allows different names and types but one also wants to allow
multiple variables on each side; this is implemented in our prototype.
The bi-if form, $\ifcbi{E\smallSplitSym E'}{CC}{DD}$, asserts that the
two initial states agree on the value of the test expressions $E$ and $E'$.
The bi-while form $\whilecbiA{E\smallSplitSym E'}{\P\smallSplitSym \P'}{CC}$
incorporates relation formulas $\P$ and $\P'$ which serve as
\dt{alignment guards}.
These serve as directives to indicate how to align iterations of the loop,
catering for situations like the $sumpub$ program in (\ref{eq:sumpub}).
This is explained in more detail in Section~\ref{sec:weave};
see the aligned $sumpub$ (\ref{eq:bi-sumpub}).

Typing of biprograms can be defined in terms of syntactic projection, roughly as $\Gamma|\Gamma'\proves CC$ iff $\Gamma\proves\Left{CC}$ and $\Gamma'\proves\Right{CC}$. But the alignment guard formulas in a bi-while  should also be typechecked in $\Gamma|\Gamma'$, and are required to be free of agreement formulas, i.e., those of the form $\Agr G\Img f$ and $F\eqbi F'$;
this ensures that the formula is refperm-independent as explained later.
Although the two sides of a biprogram may have different typing contexts,
for simplicity a single class table is assumed.
It is straightforward to generalize this to allow different field declarations for a given class
(and it is implemented in our prototype).

\subsection{Relation formulas}
\label{sec:relform:syn}

\begin{figure}[t]
\begin{small}
\(
\begin{array}{lll}
F\!F  ::=\!\!\!\! & \leftex{F} \gmid \rightex{F} & \mbox{Value in left (resp.\ right) state} \\[1ex]
\P \, ::=\!\!\!\! & R(\ol{F\!F}) & \mbox{Primitive $R$ in signature} \\
& \gmid F\eqbi F & \mbox{Equal expressions, mod refperm} \\
& \gmid \Agr\, LE & \mbox{Agreement mod refperm} \\
& \gmid \later\P & \mbox{Possibly (in some extended refperm)} \\
& \gmid \leftF{P} \gmid \rightF{P} & \mbox{In the left (resp.\ right) state} \\
& \multicolumn{2}{l}{
\gmid \P\land\P \gmid \P\imp\P 
\gmid \all{x\scol T\smallSplitSym x\scol T}{\P}
}
 \end{array}
\)

\begin{tabular}{l}
Syntax sugar:
\(\begin{array}[t]{l}
  \Both{P} \eqdef \leftF{P} \land \rightF{P} \\
  \always\P \eqdef \neg\later\neg\P \\
  \False  \eqdef \Both{\False} 
     \qquad \True \eqdef \Both{\True} \\
  \Agr x.f \eqdef \Agr \sing{x}\Img f \\
  \Agr (\rd{LE}) \eqdef \Agr\, LE 
      \qquad  \Agr(\wri{\ldots}) \eqdef \True \qquad
  \Agr(\eff,\effe) \eqdef \Agr(\eff)\land\Agr(\effe) 
\end{array}
\)
\\

Precedence: 
(tightest) $\Agr$, $\later$, $\eqbi$, $\land$, $\imp$ (loosest).
\end{tabular}
\end{small}
\caption{Relation formulas.
See Figure~\ref{fig:preds} for unary formulas $P$ and (\ref{eq:effects}) for left-expressions $LE$.}
\label{fig:relFormulas}
\end{figure}

Relation formulas are interpreted over a pair of states,
meant to be at aligned points in two executions.
What is important is to express not only conditions relating integers and other mathematical values, but also conditions relating structures between the two heaps.
There are many ways to formalize such formulas; it is only in the treatment of heap relations that the design choices made here have significant impact on the later development.

The relation formulas are defined in Figure~\ref{fig:relFormulas}.
Quantifiers range over allocated references; the relational form binds a variable on each side. The form $\leftF{P}$ (resp.\ $\rightF{P}$) says unary predicate $P$ holds in the left state (resp.\ right).
Left and right embedded expressions are written $\leftex{F}$ and $\rightex{F}$
and have nothing to do with left-expressions $LE$.
They may be used as arguments to atomic predicates in the ambient mathematical theories:
$\leftex{F}$ (resp.\ $\rightex{F}$) evaluates $F$ in the left (resp.\ right) state.\footnote{Written
    $\langle 1\rangle F$ and $\langle 2\rangle F$ in works following Benton~\cite{Benton:popl04}.
    Our notations $\leftex{F}$ and $\leftF{P}$ are meant to point leftward.
} 

The forms $\Agr\, LE$ and $F\eqbi F'$ are called \dt{agreement formulas}.
For $E$ and $E'$ of some reference type $K$, the form  $E\eqbi E'$
(pronounced ``$E$ bi-equals $E'$'')  says the value of $E$ in the left is the same as $E'$ on the right, modulo refperm in the case of reference values.  Similarly with $G\eqbi G'$ for regions.
The form $\Agr G\Img f$
says for each reference $o\in G$, with corresponding value $o'$ in the other state,
the value of $o.f$ is the same as the value of $o'.f$,
modulo refperm if the value is of reference type.
For example, $\Agr r\Img rep\Img val$ means the $val$ fields agree, for all objects in the $rep$ field of all objects in $r$.

The form $\Agr x$ is equivalent to $x\eqbi x$.
But the form $\Agr G\Img f$ is not equivalent to $G\Img f\eqbi G\Img f$.
The former means pointwise field agreement (modulo refperm)
and the latter means equal values (modulo refperm), the two values being reference sets.

\begin{wrapfigure}{R}{0.30\textwidth}  
\begin{small}
\( \begin{array}{l@{\hspace{1.5ex}}l@{\hspace{1.5ex}}l}
\Left{\leftF{P}} & \eqdef & P \\
\Left{\rightF{P}} & \eqdef & \True \\
\Left{\later\P} & \eqdef & \Left{\P} \\
\Left{F\eqbi F'} & \eqdef & (F=F) \\
\Left{\Agr LE}  & \eqdef & (LE = LE) \\
\Left{\all{x\scol T \smallSplitSym x' \scol T'}{\P}}  & \eqdef & \all{x:T}{\Left{\P}} \\
\Left{R(\ol{F\!F})} & \eqdef & \True \\
\\[1ex]
\Left{\rflowty{\P}{\Q}{\eff|\eff'}} &  \eqdef &
\flowty{\Left{\P}}{\Left{\Q}}{\eff} \\
\end{array}
\)
\end{small}
\caption{Syntactic projection $\protect\Left{\rule{0pt}{1ex}}$
of relation formulas and specs;
right projection $\protect\Right{\rule{0pt}{1ex}}$ is symmetric.
}
\label{fig:synProjFmla}
\end{wrapfigure}

The modal form $\later\P$, read \dt{possibly} $\P$ (for lack of a better word), says $\P$ holds in a refperm possibly extended from the current one.  More on these points later.

Relation formulas and relational correctness judgments are typed in a context of the form
$\Gamma|\Gamma'$ comprises contexts $\Gamma$ and $\Gamma'$ for the left and right sides.\footnote{This
   enables reasoning about two versions of a program acting on the same variables, by contrast with other works where related
   programs are assumed to have been renamed to have no identifiers in common. Logics should account for renaming.}
Leaving aside left/right embedded expressions,
typing can be reduced to typing of unary formulas:
$\Gamma|\Gamma'\proves \P \;$ iff $\; \Gamma\proves\Left{\P}\mbox{ and }\Gamma'\proves\Right{\P}$.
This refers to syntactic projections defined in Figure~\ref{fig:synProjFmla}.
This does not work for left/right embedded expressions; we gloss over those for clarity,
in the following sections as well, but handle them in our prototype.

In accord with the definition of projections,
we have the formula typing
$\Gamma|\Gamma'\proves \Agr x$ just if $x\in\dom(\Gamma)\intersect\dom(\Gamma')$.
We have $\Gamma|\Gamma'\proves \Agr G\Img f$
just if $\Gamma\proves G:\Region$ and $\Gamma'\proves G:\Region$, with $f$ of any type.
Similarly,
$\Gamma|\Gamma'\proves F\eqbi F'$ provided $\Gamma\proves F:T$ and $\Gamma'\proves F':T$.
Also $\Gamma|\Gamma'\proves \leftF{P}$ if $\Gamma\proves P$
and $\Gamma|\Gamma'\proves \rightF{P}$ if $\Gamma'\proves P$.


\subsection{Relational specifications and correctness judgment}
\label{sec:relspec}

A \dt{relational spec} \ghostbox{$\rflowty{\P}{\Q}{\eff|\eff'}$} has relational pre- and post-conditions and a pair of frame conditions.
We write \ghostbox{$\rflowty{\P}{\Q}{\eff}$} to abbreviate the frame condition $[\eff|\eff]$.
A spec $\rflowty{\P}{\Q}{\eff|\eff'}$ is wf in $\Gamma|\Gamma'$
provided $\Left{\rflowty{\P}{\Q}{\eff|\eff'}} $ is wf in $\Gamma$ (resp.\
$\Right{\rflowty{\P}{\Q}{\eff|\eff'}}$ in $\Gamma'$), as per
Def.~\ref{def:wfspec}.  See Figure~\ref{fig:synProjFmla} for syntactic
projections.
The precondition $\P$ of a wf relational spec has spec-only variables only as snapshot equations in top level conjuncts of $\P$ (inside the left and right embedding operators $\leftF{-}$, $\rightF{-}$).
Any spec-only variables in postcondition $\Q$ must occur in $\P$.

Recall from Section~\ref{sec:modrel} that one important relational property is local equivalence.
Later we define a general construction, $\locEq$, that applies to a unary
spec $\flowty{P}{Q}{\eff}$ and yields a relational spec (Example~\ref{ex:PQagree} and Section~\ref{sec:locEq}).
The general form takes into account that encapsulated locations are not expected to be in agreement; that is formalized by means of effect subtraction.

For local equivalence and other purposes, we often want postconditions that assert agreements on fresh locations.
These agreements are modulo refperm, so a relational correctness judgment should
say there is some refperm for which the final states are related.
This can be expressed using the $\later$ modality.
Many specs of interest have the form  $\rflowty{\P}{\later\Q}{\effe|\effe'}$ where $\P,\Q$ are $\later$-free.
Such specs are said to be in \dt{standard form}.
We gloss over this in some examples.
In our prototype, the encoding maintains a ``current refperm'' in ghost state to interpret agreement formulas, and does not use the $\later$ modality explicitly in specs.
The dual, $\always$, is used in a couple of proof rules.

A \dt{relational hypothesis context} for $\Gamma|\Gamma'$ is
a triple $\Phi = (\Phi_0,\Phi_1,\Phi_2)$ comprising
unary hypothesis contexts $\Phi_0$ for $\Gamma$ and $\Phi_1$ for $\Gamma'$,
together with a mapping $\Phi_2$ of method names to relational specs that are wf.

\begin{definition}[\textbf{wf relational hypothesis context}]
\label{def:wfhypRel}
A relational hypothesis context for $\Gamma|\Gamma'$ is
wf in $\Gamma|\Gamma'$
provided that $\Phi_0,\Phi_1,\Phi_2$ specify the same methods,\footnote{One can allow
   different   methods in context, provided that left (resp.\ right, resp.\ sync'd) context
   calls have left (resp.\ right, resp.\ relational) spec's,
   and this is implemented in our prototype.}
$\Phi_0$ and $\Left{\Phi_2}$ are wf in $\Gamma$,
$\Phi_1$ and $\Right{\Phi_2}$ are wf in $\Gamma'$,
the specs in $\Phi_2$ are wf in $\Gamma|\Gamma'$,
and the distinct methods have distinct spec-only variables in $\Phi_2$ (just as in $\Phi_0$ and $\Phi_1$).
Moreover, for every $m$, the formula
\[ 
\preOfSpec(\Phi_2(m))\imp\leftF{\preOfSpec(\Phi_0(m))}\land\rightF{\preOfSpec(\Phi_1(m))}
\]
is valid (where metafunction $\preOfSpec$\index{$\preOfSpec$} extracts the precondition),
and the effects of $\Phi_2(m)$ project to those of $\Phi_0(m)$ and $\Phi_1(m)$.\footnote{In detail:
  Suppose $\Phi_2(m)$ is $\rflowty{\R}{\S}{\effe|\effe'}$,
  and the unary specs $\Phi_0(m)$ and $\Phi_1(m)$ are $\flowty{R_0}{S_0}{\effe_0}$ and $\flowty{R_1}{S_1}{\effe_1}$ respectively.
  Then $\effe = \effe_0$ and $\effe' = \effe_1$.
}
\end{definition}
The constraint on preconditions
ensures a compatibility condition needed to connect relational
with unary context models, see Def.~\ref{def:ctxinterpRel}.
Def.~\ref{def:wfhypRel} allows left and right to have different global variables.
It also allows that some spec-only variables on the left may also occur on the right.
However, well formedness is in the context of a single module structure (module names and their association with methods and dynamic boundaries; import relation). 

\begin{definition}
\label{def:wfjudgeRel}
A \dt{relational correctness judgment}
has the form
\ghostbox{\( \Phi\proves^{\Gamma|\Gamma'}_M CC: \rflowty{\P}{\Q}{\eff|\eff'} \)}.
It is wf \index{wf} provided
\begin{itemize}
\item $\Phi$ is wf in $\Gamma|\Gamma'$ (see above).
\item No spec-only variables, nor $\lloc$, occur in $CC$.
Moreover, alignment guard assertions in bi-whiles contain no agreement formulas.  
\item No methods occur in $\Gamma|\Gamma'$, and $CC$ is wf in the typing context that extends $\Gamma|\Gamma'$ to declare the methods in $\Phi$.
\item $\bnd(N)$ is wf in $\Gamma$ and wf in $\Gamma'$,
for all $N$ with $N\in\Phi$ or $N=M$.
\item $\rflowty{\P}{\Q}{\eff|\eff'}$ is wf in $\Gamma|\Gamma'$,
and its spec-only variables are distinct from those in $\Phi$.
\end{itemize}
\end{definition}

\begin{example}[coupling and local equivalence for PQ]
\label{ex:PQagree}
The coupling relation expresses
that for any two corresponding queues in the left and right states' $pool$,
all the \whyg{Pnode}s in their $rep$s are in the refperm.
The sentinel is in $pool$, not in a $rep$,
and each pair of
corresponding \whyg{Pnode}s have the same value and priority. Moreover, $\NULL$
appears in the left state where the sentinel appears in the right. As a relation
formula:
\[ \begin{array}{lcl}
\multicolumn{3}{l}{
  \forall q:\code{Pqueue}\in pool \mid q:\code{Pqueue}\in pool}  \\
  \quad \Agr q & \imp & (\Agr(q.head) \lor (\leftF{q.head = \NULL} \land \rightF{q.head = q.sntnl})) \\
       &      & \land\: q.rep/\code{Pnode} \eqbi q.rep/\code{Pnode} \\
       &      & \land\: \forall \; n\scol\code{Pnode}\in q.rep \mid n\scol\code{Pnode} \in q.rep \: .\\
&& \qquad \Agr n\imp
         \begin{array}[t]{l}  \Agr(n.val) \land  \Agr(n.key) \\
           \land (\Agr(n.sibling) \lor (\leftF{n.sibling = \NULL} \land \rightF{n.sibling = q.sntnl}))  \\
           \land (\Agr(n.child) \lor (\leftF{n.child = \NULL} \land \rightF{n.child = q.sntnl}))  \\
           \land (\Agr(n.prev) \lor (\leftF{n.prev = \NULL} \land \rightF{n.prev = q.sntnl}))
       \end{array}
   \end{array}
 \]
Here we use syntax sugar $\Agr n.val$ for $\Agr\sing{n}\Img val$.
Also, the pattern
$\forall q\scol K\in r \mid q\scol K\in r \ldots$ is sugar for
$\forall q\scol K \mid q\scol K.
\leftF{q\in r}\land\rightF{q\in r} \imp \ldots$.
Note the type restriction expressions in the agreement
$q.rep/\code{Pnode} \eqbi q.rep/\code{Pnode}$. Let $\M_{PQ}$ be the above formula, conjoined with $\leftF{I}\land\rightF{I'}$ where $I,I'$ are the private invariants.


The relational spec for \whyg{insert} obtained by applying $\locEq$
looks like this:
\begin{equation}\label{eq:locEqInsert}
\rflowty{\Agr q \land\Agr k \land\Both P}
           {\later(\Agr(res.val)\land\Agr(res.key)\land\ldots\land\Both Q)}
           {\rw{\sing{q}\Img\allfields,q.rep\Img\allfields,\lloc}}
\end{equation}
where $P$ and $Q$ are the unary pre- and post- conditions for
\whyg{insert}, including the public invariant of \whyg{PQ}.
We elide some postconditions like $\Agr((pool\setminus(pool\union pool\Img rep))\Img head)$
which arise by subtracting the boundary from writes in the spec (and expanding $\allfields$ to
all field names).
This one can obviously be simplified to $\Agr\emptyset\Img head$ which is equivalent to $\True$.
The meta-function $\locEq$ need not perform such simplifications, as the reasoning can safely be left to the SMT solver or 
to the logic's relational consequence rule.

To verify the two implementations of \whyg{insert},
we conjoin $\M_{PQ}$ to both the pre and postcondition of the relational spec above.
The resulting precondition is
$\Agr q \land\Agr k \land\Both P \land \M_{PQ}$
and the postcondition is
\( \later(\Agr(res.val)\land\Agr(res.key)\land\ldots\land\Both Q \land \M_{PQ}) \).
Later we introduce a notation $\conjInv\M_{PQ}$ for this.
\qed\end{example}




\subsection{Relational verification with biprograms} \label{sec:eg:relverif}

We consider an example of relational verification which is modular in the sense of
using relational method specs, but no information hiding.
We highlight how regions are used in relational specs, and how biprograms are used
to represent convenient alignments.

\begin{figure}[t]
\begin{subfigure}{.3\textwidth}
\begin{lstlisting}
meth tabulate (n:int) : List =
  var t: List, i: int, p: Node;
  t := new List;
  i := 0;
  while i < n do
    i := i + 1;
    p := new Node;
    p.val := mf(i);
    p.nxt := t.head;
    t.head := p;
    t.nds := t.nds ** {p};
  od;
  result := t;   /* return value */
\end{lstlisting}
\caption{Left version, $tabu$}
\end{subfigure}
\begin{subfigure}{.3\textwidth}
\begin{lstlisting}
meth tabulate (n:int) : List =
  var t: List, i: int, p: Node;
  t := new List;
  i := 1;
  while i <= n do
    p := new Node;
    p.val := mf(i);
    p.nxt := t.head;
    t.head := p;
    t.nds := t.nds ** {p};
    i := i + 1;
  od;
  result := t;
\end{lstlisting}
\caption{Right version, $tabu'$}
\end{subfigure}
\begin{subfigure}{.3\textwidth}
\begin{lstlisting}

/* Agr n */
|_ t := new List _|; connect t;
(i := 0 | i := 1);
while (i < n) | (i <= n) do
   (i := i + 1 | skip);  /* i =:= i */
   |_ p := new Node _|; connect p;
   |_ p.val := mf(i) _|;  /* Agr p.val */
   |_ p.nxt := t.head _|;
   |_ t.nds := t.nds ** {p} _|;
   (skip | i := i + 1);
od;
|_ result := t _|;
\end{lstlisting}
\caption{Biprogram $CC_{tabu}$}
\end{subfigure}
  \caption{Two implementations of tabulate, and a biprogram weaving them together.}
  \label{fig:tabulate}
\end{figure}

\paragraph{List tabulation: illustrating procedure-modular reasoning.}

Consider the two programs in Figure~\ref{fig:tabulate}, which both tabulate a linked list of the values of some
method \whyg{mf} that computes a function, applied to the numbers $n$ down to $1$.
Objects of class \whyg{List} have two fields: $head: \code{Node}$ references the head of a linked list
and $nds:\Region$ is ghost state, to which we return soon.
The goal in this example is to prove the programs are equivalent.
We reason about executions of the two programs in close alignment, in order
to exploit their similarities and make use of a relational spec for \whyg{mf}.
The example also serves to show the use of regions to describe heap structure
and in particular to express the equivalence of the lists returned.
The example illustrates two aspects of modular reasoning: procedural abstraction and local reasoning; the third aspect, data abstraction, is considered in Section~\ref{sec:eg:encapR}.

Both versions of the program use field $nds$ to hold references to the nodes reached from $head$.
It is initially empty (the default value), and in each iteration the newly allocated node is added to the list's $nds$.
An invariant of the loop, in both programs, is $t.nds\Img next \subseteq t.nds$.
Here $t.nds$ is set of references. The image expression $t.nds\Img next$ denotes the set of values in the next fields
of objects in $t.nds$ (a direct image, thinking of the field as a relation).
The containment $t.nds\Img next \subseteq t.nds$ says for any object reference in $t.nds$,
the value of the object's $next$ field is in $t.nds$.
There are no recursive definitions involved.
The containment, together with invariant $t.head \in t.nds$,
implies that everything reachable from $t.head$ is in $t.nds$.  It does not say that $t.nds$ is exactly the reachable set, though it will be; we do not need that stronger fact.

Method \whyg{mf} has an integer parameter $x$ and returns an integer result.
Its unary spec is $\flowty{true}{true}{\emptyeff}$,
which says very little but the empty frame condition says it has no effect on the heap
or global variables.  In particular, it does no allocation, since otherwise its frame condition would have to include $\rw{\lloc}$.
Implicitly it is allowed to read its parameter $x$ and write its $result$,
as we saw in Example~\ref{ex:PQframe}.
As relational spec we use $\rflowty{\Agr x}{\Agr result}{\emptyeff}$  which expresses 
determinacy as self-equivalence in a way that is local: it refers only to locations that may be read or written.
It is this relational spec, and nothing more, that we wish to use for \whyg{mf} in relational reasoning about \whyg{tabulate}.

For \whyg{tabulate}, the frame condition is $[\rw{\lloc}]$.
It allocates, which implicitly updates the special variable $\lloc$ by adding the newly allocated reference; the new value of $\lloc$ depends on its old value, so the frame condition
says $\lloc$ may be both read and written.
Like method \whyg{mf}, method \whyg{tabulate} reads its parameter and writes its result, but neither reads nor writes any other preexisting locations.

Although we aim to prove equivalence of the two versions of \whyg{tabulate} without recourse to a precise functional spec,
we do include a postcondition that constrains $nds$, as this plays a role in specifying equivalence.
The postcondition says $nds$ contains $head$ and is closed under $next$; formally:
$result.nds\Img next \subseteq result.nds$ and $result.head \in result.nds$.

To express equivalence of the two versions, the (relational) precondition is
agreement on what is readable, namely the parameter $n$.
The agreement formula $\Agr n$, or equivalently $n\eqbi n$,
simply means the two initial states have the same value for $n$.
We do not assume agreement on $\lloc$; we want the equivalence to encompass initial states without
constraint on allocated but irrelevant objects.

For the postcondition we want agreement on what is writable (aside from $\lloc$), thus $\Agr result$.
We also specify that the unary postcondition holds in both final states:
\begin{equation}\label{eq:tabu:postB}
\Both{ ( result.nds\Img next \subseteq result.nds \land result.head \in result.nds) }
\end{equation}
But $result$ is just a reference to newly allocated list structure.
To express that the two result lists have the same content we need more than $\Agr result$. A first guess is
the agreement formula $\Agr result.nds\Img val$.
The formula uses syntax sugar, to abbreviate $\Agr \sing{result}\Img nds \Img val$.
Agreement formulas, as mentioned in Section~\ref{sec:rrl}, are interpreted with respect to a refperm, 
that is, a type-respecting partial bijection on references of the two states.
Whereas $\Agr n$ means identical values for integer $n$,
the formula $\Agr result$ means equivalent reference values, i.e., connected via the bijection.
The formula $\Agr result.nds\Img val$ says that for pairs $o,o'$ of references connected by the bijection,
with $o\in result.nds$, the fields $o.val$ and $o'.val$ have equal contents; equal because the type is integer.

To fully constrain the lists to have the same structure we use this postcondition:
\begin{equation}\label{eq:tabu:postR}
\later (\Agr result \land \Agr result.nds \land \Agr result.nds\Img next \land \Agr result.nds\Img val )
\end{equation}
Here $\later$ says there exists some refperm.
The formula $\Agr result.nds$
abbreviates $\Agr\sing{result}\Img nds$ and
says the refperm cuts down to a (total) bijection between the regions
$result.nds$ in the two states.
The condition $\Agr result.nds`next$ says that bijection is compatible with the linked list structure.

\tikzset{%
  pics/listobj/.style args={#1/#2/#3}{%
    code={%
      \node[draw,rounded corners=0.25em,minimum height=1.9cm,minimum width=2.5cm] (-main) {};
      \node[above of=-main,xshift=-1cm,yshift=0.25cm] (-name) {#1};
      \node[draw,minimum height=0.5cm,minimum width=1.7cm,xshift=0.3cm,yshift=0.2cm] (-nds) {#2};
      \node[draw,minimum height=0.5cm,minimum width=1.7cm,below of=-nds,yshift=0.25cm] (-head) {#3};
      \node[left of=-nds,xshift=-0.2cm] (ndsTxt) {\footnotesize$nds$};
      \node[left of=-head,xshift=-0.2cm] (headTxt) {\footnotesize$head$};
      \node[above of=ndsTxt,yshift=-0.45cm] (listText) {$\mathsf{List}$};
    }%
  }%
}%

\tikzset{%
  pics/listnode/.style args={#1/#2/#3}{%
    code={%
      \node[draw,rounded corners=.25em,minimum height=1.5cm,minimum width=1.75cm] (-main) {};
      \node[above of=-main,xshift=-0.75cm] (-name) {#1};
      \node[draw,minimum height=0.5cm,minimum width=1cm,xshift=0.3cm,yshift=0.35cm] (-val) {#2};
      \node[draw,below of=-val,minimum height=0.5cm,minimum width=1cm,yshift=0.3cm] (-nxt) {#3};
      \node[left of=-val,xshift=0.15cm] (valText) {\footnotesize$val$};
      \node[left of=-nxt,xshift=0.15cm] (nxtText) {\footnotesize$nxt$};
      \coordinate (-anchor-point) at ([yshift=0.5cm] -main.west);
    }%
  }%
}%

\begin{figure}[t]
  \begin{center}
  \begin{tikzpicture}[scale=0.75,transform shape,>=stealth]
    \node[draw,rectangle,minimum width=0.5cm,minimum height=0.5cm] at (-1.5,0.5) (left-o) {$o$};
    \node[above of=left-o,yshift=-0.5cm,xshift=-0.25cm] (left-x) {$x$};

    \node[draw,rectangle,minimum width=0.5cm,minimum height=0.4cm] at (-1.5,-3.5) (right-o) {$o'$};
    \node[above of=right-o,yshift=-0.5cm,xshift=-0.25cm] (right-x) {$x$};

    \pic (left-list) at (0.75,0) {listobj={$o$/$\{p,q,r\}$/$p$}};
    \pic (left-p) at (3.5,0) {listnode={$p$/$42$/$q$}};
    \pic (left-q) at (5.75,0) {listnode={$q$/$1$/$r$}};
    \pic (left-r) at (8,0) {listnode={$r$/$2$/$\NULL$}};
    \pic (left-t) at (10.25,0) {listnode={$t$/$5$/$s$}};
    \pic (left-s) at (12.5,0) {listnode={$s$/$5$/$t$}};

    \pic (right-list) at (0.75,-4) {listobj={$o'$/$\{p',q',r'\}$/$p'$}};
    \pic (right-p) at (3.5,-4) {listnode={$p'$/$42$/$q'$}};
    \pic (right-q) at (5.75,-4) {listnode={$q'$/$3$/$r'$}};
    \pic (right-r) at (8,-4) {listnode={$r'$/$2$/$\NULL$}};
    \pic (right-s) at (11,-4) {listnode={$s'$/$5$/$s'$}};

    \draw[->] (left-list-head) to[out=0,in=180] (left-p-anchor-point);
    \draw[->] (left-p-nxt) to[out=0,in=180] (left-q-anchor-point);
    \draw[->] (left-q-nxt) to[out=0,in=180] (left-r-anchor-point);
    \draw[->] (left-t-nxt) to[out=0,in=180] (left-s-anchor-point);
    \draw[->] (left-s-nxt) to[out=270,in=270] ([xshift=0.2cm]left-t-main.south);

    \draw[->] (right-list-head) to[out=0,in=180] (right-p-anchor-point);
    \draw[->] (right-p-nxt) to[out=0,in=180] (right-q-anchor-point);
    \draw[->] (right-q-nxt) to[out=0,in=180] (right-r-anchor-point);
    \draw[->] (right-s-nxt) to[out=0,in=0] ([yshift=0.5cm]right-s-main.east);

    \draw[<->,dashed] ([yshift=-0.2cm]left-list-main.south) -- ([yshift=0.2cm]right-list-main.north);
    \draw[<->,dashed] ([yshift=-0.2cm]left-p-main.south) -- ([yshift=0.2cm]right-p-main.north);
    \draw[<->,dashed] ([yshift=-0.2cm]left-q-main.south) -- ([yshift=0.2cm]right-q-main.north);
    \draw[<->,dashed] ([yshift=-0.2cm]left-r-main.south) -- ([yshift=0.2cm]right-r-main.north);
    \draw[<->,dashed] ([yshift=-0.2cm]left-s-main.south) to[bend left] ([yshift=0.2cm]right-s-main.north);

    \draw[->] (left-o.east) to[bend left] ([yshift=0.75cm]left-list-main.west);

    \draw[->] (right-o.east) to[bend left] ([yshift=0.75cm]right-list-main.west);

    \node[below of=left-r-main,yshift=-1cm,xshift=0.8cm] (pi-r) {$\pi(r) = r'$};
  \end{tikzpicture}
  \end{center}

\medskip

\begin{small}

\begin{list}{}{} 
\item
\begin{tabular}{|lll|}
\hline
left-expression & l-value in $\sigma$ & r-value in $\sigma$ \\\hline
$x$                        & $\{x\}$                 & $o$ \\
$\sing{x}\Img nds$         & $\{ o.nds \}$           & $\{p,q,r\}$ \\
$\sing{x}\Img nds\Img val$ & $\{p.val,q.val,r.val\}$ & $\emptyset$ \\
$\sing{x}\Img nds\Img nxt$ & $\{p.nxt,q.nxt,r.nxt\}$ & $\{q,r,\semNull\}$\\
\hline
\end{tabular}
$\begin{array}{l}
\sigma(\lloc) = \{o,p,q,r,s,t\} \\
\sigma'(\lloc) = \{o',p',q',r',s'\} \\
\pi = \{(o,o'),(p,p'),(q,q'),(r,r'),(s,s')\}
\end{array}$

\medskip


\item
$\sigma|\sigma'\models_\pi   \Agr x $ is true because $\rprel{o}{o'}$

\item
$\sigma|\sigma'\models_\pi \Agr \sing{x}\Img nds$ is true because
$\rprel{o}{o'}$ and
$\rprel{ \{p,q,r\} }{ \{p',q',r'\} }$

\item
$\sigma|\sigma'\models_\pi \Agr \sing{x}\Img nds\Img nxt $ is true;
note
$\rprel{p.nxt}{p'.nxt}$,
$\rprel{q.nxt}{q'.nxt}$, and
$\rprel{r.nxt}{r'.nxt}$

\item
$\sigma|\sigma'\models_\pi \Agr \sing{x}\Img nds\Img val$ is false because $\sigma(q.val)=1\neq 3=\sigma'(q'.val)$

\item
$\sigma|\sigma'\models_\pi \sing{x}\Img nds\eqbi \sing{x}\Img nds$
is true because $\rprel{ \{p,q,r\} }{ \{p',q',r'\} }$,
regardless of whether $(o,o')$ is in $\pi$

\end{list}
\end{small}
\caption{Refperm $\pi$ and relations between two states, $\sigma,\sigma'$ with variable $x$ (see Example~\ref{ex:box-and-arrows}).}\label{fig:box-and-arrows}
\end{figure}

The semantics of relation formulas is formalized in Sec.~\ref{sec:relform}.
It is a little subtle: $\sing{x}\Img f \eqbi \sing{x}\Img f$ is different from
$\Agr \sing{x}\Img f$, unless guarded by $\Agr x$ (as a conjunct or antecedent).
We invariably use such guarded formulas, e.g., conjuncts in (\ref{eq:tabu:postR}) and
antecedents in the coupling of Example~\ref{ex:PQagree}.

\begin{example}\label{ex:box-and-arrows}
To illustrate the meaning of agreement formulas like those in (\ref{eq:tabu:postR}),
Figure~\ref{fig:box-and-arrows} shows 
an example of two states with a single variable $x:\code{List}$,
and using $\sing{x}\Img nds$ rather than its sugared form $x.nds$.
The semantic notations are defined in Section~\ref{sec:relform} but the picture is meant to be understandable now.
The values of some left-expressions are given; we consider the l-value of any left-expression to be a set of locations, such as the single location $x$ (a variable name) and $p.val$ (a heap location).
\qed
\end{example}

Taken together, (\ref{eq:tabu:postB}) and (\ref{eq:tabu:postR})  say the results from \whyg{tabulate} are lists for which the nodes can be put in bijective correspondence that is compatible with the $nxt$
pointers and for which corresponding elements have the same value.
They serve as postcondition, with precondition $\Agr n$, to specify equivalence for \whyg{tabulate}.
What else would we mean by equivalence of the programs?  We do not want to say they have literally identical values, because we want equivalence to be local: It should not involve what else may have been allocated, so we do not assume agreement on $\lloc$. Hence the resulting lists may not have identical reference values.  What matters is that the heap data produced by the two implementations has the same structure.

\paragraph{On the modality $\later$.}

The modal operator $\later$ is needed for the relational postcondition
(\ref{eq:tabu:postR}) and in any spec where allocation is possible.
We gloss over it in some examples, but specs of interest usually have
this standard form: $\rflowty{\R}{\later\S}{\eff}$ where $\later$ does not occur in $\R$ or $\S$.
The \whyg{tabulate} spec can be put in standard form, because (\ref{eq:tabu:postB}) expresses unary conditions, with no dependence on refperm, so that formula can be put inside the $\later$ in (\ref{eq:tabu:postR}).

While SMT solvers typically provide some heuristic support for quantifiers,
existential quantifiers are problematic and we cannot expect a solver
to find witnesses for the existential expressed by $\later$.
In the \WhyRel\ prototype, specs do not include $\later$ explicitly.  Instead,
a refperm is maintained in ghost state, thus witnessing the existential.
A ghost instruction, \whyg{connect - with -}, can be used to designate which references the user
wants to be considered as corresponding.
For example, the biprogram Figure~\ref{fig:tabulate}(c) uses \whyg{connect p},
which abbreviates \whyg{connect p with p}, to add newly allocated \whyg{Node} references to the refperm, thereby establishing $p\eqbi p$.
The general form of \whyg{connect} caters for programs using different variables.





\paragraph{Alignment for \whyg{tabulate}.}

Recall that  (\ref{eq:tabu:postB}) and (\ref{eq:tabu:postR}) are meant to comprise the postcondition of a spec to relate the bodies, $tabu$ and $tabu'$, of the two implementations of \whyg{tabulate} in Figure~\ref{fig:tabulate}(a) and~(b). 
To say that they satisfy the relational spec we use a judgment like this:
\[ \Phi \proves \splitbi{tabu}{tabu'} : \rflowty{\Agr n}{\R}{\rw{\lloc}}
\quad\mbox{where $\R$ is (\ref{eq:tabu:postB})$\land$(\ref{eq:tabu:postR})} 
\]
The hypothesis context specifies \whyg{mf};
$\Phi$ is a triple, with $\Phi_2(\code{mf})$  being the relational spec
$\rflowtyf{\Agr x}{\Agr result}$.
The unary specs $\Phi_0(\code{mf})$ and $\Phi_1(\code{mf})$ are not relevant to this example.

We derive the judgment for $\splitbi{tabu}{tabu'}$ from a judgment with the same spec for  the more conveniently aligned biprogram
$CC_{tabu}$ in Figure~\ref{fig:tabulate}(c), in a way that will be justified in Section~\ref{sec:weave}.
Several features of $CC_{tabu}$ are important.
First, its left and right syntactic projections are the two commands, $tabu$ and $tabu'$, to be related;
semantically it represents pairs of their executions, aligned in a particular way.
Second, the calls to \whyg{mf} are in the sync'd form, which signals that reasoning is to be done using the relational spec of \whyg{mf}.
A comment in the biprogram indicates that we get agreement on $p.val$ following the calls to $\code{mf}(i)$,
in virtue of that spec.
Similarly, the two allocations are also in the sync'd form and followed by the \whyg{connect} ghost operation,
achieving agreement on the allocated references.
In the proof system, there is a rule for sync'd allocations, with postcondition that yields for example  $\later \Agr p$ for the \whyg{Node} allocation.
Using this rule (or the connect ghost operation) is a good choice in the present example,
but in general it is not necessary to connect allocations, even if they happen to be aligned; this is important when relating programs that are not building the same heap structure, or when proving noninterference and reasoning about branches with tests that depend on secrets. Finally, the bi-while
in $CC_{tabu}$ signals that we reason in terms of lockstep alignment of the loop iterations.
This enables us to reason that the two executions are building isomorphic pointer structures, using a relational
invariant similar to the postcondition of the relational spec (\ref{eq:tabu:postR}), conjoined with a simple relation between the counter variables:
\[ i-1 \eqbi i \land \Agr n \land \Agr t \land \Agr t.nds \land \Agr t.nds\Img nxt \land \Agr t.nds\Img val \]
The biprogram provides a convenient alignment but incurs an additional proof obligation:
the invariant must imply that the loop tests agree, as otherwise it would be unsound to assume the iterations
can be considered to be aligned in lockstep.  Indeed, the implication is valid:
$\Agr n$ and $i-1 \eqbi i$ implies $i<n \eqbi i \leq n$.


In summary, this example shows biprograms express alignment of the programs under consideration in order to facilitate procedure-modular reasoning using relational specs
and to facilitate the use of simpler relational invariants for loops.
In passing we introduced ways to express relations on pointer structures, abstracting from specific addresses
(as appropriate for Java- and ML-like languages) and making it possible to specify relations where some parts of the heap are meant to have isomorphic structure while other parts may be entirely different.
There are at least two important use cases for such differences:
encapsulated data structures, when relating implementations of a module interface,
and structure manipulated by ``secret'' computations, when proving information flow properties.

The example happens to work well with close alignment of the program structure and agreement on all the data involved.
The logic must handle aligned allocation in a loop, as in this example.
It must also handle differing allocations, for example to relate programs using different encapsulated
data representations.
Differing allocations also arise when proving noninterference,
in cases where allocation occurs under high branch conditions.

The proof rules used to derive a relational modular linking rule like (\ref{eq:mismatchR}) make use of a general
form of local equivalence specification, derived from the frame condition of a unary spec (and defined in Section~\ref{sec:locEq}).
But it is also possible to express local equivalence notions suited to specific situations, as in
the example, and it is possible to work with differing program structures as illustrated in some case studies (e.g., Figure~\ref{fig:woven-insert} and Section~\ref{sec:eg:encapR}).

\subsection{Defining and using biprogram weaving for alignment}\label{sec:weave}

In this subsection we define the weaving relation on biprograms.
The purpose of the weaving relation is to connect 
a bi-com $\splitbi{C}{C'}$, that expresses a relational verification problem,
with a more tightly aligned version that facilitates reasoning. 
If $\splitbi{C}{C'}$ weaves to $DD$, written $\splitbi{C}{C'}\weave DD$, then the syntactic projections of $DD$ are $C$ and $C'$, so $DD$ models executions of the two commands.
The weaving relation $\weave$ is used in a proof rule that realizes the product principle: any judgment that holds for $DD$ also holds for $\splitbi{C}{C'}$,
given $\splitbi{C}{C'}\weave DD$.
In general, weaving brings together similarly structured subprograms,
introducing additional alignment points while preserving syntactic projections. 
In addition to defining the relation $\weave$, the rest of this section gives examples of its use, and sketches the semantic considerations that justify the proof rule and
explain the orientation of the relation.

\begin{figure}[t]
\begin{small}
\[\begin{array}{l}
\splitbi{A}{A} \weave \syncbi{A} \\[1ex]
\Splitbi{C;D}{C';D'} \weave \splitbi{C}{C'};\splitbi{D}{D'} \\[1ex]
\Splitbi{ \ifc{E}{C}{D} }{ \ifc{E'}{C'}{D'} }
   \weave \ifcbi{E\smallSplitSym E'}{\splitbi{C}{C'} }{ \splitbi{D}{D'} }
\\[1ex]
\Splitbi{ \whilec{E}{C} }{ \whilec{E'}{C'} }
   \weave \whilecbiA{E\smallSplitSym E'}{\P\smallSplitSym \P'}{\splitbi{C}{C'}}
\\[1ex]
   \Splitbi{ \letcom{m}{B}{C} }{ \letcom{m}{B'}{C'} }
    \weave \letcombi{m}{\splitbi{B}{B'}}{ \splitbi{C}{C'} }

\\[1ex]
  \Splitbi{ \varblock{x\scol T}{C} }{ \varblock{x'\scol T'}{C'} }
  \weave \varblockbi{x\scol T\smallSplitSym x'\scol T'}{\splitbi{C}{C'}}

\\[2ex]
\inferrule{ BB\weave CC }
{
BB;DD \weave CC;DD\\
DD;BB \weave DD;CC\\
\ifcbi{E\smallSplitSym E'}{BB}{DD} \weave \ifcbi{E\smallSplitSym E'}{CC}{DD} \\
\ifcbi{E\smallSplitSym E'}{DD}{BB} \weave \ifcbi{E\smallSplitSym E'}{DD}{CC} \\
\whilecbiA{E\smallSplitSym E'}{\P\smallSplitSym \P'}{BB} \weave
  \whilecbiA{E\smallSplitSym E'}{\P\smallSplitSym \P'}{CC} \\
\letcombi{m}{\splitbi{B}{B'}}{ BB } \weave \letcombi{m}{\splitbi{B}{B'}}{ CC } \\
\varblockbi{x\scol T \smallSplitSym x'\scol T'}{BB} \weave \varblock{x\scol T\smallSplitSym x'\scol T'}{CC}
}
 \end{array}\]
\end{small}
\caption{Axioms and congruence rules that define the weaving relation \ghostbox{$\weave$}.
Recall $A$ ranges over atomic commands (Figure~\ref{fig:bnf}).
}
\label{fig:weave}
\end{figure}

The \dt{weaving relation} \ghostbox{$\weave$}
is defined inductively by axioms and congruence rules in Figure~\ref{fig:weave}.
The axioms replace a bi-com by another biprogram form including those that can assert 
agreements (bi-if and bi-while).
The congruence rules, displayed as one rule with multiple conclusions,
allow weaving in all contexts except the procedure bodies in bi-let.
Apropos congruence for bi-let, note that bi-let does not bind general biprograms but only pairs of commands
despite the appearance of the concrete syntax (see Figure~\ref{fig:bnf}).

The weaving that introduces bi-while allows the introduction
of so-called alignment guards.
The biprogram $CC_{tabu}$ omits them (Figure~\ref{fig:tabulate}(c)), 
which is syntax sugar taking them to be $\False$.
As an example of their use, later in this subsection we follow up on the example program (\ref{eq:sumpub})
discussed in Section~\ref{sec:modrel}, sketching the three-premise relational loop
rule that enables verification of the example using a simple invariant.


\begin{example}
The sequence weaving axiom (second line of Figure~\ref{fig:weave})
can be used for an example mentioned in Section~\ref{sec:rrl},
namely \whyg{(c.val:= v | c.f:= -v); (return c.val | return -c.f)}.
For the bi-com $\Splitbi{a;b;c}{d;e;f}$
(temporarily using lower case letters for atomic commands),
there are four different alignments that can be obtained
by a single application of sequence weaving:
\footnote{Keep in mind
the syntactic equivalences in Figure~\ref{fig:synident}, which enable these
different weavings.}
\begin{equation}\label{eq:weaveseq}
\begin{array}{l}
\splitbi{a;b;c}{d;e;f} \weave \splitbi{a;b}{d} ; \splitbi{c}{e;f} \\
\splitbi{a;b;c}{d;e;f} \weave \splitbi{a}{d;e} ; \splitbi{b;c}{f} \\
\splitbi{a;b;c}{d;e;f} \weave \splitbi{a;b;c}{\skipc} ; \splitbi{\skipc}{d;e;f} \\
\splitbi{a;b;c}{d;e;f} \weave \splitbi{\skipc}{d;e;f} ; \splitbi{a;b;c}{\skipc}
\end{array}
\end{equation}
These weavings introduce a semicolon at the biprogram level, which makes it possible to assert a relation at that point.
Different weavings of the same biprogram serve to align different intermediate points.
\qed\end{example}
Using the sequence axiom and congruence, we have
$\splitbi{a;b;c}{d;e;f} \weave \splitbi{a}{d};\splitbi{b;c}{e;f} \weave \splitbi{a}{d};\splitbi{b}{e};\splitbi{c}{f}$ which illustrates how fine grained alignment can be achieved when desired.
We also have $\splitbi{tabu}{tabu'}\weave^* CC_{tabu}$ which connects $tabu,tabu'$ to the particular alignment we choose for reasoning about them.

\begin{figure}[t]
  \centering
  \begin{minipage}[t]{.4\textwidth}
\begin{lstlisting}
result := new Pnode(val, key);
result.sibling := self.sntnl;
result.child := self.sntnl;
result.prev := self.sntnl;
self.rep := self.rep ** {result};
if (self.head = self.sntnl) then
    self.head := result;
else
    self.head := link(self,self.head,result);
fi;
self.size := self.size + 1;
\end{lstlisting}
  \end{minipage}\hfill
  \begin{minipage}[t]{.45\textwidth}
\begin{lstlisting}
|_ result := new Pnode(val, key) _|;
( skip
| result.sibling := self.sntnl;
  result.child := self.sntnl;
  result.prev := self.sntnl );
|_ self.rep := self.rep ** {result} _|;
if (self.head = null | self.head = self.sntnl) then
   |_ self.head := result _|;
else
   |_ self.head := link(self,self.head,result) _|;
fi;
|_ self.size := self.size + 1 _|;
\end{lstlisting}
  \end{minipage}
  \caption{Body of alternative implementation of \whyg{PQ}'s \whyg{insert} (left) and woven biprogram (right).}
  \label{fig:woven-insert}
\end{figure}

As noted earlier, the bi-if and bi-while forms are meant to designate reasoning in which
it will be shown that the test conditions are in agreement.
Technically, we define small step semantics for biprograms, in which these forms
can have a fault ---dubbed \dt{alignment fault}--- if the tests are not in agreement.
This can be seen as a kind of assertion failure.
As an example, recall the implementation of \whyg{insert} in the \whyg{PQ} module
in Figure~\ref{fig:PQueue1}.  Part of the alternate implementation
using sentinels (mentioned in Example~\ref{ex:PQversions})
is shown in Figure~\ref{fig:woven-insert}.
We weave the two conditionals using a bi-if,
which introduces the possibility of alignment fault.
We can use this weaving because our coupling relation will ensure that
$\mself.head = \NULL$ in the left state just when $\mself.head = \mself.sntnl$ on the right.

Use of bi-if or bi-while incurs additional proof obligations that ensure the absence
of alignment fault, which in turn implies that the designated
alignment covers all pairs of executions of the underlying programs.
The weaving transformations can introduce the bi-if and bi-while forms but not eliminate them; nor can they eliminate any other faults.
For example,
$\Splitbi{ \ifc{x>0}{y.f:=x}{\skipc} }{ \ifc{x>0}{y.f:=x}{\skipc} }$ weaves to
$\ifcbi{x>0|x>0}{\Splitbi{y.f:=x}{y.f:=x}}{\syncbi{\skipc}}$,
noting that $\splitbi{\skipc}{\skipc} \equiv \syncbi{\skipc}$.
Both biprograms can fault due to null dereference, but the second also faults
in a pair of states where $x>0$ on one side but $x\leq 0$ on the other.

Suppose $DD$ can be obtained from $CC$ by a sequence of weavings, 
i.e., $CC\weave^* DD$. 
The relation $\weave$ can introduce the possibility of additional alignment faults, but it cannot eliminate such possibility.  In this sense, $\weave$ is oriented (and not symmetric).    
A consequence is the following:
if, under some precondition, $DD$ has no faults, then under that precondition the executions of $DD$ cover all those of $CC$.
This is 
the gist of the argument for soundness of the following proof rule: 
\begin{equation}\label{eq:weaveRule}
\begin{array}{l}
\mbox{from} \quad BB: \rflowty{\R}{\S}{\eff} \quad \mbox{infer} \quad  \splitbi{C}{C'}: \rflowty{\R}{\S}{\eff} \quad \mbox{provided}\quad \splitbi{C}{C'}\weave^* BB
\end{array}
\end{equation}
(See rule \rn{rWeave} in Figure~\ref{fig:proofrulesR}.)
It is this rule that yields a relational judgment for $\splitbi{tabu}{tabu'}$
from the same judgment for $CC_{tabu}$ (Figure~\ref{fig:tabulate}).

\begin{wrapfigure}{r}{0.48\textwidth}  
\begin{footnotesize}
\(
\begin{array}[t]{l@{\hspace*{.6em}}c@{\hspace*{.6em}}l}
\Syncbi{A} &\eqdef& \syncbi{A} \qquad \mbox{ (atomic commands)}\\
\Syncbi{C;D} &\eqdef& \Syncbi{C};\Syncbi{D}\\
\Syncbi{\ifc{E}{C}{D}} &\eqdef& \ifcbi{E\smallSplitSym E}{\Syncbi{C}}{\Syncbi{D}} \\
\Syncbi{\whilec{E}{C}} &\eqdef& \whilecbiA{E\smallSplitSym E}{\False\smallSplitSym \False}{\Syncbi{C}}
\\
\Syncbi{\letcom{m}{B}{C}} &\eqdef& \letcombi{m}{\splitbi{B}{B}}{ \Syncbi{C}} \\
\Syncbi{\varblock{x\scol T}{C}} &\eqdef& \varblockbi{x\scol T | x\scol T}{ \Syncbi{C} }
\end{array}
\)
\end{footnotesize}
\caption{Full alignment.}\label{fig:fullAlign}
\end{wrapfigure}

In general a biprogram may admit several possible weavings.
For the form 
$\splitbi{C}{C}$ relating $C$ to itself there is a biprogram that is maximal in 
the sense that it allows to reason about two executions aligned in lockstep.
We write $\Syncbi{C}$  for the \dt{full alignment} defined in Figure~\ref{fig:fullAlign}.
Apropos linking, we have
$\Splitbi{ \letcom{m}{B}{C} }{ \letcom{m}{B'}{C} }
\weave^* \letcombi{m}{\splitbi{B}{B'}}{ \Syncbi{C} }$.
Full alignment plays a key role in deriving the relational modular linking rule that was sketched as (\ref{eq:mismatchR})
and is formalized in Figure~\ref{fig:derivedmismatch}.
\begin{restatable}{lemma}{lemBiprojections}
  \label{lem:biprojections}
  \upshape
  $\splitbi{\Left{CC}}{\Right{CC}}\weave^* CC$ for any $CC$.
\end{restatable}
As a corollary, we have  $\splitbi{C}{C}\weave^* \Syncbi{C}$ for any $C$, because
$\Left{\Syncbi{C}} \equiv \Right{\Syncbi{C}} \equiv C$.

\paragraph{Sumpub: illustrating conditionally aligned loops.}


For the \whyg{tabulate} example it is effective to reason
by aligning all iterations of the two loops in lockstep.
This is not the case for program (\ref{eq:sumpub}) in Section~\ref{sec:modrel},
recalled here.
\[ 
sumpub: \qquad \mbox{\lstinline{s:=0;  p:=head;  while p <> null do if p.pub then s:=s+p.val fi;  p:=p.nxt od}}
\]
It sums the elements of a list that are flagged public.
It has an information flow property: the output, in variable $s$, depends only on the public elements of the input list.
(This can be viewed as a declassification or as a value-dependent classification~\cite{AmtoftB07}.)
Typically such properties are expressed using a precondition of agreement on some expression which in this case
should denote ``the public elements of the input list''.

As a pointer structure, the list can have cycles, so care needs to be taken in
defining predicates and functions.
In the \whyg{tabulate} example we choose specs that do not involve inductively defined predicates or relations.
Here, we inductively define a predicate $listpub(p,ls)$
that says $ls$ is the list of values of the public elements in a null-terminated list from $p$.
\[ \begin{array}{lcl}
p = null & \imp &  listpub(p, []) \\
p \neq null \land \neg p.pub \land listpub(p.nxt, ls)  & \imp & listpub(p, ls) \\
p \neq  null \land p.pub \land p.val = h \land listpub(p.nxt, ls) & \imp &  listpub(p, h::ls)
\end{array}
\]
We consider the following relational spec, eliding the frame condition for clarity.
The bound variables, $ls,ls'$ are of the math type \whyg{int list}.
\[ \rflowtyf{
   \some{ls: \code{int}~\code{list} \mid ls':\code{int}~\code{list} }{
     \leftF{listpub(head,ls)}\land \rightF{listpub(head,ls')}\land ls\eqbi ls'}}{\Agr s}
\]
The syntax of quantifiers in relation formulas explicitly designates left- and right-side variables, which is important in case of reference or region type (since the values must be allocated in the respective states).
There is no need to use distinct names here, so we can use a more succinct precondition
for the spec:
\( \some{ls|ls}{ \Both{(listpub(head,ls))} \land \Agr ls} \).


We want to prove that $\splitbi{sumpub}{sumpub}$ satisfies the relational spec.
One way is to first prove unary judgment
$sumpub: \flowtyf{listpub(p,ls)}{s=sum(ls)}$, again treating $ls$ as spec-only, and thus universally quantified over the spec.
A simple embedding rule (\rn{rEmb} in Figure~\ref{fig:proofrulesR}) lifts this to
$\splitbi{sumpub}{sumpub}: \rflowtyf{\Both (listpub(p,ls))}{\Both(s=sum(ls))}$.
The relational frame rule lets us conjoin agreement on $ls$, to get
\[\splitbi{sumpub}{sumpub}: \rflowtyf{\Both(listpub(p,ls))\land\Agr ls}{\Both( s=sum(ls))\land \Agr ls} \]
The postcondition implies $\Agr s$, so we complete the proof using the relational consequence rule.

Lifting unary judgments is an important pattern of reasoning and is satisfactory for reasoning about assignment commands including those in the \whyg{tabulate} example.
But $sumpub$ has a loop, so this argument comes at the cost of proving functional correctness,
i.e., the judgment $sumpub: \flowtyf{listpub(p,ls)}{s=sum(ls)}$.
Finding a loop invariant is not difficult in this case, but it would be if sum is replaced by a 
sufficiently complex computation.

There is an alternative proof of the relational spec that avoids functional correctness,
using for the loops a simple relational invariant:
\begin{equation}\label{eq:sumpub:inv}
\some{xs|xs}{ \Both{(listpub(p,xs))} \land \Agr xs \land \Agr s}
\end{equation}
We verified the example using \WhyRel, and instead of asking the solvers to handle the existential we used the standard technique: $xs$ on each side is a ghost variable,
initialized based on the precondition and explicitly updated as appropriate.

The point of this example is that this simple invariant only suffices if we align the iterations
judiciously.
In case $p.pub$ holds on both left and right, we take a lockstep iteration, i.e., both sides
execute the loop body, and it is straightforward to show the invariant holds afterwards
using the last clause in the definition of $listpub$ and the fact that $\Agr xs$, i.e., equality of the mathematical lists, implies agreement on their tails.
If $pub$ is true on one side but not the other, lockstep iteration does not preserve (\ref{eq:sumpub:inv}).
However, if $p.pub$ is false on the left, $listpub(p,xs)$ implies $listpub(p.nxt,xs)$, and
executing the body just on the left maintains the relation (\ref{eq:sumpub:inv}).
Notice (\ref{eq:sumpub:inv}) does not include agreement on $p$; indeed the precondition requires no agreement on references.
\emph{Mutatis mutandis on the right side.}
To express this reasoning, we weave $\splitbi{sumpub}{sumpub}$ to this biprogram:
\begin{equation}\label{eq:bi-sumpub}
\begin{array}{l}
\Splitbi{ s := 0; p := head }{ s := 0; p := head }; \\
\bWHILE\ p\neq \NULL \mid p\neq \NULL \ \,.\, \ \leftF{\neg p.pub} \mid \rightF{\neg p.pub} ~ \bDO \\
\qquad ( ~ \bIF\ p.pub\ \bTHEN\ s := s+p.val\ \bFI; p := p.nxt \\
\qquad | ~ \bIF\ p.pub\ \bTHEN\ s := s+p.val\ \bFI; p := p.nxt ~ ) ~ \bOD
\end{array}
\end{equation}
Although the program is being related to itself, we do not bother to fully align the initialization or loop body:
these do not involve allocation or method calls, so reasoning about those parts of the code is straightforward.
For this reason, some uses of sync in Figure~\ref{fig:tabulate}(c) could as well be bi-coms.
What is important is to use a bi-while.
For loop alignment guards we choose the relation formulas
$\leftF{\neg p.pub}$ and $\rightF{\neg p.pub}$.
The alignment guards are used in the proof rule for bi-while,
which has the following form.
\begin{equation}\label{eq:rWhileSimp}
\inferrule
{
\proves CC: \rflowtyf{\Q\land\neg\P\land\neg\P'\land\leftF{E}\land\rightF{E'}}{\Q}
\\
\proves \splitbi{\Left{CC}}{\skipc} :
 \rflowtyf{\Q\land\P\land\leftF{E}}{\Q}
\\
\proves \splitbi{\skipc}{\Right{CC}} :
\rflowtyf{\Q\land\P'\land\rightF{E'}}{\Q}
\\
\Q\imp E\eqbi E' \lorbi (\P\land\leftF{E}) \lorbi (\P'\land\rightF{E'})
}{
\proves \whilecbiA{E\smallSplitSym E'}{\P\smallSplitSym \P'}{CC} :
\rflowtyf{\Q}{\Q\land\leftF{\neg E}\land\rightF{\neg E'}}
}
\end{equation}
\emph{This rule has omissions!}
For clarity we omit details not relevant to the current discussion:
frame conditions, hypothesis context, and side conditions that enforce encapsulation and immunity.
The encapsulation condition is discussed later and is lifted from the unary logic,
as is \emph{immunity}, a technical condition needed for stateful frame conditions
(adapted unchanged from \RLI).

In the rule, $\Q$ is the relational loop invariant, like (\ref{eq:sumpub:inv}) in the example.
The three premises cover a lockstep iteration, a left-side iteration, and a right-side iteration.
The one-sided iterations are expressed using the syntactic projection metafunctions (Figure~\ref{fig:synProj})
to obtain unary commands.
In the example the two projections of the loop body are the same, namely
\whyg{if p.pub then s := s+p.val; fi; p := p.nxt}.
In each premise the invariant must be preserved, but each has a strengthened precondition
based on the alignment guards.
For the example, the first premise applies when both sides are at a public element.
The second (resp.\ third) premise applies when the element on the left (resp.\ right) is not public.
Besides alignment guards, the premises include the loop tests in the usual way, as does the conclusion of the rule.

The side condition, $\Q\imp E\eqbi E' \lorbi (\P\land\leftF{E}) \lorbi (\P'\land\rightF{E'})$,
ensures that for any initial states satisfying $\Q$, at least one of the three premises is applicable.
The reader can confirm that the side condition holds in the example, and thus the rule can be
used to carry out the proof as described.

As another example, for \whyg{tabulate} in Figure~\ref{fig:tabulate}(c) we use false alignment guards, so the one-sided premises hold trivially and the side condition simplifies to the implication mentioned earlier: the invariant implies agreement on loop tests.
That is, $i-1 \eqbi i \land \Agr n \imp i<n \eqbi i \leq n$.

The biprogram syntax allows $\P$ and $\P'$ to be relation formulas, but it happens that in the example
$\leftF{\neg p.pub}$  only constrains the left state and the other alignment guard constrains the right state.
As stated in Section~\ref{sec:progtype}, $\P$ and $\P'$ are not allowed to have agreement formulas;
it is not evident what refperm would be used to interpret agreements in such a context.

\subsection{Relational reasoning with hiding and encapsulation}\label{sec:eg:encapR}


Having illustrated general relational reasoning (Sects.~\ref{sec:eg:relverif} and~\ref{sec:weave})
and the use of dynamic framing for encapsulation in unary reasoning (Section~\ref{sec:eg:encap}),
we now illustrate encapsulation in relational reasoning.
In doing so we sketch how requirements (E1)--(E4) adapt to the relational setting.

In Section~\ref{sec:eg:encap} we considered the verification of a client linked with
a quick-find implementation of \whyg{UnionFind}, hiding the private invariant.
Here we consider two implementations of that interface and consider a more interesting client: an implementation, $MST$, of Kruskal's minimum spanning tree algorithm.
For a second implementation of \whyg{UnionFind} we consider the quick-union data structure~\cite{SedgewickWayne}.

The goal is to prove a relational property:
equivalence of the two programs made by linking $MST$ with the two module implementations.
To do so we use relational modular linking, as sketched in the rule (\ref{eq:mismatchR}),
hiding a coupling relation between the two implementations which includes their private invariants.
To use the rule we do the following.
\begin{list}{}{}
\item[(i)] Prove a unary judgement for $MST$, with the \whyg{UnionFind} specs in context.
As explained in Section~\ref{sec:eg:encap}, this ensures that $MST$ respects the boundary of \whyg{UnionFind}, as per requirement (E3).

\item[(ii)] Define a coupling relation $\M_{uf}$ to connect the encapsulated data structures of the two implementations of
\whyg{UnionFind}.  Show that it is framed by the dynamic boundary, as per requirement (E2),
and follows from the $MST$ precondition, as per (E4).

\item[(iii)] For the two bodies $B,B'$ that provide alternate implementations of \whyg{find},
prove a relational judgment for $\splitbi{B}{B'}$ (and likewise for the implementations of \whyg{union}). 
The specification should express local equivalence, but with $\M_{uf}$ conjoined to the pre- and postcondition.
\end{list}
It then follows that the two linkages satisfy a local equivalence property,
specifically a relational spec that is derived by a general construction from the unary spec of $MST$.
Similar to the relational spec of \whyg{tabulate} in Section~\ref{sec:eg:relverif},
it requires agreement on inputs and ensures agreement on outputs.
But encapsulation must be taken into account: the two linkages will be equivalent in terms of client-visible inputs and outputs, but the encapsulated data structures are different.
More on this later.

For item (i), we choose $MST$ for the sake of a nontrivial example, but we do not use a functional correctness spec, i.e.,
we do not specify that it produces a minimum spanning tree.
All we need is a precondition under which $MST$ does not fault, and a frame condition.
The global variables of $MST$ are $g$ of type \whyg{Graph} and $es$ of type \whyg{List}.
For simplicity, $g$ is an abstract mathematical graph; $es$ references a list like that used in Section~\ref{sec:eg:relverif}.
The graph interface provides an enumeration of edges and $MST$ produces, in $es$, a list of edge numbers for edges in the spanning tree.
\begin{equation}\label{eq:MSTspec}
 \flowty{\code{numVerts}(g) > 0 \land pool = \emptyset}
          {\True}
          {\rd{g};  \rw{es, \lloc, pool, (pool\union pool\Img rep)\Img \allfields } }
\end{equation}
Note that the effects here include effects produced by call to \whyg{UnionFind} methods.
We verify the judgment
$\Phi_{uf} \proves_\emptymod MST: spec$ where $spec$ is (\ref{eq:MSTspec})
and $\Phi_{uf}$ has the public specs of \whyg{find} and \whyg{union}, i.e., without the private invariants.
The current module is $\emptymod$, the default module with empty boundary.

The local equivalence spec for the two linked programs is derived,
by a general construction called $\locEq$,  based on the frame condition of a unary spec,
and the dynamic boundaries of the modules in scope.  In the example there is just one module with a nontrivial boundary, \whyg{UnionFind}; math modules like \whyg{Graph} have empty boundaries.
Agreements in the precondition are derived directly from the read effects and boundary,
using the effect subtraction operator that excludes from agreement the encapsulated locations.
In this example, the relational precondition is
\[ \Both{(\code{numVerts}(g) > 0 \land pool = \emptyset)} \land
   \Both{(s_{\lloc}=\lloc)} \land
   \Agr es
\]
The conjunct $\Both{(s_{\lloc}=\lloc)}$ introduces snapshot variable $s_{\lloc}$ to be used in the postcondition to express freshness.
The agreement $\Agr es$ is in simplified form.  The general construction
takes the read effect, $\rd{es, \lloc, pool, (pool\union pool\Img rep)\Img \allfields}$
and subtracts the boundary $\rd{pool, (pool\union pool\Img rep)\Img \allfields}$
and $\lloc$, which results in the effect
$\rd{es, ((pool\union pool\Img rep)\setminus(pool\union pool\Img rep))\Img \allfields}$
which trivially simplifies to $\rd{es, \emptyset \Img \allfields}$ and then to $\rd{es}$.

What about agreements for a postcondition? In general a command may write preexisting locations and allocate new ones.  In this case the only preexisting locations that are writable are the variables $es$ and $\lloc$,
so the postcondition includes $\Agr es$.
(In general, to handle writable heap locations the general definition of $\locEq$ uses snapshots of the relevant expressions in write effects; for details see Section~\ref{sec:locEq}.)
To handle fresh locations, $\locEq$ uses the snapshot $s_{\lloc}$ in the way described in Section~\ref{sec:eg:encap}: the fresh references are $\lloc\setminus s_{\lloc}$ so the fresh locations
are $(\lloc\setminus s_{\lloc})\Img\allfields$.  Again, effect subtraction is used to exclude $\lloc$ and the boundary.  The resulting agreement is
$\Agr ( (\lloc\setminus s_{\lloc}) \setminus (pool\union pool\Img rep))\Img \allfields$.

In summary the local equivalence spec that we get from (\ref{eq:MSTspec}) for $MST$ is
\begin{equation}\label{eq:locEqMST}
\begin{array}{l}
\Both{(\code{numVerts}(g) > 0 \land pool = \emptyset}) \land \Both{(s_{\lloc}=\lloc)} \land \Agr es  \\
\rspecSym
\later ( \Both(\True) \land \Agr es \land
\Agr ( (\lloc\setminus s_{\lloc}) \setminus (pool\union pool\Img rep))\Img \allfields  ) \; [\ldots]
\end{array}
\end{equation}
If one simply wants to know that the new and old versions of the program are the same, aside from encapsulated state, this is enough.  By construction, the $\locEq$ spec requires agreement on what
the program can read and ensures agreement on its results.

In this particular case, to obtain a more explicit postcondition that refers to the list constructed,
we can do as follows.  First, strengthen the unary postcondition from $\True$ to
something like $es.head\in es.nds \land es.nds\Img next\subseteq es.nds \land (\sing{es}\union es.nds)\subseteq(\lloc\setminus s_{\lloc})$ which expresses the closure of 
$nds$ and the freshness of the list
(see Section~\ref{sec:eg:relverif}).  
The relational spec (\ref{eq:locEqMST}) then changes to have these conditions in place of $\True$.
Then using the rule of consequence and reasoning about sets,
we get $\Agr es.nds\Img next$ and $\Agr es.nds\Img val$ much like in the \whyg{tabulate} example.

For item (ii), as expected since Hoare`72, the coupling relation $\M_{uf}$ conjoins
a relational formula that connects the two implementations, together
with the two private invariants.
In particular, $\M_{uf}$ is $\leftF{I_{qf}} \land \rightF{I_{qu}}\land\ldots$,
where $I_{qf}$ is the invariant discussed in Sec.~\ref{sec:eg:encap},
and $I_{qu}$ is the private invariant of the quick-union implementation.
The two implementations have similar internal data structure, in the sense that both use an array to represent
an up-pointing tree, but quick-find and quick-union manipulate the tree quite differently.
To specify the connection between the two data structures,
the third conjunct of $\M_{uf}$ is this formula:
\begin{equation}\label{eq:UFagree}
\Agr pool \land \all{u:\code{Ufind}\in pool |  u:\code{Ufind}\in pool }{
                     \Agr u \imp eqPartition(\leftex{u.part},\rightex{u.part}) }
\end{equation}
This says the two pools are in agreement, and for corresponding elements $u$ in the pool,
the abstract partition $u.part$ on the left side is an equivalent partition to the one on the right.
This means they have the same blocks.  
This coupling uses a common idiom. The coupling relation is defined using a mathematical abstraction:
the two data structures are related if they have the same abstraction.
This idiom is especially suitable if the two data structures are very different.
By contrast, in our two implementations of \whyg{PQ}
we consider two similar pointer structures and for their coupling we use agreement formulas to describe fine-grained correspondence between the two pointer structures; see Example~\ref{ex:PQagree}.

To show that $\M_{uf}$ is framed by the boundary, the technique is essentially the same as for unary framing of an invariant
(Section~\ref{sec:eg:encap}).  The difference is that here we consider a pair of states that satisfy $\M_{uf}$,
and a second pair where the two left (resp.\ right) states agree on locations within the boundary, to show the second pair satisfies $\M_{uf}$.  Given a suitable representation of states, as in our prototype, the implication is easily checked by SMT solvers.

The last part of item (ii) is that $\M_{uf}$ is implied by the precondition of the client spec, in this case
(\ref{eq:MSTspec}).  To be precise, it is an implication at the level of relations:
\( \Both{(numVertices(g) > 0 \land pool = \emptyset)} \imp \M_{uf} \).
It holds owing to $pool=\emptyset$.

For item (iii), for each method we verify the local equivalence spec derived from the method's unary spec,
with $\M_{uf}$ conjoined to pre- and postcondition.
For example, the frame condition of \whyg{union} is
$[\rw{(\sing{\mself}\union\mself.rep)\Img\allfields}]$,
and its parameters are $\mself,x,y$.
Based on this, $\locEq$ uses a precondition based on the agreement
$\Agr \mself \land \Agr x \land \Agr y  \land \Agr (\sing{\mself}\union\mself.rep)\Img\allfields$.
A snapshot variable $s$ is used in precondition
$\Both{s = \sing{\mself}\union\mself.rep}$ so the postcondition can express
agreement on writables by $\Agr s\Img\allfields$, in addition to agreement on fresh locations as described for $MST$.  Recall that $\locEq$ then subtracts locations within the boundary; it is not agreement that we want for those locations, but rather the connection expressed by $\M_{uf}$.

The implementations of \whyg{union} and \whyg{find} are fairly different.
For quick-find, the union operation eagerly updates ``parents'' so find takes
constant time. 
For quick-union, find has to traverse multiple parents to reach the representative element.
To prove the relational judgments for the method bodies,
we use biprograms that are not tightly woven.
The corresponding implementations are not very similar and are not making external calls or doing allocation,
so there is little motivation for close alignment the way there is for the \whyg{tabulate} example.

More details about the $MST$ verification can be found in Section~\ref{sec:cases}.
For now we review why relational modular linking ---shown in (\ref{eq:mismatchR}) and formalized in rule \rn{rMLink} in Figure~\ref{fig:derivedmismatch}--- is sound.
In other words, why do (i)--(iii) suffice to prove equivalence of the linkages?
Intuitively, the coupling is preserved by client steps owing to encapsulation, just like private invariants in the unary case.
This is formalized by a relational version of the \rn{SOF} rule, called \rn{rSOF}.
For that rule to be sound, the client needs to be aligned so that context calls can be sync'd (like the call to $mf$ in the \whyg{tabulate} example) so a relational spec can be used ---namely a local equivalence spec conjoined with the coupling relation.
So rule \rn{rSOF} applies to the full alignment of some command,  and its premise is that this fully aligned biprogram satisfies
a local equivalence spec.
This we obtain from the unary judgment of (i), by a rule which lifts a unary judgment
to a relational one for the local equivalence derived from the unary spec
(rule \rn{rLocEq} in Figure~\ref{fig:proofrulesR}).
It relates the command to itself, expressing the dependency property of its read effect
as a relational judgment.

\paragraph{Notations to conjoin couplings.}

To conclude this section, we define a metafunction that conjoins a relation to a relational spec; this is used
to formulate \rn{rSOF} and the modular linking rule.
It is based on a similar metafunction, \ghostbox{$\conjInv$}, which applies to a unary spec and a unary invariant $I$:
\begin{equation}\label{eq:conjInv}
(\flowty{R}{S}{\effe})\conjInv I \; \eqdef \; \flowty{R\land I}{S \land I}{\effe}
\end{equation}
This lifts to an operation on unary contexts, written $\Phi\conjInv I$,
by mapping $\conjInv I$ over the specs in $\Phi$.

For relation formula $\M$, the operation $\conjInv\M$ conjoins $\M$ to a relational spec.
The operation only applies to relational specs in the \dt{standard form},
meaning that $\later$ occurs only outermost on the postcondition, or not at all.

\begin{definition}[\textbf{conjoin coupling} \ghostbox{$\conjInv \M$}]\label{def:conjInv}
If $\R$ and $\S$ are $\later$-free then
\[ \begin{array}{l}
(\rflowty{\R}{\later\S}{\effe})\conjInv \M
\; \eqdef \;
\rflowty{\R\land \M}{\later(\S\land \M)}{\effe} \\
(\rflowty{\R}{\S}{\effe})\conjInv \M
\; \eqdef \;
\rflowty{\R\land \M}{\S\land \M}{\effe}
\end{array}
\]
For context $\Phi$, let $\Phi\conjInv \M$  conjoin $\M$ to the specs in $\Phi_2$ and
for the unary specs give $\Phi_0\conjInv \Left{\M}$ and $\Phi_1\conjInv \Right{\M}$.
In other words, $(\Phi_0,\Phi_1,\Phi_2)\conjInv\M$ is
$(\Phi_0\conjInv\Left{\M}, \, \Phi_1\conjInv\Right{\M}, \, \Phi_2\conjInv\M)$.
\end{definition}
Note that $\Phi\conjInv \M$ is only defined if the specs in $\Phi_2$ are in standard form,
and then so is the result.

\section{Semantics of programs and unary correctness}\label{sec:unarySem}

For a correctness judgment $\Phi\HPflowtr{\Gamma}{M}{P}{C}{Q}{\eff}$,
an informal sketch of the semantics is given preceding Def.~\ref{def:wfjudge}.
To make it precise we use transition semantics, so we can formulate the semantics of encapsulation
in terms of the module in which a given step is taken, initially module $M$.
To express modular correctness with respect to assumed specs,
a context call makes a single step to the result of the call, given by a \dt{context model} $\phi$
which provides denotations that satisfy the specifications of the hypothesis context $\Phi$.
Transitions go to \dt{fault}, $\Fault$, in case of runtime failure (null dereference).
Fault is also used to represent precondition violation in context calls.\footnote{One could distinguish between these two kinds of faults using different tokens, as done in RLII.  Here we would need a third kind, for alignment fault.  But the correctness judgments disallow all three kinds, so for simplicity we conflate them.}

A \dt{pre-model} provides method denotations that do not necessarily satisfy specs;
the transition relation $\trans{\phi}$ is defined for any pre-model $\phi$.

For readers familiar with O'Hearn et al~\cite{OHearnYangReynoldsInfoToplas} or \RLII, 
we note that unlike those works here we cannot use a single ``most nondeterministic'' denotation.
We need context models to be quasi-deterministic, in accord with the $\forall\forall$-interpretation
of relational correctness for deterministic programs.

This section spells out the details, which are somewhat intricate.
The most important and novel part is the semantics of encapsulation,
a condition called Encap in the semantics of correctness judgments (Def.~\ref{def:valid}).
Some readers may wish to skip to Section~\ref{sec:unaryLog},
after skimming Sects.~\ref{sec:states} and~\ref{sec:effect}.

\subsection{States, expressions, method environments and configurations}
\label{sec:states}

Assume given an infinite set $\Refset$ of references, disjoint from the integers, with distinguished element $\semNull$.
A \dt{$\Gamma$-state} comprises a finite heap
and a type-respecting assignment of values to the variables in $\Gamma$.
We confine attention to contexts $\Gamma$ that include the special variable $\lloc$.
We write $\sigma(x)$ to look up the value of $x$ in state $\sigma$.
In particular, $\sigma(\lloc)$ is the finite set of allocated references.
Any reference $o\in\sigma(\lloc)$ has a class $K$, which we write as $\type(o,\sigma)$.\index{$\type$}

A \dt{location} is either a variable $x$ or a
\dt{heap location} $o.f$, where we write $o.f$ for the pair
$(o,f)$ of a non-null reference $o$ and field name $f$.
For any state $\sigma$, define the set of its locations by
\[ \locations(\sigma) \eqdef \Vars{\sigma} \union \{o.f \mid o\in\sigma(\lloc)\land f\in\fields(\type(o,\sigma)) \} \]
The heap provides a type-respecting assignment of values to heap locations.
We write $\sigma(o.f)$ for the value of field $f$ of allocated reference $o$.
Type-respecting means that if $\type(o,\sigma)$ is $K$ and $f:T$ is in $\fields(K)$
then $\sigma(o.f)$ is in $\means{T}\sigma$.
We write $\means{T}\sigma$ for the values of type $T$ in state $\sigma$.
In the case of a reference type $K$,  define $\means{K}\sigma$ by 
\[ \means{K}\sigma \eqdef \{\semNull\} \union \{ o\in\sigma(\lloc) \mid \type(o,\sigma)=K\} \]
Define $\means{\Region}\sigma$ to be
$\powerset(\sigma(\lloc) \union \{\semNull\})$.
We write $\means{\Gamma}$ for the set of $\Gamma$-states.

The transition semantics of a command typed in $\Gamma$ may introduce additional variables for local blocks, so it is convenient to define $\Vars{\sigma}$ to be the variables of the state.
We write $\extend{\sigma}{x}{v}$
to extend the state with additional variable $x$ with value $v$,
and $\update{\sigma}{x}{v}$
to override the value of $x$ that is already in $\Vars{\sigma}$.
We write $\drop{\sigma}{x}$ to remove $x$ from the domain of $\sigma$.

\begin{figure}[t!]
\begin{small}
\(
\begin{array}{llll}
\sigma(E_1 \otimes E_2) &\eqdef& \sigma(E_1) \otimes  \sigma(E_2)
       \quad\mbox{where $\otimes$ is in $\{=,\leq,+,\dots\}$}
\\[.2ex]
\sigma(\sing{E}) & \eqdef & \{ \sigma(E) \}  \\[.2ex]
\sigma(\Emp)    &\eqdef& \emptyset \\[.2ex]
\sigma(G_1 \otimes G_2) &\eqdef&  \sigma(G_1) \otimes \sigma(G_2)
  \quad\mbox{where $\otimes$ is in $\{\cup,\cap,\setminus\}$}
\\[.2ex]
\sigma(G/K) & \eqdef &  \{ o \mid o\in \sigma(G) \land o\neq\semNull \land \type(o,\sigma) = K \}
\\[.2ex]
\sigma(G\Img f) & \eqdef & \emptyset \quad\mbox{if $f\scol\INT$ (or any primitive type)} \\[.2ex]
& \eqdef &  \{\sigma(o.f) \mid o\in \sigma(G) \land o\neq\semNull \land \type(o,\sigma) = \Class(f)  \}
\quad\mbox{if $f\scol K$ for some $K$} \\[.2ex]
& \eqdef &   \bigcup \{\sigma(o.f) \mid  o\in \sigma(G) \land o\neq\semNull\land\type(o,\sigma) = \Class(f) \}
\quad\mbox{if $f\scol \Region$}
\end{array}
\)
\end{small}
\caption{Semantics $\sigma(F)$ of selected program and region expressions (r-values), for state $\sigma$.}
\label{fig:rexpsem}
\end{figure}

We write $\sigma(F)$ for the value of expression $F$.
The semantics of program expressions $E$ and region expressions $G$ is in Figure~\ref{fig:rexpsem}.
(To be very precise, the semantics of expressions is defined on a typing $\Gamma\proves F:T$,
such that $\sigma(F)$ is in $\means{T}\sigma$.)
The syntax is designed to avoid undefinedness.
We are not formalizing arithmetic operators that can fail, there are no dangling pointers,
and program expressions $E$ do not depend on the heap.  Region expressions can depend on the heap, in the case of images $G\Img f$, and they are defined in any state.
If $f\scol K$ for some $K$,
then $\sigma(G\Img f)$ is the set of values of the $f$ fields of objects in $\sigma(G)$.
If $f\scol \INT$ then $\sigma(G\Img f)$ is empty.
Finally, for $f\scol\Region$,
$\sigma(G\Img f)$ is the union of the regions $\sigma(o.f)$ for $o$ in $\sigma(G)$.

Transitions relate configurations of the form $\configm{C}{\sigma}{\mu}$.
The \dt{environment} $\mu$ maps method names to commands.
The empty environment is written $\_$.
In a configuration, the command $C$ may include the pseudo-commands:
$\Endcall(m)$ ends the code of a call to method $m$,
$\Endvar(x)$ ends the scope of a local variable,
and $\Endlet(\ol{m})$ ends the scope of some methods $\ol{m}$
(arising from simultaneous binding $\letcom{\ol{m}}{\ol{B}}{C}$).
The pseudo-commands do not occur in source programs. 
The code of a configuration thus takes a form that represents the execution stack for environment calls:
\[
 C_n;\Endcall(m_n);\ldots;C_1;\Endcall(m_1);C_0
\quad \mbox{where $n\geq 0$ and each $C_i$ is $\Endcall$-free.}
\]
So the leftmost command $C_n$ is on top of the stack and $m_n$ is the leftmost environment call.
We write $\Active(C)$ for the \dt{active command}\index{$\Active$}
(which one might call the redex),
i.e., the unique sub-command that gets rewritten by the applicable transition rule.\footnote{We
identify  sequentially composed commands up to associativity (Figure~\ref{fig:synident})
so $\Active(C)$ can be defined as the leftmost non-sequence command of a sequence.}
For example, $\Active(x:=0;y:=1)$ is $x:=0$.

To formalize the semantics of encapsulation we need to refer to the module of the active command:
it must stay outside the boundary of every module except its own.
So we define the \dt{top module} $\topm(C,M)$ to be $N$ where $N=\mdl(m_n)$ and $m_n$ is the leftmost environment call (see above), or $M$ if $C$ has no $\Endcall$ (i.e., $n=0$).
This is used in Def.~\ref{def:valid} where the argument $M$
is from the judgment under consideration.
In Def.~\ref{def:valid} we also write $N\in(\Phi,\mu)$,
for hypothesis context $\Phi$ and method environment $\mu$,
to mean there is $m\in\dom(\Phi)\union\dom(\mu)$ with $\mdl(m)=N$.

For an empty method context, the transition relation is standard
(Figure~\ref{fig:trans}). 
For non-empty contexts the transition relation depends on a pre-model,
which is defined in terms of the semantics of specs, to which we proceed.

\subsection{Semantics of state predicate formulas and effects}
\label{sec:effect}

Satisfaction of formula $P$ in state $\sigma$ is written $\sigma\models P$.
The semantics of formulas is standard and two-valued.  
The points-to relation $x.f=E$ is defined by
\( \sigma\models x.f=E \mbox{ iff }
\sigma(x)\neq\semNull \mbox{ and } \sigma(\sigma(x).f)=\sigma(E)
\).
The type predicate is defined by $\sigma\models\Type(G,\ol{K})$ iff $\type(o,\sigma)\in\ol{K}$ for all $o \in\sigma(G)$.
Quantifiers for reference types range over allocated (thus non-null) references:
$\sigma\models \all{x:K}{P}$ iff $\extend{\sigma}{x}{o}\models P$ for all $o\in\sigma(\lloc)$
with $\type(o,\sigma)=K$.

\begin{lemma}[unique snapshots]
\label{lem:uniquespeconly}
\upshape
If $P,\Gamma,\hat{\Gamma}$ satisfy the condition for precondition $P$ in
Def.~\ref{def:wfspec} then for all $\Gamma$-states $\sigma$ there is at most one
$(\Gamma,\hat{\Gamma})$-state $\hat{\sigma}$
that extends $\sigma$ such that $\hat{\sigma}\models P$.
\end{lemma}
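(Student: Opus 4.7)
The plan is to prove uniqueness by showing that any two extensions $\hat{\sigma}_1,\hat{\sigma}_2$ of $\sigma$ that satisfy $P$ must agree on every spec-only variable in $\dom(\hat{\Gamma})$, and therefore be equal (they already agree on $\dom(\Gamma)$ by the hypothesis that both extend $\sigma$). The key observation is that the wf-spec condition restricts spec-only variables to appear only as top-level ``snapshot'' equations, so each such variable's value is forced by the value of a spec-only-free expression evaluated in $\sigma$.

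First I would fix an arbitrary spec-only variable $s \in \dom(\hat{\Gamma})$. Since $\hat{\Gamma}$ is determined by the spec-only variables occurring in $P$ (as noted in the footnote following Def.~\ref{def:wfspec}), $s$ must occur in $P$. By the third clause of Def.~\ref{def:wfspec}, every such occurrence is inside a top-level conjunct of the form $s = F$ where $F$ has no spec-only variables; pick any such conjunct. Because $\hat{\sigma}_i \models P$ and $s = F$ is a top-level conjunct, we have $\hat{\sigma}_i \models s = F$, i.e.\ $\hat{\sigma}_i(s) = \hat{\sigma}_i(F)$ for $i \in \{1,2\}$.

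Next, I would invoke a standard coincidence property of the expression semantics (Fig.~\ref{fig:rexpsem}): the value $\hat{\sigma}(F)$ depends only on the values $\hat{\sigma}$ assigns to variables mentioned in $F$ (and on the heap, which comes from $\sigma$ unchanged). Since $F$ contains no spec-only variables, every variable of $F$ lies in $\dom(\Gamma)$, on which $\hat{\sigma}_1$ and $\hat{\sigma}_2$ both agree with $\sigma$. Hence $\hat{\sigma}_1(F) = \sigma(F) = \hat{\sigma}_2(F)$, and combining with the previous step yields $\hat{\sigma}_1(s) = \hat{\sigma}_2(s)$. Since $s$ was arbitrary in $\dom(\hat{\Gamma})$, the two extensions coincide.

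I do not anticipate a real obstacle here: the lemma is essentially the semantic content of the syntactic snapshot discipline. The only subtlety worth checking is the tacit coincidence lemma for expression evaluation under variable extension, which is routine from the inductive definition in Fig.~\ref{fig:rexpsem} but should be cited (or stated as a small preliminary lemma) to justify the step $\hat{\sigma}_i(F) = \sigma(F)$; one should also note that heap locations and $\lloc$ are unchanged between $\sigma$ and $\hat{\sigma}_i$, so region and image expressions also evaluate identically.
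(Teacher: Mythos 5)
Your proof is correct and is exactly the intended argument: the paper itself omits the proof (and dismisses the relational analogue, Lemma~\ref{lem:uniquespeconlyR}, with ``The proof is straightforward''), the point being precisely that the top-level snapshot conjuncts $s=F$ with spec-only-free $F$ force $\hat{\sigma}(s)=\sigma(F)$. Your attention to the coincidence property for expression evaluation and to the fact that $\hat{\Gamma}$ declares only variables occurring in $P$ covers the only two points where care is needed.
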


In contexts where we consider a precondition $P$ and suitable state $\sigma$,
we adopt the \dt{hat convention} of writing $\hat{\sigma}$ for the
extension of $\sigma$ uniquely determined by $\sigma$ and $P$
as in Lemma~\ref{lem:uniquespeconly}.

For an effect $\eff$ in a given state $\sigma$, its read effects designate a set $\rlocs(\sigma,\eff)$ of locations.  Specifically, it is the set of l-values of the left-expressions in its read effects:
\[ \rlocs(\sigma,\eff) \eqdef
\begin{array}[t]{l}
  \{ x \mid \mbox{$\eff$ contains $\rd{x}$} \} \; \union \\
  \{ o.f \mid \mbox{$\eff$ contains some $\rd{G\Img f}$ with $o\in \sigma(G)$, $o\neq\semNull$,
         $f\in\fields(\type(o,\sigma))$ } \}
\end{array}
\]
\index{$\wlocs$}\index{$\rlocs$}%
Define \ghostbox{$\wlocs(\sigma,\eff)$} the same way but for the l-values in write effects.
Note that for an effect of the form $\rd{G\Img f}$ the definition of $\rlocs$ uses
the r-value $\sigma(G)$ (Figure~\ref{fig:rexpsem}) where $G$ may itself involve images.
These functions are used in the key lemma about effect subtraction
(see (\ref{eq:effsubtract})).

\begin{restatable}[subtraction]{lemma}{lemeffectSubtract}
\label{lem:effectSubtract}
\upshape
\(
\rlocs(\sigma, \eff\setminus\effe) = \rlocs(\sigma, \eff) \setminus \rlocs(\sigma,\effe)
\)
and the same for $\wlocs$.
\end{restatable}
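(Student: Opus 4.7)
The plan is to prove the identity by case analysis on the atomic effects grouped by kind (variable read, heap read, variable write, heap write), exploiting the normal form used to define subtraction so that at most one field-read (resp. field-write) of a given field $f$ occurs outermost in each side.

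First I would observe two straightforward preliminaries. (i) Putting $\eff$ and $\effe$ into the normal form described just before (\ref{eq:effsubtract}) does not change the values of $\rlocs$ or $\wlocs$: merging $\rd{G\Img f},\rd{H\Img f}$ into $\rd{(G\union H)\Img f}$ preserves the set of $o.f$ locations because $\sigma(G\union H) = \sigma(G)\union\sigma(H)$ (Fig.~\ref{fig:rexpsem}), and similarly for writes. So it suffices to prove the identity with $\eff,\effe$ already in normal form. (ii) The function $\rlocs(\sigma,-)$ is additive on catenation of effects, i.e.\ $\rlocs(\sigma,\eff,\effe') = \rlocs(\sigma,\eff)\union\rlocs(\sigma,\effe')$, and the variable-locations and heap-locations contributed by an effect are disjoint (variables vs.\ pairs $o.f$), and heap locations with different outermost field names are disjoint. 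Hence the equation can be checked separately at variables and, for each field name $f$, at the set of $o.f$-locations.

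Next I would check the four kinds of atomic effect. For variables, the $\delta_0$ part of the subtraction gives immediately
$\rlocs(\sigma,\delta_0) = \{x \mid \rd{x}\in\eff,\, \rd{x}\notin\effe\} = \rlocs(\sigma,\eff)|_{\mathrm{vars}} \setminus \rlocs(\sigma,\effe)|_{\mathrm{vars}}$, since read-writes of variables are independent of the state. The write-variable case (for $\wlocs$) is identical by definition of $\delta_2$.
For field reads, fix a field $f$. By the normal form there is at most one $\rd{G\Img f}$ in $\eff$ and at most one $\rd{H\Img f}$ in $\effe$. Three sub-cases: (a) neither side mentions $f$: both sides of the identity contribute $\emptyset$ on $o.f$-locations; (b) only $\eff$ mentions $f$, via $\rd{G\Img f}$: then $\delta_1$ contains $\rd{G\Img f}$, so the left-hand $o.f$-contribution is $\{o.f \mid o\in\sigma(G),\, o\neq\semNull\}$; the right-hand contribution is the same minus the empty set from $\effe$; (c) both sides mention $f$: $\delta_1$ contributes $\rd{(G\setminus H)\Img f}$, whose $o.f$-contribution is $\{o.f \mid o\in\sigma(G\setminus H),\, o\neq\semNull\}$. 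Using $\sigma(G\setminus H) = \sigma(G)\setminus\sigma(H)$ (Fig.~\ref{fig:rexpsem}) and the fact that the $o.f$-locations contributed by $\rd{G\Img f}$ and $\rd{H\Img f}$ are indexed by $\sigma(G)$ and $\sigma(H)$ respectively, this equals $\rlocs(\sigma,\rd{G\Img f})\setminus\rlocs(\sigma,\rd{H\Img f})$, which is the right-hand side restricted to $o.f$-locations (keeping in mind that no other read-atom of $\effe$ contributes any $o.f$-location). The field-write case for $\wlocs$ goes through identically, using $\delta_2,\delta_3$.

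Finally, assembling the per-variable and per-field equalities via the additivity and disjointness observations in (ii) yields the stated set-equality for $\rlocs$, and the parallel argument for $\wlocs$. The main obstacle, such as it is, is just being careful with the normal-form assumption: without it one can have multiple $\rd{G_i\Img f}$ atoms in $\eff$ and the subtraction $\delta_1$ only pairs against the unique $\rd{H\Img f}$ in $\effe$, so I want to verify explicitly that the normalization preserves $\rlocs$ before invoking uniqueness. No induction on effect structure is really needed beyond this additivity; the content of the lemma lies entirely in the identity $\sigma(G\setminus H) = \sigma(G)\setminus\sigma(H)$ plus the case distinction above.
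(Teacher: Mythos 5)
Your proof is correct and follows essentially the same route as the paper's: assume the effects are in normal form, dispatch the variable case directly from the definition of $\delta_0$, and handle heap locations $o.f$ by case analysis on whether $\effe$ contains an $f$-read, using $\sigma(G\setminus H)=\sigma(G)\setminus\sigma(H)$ for the case where both sides mention $f$. Your extra care in verifying that normalization preserves $\rlocs$/$\wlocs$ and in stating the additivity/disjointness facts makes explicit what the paper compresses into ``assume w.l.o.g.,'' but the substance is identical.
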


For use in the semantics of write effects, define
the locations of $\sigma$ that have been changed in $\tau$ as
$\index{\written}$
\[ \written(\sigma,\tau) \eqdef
\{ x \mid x\in\Vars{\sigma}\intersect\Vars{\tau} \land \sigma(x)\neq\tau(x) \}
     \union \{ o.f \mid o.f\in\locations(\sigma) \land \sigma(o.f)\neq\tau(o.f) \}
\]
This captures the variables still in scope that have been changed, together with changed heap locations.\footnote{The definitions are formulated to be applicable to intermediate states in the scope of local blocks, which introduce variables not present in the typing context of the initial command.}
Say $\tau$ \dt{can succeed} $\sigma$, written \ghostbox{$\sigma\successorTo\tau$}, provided
$\sigma(\lloc)\subseteq\tau(\lloc)$ and
$\type(o, \sigma) = \type(o, \tau)$ for all $o\in\sigma(\lloc)$.
Say $\eff$ \dt{allows change} from $\sigma$ \emph{to} $\tau$,
in symbols \ghostbox{$ \sigma \allowTo \tau \models \eff $},
iff $\sigma\successorTo\tau$ and $\written(\sigma,\tau)\subseteq \wlocs(\sigma,\eff)$.
The locations of $\tau$ not present in $\sigma$ are designated by $\freshLocs(\sigma,\tau)$.
Define $\freshRefs(\sigma,\tau) \eqdef \tau(\lloc)\setminus \sigma(\lloc)$ and
\index{$\freshRefs$}\index{$\freshLocs$}
\[ \begin{array}{l}
\freshLocs(\sigma,\tau)
\eqdef \{ p.f \mid p\in \freshRefs(\sigma,\tau)
           \land
           f\in\fields(\type(p,\tau)) \}
\union \Vars{\tau}\setminus\Vars{\sigma}
\end{array}
\]

\paragraph{Read effects and refperms.}

Read effects constrain the locations on which the outcome of a computation can depend.
Dependency is expressed by considering two initial states that agree on the values in the locations deemed readable, though the states may differ
on the values in other locations.
Agreement between a pair of states needs to take into account variation in allocation, as the relevant pointer structure in the two states may be isomorphic but involve differently chosen references.
Such variation must also be taken into account in relation formulas, as in Example~\ref{ex:PQagree}.
For use with both read effects and relation formulas, agreements are formalized using refperms,
as mentioned in Section~\ref{sec:rrl}.

Let $\pi$ range over \dt{partial bijections} on $\Refset\setminus\{\semNull\}$,
i.e., injective partial functions.
Write $\pi(p)=p'$ to express that $\pi$ is defined on $p$ and has value $p'$.
A \dt{refperm from $\sigma$ to $\sigma'$} is a partial bijection $\pi$ such that
$dom(\pi)\subseteq \sigma(\lloc)$,
$\rng(\pi)\subseteq \sigma'(\lloc)$, and
$\pi(p)=p'$ implies $\type(p,\sigma)=\type(p',\sigma')$.
Define $\rprel{p}{p'}$ to mean $\pi(p)=p'$ or $p=\semNull=p'$.
Extend
$\stackrel{\pi}{\sim}$
to a relation on integers by $\rprel{i}{j}$ iff $i=j$.
For reference sets $X,Y$, define
$\rprel{X}{Y}$ to mean that $\pi\union\{(\semNull,\semNull)\}$ restricts to a total bijection between $X$ and $Y$.
The image of $\pi$ on location set $W$ is written $\pi(W)$ and defined for variables and heap locations by two conditions:
$ x\in \pi(W)$ iff $x \in W$,
and $o.f\in \pi(W)$ iff $(\pi^{-1}(o)).f \in W $.
In words: variables map to themselves, and a heap location $p.f$ is transformed by applying $\pi$ to the reference $p$.

Next we define notations for agreement between states.
Agreement is formalized in terms of a condition  which applies to two states together with a refperm
and a subset $W$ of the locations of $\sigma$.
The location agreement $\Lagree(\sigma,\sigma',\pi,W)$ holds just if $W$ is a set of locations of $\sigma$ and for each of these locations, the contents in $\sigma$ is the same as the contents of the location that corresponds according to $\pi$. Of course ``same as'' is modulo $\pi$, for reference values.

\begin{definition}[\textbf{agreement on a location set}, $\Lagree$]\label{def:locagreement}
\index{$\Lagree$}
For $W$ a set of locations in $\sigma$, and $\pi$ a refperm from $\sigma$ to $\sigma'$,
define 
\[
\mbox{\ghostbox{$\Lagree(\sigma,\sigma',\pi,W)$}}
\mbox{ iff }
\all{x\in W}{ \rprel{\sigma(x)}{\sigma'(x)} } \; \land
\all{(o.f)\in W}{ o\in dom(\pi) \land \rprel{ \sigma(o.f) }{ \sigma'(\pi(o).f) }}
\]
\end{definition}
This is defined for any $W\subseteq\locations(\sigma)$.
Agreement is monotonic in the refperm, in the sense that
\begin{equation}\label{eq:LagreeMono}
\Lagree(\sigma,\sigma',\pi,W) \mbox{ and } \pi\subseteq\rho \mbox{ implies } \Lagree(\sigma,\sigma',\rho,W)
\end{equation}

\begin{definition}[\textbf{agreement on read effects}, $\agree$]\label{def:agreeX}
\index{$\agree$}
Let $\eff$ be an effect that is wf in $\Gamma$.
Consider $\Gamma$-states $\sigma,\sigma'$.
Let $\pi$ be a refperm.
Say that $\sigma$ and $\sigma'$ \dt{agree on $\eff$ modulo $\pi$},
\index{$\agree$}
written \ghostbox{$\agree(\sigma, \sigma', \pi, \eff)$},
iff
$\Lagree(\sigma,\sigma',\pi,\rlocs(\sigma,\eff))$.
Let $\agree(\sigma,\sigma',\eff) \eqdef \agree(\sigma,\sigma',\pi,\eff)$ where
$\pi$ is the identity on $\sigma(\lloc)\intersect\sigma'(\lloc)$.
\end{definition}
Often we use $\agree(\sigma,\tau,\eff)$ where $\sigma\successorTo\tau$, in
which case $\sigma(\lloc)\intersect\tau(\lloc)=\sigma(\lloc)$.

Agreement on location sets enjoys a kind of symmetry:
\begin{equation}\label{eq:LagreeSym}
\Lagree(\sigma,\sigma',\pi,W) \mbox{ implies } \Lagree(\sigma',\sigma,\pi^{-1},\pi(W))
\mbox{ for all $\sigma,\sigma',\pi,W$}
\end{equation}
By contrast, Def.~\ref{def:agreeX} of agreement on read effects is left-skewed, in the sense that it refers to the locations denoted by effects interpreted in the left state.
The asymmetry makes working with agreement somewhat delicate.
For example, agreement on $\rd{G\Img f}$ (modulo $\pi$) implies that $\sigma(G)\subseteq\dom(\pi)$
(by Def.~\ref{def:locagreement}),
but it does not imply $\rprel{\sigma(G)}{\sigma'(G)}$.
At a higher level there will be symmetry, for reasons explained in due course.


\subsection{Pre-models and program semantics}\label{sec:progSem}

The transition relation depends on a pre-model $\phi$, defined below, and is written $\trans{\phi}$.
The pre-model provides semantics for context calls
and represents denotations of method bodies.
Transitions act on configurations where the environment $\mu$ has procedures
distinct\footnote{This representation takes advantage of the hygiene condition that variable and method names are never re-used in nested declarations.}
from those of $\phi$.

\begin{definition}[\textbf{state isomorphism $\RprelT{\pi}{}{}$, outcome equivalence $\RprelTS{\pi}{}{}$}]
\label{def:state-iso}
For $\Gamma$-states $\sigma,\sigma'$,
define \ghostbox{$\RprelT{\pi}{\sigma}{\sigma'}$} (read: \dt{isomorphic mod $\pi$}) to mean that refperm $\pi$ is a total bijection from $\sigma(\lloc)$ to $\sigma'(\lloc)$ and the states agree mod $\pi$ on all variables and all fields of all objects.
That is, $\Lagree(\sigma,\sigma',\pi,\locations(\sigma))$.\footnote{Which is equivalent to $\Lagree(\sigma',\sigma,\pi^{-1},\locations(\sigma'))$, in this context where
$\rprel{\sigma(\lloc)}{\sigma'(\lloc)}$.}
For $S,S'\in \powerset(\means{\Gamma}\union\{\Fault\})$,
define \ghostbox{$\RprelTS{\pi}{S}{S'}$} (read \dt{equivalent mod $\pi$}) to mean that
(i) $\Fault\in S$ iff $\Fault\in S'$;
(ii) for all states $\sigma\in S$ and $\sigma'\in S'$
there is $\rho\supseteq\pi$ such that $\RprelT{\rho}{\sigma}{\sigma'}$; and
(iii) $S = \emptyset$ iff $S' = \emptyset$.
\end{definition}
Note that item (ii) involves extensions of $\pi$,
whereas the relations $\Rprel{\pi}{}{}$ and
$\RprelT{\pi}{}{}$ involve only $\pi$ itself.

\begin{restatable}{lemma}{leminsensi}
\label{lem:insensi}
\upshape
Suppose $\RprelT{\pi}{\sigma}{\sigma'}$.
Then $\Rprel{\pi}{ \sigma(F) }{ \sigma'(F) }$, and
$\sigma\models P$ iff $\sigma'\models P$.
\end{restatable}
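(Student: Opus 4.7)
The proof plan is a straightforward double structural induction, but with careful bookkeeping of the partial bijection $\pi$.

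I would first prove the expression part by structural induction on the typing derivation of $F$, treating program expressions and region expressions uniformly. The key observation is that the hypothesis $\RprelT{\pi}{\sigma}{\sigma'}$ unfolds to $\Lagree(\sigma,\sigma',\pi,\locations(\sigma))$ together with $\pi$ being a \emph{total} bijection on $\sigma(\lloc)\leftrightarrow\sigma'(\lloc)$. Hence every variable lookup and every heap dereference $\sigma(o.f)$ (for $o\in\sigma(\lloc)$) is matched by $\sigma'(\pi(o).f)$ up to $\pi$. Base cases: $\sigma(x)\stackrel{\pi}{\sim}\sigma'(x)$ is immediate from $\Lagree$; constants and $\NULL$ handle themselves. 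The binary arithmetic/boolean cases follow because $\stackrel{\pi}{\sim}$ on integers is equality. For region expressions the interesting cases are $\sing{E}$ (use the IH), $G/K$ (use the type-respecting clause in the definition of refperm), and $G\Img f$. In the last case, unfolding Fig.~\ref{fig:rexpsem} and applying the IH to $G$ gives $\rprel{\sigma(G)}{\sigma'(G)}$; then for each $o\in\sigma(G)$ with $o\neq\semNull$, totality of $\pi$ on $\sigma(\lloc)$ together with the heap clause of $\Lagree$ gives $\rprel{\sigma(o.f)}{\sigma'(\pi(o).f)}$, and these assemble (by union for the $\Region$ case) into $\rprel{\sigma(G\Img f)}{\sigma'(G\Img f)}$.

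Next, I would prove the formula part by structural induction on $P$. For atomic cases ($E$ of boolean type, $E=E'$, $x.f=E$, $G\subseteq G'$, $\Type(G,\ol{K})$, $R(\ol{F})$), the main tool is the expression lemma just proved, together with the observation that $\stackrel{\pi}{\sim}$ reduces to equality on values of primitive type and to the graph of $\pi\cup\{(\semNull,\semNull)\}$ on reference and region values. For $x.f=E$ we need additionally that $\sigma(x)\neq\semNull$ iff $\sigma'(x)\neq\semNull$, which follows because $\rprel{\sigma(x)}{\sigma'(x)}$ and $\pi$ avoids $\semNull$. The $\Type(G,\ol{K})$ case uses the type-respecting clause of refperms. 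For $R(\ol{F})$, one assumes that the primitive relational symbols in the signature are interpreted mathematically, so their values on $\pi$-related arguments coincide (this is implicit in the setup and should be flagged rather than proved). Propositional connectives are immediate.

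The one step that needs care is universal quantification $\all{x:T}{P}$. For $T$ a primitive type, each value is $\stackrel{\pi}{\sim}$-related to itself and extending both states by the same value preserves $\RprelT{\pi}{}{}$, so the IH closes the goal. For $T$ a class type $K$, the quantifier ranges over allocated references of type $K$. Since $\pi$ is a total bijection between $\sigma(\lloc)$ and $\sigma'(\lloc)$ and respects types, it restricts to a total bijection on the $K$-typed references. Then for each $o\in\sigma(\lloc)$ of type $K$, the extended states $\extend{\sigma}{x}{o}$ and $\extend{\sigma'}{x}{\pi(o)}$ satisfy $\RprelT{\pi}{}{}$ (variables still agree since $\rprel{o}{\pi(o)}$, heaps unchanged), so the IH applies; symmetrically for the other direction via $\pi^{-1}$. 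Quantification at type $\Region$ is analogous, using that $\stackrel{\pi}{\sim}$ on regions is a bijection on $\powerset^{fin}(\sigma(\lloc)\cup\{\semNull\})\leftrightarrow\powerset^{fin}(\sigma'(\lloc)\cup\{\semNull\})$ via pointwise application of $\pi$.

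The main obstacle is the bookkeeping around image expressions $G\Img f$ when $f:\Region$, where the semantics is defined as a big union over fields and one must check that $\pi$ transports this union correctly; and the quantifier case at class type, where one must verify that extending with $\pi$-related references preserves state isomorphism. Neither is conceptually deep, but both rely essentially on \emph{totality} of $\pi$ on allocated references — a property that distinguishes $\RprelT{\pi}{}{}$ from the more delicate $\Lagree$-based notions used elsewhere in the paper, and which makes this lemma notably cleaner than the agreement lemmas in Sect.~\ref{sec:effect}.
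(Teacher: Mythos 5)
Your proof is correct and follows exactly the route the paper takes: the paper's own proof of Lemma~\ref{lem:insensi} is the one-liner ``Straightforward, by induction on $F$ and induction on $P$,'' and your write-up is a faithful, careful expansion of that double structural induction, correctly isolating totality of $\pi$ on $\sigma(\lloc)$ as the property that makes the image-expression and class-typed-quantifier cases go through.
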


\begin{definition}\label{def:preinterp}
A \dt{pre-model} for $\Gamma$ is a mapping from some set of method names,
such that for $m\in\dom(\phi)$, $\phi(m)$ is a function of type
\( \means{\Gamma} \to \powerset(\means{\Gamma}\union\{\Fault\}) \)
such that $\sigma\successorTo\tau$ for all $\sigma,\tau$ with $\tau\in\phi(m)(\sigma)$,
and
\begin{list}{}{}
\item[\quad(\dt{fault determinacy})]
$\Fault \in \phi(m)(\sigma)$ implies $\phi(m)(\sigma)= \{\Fault\}$
\item[\quad(\dt{state determinacy})]
$\RprelT{\pi}{\sigma}{\sigma'}$ implies
$\RprelTS{\pi}{ \phi(m)(\sigma) }{  \phi(m)(\sigma') }$
\end{list}
For $\Phi$ wf in $\Gamma$, a pre-model of $\Phi$
is a pre-model for $\Gamma$ and $\dom(\Phi)$.
\end{definition}
We say pre-models are \dt{quasi-deterministic}, because
from a given initial state, these three outcomes are mutually exclusive: fault, non-empty set of states, empty set.
Moreover, instantiating $\sigma':=\sigma$
and setting $\pi$ to the identity on $\sigma(\lloc)$ in the condition (state determinacy)
yields that all results from a given initial state are isomorphic.\footnote{In light of these
  definitions and the results to follow, we could as well replace the codomain of a
  pre-model, \emph{i.e.}, $\powerset(\means{\Gamma}\union\{\Fault\})$, by
  the disjoint sum of $\powerset(\means{\Gamma})$ and $\{\Fault\}$.
  The chosen formulation helps streamline a few things later.
} 

\begin{figure}[t!]
\begin{small}
\begin{mathpar}
   \inferrule[uCall]
   { \tau\in\phi(m)(\sigma) }
   { \configm{m()}{\sigma}{\mu} \trans{\phi} \configm{\skipc}{\tau}{\mu} }

   \inferrule[uCallX]
   { \Fault\in\phi(m)(\sigma) }
   { \configm{m()}{\sigma}{\mu} \trans{\phi} \Fault}

   \inferrule[uCall0]
   { \phi(m)(\sigma) = \emptyset }
   { \configm{m()}{\sigma}{\mu} \trans{\phi} \configm{m()}{\sigma}{\mu} }

  \inferrule[uCallE]{ \mu(m) = C }
            { \configm{ m() }{\sigma}{\mu} \trans{\phi} \configm{C;\Endcall(m)}{\sigma}{\mu}  }

   \inferrule[uECall]{}{
   \configm{ \Endcall(m) }{\sigma}{\mu} \trans{\phi} \configm{ \skipc }{\sigma}{\mu}
   }



  \inferrule[uLet]
     {}
     {  \configm{ \letcom{\ol{m}}{\ol{B}}{C} }{\sigma}{\mu}
        \trans{\phi}
        \configm{C;\Endlet(\ol{m})\,}{\sigma}{\extend{\mu}{\ol{m}}{\ol{B}}} }

  \inferrule[uElet]
     {}
     {  \configm{ \Endlet(\ol{m}) }{\sigma}{\mu} \trans{\phi} \configm{ \skipc}{\sigma}{\drop{\mu}{\ol{m}}} }

\end{mathpar}
\end{small}
\caption{Selected transition rules, for pre-model $\phi$.  The others are in appendix Figure~\ref{fig:trans}.}
\label{fig:transSel}
\end{figure}

The transition relation is defined in Figure~\ref{fig:transSel}.
A \dt{trace} via pre-model $\phi$ is a non-empty finite sequence of configurations
that are consecutive for the transition relation $\trans{\phi}$.
For example, this sequence is a trace (for any $\phi$):
\[ \configm{x:=1;y:=2}{[x\scol 0,y\scol 0]}{\_}
\configm{y:=2}{[x\scol 1,y\scol 0]}{\_}
\configm{\skipc}{[x\scol 1,y\scol 2]}{\_} \]
Recall that we identify $(\skipc;C)$ with $C$ (Figure~\ref{fig:synident}).
By definition, a trace does not contain $\Fault$.

\subsection{Context models and program correctness}
\label{sec:ctxmodel}

For syntactic substitution we use the notation $\subst{P}{x}{F}$.
Substitution notations are mainly used with spec-only variables.
In addition, for clarity we also use substitution notation for values, even references---although
the syntax does not include reference literals.

\begin{definition}[\textbf{substitution notation}]
\label{def:subst}
If $\Gamma,x\scol T\proves P$ and  $\sigma\in\means{\Gamma}$ and $v$ is a value in $\means{T}\sigma$, we write
$\sigma\models^\Gamma \subst{P}{x}{v} $
to abbreviate
$\extend{\sigma}{x}{v}\models^{\Gamma,x:T} P$.
\end{definition}



A context model, or $\Phi$-model when we refer to a specific context $\Phi$,
is a pre-model that satisfies its specs.

\begin{definition}[\textbf{context model}]\label{def:ctxinterp}
\index{context model}
Let $\Phi$ be wf in $\Gamma$ and let $\phi$ be a pre-model.
Say $\phi$ is a \dt{$\Phi$-model} iff
$\dom(\phi)=\dom(\Phi)$ and
for each $m$ in $\dom(\Phi)$  with $\Phi(m)= \flowty{R}{S}{\effe}$
and for any $\sigma$ and $\sigma'$ in $\means{\Gamma}$,
\begin{list}{}{}
\item[(a)] $\Fault\in\phi(m)(\sigma)$ iff
there are no values $\ol{v}$ with $\sigma\models \subst{R}{\ol{s}}{\ol{v}}$
where $\ol{s}$ are the spec-only variables.

\item[(b)] For all $\tau \in \phi(m)(\sigma)$,
and all $\ol{v}$,
if $\sigma\models \subst{R}{\ol{s}}{\ol{v}}$
then $\tau\models \subst{S}{\ol{s}}{\ol{v}}$
and $\sigma\allowTo\tau\models \effe$.
\item[(c)] For all $\tau \in \phi(m)(\sigma)$ and
all $N$ with $\mdl(m)\imports N$, $\rlocs(\sigma,\bnd(N))\subseteq\rlocs(\tau,\bnd(N))$.
\item[(d)] For all $\pi$, if
$\Lagree(\sigma,\sigma',\pi,\rlocs(\sigma,\effe)\setminus\{\lloc\})$
then
\begin{list}{}{}
\item[(i)] $\phi(m)(\sigma)=\emptyset$ iff $\phi(m)(\sigma')=\emptyset$,
and
\item[(ii)] if  $\tau\in\phi(m)(\sigma)$ and $\tau'\in\phi(m)(\sigma')$
then there is $\rho\supseteq\pi$ with
$\rho(\freshLocs(\sigma,\tau))\subseteq \freshLocs(\sigma',\tau')$ and
$\Lagree(\tau,\tau',\rho,
  (\freshLocs(\sigma,\tau)\union\written(\sigma,\tau))\setminus\{\lloc\})$.
\end{list}
\end{list}
\end{definition}
Condition (a) says $\phi(m)$ faults just on states outside the precondition of $m$,
(b) says the postcondition holds and write effect is respected,
(c) is a technical condition we call boundary monotonicity,
and (d) is the dependency condition of the read effect.

The snapshot values $\ol{v}$ in (a) and (b) are uniquely determined by $\sigma$
(Lemma~\ref{lem:uniquespeconly}).
So (a) can be rephrased:  $\Fault\in\phi(m)(\sigma)$ iff
$\sigma \not\models \subst{R}{\ol{s}}{\ol{v}}$
where $\ol{v}$ are the values uniquely determined by $R$ in $\sigma$.
Similarly for (b), which treats spec-only variables as being quantified over the pre- and post-condition.

Finally we can give the semantics of correctness judgments, which embodies
encapsulation for dynamic boundaries.
In the definition to follow we write \ghostbox{$\delta^\oplus$} to abbreviate $\delta,\rd{\lloc}$.
Apropos Def.~\ref{def:ctxinterp}(d), note that $\{\lloc\} = \rlocs(\sigma,\rd{\lloc}) = \rlocs(\sigma,\emptyeff^\oplus)$.

The conditions for a valid correctness judgment include that there are no faulting executions,
terminated executions satisfy the postcondition and write effect, and boundary monotonicity.
These conditions are like (a)--(c) above for context model.
The absence of fault means more than no null dereference; it means there are no
method calls outside the method's precondition---because otherwise the call would fault,
by condition (a) for context models.
An additional condition for correctness is that the read effects of the judgment should subsume the read effects in the specs of methods in context calls; this is called r-safety.
Finally, the Encap condition says that each step reads and writes outside the boundaries
of any module the step is not within.  The Encap condition is formulated using the read effects
of the judgments and implies the expected end-to-end read effect as will be explained later.
Reading is meant in the extensional sense of a two-run dependency property, similar to condition (d) for context model.

The Encap condition applies to every reachable step, and refers to the initial state, so we use the following
schema to designate identifiers for the elements of a
step reached from command $C$ and state $\sigma$:
\[ \configm{C}{\sigma}{\_} \tranStar{\phi} \configm{B}{\tau}{\mu}
                           \trans{\phi} \configm{D}{\upsilon}{\nu} \]
The step is taken by the active command of $B$, from state $\tau$ to state $\upsilon$.
For such a step, we need to refer to the locations encapsulated by all modules except
the current module, $M$, of the correctness judgment.
To this end, the \dt{collective boundary} is an effect $\delta$ defined by cases:
\begin{equation}\label{eq:collectiveB}
\begin{array}{lcll}
\delta & \eqdef & \unioneff{N\in(\Phi,\mu),N\neq \topm(B,M)}{\bnd(N)}
                & \mbox{if $\Active(B)$ is not a context call} \\
       & \eqdef & \unioneff{N\in(\Phi,\mu),\mdl(m)\not\imports N}{\bnd(N)}
                & \mbox{if $\Active(B)$ is a context call of $m$}
\end{array}
\end{equation}

\begin{definition}[\textbf{valid judgment}]\label{def:valid}
A wf 
judgment $\Phi\HPflowtr{\Gamma}{M}{P}{C}{Q}{\eff}$
is \dt{valid}
iff the following 
hold for all $\Phi$-models $\phi$,
all values $\ol{v}$ for the spec-only variables $\ol{s}$ in $P$,
and all states $\sigma$ such that  $\sigma\models^{\Gamma} \subst{P}{\ol{s}}{\ol{v}}$.
\begin{list}{}{}
\item[(\dt{Safety})] It is not the case that
$\configm{C}{\sigma}{\_} \tranStar{\phi} \,\Fault$.
\item[(\dt{Post})]
$\tau \models \subst{Q}{\ol{s}}{\ol{v}}$
for every $\tau$ with $\configm{C}{\sigma}{\_} \tranStar{\phi} \configm{\skipc}{\tau}{\_}$.
\item[(\dt{Write})]
$\sigma\allowTo\tau\models \eff$
for every $\tau$ with $\configm{C}{\sigma}{\_} \tranStar{\phi} \configm{\skipc}{\tau}{\_}$.
\item[(\dt{R-safe})]
Every reachable configuration
\( \configm{C}{\sigma}{\_} \tranStar{\phi} \configm{B}{\tau}{\mu} \)
satisfies the \dt{r-safe condition for $(\Phi,\eff,\sigma)$}:
If $\Active(B)$ is a context call to  $m$ with $\Phi(m) \equiv m:\flowty{R}{S}{\effe}$, then
$\rlocs(\tau,\effe)\subseteq\freshLocs(\sigma,\tau)\union\rlocs(\sigma, \eff)$.

\item[(\dt{Encap})]
Every reachable step
\( \configm{C}{\sigma}{\_} \tranStar{\phi} \configm{B}{\tau}{\mu}
                           \trans{\phi} \configm{D}{\upsilon}{\nu} \)
\dt{respects $(\Phi,M,\phi,\eff,\sigma)$}, i.e.,
\begin{itemize}
\item
For every $N$ with $N\in (\Phi,\mu)$ and $N\neq \topm(B,M)$,
the step 
\dt{w-respects} $N$, which means:
either $\Active(B)$ is a call to some $m$ with $\mdl(m)\imports N$
or $\agree(\tau,\upsilon,\bnd(N))$.

\item
 For $\delta$ the collective boundary given by (\ref{eq:collectiveB})
for $B,\tau,\mu$,
the step \dt{r-respects $\delta$ for $(\phi,\eff,\sigma)$},
which means: for any\footnote{\label{fn:r-respect}To be precise:
       such that $\tau'$ has the same variables as $\tau$---there may be local variables in addition to those declared by $\Gamma$. }
 $\pi,\tau',\upsilon',D'$
\vspace*{-1ex} 
\begin{equation}\label{eq:rrespectAnte}
\vspace*{-1ex}
\begin{array}{l}
  \mbox{if }
  \configm{B}{\tau'}{\mu} \trans{\phi} \configm{D'}{\upsilon'}{\nu} \mbox{ and }
  \agree(\tau',\upsilon',\delta) \mbox{ and }
\\
  \Lagree(\tau,\tau',\pi,
(\freshLocs(\sigma,\tau)\union\rlocs(\sigma,\eff))\setminus\rlocs(\tau,\delta^\oplus))
\end{array}
\end{equation}
then $D'\equiv D$ and there is $\rho$ with $\rho\supseteq\pi$ such that
\vspace*{-1ex} 
\begin{equation}\label{eq:rrespect}
\vspace*{-1ex}
\begin{array}{l}
\Lagree(\upsilon,\upsilon',\rho,
(\freshLocs(\tau,\upsilon)\union\written(\tau,\upsilon))\setminus\rlocs(\upsilon,\delta^\oplus)) \mbox{ and } \\
\rho(\freshLocs(\tau,\upsilon)\setminus\rlocs(\upsilon,\delta))\subseteq
\freshLocs(\tau',\upsilon')\setminus\rlocs(\upsilon',\delta)
\end{array}
\end{equation}

\item
For every $N$ with $N\in\Phi$ or $N=M$, the step satisfies \dt{boundary monotonicity}:\\
  $\rlocs(\tau,\bnd(N)) \subseteq \rlocs(\upsilon,\bnd(N))$.
\end{itemize}
\end{list}
\qed
\end{definition}
In addition to the terms introduced above to refer to parts of the definition,
we also use the following derived notions:
A trace from $\configm{C}{\sigma}{\_}$
\dt{respects $(\Phi,M,\phi,\eff,\sigma)$} just if each step of the trace does,
and it is \dt{r-safe for $(\Phi,\eff,\sigma)$} just if each configuration is.
A step is called \dt{r-safe} if its starting configuration is r-safe.

While w-respect can be defined one module at a time, this is not the case for r-respect,
because dependency properties do not compose in a simple way.\footnote{For readers familiar
with \RLII, the w-respect condition is the same except that, here, to support r-respect
we add w-respect of modules in the environment (in addition to those in context).
} 
The absence of dependency needs to be expressed in terms of the collective boundary $\delta$ with which a given step must not interfere.  As with w-respect, this depends on whether the step is a context call.
If not, then the current module's boundary is exempt (see condition $N\neq\topm(B,M)$
in (\ref{eq:collectiveB})).
If so,
the step is exempt from the boundary of the callee's module together with modules
into which its  implemenation may call (second condition in (\ref{eq:collectiveB})).
Dependency is expressed as usual by an implication from initial agreement (\ref{eq:rrespectAnte}) on reads
to final agreement (\ref{eq:rrespect}) on writes---subtracting the encapsulated locations.
The read effects in $\eff$ are interpreted in the pre-state $\sigma$,
as are the write effects (which cover the written locations according to the condition labelled Write).
The collective boundary $\delta$ is interpreted at intermediate states.

In case the module boundaries are all empty,
in Def.~\ref{def:valid}, two parts of the Encap condition become vacuous,
namely w-respect and boundary monotonicity.
And r-respect reduces to the property that the dependency
of each step is within the readable locations of the given frame condition.
This implies an end-to-end read effect condition given in the following lemma.\footnote{The
   condition is much like the semantics of effects in \RLIII, with a small
   difference concerning the treatment of variable $\lloc$.  (See Def.~5.2 in \RLIII.)
} 
The lemma is used to prove soundness of the linking rule; in that proof
we derive a pre-model from the denotation of the method body, and the lemma
is used to show it is a context model.

\begin{restatable}[read effect]{lemma}{lemreadeff}
\label{lem:readeff}
\upshape
Suppose $\Phi\HVflowtr{\Gamma}{M}{P}{C}{Q}{\eff}$ and $\phi$ is a $\Phi$-model.
Suppose $\sigma\models P$ and $\sigma'\models P$.
Suppose $\Lagree(\sigma,\sigma',\pi,\rlocs(\sigma,\eff)\setminus\{\lloc\})$.
Then $\configm{C}{\sigma}{\_}$ diverges iff $\configm{C}{\sigma'}{\_}$ diverges.
And for any $\tau,\tau'$,
if 
$\configm{C}{\sigma}{\_}\tranStar{\phi}\configm{\skipc}{\tau}{\_}$ and
$\configm{C}{\sigma'}{\_}\tranStar{\phi}\configm{\skipc}{\tau'}{\_}$ then 
\[
\some{\rho\supseteq\pi}{
  \begin{array}[t]{l}
    \Lagree(\tau,\tau',\rho,
     (\freshLocs(\sigma,\tau)\union\written(\sigma,\tau))\setminus\{\lloc\}) \;\mbox{ and} \\
    \rho(\freshLocs(\sigma,\tau))\subseteq \freshLocs(\sigma',\tau')
  \end{array} }
\]
\end{restatable}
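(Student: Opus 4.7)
The plan is to prove the lemma by induction on the length of the trace reaching $\tau$ on the left, establishing a stronger invariant that includes (i) matching control state on the two sides, (ii) a growing refperm $\rho_k\supseteq\pi$ relating the intermediate states, and (iii) local agreement modulo the collective boundary at the current step. Divergence equivalence then follows because both traces proceed in lockstep through the same command structure, so if one can always progress, so can the other.

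First I would set up the co-execution. Safety (from validity) rules out $\Fault$ on both sides, and by inspecting the transition rules of Fig.~\ref{fig:transSel} together with quasi-determinacy of $\phi$ (Def.~\ref{def:preinterp}), I can show that at each step the left and right configurations carry the same active command and the same environment $\mu$. This means the same transition rule fires on each side. For primitive commands, only reads of $\Active(B)$ matter, and these are forced to be contained in $\rlocs(\sigma,\eff)\setminus\{\lloc\}$ by R-safe together with the inductive invariant. For a context call to $m$, I appeal directly to Def.~\ref{def:ctxinterp}(d) applied to $\phi(m)$, which provides both empty/nonempty alignment (giving divergence equivalence for diverging callees) and the extension of the refperm across the call, together with the fresh-location containment.

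Next, the heart of the argument is the inductive step for non-call transitions. Here I would apply the r-respect clause of Encap (Def.~\ref{def:valid}) to the left step, instantiating its antecedent (\ref{eq:rrespectAnte}) using the inductive hypothesis. The antecedent requires $\Lagree$ on the reads restricted outside $\rlocs(\tau,\delta^\oplus)$, and this is precisely what the invariant maintains; it also requires $\agree(\tau',\upsilon',\delta)$ on the corresponding right step, which I obtain by applying w-respect to the right step (noting that w-respect is a purely left-state condition on each side and holds by validity applied to $\sigma'$, since $\sigma'\models P$ too). The conclusion (\ref{eq:rrespect}) then gives the extension $\rho'\supseteq\rho_k$ preserving agreement on the newly written and freshly allocated locations, modulo $\delta^\oplus$. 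Composing this with the invariant, and using monotonicity of $\Lagree$ in the refperm (\ref{eq:LagreeMono}), yields the next step of the invariant.

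The main obstacle will be reconciling the boundary-excluded form of step-wise r-respect with the lemma's conclusion, which excludes only $\{\lloc\}$. My plan for this is to observe that, along the entire trace, the left execution never writes into $\rlocs(\upsilon,\bnd(N))$ for any $N\neq \topm$ with $N\in(\Phi,\mu)$ unless the active command is a context call whose module imports $N$; this is exactly w-respect. For context-call steps, Def.~\ref{def:ctxinterp}(b)-(c) on $\phi(m)$ ensure the writes stay within the called module's frame and boundary monotonicity holds. Consequently $\written(\sigma,\tau)$ and $\freshLocs(\sigma,\tau)$ stay disjoint from the collective boundary locations we lack agreement on, so the final agreement set $(\freshLocs(\sigma,\tau)\union\written(\sigma,\tau))\setminus\{\lloc\}$ is covered. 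The fresh-reference containment $\rho(\freshLocs(\sigma,\tau))\subseteq\freshLocs(\sigma',\tau')$ follows by accumulating the per-step fresh extensions from r-respect and from Def.~\ref{def:ctxinterp}(d). Bookkeeping for the changing $\mu$ (as $\keyw{let}$-bodies are entered and exited) and the changing $\topm$ (as $\Endcall$ markers pop the stack) is tedious but mechanical; the invariants match up because both sides traverse the same stack shape in lockstep.
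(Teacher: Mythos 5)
Your overall strategy --- induction on the number of steps maintaining lockstep control state and environments, a growing refperm, r-respect plus w-respect for non-call steps, and Def.~\ref{def:ctxinterp}(d) for context calls --- is exactly the paper's. However, the inductive invariant as you state it is too weak in two respects, and both would break the induction.

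First, your invariant is one-directional. The conclusion of r-respect covers only the locations \emph{written or freshly allocated} by the current step; to carry agreement forward on the locations of $\rlocs(\sigma,\eff)$ that the step does \emph{not} write, you must also know that the corresponding right-hand step did not write them (or wrote agreeing values), and that does not follow from left-to-right agreement alone, since $\Lagree$ is left-skewed. The paper's induction hypothesis therefore carries the reverse agreement $\Lagree(\tau',\tau,\rho^{-1},(\freshLocs(\sigma',\tau')\union\rlocs(\sigma',\eff)\union\written(\sigma',\tau'))\setminus\{\lloc\})$ and the reverse fresh-containment $\rho^{-1}(\freshLocs(\sigma',\tau'))\subseteq\freshLocs(\sigma,\tau)$ alongside the forward versions; these are what feed Lemma~\ref{lem:selfframing-agreement2} (preservation of agreement) and Lemma~\ref{lem:fresh-sym} (balanced symmetry), which together show that a location unchanged on the left is matched on the right. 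You need the symmetric four-part invariant, obtained by instantiating r-respect (and condition (d)) a second time with the roles of the two runs swapped.

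Second, your invariant excludes the collective boundary, but the lemma's conclusion excludes only $\{\lloc\}$, and your proposed repair --- that $\written(\sigma,\tau)$ and $\freshLocs(\sigma,\tau)$ stay disjoint from the boundary --- is false precisely at context-call steps: a callee legitimately writes inside its own module's boundary, so $\written(\sigma,\tau)$ does intersect $\rlocs(\tau,\delta)$ in general. The paper's invariant subtracts only $\{\lloc\}$; the modulo-boundary form is derived from it just where it is needed as the antecedent of r-respect, and for non-call steps the boundary locations add nothing to the proof obligation because w-respect on \emph{both} steps shows they are neither written nor fresh there. If you keep the weaker modulo-boundary invariant, the agreement on boundary locations established by Def.~\ref{def:ctxinterp}(d) at call steps is discarded and cannot be recovered for the final conclusion.
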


\section{Unary logic}\label{sec:unaryLog}

Correctness judgments of the unary logic play a crucial role in the relational logic.
They are premises in relational rules such as local equivalence.
Framing and encapsulation are handled at the unary level,
separate from the concerns of alignment and relation formulas.

The unary proof rules use two subsidiary judgments, for subeffects and framing of formulas.
These can be presented by inference rules (as shown in \RLI).
In this article we present them semantically, in Section~\ref{sec:framing},
as the semantics is amenable to direct checking by SMT solver.
Informal descriptions are given, but for the detailed definitions in Section~\ref{sec:framing}
the reader needs to be familiar with the definitions in Sects.~\ref{sec:states} and~\ref{sec:effect}.
Aside from that, Section~\ref{sec:unaryLog} can be read without being familiar with Section~\ref{sec:unarySem}.

\subsection{Framing and subeffects}\label{sec:framing}

The \dt{subeffect judgment}, written \ghostbox{$P\models \eff \leq \effe$},
says that in states satisfying $P$, the readable or writable locations designated by $\eff$
are contained in those designated by $\effe$.
It is defined as follows:
\begin{equation}\label{eq:subeffect}
P\models \eff \leq \effe
\mbox{ iff }
\rlocs(\sigma,\eff)\subseteq\rlocs(\sigma,\eta)
\mbox{ and }
\wlocs(\sigma,\eff)\subseteq\wlocs(\sigma,\eta) \mbox{ for all $\sigma$ with $\sigma \models P$}
\end{equation}

The \dt{framing judgment} for formulas, written \ghostbox{$P \models \fra{\effe}{Q}$},
can loosely be understood to say the read effects in $\effe$ cover the footprint of $Q$.
It is used in the frame rule and also second order frame rule where we need
framing of the module invariant by the dynamic boundary.
To be precise, the judgment says of states $\sigma$ and $\tau$ that if $\sigma$ satisfies $P\land Q$
and $\tau$ agrees with $\sigma$ on the contents of locations designated by the read effects of $\effe$,
then $\tau$ satisfies $Q$.  Here $\effe$ is interpreted in state $\sigma$, which only matters if its
effect expressions mention mutable variables.
The judgment is defined as follows:
\begin{equation}\label{eq:frmAgree}
P \models \fra{\effe}{Q} \mbox { iff for all }\sigma, \tau,
\mbox { if } \agree(\sigma, \tau, \effe)
\mbox{ and }
\sigma \models P \land Q \mbox{ then }
\tau \models Q
\end{equation}
For example, we have
$x\in r \models \fra{ \rd{x},\rd{r\Img f} }{x.f=0}$. 
The $\ftpt$ function,  defined in Figure~\ref{fig:foot}, provides framing for atomic formulas.
The basic lemmas about $\ftpt$ are that
$\models \fra{\ftpt(P)}{P}$, for atomic $P$, and
\begin{equation}\label{eq:footprintAgreement}
\agree(\sigma, \sigma', \pi,\ftpt(F)) \mbox{ implies }
\rprel{\sigma(F)}{\sigma'(F)}
\end{equation}
The framing judgment is used, in the \rn{Frame} rule, in combination
with a separator formula (Figure~\ref{fig:sepdef}).
A key property of separators
is that a formula obtained as $\ind{\effe}{\eff}$
holds in  $\sigma$ iff $\rlocs(\sigma,\effe)\intersect \wlocs(\sigma,\eff)=\emptyset$.
From this it follows that
\begin{equation}\label{eq:sepagree} 
\sigma \allowTo \tau \models \eff
\mbox{ and }
\sigma\models \ind{\effe}{\eff}
\mbox{ implies }
\agree(\sigma, \tau, \effe)
\end{equation}

Separator formulas are also used in the notion of immunity, which amounts to framing for frame conditions.
Immunity is only needed for the sequence and loop rules, which we relegate to the appendix
as there is no interesting change from \RLI.
Framing and immunity are about preserving the value of an expression or formula from one control point to a later one.
For preservation of agreements, framed reads (Def.~\ref{def:framedreadsDyn}) are crucial;
e.g., in proving the lockstep alignment Lemma~\ref{lem:rloceq}.

\subsection{Proof rules}

Selected proof rules are in Figure~\ref{fig:proofrulesU}.
They are to be instantiated only with wf premises and conclusions.
In the rest of the section we comment briefly about some rules
and derive the modular linking rule.
Then Section~\ref{sec:encap} discusses how the rules work together to enforce encapsulation.

\begin{figure}[t]
\begin{footnotesize}
\begin{mathpar} 
\ruleConseq

\inferrule*[left=Frame]
{ \Phi\HPflowtr{}{M}{P}{C}{Q}{\eff} \\
  P \models \fra{\effe}{R} \\
  P\land R \imp \ind{\effe}{\eff}
}{
\Phi\HPflowtr{}{M}{P\land R}{C}{Q\land R}{\eff}
}

\ruleSOF

\ruleCtxIntroInOne

\ruleCtxIntro

\ruleCall

\inferrule*[left=FieldUpd]
           {}
{ \HPflowtr{}{\emptymod}{ x \neq \NULL }{x.f := y}{x.f = y}{\wri{x.f},\rd{x},\rd{y}}}

\ruleLink

\inferrule*[left=Alloc]
{ \fields(K) = \ol{f}:\ol{T} \\ \mbox{spec-only}(r) }
{ \HPflowtr{}{\emptymod}
             {r=\lloc}
             { x:=\new{K}}
             { x\notin r \land \lloc = r\union\{x\} \land x.\ol{f}=\Default{\ol{T}}}
             {\wri{x}, \rw{\lloc}}}

\ruleIf

\end{mathpar}
\end{footnotesize}
\vspace*{-3ex}
\caption{Selected unary proof rules. For others see appendix~Figs.~\ref{fig:proofrules}
and~\ref{fig:proofrulesStruct}.
}
\label{fig:proofrulesU}
\end{figure}

The proof rules for assignment, like \rn{FieldUpd} and \rn{Alloc},
are ``small axioms''~\cite{OHearnRY01} that have empty context, are in the default module,
and have precise frame conditions.
The \rn{Conseq} rule
can be used to subsume a frame condition like $\wri{\sing{x}\Img f}$ by a more general one like $\wri{r\Img f}$, given precondition $x\in r$
and using subeffect judgment $x\in r \models \wri{\sing{x}\Img f} \leq \wri{r\Img f}$.
Rule \rn{Alloc} can be used with the \rn{Frame} rule
to express freshness in several ways.\footnote{Shown in detail in \RLIII~(Section~7.1).}
These and the method call rule have the minimum needed hypothesis context.
Extending the context is done by rules discussed in Section~\ref{sec:encap}.

The gist of the second order frame rule, \rn{SOF}, is to conjoin a formula not only to the spec in the conclusion, like rule \rn{Frame}, but also conjoin it to the specs in the hypothesis context.
The rule distils a property of program semantics; its practical role
is to derive the modular linking rule.

In rule \rn{SOF}, the conditions $N\in \Theta$ and $N\neq M$ ensure that the command $C$
respects the encapsulation of $\bnd(N)$,
in accord with the semantic condition Encap of Def.~\ref{def:valid}.
Together with the framing judgment  $\models \fra{\bnd(N)}{I}$,
this ensures that $C$ does not falsify $I$.
The condition  \dt{$C$ binds no $N$-method} means $C$ contains no let-binding of a method $m$ with $\mdl(m)=N$.
This and the condition $\all{m\in\Phi}{\mdl(m)\not\imports N}$
ensure that all of $N$'s method specs are in $\Theta$ and have the invariant added simultaneously.
Such conditions are the price we pay for not cluttering the logic with explicit syntax and judgments for a module calculus.  Rule \rn{Link} has analogous conditions.

\begin{figure}[t!]
\begin{footnotesize}
\(
\inferrule*[right=Conseq]{
\inferrule*[right=Link]{
        \inferrule*[right=SOF]{
		\Phi\HPflowtr{}{{\emptymod}}{P}{C}{Q}{\eff}
	}{
            \Phi \conjInv I  \proves_{\emptymod}  C : (\flowty{P}{Q}{\eff}) \conjInv I
	}
	\Phi\conjInv I \proves_M B : \Phi(m)\conjInv I
	}{
	\proves_{\emptymod}
	\letcom{m}{B}{C} : (\flowty{P}{Q}{\eff}) \conjInv I
	}
	}{
	\proves_{\emptymod}
	\letcom{m}{B}{C} : \flowty{P}{Q}{\eff}
	}
\)
\end{footnotesize}
\vspace{-1ex}
\caption{Derivation of \rn{MLink}, with side conditions
$\mdl(m)=M$,
$\models \fra{ \bnd(M) }{ I }$, 
and $P \imp I$. 
}
\label{fig:unaryMismatch}
\end{figure}

In rule \rn{Link}, $\letcom{\ol{m}}{\ol{B}}{C}$ means the simultaneous linking of $m_i$ with $B_i$ for $i$ in some range.
This version of \rn{Link} supports simultaneous linking of multiple methods that may be defined in different modules.
Note that $\Theta$ is in the hypotheses for $B_i$ because some methods in $\Theta$ may call others in $\Theta$, and for recursion.
Condition $\all{N\in\Phi,L\in\Theta}{N\not\imports L}$ precludes dependency of the ambient modules on the ones being linked.
Condition $\all{N,L}{ N\in\Theta \land N\irimports L \imp L\in(\Phi,\Theta)}$ expresses import closure,
which is needed to ensure that all relevant boundaries are considered in the Encap condition of the premises.

Recall the modular linking rule (\ref{eq:mismatch}) sketched in Section~\ref{sec:modrel}.
It can now be made precise as follows.
\[
\inferrule*[left=MLink]{
  \Phi\HPflowtr{}{{\emptymod}}{P}{C}{Q}{\eff} \\
  \Phi\conjInv I \proves_M B : \Phi(m)\conjInv I  \\
  \mdl(m)=M \\
  \models\fra{\bnd(M)}{I} \\
  P\imp I
}{
  \proves_{\emptymod} \letcom{m}{B}{C} : \flowty{P}{Q}{\eff}
}
\]
In Section~\ref{sec:modrel} we mention requirements for soundness of (\ref{eq:mismatch}),
in vague terms which can now be made precise.
Requirement (E1) is to delimit some internal locations, which is expressed as a dynamic boundary $\bnd(M)$.
Requirement (E2) is that the module invariant $I$ depends only on encapsulated locations,
which we express by a framing judgment $\models\fra{\bnd(M)}{I}$.
Requirement (E3) says the client stays outside boundaries, a part of the meaning of
the correctness judgment for $C$; more on this in Section~\ref{sec:encap}.
Finally, (E4) requires that the invariant holds initially;
we simply require that $I$ follows from the main program's precondition ($P\imp I$).
Rule \rn{MLink} is derived in Figure~\ref{fig:unaryMismatch}.
The side conditions $\models \fra{ \bnd(M) }{ I }$, 
and $P \imp I$ are the responsibility of the module developer.
The idea is that precondition $P$ expresses initial conditions for the linked program,
e.g., that globals have default values (null for class types, $\emptyset$ for $\Region$).
In our examples, the invariant quantifies over elements of the global variable $pool$ and holds when $pool$ is empty.
For a more sophisticated language, we would have module initialization code to establish the module invariant.

\begin{restatable}[soundness of unary logic]{theorem}{unarysoundness}
\label{thm:unarysoundness}
All the unary proof rules are sound
(Figure~\ref{fig:proofrulesU} and appendix
Figs.~\ref{fig:proofrules} and~ \ref{fig:proofrulesStruct}).
\end{restatable}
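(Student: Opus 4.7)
The plan is to prove soundness rule by rule, in each case unpacking Definition~\ref{def:valid} of valid judgment and verifying the five conditions Safety, Post, Write, R-safe, and Encap for the conclusion from the corresponding facts about the premises. Nearly every argument proceeds by induction on the trace of $\trans{\phi}$ where $\phi$ is a $\Phi$-model of the conclusion's hypothesis context. The rules naturally stratify by difficulty, and I would organize the write-up accordingly.

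For atomic-command rules (\textsc{FieldUpd}, \textsc{Alloc}, and the others in the appendix), soundness is a direct calculation from the single applicable transition rule in Fig.~\ref{fig:trans}; because these are stated with empty context in the default module $\emptymod$, the Encap and R-safe conditions are essentially trivial and boundary monotonicity holds because no boundaries are in play. The \textsc{Call} rule is immediate from Definition~\ref{def:ctxinterp}(a)--(d). The structural rules \textsc{Conseq}, \textsc{If}, \textsc{Var}, and the appendix loop/sequence rules go through by routine trace induction; the subeffect hypothesis of \textsc{Conseq} transfers Write and Encap across widened effects, and immunity handles loops. The \textsc{Frame} rule is the first slightly subtle case: we combine $P\models\fra{\effe}{R}$ with the separator hypothesis $P\land R\imp\ind{\effe}{\eff}$, and invoke equation~(\ref{eq:sepagree}) to show $\agree(\sigma,\tau,\effe)$ between the initial state and every reachable state, which by (\ref{eq:frmAgree}) preserves $R$; the Encap condition is inherited unchanged because the frame condition is unchanged. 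The context-introduction rules \textsc{CtxIntroIn1} and \textsc{CtxIntro} work by restricting a model of the larger context to a model of the smaller one; the side conditions on independence from $\bnd(\mdl(m))$ in \textsc{CtxIntro} are precisely what is needed to show that enlarging the context still lets each step r-respect the collective boundary (\ref{eq:collectiveB}).

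The \textsc{SOF} rule requires more care. Given a $(\Theta\conjInv I)$-model $\phi$, restrict to obtain a $\Theta$-model and apply the premise; this yields Safety, Write, R-safe, Post (without $I$), and Encap for $C$. To propagate $I$ through every reachable step, split the steps into two kinds. A context call to $m$ with $\mdl(m)=N$ preserves $I$ automatically because $\phi$ witnesses the strengthened spec $\Phi(m)\conjInv I$. Every other step w-respects $\bnd(N)$ by the premise's Encap (here the side conditions $N\in\Theta$, $N\ne M$, $\forall m\in\Phi.\mdl(m)\not\imports N$, and ``$C$ binds no $N$-method'' are what guarantee $N$ really enters the Encap analysis), and then $\models\fra{\bnd(N)}{I}$ combined with (\ref{eq:frmAgree}) carries $I$ across the step. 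The conclusion's Encap condition is unchanged because the hypothesis context's boundaries are unchanged by $\conjInv I$.

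The main obstacle will be \textsc{Link}. From a $\Phi$-model $\phi$ and premise judgments for the $B_i$, I need to construct a $(\Phi,\Theta)$-model $\hat\phi$ extending $\phi$ with which to invoke the premise for $C$. The natural definition sets $\hat\phi(m_i)(\sigma)$ to the set of terminal outcomes of $\configm{B_i}{\sigma}{\_}$ under $\hat\phi$ itself. This self-reference demands a fixed-point construction (to accommodate mutual recursion among the $m_i$), and each of the four clauses of Definition~\ref{def:ctxinterp} must be checked: fault/state quasi-determinacy of $\hat\phi$ relies on Lemma~\ref{lem:readeff} applied to the body judgment; write-respect and the postcondition follow from Write and Post of the body judgment; boundary monotonicity (clause (c)) follows from the boundary-monotonicity part of the body's Encap; and the dependency clause (d) is exactly Lemma~\ref{lem:readeff}. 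Finally one must show that a trace of $\letcom{\ol m}{\ol B}{C}$ under $\phi$, after collapsing $\Endcall$/$\Endlet$ steps, corresponds to a trace of $C$ under $\hat\phi$, so that Safety, Post, Write, R-safe, and Encap transfer across this correspondence; the import-closure and non-import side conditions of \textsc{Link} are essential so that the collective boundary computed by (\ref{eq:collectiveB}) on the outer trace agrees with the one assumed when proving Encap of the inner premise. I expect this correspondence between inner and outer Encap, together with the fixed-point justification of $\hat\phi$, to be the most delicate part of the entire theorem; the routine cases I would defer to the appendix.
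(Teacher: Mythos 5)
Your overall architecture matches the paper's exactly: rule-by-rule verification of Safety, Post, Write, R-safe, and Encap, trace induction for the structural rules, a model construction for \rn{SOF}, and a denotation-based extension of the context model for \rn{Link} whose clause~(d) is discharged by Lemma~\ref{lem:readeff} (with the fixpoint only needed for recursion, which the paper in fact sets aside). However, your \rn{SOF} step would fail as stated. You cannot ``restrict'' a $(\Theta\conjInv I)$-model to a $\Theta$-model: by Def.~\ref{def:ctxinterp}(a), a $(\Theta\conjInv I)$-model must return $\{\Fault\}$ on states satisfying $R\land\neg I$, whereas a $\Theta$-model must not fault there. The paper instead constructs $\psi^-$ by a three-way case split --- $\{\Fault\}$ when $R$ fails, the \emph{empty} outcome set when $R\land\neg I$ holds, and $\psi^+(m)(\tau)$ when $R\land I$ holds --- and only then invokes the premise. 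Your subsequent induction (the invariant is maintained at every reachable configuration, via w-respect plus $\models\fra{\bnd(N)}{I}$ for non-calls and the strengthened spec for context calls) is the right one and is what ultimately makes the $R\land\neg I$ states unreachable at call sites, but the model you feed to the premise has to exist first.

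The second gap is in \rn{CtxIntro}. The separator side conditions are necessary, but r-respect for the conclusion is not obtained by ``restricting the model and applying the premise'': the conclusion's collective boundary $\delta$ strictly contains the premise's $\dot\delta$ when $\mdl(m)$ is genuinely new, so the agreement hypothesis available for the conclusion --- $\Lagree$ on $\rlocs(\sigma,\eff)\setminus\rlocs(\sigma,\delta^\oplus)$ --- is \emph{weaker} than the antecedent the premise's r-respect condition demands. The paper bridges this with a separate semantic claim that each non-call atomic command has minimal dependency sets $X$ sufficient to determine its written and fresh locations, argues that $X\subseteq\rlocs(\sigma,\eff)\setminus\rlocs(\sigma,\dot\delta^\oplus)$ (else the premise itself would be violated), and then uses the side condition $\ind{\bnd(\mdl(m))}{\rTow(\eff)}$ to conclude $X$ also avoids the newly added boundary. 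Without that minimal-dependency claim the case does not close. Apart from these two points (and the analogous care needed in \rn{CtxIntroCall}), your plan, including the identification of \rn{Link} and its $m$-truncated trace correspondence as the hardest part, is the paper's proof.
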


\subsection{How the proof rules ensure encapsulation}\label{sec:encap}

The proof rules for commands must enforce requirement (E3), i.e., a command respects the
boundaries of modules in context other than the current module.
In part this is done by what we call \dt{context introduction} rules.
One may expect a weakening rule that allows additional specs to be added to the context,
and indeed there is such a rule (\rn{CtxIntroIn1}) for the case that the method's module is already
in context. If the method's module is not already in context, adding its spec actually
strengthens
the property expressed by the judgment, namely respect of the added module's boundary.
For this we have a rule \rn{CtxIntro} that extends the context by adding a spec for method $m$
and has side conditions (using separator formulas generated by $\ind{}{}$)
that ensure both the read and write effects of atomic command $A$ are separate
from the boundary of $m$'s module.
Two other variations are needed to handle method calls and adding a spec for the current module;
these are relegated to the appendix.
(A more elegant treatment may be possible using an explicit calculus of modules and their
correctness, but that would have its own intricacies.)

As an example, consider this code which acts on variables \whyg{s: Stack} and \whyg{c,d: Cell}.
\begin{lstlisting}
d.val:=0; push(s,d); d:=new Cell; d.val:=1; push(s,d)
\end{lstlisting}
Using variable $r:\Region$ and idiomatic precondition
 $d\in r \land \disj{r}{(pool\union pool\Img rep)}$, this code has frame condition
$\rw{d,r,\lloc, r\Img val}$.
(Here we use the spec idiom depicted in Figure~\ref{fig:pool-and-rep}.)
The small axiom for the store command $d.val:=0$ says it reads $d$ and writes $d.val$.
To add the Stack module to this command's context, rule \rn{CtxIntro} requires
the precondition to imply a separator 
which when simplified is
$\disj{ \sing{d} }{ pool } \land
\disj{ \sing{d} }{ pool\Img rep }$.
This says $d$ is neither in $pool$ nor in any $rep$ unless $d$ is null.

There is also a rule to change the current module from the default module used in,
e.g., rules \rn{Call}, \rn{FieldUpd}, and \rn{Alloc}.
In a proof these and the context introduction rules are used at the ``leaves'' of the proof, i.e., for atomic commands, in order to introduce the intended modules.  This organization is the same as
used previously in \RLII.
However, here the notion of encapsulation is stronger. To enforce 
that reads do not transgress boundaries (r-respect in Def.~\ref{def:valid}),
the proof rules for \rn{If} and \rn{While} also have side conditions 
to ensure the conditional expressions are separate from boundaries.
For test expression $E$, the condition is
$\ind{\unioneff{N\in\Phi,N\neq M}{\bnd(N)}}{\rTow(\ftpt(E)) }$.
This separator formula simplifies to true or false depending on whether
any variable in $E$ occurs in any of the boundaries of modules $N$ in scope other than the current module $M$.
Although the details are different from \RLII, the general idea is the same
so we relegate most of these rules to the appendix (see Figure~\ref{fig:proofrules} and Remark~\ref{rem:ctxIntro}).
Relevant examples can be found in Section~8 of \RLII.

\section{Biprograms: semantics and correctness}\label{sec:biprog}

This section defines (in Section~\ref{sec:rel-premodel})
the relational analog of the pre-models used in unary program semantics of Section~\ref{sec:progSem}.
This is used (in Section~\ref{sec:bitrans}) to define the transition semantics of biprograms.
Some details are intricate, as needed to ensure quasi-determinacy and
to ensure that a biprogram execution faithfully represents a pair of unary executions.
On this basis, the semantics of relational judgments is defined
and shown to entail the expected relational property of unary executions
(Section~\ref{sec:rel-corr}).
The first step is to define the semantics of relation formulas
(Section~\ref{sec:relform}).

\subsection{Relation formulas}\label{sec:relform}

\begin{figure}[t]
\begin{footnotesize}
\(
\begin{array}{l@{\hspace*{1ex}}l@{\hspace*{1ex}}l}
\sigma|\sigma'\models_\pi\leftF{P}
  & \mbox{iff} & \sigma\models P \\

\sigma|\sigma'\models_\pi F\eqbi F'
  & \mbox{iff} & \rprel{\sigma(F)}{\sigma'(F')} \\

\sigma|\sigma'\models_\pi \Agr G\Img f
  & \mbox{iff} & \agree(\sigma,\sigma',\pi,\rd{G\Img f}) \mbox{ and }
\agree(\sigma',\sigma,\pi^{-1},\rd{G\Img  f})\\

\sigma|\sigma'\models_\pi \Agr x
  & \mbox{iff} & \rprel{\sigma(x)}{\sigma'(x)} \\

\sigma|\sigma'\models_\pi \later \P
  & \mbox{iff} & \sigma|\sigma'\models_\rho \P \mbox{ for some } \rho\supseteq\pi \\

\sigma|\sigma'\models_\pi  \P\imp\Q
  & \mbox{iff} & \sigma|\sigma'\models_\pi  \P \mbox{ implies } \sigma|\sigma'\models_\pi \Q
\\[1ex]

\sigma|\sigma'\models \P &\mbox{iff}& \sigma|\sigma'\models_\pi \P \mbox{ for all $\pi$}
\\ 
\models\P & \mbox{iff} & \sigma|\sigma'\models \P \mbox{ for all $\sigma,\sigma'$}

\end{array}\)
\end{footnotesize}
\vspace{-2ex}
\caption{Relation formula semantics \ghostbox{$\sigma|\sigma'\models^{\Gamma|\Gamma'}_\pi\P$} (selected).
See appendix Figure~\ref{fig:relFmlaSemA} for other cases.
}
\label{fig:relFmlaSem}
\end{figure}

Refperms and agreement, the basis for semantics of read effects,
are also used for semantics of agreement formulas.
For relation formulas, satisfaction $\sigma|\sigma'\models_\pi \P$ says state $\sigma$ relates to $\sigma'$ according to $\P$ and refperm $\pi$ (see Figure~\ref{fig:relFmlaSem}).
The propositional connectives have classical semantics.
Formula $\P$ is called \dt{valid} if $\models\P$.

Recall that semantic agreement ($\Lagree,\agree$) is skewed in the sense that region expressions are evaluated in the left state, as noted following (\ref{eq:LagreeSym}).
The semantics of $\Agr G\Img f$ uses agreement via refperm $\pi$
and agreement via $\pi^{-1}$ for the swapped pair of states.
As a result, $\sigma|\sigma'\models_\pi\Agr G\Img f$
implies not only $\sigma(G)\subseteq dom(\pi)$ but also
$\sigma'(G)\subseteq rng(\pi)$.
However, $\Agr G\Img f$ does not imply $G\eqbi G$ in general.
So the form $G\eqbi G\land \Agr G\Img f$ is often used,
e.g., formula (\ref{eq:tabu:postR});
in particular it appears in the agreements from a read framed effect.

The formulas $\Agr G\Img f$ and $G\Img f\eqbi G\Img f$
have different meaning and in general are incomparable.
In case $f:\INT$, the region $G\Img f$ is empty
in which case $\Agr G\Img f$ implies $G\Img f\eqbi G\Img f$ trivially.
Using a diagram like in Figure~\ref{fig:box-and-arrows},
Figure~\ref{fig:eg-agree} shows two states and a refperm such that
$\Agr \sing{x}\Img f$ holds
(noting that $(q,q')\in\pi$ and $(r,r')\in\pi$).
But  $\sing{x}\Img f\eqbi \sing{x}\Img f$ does not;
we have  $\sigma(\sing{x}\Img f)=\{q\}$ and
$\sigma'(\sing{x}\Img f)=\{r'\}$ but $(q,r')\notin \pi$.
Also $\sing{x}\eqbi\sing{x}$ is false because $(o,p')\notin\pi$.

\tikzset{
  pics/genobj/.style args={#1/#2}{
    code={
      \node[draw,rounded corners,minimum height=1cm,minimum width=1.6cm] (-main) {};
      \node[above of=-main,xshift=-0.75cm,yshift=-0.3cm] (-name) {#1};
      \node[draw,minimum height=0.55cm,minimum width=0.6cm,xshift=0.3cm] (-f) {#2};
      \node[left of=-f,xshift=0.4cm] (fTxt) {$f$};
    }
  }
}

\tikzset{
  pics/varblock/.style args={#1/#2}{
    code={
      \node[draw,rectangle,minimum width=0.5cm,minimum height=0.6cm] (-value) {#2};
      \node[above of=-value,yshift=-0.5cm,xshift=-0.25cm] (-name) {#1};
    }
  }
}

\begin{figure}[t]
  \centering
  \begin{tikzpicture}[scale=0.75,transform shape,>=stealth]
    \pic (left-x) at (0,0) {varblock={$x$/$o$}};
    \pic (left-o) at (2,0) {genobj={$o$/$q$}};
    \pic (left-p) at (4.5,0) {genobj={$p$/$r$}};
    \pic (left-q) at (7,0) {genobj={$q$/$\NULL$}};
    \pic (left-r) at (9.5,0) {genobj={$r$/$\NULL$}};

    \draw[->] (left-x-value.east) to [bend left=20] ([yshift=0.3cm]left-o-main.west);
    \draw[->] (left-o-f.east) to [out=50,in=140] (left-q-main.north);
    \draw[->] (left-p-f.south) to [out=-30,in=-150] (left-r-main.south);

    \pic (right-x) at (0,-3) {varblock={$x$/$p'$}};
    \pic (right-o) at (2,-3) {genobj={$o'$/$q'$}};
    \pic (right-p) at (4.5,-3) {genobj={$p'$/$r'$}};
    \pic (right-q) at (7,-3) {genobj={$q'$/$\NULL$}};
    \pic (right-r) at (9.5,-3) {genobj={$r'$/$\NULL$}};

    \draw[->] (right-x-value.north) to [bend left=40] (right-p-main.north);
    \draw[->] (right-o-f.south) to [bend right=30] (right-q-main.south);
    \draw[->] (right-p-f.north) to [bend left=35] (right-r-main.north);

    \draw[<->,dashed] ([yshift=-0.2cm]left-o-main.south) -- ([yshift=0.2cm]right-o-main.north);
    \draw[<->,dashed] ([yshift=-0.2cm]left-p-main.south) -- ([yshift=0.2cm]right-p-main.north);
    \draw[<->,dashed] ([yshift=-0.2cm]left-q-main.south) -- ([yshift=0.2cm]right-q-main.north);
    \draw[<->,dashed] ([yshift=-0.2cm]left-r-main.south) -- ([yshift=0.2cm]right-r-main.north);

    \node[below of=left-r-main,yshift=-.4cm,xshift=0.8cm] (pi-r) {$\pi(r) = r'$};
  \end{tikzpicture}

  \caption{Refperm $\pi$ and states $\sigma,\sigma'$
    that satisfy $\Agr\sing{x}\Img f$ but neither $\sing{x}\eqbi \sing{x}$ nor
           $\sing{x}\Img f \eqbi\sing{x}\Img f$.}
  \label{fig:eg-agree}
\end{figure}

Here are some valid schemas:
$\P\imp\later\P$,
$\later\later\P\imp\later\P$,
and
$\later(\P\land\Q) \imp \later\P\land\later\Q$.
Another validity is $(\lloc\eqbi\lloc) \land \later\P \imp \P$,
in which $\lloc\eqbi\lloc$ says the refperm is a total bijection on allocated references.
The strong condition $\lloc\eqbi\lloc$ is not local, and is not a useful requirement for most purposes.

Validity of $\P\imp\always\P$ is equivalent to $\P$ being \dt{refperm monotonic},
i.e., not falsified by extension of the refperm.
Agreement formulas are refperm monotonic, as a consequence of (\ref{eq:LagreeMono}).
A key fact is:
\begin{equation}\label{eq:mono-distrib}
\mbox{If } \Q\imp\always\Q
\mbox{ is valid then so is }
\later\P\land\Q \imp \later(\P\land\Q) 
\end{equation}
Validity of $\later\P\imp\P$ expresses that $\P$ is \dt{refperm-independent},
i.e., $\sigma|\sigma'\models_\pi \P$ iff $\sigma|\sigma'\models_\rho \P$,
for all $\sigma,\sigma',\pi,\rho$.
If $\P$ contains no agreement formula then it is refperm-independent
(even if $\later$ occurs in $\P$).
For such formulas the condition in (\ref{eq:mono-distrib}) can be strengthened:
\begin{equation}\label{eq:refp-ind-dist}
\mbox{If } \later\Q\imp\Q
\mbox{ is valid then so is } \later\P\land\Q \iff \later(\P\land\Q)
\end{equation}

Syntactic projection is weakening:
$\P\imp \leftF{P}\land \rightF{P'}$ where $P$ is
$\Left{\P}$ and $P'$ is $\Right{\P}$.
The implication is strict, in general, because projection discards agreements (Figure~\ref{fig:synProjFmla}).
Syntactic projection is not $\imp$-monotonic: for boolean variable $x$, the formula
$x\eqbi x \land \rightF{x>0} \imp \leftF{x>0}$ is valid,
but $\Left{x\eqbi x \land \rightF{x>0}} \equiv true\land true$
and $\Left{\leftF{x>0}} \equiv x>0$.
The example also shows that agreements can have unary consequences.
As another example, this is valid:
$\later(x\eqbi x' \land x\eqbi y') \imp \rightF{x'=y'}$.
The antecedent holds if the refperm relates the value of $x$ to both the values of $x'$ and $y'$,
or can be extended to do so.  Neither is possible if the value of $x'$ is different from the value of $y'$.

The framing judgment generalizes the unary version (\ref{eq:frmAgree}).
\begin{definition}[\textbf{framing judgment}]\index{framing judgment}
\label{def:frmAgreeRel}
Let \ghostbox{$\P\models \fra{\effe|\effe'}{\Q}$}
iff for all $\pi, \sigma, \sigma', \tau, \tau'$,
if $\agree(\sigma, \tau, \effe)$,
$\agree(\sigma', \tau', \effe')$,
and $\sigma|\sigma' \models_\pi \P \land \Q$ then $\tau|\tau' \models_\pi \Q$.
\end{definition}
For example,
$G\eqbi G \models\fra{\eta|\eta}{\Agr G\Img f }$
where $\eta$ is $\ftpt(G),\rd{G\Img f}$ (Lemma~\ref{lem:frameEqbi}).
Apropos relations of the form $\R \eqdef G\eqbi G \land \Agr G\Img f$,
we have $\models\fra{\delta|\delta}{\R}$ where
$\delta$ is $\ftpt(G),\rd{G\Img f}$.
If $P\models \fra{\effe}{Q}$ then $\leftF{P}\models \fra{\effe|\emptyeff}{\leftF{Q}}$ (and same on the right).
Also, $\models\fra{\ftpt(F) | \ftpt(F')}{F\eqbi F'}$, which can be shown using the footprint agreement lemma (\ref{eq:footprintAgreement}).


The subeffect judgment
\ghostbox{$ \P\models (\eff|\eff')\leq(\effe|\effe')$}
is also a direct generalization of the unary version: the inclusions of
(\ref{eq:subeffect}) hold on both sides, for $\sigma,\sigma',\pi$ with  $\sigma|\sigma'\models_\pi\P$.

\begin{definition}[\textbf{substitution notation}]
If $\Gamma,x\scol T|\Gamma',x'\scol T'\proves \P$,
$\sigma\in\means{\Gamma}$, $v\in \means{T}\sigma$,
$\sigma'\in\means{\Gamma'}$, and $v'\in \means{T'}\sigma'$, we write
$\sigma|\sigma'\models^{\Gamma|\Gamma'} \subst{\P}{x|x'}{v|v'} $
to abbreviate
$\extend{\sigma}{x}{v}|\extend{\sigma'}{x'}{v'} \models^{\Gamma,x:T|\Gamma',x':T'} \P$.
\end{definition}


\subsection{Relational pre-models}\label{sec:rel-premodel}

A relational pre-model involves two unary pre-models (Def.~\ref{def:preinterp})
together with a function on state pairs
as appropriate for the denotation of a biprogram.
This function is subject to similar conditions as for unary pre-models,
and must also be compatible with its two unary pre-models.

\begin{definition}[\textbf{state pair iso} 
\ghostbox{$\RprelT{\pi\smallSplitSym \pi'}{}{}$}, 
\ghostbox{$\RprelTS{\pi\smallSplitSym \pi'}{}{}$}]
\label{def:state-iso-rel}
Building on Def.~\ref{def:state-iso},
we define isomorphism of state pairs modulo refperms:
\( \RprelT{\pi\smallSplitSym\pi'}{(\sigma|\sigma')}{(\tau|\tau')}
\mbox{ iff }
   \RprelT{\pi}{\sigma}{\tau} \mbox{ and }
   \RprelT{\pi'}{\sigma'}{\tau'}  \)
.
For relational outcome sets $S$ and $S'$, \emph{i.e.}, $S$ and $S'$ are in
$ \powerset((\means{\Gamma}\times\means{\Gamma'})\union\{\Fault\})$,
define $\RprelTS{\pi\smallSplitSym \pi'}{S}{S'}$ (read \dt{equivalence mod $\pi,\pi'$}) to mean that
(i) $\Fault\in S$ iff $\Fault\in S'$;
(ii) for all state pairs $(\sigma|\sigma')\in S$ and $(\tau|\tau')\in S'$
there are $\rho,\rho'$ with
$\rho\supseteq\pi$ and
$\rho'\supseteq\pi'$, such that
$\RprelT{\rho|\rho'}{(\sigma|\sigma')}{(\tau|\tau')}$; and
(iii) $S\setminus\{\Fault\} = \emptyset$ iff $S'\setminus\{\Fault\} = \emptyset$.
\end{definition}

\begin{definition}
\label{def:interpRel}
A \dt{relational pre-model} for $\Gamma|\Gamma'$ is a triple $\phi = (\phi_0,\phi_1,\phi_2)$
with $\dom(\phi_0)=\dom(\phi_1)=\dom(\phi_2)$, such that
$\phi_0$ (resp.\ $\phi_1$) is a unary pre-model for $\Gamma$ (resp.\ $\Gamma'$)
(Def.~\ref{def:preinterp}),
and for each $m$,
the \dt{bi-model} $\phi_2(m)$ is a function
\( \phi_2(m) \ : \
\means{\Gamma}\times\means{\Gamma'} \to \powerset (\means{\Gamma}\times\means{\Gamma'} \:\union\: \{\Fault\})
\)
such that
\begin{list}{}{}
\item[(\dt{fault determinacy})]
$\Fault \in \phi_2(m)(\sigma|\sigma')$ implies $\phi_2(m)(\sigma|\sigma')= \{\Fault\}$
\item[(\dt{state determinacy})]
$\RprelT{\pi|\pi'}{(\sigma|\sigma')}{(\tau|\tau')}$
implies
$\RprelTS{\pi|\pi'}{ \phi_2(m)(\sigma|\sigma') }{  \phi_2(m)(\tau|\tau') }$
\item[(\dt{divergence determinacy})]
$\RprelT{\pi|\pi'}{(\sigma|\sigma')}{(\tau|\tau')}$
implies that
$\phi_2(m)(\sigma|\sigma') = \emptyset$ iff $\phi_2(m)(\tau|\tau') = \emptyset$.
\end{list}
Moreover $\phi_0,\phi_1,\phi_2$ must be compatible in the following sense:
\begin{list}{}{}
\item[(\dt{unary compatibility})]
$\tau|\tau' \in \phi_2(m)(\sigma|\sigma') \imp \tau\in\phi_0(m)(\sigma) \land
\tau'\in\phi_1(m)(\sigma')$
\item[(\dt{relational compatibility})]
$\tau\in\phi_0(m)(\sigma) \land \tau'\in\phi_1(m)(\sigma') \imp
\tau|\tau' \in \phi_2(m)(\sigma|\sigma') \lor \Fault \in \phi_2(m)(\sigma|\sigma')$
\item[(\dt{fault compatibility})]
$\Fault\in\phi_0(m)(\sigma) \lor \Fault\in\phi_1(m)(\sigma') \imp
\Fault \in \phi_2(m)(\sigma|\sigma')$
\end{list}
\end{definition}

We do not require $\Fault\in\phi_2(m)(\sigma|\sigma')$ to imply $\Fault\in\phi_0(m)(\sigma)$ or $\Fault\in\phi_1(m)(\sigma')$.
The bi-model denoted by a biprogram may fault due to
relational precondition, or alignment conditions, even though the underlying commands do not fault.

\begin{lemma}[empty outcome sets]\label{lem:emptyOutcomes}
\upshape
For any relational pre-model $\phi$,
$\phi_2(m)(\sigma|\sigma') = \emptyset $
implies that $\phi_0(m)(\sigma)=\emptyset$ or $\phi_1(m)(\sigma')=\emptyset$.
\end{lemma}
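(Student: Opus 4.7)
The plan is to prove the contrapositive: assuming both $\phi_0(m)(\sigma)\neq\emptyset$ and $\phi_1(m)(\sigma')\neq\emptyset$, I will deduce that $\phi_2(m)(\sigma|\sigma')\neq\emptyset$. The argument is a short case analysis that uses only the three compatibility conditions of Def.~\ref{def:interpRel}, together with the fact that a nonempty outcome set contains either $\Fault$ or a proper state (resp.\ state).

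First I would split $\phi_0(m)(\sigma)$ into two cases. If $\Fault\in\phi_0(m)(\sigma)$, then fault compatibility immediately gives $\Fault\in\phi_2(m)(\sigma|\sigma')$, so the bi-outcome set is nonempty. Otherwise $\phi_0(m)(\sigma)$ contains some proper state $\tau$. Symmetrically, split $\phi_1(m)(\sigma')$: if it contains $\Fault$, fault compatibility gives $\Fault\in\phi_2(m)(\sigma|\sigma')$; otherwise it contains some proper state $\tau'$.

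The only nontrivial case is when $\tau\in\phi_0(m)(\sigma)$ and $\tau'\in\phi_1(m)(\sigma')$ with neither side being $\Fault$. Here relational compatibility applies directly: it asserts that either $(\tau|\tau')\in\phi_2(m)(\sigma|\sigma')$ or $\Fault\in\phi_2(m)(\sigma|\sigma')$. In both subcases, $\phi_2(m)(\sigma|\sigma')$ has an element, contradicting the hypothesis that it is empty. This establishes the contrapositive and hence the lemma.

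I do not anticipate any real obstacle: the statement is a direct bookkeeping consequence of the compatibility axioms and the dichotomy between faulting and non-faulting outcomes. The only thing to be careful about is not conflating ``nonempty'' with ``contains a proper state'' (the presence of $\Fault$ alone suffices to make a set nonempty), and to apply fault compatibility in the two one-sided fault cases before invoking relational compatibility in the two-sided proper-outcome case. No appeal to determinacy or to the unary-pre-model conditions is needed.
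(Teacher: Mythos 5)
Your proof is correct and is essentially identical to the paper's: the paper also argues that if either unary outcome set contains $\Fault$ then fault compatibility puts $\Fault$ into $\phi_2(m)(\sigma|\sigma')$, and otherwise picks states $\tau,\tau'$ and applies relational compatibility to obtain $(\tau|\tau')$ or $\Fault$ in the bi-outcome set. No differences worth noting.
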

\begin{proof}
If either  $\phi_0(m)(\sigma)$ or $\phi_1(m)(\sigma')$ contains fault then so does
$\phi_2(m)(\sigma|\sigma')$, by fault compatibility;
and if both $\phi_0(m)(\sigma)$ and $\phi_1(m)(\sigma')$ contain states,
say $\tau\in\phi_0(m)(\sigma)$ and $\tau'\in\phi_1(m)(\sigma')$,
then by relational compatibility $\phi_2(m)(\sigma|\sigma')$ contains either $(\tau|\tau')$ or $\Fault$.
\end{proof}

In a relational pre-model, the bi-model outcome sets are convex in this sense:
\[ \tau|\tau'\in \phi_2(m)(\sigma|\sigma')
\mbox{ and }
\upsilon|\upsilon'\in \phi_2(m)(\sigma|\sigma')
\mbox{ imply }
\tau|\upsilon'\in \phi_2(m)(\sigma|\sigma')
\mbox{ and } \upsilon|\tau'\in \phi_2(m)(\sigma|\sigma')
\]
This is a consequence of unary compatibility, relational compatibility, and fault determinacy.
But it is not a consequence of the three conditions imposed on bi-models alone.

\subsection{Biprogram transition relation}\label{sec:bitrans}

\begin{figure}[t!]
\begin{footnotesize}

\begin{mathpar}
\inferrule*[left=bSync]{
\mbox{$A$ not a method call}\\
   \configm{A}{\sigma}{\mu} \trans{\phi_0} \configm{\skipc}{\tau}{\nu}\\
   \configm{A}{\sigma'}{\mu'} \trans{\phi_1} \configm{\skipc}{\tau'}{\nu'}
}{
   \configr{\syncbi{A}}{\sigma}{\sigma'}{\mu}{\mu'}
   \biTrans{\phi}
   \configr{\syncbi{\skipc}}{\tau}{\tau'}{\nu}{\nu'}
}

\inferrule*[left=bSyncX]{
\mbox{$A$ not a method call}\\
   \configm{A}{\sigma}{\mu} \trans{\phi_0} \Fault
\quad\mbox{or}
\quad
   \configm{A}{\sigma'}{\mu'} \trans{\phi_1} \Fault
}{
   \configr{\syncbi{A}}{\sigma}{\sigma'}{\mu}{\mu'}
   \biTrans{\phi}
   \Fault
}

  \inferrule*[left=bCallS]{
    (\tau|\tau') \in \phi_2(m)(\sigma|\sigma')
  }{
    \configr{\syncbi{m()}}{\sigma}{\sigma'}{\mu}{\mu'}
      \biTrans{\phi} \configr{\syncbi{\skipc}}{\tau}{\tau'}{\mu}{\mu'}
   }

 \inferrule*[left=bCallX]{
     \Fault \in \phi_2(m)(\sigma|\sigma')
  }{
    \configr{\syncbi{m()}}{\sigma}{\sigma'}{\mu}{\mu'}
      \biTrans{\phi} \Fault
   }

 \inferrule*[left=bCall0]{
     \phi_2(m)(\sigma|\sigma') = \emptyset
  }{
    \configr{\syncbi{m()}}{\sigma}{\sigma'}{\mu}{\mu'}
      \biTrans{\phi}
    \configr{\syncbi{m()}}{\sigma}{\sigma'}{\mu}{\mu'}
   }

  \inferrule*[left=bCallE]{
    \mu(m) = B \\ \mu'(m) = B'
  }{
    \configr{\syncbi{m()}}{\sigma}{\sigma'}{\mu}{\mu'}
      \biTrans{\phi} \configr{\splitbi{B}{B'};\syncbi{\Endcall(m)}}{\sigma}{\sigma'}{\mu}{\mu'}
   }

\inferrule*[left=bComL]{ 
   \configm{C}{\sigma}{\mu} \trans{\phi_0} \configm{D}{\tau}{\nu}
\\
DD = (\mifthenelse{ \splitbir{D}{C'} }{ (C'\nequiv\skipc) }{ \splitbi{D}{\skipc} })
}{
   \configr{\splitbi{C}{C'}}{\sigma}{\sigma'}{\mu}{\mu'}
   \biTrans{\phi}
   \configr{DD}{\tau}{\sigma'}{\nu}{\mu'}
}

\inferrule*[left=bComR]{ 
   \configm{C'}{\sigma'}{\mu'} \trans{\phi_1} \configm{D'}{\tau'}{\nu'}
}{
   \configr{\splitbir{C}{C'}}{\sigma}{\sigma'}{\mu}{\mu'}
   \biTrans{\phi}
   \configr{\splitbi{C}{D'}}{\sigma}{\tau'}{\mu}{\nu'}
}


\inferrule*[left=bComR0]{ 
   \configm{C'}{\sigma'}{\mu'} \trans{\phi_1} \configm{D'}{\tau'}{\nu'}
}{
   \configr{\splitbi{\skipc}{C'}}{\sigma}{\sigma'}{\mu}{\mu'}
   \biTrans{\phi}
   \configr{\splitbi{\skipc}{D'}}{\sigma}{\tau'}{\mu}{\nu'}
}


\inferrule*[left=bComLX]{ 
   \configm{C}{\sigma}{\mu} \trans{\phi_0} \Fault
}{
   \configr{\splitbi{C}{C'}}{\sigma}{\sigma'}{\mu}{\mu'}
   \biTrans{\phi}
   \Fault
}

\inferrule*[left=bComRX]{ 
   \configm{C'}{\sigma'}{\mu'} \trans{\phi_1} \Fault\\
   BB \mbox{ is } \splitbir{C}{C'} \mbox{ or } \splitbi{\skipc}{C'}
}{
   \configr{BB}{\sigma}{\sigma'}{\mu}{\mu'}
   \biTrans{\phi}
   \Fault
}

\inferrule*[left=bLet]{
\nu = \extend{\mu}{m}{C}\\
\nu' = \extend{\mu'}{m}{C'}
}{
  \configr{ \letcombi{m}{\splitbi{C}{C'}}{DD} }{\sigma}{\sigma'}{\mu}{\mu'}
  \biTrans{\phi}
   \configr{DD;\syncbi{\Endlet(m)}}{\sigma}{\sigma'}{\nu}{\nu'}
}

\inferrule*[left=bIfTT]{
\sigma(E)= \True = \sigma'(E')
}{
   \configr{\ifcbi{E|E'}{CC}{DD}}{\sigma}{\sigma'}{\mu}{\mu'}
   \biTrans{\phi}
   \configr{CC}{\sigma}{\sigma'}{\mu}{\mu'}
}

\inferrule*[left=bIfFF]{
\sigma(E)= \False = \sigma'(E')
}{
   \configr{\ifcbi{E|E'}{CC}{DD}}{\sigma}{\sigma'}{\mu}{\mu'}
   \biTrans{\phi}
   \configr{DD}{\sigma}{\sigma'}{\mu}{\mu'}
}

\inferrule*[left=bIfX]{
   \sigma(E)\neq \sigma'(E') 
}{
   \configr{\ifcbi{E|E'}{CC}{DD}}{\sigma}{\sigma'}{\mu}{\mu'}
   \biTrans{\phi}
   \Fault
}

\inferrule*[left=bVar]{
w = \varfresh(\sigma)\\
w' = \varfresh(\sigma')\\
\tau = \extend{\sigma}{w}{\Default{T}} \\
\tau' = \extend{\sigma'}{w'}{\Default{T'}} \\
DD = (\syncbi{\Endvar(w)} \mbox{ if } w\equiv w' \mbox{ else } \splitbi{\Endvar(w)}{\Endvar(w')})
}{
  \configr{ \varblockbi{x\scol T|x'\scol T'}{CC} }{\sigma}{\sigma'}{\mu}{\mu'}
  \biTrans{\phi}
\configr{\subst{CC}{x,x'}{w,w'};DD}
{\tau}{\tau'}{\mu}{\mu'}
}

\inferrule*[left=bSeq]{
  \configr{ BB }{\sigma}{\sigma'}{\mu}{\mu'}
   \biTrans{\phi}
  \configr{ CC }{\tau}{\tau'}{\nu}{\nu'}
 }{
  \configr{ BB ; DD }{\sigma}{\sigma'}{\mu}{\mu'}
   \biTrans{\phi}
  \configr{ CC ; DD }{\tau}{\tau'}{\nu}{\nu'}
}

\inferrule*[left=bSeqX]{
  \configr{ BB }{\sigma}{\sigma'}{\mu}{\mu'}
   \biTrans{\phi}  \Fault
}{
  \configr{ BB ; DD }{\sigma}{\sigma'}{\mu}{\mu'}
   \biTrans{\phi}  \Fault
}
\end{mathpar}
\end{footnotesize}
\vspace*{-2ex}
\caption{Transition rules for biprograms, except bi-while (for which see Figure~\ref{fig:biprogTransU}).}
\label{fig:biprogTrans}
\end{figure}


\begin{figure}[t!]
\begin{small}
\begin{mathpar}
\inferrule*[left=bWhL]{
  \sigma(E)=\True \\
  \sigma|\sigma'\models\P
}{
   \configr{CC}{\sigma}{\sigma'}{\mu}{\mu'}
   \biTrans{\phi}
   \configr{\splitbi{\Left{BB}}{\skipc};CC}{\sigma}{\sigma'}{\mu}{\mu'}
}

\inferrule*[left=bWhR]{
  \sigma'(E')=\True \\
  \sigma|\sigma'\models\P' \\
  (\sigma(E)=\False \mbox{ or } \sigma|\sigma'\not\models\P)
}{
   \configr{CC}{\sigma}{\sigma'}{\mu}{\mu'}
   \biTrans{\phi}
   \configr{\splitbi{\skipc}{\Right{BB}};CC}{\sigma}{\sigma'}{\mu}{\mu'}
}

\inferrule*[left=bWhTT]{
  \sigma|\sigma'\not\models\P \\
  \sigma|\sigma'\not\models\P'\\
  \sigma(E)= \True = \sigma'(E')
}{
   \configr{CC}{\sigma}{\sigma'}{\mu}{\mu'}
   \biTrans{\phi}
   \configr{BB;CC}{\sigma}{\sigma'}{\mu}{\mu'}
}

\inferrule*[left=bWhFF]{
  \sigma(E) =  \False = \sigma'(E')
}{
   \configr{CC}{\sigma}{\sigma'}{\mu}{\mu'}
   \biTrans{\phi}
   \configr{\syncbi{\skipc}}{\sigma}{\sigma'}{\mu}{\mu'}
}

\inferrule*[left=bWhX]{
  \mbox{($\sigma(E)=\True$ and $\sigma'(E')=\False$ and   $\sigma|\sigma'\not\models\P$)}
\\
  \mbox{or ($\sigma(E)=\False$ and $\sigma'(E')=\True$ and $\sigma|\sigma'\not\models\P'$)}
}{
   \configr{CC}{\sigma}{\sigma'}{\mu}{\mu'}
   \biTrans{\phi}
   \Fault
}
\end{mathpar}
\end{small}
\vspace*{-2ex}
\caption{Transition rules for bi-while,  in which we
abbreviate
$CC\;\equiv\; \whilecbiA{E|E'}{\P|\P'}{BB}$.
}
\label{fig:biprogTransU}
\end{figure}

Biprograms are given transition semantics by relation \ghostbox{$\biTrans{\phi}$}
on configurations, 
defined in Figs.~\ref{fig:biprogTrans} and~\ref{fig:biprogTransU}
for any (relational) pre-model $\phi$.
Configurations have the form
$\configm{CC}{\sigma|\sigma'}{\mu|\mu'}$ which represents an aligned pair of unary configurations.
These have projections
$\Left{\configr{CC}{\sigma}{\sigma'}{\mu}{\mu'}}   \eqdef   \configm{\Left{CC}}{\sigma}{\mu}$
and
$\Right{\configr{CC}{\sigma}{\sigma'}{\mu}{\mu'}}   \eqdef   \configm{\Right{CC}}{\sigma'}{\mu'} $.
Environments are unchanged from unary semantics: $\mu$ and $\mu'$ map procedure names to commands, not biprograms.\footnote{This simplification streamlines the development but is revisited in section \ref{sec:nestedX}.}
The rules are designed to ensure quasi-determinacy (see Lemma~\ref{lem:Rdeterminacy}).

The bi-com $\splitbi{C}{C'}$ represents a pair of programs for which the only alignment of interest is the initial states and the final states (if any).
Its steps are dovetailed, unless one side has terminated, so that divergence on one side cannot prevent progress on the other side. It make direct use of the unary transition relation.
The exact order of dovetailing does not matter; what matters is that one-sided divergence is not possible.
Here are the details of the specific formulation we have chosen.
The bi-com $\splitbi{C}{C'}$ takes a step on the left (rule \rn{bComL} in Figure~\ref{fig:biprogTrans}),
leaving the right side unchanged.
It transitions to the \dt{r-bi-com} form \ghostbox{$\splitbir{C}{C'}$} 
which does not occur in source programs, and which takes a right step (\rn{bComR}).
In configurations, identifier $CC$ ranges over biprograms that may include
endmarkers from the unary semantics and also the r-bi-com.\footnote{The left and right
    projections of $\splitbir{-}{-}$ are as with $\splitbi{-}{-}$.
} 
Rule \rn{bComR0} is needed to handle biprograms of the form $\splitbi{\skipc}{D}$.
The rules ensure that $\splitbir{\skipc}{D}$ never occurs
for $D\nequiv\skipc$, and we identify $\splitbir{\skipc}{\skipc}\equiv\syncbi{\skipc}$.

Rules \rn{bSeq} and \rn{bSeqX} simply close the transitions under command sequencing.
Recall that we identify some biprograms,
e.g., $\splitbi{\skipc}{\skipc} \equiv \syncbi{\skipc}$,
to avoid the need for bureaucratic transitions (see Figure~\ref{fig:synident}).
A \dt{trace} $T$ via $\phi$ is a finite sequence of configurations that is consecutive under $\biTrans{\phi}$.
The projection lemma (Lemma~\ref{lem:bi-to-unary}) confirms that $T$ gives rise to unary trace $U$ on the left via $\trans{\phi_0}$ and
$V$ on the right via $\trans{\phi_1}$.

\begin{figure}[t]
\begin{small}
  \begin{tikzpicture}[
    a1/.style={thin, dash pattern=on 2pt off 2pt}
    ]
    \matrix[column sep=5mm]{

      \node (U0) {$\configc{a;b;c}$};
      & \node (T0) {$\configc{\splitbi{a;b;c}{d;e;f;g}}$};
      & \node (V0) {$\configc{d;e;f;g}$};       \\[-1ex]

      \node (U1) {$\configc{b;c}$};
      & \node (T1) {$\configc{\splitbir{b;c}{d;e;f;g}}$}; \\[-1ex]

      & \node (T2) {$\configc{\splitbi{b;c}{e;f;g}}$};
      & \node (V1) {$\configc{e;f;g}$};  \\[-1ex]

      \node (U2) {$\configc{c}$};
      & \node (T3) {$\configc{\splitbir{c}{e;f;g}}$}; \\[-1ex]

      & \node (T4) {$\configc{\splitbi{c}{f;g}}$};
      & \node (V2) {$\configc{f;g}$};  \\[-1ex]

      \node (U3) {$\configc{\skipc}$};
      & \node (T5) {$\configc{\splitbir{\skipc}{f;g}}$}; \\[-1ex]

      & \node (T6) {$\configc{\splitbi{\skipc}{g}}$};
      & \node (V3) {$\configc{g}$};  \\[-1ex]

      & \node (T7) {$\configc{\syncbi{\skipc}}$};
      & \node (V4) {$\configc{\skipc}$};  \\[-1ex]
};
    \draw[-] (U0) edge[a1] (T0);
                  \draw[-] (T0) edge[a1] (V0);

    \draw[-] (U1) edge[a1] (T1);
                  \draw[-] (T1) edge[a1, bend right=5] (V0);

    \draw[-] (U1) edge[a1, bend right=8] (T2);
                  \draw[-] (T2) edge[a1] (V1);

    \draw[-] (U2) edge[a1] (T3);
                  \draw[-] (T3) edge[a1, bend right=5] (V1);

    \draw[-] (U2) edge[a1, bend right=10] (T4);
                  \draw[-] (T4) edge[a1] (V2);

    \draw[-] (U3) edge[a1] (T5);
                  \draw[-] (T5) edge[a1, bend right=5] (V2);

    \draw[-] (U3) edge[a1, bend right=8] (T6);
                  \draw[-] (T6) edge[a1] (V3);

    \draw[-] (U3) edge[a1, bend right=10] (T7);
                  \draw[-] (T7) edge[a1] (V4);

  \end{tikzpicture}
\quad
  \begin{tikzpicture}[
    a1/.style={thin, dash pattern=on 2pt off 2pt}
    ]
    \matrix[column sep=5mm]{

\node (U0) {$\configc{ a;b;c }$};
& \node (T0) {$\configc{ \splitbi{a}{d;e} ; \splitbi{b;c}{f}  }$};
& \node (V0) {$\configc{ d;e;f }$};
\\[-1ex]

\node (U1) {$\configc{ b;c }$};
& \node (T1) {$\configc{ \splitbir{\skipc}{d;e} ; \splitbi{b;c}{f}  }$};
\\[-1ex]

& \node (T2) {$\configc{ \splitbi{\skipc}{e} ; \splitbi{b;c}{f}  }$};
& \node (V1) {$\configc{ e;f }$};
\\[-1ex]

& \node (T3) {$\configc{ \splitbi{b;c}{f} }$};
& \node (V2) {$\configc{ f }$};
\\[-1ex]

\node (U2) {$\configc{ c }$};
& \node (T4) {$\configc{ \splitbir{c}{f} }$};
\\[-1ex]

& \node (T5) {$\configc{   \splitbi{c}{\skipc} }$};
& \node (V3) {$\configc{ \skipc }$};
\\[-1ex]

\node (U3) {$\configc{ \skipc  }$};
& \node (T6) {$\configc{   \syncbi{\skipc}  }$};
\\[-1ex]
};
    \draw[-] (U0) edge[a1] (T0);
                  \draw[-] (T0) edge[a1] (V0);

    \draw[-] (U1) edge[a1] (T1);
                  \draw[-] (T1) edge[a1, bend right=8] (V0);

    \draw[-] (U1) edge[a1, bend right=8] (T2);
                  \draw[-] (T2) edge[a1] (V1);

    \draw[-] (U1) edge[a1, bend right=15] (T3);
                  \draw[-] (T3) edge[a1] (V2);

    \draw[-] (U2) edge[a1] (T4);
                  \draw[-] (T4) edge[a1, bend right=8] (V2);

    \draw[-] (U2) edge[a1, bend right=10] (T5);
                  \draw[-] (T5) edge[a1] (V3);

    \draw[-] (U3) edge[a1] (T6);
                  \draw[-] (T6) edge[a1, bend right=8] (V3);

  \end{tikzpicture}
\end{small}
\vspace*{-2ex}
\caption{Two example biprogram traces, with alignments, omitting states and environments.}
\label{fig:alignedTraces}
\end{figure}

\begin{example}\label{ex:split}
To illustrate the dovetailed execution of bi-coms,
we show a trace for the bi-com $\splitbi{a;b;c}{d;e;f;g}$ of some atomic commands,
omitting states and environments from the configurations.
The trace is displayed vertically on the left side of Figure~\ref{fig:alignedTraces},
between the two corresponding unary traces.
Thus $\splitbi{a;b;c}{d;e;f;g}$
executes the commands in the order $a,d,b,e,c,f,g$. 
Dashed lines in the figure show the
correspondence between unary and biprogram configurations.
In this example, the right side takes additional steps after the left has terminated.
The opposite can also happen, as in 
\(
\configc{\splitbi{a;b;c}{d}}
\configc{\splitbir{b;c}{d}}
\configc{\splitbi{b;c}{\skipc}}
\configc{\splitbi{c}{\skipc}}
\configc{\syncbi{\skipc}}
\)
which executes $a,d,b,c$.


The right side of Figure~\ref{fig:alignedTraces}
shows a trace for the second of the weavings in (\ref{eq:weaveseq}).
\qed\end{example}

The sync atomic command $\syncbi{A}$ steps $A$ by unary transition on both sides,
unless $A$ is a context call in which case the context bi-model is used.
Endmarkers are considered to be atomic commands, e.g., $\syncbi{\Endlet(m)}$
transitions via rule \rn{bSync} and removes $m$ from the environment on both sides.

A bi-if, $\ifcbi{E\smallSplitSym E'}{CC}{DD}$,
faults from initial states that do not agree on the tests $E,E'$,
which we call an \dt{alignment fault} (rule \rn{biIfX}).
A bi-while, $\whilecbiA{E\smallSplitSym E'}{\P\smallSplitSym \P'}{CC}$,
executes the left part of the body, $\Left{CC}$, if $E$ and the left alignment guard $\P$ both hold,
and \emph{mutatis mutandis} for the right.  If neither alignment guard holds,
the loop faults unless the tests $E,E'$ agree (\rn{bWhX}).

The transition relation $\biTrans{\phi}$ uses the unary models
$\phi_0$ and $\phi_1$ for method calls in the bi-com form, e.g.,
$\splitbi{m()}{\skipc}$ goes via $\phi_0$ according to \rn{bComL}.
A sync'd call $\syncbi{m()}$
in the body  of a loop that has non-false left or right alignment guards
may give rise to steps where the active biprogram has the form
$\splitbi{m();C}{D}$ or $\splitbi{\skipc}{m();C}$ (rules \rn{bWhL}, \rn{bWhR}).
The \dt{active biprogram}, like the active command in a unary configuration,
is the unique sub-biprogram that gets rewritten by the applicable transition rule.
As with unary programs, we define $\Active(CC)$\index{$\Active(CC)$} to be the unique $BB$ such that $CC\equiv BB;DD$ for some $DD$ and $BB$ is not a sequence; it is what gets rewritten by the applicable transition rule. 

Projecting from a biprogram trace does not simply mean mapping the syntactic projections over the trace,
because that would result in stuttering steps that do not arise in the unary semantics
(where stuttering only happens for context calls and only if the model returns an empty set).
In the preceding diagrams, some unary configurations correspond with more than one biprogram configuration;
one may say the unary program is idling while a step is taken on the other side.

The alignment of biprogram traces with unary ones is formalized as follows.
Here we treat a trace $T$ as a map defined on an initial segment of the naturals, so $\dom(T)$
is the set $\{0,\ldots,len(T)-1\}$.

\begin{definition}[\textbf{schedule, alignment}, $\Align(l,r,T,U,V)$]
\index{$\Align$}
\label{def:schedAlign}
Let $T$ be a biprogram trace and $U,V$ unary traces.
A \dt{schedule of $U,V$ for $T$} is a pair $l,r$ with $l:(\dom(T))\to(\dom(U))$ and $r:(\dom(T))\to(\dom(V))$, each surjective and monotonic.
A schedule $l,r$ is an \dt{alignment}
of $U,V$ for $T$, written $\Align(l,r,T, U, V)$,
\index{$\Align(l,r,T, U, V)$}
iff 
$U_{l(i)} = \Left{T_i}$ and $V_{r(i)} = \Right{T_i}$
for all $i$ in $\dom(T)$.
\end{definition}

The dashed lines in Figure~\ref{fig:alignedTraces} represent the $l$ and $r$ index mappings of a schedule.
For Example~\ref{ex:split}, left side of the figure, the mapping is $r(0)=0$, $r(1)=0$, $r(2)=1$, etc.

The following result makes precise that every biprogram trace represents a pair of unary traces.
It is phrased carefully to take into account the possibility of stuttering transitions at the unary level.

\begin{restatable}[trace projection]{lemma}{lembitounary}
\label{lem:bi-to-unary}
\upshape
Suppose $\phi$ is a pre-model.
Then the following hold.
(a) For any step
$\configr{BB}{\sigma}{\sigma'}{\mu}{\mu'}\biTrans{\phi} \configr{CC}{\tau}{\tau'}{\nu}{\nu'}$,
either
\begin{itemize}
\item
$\configm{\Left{BB}}{\sigma}{\mu}\trans{\phi_0}\configm{\Left{CC}}{\tau}{\nu}$  and
$\configm{\Right{BB}}{\sigma'}{\mu'}\trans{\phi_1}\configm{\Right{CC}}{\tau'}{\nu'}$, or
\item $\configm{\Left{BB}}{\sigma}{\mu} = \configm{\Left{CC}}{\tau}{\nu}$ and
$\configm{\Right{BB}}{\sigma'}{\mu'}\trans{\phi_1}\configm{\Right{CC}}{\tau'}{\nu'}$, or
\item
$\configm{\Left{BB}}{\sigma}{\mu}\trans{\phi_0}\configm{\Left{CC}}{\tau}{\nu}$ and
$\configm{\Right{BB}}{\sigma'}{\mu'} = \configm{\Right{CC}}{\tau'}{\nu'}$.
\end{itemize}
(b) For any trace $T$
via $\biTrans{\phi}$,
there are unique traces $U$ via $\trans{\phi_0}$ and
$V$ via $\trans{\phi_1}$,
and schedule $l,r$, such that $\Align(l,r,T,U,V)$.
\\
(c)  If $\Active(BB)\equiv \Syncbi{B}$ for some $B$, then
$\configm{\Left{BB}}{\sigma}{\mu}\trans{\phi_0}\configm{\Left{CC}}{\tau}{\nu}$  and
$\configm{\Right{BB}}{\sigma'}{\mu'}\trans{\phi_1}\configm{\Right{CC}}{\tau'}{\nu'}$.
\end{restatable}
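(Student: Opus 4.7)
\medskip

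\noindent\textbf{Proof proposal.}
The plan is to prove (a) first by case analysis on the biprogram transition rule used, then derive (b) by induction on the trace length using (a), and finally establish (c) by restricting the case analysis of (a) to those rules whose active biprogram has the sync form.

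For part (a), I would walk through each transition rule in Figs.~\ref{fig:biprogTrans} and~\ref{fig:biprogTransU} and classify it into one of the three listed alternatives. The rules \rn{bSync}, \rn{bCallS}, \rn{bCallE}, \rn{bIfTT}, \rn{bIfFF}, \rn{bVar}, \rn{bLet}, \rn{bWhTT}, and \rn{bWhFF} all fall into the first alternative (both sides step): one reads off the unary transitions directly from the premises, recalling the syntactic projections in Fig.~\ref{fig:synProj} (e.g.\ $\Left{\syncbi{A}} \equiv A$, $\Left{\ifcbi{E|E'}{CC}{DD}} \equiv \ifc{E}{\Left{CC}}{\Left{DD}}$). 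The rules \rn{bComL} and \rn{bWhL} produce a left-only step, since the right projection of the configuration is unchanged; symmetrically \rn{bComR}, \rn{bComR0}, and \rn{bWhR} give right-only steps (here one checks that $\Right{\splitbir{C}{C'}} \equiv C'$ matches the premise). Rules producing $\Fault$ are excluded from (a) by hypothesis. The remaining case is the congruence rule \rn{bSeq}, handled by induction on the derivation: if the premise falls into one of the three alternatives then so does the conclusion, because sequential composition commutes with syntactic projection.

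For part (b), induct on $\mathit{len}(T)$. The base case is a singleton trace, for which we take $U = \langle \Left{T_0}\rangle$, $V = \langle \Right{T_0}\rangle$, and $l(0)=r(0)=0$. For the inductive step, given a trace $T$ extended by one step to $T'$, apply the inductive hypothesis to $T$ to obtain $U$, $V$, $l$, $r$ witnessing $\Align$, and then use (a) on the new step. In the first alternative of (a) we extend both $U$ and $V$ by the corresponding unary step and set $l(n{+}1)=|U|$, $r(n{+}1)=|V|$; in the left-only alternative we extend only $U$ and set $l(n{+}1)=|U|,\, r(n{+}1)=r(n)$; symmetrically for the right-only case. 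Surjectivity and monotonicity of the new $l,r$ are immediate, and the alignment equalities follow from those given by (a). Uniqueness of $U$, $V$ (for a given $T$) follows because the unary traces are forced, configuration by configuration, by $\Left{T_i}$ and $\Right{T_i}$; the schedule is also pinned down by the structure of each step.

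For part (c), the active biprogram being $\Syncbi{B}$ means the only applicable rules are those whose redex is a sync form---namely \rn{bSync}, \rn{bSyncX}, \rn{bCallS}, \rn{bCallX}, \rn{bCall0}, and \rn{bCallE} (and their sequenced versions through \rn{bSeq})---together with the possibility that $B$ is a conditional, loop, var-block or let-block fully aligned; a quick inspection of the biprogram forms in Fig.~\ref{fig:bnf} confirms that $\Syncbi{B}$ is only built via the clauses of Fig.~\ref{fig:fullAlign}. In each such rule, both unary projections appear in the premise and take a step. The hypothesis that the step does not go to $\Fault$ and has target a real configuration rules out \rn{bSyncX}, \rn{bCallX}, and \rn{bCall0}.

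The main obstacle I anticipate is bookkeeping around the structural rule \rn{bSeq} and the identification of terms up to the equivalences in Fig.~\ref{fig:synident} (in particular $\splitbi{\skipc}{\skipc} \equiv \syncbi{\skipc}$ and the associativity of sequencing): these are essential to ensure that the active-biprogram classification is well defined and that the single-step cases of (a) indeed cover all derivations. Once this is set up carefully, the proof is a mechanical enumeration of rules plus an easy induction.
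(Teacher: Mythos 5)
Your overall strategy is the same as the paper's: part (a) by case analysis on the biprogram transition rules, part (b) by induction on trace length using (a) with the three sub-cases for extending the schedule, and part (c) by observing that the one-sided rules cannot fire on fully aligned code. Parts (b) and (c) are structurally fine. However, your case analysis for (a) has a genuine gap around context calls.

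First, you omit \rn{bCall0} entirely. This is the one case that does not fall out of "reading off the premises": when $\phi_2(m)(\sigma|\sigma')=\emptyset$ the biprogram configuration stutters, so \emph{both} unary projections are syntactically unchanged, and it is not immediate which of the three alternatives applies --- the alternatives require at least one side to actually take a unary transition. The resolution is Lemma~\ref{lem:emptyOutcomes}: emptiness of $\phi_2(m)(\sigma|\sigma')$ forces $\phi_0(m)(\sigma)=\emptyset$ or $\phi_1(m)(\sigma')=\emptyset$, so at least one side takes the unary stuttering transition \rn{uCall0}, placing the step in one of the three alternatives. Without invoking this lemma the enumeration is incomplete. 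Second, for \rn{bCallS} the unary steps do \emph{not} follow "directly from the premises": the premise is $(\tau|\tau')\in\phi_2(m)(\sigma|\sigma')$, and to obtain the unary \rn{uCall} transitions you need $\tau\in\phi_0(m)(\sigma)$ and $\tau'\in\phi_1(m)(\sigma')$, which is exactly the unary-compatibility condition of relational pre-models (Def.~\ref{def:interpRel}). These compatibility/emptiness conditions are the only non-mechanical content of part (a), and they are the part your proposal skips. Relatedly, in part (c) you claim \rn{bCall0} is excluded because the step "has target a real configuration"; it is not --- \rn{bCall0} targets a real (stuttering) configuration, and handling it again requires propagating emptiness of the bi-model to the unary models. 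Note also that the unary semantics contains stuttering transitions (\rn{uCall0}), so "the projected configuration is unchanged" and "the projection takes a unary step" are not mutually exclusive; the paper explicitly flags this (a \rn{bComL} step can itself be a unary \rn{uCall0} stutter), and your classification should be read with that caveat.
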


\subsection{Relational context models, biprogram correctness and adequacy}
\label{sec:rel-corr}

Owing to careful design of Defs.~\ref{def:ctxinterp}, \ref{def:valid}, and~\ref{def:interpRel},
the following notions are mostly about relational aspects.
Relational context models are pre-models that satisfy some specs.
They play the same role in the semantics of relational judgments
as unary context models play in unary correctness.

\begin{definition}[\textbf{context model of relational spec, $\Phi$-model}]
\label{def:ctxinterpRel}\index{context model}
A pre-model $\phi$ is a \dt{$\Phi$-model} provided that
$\phi_0,\phi_1$ are $\Phi_0, \Phi_1$-models, and
for each $m$, with $\Phi_2(m) = \rflowty{\R}{\S}{\effe|\effe'}$,
the bi-model $\phi_2(m)$ satisfies the following,
for all $\sigma,\sigma'$
\begin{list}{}{}
\item[(a)] $\Fault \in \phi_2(m)(\sigma,\sigma')$ iff
there are no $\pi,\ol{v},\ol{v}'$ such that $\sigma|\sigma' \models_\pi  \subst{\R}{\ol{s},\ol{s}'}{\ol{v},\ol{v}'} $ \\
where $\ol{s},\ol{s}'$ are the spec-only variables on left and right.
\item[(b)] for all  $(\tau,\tau')$ in $\phi_2(m)(\sigma,\sigma')$,
and all $\pi,\ol{v},\ol{v}'$ such that $\sigma|\sigma' \models_\pi \subst{\R}{\ol{s},\ol{s}'}{\ol{v},\ol{v}'}$ we have
$\tau|\tau'\models_\pi \subst{\S}{\ol{s},\ol{s}'}{\ol{v},\ol{v}'} $
and
$\sigma\allowTo\tau\models \effe$ and
$\sigma'\allowTo\tau'\models \effe'$
\end{list}
\end{definition}

A direct consequence of Def.~\ref{def:ctxinterpRel},
together with unary compatibility of pre-models
and condition (c) of Def.~\ref{def:ctxinterp}, is that
for all $N$ with $\mdl(m)\imports N$, letting $\delta\eqdef\bnd(N)$ we have
\[ 
(\tau|\tau') \in \phi_2(m)(\sigma|\sigma') \mbox{ implies }
\rlocs(\sigma,\delta)\subseteq\rlocs(\tau,\delta) \mbox{ and }
\rlocs(\sigma',\delta)\subseteq\rlocs(\tau',\delta)
\]
and there is also a direct consequence of condition (d) of Def.~\ref{def:ctxinterp}.



The projections of Lemma~\ref{lem:bi-to-unary} are used in the following definition of relational correctness.

\begin{definition}[\textbf{valid relational judgment}
\ghostbox{$\;\Phi\rHVflowtr{}{M}{\P}{CC}{\Q}{\eff|\eff'}\;$}]
\label{def:validR} 
The judgment is \dt{valid} iff the following conditions hold for all
states $\sigma$ and $\sigma'$,
$\Phi$-models $\phi$,
refperms $\pi$, and values $\ol{v},\ol{v}'$ such that
$\sigma|\sigma'\models_\pi \subst{\P}{\ol{s},\ol{s}'}{\ol{v},\ol{v}'}$
(where $\ol{s},\ol{s}'$ are the spec-only variables)
\begin{list}{}{}
\item[\quad(\dt{Safety})] It is not the case that
$\configr{CC}{\sigma}{\sigma'}{\_\,}{\,\_} \biTranStar{\phi} \,\Fault$.
\item[\quad(\dt{Post})] $\tau|\tau' \models_\pi \subst{\Q}{\ol{s},\ol{s}'}{\ol{v},\ol{v}'}$
\quad for every $\tau,\tau'$ with
$\configr{CC}{\sigma}{\sigma'}{\_}{\_} \biTranStar{\phi}
\configr{\syncbi{\skipc}}{\tau}{\tau'}{\_}{\_}$
\item[\quad(\dt{Write})]
$\sigma\allowTo\tau\models \eff$ and
$\sigma'\allowTo\tau'\models \eff'$
\quad for every $\tau,\tau'$ with
$\configr{CC}{\sigma}{\sigma'}{\_}{\_} \biTranStar{\phi}
\configr{\syncbi{\skipc}}{\tau}{\tau'}{\_}{\_}$

\item[\quad (\dt{R-safe})]
For every trace $T$ from $\configr{CC}{\sigma}{\sigma'}{\_}{\_}$,
let $U,V$ be the projections of $T$;
then every configuration of $U$ (resp.\ $V$) satisfies r-safe for $(\Phi_0,\eff,\sigma)$ (resp.\ $(\Phi_1,\eff',\sigma'$)).

\item[\quad (\dt{Encap})]
For every trace $T$ from $\configr{CC}{\sigma}{\sigma'}{\_}{\_}$,
let $U,V$ be the projections of $T$;
then every step of $U$ (resp.\ $V$) satisfies
respect for $(\Phi_0,M,\phi_0,\eff,\sigma)$ (resp.\ $(\Phi_1,M,\phi_1,\eff',\sigma')$).

\end{list}
\end{definition}
The values of spec-only variables are uniquely determined by the pre-states, just like in unary specs.
In virtue of the universal quantification over refperms $\pi$, for a
spec in standard form $\rflowtyf{\P}{\later\Q}$, the judgment says for any $\pi$ that supports the agreements in $\P$ there exists an extension $\rho\supseteq\pi$ that supports the agreements in $\Q$.

The following result confirms that the relational judgment is about unary executions.
In particular, a judgment about a bi-com $\splitbi{C}{C'}$ implies the expected
property relating executions of $C$ and $C'$.
The proof uses the embedding Lemma~\ref{lem:unary-to-bi} which says a biprogram's traces cover all the executions of its unary projections, unless it faults.

\begin{restatable}[adequacy]{theorem}{thmbiprogramsoundness}
\label{thm:biprogram-soundness}
Consider a valid judgment  $\Phi\rHVflowtr{}{M}{\P}{CC}{\Q}{\eff|\eff'}$.
Consider any $\Phi$-model $\phi$
and any $\sigma,\sigma',\pi$ with $\sigma|\sigma'\models_\pi\P$.
If $\configm{\Left{CC}}{\sigma}{\_}\tranStar{\phi_0}\configm{\skipc}{\tau}{\_}$ and
$\configm{\Right{CC}}{\sigma'}{\_}\tranStar{\phi_1}\configm{\skipc}{\tau'}{\_}$
then $\tau|\tau'\models_\pi\Q$.
Moreover, all executions from
$\configm{\Left{CC}}{\sigma}{\_}$ and from
$\configm{\Right{CC}}{\sigma'}{\_}$
satisfy Safety, Write, R-safe, and Encap in Def.~\ref{def:valid}.
\end{restatable}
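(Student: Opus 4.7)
The plan is to lift the biprogram-level obligations of Def.~\ref{def:validR} to properties of pairs of unary executions using Lemma~\ref{lem:bi-to-unary} (biprogram-to-unary projection) together with a dual embedding lemma (the Lemma~\ref{lem:unary-to-bi} referenced in the statement). The embedding lemma should say: for any $\Phi$-model $\phi$, any pair of unary traces $U$ from $\configm{\Left{CC}}{\sigma}{\_}$ via $\trans{\phi_0}$ and $V$ from $\configm{\Right{CC}}{\sigma'}{\_}$ via $\trans{\phi_1}$ can be combined into a biprogram trace $T$ from $\configr{CC}{\sigma}{\sigma'}{\_}{\_}$ via $\biTrans{\phi}$ that either reaches $\Fault$ or has $U,V$ (or suitable extensions) as its left and right projections in the sense of Def.~\ref{def:schedAlign}. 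The existence of such a $T$ is by induction on biprogram structure, using dovetailing for bi-com, using agreement of tests when it holds for bi-if/bi-while and producing a faulting $T$ otherwise, and using the relational compatibility and fault compatibility clauses of Def.~\ref{def:interpRel} to match up unary and relational method outcomes at $\syncbi{m()}$.

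For the main postcondition claim, apply the embedding lemma to the two given terminating unary traces. By the Safety clause of Def.~\ref{def:validR}, the resulting biprogram trace $T$ does not reach $\Fault$, so by the embedding it projects to exactly the two unary traces. The dovetailing transitions \rn{bComL}, \rn{bComR}, \rn{bComR0} together with the identifications in Fig.~\ref{fig:synident} let $T$ be extended to a trace that reaches $\configr{\syncbi{\skipc}}{\tau}{\tau'}{\_}{\_}$ once both projections have terminated. The Post clause of Def.~\ref{def:validR} then yields $\tau|\tau'\models_\pi\Q$. For the second claim, given any unary execution $U$ from $\configm{\Left{CC}}{\sigma}{\_}$, pair it with any unary execution $V$ from $\configm{\Right{CC}}{\sigma'}{\_}$ (for instance, the single-configuration trace) and apply the embedding lemma to get a biprogram trace $T$; since $T$ cannot fault by Safety, $U$ appears as its left projection. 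Because Safety, Write, R-safe, and Encap in Def.~\ref{def:validR} are stated in terms of the projections of biprogram traces, they transfer to $U$ directly via Lemma~\ref{lem:bi-to-unary}. Symmetrically for unary executions from the right.

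The main obstacle is establishing the embedding lemma in full generality. The delicate cases are the bi-while form, whose applicable transition at each step is determined by the state-dependent alignment guards $\P,\P'$ and loop tests, and which must be woven against two independently branching unary loops; and the $\syncbi{A}$ forms for allocations and context calls, where a single relational step must match two unary steps. For these, quasi-determinacy of pre-models (Def.~\ref{def:preinterp}) and relational pre-models (Def.~\ref{def:interpRel}) is needed to reconcile the two unary outcomes with a single relational outcome, modulo refperm; the compatibility clauses of Def.~\ref{def:interpRel} ensure that whenever both unary pre-models produce terminating outcomes, the bi-model produces a terminating outcome as well (or $\Fault$, which is ruled out by Safety). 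Once the embedding lemma is proved, the remainder of the adequacy argument reduces to routine bookkeeping over projections.
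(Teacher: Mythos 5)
Your proposal is correct and follows essentially the same route as the paper: the paper's proof of Theorem~\ref{thm:biprogram-soundness} applies the already-established embedding Lemma~\ref{lem:unary-to-bi} to the pair of unary traces, uses the Safety clause of Def.~\ref{def:validR} to rule out cases (b)--(d) of that lemma so the biprogram trace covers both unary traces, and then reads off Post/Write for terminated traces and the per-step conditions for arbitrary traces exactly as you describe. The only divergence is in your sketch of the embedding lemma itself (the paper proves it by an iterative step-by-step construction rather than induction on biprogram structure), but that lemma is given in the paper, so the theorem's proof reduces to the bookkeeping you outline.
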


\begin{remark}
\upshape
It is not straightforward to formalize a converse to this result.
The judgment about $CC$ says not only that the underlying unary executions are related as in the conclusion of the theorem, but in addition certain intermediate states are in agreement according to the alignment designated by the bi-ifs and bi-whiles in $CC$.
\qed\end{remark}

\section{Relational logic}\label{sec:rellog}

This section presents the rules for proving relational correctness judgments.
Section~\ref{sec:locEq} defines how local equivalence specs are derived from unary specs.
Section~\ref{sec:MLink} gives the proof rules and discusses them,
including the derivation of the modular linking rule \rn{rMLink}, sketched as (\ref{eq:mismatchR}) in Section~\ref{sec:modrel}.
Section~\ref{sec:refpmono} considers derived rules involving framing and the $\later$ modality.
Section~\ref{sec:lockstep} states and explains the lockstep alignment lemma, which
is the key to proving soundness of rules \rn{rLocEq}, \rn{rSOF}, and \rn{rLink}
from which \rn{rMLink} is derived.
Section~\ref{sec:nestedX} considers nested linking
and Section~\ref{sec:uequiv} addressess unconditional equivalences.
For Section~\ref{sec:lockstep} readers need to be familiar with the semantic definitions in Section~\ref{sec:biprog}.

\begin{restatable}[soundness of relational logic]{theorem}{thmsound}
\label{thm:sound}
All the relational proof rules are sound (Figure~\ref{fig:proofrulesR} and appendix Figure~\ref{fig:proofrulesRapp}).
\end{restatable}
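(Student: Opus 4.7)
The plan is to prove soundness one rule at a time, in each case assuming validity of the premises (per Def.~\ref{def:validR}) and establishing each of the five conditions Safety, Post, Write, R-safe, and Encap for the conclusion. The proofs naturally split into groups. The structural rules \textsc{rConseq} and \textsc{rFrame} are direct adaptations of their unary counterparts; the frame case uses the framing judgment (Def.~\ref{def:frmAgreeRel}) together with the separator lemma (\ref{eq:sepagree}) applied on both sides, and then appeals to refperm monotonicity for the $\later$-wrapped postconditions that arise in standard form. The rule \textsc{rCall} for synchronized calls is immediate from the definition of $\Phi$-model (Def.~\ref{def:ctxinterpRel}) combined with the unary context-model requirements that come with the $\phi_0,\phi_1$ components.

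For the compositional rules \textsc{rSeq}, \textsc{rIf}, \textsc{rVar}, and the biprogram loop rule \textsc{rWhile} (Fig.~\ref{fig:proofrulesR}), I would argue by induction on biprogram traces, using the trace-projection Lemma~\ref{lem:bi-to-unary} to route the Encap and R-safe conditions down to the unary traces handled by the premises. The rule \textsc{rWhile} is the most delicate in this group: its three premises cover lockstep, left-only, and right-only iterations, and the side condition $\Q\imp E\eqbi E' \lorbi (\P\land\leftF{E}) \lorbi (\P'\land\rightF{E'})$ is precisely what is needed to show that the applicable bi-loop transition (\textsc{bWhL}, \textsc{bWhR}, \textsc{bWhTT}, \textsc{bWhFF} in Fig.~\ref{fig:biprogTransU}) is never the faulting \textsc{bWhX}; immunity of $\eff,\eff'$ gives preservation of the invariant $\Q$ across dovetailed steps. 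The weaving rule \textsc{rWeave} is justified by showing that if $BB\weave DD$ then every non-faulting trace of $DD$ can be matched by a trace of $BB$ on the same state pair, so the five conditions transfer from $DD$ to $BB$ modulo the additional alignment faults ruled out by Safety of $DD$.

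The genuinely hard rules are \textsc{rLink}, \textsc{rSOF}, and \textsc{rLocEq}, and the derivation of \textsc{rMLink} from them (analogous to Fig.~\ref{fig:unaryMismatch}). These are where the lockstep alignment Lemma~\ref{lem:rloceq} does the heavy lifting. For \textsc{rLocEq}, the premise is a unary judgment, and I would build a bi-model for the fully aligned biprogram $\Syncbi{C}$ by pairing two $\phi_0$-runs; the read-effect Lemma~\ref{lem:readeff} then yields that final states agree on the write-minus-boundary locations so that the $\locEq$ spec holds, while each intermediate step respects the collective boundary by unary Encap. For \textsc{rSOF}, one lifts a valid relational judgment under $\LocEq_\delta(\Phi,\Theta)$ to one under $\LocEq_\delta(\Theta)\conjInv \N$ by exploiting that each step is either within its own module (preserving $\N$ vacuously) or outside $\bnd(N)$ for $N\in\Theta$ (preserving $\N$ by its framing judgment and $\N\imp\always\N$). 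For \textsc{rLink}, given $\Theta$-interpretations for the bodies $\splitbi{B_i}{B_i'}$, one plugs them into the relational transition relation of $\Syncbi{C}$; the lockstep lemma guarantees the two sides stay in lockstep exactly because context calls are synchronized and client steps respect $\delta$.

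The main obstacle, as flagged in Sect.~\ref{sec:curlim} and in the discussion preceding the theorem, is Lemma~\ref{lem:rloceq} itself: it must simultaneously track three context interpretations ($\phi_0,\phi_1,\phi_2$), maintain refperm-indexed agreements through the dovetailed biprogram semantics, and manage the skewed definition of $\agree$ (noted after~(\ref{eq:LagreeSym})) while respecting boundary monotonicity and the candidate-dynamic-boundary condition (Def.~\ref{def:framedreadsDyn}) that gives framed reads. Once that lemma is in hand, the derivation of \textsc{rMLink} mirrors the unary derivation in Fig.~\ref{fig:unaryMismatch}, using Def.~\ref{def:conjInv} and the adequacy Theorem~\ref{thm:biprogram-soundness} to read back the relational property of the linked programs. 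The detailed calculations for each rule, being lengthy but essentially bookkeeping around Defs.~\ref{def:validR},~\ref{def:ctxinterpRel}, and~\ref{def:interpRel}, are best relegated to the appendix.
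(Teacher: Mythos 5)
Your overall architecture matches the paper's: a rule-by-rule case analysis against Def.~\ref{def:validR}, with trace projection/embedding (Lemma~\ref{lem:bi-to-unary} and the embedding lemma) routing Encap and R-safe to the unary premises, and the lockstep alignment Lemma~\ref{lem:rloceq} carrying \textsc{rLocEq}, \textsc{rSOF}, and \textsc{rLink}. Two of your concrete plans, however, would not go through as stated.

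First, for \textsc{rLocEq} you propose to pair two $\phi_0$-runs and invoke the read-effect Lemma~\ref{lem:readeff}. This contradicts your own (correct) remark one sentence earlier that Lemma~\ref{lem:rloceq} does the work here, and it fails on two counts: the conclusion's context is $\LocEq_\delta(\Phi)$, whose two unary components are both $\Phi$-models but need not be the \emph{same} model, and sync'd context calls step via $\phi_2$, not via either unary model --- so there is no single $\phi$ under which Lemma~\ref{lem:readeff} applies. Moreover that lemma's conclusion subtracts only $\{\lloc\}$, whereas the $\locEq$ postcondition requires agreement modulo the collective boundary $\delta^\oplus$; threading boundary subtraction through each step, with three context interpretations, is precisely what Lemma~\ref{lem:rloceq} (conditions (vi)--(vii)) provides and what Lemma~\ref{lem:readeff} does not. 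The paper's proof of \textsc{rLocEq} discharges the lockstep lemma's r-safe/respect hypotheses by two instantiations of the unary premise and then reads off Safety, Post, and the agreement postcondition from the lemma's conclusions (using Lemma~\ref{lem:snap} for the $\Asnap$ part).

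Second, you treat \textsc{rWeave} as routine transfer of the five conditions, but Safety against \emph{alignment} faults is the hard case and needs a dedicated development. A trace of the less-woven $CC$ reaches a potentially faulting bi-if or bi-while configuration under a different interleaving than any trace of the more-woven $DD$, so one must prove that the two traces coincide in states at the relevant configurations; the paper does this via sync points, alignment segmentations, and Lemma~\ref{lem:weave-sync} (weaving preserves sync points), proved by induction on the derivation of $\weave$. Your stated matching direction (``every non-faulting trace of $DD$ can be matched by a trace of $BB$'') is also reversed relative to what is needed: Post and Safety for $CC$ require covering each trace of $CC$ by a trace of $DD$, obtained via the embedding lemma from the unary projections, which are unchanged by weaving (Lemma~\ref{lem:weave-project}).
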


\subsection{Local equivalence}\label{sec:locEq}

In Section~\ref{sec:modrel} we introduced the notion of local equivalence.
There is a relational proof rule, \rn{rLocEq}, which lifts a unary judgment to a
relational one.  The unary read effect, which has an extensional semantics that is relational (Def.~\ref{def:valid}) gets lifted to an explicit relational property, a local equivalence 
relating a command to itself.
As basis for the proof rule, we now formalize a construction, $\locEq$,
that applies to a unary spec and makes a relational spec---like the spec (\ref{eq:locEqInsert}) in
Example~\ref{ex:PQagree},
and others in Section~\ref{sec:eg:encapR}---that expresses equivalence in terms of the given frame condition and takes into account encapsulation boundaries.

Both unary and relational proof rules have conditions to enforce encapsulation with respect to the boundaries of modules in scope.  For unary this is discussed in Section~\ref{sec:encap}.
The semantic condition Encap, in Def.~\ref{def:valid}, refers to a collective boundary.
This is an effect formed as a union of the relevant boundaries,
for example in the expression $\unioneff{N\in\Phi,N\neq M}{\bnd(N)}$ where $M$ is the current module
and $\Phi$ is the hypothesis context.
For brevity, several relational proof rules are expressed using $\delta$ to name the collective boundary;
in particular rule \rn{rLocEq} which introduces the $\locEq$ spec we now define.

Given a boundary $\delta$ and unary spec $\flowty{P}{Q}{\eff}$,
the desired pre-relation expresses agreement on the readable locations.
Absent a boundary, this can be written $\Agr\eff$,
taking advantage of our abbreviations which say that $\Agr\eff$
abbreviates $\Agr\reads(\eff)$ which in turn abbreviates a conjunction of agreement formulas
(Figure~\ref{fig:relFormulas}).
But we should avoid requiring agreement on variable $\lloc$, as we want to allow entirely different data structures within boundaries.
The requisite agreement can be expressed,
using effect subtraction, as $\Agr(\eff\setminus\delta^\oplus)$,
where $\delta$ is the collective boundary of the modules to be respected.
Note that  $\delta^\oplus$ abbreviates $\delta,\rd{\lloc}$
(as in Def.~\ref{def:ctxinterp}).

A first guess for the post-relation  would use agreement on the writable locations, but
that cannot be written as $\Agr\wTor(\eff)$ because any state-dependent region expressions in
write effects of $\eff$ should be interpreted in the pre-state.  
This is why the concluding agreements in the definition of r-respect are expressed in terms of the fresh and written locations. 
So this is what we need to express in a spec.
The solution is to use snapshot variables.
If we use fresh variable $s_{\lloc}$ in precondition $s_{\lloc}=\lloc$, the fresh
references can be described in post-states as $\lloc\setminus s_{\lloc}$ and
agreement on fresh locations can be expressed as
$\Agr(\lloc\setminus s_{\lloc})\Img\allfields$.
For written (pre-existing) locations, we can obtain the requisite agreements
in terms of initial snapshots of the locations deemed writable by $\eff$.
For an example, see (\ref{eq:locEqMST}) in Section~\ref{sec:eg:encapR}.

For each $\wri{G\Img f}$ in $\eff$ we add a snapshot equation $s_{G,f} = G$
to the precondition, or rather $\Both{(s_{G,f} = G)}$.
The desired post-relation is then $\Agr s_{G,f}\Img f$.
Please note that $s_{G,f}$ is just a fresh identifier, written in a way to keep track of its use in connection with $G\Img f$.
The snapshots and agreements are given by functions $\snap$ and $\Asnap$ defined next.
The following definitions make use of effects like $\rd{s_{G,f}\Img f}$ in which spec-only variables occur.  These are used to define agreement formulas used in postconditions---they are not used in frame conditions, where spec-only variables are disallowed.

\begin{definition}[\textbf{write snapshots}]
\label{def:Gsnapshot}
For any effect $\eff$ we define functions
$\snap$ \index{$\snap$} from effects to unary formulas
and $\Asnap$ \index{$\Asnap$} from effects to read effects.
\[
\begin{array}{lcllcl}
\snap(\eff,\effe) & \eqdef & \snap(\eff) \land \snap(\effe)
\quad
  &\Asnap(\eff,\effe) & \eqdef  & \Asnap(\eff),\ \Asnap(\effe)\\
\snap(\wri{x}) & \eqdef & \True
  &\Asnap(\wri{x}) & \eqdef  &
  \mifthenelse{ \rd{x} \; }{x\nequiv\lloc}{ \emptyeff }  \\
\snap(\wri{G\Img f}) & \eqdef & s_{G,f} = G
  &\Asnap(\wri{G\Img f}) & \eqdef  & \rd{s_{G,f}\Img f} \\
\snap(\wri{G\Img \allfields}) & \eqdef & s_{G,\allfields} = G
  &\Asnap(\wri{G\Img \allfields}) & \eqdef  & \rd{s_{G,\allfields}\Img f}, \rd{s_{G,\allfields}\Img g}, \dots\\
\snap(\ldots) & \eqdef & \True
  &\Asnap(\ldots) & \eqdef  & \emptyeff
\end{array}
\]
\end{definition}

Notice that $\Asnap$ omits $\lloc$ and uses the snapshot variables introduced by $\snap$.\footnote{\label{fn:msnapshot}The snapshot variables used should be distinct from each other,
distinct from the ones used in the original spec,
and also globally unique so that the local equivalence specs of different methods use different variables.
In the definition of $\LocEq$, where multiple method specs are considered,
we adopt the convention of naming snapshots for method $m$ as $s_{G,f}^m$
(and $\snap^m$, $\Asnap^m$ for short), to distinguish them from each other and from the snapshots
used in the conclusion of a  judgment.} Notice also that in the case $\Asnap(\wri{G\Img \allfields})$ a single snapshot variable $s_{G,\allfields}$ is used, but the image expression in $G\Img \allfields$ gets expanded to the constituent fields ($f, g, \dots$).

The following result confirms that $\Asnap$ serves the purpose of designating the writable locations from the perspective of the post-state.
It uses semantic notions from Sects.~\ref{sec:states} and~\ref{sec:effect}.
\begin{restatable}{lemma}{lemsnap}
\label{lem:snap}
\upshape
If $\tau\models\snap(\eff)$ and $\tau\allowTo\upsilon\models\eff$ then
$\wlocs(\tau,\eff)\setminus\rlocs(\upsilon,\delta^\oplus)=
\rlocs(\upsilon,\Asnap(\eff)\setminus\delta)$.
\end{restatable}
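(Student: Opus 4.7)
The plan is to reduce the claim to a purely set-theoretic identity about snapshots and then verify that identity by structural induction on $\eff$. Writing $\delta^\oplus = \delta,\rd{\lloc}$, observe first that $\rlocs(\upsilon,\delta^\oplus) = \rlocs(\upsilon,\delta)\cup\{\lloc\}$. Apply Lemma~\ref{lem:effectSubtract} to the right-hand side to get $\rlocs(\upsilon,\Asnap(\eff)\setminus\delta)=\rlocs(\upsilon,\Asnap(\eff))\setminus\rlocs(\upsilon,\delta)$. So it suffices to establish the following Key Claim:
\[
\rlocs(\upsilon,\Asnap(\eff)) \;=\; \wlocs(\tau,\eff)\setminus\{\lloc\}.
\]
Given the Key Claim, both sides of the goal reduce (after subtracting $\rlocs(\upsilon,\delta)$) to the common set $(\wlocs(\tau,\eff)\setminus\{\lloc\})\setminus\rlocs(\upsilon,\delta)$.

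The Key Claim I would prove by induction on the list structure of $\eff$. Since $\wlocs$, $\rlocs$, and $\Asnap$ all distribute over catenation, it reduces to the cases of atomic effects. For $\rd{LE}$, both sides are empty, since $\Asnap$ returns $\emptyeff$ and $\wlocs$ contributes nothing. For $\wri{x}$ with $x\not\equiv\lloc$, $\wlocs(\tau,\wri{x})=\{x\}$ and $\Asnap(\wri{x})=\rd{x}$, giving $\rlocs(\upsilon,\rd{x})=\{x\}$. For $\wri{\lloc}$, $\wlocs(\tau,\wri{\lloc})=\{\lloc\}$, which vanishes after subtracting $\{\lloc\}$, matching $\Asnap(\wri{\lloc})=\emptyeff$. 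The case $\wri{G\Img\allfields}$ expands to a conjunction over every declared field name and reduces to the remaining case.

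The only substantive case is $\wri{G\Img f}$. Here $\wlocs(\tau,\wri{G\Img f})=\{o.f\mid o\in\tau(G),o\neq\semNull\}$, while $\Asnap(\wri{G\Img f})=\rd{s_{G,f}\Img f}$ so $\rlocs(\upsilon,\rd{s_{G,f}\Img f})=\{o.f\mid o\in\upsilon(s_{G,f}),o\neq\semNull\}$. To identify these I would argue $\upsilon(s_{G,f})=\tau(G)$ in two steps. First, $s_{G,f}$ is a spec-only variable and effects contain no spec-only variables, so $s_{G,f}\notin\wlocs(\tau,\eff)$; therefore $\tau\allowTo\upsilon\models\eff$ forces $\written(\tau,\upsilon)$ to exclude $s_{G,f}$, giving $\upsilon(s_{G,f})=\tau(s_{G,f})$. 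Second, the snapshot hypothesis $\tau\models\snap(\eff)$ supplies the conjunct $s_{G,f}=G$, so $\tau(s_{G,f})=\tau(G)$. Chaining the equalities yields the required equality of location sets. Neither $\lloc$ nor its subtraction plays any role in this case, so the set identity is exact.

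The main subtlety, and the only place a delicate argument is needed, is precisely this snapshot preservation step: keeping clear that $s_{G,f}$ is syntactically absent from both $\eff$ and its subexpressions (so it cannot be mutated), while still being semantically available in both $\tau$ and $\upsilon$ as a spec-only variable introduced by the well-formedness convention for specs (Def.~\ref{def:wfspec}). Everything else is a routine unfolding of the definitions of $\snap$, $\Asnap$, $\wlocs$, and $\rlocs$.
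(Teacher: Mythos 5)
Your proof is correct, and it is organized differently from the paper's. The paper proves the identity directly: it splits on the kind of location (variable versus heap location), assumes $\eff$ and $\delta$ are in normal form, and for a heap location $o.f$ chases membership through the explicit region expression $(s_{G,f}\setminus H)\Img f$ that effect subtraction produces, using exactly the two facts you isolate ($\tau(s_{G,f})=\tau(G)$ from $\tau\models\snap(\eff)$, and $\tau(s_{G,f})=\upsilon(s_{G,f})$ because $\wri{s_{G,f}}\notin\eff$). You instead commute the subtraction out front via Lemma~\ref{lem:effectSubtract}, reducing everything to the $\delta$-free identity $\rlocs(\upsilon,\Asnap(\eff))=\wlocs(\tau,\eff)\setminus\{\lloc\}$, which you then verify atomic-effect by atomic-effect. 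Your decomposition buys modularity: the interaction with $\delta$ is handled once and for all by the subtraction lemma rather than being re-derived through the $G\setminus H$ computation, and the remaining Key Claim is a clean statement about $\Asnap$ alone. The paper's direct route avoids having to observe that Lemma~\ref{lem:effectSubtract} applies to effect expressions containing spec-only variables such as $s_{G,f}$ --- a point worth making explicit in your version, though the paper's own use of subtraction in the definition of $\locEq$ already sanctions it. Your identification of the snapshot-preservation step as the one substantive point matches the paper's proof exactly.
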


The following definition of $\locEq$ uses effect subtraction to avoid asserting agreement inside the given boundary,
in both pre and post.
For example, if $\eff$ includes $\wri{x},\wri{G\Img f}$
we convert to read effects and use the snapshot variable:
$\rd{x},\rd{s_{G,f}\Img f}$.  Then
$(\rd{x},\rd{s_{G,f}\Img f})\setminus \delta$ will remove $x$ if $\rd{x}$ is in $\delta$,
and result in $\rd{(s_{G,f}\setminus H)\Img f}$ if $\rd{H\Img f}$ is in $\delta$.

\begin{definition}[\textbf{local equivalence}]
\label{def:loceq}
For spec $\flowty{P}{Q}{\eff}$ and boundary $\delta$,
define relational spec
\(
\begin{array}[t]{lcl}
\ghostbox{$\locEq_\delta(\flowty{P}{Q}{\eff})$}\index{$\locEq$}\index{$\LocEq$} 
& \!\!\!\eqdef\!\!\! &
\rflowty{\Both{P}\land\Agr\eff^\leftarrow_\delta\land
          \Both{(s_{\lloc}=\lloc \land \snap(\eff))}}
{\later(\Both{Q}\land\Agr\eff^\rightarrow_\delta)}{\eff}
\\[1ex]
&& \mbox{where }
  \ghostbox{$\eff^\leftarrow_\delta$}\eqdef\reads(\eff)\setminus\delta^\oplus
  \mbox{ and }
  \ghostbox{$\eff^\rightarrow_\delta$}\eqdef (\rd{(\lloc\setminus s_{\lloc})\Img\allfields},\Asnap(\eff))\setminus\delta
\end{array}
\)
\\
For unary context $\Phi$,
define \ghostbox{$\LocEq_\delta(\Phi)$} $\eqdef (\Phi,\Phi,\Phi_2)$ where
$\Phi_2(m)$ is $\locEq_\delta(\Phi(m))$ for each $m\in\Phi$.
\end{definition}
If $\flowty{P}{Q}{\eff}$ and $\delta$ are wf in $\Gamma$ then
$\locEq_\delta(\flowty{P}{Q}{\eff})$ is wf in $\Gamma|\Gamma$ and has the same spec-only variables on both sides.

\begin{sloppypar}
Recall from Section~\ref{sec:encap} the Stack client  with precondition $P \eqdef c\in r\land \disj{r}{(pool\union pool\Img rep)}$
and frame $\eff \eqdef \rw{c,r,\lloc, r\Img val}$,
where the boundary $\delta$ is $\rd{pool, pool\Img \allfields, pool\Img rep\Img\allfields}$.
For the precondition, the reads are
$\rd{c},\rd{r},\rd{\lloc},\rd{r\Img val}$.
Subtracting $\delta^\oplus$ leaves the variables $c,r$ and is more interesting for $r\Img val$.
Expanding abbreviation $\allfields$ and discarding empty regions,
we are left with $\rd{ (r\setminus(pool\union pool\Img rep))\Img val }$.
So the precondition $\Agr\eff^\leftarrow_\delta$ is
$\Agr c \land \Agr r \land \Agr (r\setminus(pool\union pool\Img rep))\Img val$.
(In conjunction with $\Both P$, the formula $\Agr(r\setminus(pool\union pool\Img rep))\Img val$ is equivalent to $\Agr r \Img val$.)
There is a snapshot variable in precondition $s_{r,val} = r$, due to $\wri{r\Img val}$.
It is used in this conjunct of the $\Asnap$ part of the postcondition:
$\Agr (s_{r,val}\setminus(pool\union pool\Img rep))\Img val$.
\end{sloppypar}

\subsection{Relational proof rules and derivation of \rn{rMLink}}\label{sec:MLink}

\begin{figure}[t!]
\begin{footnotesize}
\begin{mathpar}

\rulerLink

\inferrule*[left=rWeave]{
  \Phi \rHPflowtr{}{}{\P}{DD}{\Q}{\eff|\eff'} \\
  CC \weave^* DD
}{  \Phi \rHPflowtr{}{}{\P}{CC}{\Q}{\eff|\eff'}  }

\rulerCall

\inferrule*[left=rAlloc]{ \ind{ \unioneff{L\in(\Phi),L\neq M}{\bnd(L)} }{\wri{x},\wri{\lloc}}
}{
\Phi\proves_M \syncbi{x:=\new{K}} : \rflowty{\True}{\later(x\eqbi x)}{\wri{x},\rw{\lloc}}
}

\inferrule*[left=rEmpPre]{}{
   \Phi\rHPflowtr{}{}{\False}{CC}{\Q}{\eff|\eff'}  }

\rulerLocEq

\inferrule*[left=rEmb]{
   \Phi_0\HPflowtr{}{}{P}{C}{Q}{\eff}\\
   \Phi_1\HPflowtr{}{}{P'}{C'}{Q'}{\eff'}\\
}{ \Phi\rHPflowtr{}{}{\leftF{P}\land\rightF{P'}}{\splitbi{C}{C'}}{\leftF{Q}\land\rightF{Q'}}{\eff|\eff'}
}

\inferrule*[left=rPoss]
{\Phi\rHPflowtr{}{}{\P}{CC}{\Q}{\eff|\eff'}
}
{ \Phi\rHPflowtr{}{}{\later\P}{CC}{\later\Q}{\eff|\eff'}
}

\rulerSOF


\inferrule*[left=rFrame]
{ \Phi \rHPflowtr{}{}{\P}{CC}{\Q}{\eff|\eff'} \\
  \P\models \fra{\effe|\effe'}{\R} \\
  \P\land\R \imp \leftF{\ind{\effe}{\eff}} \land \rightF{\ind{\effe'}{\eff'}}
}{
\Phi \rHPflowtr{}{}{\P\land \R}{CC}{\Q\land \R}{\eff|\eff'}
}

\rulerConseq

\inferrule*[left=rDisj]{
\Phi \rHPflowtr{}{}{\P_0}{CC}{\Q}{\eff|\eff'} \\
\Phi \rHPflowtr{}{}{\P_1}{CC}{\Q}{\eff|\eff'}
}{
\Phi \rHPflowtr{}{}{\P_0\lor\P_1}{CC}{\Q}{\eff|\eff'}
}

\inferrule*[left=rConj]{
\Phi \rHPflowtr{}{}{\P}{CC}{\Q_0}{\eff|\eff'} \\
\Phi \rHPflowtr{}{}{\P}{CC}{\Q_1}{\eff|\eff'}
}{
\Phi \rHPflowtr{}{}{\P}{CC}{\Q_0\land\Q_1}{\eff|\eff'}
}

\end{mathpar}
\end{footnotesize}
\vspace{-3ex}
\caption{Selected relational proof rules
(for others see appendix~Figure~\ref{fig:proofrulesRapp}).
The typing context $\Gamma|\Gamma'$ is unchanged thoughout, so omitted.
The current module is omitted in rules where it is the same in all the judgments and unconstrained.
}
\label{fig:proofrulesR}
\end{figure}

Selected proof rules are in Figure~\ref{fig:proofrulesR}.
For relational judgments, the validity conditions (Def.~\ref{def:validR}) have been carefully formulated to leverage the unary ones (Def.~\ref{def:valid}).
This obviates the need for rules like  \rn{CtxIntro} at the relational level.
Rule \rn{rCall}, for aligned calls using a relational spec, relies on unary premises to enforce the requisite encapsulation conditions.
The relational rules for bi-if and bi-while
have separator conditions to enforce encapsulation, taken straight from their unary rules (e.g., \rn{If} in Figure~\ref{fig:proofrulesU}).
The relational rules for bi-while and sequence
include an immunity condition for framing of their effects, again taken straight from the unary rules.

The linking rule, \rn{rLink}, relates a client command $C$ to itself using relations that imply
its executions can be aligned lockstep.  It can be instantiated with local equivalence specs
but also with more general specs that include hidden invariants and coupling on encapsulated state.
To allow this generality in a sound way, rule \rn{rLink} uses the following notion.

\begin{definition}[\textbf{covariant spec implication} \ghostbox{$\prePostImply$}]
\label{def:prePostImply}
\index{covariant spec implication}
Define
$(\rflowty{\R_0}{\S_0}{\eff_0|\eff'_0}) \prePostImply (\rflowty{\R_1}{\S_1}{\eff_1|\eff'_1}) $
iff $\R_0\imp\R_1$ and $\S_0\imp\S_1$ are valid and the effects are the same:
$\eff_0=\eff_1$ and $\eff'_0=\eff'_1$.
For contexts $\Phi$ and $\Psi$, define $\Phi\prePostImply\Psi$ to  mean
they have the same methods and $\prePostImply$ holds for the relational spec of each method.
\end{definition}
For example we have $\locEq_\delta(spec)\conjInv \M \prePostImply \locEq_\delta(spec)$ for any $\delta,spec,\M$.

In \rn{rLink}, side conditions constrain module imports, exactly as in unary \rn{Link},
as part of the enforcement of encapsulation.
As with \rn{Link}, some of the conditions merely express module structure.
The soundness proof for \rn{rLink} goes by induction on biprogram traces, similar to the soundness proof for unary \rn{Link}; the relational hypothesis can be used because the relevant context calls are aligned (see appendix~\ref{sec:app:link} and~\ref{sec:app:rLink}).

Rule \rn{rEmb} lifts unary judgments to a relational one.  It applies to arbitrary commands.
For example, it can be applied to the $sumpub$ program of (\ref{eq:sumpub}),
to prove the judgment about $\splitbi{sumpub}{sumpub}$ by lifting a unary spec
as described in Section~\ref{sec:weave}.
It is also needed to obtain relational judgments about assignments,
and it enables the use of unary specs in one-sided method calls.

For allocation, there needs to be a way to indicate when a pair of allocations are meant to be aligned;
this is the purpose of \rn{rAlloc}.
Using \rn{rConj}, \rn{rEmb}, the unary rule \rn{Alloc}, and the frame rules, one can add postconditions like $\Agr\sing{x}\Img f$ and freshness of $x$.  (Detailed derivations for freshness can be found in \RLIII~(Section~7.1)).
Like \rn{rCall}, rule \rn{rAlloc} does not have the minimal hypothesis context but rather allows an arbitrary one; this is needed because we do not have context introduction rules at the relational level.
To enforce encapsulation, \rn{rAlloc} has a side condition which simply says neither $x$ nor $\lloc$ occur in the boundaries of any models other than the current one.



Rule \rn{rLocEq} has a side condition about the unary judgment's frame condition:
the writes must be subsumed by the reads (subeffect judgment $P\models \wTor(\eff)\leq\reads(\eff)$).
This ensures that the precondition of the relational conclusion has agreement for writable locations.
The requirement that $C$ is let-free is needed in accord with Lemma~\ref{lem:rloceq}.

\begin{example}[how framing is used with \rn{rLocEq}]

Just as the unary axioms for assignments are ``small'' in the sense that they only describe the locations relevant to the command's behavior, we are interested in program equivalence described in terms of the relevant locations.
As an example, without methods, consider this valid judgment (omitting the module, which is irrelevant):
\[
\proves (x:=y.f; z:=w): y\neq 0 \leadsto true [ \eff ]
\]
where $\eff \eqdef \wri{x,z},\rd{w,y,y.f}$.
It should entail this relational one:
\[
\proves \Syncbi{x:=y.f; z:=w}:
\Both{(y\neq 0)} \land \Agr(y,w,\sing{y}\Img f)
\rspecSym \Both{\True}
\land \Agr(x,z) [\eff]
\]
Desugared, the precondition agreement is
$\Agr y \land \Agr w \land \Agr\sing{y}\Img f$.
The precondition only requires agreement on locations that are read.
The postcondition tells about the variables that are written.
In fact $w$ and $y$ are unchanged, and we can strengthen the postcondition to
\[
\proves \Syncbi{x:=y.f; z:=w}:
\Both{(y\neq 0)} \land \Agr(y,w,\sing{y}\Img f)
\rspecSym \Both{\True}
\land \Agr(x,z,y,w) [\eff]
\]
using the \rn{rFrame} rule, because $\Agr(y,w)$ is separate from the writes.
Rule \rn{rConseq} allows to strengthen the precondition by adding
the agreements $\Agr(u,\sing{y}\Img g)$:
\[
\proves \Syncbi{x:=y.f; z:=w}:
\Both{(y\neq 0)} \land \Agr(y,w,\sing{y}\Img f,u,\sing{y}\Img g)
\rspecSym \Both{\True}
\land \Agr(x,z,y,w) [\eff]
\]
Now rule \rn{rFrame} allows to carry these agreements over the command, because the locations $u$ and $y.g$ are separate from the write effects.
\[
\proves \Syncbi{x:=y.f; z:=w}:
\Both{(y\neq 0)} \land \Agr(y,w,\sing{y}\Img f,u,\sing{y}\Img g)
\rspecSym \Both{\True}
\land \Agr(x,z,y,w,u,\sing{y}\Img g) [\eff]
\]
In summary, the local equivalence spec expresses a program relation
in terms of only the locations readable and writable by the command.
Such equivalence can be extended to arbitrary other locations not touched by the command.
\qed\end{example}

Rule \rn{rSOF} follows the pattern of the unary \rn{SOF} in its use of $\conjInv\M$ from Def.~\ref{def:conjInv}.  It can only be instantiated with specs in standard form,
so that $\conjInv\M$ is defined.
It requires refperm  monotonicity of the coupling, i.e., $\N\imp\always\N$;
more on this in Section~\ref{sec:refpmono}.

\begin{figure}[t]
\begin{footnotesize}
\(
\inferrule*[left=rMLink]{
	\Phi\HPflowtr{}{\emptymod}{P}{C}{Q}{\eff}\\
	\Phi\conjInv \M \proves_M \splitbi{B}{B'} : \locEq_\delta(\Phi)(m)\conjInv\M \\
        \delta=\bnd(M) \\
	\Phi\conjInv \Left{\M} \proves_M B : \Phi(m)\conjInv\Left{\M} \\
	\Phi\conjInv \Right{\M} \proves_M B' : \Phi(m)\conjInv\Right{\M} \\
        M = \mdl(m) \\
	P\models \wTor(\eff)\leq\reads(\eff)\\
	\models \fra{ \delta| \delta }{ \M } \\
        \M\imp\always\M \\
        \mbox{$C$ is let-free} \\
	pre(\locEq_\delta(\flowty{P}{Q}{\eff})) \imp \M
}{
	\proves_{\emptymod}
        \Splitbi{ \letcom{m}{B}{C} }{ \letcom{m}{B'}{C} } : \locEq_\delta (\flowty{P}{Q}{\eff})
}
\)

\vspace*{4ex}

\(
\inferrule*[right={\tiny rWeave}]{ 
\inferrule*[right={\tiny rConseq}]{ 
  \inferrule*[right={\tiny rLink}]{
     \inferrule*[right={\tiny rSOF}]{
        \inferrule*[right={\tiny rLocEq}]{
            \Phi\HPflowtr{}{\emptymod}{P}{C}{Q}{\eff}
        }{
            \LocEq_\delta(\Phi) \proves_{\emptymod}
            \Syncbi{C} : \locEq_\delta(\flowty{P}{Q}{\eff})
        }}{
            \Psi \proves_{\emptymod}
                   \Syncbi{C} : \locEq_\delta (\flowty{P}{Q}{\eff}) \conjInv \M
        }
        \Psi\proves_M \splitbi{B}{B'} : \locEq_\delta(\Phi(m))\conjInv\M  \\ \vdots
        }{
        \proves_{\emptymod}
        \letcombi{m}{\splitbi{B}{B'}}{\Syncbi{C}} : \locEq_\delta (\flowty{P}{Q}{\eff}) \conjInv \M
        }
        }{
        \proves_{\emptymod}
        \letcombi{m}{\splitbi{B}{B'}}{\Syncbi{C}} : \locEq_\delta (\flowty{P}{Q}{\eff})
        }
}
{ \proves_{\emptymod}
  \Splitbi{ \letcom{m}{B}{C} }{ \letcom{m}{B'}{C} } : \locEq_\delta (\flowty{P}{Q}{\eff})
}
\)

\end{footnotesize}
\vspace*{-1ex}
\caption{\rn{rMLink} and its derivation,
where $\Psi$ abbreviates $\LocEq_\delta(\Phi)\conjInv\M$,
$\Phi$ specifies $m$, $\delta=\bnd(M)$, and $M=\mdl(m)$.
See text for details.
}
\label{fig:derivedmismatch}
\end{figure}

Figure~\ref{fig:derivedmismatch} presents the relational modular linking rule, \rn{rMLink},
and its derivation.
(Here specialized to a single method, i.e., $\dom(\Phi)=\{m\}$, for clarity).
The side conditions are
$P\models \wTor(\eff)\leq\reads(\eff)$ (for \rn{rLocEq});
$\models \fra{ \delta | \delta }{ \M }$ and
$\M\imp\always\M$
(for \rn{rSOF});
$\dom(\Phi)=\{m\}$ (for \rn{rLink});
and 
$pre(\locEq_\delta(\flowty{P}{Q}{\eff})) \imp \M$ (for \rn{rConseq},
to drop $\land\M$ from the precondition; of course $\land\M$ is also
dropped from postcondition).
For \rn{rWeave} we use the fact that
$\Splitbi{ \letcom{m}{B}{C} }{ \letcom{m}{B'}{C} }
    \weave^* \letcombi{m}{\splitbi{B}{B'}}{ \Syncbi{C} }$.
Vertical elipses in the derivation indicate that, in addition to the expected relational premise for $B$ and $B'$,
unary premises are required:
$\Phi\conjInv \Left{\M} \proves_M B : \Phi(m)\conjInv\Left{\M} $
and $\Phi\conjInv \Right{\M} \proves_M B' : \Phi(m)\conjInv\Right{\M}$.
These are required by \rn{rLink}, for technical reasons explained in its proof
(Section~\ref{sec:app:rLink}).

The implication
$pre(\locEq_\delta(\flowty{P}{Q}{\eff})) \imp \M$
refers to the precondition of local equivalence.
Typically, the implication
is valid because $P$ includes initial conditions that imply $\M$ just as in the
case of unary modular linking and module invariant.
This is the responsibility of the module developer, who defines $\M$,
shows its framing by the boundary, and shows refperm monotonicity of $\M$.

\begin{example}[Illustrating \rn{rMLink} with SSSP]\label{ex:SSSP}
We instantiate $M$ in the rule with \whyg{PQ} (Section~\ref{sec:Usyntax}) and
$\Phi$ with the specs of \whyg{PQ}'s public methods.
Let $\delta$ be \whyg{PQ}'s dynamic boundary 
$\rd{pool, pool\Img\allfields, pool\Img rep\Img\allfields}$.
We instantiate client $C$
with $C_{sssp}$, an implementation of Dijkstra's single-source shortest-paths
algorithm acting on global variables $gph$, $src$, and $wts$.
For simplicity, $gph$ is a variable of type ``mathematical graph''
for which we use an API supporting usual operations.
We assume the vertex set $V(gph)$ is an initial segment of naturals so the source vertex variable $src$ has type $\INT$.
Edges have positive integer weights.
The integer array $wts$, of length $|V(gph)|$ and allocated by the client, is for the output:
for every vertex $v\in V(gph)$,
$C_{sssp}$ computes in $wts[v]$ the weight of the shortest path from $src$ to $v$.

The unary spec for $C_{sssp}$ is $\flowty{P}{Q}{\eff}$ where
$P \eqdef  src\in V(gph)\land pool = \emptyset$; $Q\eqdef \True$; and
$\eff \eqdef
\rd{gph, src}, \rw{wts, pool, pool\Img\allfields, pool\Img rep\Img\allfields, \lloc}
$.
The trivial postcondition does not specify functional behavior but the spec is still useful.
The local equivalence spec
$\locEq_\delta(\flowty{P}{Q}{\eff})$ is $\rflowty{\R}{\later \S}{\eff}$ where
$\R \eqdef \Both{(src \in V(gph) \land pool = \emptyset \land s_\lloc = \lloc)} \land \Agr(wts, gph, src)$;
and $\S \eqdef \Agr(wts, (\lloc \setminus (s_\lloc \cup pool \cup pool\Img rep))\Img\allfields)$,
eliding details about spec-only variables apart from $s_\lloc$.
Here $s_\lloc$ snapshots $\lloc$ so fresh locations are those in $\lloc\setminus s_\lloc$.  This spec
ensures agreement on fresh locations that are not in \whyg{PQ}'s dynamic boundary.


The coupling $\M_{PQ}$ is
$\all{q\scol\code{Pqueue}\in pool | q\scol\code{Pqueue}\in pool}{
  \Agr q\imp
  \all{n\in q.rep | n\in q.rep}{ \Agr n \imp \ldots } }
$,
conjoined with the private invariants $I$ and $I'$
(eliding parts shown in Example~\ref{ex:PQagree}).
One side condition of \rn{rMLink} is
$pre(\locEq_\delta(\flowty{P}{Q}{\eff}))\imp\M_{PQ}$
which is easy to show:  expanding definitions,
the antecedent includes $\Both{(pool = \emptyset)}$
which implies the private invariants and the coupling relation.
The subeffect $P\models\wTor(\eff)\leq\reads(\eff)$
is immediate from the definition of $\eff$.
The framing judgment, $\models\fra{\delta | \delta}{\M_{PQ}}$, is easily proved by SMT,
as is refperm monotonicity of $\M_{PQ}$.
\qed\end{example}

\subsection{Refperm monotonicity, standard form, and agreement compatibility}\label{sec:refpmono}

For modular linking and most other purposes, we are concerned with specs in the standard form,
i.e., either $\rflowty{\R}{\later\S}{\effe}$ or $\rflowty{\R}{\S}{\effe}$
where $\R$ and $\S$ are $\later$-free.
In this section we consider the rules that give rise to other forms,
and related notions concerning formulas with $\later$.
It is possible to reformulate the logic to consider only standard form specs.
We choose the present formulation because some proof rules can be simpler and
more orthogonal.

For reasoning about sequential composition one wants to combine
judgments for specs $\rflowtyf{\P}{\later\Q}$ and $\rflowtyf{\Q}{\later\R}$
into a judgment for $\rflowtyf{\P}{\later\R}$ (omitting frame for clarity).
It is easy to derive a rule for specs of this form, from the more basic rule for sequence together
rules \rn{rPoss} and \rn{rConseq}.  From
$\rflowtyf{\Q}{\later\R}$ we get
$\rflowtyf{\later\Q}{\later\later\R}$ by \rn{rPoss}.
Then we get $\rflowtyf{\later\Q}{\later\R}$ by \rn{rConseq}, because $\later\later\R \iff \later\R$ is valid.
From $\rflowtyf{\P}{\later\Q}$ and $\rflowtyf{\later\Q}{\later\R}$
we get $\rflowtyf{\P}{\later\R}$ by the sequence rule.

Similarly, one can derive a relational rule for loops, with premises in standard form and
relational invariant $\Q$ that is $\later$-free.
In accord with the loop rule sketched as (\ref{eq:rWhileSimp}), we elide frame conditions, context,
and side conditions for immunity and encapsulation.
The derived rule looks like this:
\begin{equation}\label{eq:rWhileStd}
\inferrule
{
\proves CC: \rflowtyf{\Q\land\neg\P\land\neg\P'\land\leftF{E}\land\rightF{E'}}{\later\Q}
\\
\proves \splitbi{\Left{CC}}{\skipc} :
 \rflowtyf{\Q\land\P\land\leftF{E}}{\later\Q}
\\
\proves \splitbi{\skipc}{\Right{CC}} :
\rflowtyf{\Q\land\P'\land\rightF{E'}}{\later\Q}
\\
\Q\imp E\eqbi E' \lorbi (\P\land\leftF{E}) \lorbi (\P'\land\rightF{E'})
}{
\proves \whilecbiA{E\smallSplitSym E'}{\P\smallSplitSym \P'}{CC} :
\rflowtyf{\Q}{\later(\Q\land\leftF{\neg E}\land\rightF{\neg E'})}
}
\end{equation}
Given the premises, three applications of \rn{rPoss} yields
$CC: \rflowtyf{\later(\Q\land\neg\P\land\neg\P'\land\leftF{E}\land\rightF{E'})}{\later\later\Q}$,
$\splitbi{\Left{CC}}{\skipc} : \rflowtyf{\later(\Q\land\P\land\leftF{E})}{\later\later\Q}$
and
$\splitbi{\skipc}{\Right{CC}} : \rflowtyf{\later(\Q\land\P'\land\rightF{E'})}{\later\later\Q}$.
But $\later\later\Q$ is equivalent to $\later\Q$.
Furthermore,
$\leftF{E}$ and $\rightF{E'}$ are agreement-free and thus refperm independent.
Also $\P,\P'$ are refperm independent,
because they are agreement free by the wellformedness condition
mentioned at the end of Section~\ref{sec:progtype}.
So, using property (\ref{eq:refp-ind-dist}),
the precondition of the second judgment,
$\later(\Q\land\P\land\leftF{E})$
is equivalent to one where $\later$ is applied only to $\Q$, i.e.,
$\later\Q\land\P\land\leftF{E}$.
Similarly for the other two preconditions.
So by \rn{rConseq} we get
\begin{itemize}
\item $CC: \rflowtyf{\later\Q\land\neg\P\land\neg\P'\land\leftF{E}\land\rightF{E'}}{\later\Q}$
\item $\splitbi{\Left{CC}}{\skipc} :
 \rflowtyf{\later\Q\land\P\land\leftF{E}}{\later\Q}$
\item $\splitbi{\skipc}{\Right{CC}} :
\rflowtyf{\later\Q\land\P'\land\rightF{E'}}{\later\Q}$
\end{itemize}
With these we instantiate the rule (\ref{eq:rWhileSimp}) with $\later\Q$ for $\Q$,
which yields
$\whilecbiA{E\smallSplitSym E'}{\P\smallSplitSym \P'}{CC} :
\rflowtyf{\later\Q}{\later\Q\land\leftF{\neg E}\land\rightF{\neg E'}}$.
Finally, the implication $\Q\imp\later\Q$ is valid
and we can distribute refperm independent formulas under $\later$;
so using \rn{rConseq} we obtain the conclusion of (\ref{eq:rWhileStd}).

For a bi-while with false alignment guards, there is a derived rule with
a single premise $\proves CC: \rflowtyf{\Q\land\leftF{E}\land\rightF{E'}}{\later\Q}$.
It can be derived, using rule \rn{rEmpPre}.

\paragraph{Refperm monotonicity.}

Given a judgment
$\Phi \rHPflowtr{}{}{\P}{CC}{\later\Q}{\eff|\eff'}$,
rule \rn{rFrame} yields
$\Phi \rHPflowtr{}{}{\P\land \R}{CC}{\later\Q\land \R}{\eff|\eff'}$
which is not in the standard form.
But suppose $\R$ is refperm monotonic, i.e., $\R\imp\always\R$ is valid.
Then by (\ref{eq:mono-distrib}) we have
$\later\Q\land \R \imp \later(\Q\land\R)$.  So using \rn{rConseq}
we get this derived frame rule:
\[
\inferrule
{ \Phi \rHPflowtr{}{}{\P}{CC}{\later\Q}{\eff|\eff'} \\
  \P\models \fra{\effe|\effe'}{\R} \\
  \P\land\R \imp \leftF{\ind{\effe}{\eff}} \land \rightF{\ind{\effe'}{\eff'}} \\
  \R\imp\always\R
}{
\Phi \rHPflowtr{}{}{\P\land \R}{CC}{\later(\Q\land \R)}{\eff|\eff'}
}
\]
Refperm monotonicity is also a side condition
for the coupling relation in rule \rn{rSOF}.
In that rule, moving the coupling relation under $\later$ is done by the $\conjInv$ operation (Def.~\ref{def:conjInv}).

Agreement formulas are refperm monotonic, as are refperm independent formulas.
But negation does not preserve refperm monotonicity, and in particular a formula
of the form $\Agr x \imp \R$ is not refperm monotonic even if $\R$ is.
Such implications are used in our example couplings.
In particular, implication is used in the following idiomatic pattern:
\begin{equation}\label{eq:rmono}
G\eqbi G' \land
(\all{x\scol K\smallSplitSym x\scol K}{
\leftF{x\in G}\land\rightF{x\in G'}\land \Agr x \imp \R}).
\end{equation}
The second conjunct can be written in sugared form as
$\all{x\scol K \in G\smallSplitSym x\scol K\in G'}{\Agr x \imp \R}$.

\begin{restatable}[refperm monotonicity]{lemma}{lemrefpermsupp}
\label{lem:refpermsupp}
\upshape
(i) Any agreement formula is refperm monotonic and so is any refperm independent formula.
(ii) Refperm monotonicity is preserved by conjunction, disjunction,
and quantification.
(iii) Any formula of the form (\ref{eq:rmono}), with $\R$ refperm monotonic, is refperm monotonic.
\end{restatable}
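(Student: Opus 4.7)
Unwinding the definition, a formula $\P$ is refperm monotonic iff whenever $\sigma|\sigma'\models_\pi\P$ and $\rho\supseteq\pi$ is a refperm, $\sigma|\sigma'\models_\rho\P$. I will prove the three parts in order.

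For (i), the refperm-independent case is immediate since satisfaction does not depend on the refperm at all. For agreement formulas, $\Agr G\Img f$ is monotonic by invoking property~(\ref{eq:LagreeMono}) in both the forward direction (on $\pi$) and in the swapped form (on $\pi^{-1}$, using (\ref{eq:LagreeSym})); the variable case $\Agr x$ is a one-line check on $\rprel{\sigma(x)}{\sigma'(x)}$; and $F\eqbi F'$ follows similarly, the region case reducing to the observation that if $\pi$ already restricts to a total bijection between $\sigma(F)$ and $\sigma'(F')$ then so does any extension $\rho\supseteq\pi$, since $\sigma(F)\subseteq\dom(\pi)$ forces $\rho|_{\sigma(F)}=\pi|_{\sigma(F)}$.

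For (ii), the closure properties follow by a routine induction on the semantic clauses in Fig.~\ref{fig:relFmlaSem}. Conjunction and disjunction are immediate from the definition of satisfaction. For quantification $\all{x\scol T\smallSplitSym x'\scol T'}{\P}$, note that the quantifier ranges over allocated references in $\sigma,\sigma'$, which do not change when the refperm is extended; so monotonicity of $\P$ lifts pointwise.

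Part (iii) is the main obstacle, because implication does not in general preserve refperm monotonicity, and the antecedent of (\ref{eq:rmono}) contains the agreement $\Agr x$. The key insight is that the outer conjunct $G\eqbi G'$ pins down the refperm's behavior on $\sigma(G)$ and $\sigma'(G')$, making the antecedent stable under extension. Concretely, suppose $\sigma|\sigma'\models_\pi$~(\ref{eq:rmono}), and fix $\rho\supseteq\pi$. The first conjunct $G\eqbi G'$ is refperm monotonic by~(i), so it holds at $\rho$. For the second conjunct, take references $o,o'$ witnessing the bound variables and assume $\sigma|\sigma'\models_\rho \leftF{o\in G}\land\rightF{o'\in G'}\land \Agr x$; I want to show $\R$ holds of them at $\rho$. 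Since $G\eqbi G'$ holds at $\pi$, the r-value $\sigma(G)$ is contained in $\dom(\pi)$, so $\pi(o)$ is defined and lies in $\sigma'(G')$. Because $\rho\supseteq\pi$, we have $\rho(o)=\pi(o)$; combined with $\rho(o)=o'$ from $\Agr x$ at $\rho$, this gives $\pi(o)=o'$. Hence the antecedent holds at $\pi$ too, so by the original assumption $\sigma|\sigma'\models_\pi\R$ for these $o,o'$. Finally, since $\R$ is refperm monotonic by hypothesis, $\sigma|\sigma'\models_\rho\R$, which completes the case.

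The only subtle step is the one just given: turning an implication-governed agreement $\Agr x$ from $\rho$ back down to $\pi$ by using $G\eqbi G'$ as a ``pinning'' conjunct. Once that equivalence of antecedents is established, the conclusion is just an appeal to monotonicity of $\R$.
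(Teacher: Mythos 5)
Your proof is correct and follows essentially the same route as the paper's: part (i) via $\Lagree$-monotonicity and the observation that the relevant values are already in $\dom(\pi)$/$\rng(\pi)$, part (ii) by noting the quantifier range is refperm-independent, and part (iii) by using $G\eqbi G'$ to show the set of witnessing pairs is unchanged under extension (the paper phrases this as the set equality $Y=X$ where you argue pointwise that $\rho(o)=\pi(o)$). The only cosmetic difference is your appeal to (\ref{eq:LagreeSym}) in part (i), which is unnecessary — applying (\ref{eq:LagreeMono}) to $\pi^{-1}\subseteq\rho^{-1}$ suffices — but this does not affect correctness.
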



The coupling $\M_{uf}$ in Section~\ref{sec:eg:encapR} is refperm monotonic.
The embedded invariants $\leftF{I_{qf}}$ and $\rightF{I_{qu}}$ are refperm monotonic,
by (i) in the lemma, as is the consequent $eqPartition(\leftex{u.part},\rightex{u.part})$ in
the relation (\ref{eq:UFagree}).  So refperm monotonicity of $\M_{uf}$ follows using (ii) and (iii).

The coupling $\M_{PQ}$ in Example~\ref{ex:PQagree} is refperm monotonic.  To see why,
first note that (\ref{eq:rmono}) is equivalent to
\( G/K\eqbi G'/K \land
(\all{x\scol K \in G \smallSplitSym x\scol K\in G'}{\Agr x \imp \R}) \)
because a quantified variable of type $K$ ranges over allocated (non-null) references
of type $K$. So inside the quantification, $x\in G$ is equivalent to $x\in G/K$.
The relevant subformula of $\M_{PQ}$ is $q.rep/\code{Pnode} \eqbi q.rep/\code{Pnode}$.
Now we distil the following pattern from $\M_{PQ}$,
in which we assume $f:\Region$ and assume both $\Q$ and $\R$ are refperm monotonic.
\[ G\eqbi G \land
   (\all{x\scol K\in G \smallSplitSym x\scol K\in G}{\Agr x \imp
     \Q \land
     \sing{x}\Img f \eqbi \sing{x}\Img f \land
     (\all{y\scol L \in \sing{x}\Img f \smallSplitSym y\scol L \in \sing{x}\Img f}{
       \Agr y \imp \R })})
\]
By (iii) in the lemma
the subformula
$\sing{x}\Img f \eqbi \sing{x}\Img f \land
     (\all{y\scol L \in x.f \smallSplitSym y\scol L \in x.f}{
       \Agr y \imp \R })$
is refperm monotonic.
Then by (ii) we extend that to the conjunction with $\Q$.
Then by (iii) the displayed formula is refperm monotonic.
Note that this relies on agreement of the region values,
$\sing{x}\Img f \eqbi \sing{x}\Img f$,
not pairwise agreement $\Agr \sing{x}\Img f$ on field values.

This discussion provides guidelines for writing specs, but checking refperm monotonicity can be automated.
Validity of $\R\imp\always\R$ only involves universal quantification.
Unfolding semantic definitions, it says:
for all $\pi,\rho,\sigma,\sigma'$, if
$\sigma|\sigma'\models_\pi \R$ and $\rho\supseteq\pi$ then $\sigma|\sigma'\models_\rho \R$.
A straightforward encoding of this in our prototype suffices to show refperm monotonicity
of the example couplings.

\paragraph{Agreement compatibility.}

The last rule for which $\later$ is an issue is \rn{rConj}.  With premises of the form
$\rflowtyf{\P}{\later\Q_0}$ and $\rflowtyf{\P}{\later\Q_1}$ it yields
$\rflowtyf{\P}{\later\Q_0\land\later\Q_1}$.
To obtain the standard form $\rflowtyf{\P}{\later(\Q_0\land\Q_1)}$ one can use \rn{rConseq} but only
if $\Q_0$ and $\Q_1$ are \dt{agreement compatible} which means this implication is valid:
\begin{equation}\label{eq:agrCompat}
 \later\Q_0\land\later\Q_1 \imp \later(\Q_0\land\Q_1)
\end{equation}
An easy case is where $\Q_0$ or $\Q_1$ is refperm independent,
in which case agreement compatibility holds by (\ref{eq:refp-ind-dist}).
Formulas that depend on the refperm involve agreements, and for these we do not
have an easy characterization of agreement compatibility.

In the prototype, $\later$ is not explicit in specs.
A current refperm is witnessed in ghost state,
so even when using conjunctive splitting we effectively get $\later(\Q_0\land\Q_1)$ as desired.
So agreement compatibility is not an issue in the tool.
Morever our case studies show that agreement compatibility is achievable in practical examples where it is needed.
Please note that nontrivial formulas of the form (\ref{eq:agrCompat}) are not amenable to validity checking by SMT,
owing to the existential quantifier that underlies $\later$ in the consequent.\footnote{For the record,
earlier versions of this article had a slightly different \rn{rSOF}, with agreement compatibility
as a side condition for the coupling rather than refperm monotonicity
(arXiv:1910.14560v3).
}

We end this section with some examples regarding agreement compatibility.
But it is not needed later so it is safe to skip now to Section~\ref{sec:lockstep}.

As a first example, consider the agreements $\Agr (G/\code{List})\Img head$
and $\Agr (G/\code{Cell})\Img val$, where class \whyg{List} has field $head:\code{Node}$
and class \whyg{Cell} has field $val:\INT$.
The truth value of $\Agr (G/\code{List})\Img head$ depends only on references of type \whyg{List} and \whyg{Node}.
The truth value of $\Agr (G/\code{Cell})\Img val$ depends only on references of type \whyg{Cell}.
Refperms respect types, so extensions of a refperm to witness
$\later\Agr (G/\code{List})\Img head$
and
$\later \Agr (G/\code{Cell})\Img val$
can be combined to witness
$\later(\Agr (G/\code{List})\Img head \land \Agr (G/\code{Cell})\Img val)$.
Such considerations also apply in a case like
$\Both{\Type(G,\code{List})}\land \Agr G\Img head$ and
$\Both{\Type(H,\code{Cell})}\land \Agr H\Img val$.

Agreement compatibility of $\Q_0$ and $\Q_1$ may fail even if both formulas are
$\Q$ and $\R$ are refperm monotonic.
For example, the formula
\( \later (x\eqbi y) \land \later(x\eqbi z\land \rightF{z\neq y}) \)
is satisfiable
but $\later (x\eqbi y \land x\eqbi z\land \rightF{z\neq y})$ is not.
This example may give the impression that disequalities are the culprit but they are not.
Consider these two formulas:
$\later(x\eqbi x' \land y\eqbi y')$ and
$\later(x\eqbi y' \land y\eqbi x')$ (for distinct variables $x,x',y,y'$).
Both are satisfiable.
In fact their combination,
$\later(x\eqbi x' \land y\eqbi y' \land x\eqbi y' \land y\eqbi x')$,
is also satisfiable: it can hold when $\leftF{x=y}\land\rightF{x'=y'}$.
But the agreement-compatibility implication is not valid.
Consider $\sigma,\sigma',\pi$ where $x,y,x',y'$ have four distinct values, none of which are in the domain or range of $\pi$.
Then both $\later(x\eqbi x' \land y\eqbi y')$ and
$\later(x\eqbi y' \land y\eqbi x')$ are true
but $\later(x\eqbi x' \land y\eqbi y' \land x\eqbi y' \land y\eqbi x')$ is false.

One might guess $\Agr G\Img f$ is agreement compatible with $\Agr H\Img g$ where $f,g$ are distinct field names.
But consider $\Agr\sing{x}\Img f$ and $\Agr\sing{x}\Img g$ for distinct fields $f,g$ of some reference type.
Suppose $\sigma|\sigma'\models_\pi x\eqbi x$, so $\pi(\sigma(x))=\sigma'(x)$.
Suppose $\sigma(x.f)$ and $\sigma(x.g)$ are non-null values not in $\dom(\pi)$,
and likewise $\sigma'(x.f)$ and $\sigma'(x.g)$ are non-null values not in $\rng(\pi)$.
Then we have $\sigma|\sigma'\models_\pi \later\Agr\sing{x}\Img f \land \later\Agr\sing{x}\Img g$,
because $\pi$ can be extended to link $\sigma(x.f)$ with $\sigma'(x.f)$
and \emph{mut.\ mut.} for $g$.
However, if $\sigma(x.f) = \sigma(x.g)$ and $\sigma'(x.f) \neq \sigma'(x.g)$
then there is no single extension of $\pi$ that satisfies
$\Agr\sing{x}\Img f \land \Agr\sing{x}\Img g$.

Region disjointness $\disj{G}{H}$ does not entail agreement compatiblity of
$\Agr G\Img f$ with $\Agr H\Img f$.
Consider $\Agr\sing{x}\Img f$ and $\Agr\sing{y}\Img g$.
Suppose $\sigma|\sigma'\models_\pi x\eqbi x \land y\eqbi y \land \Both{(x\neq y)}$.
Similar to the preceding example, if $\sigma(x.f)=\sigma(y.g)$
and $\sigma'(x.f)\neq\sigma'(y.g)$ and none of the field values are in $\pi$, then
we have $\sigma|\sigma'\models_\pi \later\Agr\sing{x}\Img f \land \later\Agr\sing{y}\Img g$
but again there is no extension of $\pi$ that satisfies $\Agr\sing{x}\Img f \land \Agr\sing{y}\Img g$.

\subsection{Lockstep alignment lemma}\label{sec:lockstep}

The lockstep alignment lemma brings together the semantics of encapsulation
in the unary logic (Def.~\ref{def:valid}),
in which dependency is expressed in terms of two runs under a single unary context model,
with the biprogram semantics which involves two possibly different unary context models
as needed for linking with two module implementations.
The lemma says that, from states that agree on what may be read, a fully-aligned biprogram remains fully aligned through its execution, and maintains agreements sufficient to establish the postcondition of local equivalence---for any of its traces that satisfy the r-safe and respect conditions of Def.~\ref{def:valid}.
In light of trace projection (Lemma~\ref{lem:bi-to-unary}), it says a pair of unary executions can be aligned lockstep, with strong agreements asserted at each aligned pair of configurations.
The result does not rely on validity of a judgment---rather, we use this result to prove soundness of rules \rn{rLocEq}, \rn{rSOF}, and \rn{rLink}.

A number of subtleties in the unary semantics of encapsulation,
in the biprogram semantics, and in the definition of $\locEq$
are all motivated by difficulties in obtaining a result that is sufficiently strong
to support the soundness proofs for the three rules from which the modular relational
linking rule is derived (\rn{rLocEq}, \rn{rSOF}, and \rn{rLink}).

\begin{restatable}[lockstep alignment]{lemma}{lemrloceq}
\label{lem:rloceq}
\upshape
Suppose
\begin{list}{}{}
\item[(i)] $\Phi\prePostImply\LocEq_\delta(\Psi)$ and $\phi$ is a $\Phi$-model,
where $\delta = \unioneff{N\in\Psi,N\neq M}{\bnd(N)}$.
\item[(ii)] $\sigma|\sigma'\models_\pi pre(\locEq_\delta( \flowty{P}{Q}{\eff} ))$.
\item[(iii)] $T$ is a trace
$\configr{\Syncbi{C}}{\sigma}{\sigma'}{\_}{\_}
\biTranStar{\phi}
\configr{BB}{\tau}{\tau'}{\mu}{\mu'}$
and $C$ is let-free.
\item[(iv)]
Let $U,V$ be the projections of $T$.
Then $U$ (resp.\ $V$) is r-safe for $(\Phi_0,\eff,\sigma)$ (resp.\ for $(\Phi_1,\eff,\sigma')$)
and respects  $(\Phi_0,M,\phi_0,\eff,\sigma)$
(resp.\ $(\Phi_1,M,\phi_1,\eff,\sigma')$).
\end{list}
Then there are $B,\rho$ with
\begin{list}{}{}
\item[(v)] $BB\equiv\Syncbi{B}$, $\rho\supseteq\pi$, and $\mu=\mu'$,
\item[(vi)] $\Lagree(\tau,\tau',\rho, (\freshLocs(\sigma,\tau) \union \rlocs(\sigma,\eff)\union
\written(\sigma,\tau))\setminus\rlocs(\tau,\delta^\oplus))$, and
\item[(vii)]
$\Lagree(\tau',\tau,\rho^{-1},(\freshLocs(\sigma',\tau') \union \rlocs(\sigma',\eff)\union
\written(\sigma',\tau'))\setminus\rlocs(\tau',\delta^\oplus))$.
\end{list}
\end{restatable}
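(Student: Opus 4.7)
The plan is to proceed by induction on the length of the trace $T$. The base case (length one) is immediate: $BB \equiv \Syncbi{C}$, $\tau = \sigma$, $\tau' = \sigma'$, $\mu = \mu' = \_$; take $\rho := \pi$ and $B := C$. With empty fresh and written sets, conditions (vi)--(vii) reduce to $\Lagree$ on $\rlocs(\sigma,\eff)\setminus\rlocs(\sigma,\delta^\oplus)$, which the $\Agr\,\eff^\leftarrow_\delta$ conjunct of the $\locEq$ precondition supplies in both the left-skewed and right-skewed forms by unfolding the semantics of $\Agr\,G\Img f$ in Fig.~\ref{fig:relFmlaSem}.

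For the inductive step, assume the prefix ends at $\configr{\Syncbi{B_0}}{\tau_0}{\tau'_0}{\mu}{\mu}$ with extension $\rho_0 \supseteq \pi$ satisfying (vi)--(vii) by IH. I would case-split on $\Active(B_0)$. For a non-call atomic sync'd command, rule \rn{bSync} forces both unary sides to take the same step; the r-respect hypothesis (iv) supplies the standard two-run dependency property on each side, and combined with the IH agreements on readable locations this yields the new post-agreement (extending $\rho_0$ by the fresh reference pair in the allocation case, using quasi-determinacy of the allocator). For bi-if and bi-while (the latter with $\False$ alignment guards, as produced by $\Syncbi{-}$), the encapsulation side conditions of the unary rules \rn{If}/\rn{While} ensure $\ftpt(E)$ is separate from $\delta$, so the IH agreements together with (\ref{eq:footprintAgreement}) force $\sigma(E) = \sigma'(E)$ modulo $\rho_0$, both sides take the same branch, and the $\Syncbi{-}$ form is preserved. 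The bi-var case follows by choosing the same $w = w'$ via $\varfresh$ so that rule \rn{bVar} produces a sync'd continuation. Throughout, $\mu = \mu'$ persists because $C$ is let-free---a property visibly preserved by every transition in play, since no rule introduces a fresh $\keyw{let}$---and so no environment-extending \rn{bLet} step ever fires.

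The subtle case is a sync'd context call $\syncbi{m()}$. Since $C$ is let-free, $m$ must lie in $\dom(\phi)$, and the step uses the bi-model $\phi_2(m)$. Hypothesis (i) gives $\Phi(m) \prePostImply \locEq_\delta(\Psi(m))$, so the context-model obligations of Def.~\ref{def:ctxinterpRel}(b) applied at a matched pre-state pair deliver a matched post-state pair satisfying the $\locEq$ post-agreement. The work is to verify (a) that the IH agreements on the ambient read set $\eff^\leftarrow_\delta$ specialize to the agreements demanded by $pre(\locEq_\delta(\Psi(m)))$---which is where r-safety of the trace projections is used, since it ensures the method's read effect at call time is subsumed by $\eff$ up to freshly-allocated locations; (b) that the IH agreements on preexisting locations outside $m$'s frame are preserved across the call, using the call's write effect and Def.~\ref{def:ctxinterp}(c) for boundary monotonicity; and (c) that the new agreements on freshly allocated and written locations supplied by the call extend $\rho_0$ coherently. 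Lemma~\ref{lem:snap} about $\Asnap$ is the key combinator, letting one reinterpret pre-state write locations as post-state read locations via the snapshot machinery built into $\locEq$.

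The main obstacle will be composing these three strands of agreement---IH-inherited (framed by $\eff$), call-supplied (framed by $m$'s own spec), and the $\delta$-exemption---coherently at every context call while coherently extending the refperm at intervening allocations. The refperm-extension argument relies on fresh references lying outside $\dom(\rho_0) \cup \rng(\rho_0)$, which matches how $\phi_2(m)$ produces its extension internally; the framing definition (Def.~\ref{def:frmAgreeRel}) and the separator machinery inherited from the unary logic then transfer the off-frame IH agreements across the call. Once this bookkeeping is in place, the remaining cases are routine applications of the framing and r-respect lemmas.
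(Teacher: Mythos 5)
Your proposal is correct and follows essentially the same route as the paper's proof: induction on trace steps with a case split on the biprogram transition rules, the base case discharged by the $\Agr\,\eff^\leftarrow_\delta$ conjunct of the $\locEq$ precondition, non-call atomics handled by two instantiations of r-respect, sync'd context calls handled by r-safety plus the $\prePostImply\locEq_\delta$ hypothesis plus Lemma~\ref{lem:snap}, and let-freeness ruling out environment changes. The only differences are cosmetic: the paper names the specific combinators you gesture at for composing the agreement strands (balanced symmetry, Lemma~\ref{lem:fresh-sym}, and preservation of agreement, Lemma~\ref{lem:selfframing-agreement2}, the latter being exactly what carries the $\eff$-framed inherited agreements and the reversed condition (vii) across each step), and your argument that the bi-if/bi-while tests must agree is unnecessary here since hypothesis (iii) already supplies a non-faulting trace, so those cases reduce to ``nothing in the state changes.''
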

In words, the Lemma says that if we have fully aligned code, unary encapsulation (iv), initial agreement (ii),
and relational specs that imply the local equivalence spec (but may be strengthened to include hidden invariants and coupling) (i),
then the code remains fully aligned at every step, and agreements outside encapsulated state are preserved.
Condition (v) can be strengthened to say $\mu$ and $\mu'$ are empty, which holds owing to the assumption that $C$ is let-free.  We keep this formulation because it suffices
and shows what we expect for the extensions discussed in Section~\ref{sec:nestedX}.

The lemma is proved by induction on steps, maintaining (v)--(vii),
using several technical lemmas for preservation of agreement (in appendix Section~\ref{sec:app:lemrloceq}).

Lemma~\ref{lem:rloceq} resembles Lemma~\ref{lem:readeff} but has significant differences.
Lemma~\ref{lem:rloceq} is for client code outside boundaries, in a setting where there are different implementations of methods.
Lemma~\ref{lem:readeff} is for code potentially inside boundaries, but relating two runs of exactly the same program.
In the proofs of both results, r-safety helps ensure that the small-step dependency embodied
by r-respect implies an end-to-end dependency condition.

\subsection{Nested linking}\label{sec:nestedX}

The unary and relational linking rules allow simultaneous linking of multiple modules,
for example linking $MST$ with the \whyg{PQ} and \whyg{Graph} modules.
In \RLII\ (Section~9), a modular linking rule is derived for simultaneous linking of two modules with mutually recursive methods, each respecting the other's boundary.
That can be done with
both the unary and relational rules in this article: the judgments for correctness of the bodies are extended with the other module's invariant or coupling (using \rn{SOF} or \rn{rSOF}) and then
linked (using \rn{Link} or \rn{rLink}).
In \RLII\ and the unary logic in this article, it is also possible for linking to be nested
(shown by examples in Section~2.4 and~8.4 of \RLII).
However, there is a limitation of the relational rules with nested use of bi-let.

To set the stage, we carry out the derivation of modular linking as in Figure~\ref{fig:unaryMismatch} but with
a second module in context, to which we then apply modular linking.
Methods of $\Phi$ may be used in both the client $C$ and the implementation $B$.
The implementation of $\Phi$ has its own internal state with invariant $J$.

\begin{footnotesize}
\[ 
\inferrule*
{
  \inferrule*
  {
    \inferrule*
    {
      \inferrule*
      {
          \Phi,\Theta \proves_{\emptymod} C: \flowty{P}{Q}{\eff}
      }{
          \Phi,(\Theta \conjInv I)  \proves_{\emptymod}  C : (\flowty{P}{Q}{\eff}) \conjInv I
      }
    \\ 
      \Phi,(\Theta\conjInv I) \proves_M B : \Theta(m)\conjInv I
    }{ 
      \Phi  \proves_{\emptymod} \letcom{m}{B}{C} :  (\flowty{P}{Q}{\eff}) \conjInv I
    } 
  }{
      \Phi\conjInv J  \proves_{\emptymod} \letcom{m}{B}{C} :  (\flowty{P}{Q}{\eff}) \conjInv I \conjInv J
  }
  \\
      \Phi\conjInv J \proves_N D : \Phi(n)\conjInv J
  }{
      \proves_{\emptymod} \letcom{n}{D}{\letcom{m}{B}{C}} :  (\flowty{P}{Q}{\eff}) \conjInv I \conjInv J
  }
\]
\end{footnotesize}
We would like the relational analog of this derivation,
so that with coupling $\M$ for module $M$ and coupling $\N$ for $N$
one could obtain the judgment
\[ \proves_{\emptymod}
 \letcombi{n}{\splitbi{D}{D'}}{
    \letcombi{m}{\splitbi{B}{B'}}{\Syncbi{C}}} : \locEq_\delta (\flowty{P}{Q}{\eff}) \conjInv \M \conjInv \N
\]
Following the pattern of the derivation above, one would like to apply \rn{rSOF} for $\N$ to the judgment
$\LocEq_\delta(\Phi)\proves_{\emptymod}
        \letcombi{m}{\splitbi{B}{B'}}{\Syncbi{C}} : \locEq_\delta (\flowtyC{P}{Q}{\eff}) \conjInv \M$,
where $\delta=\bnd(M),\bnd(N)$.
However, the current  \rn{rSOF} and \rn{rLink} are only for fully aligned client code,
and the ``client'' body $\letcombi{m}{\splitbi{B}{B'}}{\Syncbi{C}}$ of the outer let is not
in that form.
Soundness of \rn{rSOF} hinges on the calls being sync'd---but in the program
$\letcombi{m}{\splitbi{B}{B'}}{\Syncbi{C}}$, calls to $n$ (the method of $\Phi$)
from $B$ or $B'$ are not sync'd,
because $m()$ steps to $\splitbi{B}{B'}$ which has no sync'd calls.
The restriction of bi-let to separate unary commands simplifies the technical development considerably.
But we would like to generalize the bi-let form to allow
$\letcombi{m}{BB}{CC}$ where $BB$ is sufficiently woven that all its calls are sync'd,
and $CC$ is a nest of such bi-lets enclosing a fully aligned client.
This requires Lemma~\ref{lem:rloceq} to be generalized to account for such biprogram computations.
The Lemma relies on agreements derived from unary Encap, but this is no longer sufficient
to handle computations with sub-computations that are not fully aligned.
The premises of \rn{rSOF} and \rn{rLink} entail that such computations can
make sync'd calls, but this fact is not retained in the semantics of relational judgments.
Details of our solution 
are beyond the scope of this article.


\subsection{Unconditional equivalence transformations}\label{sec:uequiv}

An important feature of relational logic which is introduced in Banerjee et al.~\cite{BanerjeeNN16} (long version) is unconditional rewrites.
These are correctness-preserving transformations of control structure in commands that
enable the use of the bi-if and bi-while forms for programs with differing control structure.
An example is the equivalence
$\whilec{E}{C} \uequiv
\whilec{E}{( C; \whilec{E\land E0}{C})}$.
Banerjee et al.~use this and another loop unrolling equivalence to prove correctness of a loop tiling optimization.  In that proof the loop iterations are aligned lockstep, i.e.,
rule \rn{rWhile} and a bi-while with false alignment guards.

In the cited work, it suffices to define $\uequiv$ as a safety-preserving trace equivalence.
These sort of transformations do not alter the series of states reached
and which atomic commands are executed.
From the same initial state and environment, the computations proceed almost in step-by-step correspondence, the exceptions being different manipulation of the control state in some cases,
which leaves the (data) state and method environment unchanged.
As a result, correctness is preserved in the sense that
if  $C\uequiv D $ then
$\Phi\HVflowtr{}{}{P}{C}{Q}{\eff}$ implies
$\Phi\HVflowtr{}{}{P}{D}{Q}{\eff}$.
Moreover
$\Phi\rHVflowtr{}{}{\P}{\splitbi{C}{C'}}{\Q}{\eff|\eff'}$
implies
$\Phi\rHVflowtr{}{}{\P}{\splitbi{D}{C'}}{\Q}{\eff|\eff'}$
(and the same on the right side).
However, to cater for the stronger conditions of valid unary and relational judgments in the present work (Defs.~\ref{def:valid} and~\ref{def:validR}), a stronger notion is needed because those conditions
refer to the control.

As an example, suppose we have a valid correctness judgment
$\Phi\proves_M \whilec{E}{C}: \flowty{P}{Q}{\eff}$ and consider
the form $\whilec{E}{( C; \whilec{E\land E0}{C})}$.
If $E0$ reads some variable that is encapsulated by a module, different from $M$, in $\Phi$,
it may violate the Encap condition of Def.~\ref{def:valid}
and invalidate the judgment
$\Phi\proves_M \whilec{E}{( C; \whilec{E\land E0}{C})} : \flowty{P}{Q}{\eff}$.
For the equivalences considered here, which involve rearranging control structure,
branch conditions turn out to be the main complication.
Details of our formalization of $\uequiv$ and its rules are beyond the scope of this article.

\section{Remarks on case studies}\label{sec:cases}

\WhyRel\ is a proof-of-principle prototype relational verifier which we developed and used
to investigate the applicability of the logic and its amenability to automation.
The tool supports general relational verification and includes support for relational modular linking.
It has been used to specify and verify a number of examples.
This includes examples discussed in earlier sections:
Kruskal's $MST$ as client of two implementations of union-find;
Dijkstra's shortest-path algorithm as client of two implementations of \whyg{PQ};
and the $tabulate$ and $sumpub$ examples.
We have done other examples taken from recent literature on relational verification,
including information flow, other relational properties, and equivalence for program transformations.
A current version of the prototype and examples are available open source.\footnote{\url{https://github.com/dnaumann/RelRL}}
In addition to the following highlights and the documentation in the software distribution,
further information is available in the thesis of Nikouei~\cite{Nikouei19}
(but note it describes a previous implementation of \WhyRel).

The \WhyRel\ prototype is based on the Why3 platform.\footnote{\url{why3.lri.fr}}.
Why3 serves as an intermediate verification language to which \WhyRel\ translates specs and programs.
Why3 generates verification conditions for pre-post specs and programs in a first-order fragment of ML (WhyML) without shared references, and discharges those conditions by orchestrating calls to automated provers and proof assistants.
Like Why3, \WhyRel\ is ``auto-active''~\cite{LeinoM10}, requiring some user interaction while leveraging
automated provers especially SMT solvers.
Our translation involves substantial encoding, because Why3 does not support shared mutable objects, dynamic frames,
or hiding of invariants.
In this section we describe the encoding, the user interaction needed, and our experience with the case studies.

The language supported by \WhyRel\ extends the language of Figure~\ref{fig:bnf} and Section~\ref{sec:modules}
with arrays, parameters/results, and mathematical data types (defined in Why3 theories).
Module interfaces are separate from module implementations and class fields can have module scope.
The spec language is like that of the article (with usual keywords \whyg{requires}, \whyg{ensures}, etc.),
extended with ``old'' expressions, assertions, loop invariants, assumptions, and explicit ghost declarations.
\WhyRel\ effectively works with relational specs in standard form:
the possibility modal ($\later$) is not used and instead a ghost refperm is updated by
the \whyg{connect-with} ghost operation described in Section~\ref{sec:eg:relverif}.

\WhyRel\ has three main capabilities: unary verification, relational verification, and relational verification with modular linking.
The user provides module interfaces (class declarations, method specs, and boundaries which may be empty)
and unary module implementations which can import Why3 theories providing mathematical types (like lists, graphs, and partitions used in our case studies).
These theories can include lemmas, which get proved by Why3.
The user can also state lemmas in our source language, e.g.,
useful consequences of public invariants.
For relational verification, the user provides a module with biprograms,
which we call a bimodule.  Each bimodule relates two unary modules.
\WhyRel\ checks, for each bimethod in a bimodule, that its
unary projections conform to the (unary) programs being related.
This ensures the biprogram can be constructed by weaving those
unary programs (Lemma~\ref{lem:biprojections}).  Thus, verification of the
biprogram implies a relation between the unary programs, as per the weaving rule (\ref{eq:weaveRule}).

For relational modular linking of a client program and two versions of
a module the client imports,
\WhyRel\ can generate the local equivalence specs for the module methods.
The user can edit the specs to add the chosen coupling relation,
and use these in a bimodule for relating the module methods.
\WhyRel\ also generates the side conditions of rule \rn{rMLink}
which include framing of invariants/coupling by the boundary and refperm monotonicity of the coupling.

The user provides specs and also loop invariants and loop frame conditions; for hiding, the user provides
boundaries, private invariants, and coupling relations.
Once \WhyRel\ has translated the specs and programs/biprograms to WhyML,
Why3 generates verification conditions.
The user guides Why3 to prove these, by applying tactics (called transformations) like splitting conjunctions.
To complete a verification the user typically has to assert intermediate facts
and sometimes state and prove lemmas (expressed in our source language).
In our case studies, the SMT-solvers Alt-Ergo, Z3, and CVC4 discharge all obligations automatically.

\paragraph{Translation to Why3.}

We encode methods and specs as Why3 functions which have specs.
Why3 is procedure-modular: it verifies each function assuming the specs of the ones it imports,
which corresponds to a hypothesis context in our logic.
Why3 provides ghost annotations and checks that ghost code terminates and does not interfere with the underlying program.
We use this feature to mark the allocation map, which is part of our heap model,
and translate source code ghost state to Why3 ghost state.
Why3 is sound under idealizations also made in our logic: unbounded integers and unbounded maps (which we used to model unbounded heap).

The Why3 language (including WhyML) does not include shared mutable objects.
So we use mutable records and maps to explicitly model the heap
using the standard field-as-array representation,
with references as an uninterpreted type and an extra field, \whyg{alloct}, for allocation
to model the $\lloc$ variable and typing of references.
WhyML has ML-style references constrained by static analysis that precludes aliasing;
we use those to encode local variables.
Invariants of source language semantics, like the absence of dangling pointers,
are encoded using Why3's invariant feature for the data type of states.
(States have the heap and global variables.)
Common elements of translation are included in a \WhyRel\ standard library that
includes lemmas about operations on regions, which aids automated proving.
Why3 specs include coarse grained \whyg{reads} and \whyg{writes} clauses
enforced by simple syntactic analysis, which is not suited to our purposes.
To encode the stateful frame conditions of our logic, \WhyRel\ expresses
write effects semantically, in universally quantified postconditions using ``old'' expressions.
In accord with Def.~\ref{def:valid}, read effects are checked together with the encapsulation checks,
discussed below.

\begin{figure}[t]
  \begin{subfigure}{0.48\textwidth}
\begin{lstlisting}[basicstyle=\linespread{0.5}\footnotesize\sffamily]
meth sum (self:List | self:List) : (int | int)
  requires { both self <> null }
  requires { exists ls:int list | ls:int list.
                      both listpub(self.head,ls) /\ ls =:= ls }
  ensures  { agree result }
= var ghost xs : int list | ghost xs : int list in
   /* Initial values of math type variables are havoc'd;
      assume they witness the existential
      in the precondition */
   assume { both listpub(self.head,xs) };
   /* Initial value of result:int is 0 */
   var p : Node | p : Node in
   |_ p := self.head _|;
   while (p <> null) | (p <> null) . *<| not p.pub *<] | [> not p.pub |>
     invariant { both listpub(p,xs) /\ agree xs /\ agree result }
     ( if p.pub then
         result := result + p.value; xs := tl(xs);
       fi; p := p.nxt
     | if p.pub then
         result := result + p.value; xs := tl(xs);
       fi; p := p.nxt )
   od;
\end{lstlisting}
  \end{subfigure}
  \begin{subfigure}{0.48\textwidth}
\begin{lstlisting}[language=whythree,basicstyle=\linespread{0.5}\footnotesize\sffamily,captionpos=b]
let sum (msigma_l msigma_r: state) (mpi: refperm) 
           (self_l self_r: reference) : (int, int)
  requires { self_l <> null /\ msigma_l.alloct[self_l] = List }
  requires { self_r <> null /\ msigma_r.alloct[self_r] = List }
  requires { exists ls_l, ls_r: int list.
                    listpub msigma_l msigma_l.heap.head[self_l] ls_l
                /\ listpub msigma_r msigma_r.heap.head[self_r] ls_r
                /\ ls_l = ls_r }
  ensures { fst result = snd result }
= let ref result_l = 0 in (* default value for int *)
   let ref result_r = 0 in
   (* variables of math type initialized using any *)
   let ghost ref xs_l = any (int list) in
   let ghost ref xs_r = any (int list) in
   assume { listpub msigma_l msigma_l.heap.head[self_l] xs_l
               /\ listpub msigma_r msigma_r.heap.head[self_r] xs_r }
   let ref p_l = msigma_l.heap.head[self_l] in
   let ref p_r = msigma_r.heap.head[self_r] in
   while (p_l <> null) || (p_r <> null) do
     invariant { listpub msigma_l p_l xs_l /\ listpub msigma_r p_r xs_r }
     invariant { xs_l = xs_r /\ result_l = result_r }
     invariant { (* generated using alignment guards *)
                        p_l <> null /\ not msigma_l.heap.pub[p_l]
                  \/ p_r <> null /\ not msigma_r.heap.pub[p_r]
                  \/ p_l <> null /\ p_r <> null
                  \/ p_l = null  /\ p_r = null }
     if (p_l <> null and not msigma_l.heap.pub[p_l]) then (* left *)
         p_l <- msigma_l.heap.nxt[p_l]
     else begin
         if (p_r <> null and not msigma_r.heap.pub[p_r]) then (* right *)
             p_r <- msigma_r.heap.nxt[p_r]
         else begin (* lockstep *)
             result_l <- result_l + msigma_l.heap.value[p_l];
             xs_l <- tl xs_l;
             p_l <-  msigma_l.heap.nxt[p_l];
             result_r <- result_r + msigma_r.heap.value[p_r];
             xs_r <- tl xs_r;
             p_r <- msigma_r.heap.nxt[p_r]
         end;
     end;
   done; (result_l, result_r)
\end{lstlisting}
  \end{subfigure}
\caption{\WhyRel\ source biprogram for $sumpub$ and translated WhyML (eliding frame conditions).}\label{fig:sumpub-whyrel}
\end{figure}

\WhyRel\ translates a biprogram to a WhyML function acting on a pair of states
together with the current refperm.  Relational pre- and post-conditions
are translated to WhyML requires/ensures.
\WhyRel\ represents a refperm by a pair of maps subject to universally quantified formulas
that express bijectivity and are type-respecting.
As an example, Figure~\ref{fig:sumpub-whyrel} shows our source code
for $sumpub$ biprogram (\ref{eq:bi-sumpub}),
together with its translation to WhyML.
The WhyML loop body reflects the semantics of loop alignment guards.
For readability, some dead code has been removed from the actual translation.

\paragraph{Checking read effects and encapsulation.}
By contrast with the check of write effects, \WhyRel\ does not directly check the relational semantics
of read effects (r-respect in Def.~\ref{def:valid}).  Rather, it performs local checks based on the relevant conditions in the proof rules of our logic.
When used for relational modular linking of modules with nontrivial boundaries,
\WhyRel\ must also enforce encapsulation, that is, the conditions on reads of if, while, bi-if, and bi-while, as well as the conditions of the context introduction rules used for atomic commands.
These checks involve computing separator formulas,
following a preliminary step that normalizes dynamic boundaries and expands the $\allfields$ datagroup to concrete fields.
The tool immediately reports a violation when variables are required to be distinct but are not,
or are read but not included in the read effect.
For separation of heap locations, it generates disjointness formulas (in accord with Figure~\ref{fig:sepdef})
in assert statements added to the generated code where the encap checks should be made.
For reads of heap locations, it asserts an inclusion based on the reads allowed by the frame condition.
A snapshot of the initial state is used so the frame condition can be interpreted
where it should be; the asserted inclusion is at the point in the code where the read takes place,
which may follow updates to the state.

When true, the disjointness and inclusion assertions for reads and encapsulation are usually proved without any need for user interaction.  The user does see the assertions among the proof obligations enumerated by Why3.  The user does not compute separators or effect subtractions, those are done by \WhyRel.

\paragraph{Modular linking.}

In terms of the logic, Why3 verifies the premises of the standard linking rule
(\rn{Link} in Figure~\ref{fig:proofrulesU}) so the contracts assumed by a procedure's callers are the ones
for which the procedure's implementation is verified.
\WhyRel\ generates code that expresses hiding,
i.e., the premises of our modular linking rules: the implementations get to assume the private invariant (or coupling, in the relational case) and must maintain it.
For this to be sound, \WhyRel\ checks encapsulation, as described above,
and generates Why3 lemmas to encode the additional proof obligations.

\begin{figure}[t]
\begin{lstlisting}[language=whythree,basicstyle=\linespread{0.5}\footnotesize\sffamily,captionpos=b]
lemma boundary_frames_QuickFind_invariant :
  forall msigma: state, mtau: state, mpi: refperm.
    okRefperm msigma mtau mpi /\ identityRefperm mpi (domain msigma.alloct) (domain mtau.alloct) ->
    idRgn mpi msigma.pool mtau.pool -> (* msigma(pool) =:= mtau(pool) *)
    agreeAny msigma mtau mpi (union msigma.pool (imgRep msigma msigma.pool)) ->
    ufPriv msigma -> (* private invariant *?\color{dkgreen}{${I}_{uf}$}?* *)
    ufPriv mtau

lemma boundary_frames_UnionFind_coupling :
  forall msigma: state, mtau: state, msigma': state, mtau': state, mpi: refperm, mpi': refperm, mrho: refperm.
    okRefperm msigma mtau mpi /\ identityRefperm mpi (domain msigma.alloct) (domain mtau.alloct) ->
    okRefperm msigma' mtau' mpi' /\ identityRefperm mpi' (domain msigma'.alloct) (domain mtau'.alloct) ->
    okRefperm msigma msigma' mrho /\ okRefperm mtau mtau' mrho ->
    idRgn mpi msigma.pool mtau.pool -> (* msigma(pool) =:= mtau(pool) *)
    agreeAny msigma mtau mpi (union msigma.pool (imgRep msigma msigma.pool)) ->
    idRgn mpi' msigma'.pool mtau'.pool -> (* msigma'(pool) =:= mtau'(pool) *)
    agreeAny msigma' mtau' mpi' (union msigma'.pool (imgRep msigma' msigma'.pool)) ->
    ufCoupling msigma msigma' mrho -> (* coupling relation *?\color{dkgreen}{$\M_{uf}$}?* *)
    ufCoupling mtau mtau' mrho
\end{lstlisting}
\caption{Framing judgments as lemmas.}\label{fig:framelem}
\end{figure}

For unary hiding, the private invariant should be framed by the module boundary;
this obligation is generated in the form of a lemma that expresses the framing semantics (\ref{eq:frmAgree}).
At the same time, \WhyRel\ generates the obligation that the client precondition implies the private invariant.
For relational hiding, the coupling invariant should be framed, on both left and right, by the boundary
(using relational framing semantics Def.~\ref{def:frmAgreeRel}).
Example framing lemmas are in Figure~\ref{fig:framelem}.

Another obligation generated in the form of a lemma is that
the coupling should be refperm monotonic:
\begin{lstlisting}[language=whythree,basicstyle=\linespread{0.5}\small\sffamily,captionpos=b]
lemma ufCoupling_is_monotonic :
  forall msigma: state, mtau: state, mpi: refperm.
    okRefperm msigma mtau mpi -> ufCoupling msigma mtau mpi ->
    forall mrho: refperm. okRefperm msigma mtau mrho -> extends mpi mrho -> ufCoupling msigma mtau mrho
\end{lstlisting}

\WhyRel\ can generate a local equivalence spec, given boundaries and a unary spec;
it is generated as source code, which the user can include in a biprogram.
Local equivalence specs are defined in Section~\ref{sec:locEq} and examples appear in Section~\ref{sec:biprograms}.

\emph{Experience and findings.}
Despite achieving a high level of automation based on SMT solvers, auto-active tools
require user effort and intelligence to devise specs and find loop invariants.
Here, there is the additional task of writing a biprogram to express an alignment
for which straightforward invariants suffice.
(See Section~\ref{sec:related} for work on automated inference of alignments.)
Use of dynamic frames entails extensive reasoning about set expressions, set disjointness and containment.
Aided by some lemmas in the \WhyRel\ standard library, the solvers have little difficulty in this regard;
the requisite reasoning about refperms also works fine.
In most of our examples, the user needs to do a few clicks in Why3 to invoke the tactic to split conjunctions,
and sometimes introduce assertions or lemmas that aid the solvers in finding proofs.
Why3's assert tactic is helpful for this.  This sort of interaction is typical in ordinary use of Why3.

For $sumpub$ we provide a couple of lemmas about the $listpub$ relation, proved using the rule-induction transformation (i.e., a Why3 induction rule, dispatched to SMT).
For the SSSP biprogram we needed a number of asserts in the code (plus assert tactics);
but not many for the other examples.
Our priority has been to complete illustrative examples and a prototype that can be used by interested researchers; we have not tried to find optimal specs and minimal use of Why3 tactics.
We are not proposing the concrete syntax for use in practice, nor does the tool
provide sufficient error handling to be usable by software engineers.
Moreover, although the prototype implements some syntax sugar relative to the formal development,
the current language has desugared loads and stores, which entails the use of annoyingly many
temporary variables (sugared in examples in the article).

Finally, Why3 generates many proof obligations about the state being well formed, which is actually
guaranteed by type-checking of source programs.  
The obligations are simple to prove but it is still one more thing to do.
It should be possible to eliminate these through more sophisticated use of Why3's abstraction mechanisms.
In BoogiePL these pointless obligations could be avoided using ``free requires/ensures'',
and we could achieve the same effect using Why3 assumptions instead of type invariants;
but the latter make it easier to read the generated WhyML.

Why3 records sessions in order to replay the user's choices of provers and tactics to apply.
Replaying the sessions for our big case studies takes on the order of an hour or more of prover time,
though clock time is a little faster owing to parallelism.
The smaller examples take minutes or less.  
Less time would be needed if we used assumptions to avoid pointless checks about states being well formed.
Significantly more automation could be achieved if Why3 enabled scripting of routine choices of tactics.

In summary, the formal development in preceding sections shows that general relational reasoning with encapsulation, for first-order programs, can be carried out using only first-order assertions and relations.
The case studies carried out using \WhyRel\ demonstrate that the verification conditions
are well within what can be automated by SMT solvers.
User interaction is needed mainly to deal with specs and loop invariants involving mathematical
properties of data types and inductively defined predicates and relations.
Inductive definitions are often needed for problem-specific properties, but are not required
for encapsulation, framing, hiding or any other element of the logic.

\section{Related work}\label{sec:related}

Our main result (Theorem~\ref{thm:sound}) brings together modular reasoning techniques, relational properties, representation independence, automated verification, and their semantic foundations.

We make a rough categorization of related work as follows:
(Section~\ref{sec:relatedRL})
Directly related precursors;
(Section~\ref{sec:relatedRelver})
Algorithmic studies and implementations of automated verification for relational properties,
often lacking detailed foundational justification and support for dynamic allocation or data abstraction,
but identifying FOL fragments enabling automated inference of relational invariants and alignment;
and
(Section~\ref{sec:relatedRepind})
Semantic studies of representation independence, focused on contextual equivalence and challenging language features including dynamic allocation, higher order procedures, and concurrency, leading to the higher order relational separation logic ReLoC implemented in the Coq proof assistant.

Union-find implementations have been verified interactively using Coq~\cite{ChargueraudP19}.
Functional correctness of Kruskal has been verified in a proof assistant~\cite{Guttmann18}.
Functional correctness of C implementations of Dijkstra's, Kruskal's, and Prim's algorithms have been verified by Mohan et al~\cite{MohanLH20} using VST~\cite{CaoBGDA18}.
The point of our case studies is to achieve automated equivalence proof for clients, without recourse to functional correctness.
A purely applicative implementation of pairing heaps has been verified in Why3
(\url{http://toccata.lri.fr/gallery/}).

\subsection{Region logic and other logics with explicit footprints}\label{sec:relatedRL}

Bao et al.~\cite{BaoLE18} introduce a unified fine-grained region logic with both separating conjunction and explicit read/write effects, subsuming a fragment of separation logic.
To enable effective use of SMT solvers,
Piskac et al.~\cite{piskac2013automating,piskac2014grasshopper} encode separation logic style specifications using explicit regions. Several works implement
implicit dynamic frames~\cite{SmansJacobsPiessens,MuellerSchwerhoffSummers17}
which combines the succinctness of separation logic with the automation of SMT.
For recent work on decidable fragments of separation logic, see Echenim et al.~\cite{EchenimIP19}.
Using an extension of FOL with recursive definitions, the logic of Murali et al.~\cite{MuraliPLM20}
has an expression form for the footprint of a formula, akin to our $\ftpt$ operator but usable in formulas, avoiding the need for a separate framing judgment; this can encode a fragment of separation logic but effectiveness for automation has not been thoroughly evaluated.

The most closely related works are the \RL\ articles.
The image notation, introduced in \RLI~\cite{RegLogJrnI}, was inspired by the use of field images to express relations in the information flow logic of Amtoft et al.~\cite{AmtoftBB06}.
In \RLI\ this style of dynamic framing was shown to facilitate local reasoning about global invariants,
and this was extended to dynamic boundaries and hiding of invariants in \RLII~\cite{RegLogJrnII}.

In \RLIII~\cite{BanerjeeNN18}, pure methods are formalized with end-to-end read effects.
The end-to-end semantics of read effects is also used in the preliminary work~\cite{BanerjeeNN16}, from which we take biprograms, weaving, and bi-while alignment guards. But we change the semantics of bi-com $\splitbi{C}{C'}$ to eliminate one-sided divergences and to allow models to diverge
(see rules \rn{uCall0} in Figure~\ref{fig:transSel}
and \rn{bCall0} in Figure~\ref{fig:biprogTrans}).
This validates a better weaving rule (no termination conditions) and a stronger adequacy theorem (Thm.~\ref{thm:biprogram-soundness}).
We drop their semantics of read effects, which is inadequate for our purposes (and is subsumed by r-respects in Def.~\ref{def:valid}),
but use quasi-determinacy and agreement-preservation results from \RLIII.
Neither \RLIII\ nor~\cite{BanerjeeNN16} addresses information hiding or encapsulation.
Our semantics of encapsulation (Def.~\ref{def:valid}) is a major extension of that in
\RLII, from which we take the minimalist formalization of modules;
but we change the semantics to use context models (from \RLIII\ where models are called interpretations)
and add r-respects etc.
We adapt unary rules from \RLII\ but use the term modular linking for what they call mismatch.
The case studies in \RLIII\ are implemented using Why3 with an encoding of heaps and frame conditions
similar to the one used by \WhyRel.

\subsection{Relational verification}\label{sec:relatedRelver}

Francez~\cite{Francez83,NaumannISOLA20} articulated the product principle
reducing relational verification to the inductive assertion method and
introduced a number of proof rules.
Benton~\cite{Benton:popl04} introduced the term  Relational Hoare Logic
and brought to light applications including compiler optimizations.
Yang~\cite{Yang07relsep} introduced relational separation logic, motivated by data abstraction 
although the logic does not formalize that as such.
Beringer~\cite{Beringer11} extends Benton's logic with heap (still not procedures), and provides proof rules
for non-lockstep loops, on which our \rn{rWhile} is based;
a similar rule appears in Barthe et al~\cite{BartheGHS17}.
There has been a lot of work on relational logics and verification techniques~\cite{BeckertU18},
e.g., applications in security and privacy~\cite{BartheDGKSS13,NanevskiBG13,RadicekBG0Z18} 
and merges of software versions~\cite{SousaDL18}.
A shallow embedding of relational Hoare logic in $F^\star$ is used to interactively prove refinements
between union-find implementations~\cite{GrimmMFHMPRRSB18}.
Aguirre et al.~\cite{AguirreBGGS19} develop a logic based on relational refinement types, for
terminating higher order functional programs, and provide an extensive discussion of work on relational logics.

Automated relational verification based on product programs is implemented in several works
which address effective  alignment of control flow points and the inference of alignment points and relational assertions and procedure summaries~\cite{ZuckPGBFH05,ZaksP08,BartheCK11,BartheCK13,FelsingGKRU14,KieferKU18,BartheCK16,WoodDLE17,ChurchillP0A19}.
One line of work, centered around the SymDiff verifier~\cite{hawblitzelklr13,LahiriMSH13,LahiriHKR12}, proves properties of program differences using relational procedure summaries.
Godlin and Strichmann~\cite{GodlinS08} prove soundness of proof rules for equivalence checking taking into account similar and differing calls.
Eilers et al.~\cite{EilersMH20} implement a novel product construction for procedure-modular verification of k-safety properties of a program, maximizing use of relational specs for procedure calls.  (We follow O'Hearn et al.~\cite{OHearnYangReynoldsInfoToplas} in using ``modular'' to imply also information hiding.)
Girka et al.~\cite{GirkaEtalPPDP17} explore forms of alignment automata.
Shemer et al.~\cite{ShemerGSV19} provide for flexible alignments
and infer state-dependent alignment conditions,
as do Unno et al.~\cite{UnnoTerauchiKoskinen21}.
The latter works rely on constraint solving techniques which are not yet
applicable to the heap.
For the heap the state of the art for finding alignments is syntactic matching heuristics.

For $\forall\exists$ properties, product constructions appear in some recent works~\cite{BartheCK13,ClochardMP20,UnnoTerauchiKoskinen21,LamportSchneider21,AntonopoulosEtal2022}.
Pioneering work by Rinard and Marinov~\cite{RinardCredible99,RinardMarinov99} 
introduces a logic of  $\forall\exists$ simulations for correct compilation, for programs represented as control flow graphs. 

Sousa and Dillig's Cartesian Hoare Logic~\cite{SousaD16} (a generalization of Benton's logic) can be used to reason about $k$-safety properties such as secure information flow (2-safety) and transitivity (3-safety). They also develop an algorithm, based on an implicit product construction, for automatically proving $k$-safety properties; The corresponding tool, Descartes, has been used in the verification of several user-defined relational operators in Java programs. For more efficient relational verification, Pick et al.~\cite{PickFG18} introduce a new algorithm atop Descartes, which automatically detects opportunities for alignment (the synchrony phase) and detects opportunities for pruning subtasks by exploiting symmetries in program structure and relational specs.

None of the above works address hiding, and many do not fully handle the heap~\cite{LahiriMSU18}.
Our work is complementary, providing a foundation for verified toolchains implementing these algorithmic techniques.
The use of \rn{rWhile} with alignment guards,
together with the disjunction rule to split cases
and unconditional rewriting (Section~\ref{sec:uequiv}),
enables our logic to express a wide range of state-dependent alignments.

\subsection{Representation independence}\label{sec:relatedRepind}

It is difficult to account for encapsulation in semantics of languages with dynamically allocated mutable state and especially with higher order features.
Crary's tour de force proves parametricity for a large fragment of ML but excluding reference types~\cite{CraryPOPL17}. 
Semantic studies of the problem~\cite{BanerjeeNaumann02c,AhmedDR09}
have been connected with unary~\cite{BanerjeeN13} and relational logics~\cite{DreyerNRB10}.
The latter relies on intensional atomic propositions about steps in the transition semantics. In this sense it is very different from standard (Hoare-style) program logics.

Birkedal and Yang~\cite{BirkedalY08rel} show client code proved correct
using the SOF rule of separation logic is relationally parametric,
using a semantics that does not validate the rule of conjunction which plays a key role in automated verification.
That rule is an issue in some other models as well, e.g., Iris (in part owing to its treatment of ghost updates as logical operators).

Thamsborg et al.~\cite{ThamsborgBY12} also lift separation logic to a relational interpretation, but instead of second order framing, address abstract predicates. Their goal is to give a relational interpretation of proofs. They uncover and solve a surprising problem: due to the nature of entailment in separation logic, not all uses of the rule of consequence lift to relations. Our logic does not directly lift proofs but does lift judgments from unary to relational (the \rn{rEmb} and \rn{rLocEq} rules). In general, most works on representation independence, including work on encapsulation of mutable objects, are essentially semantic developments~\cite{BanerjeeNaumann02c,BanerjeeN13}; general categorical models of Reynolds' relational parametricity~\cite{Reynolds84} which validate his abstraction theorem and identity extension lemma have been developed and are under active study by Johann et al.~\cite{SojakovaJ18}.

The state of the art for data abstraction in separation logics is abstract predicates, which are satisfactory
in many specs where some abstraction of ADT state is of interest to clients,
but less attractive for composing libraries such as runtime resource management with no client-relevant state.
Such logics have been implemented in interactive provers~\cite{NanevskiLSD14,JungKJBBD18,BeringerAppel19}.
These are unary logics with concurrency;
they do not feature second order framing
but they have been used to verify challenging concurrent programs.
As shown by the recent extension of VST with Verified Software Units~\cite{Beringer21},
higher order logics with impredicative quantification
facilitate expressive interface specifications for modular reasoning about heap based programs.

ReLoC~\cite{FruminKB18}, based on Iris~\cite{JungKJBBD18},
is a relational logic for conditional contextual refinement of higher order concurrent programs.
Iris and the works in the preceding paragraph do support hiding in the sense of abstraction:
through existential quantification and abstract predicates, and in Iris through the invariant-box modality
and the associated ``masks''.
With respect to our context and goals, we find such machinery to be overkill.  Like O'Hearn et al.~\cite{OHearnYangReynoldsInfoToplas}, we only need invariants in the sense of conditions that hold when control enters or exits the module---not conditions that hold at every step.
There is a considerable gap between this work and the properties/techniques for which automation has been developed; moreover their step-indexed semantics does not support termination reasoning or transitive composition of relations (which needs relative termination~\cite{hawblitzelklr13}); our logic is easily adapted to both.

Maillard et al.~\cite{MaillardHRM20} provide a general framework for relational program logics that can be instantiated for different computational effects represented by monads.  The paper does not address encapsulation except insofar as the system is based on dependent type theory.

\section{Conclusion}\label{sec:discuss} 

We introduced a relational Hoare logic that accounts for strong encapsulation of data representations in object-based programs with dynamic allocation and shared mutable data structures. Consequently, changes to internal data representations of a module can be proved to
lead to equivalent observable behaviors of clients that have been proved to respect encapsulation.
The technique of simulation, articulated by Hoare~\cite{Hoare:data:rep} and formalized in theories of representation independence, is embodied directly in the logic as a proof rule (\rn{rMLink} in Figure~\ref{fig:derivedmismatch}).
The logic provides means for specifying state based encapsulation methodologies such as ownership. 
It also supports effective relational reasoning about simulation between
both similar and disparate control and data structure.
Although our exposition focuses on encapsulation and simulation,
the logic is general, encompassing a range of relational properties including conditional equivalence (including compiler optimizations), specified differencing (as in regression verification), and secure information flow with downgrading~\cite{AmtoftBB06,BNR08,BanerjeeNN16,Chudnov14}.
The rules are proved sound.


The programmer's perspective articulated by Hoare is about a single module and client, distinguishing inside versus outside.  The general case, with state based encapsulation for a hierarchy of modules, requires a precise definition of the boundaries within which a given execution step lies. While we build on prior work on state based encapsulation, we find that
to support change of representation, the semantics of encapsulation needs to be formulated in terms of not only the context (hypotheses/library APIs) but also modular structure of what's already linked, via the dynamic call chain embodied by the runtime stack. This novel formulation of an extensional semantics for encapsulation against dependency is subtle (Def.~\ref{def:valid}), yet it remains amenable to simple enforcement. 
Our relational assertions and verification conditions for modules and clients are first-order. As proof of concept, we demonstrate that they can be effectively used in an auto-active SMT-based verification prototype.  

To a great extent, the three goals in Section~\ref{sec:intro} have been achieved.
Beyond this progress,  
for foundational justification one might like to machine check the soundness proofs.
For automation, one could explore techniques for inferring alignment conditions 
and relational invariants~\cite{ShemerGSV19,UnnoTerauchiKoskinen21}.

Apropos completeness of the logic, the ordinary notion of completeness is that valid relational judgments are provable (relative to validity of entailments).
Completeness in this sense is an immediate consequence of
completeness of the underlying unary logic together with the presence of a single
rule (like \rn{rEmb}) that lifts unary judgments to relational
ones~\cite{Francez83,BartheDArgenioRezk,BartheDR11}---provided that
unary assertions can express relations.
That proviso is easy to establish for simple imperative programs, by using renamed variables.
For pointer programs, expressing a relation as an assertion can be done using separating conjunction~\cite{BartheDArgenioRezk},
but to do so using only FO assertions requires a complicated encoding~\cite{Naumann06esorics}.
The recently introduced notion of alignment completeness~\cite{NagasamudramN21} is better  than ordinary completeness 
as a way to evaluate relational logics.
We have not yet investigated completeness for either unary or relational region logic.

\section{Envoi}
Hoare's 1972 paper articulates the fundamental notions of hiding and encapsulation with a minimum of extraneous formalization.
In seeking to formulate the ideas in a logic for first-order programs using first-order assertions, we hoped to achieve a comparably elementary and transparent account. In order to handle dynamically allocated mutable state, however, we have been unable to avoid some amount of auxiliary notions.  

Having incorporated encapsulation into a unary+relational logic that supports hiding of internal invariants, we are poised to investigate a longstanding problem: the hiding of unobservable effects for object-based programs.  This is intimately connected with encapsulation~\cite{obspureTCS,Pottier08,BentonHN14} and appears already in Hoare's work under the term benevolent side effects~\cite{Hoare:data:rep}.




\begin{acks}
We thank the anonymous TOPLAS reviewers for their insightful technical feedback and stuctural suggestions which have improved the exposition. We thank Andrew Myers for his encouragement and diligent editing throughout the reviewing process. Stephen Sondheim's lyrics 
``Perpetual anticipation is good for the soul//But it's bad for the heart'' gave us perspective as we worked through multiple review iterations.

The ideas in this article arose from discussions between Banerjee and Naumann during a long walk at PLDI 2009 in Dublin, following which, Naumann jotted down initial thoughts at a cafe. The discussions spurred a long-term research program that has produced substantial intermediate results (\RLI--\RLIII) that have culminated in this article. For arranging presentations of the work at various stages of its development, and for their comments and encouragement, we thank Nina Amla, Lennart Beringer, Lars Birkedal, Stephen Chong, Rance Cleaveland, Matthias Felleisen, Neil Immerman, Patricia Johann, Assaf Kfoury, Shriram Krishnamurthi, Cesar Kunz, Gary Leavens, David Liu, Aleks Nanevski, Minh Ngo, Noam Rinetzky, Mooly Sagiv, Don Sannella, Gordon Stewart and Jan Vitek. 

We thank the organizers and participants of the Dagstuhl Seminar 18151 on Program Equivalence. The stimulating atmosphere of the seminar and Dagstuhl's salubrious environs (which naturally inspired us to take many long walks) aided technical progress at a crucial stage. Naumann acknowledges Manuel Hermenegildo for arranging an enjoyable and fruitful stay at the IMDEA Software Institute in 2011, and Andrew Appel for arranging an engaging stay at Princeton in 2017-18. Finally, we thank our families for their continuing and steadfast support.

Nagasamudram and Nikouei were partially supported by National Science Foundation (NSF) award 1718713. Naumann was partially supported by NSF award 1718713 and Office of Naval Research (ONR) award N00014-17-1-2787. Banerjee's research was based on work supported by the NSF, while working at the Foundation; in particular, he gratefully acknowledges NSF's support of ``Long-term Professional Development'' for FY 2020. Any opinions, findings, and conclusions or recommendations expressed in this article are those of the authors and do not necessarily reflect the views of the NSF and other funding agencies.
\end{acks}

\newpage

\appendix


\section{Appendix: Program semantics and unary correctness (re Sect.~\ref{sec:unarySem})}

\subsection{On effects, agreement, and valid correctness judgment}

\lemeffectSubtract*
\begin{proof} 
Assume w.l.o.g.\ that $\eff$ and $\effe$ are in the normal form
described as part of the definition, Eqn.~(\ref{eq:effsubtract}).
For a variable $x$ we get
$x \in \rlocs(\sigma, \eff\setminus\effe)$ iff
$x \in \rlocs(\sigma, \eff) \setminus \rlocs(\sigma,\effe)$ directly from definitions.
For a heap location, $o.f$ is in $\rlocs(\sigma,\eff)\setminus\rlocs(\sigma,\effe)$
just if there is $\rd{G\Img f}$ in $\eff$ with $o\in\sigma(G)$ 
and there is no $\rd{H\Img f}$ in $\effe$ with $o\in\sigma(H)$ (by definitions).
This can happen in two cases:
either there is no read for $f$ in $\effe$,
or there is $\rd{H\Img f}$ in $\effe$ but $o\notin\sigma(H)$.
In the first case, $\rd{G\Img f}$ is in $\eff\setminus\effe$ so 
$o\in\rlocs(\eff\setminus\effe)$.
In the second case, $\rd{(G\setminus H)\Img f}$ is in $\eff\setminus\effe$ 
and since $o\in\sigma(G\setminus H)$ we have  
$o\in\rlocs(\eff\setminus\effe)$.
\end{proof}

\leminsensi*
\begin{proof}
Straightforward, by induction on $F$ and induction on $P$.
\end{proof}

\begin{remark}\upshape
For partial correctness, all specs are satisfiable (at least by divergence).
This is manifest in Def.~\ref{def:ctxinterp}, 
which allows that $\phi(m)(\sigma)$ can be $\emptyset$ for any $\sigma$ that satisfies the precondition. 
In \RLII, a context call faults in states where the precondition does not hold.
It gets stuck if the precondition holds but there is no successor state that satisfies the postcondition.
Here (and in \RLIII, for impure methods),
the latter situation can be represented by a model that returns the empty set. Instead of letting the semantics get stuck we include a stuttering transition, \rn{uCall0}.
\qed\end{remark}

\begin{remark}
\upshape
Apropos Def.~\ref{def:valid}, one might expect r-respect to consider steps 
$\configm{B}{\tau'}{\mu} \trans{\phi} \configm{D'}{\upsilon'}{\nu'}$ 
with potentially different environment $\nu'$, and add to the consequent that $\nu'=\nu$.  
But in fact the only transitions that affect the environment are those for $\keyw{let}$ and for the $\Endlet$ command used in the semantics at the end of its scope.  The transitions for these are independent of the state, and so $B$ and $\mu$ suffice to determine $\nu$.
\qed\end{remark}

\begin{remark}\upshape
The consequent (\ref{eq:rrespect}) of r-respect express that the visible (outside boundary) writes and allocations depend only on the visible starting state.
One may wonder whether the conditions fully capture dependency, noting that they do not consider faulting.  But r-respects is used in conjunction with the (Safety) condition that rules out faults.  
\qed\end{remark}


\begin{remark}\upshape
In separation logic, preconditions serve two purposes: 
in addition to the usual role as an assumption about initial states, 
the precondition also designates the ``footprint'' of the command.
This is usually seen as a frame condition:
the command must not read or write any preexisting locations outside the footprint of the precondition.
In a logic such as the one in this article, where frame conditions are distinct from preconditions, it is possible for the frame condition to designate a smaller set of locations than the footprint of the precondition.
As a simple example, consider the spec
$\flowty{x>0\land y>0}{\True}{\rw{x}}$.
In our logic, it is possible for two states to agree on the read effect but disagree on the precondition.
For example, the states $[x:1,y:0]$ and $[x:1,y:1]$ agree on $x$ but only the second satisfies $x>0\land y>0$.
Lemma~\ref{lem:readeff} describes the read effect only in terms of states that satisfy the precondition.
For a command satisfying the example spec, 
and the states $[x:1,y:1]$ and $[x:1,y:2]$
which satisfy the precondition but do not agree on $y$,
that the command 
must either diverge on both states or converge to states that agree on the value of $x$.
\qed\end{remark}


\begin{restatable}[agreement symmetry]{lemma}{selfframingonreads}
\label{lem:selfframingonreads}
\upshape
Suppose $\eff$ has framed reads. 
If $\agree(\sigma,\sigma',\pi,\eff)$ then 
(a) $\rlocs(\sigma',\eff)=\pi(\rlocs(\sigma,\eff))$
and 
(b) $\agree(\sigma',\sigma,\pi^{-1},\eff)$.
\end{restatable}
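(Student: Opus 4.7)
The plan is to establish (a) first and then derive (b) as a corollary using the symmetry property (\ref{eq:LagreeSym}) of $\Lagree$. The crux is to show how the framed-reads hypothesis, combined with the footprint agreement property (\ref{eq:footprintAgreement}), forces the region expressions occurring in the field-image reads to denote $\pi$-related sets in $\sigma$ and $\sigma'$.

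I would first record the following auxiliary observation. Consider any atomic effect $\rd{G\Img f}$ appearing in $\eff$. By the framed-reads hypothesis, $\ftpt(G)$ is contained in $\eff$, so $\agree(\sigma,\sigma',\pi,\eff)$ entails $\agree(\sigma,\sigma',\pi,\ftpt(G))$. Applying the footprint agreement fact (\ref{eq:footprintAgreement}) then gives $\rprel{\sigma(G)}{\sigma'(G)}$, which means $\pi$ restricts to a total bijection between $\sigma(G)\setminus\{\semNull\}$ and $\sigma'(G)\setminus\{\semNull\}$.

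For (a), I would then argue by chasing the definition of $\rlocs$ at each atomic effect. Variables contribute the same names on both sides, and $\pi$ acts as the identity on variables in the definition of $\pi(W)$, so the variable parts of $\rlocs(\sigma,\eff)$ and $\rlocs(\sigma',\eff)$ match without further ado. For a heap location $o.f\in\rlocs(\sigma,\eff)$ arising from some $\rd{G\Img f}\in\eff$ with non-null $o\in\sigma(G)$, the observation above gives that $o\in\dom(\pi)$ and $\pi(o)\in\sigma'(G)$, hence $\pi(o).f\in\rlocs(\sigma',\eff)$; by definition of $\pi(W)$ on heap locations, this is exactly the statement that $o.f\in\pi^{-1}(\rlocs(\sigma',\eff))$, i.e., that the image $\pi(\rlocs(\sigma,\eff))$ contains $\rlocs(\sigma',\eff)$-relevant locations. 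The reverse inclusion is handled symmetrically using the inverse bijection on $\sigma(G)$ and $\sigma'(G)$ already established.

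For (b), the symmetry property (\ref{eq:LagreeSym}) applied to $W\eqdef\rlocs(\sigma,\eff)$ yields $\Lagree(\sigma',\sigma,\pi^{-1},\pi(W))$. By part (a), $\pi(W)=\rlocs(\sigma',\eff)$, so this is exactly $\Lagree(\sigma',\sigma,\pi^{-1},\rlocs(\sigma',\eff))$, which is the definition of $\agree(\sigma',\sigma,\pi^{-1},\eff)$. The main obstacle to watch for is bookkeeping in (a) when several atomic effects share the field name $f$ but have different region expressions; since the definition of $\rlocs$ unions these pointwise, and the bijection argument is established per occurrence of $\rd{G\Img f}$, the case analysis composes cleanly and no new ingredient is needed. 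Edge cases involving $\semNull$ and the empty-region degeneracy when $f$ has primitive type are likewise harmless, since such locations do not contribute to $\rlocs$ on either side.
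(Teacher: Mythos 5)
Your proposal is correct and follows essentially the same route as the paper: part (a) is proved exactly as in the paper, by using framed reads to pull $\ftpt(G)$ into $\eff$ and then the footprint agreement property (\ref{eq:footprintAgreement}) to get $\rprel{\sigma(G)}{\sigma'(G)}$, from which the two inclusions of $\rlocs$ follow. For (b) you take a slightly slicker shortcut—invoking the general symmetry fact (\ref{eq:LagreeSym}) on $W=\rlocs(\sigma,\eff)$ and rewriting $\pi(W)$ via (a)—where the paper instead re-verifies the agreement pointwise on heap locations; both are valid and the difference is purely stylistic.
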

\begin{proof}
(a) For variables the equality follows immediately by definition of $\rlocs$.
For heap locations the argument is by mutual inclusion. To show 
$\rlocs(\sigma',\eff) \subseteq \pi(\rlocs(\sigma,\eff))$,
let $o.f\in\rlocs(\sigma',\eff)$. By definition of $\rlocs$, there exists region $G$ such that $\eff$ contains 
$\rd{G\Img f}$ and $o\in\sigma'(G)$. Since $\eff$ has framed reads, $\eff$ 
contains $\ftpt(G)$, hence from $\agree(\sigma,\sigma',\pi,\eff)$
by Eqn~(\ref{eq:footprintAgreement})
we get $\rprel{\sigma(G)}{\sigma'(G)}$.
Thus $o\in\pi(\sigma(G))$. So, we have $o.f\in\pi(\rlocs(\sigma,\eff))$.
Proof of the reverse inclusion is similar.

(b) For variables this is straightforward.  For heap locations, consider 
any $o.f\in\rlocs(\sigma',\eff)$.  From (a), we have  $\pi^{-1}(o).f\in\rlocs(\sigma,\eff)$. From 
$\agree(\sigma,\sigma',\pi,\eff)$,
we get $\Rprel{\pi}{\sigma(\pi^{-1}(o).f)}{\sigma'(o.f)}$. Thus we have 
$\Rprel{\pi^{-1}}{\sigma'(o.f)}{\sigma(\pi^{-1}(o).f)}$.
\end{proof}

The definition of r-respect is formulated (in Def.~\ref{def:valid}) in a way to make evident that client steps are independent from locations within the boundary. 
But r-respect can be simplified, as follows, when used in conjunction with w-respects.

The following notion is used to streamline the statement of some technical results.
It is used with states 
$\sigma,\tau,\tau',\upsilon,\upsilon'$,
where $\sigma$ is an initial state from which $\tau$ and then later $\upsilon$ is reached,
and in a parallel execution $\tau'$ reaches $\upsilon'$.
Moreover, $\delta$ is a dynamic boundary.
We write $\delta^\oplus$ to abbreviate $\delta,\rd{\lloc}$.

\begin{definition}\label{def:allowedDepEResB}
Say $\eff$ \dt{allows dependence from $\tau,\tau'$ to 
$\upsilon,\upsilon'$ for $\sigma,\delta,\pi$}, written \ghostbox{$\AllowedDep{\tau}{\tau'}{\upsilon}{\upsilon'}{\eff}{\sigma}{\delta}{\pi}$}
iff  the agreement
\( \Lagree(\tau,\tau',\pi,(\freshLocs(\sigma,\tau)\union\rlocs(\sigma,\eff))\setminus\rlocs(\tau,\delta^\oplus))\)
implies there is $\rho\supseteq\pi$ with 
\(\Lagree(\upsilon,\upsilon',\rho, (\freshLocs(\tau,\upsilon)\union 
\written(\tau,\upsilon))\setminus\rlocs(\upsilon,\delta^\oplus))\). 
\end{definition}

Like Definition~\ref{def:agreeX}, this definition is left-skewed,
both because $\eff$ is interpreted in the left state $\sigma$ 
and because the fresh and written locations are determined by the left transition $\sigma$ to $\tau$.
This is tamed in case $\eff$ has framed reads (Lemma~\ref{lem:selfframingonreads}).


Allowed dependence gives an alternate way to express part of 
the Encap condition in Def.~\ref{def:valid}. 
For a step  
\( \configm{B}{\tau}{\mu} \trans{\phi} \configm{D}{\upsilon}{\nu} \)
that r-respects $\delta$ for $(\phi,\eff,\sigma)$ 
and $\Active(B)$ is not a call, 
and alternate step (\ref{eq:rrespectAnte}),
the condition implies
$\AllowedDep{\tau}{\tau'}{\upsilon}{\upsilon'}{\eff}{\sigma}{\delta}{\pi}$
in the notation of Def.~\ref{def:allowedDepEResB}.

A critical but non-obvious consequence of framed reads
is that for a pair of states $\sigma,\sigma'$ that are in `symmetric' agreement and transition to a pair $\tau,\tau'$ forming an allowed dependence, the transitions preserve agreement on any set of locations whatsoever.
The formal statement is somewhat intricate; it generalizes \RLIII~Lemma 6.12.

\begin{restatable}[balanced symmetry]{lemma}{freshsym}
\label{lem:fresh-sym}
\upshape
Suppose $\AllowedDep{\tau}{\tau'}{\upsilon}{\upsilon'}{\eff}{\sigma}{\delta}{\pi}$ and 
$\AllowedDep{\tau'}{\tau}{\upsilon'}{\upsilon}{\eff}{\sigma'}{\delta}{\pi^{-1}}$.
Suppose 
\[
\begin{array}{l}
\Lagree(\tau,\tau',\pi,(\freshLocs(\sigma,\tau)\union\rlocs(\sigma,\eff))\setminus\rlocs(\tau,\delta^\oplus))\\
\Lagree(\tau',\tau,\pi^{-1},(\freshLocs(\sigma',\tau')\union\rlocs(\sigma',\eff))
\setminus\rlocs(\tau',\delta^\oplus))
\end{array}
\]
Let $\rho,\rho'$ be any refperms with $\rho\supseteq\pi$ and $\rho'\supseteq\pi^{-1}$
that witness the allowed dependencies, i.e., 
\begin{equation}\label{eq:freshsym3}
\begin{array}{l}
\Lagree(\upsilon,\upsilon',\rho, (\freshLocs(\tau,\upsilon)\union 
\written(\tau,\upsilon))\setminus\rlocs(\upsilon,\delta^\oplus))\\
\Lagree(\upsilon',\upsilon,\rho', (\freshLocs(\tau',\upsilon')\union 
\written(\tau',\upsilon'))\setminus\rlocs(\upsilon',\delta^\oplus))
\end{array}
\end{equation}
Furthermore suppose 
\begin{equation}\label{eq:freshsym5}
\begin{array}{l}
\rho(\freshLocs(\tau,\upsilon)\setminus\rlocs(\upsilon,\delta))\subseteq\freshLocs(\tau',\upsilon')
\setminus\rlocs(\upsilon',\delta)\\
\rho'(\freshLocs(\tau',\upsilon')\setminus\rlocs(\upsilon',\delta))\subseteq\freshLocs(\tau,\upsilon)
\setminus\rlocs(\upsilon,\delta)
\end{array}
\end{equation}
Then we also have 
\[
\begin{array}{l}
\Lagree(\upsilon',\upsilon,\rho^{-1},(\freshLocs(\tau',\upsilon')\union\written(\tau',\upsilon'))
\setminus\rlocs(\upsilon',\delta^\oplus))\\
\rho(\freshLocs(\tau,\upsilon))\setminus\rlocs(\upsilon,\delta) \:=\:
\freshLocs(\tau',\upsilon')\setminus\rlocs(\upsilon',\delta)\
\end{array}
\]
\end{restatable}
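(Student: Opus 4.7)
The plan is to prove the two conclusions in turn, handling the set equality first since it is purely combinatorial, and then leveraging it for the symmetric agreement.

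For the equality $\rho(\freshLocs(\tau,\upsilon))\setminus\rlocs(\upsilon,\delta) = \freshLocs(\tau',\upsilon')\setminus\rlocs(\upsilon',\delta)$, I would first note that since $\rho$ is a (partial) bijection on references, its action lifts to an injection on locations (variables map to themselves, and $o.f$ maps to $\rho(o).f$). The two subset inclusions in (\ref{eq:freshsym5}) give us
$\rho(X)\subseteq Y$ and $\rho'(Y)\subseteq X$
where $X\eqdef\freshLocs(\tau,\upsilon)\setminus\rlocs(\upsilon,\delta)$ and $Y\eqdef\freshLocs(\tau',\upsilon')\setminus\rlocs(\upsilon',\delta)$. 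By injectivity of $\rho$ and $\rho'$ together with finiteness of location sets (which follows because states have finite heap and variable set), $|X|\leq|Y|\leq|X|$, so $|X|=|Y|$ and the injection $\rho\restriction X\to Y$ is a bijection; similarly for $\rho'$. Thus $\rho(X)=Y$, which is the desired equality.

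For the symmetric agreement $\Lagree(\upsilon',\upsilon,\rho^{-1},W')$ where $W'\eqdef(\freshLocs(\tau',\upsilon')\union\written(\tau',\upsilon'))\setminus\rlocs(\upsilon',\delta^\oplus)$, the key step is to show that $\rho^{-1}$ and $\rho'$ agree on every reference that is required to be in the domain when interpreting $W'$, namely the objects $o'$ appearing in heap locations $o'.f\in W'$ and the reference-typed field values. Partition such references into two groups: those related already by $\pi$ (equivalently, in $\rng(\pi)=\dom(\pi^{-1})$), and the rest. On the first group, $\rho$ extends $\pi$ and $\rho'$ extends $\pi^{-1}$, so $\rho^{-1}$ and $\rho'$ both coincide with $\pi^{-1}$. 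On the second group---the references that are in $\dom(\rho')$ but not in $\rng(\pi)$---I would argue they must lie in the fresh/written region outside the boundary, and by the bijection established in Part~1, together with agreement on the reference-typed field values (mod $\rho$ on one side and $\rho'$ on the other), the two refperms must pair the same references. Concretely: given the field-level constraints enforced jointly by the two agreement premises (\ref{eq:freshsym3}) and the initial symmetry hypotheses, any deviation between $\rho^{-1}$ and $\rho'$ on a reference in $W'$ would force two distinct fresh references on the opposite side, contradicting the established bijection. Transferring agreement from $\rho'$ to $\rho^{-1}$ is then a pointwise rewrite.

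The main obstacle will be the second part: making rigorous the claim that the two witness refperms $\rho$ and $\rho'$ must pair the same fresh references outside the boundary. Although each witness is a priori free to choose any compatible bijection between fresh references, the agreements in (\ref{eq:freshsym3}) propagate through reference-typed fields, so that the identity of $\rho(o)$ is pinned down by the chain of reachable fields from the initial correspondence $\pi$. Establishing this pinning requires a careful case analysis---ultimately reducing to the injectivity argument of Part~1 applied to reference-valued locations---and is the place where framed reads (via the symmetry of agreements expressed in Lemma~\ref{lem:selfframingonreads}) and the boundary-exclusion $\delta^\oplus$ in $W'$ are essential to exclude pathological bindings.
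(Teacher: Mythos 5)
Your Part~1 (the set equality) is correct and matches the paper's argument: totality of $\rho$ on $X\eqdef\freshLocs(\tau,\upsilon)\setminus\rlocs(\upsilon,\delta)$ and of $\rho'$ on $Y\eqdef\freshLocs(\tau',\upsilon')\setminus\rlocs(\upsilon',\delta)$ (forced by the $\Lagree$ premises), injectivity, and the two inclusions of (\ref{eq:freshsym5}) give $|X|=|Y|$ and hence $\rho(X)=Y$. Your treatment of the $\written$ part of the second conclusion is also fine: those are pre-existing locations of $\tau'$, on which $\rho'$, $\pi^{-1}$ and $\rho^{-1}$ all coincide, so the second line of (\ref{eq:freshsym3}) transfers directly.

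The gap is in your handling of the fresh part. You propose to show that $\rho^{-1}$ and $\rho'$ pair the \emph{same} fresh references, arguing that the field-level agreements ``pin down'' the pairing and that any deviation would contradict the bijection from Part~1. This claim is false in general: if $o_1,o_2\in X$ both have default (e.g.\ null) field values and likewise $o_1',o_2'\in Y$, then $\rho$ may send $o_1\mapsto o_1'$, $o_2\mapsto o_2'$ while $\rho'$ sends $o_1'\mapsto o_2$, $o_2'\mapsto o_1$; both restrict to bijections between $X$ and $Y$ and all agreements in (\ref{eq:freshsym3}) hold, yet $\rho'\neq\rho^{-1}$ on $Y$. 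The bijection $\rho(X)=Y$ is a statement about sets, not about pointwise coincidence of the two witnesses, so no contradiction materializes. Fortunately the step you are struggling with is unnecessary: the conclusion asks for agreement via $\rho^{-1}$, not via $\rho'$, so you never need to compare the two witnesses on fresh references. The direct route is the symmetry property of $\Lagree$ (Eqn.~(\ref{eq:LagreeSym})): restricting the first line of (\ref{eq:freshsym3}) to $X$ gives $\Lagree(\upsilon,\upsilon',\rho,X)$, whence $\Lagree(\upsilon',\upsilon,\rho^{-1},\rho(X))$, and $\rho(X)=Y$ from Part~1 is exactly the fresh part of the target location set. Combining this with your (correct) written-locations argument completes the proof; this is the route the paper takes.
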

\begin{proof}
From Definition~\ref{def:locagreement}
and (\ref{eq:freshsym3}) we know that 
$\rho$ and $\rho'$ are total on $\freshLocs(\tau,\upsilon)\setminus\rlocs(\upsilon,\delta)$ and 
$\freshLocs(\tau',\upsilon')\setminus\rlocs(\upsilon',\delta)$ respectively. 
Since $\rho$ and $\rho'$ are bijections, from (\ref{eq:freshsym5}), we have equal cardinalities:
$|\freshLocs(\tau,\upsilon)\setminus\rlocs(\upsilon,\delta)| = 
|\freshLocs(\tau',\upsilon')\setminus\rlocs(\upsilon',\delta)| $. 
So we get 
$\rho(\freshLocs(\tau,\upsilon)\setminus\rlocs(\upsilon,\delta))
=\freshLocs(\tau',\upsilon')\setminus\rlocs(\upsilon',\delta)$. 
Now from (\ref{eq:freshsym3}) using 
the symmetry lemma Eqn~(\ref{eq:LagreeSym}) for $\Lagree$ we get 
\[
\Lagree(\upsilon',\upsilon,\rho^{-1},\rho(\freshLocs(\tau,\upsilon)\setminus\rlocs(\upsilon,\delta))) \]
So, we have 
$\Lagree(\upsilon',\upsilon,\rho^{-1},\freshLocs(\tau',\upsilon')\setminus\rlocs(\upsilon',\delta))$. 
On other hand, we have $\written(\tau',\upsilon')\subseteq\locations(\tau')$ and we have
\( \rho'|_{\locations(\tau')}=\pi^{-1}|_{\locations(\tau')}=\rho^{-1}|_{\locations(\tau')} \), 
using vertical bar for domain restriction.
So from (\ref{eq:freshsym3}) we get
\[ \Lagree(\upsilon',\upsilon,\pi^{-1},\written(\tau',\upsilon')\setminus\rlocs(\upsilon',\delta^\oplus)) \] 
which we can write as 
$\Lagree(\upsilon',\upsilon,\rho^{-1},\written(\tau',\upsilon')\setminus\rlocs(\upsilon',\delta^\oplus))$.
Thus we get 
\[\Lagree(\upsilon',\upsilon,\rho^{-1},(\freshLocs(\tau',\upsilon')\union\written(\tau',\upsilon'))
\setminus\rlocs(\upsilon',\delta^\oplus))\]
\end{proof}

\begin{restatable}[preservation of agreement]{lemma}{selfframingagreement}
\label{lem:selfframing-agreement2}
\upshape
Suppose $\AllowedDep{\tau}{\tau'}{\upsilon}{\upsilon'}{\eff}{\sigma}{\delta}{\pi}$ and 
$\AllowedDep{\tau'}{\tau}{\upsilon'}{\upsilon}{\eff}{\sigma'}{\delta}{\pi^{-1}}$.
Suppose 
\[ \begin{array}{l}
\Lagree(\tau,\tau',\pi,(\freshLocs(\sigma,\tau)\union\rlocs(\sigma,\eff))\setminus\rlocs(\tau,\delta^\oplus)) \quad \mbox{and} \\
\Lagree(\tau',\tau,\pi^{-1},(\freshLocs(\sigma',\tau')\union\rlocs(\sigma',\eff))\setminus\rlocs(\tau',\delta^\oplus))
\end{array}
\]
Then for any $W\subseteq\locations(\tau)$, if $\Lagree(\tau,\tau',\pi,W)$ then 
$\Lagree(\upsilon,\upsilon',\rho,W\setminus\rlocs(\upsilon,\delta^\oplus))$, for any refperm $\rho$
that witnesses  $\AllowedDep{\tau}{\tau'}{\upsilon}{\upsilon'}{\eff}{\sigma}{\delta}{\pi}$.
\end{restatable}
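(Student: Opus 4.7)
The plan is to analyze each location $\ell \in W \setminus \rlocs(\upsilon,\delta^\oplus)$ by cases on whether it is touched by the left step, and (when it is not) whether its image under $\pi$ is touched by the right step. Balanced symmetry (Lemma~\ref{lem:fresh-sym}) supplies the crucial swap-side agreement needed in the mixed case. First I would invoke Lemma~\ref{lem:fresh-sym} on the two allowed dependencies together with the two symmetric pre-state agreements. This requires checking the refperm-image conditions on fresh locations that come bundled with a witness $\rho$ of r-respect (Eqn~\ref{eq:rrespect} second line), and yields
\[
\Lagree(\upsilon',\upsilon,\rho^{-1},(\freshLocs(\tau',\upsilon')\cup\written(\tau',\upsilon'))\setminus\rlocs(\upsilon',\delta^\oplus)).
\]

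Next, fix $\ell \in W \setminus \rlocs(\upsilon,\delta^\oplus)$; since $\Lagree(\tau,\tau',\pi,W)$, $\ell \in \dom(\pi)$ for heap locations and $\tau(\ell) \sim_\pi \tau'(\pi(\ell))$. I would dispatch three cases. (A) If $\ell \in \written(\tau,\upsilon)$, the consequent of the given allowed dependency, applied with $\rho$, gives directly $\upsilon(\ell) \sim_\rho \upsilon'(\rho(\ell))$, using $\rho \supseteq \pi$ so $\rho(\ell)=\pi(\ell)$. (B) If $\ell \notin \written(\tau,\upsilon)$ and $\pi(\ell) \notin \written(\tau',\upsilon')$, then $\upsilon(\ell) = \tau(\ell)$ and $\upsilon'(\pi(\ell)) = \tau'(\pi(\ell))$, so the hypothesis $\Lagree(\tau,\tau',\pi,W)$ transports verbatim; refperm monotonicity (Eqn~\ref{eq:LagreeMono}) lifts the witness from $\pi$ to $\rho$. (C) If $\ell \notin \written(\tau,\upsilon)$ but $\pi(\ell) \in \written(\tau',\upsilon')$, use the swap agreement from Lemma~\ref{lem:fresh-sym} at $\pi(\ell)$ to obtain $\upsilon'(\pi(\ell)) \sim_{\rho^{-1}} \upsilon(\ell)$, which inverts to the needed $\rho$-agreement. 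Assembling the three cases produces $\Lagree(\upsilon,\upsilon',\rho,W\setminus\rlocs(\upsilon,\delta^\oplus))$.

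The main obstacle is case (C): applying Lemma~\ref{lem:fresh-sym} at $\pi(\ell)$ requires $\pi(\ell) \notin \rlocs(\upsilon',\delta^\oplus)$, but we only assumed $\ell \notin \rlocs(\upsilon,\delta^\oplus)$. The fix is to establish the bijective correspondence $\rho(\rlocs(\upsilon,\delta)) = \rlocs(\upsilon',\delta)$ between boundary location sets at the post-states. I would get this by using that $\delta$ is a candidate dynamic boundary and hence has framed reads (Def.~\ref{def:framedreadsDyn}): for each $\rd{G\Img f}$ in $\delta$, $\ftpt(G) \subseteq \delta$, so repeated application of Eqn~\ref{eq:footprintAgreement} propagates the symmetric pre-state agreements on $\ftpt(G)$ (and closes under fresh locations via the image condition) to yield $\rprel{\upsilon(G)}{\upsilon'(G)}$, after which Lemma~\ref{lem:selfframingonreads}(a) gives the boundary correspondence. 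Variables and $\lloc$ are trivial since $\rho$ fixes variable names and $\lloc$ is never in the range of $\pi$ as a reference. This is essentially the generalization of \RLIII~Lemma~6.12 from the single-module to the multi-module setting, and it is where the interplay between refperm extension, framed reads, and boundary evaluation at the post-state concentrates.
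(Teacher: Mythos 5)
Your overall decomposition (location-by-location case split on written versus unwritten) matches the paper's, and your case (A) and case (B) are exactly the paper's argument. The gap is in case (C) and in the machinery you import to handle it. First, you cannot invoke Lemma~\ref{lem:fresh-sym} here: its hypotheses include the image inclusions (\ref{eq:freshsym5}), i.e.\ $\rho(\freshLocs(\tau,\upsilon)\setminus\rlocs(\upsilon,\delta))\subseteq\freshLocs(\tau',\upsilon')\setminus\rlocs(\upsilon',\delta)$ and its converse, which come from r-respect (the second line of (\ref{eq:rrespect})) but are \emph{not} consequences of $\AllowedDep{\tau}{\tau'}{\upsilon}{\upsilon'}{\eff}{\sigma}{\delta}{\pi}$ --- Def.~\ref{def:allowedDepEResB} packages only the $\Lagree$ consequent. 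The present lemma assumes only the two allowed dependencies, so the swap agreement $\Lagree(\upsilon',\upsilon,\rho^{-1},\dots)$ is not available to you. The paper instead takes the witness $\rho'\supseteq\pi^{-1}$ of the \emph{second} allowed dependency directly and shows by contradiction that your case (C) cannot occur: if $\ell\notin\written(\tau,\upsilon)$ but $\pi(\ell)\in\written(\tau',\upsilon')$, then the $\rho'$-agreement on right-written locations forces $\Rprel{\rho'}{\upsilon'(\pi(\ell))}{\upsilon(\ell)}=\tau(\ell)$, and since $\rho'\supseteq\pi^{-1}$ already sends $\tau'(\pi(\ell))$ to $\tau(\ell)$, injectivity of $\rho'$ gives $\upsilon'(\pi(\ell))=\tau'(\pi(\ell))$, contradicting that the location was written. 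This uses nothing beyond the stated hypotheses and makes your case (C) vacuous.

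Second, your proposed repair of the boundary-membership mismatch --- establishing $\rho(\rlocs(\upsilon,\delta))=\rlocs(\upsilon',\delta)$ via framed reads and (\ref{eq:footprintAgreement}) --- cannot be carried out from the hypotheses. Framed reads (Def.~\ref{def:framedreadsDyn}) puts $\ftpt(G)$ \emph{inside} $\delta$ for every $\rd{G\Img f}$ in $\delta$, and every agreement you are given explicitly subtracts $\rlocs(\cdot,\delta^\oplus)$; so you have no agreement on the very locations that determine the boundary's denotation, and the two sides may denote entirely unrelated boundary location sets (that is the whole point of encapsulation). You are right that applying the right-side agreement at $\pi(\ell)$ needs $\pi(\ell)\notin\rlocs(\upsilon',\delta^\oplus)$, and the paper's own proof elides this check, but the fix you propose does not close that hole; it would have to come from additional hypotheses (w-respect of the boundary on both steps), not from the agreements assumed in this lemma.
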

\begin{proof}
\begin{sloppypar}
Suppose 
$\Lagree(\tau,\tau',\pi,(\freshLocs(\sigma,\tau)\union\rlocs(\sigma,\eff))\setminus\rlocs(\tau,\delta^\oplus))$ 
suppose that $\rho\supseteq\pi$ witnesses 
$\AllowedDep{\tau}{\tau'}{\upsilon}{\upsilon'}{\eff}{\sigma}{\delta}{\pi}$,
so we get
\begin{equation}\label{eq:lemmagree3}
\Lagree(\upsilon,\upsilon',\rho, (\freshLocs(\tau,\upsilon)\union 
\written(\tau,\upsilon))\setminus\rlocs(\upsilon,\delta^\oplus)))
\end{equation}
Suppose 
$\Lagree(\tau',\tau,\pi^{-1},(\freshLocs(\sigma',\tau')\union\rlocs(\sigma',\eff))\setminus\rlocs(\tau',\delta^\oplus))$
and let $\rho'\supseteq\pi^{-1}$ 
witness 
$\AllowedDep{\tau'}{\tau}{\upsilon'}{\upsilon}{\eff}{\sigma'}{\delta}{\pi^{-1}}$
so we get 
\begin{equation}\label{eq:lemagree4}
\Lagree(\upsilon',\upsilon,\rho', (\freshLocs(\tau',\upsilon')\union 
\written(\tau',\upsilon'))\setminus\rlocs(\upsilon',\delta^\oplus))
\end{equation}
Now suppose $W$ is a set of locations in $\tau$ such that $\Lagree(\tau,\tau',\pi,W)$. 
We show that \[\Lagree(\upsilon,\upsilon',\rho,W\setminus\rlocs(\upsilon,\delta^\oplus))\]
\end{sloppypar}

For $x\in W\setminus\rlocs(\upsilon,\delta^\oplus)$, 
either $x\in\written(\tau,\upsilon)$ or $\tau(x)=\upsilon(x)$.
\begin{itemize}
\item If $x\in\written(\tau,\upsilon)$ then from~(\ref{eq:lemmagree3}), we have 
$\Rprel{\rho}{\upsilon(x)}{\upsilon'(x)}$.
\item If $\tau(x)=\upsilon(x)$, we claim that $\tau'(x)=\upsilon'(x)$. 
It follows that from $\Lagree(\tau,\tau',\pi,W)$ we have 
\( \upsilon(x)=\Rprel{\pi}{\tau(x)}{\tau'(x)}=\upsilon'(x) \).

We prove the claim by contradiction. If it does not hold then 
$x\in \written(\tau',\upsilon')$. By~(\ref{eq:lemagree4}) this implies
\( \Rprel{\rho'}{\upsilon'(x)}{\upsilon(x)}=\Rprel{\pi}{\tau(x)}{\tau'(x)} \).
Then, since $\rho'\supseteq\pi^{-1}$,
we would have $\tau'(x)=\pi(\pi^{-1}(\upsilon'(x)))=\upsilon'(x)$, which is a contradiction. 
\end{itemize}
For $o.f\in W\setminus\rlocs(\upsilon,\delta^\oplus)$, 
either $o.f\in \written(\tau,\upsilon)$ or $\tau(o.f)=\upsilon(o.f)$. 
\begin{itemize}
\item If $o.f\in \written(\tau,\upsilon)$ then from~(\ref{eq:lemmagree3}), we have
$\Rprel{\rho}{\upsilon(o.f)}{\upsilon'(\rho(o).f)}$.

\item If $\tau(o.f)=\upsilon(o.f)$, we claim that $\tau'(\pi(o).f)=\upsilon'(\pi(o).f)$. 
It follows that from $\Lagree(\tau,\tau',\pi,W)$ we have 
\( \upsilon(o.f)=\Rprel{\pi}{\tau(o.f)}{\tau'(\pi(o).f)}=\upsilon'(\pi(o).f) \).

The claim $\tau'(\pi(o).f)=\upsilon'(\pi(o).f)$ is proved by contradiction. 
If it does not hold then 
$\pi(o).f\in \written(\tau',\upsilon')$. By~(\ref{eq:lemagree4}) this implies
\( 
\Rprel{\rho'}{\upsilon'(\pi(o).f)}{\upsilon(\rho'\pi(o).f)}=\upsilon(o.f)=\Rprel{\pi}{\tau(o.f)}{\tau'(\pi(o).f)}
 \).
Then, since $\rho'\supseteq\pi^{-1}$,
we would have $\tau'(\pi(o).f) 
=\pi(\pi^{-1}(\upsilon'(\pi(o).f)))
=\upsilon'(\pi(o).f)$, hence $\tau'(\pi(o).f)=\upsilon'(\pi(o).f)$, which is a contradiction. 
\end{itemize}
This completes the proof of $\Lagree(\upsilon,\upsilon',\pi,W\setminus\rlocs(\upsilon,\delta^\oplus))$ 
for heap locations.
\end{proof}

\begin{restatable}[subeffect]{lemma}{lemsubeff}
\upshape
If $P \models \eff \leq \effe$ then the following hold for all 
$\sigma,\sigma',\tau,\tau',\upsilon,\upsilon',\pi,\delta$ such that $\sigma\models P$ and $\sigma'\models P$:
(a) 
$\sigma \allowTo \tau\models \eff$ implies $\sigma \allowTo \tau\models\effe$;
(b)
$\agree(\sigma, \sigma', \pi,\effe)$ implies $\agree(\sigma,\sigma',\pi,\eff)$; and 
(c) 
$\AllowedDep{\tau}{\tau'}{\upsilon}{\upsilon'}{\eff}{\sigma}{\delta}{\pi}$
implies 
$\AllowedDep{\tau}{\tau'}{\upsilon}{\upsilon'}{\effe}{\sigma}{\delta}{\pi}$.
\end{restatable}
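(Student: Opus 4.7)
The plan is to derive each part directly from the subeffect definition (\ref{eq:subeffect}), which, under the assumption $\sigma\models P$, yields the two pointwise inclusions $\rlocs(\sigma,\eff)\subseteq\rlocs(\sigma,\effe)$ and $\wlocs(\sigma,\eff)\subseteq\wlocs(\sigma,\effe)$. The three parts then amount to pushing these inclusions through the definitions of, respectively, ``allows change from $\sigma$ to $\tau$'', agreement on a read effect, and allowed dependence. None of the three parts should be the main obstacle; the only thing to be careful about is that read effects are interpreted in the left state $\sigma$ (in particular in the presence of image expressions $G\Img f$), so the hypothesis $\sigma\models P$ is exactly what is needed to invoke the subeffect inclusions at the one state in which the relevant $\rlocs$ and $\wlocs$ are computed.

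For (a), I would unfold $\sigma\allowTo\tau\models\eff$ to $\sigma\successorTo\tau$ together with $\written(\sigma,\tau)\subseteq\wlocs(\sigma,\eff)$, then chain with $\wlocs(\sigma,\eff)\subseteq\wlocs(\sigma,\effe)$ (given by the subeffect judgment at $\sigma$, using $\sigma\models P$) to conclude $\written(\sigma,\tau)\subseteq\wlocs(\sigma,\effe)$, which is $\sigma\allowTo\tau\models\effe$. For (b), unfold $\agree$ via Def.~\ref{def:agreeX} to $\Lagree(\sigma,\sigma',\pi,\rlocs(\sigma,\effe))$, observe that $\Lagree$ is monotone in its location-set argument by downward restriction (immediate from Def.~\ref{def:locagreement}), and combine with $\rlocs(\sigma,\eff)\subseteq\rlocs(\sigma,\effe)$ to obtain $\Lagree(\sigma,\sigma',\pi,\rlocs(\sigma,\eff))$, i.e., $\agree(\sigma,\sigma',\pi,\eff)$.

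For (c), assume the hypothesis agreement of $\AllowedDep{\tau}{\tau'}{\upsilon}{\upsilon'}{\effe}{\sigma}{\delta}{\pi}$, namely
\[
\Lagree\bigl(\tau,\tau',\pi,(\freshLocs(\sigma,\tau)\cup\rlocs(\sigma,\effe))\setminus\rlocs(\tau,\delta^\oplus)\bigr).
\]
Using $\rlocs(\sigma,\eff)\subseteq\rlocs(\sigma,\effe)$ (again from $\sigma\models P$), the location set
$(\freshLocs(\sigma,\tau)\cup\rlocs(\sigma,\eff))\setminus\rlocs(\tau,\delta^\oplus)$
is contained in
$(\freshLocs(\sigma,\tau)\cup\rlocs(\sigma,\effe))\setminus\rlocs(\tau,\delta^\oplus)$,
so downward monotonicity of $\Lagree$ gives the hypothesis agreement for $\eff$. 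Invoking the assumed $\AllowedDep{\tau}{\tau'}{\upsilon}{\upsilon'}{\eff}{\sigma}{\delta}{\pi}$ produces the required $\rho\supseteq\pi$ witnessing
\[
\Lagree\bigl(\upsilon,\upsilon',\rho,(\freshLocs(\tau,\upsilon)\cup\written(\tau,\upsilon))\setminus\rlocs(\upsilon,\delta^\oplus)\bigr),
\]
which is exactly the conclusion of $\AllowedDep{\tau}{\tau'}{\upsilon}{\upsilon'}{\effe}{\sigma}{\delta}{\pi}$, since the consequent of Def.~\ref{def:allowedDepEResB} does not mention the effect at all. The only subtlety worth flagging in the write-up is the asymmetry between (b) and (c) versus (a): the inclusions $\rlocs(\sigma,\eff)\subseteq\rlocs(\sigma,\effe)$ are used ``covariantly'' in (a) on the conclusion side, but ``contravariantly'' in (b) and (c) to weaken a hypothesis agreement; in all cases the inclusion direction from (\ref{eq:subeffect}) is precisely the one needed.
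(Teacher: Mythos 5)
Your proof is correct and follows essentially the same route as the paper's (very terse) proof: all three parts are direct unfoldings of the definitions using the inclusions $\rlocs(\sigma,\eff)\subseteq\rlocs(\sigma,\effe)$ and $\wlocs(\sigma,\eff)\subseteq\wlocs(\sigma,\effe)$ at the state $\sigma\models P$, with the key observation for (c) being exactly the one the paper records — $\effe$ yields a stronger antecedent in Def.~\ref{def:allowedDepEResB} while the consequent is unchanged.
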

\begin{proof}
Straightforward from the definitions.  
For part (c), we have $\rlocs(\sigma,\eff)\subseteq\rlocs(\sigma,\effe)$,
so $\effe$ gives a stronger antecedent in Def.~\ref{def:allowedDepEResB}
and the consequent is unchanged between $\eff$ and $\effe$.
\end{proof}

\subsection{On the transition relation}

\begin{figure}[t!]
\begin{small}
\begin{mathpar}
  \inferrule[uLoad]
            { \sigma(y) = o\\ o\neq \semNull}
            { \configm{x:=y.f}{\sigma}{\mu} \trans{\phi}
              \configm{\skipc}{\update{\sigma}{x}{\sigma(o.f)}}{\mu} }

   \inferrule[uLoadX]
             { \sigma(y) = \semNull }
             { \configm{x:=y.f}{\sigma}{\mu} \trans{\phi} \Fault }

  \inferrule[uStore]
            { \sigma(x) = o\\ o\neq \semNull}
            { \configm{x.f:=y}{\sigma}{\mu} \trans{\phi}
              \configm{\skipc}{\update{\sigma}{o.f}{\sigma(y)}}{\mu} }

   \inferrule[uStoreX]
             { \sigma(x) = \semNull }
             { \configm{x.f:=y}{\sigma}{\mu} \trans{\phi} \Fault }

   \inferrule[uAssg]{}{
   \configm{x:=F}{\sigma}{\mu} \trans{\phi}
      \configm{\skipc}{\update{\sigma}{x}{\sigma(F)}}{\mu} }

   \inferrule[uSeq]{
          \configm{C}{\sigma}{\mu} \trans{\phi} \configm{D}{\tau}{\nu} 
       }{ 
          \configm{C;B}{\sigma}{\mu} \trans{\phi} \configm{D;B}{\tau}{\nu} 
       }

   \inferrule[uSeqX]{
          \configm{C}{\sigma}{\mu} \trans{\phi} \Fault
       }{ 
          \configm{C;B}{\sigma}{\mu} \trans{\phi} \Fault
       }

  \inferrule[uNew]
            { o \in \fresh(\sigma) \\
              \fields(K) \:=\:\ol{f}:\ol{T} \\
              \sigma_1 = \mbox{``$\sigma$ with $o$ added to heap, with type $K$ and default field values''}
            }
            { \configm{x:=\new{K}}{\sigma}{\mu} \trans{\phi}
              \configm{\skipc}{\update{\sigma_1}{x}{o} }{\mu}}

   \inferrule[uVar]{ x'= \varfresh(\sigma) } 
          { \configm{ \varblock {x\scol T}{C} }{\sigma}{\mu}
            \trans{\phi}
            \configm{ \subst{C}{x}{x'} ; \Endvar(x') 
                    }{\extend{\sigma}{x'}{\Default{T}}}{\mu} }

   \inferrule[uEVar]{}{
   \configm{ \Endvar(x) }{\sigma}{\mu} \trans{\phi} \configm{ \skipc }{\drop{\sigma}{x}}{\mu}
   }

   \inferrule[uWhT]{ \sigma(E) = \True }
            { \configm{\whilec{E}{C}}{\sigma}{\mu} \trans{\phi}
              \configm{C;\whilec{E}{C}}{\sigma}{\mu} }

   \inferrule[uWhF]{ \sigma(E) = \False }
            { \configm{\whilec{E}{C}}{\sigma}{\mu} \trans{\phi}
              \configm{\skipc}{\sigma}{\mu} }

   \inferrule[uIfT]{ \sigma(E) = \True }
            { \configm{\ifc{E}{C}{D}}{\sigma}{\mu} \trans{\phi}
              \configm{C}{\sigma}{\mu} }

   \inferrule[uIfF]{ \sigma(E) = \False }
            { \configm{\ifc{E}{C}{D}}{\sigma}{\mu} \trans{\phi}
              \configm{D}{\sigma}{\mu} }

\end{mathpar}
\end{small}
\vspace{-1ex}
\caption{Rules for unary transition relation $\trans{\phi}$ omitted from Fig.~\ref{fig:transSel}.}
\label{fig:trans}
\end{figure}

Fig.~\ref{fig:trans} completes the definition of the transition relation,
with respect to a given pre-model $\phi$.\footnote{\label{fn:coerce}%
To be very precise, in the transition rules for context calls (Fig.~\ref{fig:transSel}), we implicitly use a straightforward coercion:
the pre-model is applied to states which may have more variables than the ones in scope for the method context $\Phi$ for $\phi$.
Suppose $\Phi$ is wf in $\Gamma$.
For method $m$ in $\Phi$, $\phi(m)$ is defined on $\Gamma$-states.
Suppose $\sigma$ is a state for $\Gamma$ plus some additional variables $\ol{x}$ (including but not limited to spec-only variables).  
Then $\phi(m)(\sigma)$ is defined by discarding the additional variables and applying $\sigma$.
If the result is a set of states, then each of these states is extended with the additional variables mapped to their initial values.
This coercion is implicitly used in the rules context calls, i.e., 
rules \rn{uCall}, \rn{uCallX}, and \rn{uCall0} in Figure~\ref{fig:transSel}.
The coercion is also used in \RLIII\ where it is formalized in more detail.
} 
The definition is also parameterized by a function, $\fresh$\index{$\fresh$}, for which we assume
that, for any $\sigma$,
$\fresh(\sigma)$ a non-empty set of non-null references that are not 
in $\sigma(\lloc)$.

We take care  to model realistic allocators, allowing their behavior to be nondeterminisic at the level of states, to model their dependence on unobservable low-level implementation details, yet not requiring the full, unbounded allocator required by some separation logics.
However, the language is meant to be deterministic modulo allocation.  
To make that possible for local variables, we assume given a function 
$\varfresh: states \to \mathconst{LocalVar}$
\index{$\varfresh$} such that
$\varfresh(\sigma)\notin\Vars{\sigma}$.
We also assume that $\varfresh$ depends only on the domain of the state:
\begin{equation}\label{eq:varfresh}
\Vars{\sigma}\setminus SpecOnlyVars = \Vars{\sigma'}\setminus SpecOnlyVars \mbox{ implies } \varfresh(\sigma)=\varfresh(\sigma') 
\end{equation}
These technicalities are innocuous and consistent with stack allocation of locals.

A configuration $\cfg$ \dt{faults} if $\cfg\tranStar{\phi}\Fault$.
It \dt{faults next} if $\cfg\trans{\phi}\Fault$.
It \dt{terminates} if $\cfg\tranStar{\phi}\configm{\skipc}{\tau}{\_}$ for some $\tau$---
so ``terminates'' means eventual normal termination.
When applied to traces, these terms refer to the last configuration:
a trace faults if it can be extended to a trace in which the last configuration faults next.
Perhaps it goes without saying that $\cfg$ \dt{diverges} means it begins an infinite sequence of transitions; in other words, it has traces of unbounded length.

For any pre-model $\phi$, the transition relation $\trans{\phi}$
is total in the sense that, for any $\configm{C}{\sigma}{\mu}$ with $C\not\equiv\skipc$, 
there is an applicable rule and hence a 
successor---which may be another configuration or $\Fault$.
This relies on the starting configuration being well formed in the sense 
that all free methods are bound either in the model or the environment,
all free variables are bound in the state,
and the command has no occurrences of $\Endvar$ or $\Endlet$.
Moreover, $\Endvar(x)$ (resp.\ $\Endlet(m)$) only occurs in a configuration if
$x$ is in the state (resp.\ $m$ is in the environment).

Well formedness is preserved by the transition rules, and can be formalized straightforwardly
(see \RLII) but in this article we gloss over it for the sake of clarity.

The transition relation $\trans{\phi}$ is called \dt{rule-deterministic} if for every configuration $\configm{C}{\sigma}{\mu}$ there is at most one applicable transition rule.
Strictly speaking, this is a property of the definition (Figs.~\ref{fig:transSel} and~\ref{fig:trans}), not of the relation $\trans{\phi}$.

\begin{lemma}[quasi-determinacy of transitions]\label{lem:determinacy}
\upshape
For any pre-model $\phi$,
\begin{list}{}{}
\item[(a)] $\trans{\phi}$  is rule-deterministic.
\item[(b)] If $\RprelT{\pi}{\sigma}{\sigma'}$ and 
$\configm{C}{\sigma}{\mu} \trans{\phi} \configm{D}{\tau}{\nu}$ 
and 
$\configm{C}{\sigma'}{\mu} \trans{\phi} \configm{D'}{\tau'}{\nu'}$ 
then $D\equiv D'$, $\nu=\nu'$, and $\RprelT{\rho}{\tau}{\tau'}$ for some $\rho\supseteq\pi$.
\item[(c)] If $\RprelT{\pi}{\sigma}{\sigma'}$
then $\configm{C}{\sigma}{\mu} \trans{\phi} \Fault$ iff
$\configm{C}{\sigma'}{\mu} \trans{\phi} \Fault$.
\end{list}
\end{lemma}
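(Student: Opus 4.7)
The plan is to prove all three parts by case analysis on the transition rules in Fig.~\ref{fig:transSel} and Fig.~\ref{fig:trans}, leveraging (i) insensitivity of program expressions and formulas under state isomorphism (Lemma~\ref{lem:insensi}), (ii) the hygiene/varfresh property (\ref{eq:varfresh}), and (iii) the fault- and state-determinacy conditions on pre-models (Def.~\ref{def:preinterp}).

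For (a), I would inspect each transition rule and observe that its applicability is fully determined by the syntactic shape of $\Active(C)$ together with mutually exclusive semantic side conditions. The only rules whose premises overlap syntactically are the three for context calls: \rn{uCall}, \rn{uCallX}, and \rn{uCall0}. But by fault determinacy, $\Fault\in\phi(m)(\sigma)$ implies $\phi(m)(\sigma)=\{\Fault\}$, so exactly one of the three situations $\Fault\in\phi(m)(\sigma)$, $\phi(m)(\sigma)=\emptyset$, and $\phi(m)(\sigma)$ contains a non-fault state holds, and rule-determinism follows. The rules for \rn{uLoad}/\rn{uLoadX} and \rn{uStore}/\rn{uStoreX} are disambiguated by nullness of the dereferenced reference.

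For (b), by (a) the same rule applies on both sides. Control-manipulating rules (\rn{uLet}, \rn{uElet}, \rn{uEVar}, \rn{uECall}, \rn{uCallE}, \rn{uSeq}, \rn{uIfT}, \rn{uIfF}, \rn{uWhT}, \rn{uWhF}) leave the state essentially unchanged and determine $D$ and $\nu$ from $C$, $\mu$, and the value of a heap-independent test expression; Lemma~\ref{lem:insensi} yields $\Rprel{\pi}{\sigma(E)}{\sigma'(E)}$, hence $D\equiv D'$ and $\nu=\nu'$, and $\pi$ itself witnesses the resulting isomorphism (taking $\rho:=\pi$). Data-manipulating rules (\rn{uAssg}, \rn{uLoad}, \rn{uStore}) update isomorphic states at locations that correspond via $\pi$ with values related by $\pi$ (Lemma~\ref{lem:insensi} applied to $F$, $y$, or $x.f$), preserving isomorphism with $\rho:=\pi$. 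The three interesting cases are \rn{uVar}, \rn{uNew}, and \rn{uCall}. For \rn{uVar}, $\RprelT{\pi}{\sigma}{\sigma'}$ implies the two states have the same variable domain, so by (\ref{eq:varfresh}) $\varfresh(\sigma)=\varfresh(\sigma')$; the substituted commands coincide and extension by a default value preserves isomorphism. For \rn{uNew}, choose $o\in\fresh(\sigma)$ and $o'\in\fresh(\sigma')$ and set $\rho:=\pi\cup\{(o,o')\}$; since $o\notin\sigma(\lloc)$ and $o'\notin\sigma'(\lloc)$ this is a well-formed refperm of the extended states, and the defaults for fields ($\semNull$ for classes, $\Emp$ for regions, $0$ for integers, etc.) are trivially related by $\rho$. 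For \rn{uCall}, state determinacy of $\phi$ gives $\RprelTS{\pi}{\phi(m)(\sigma)}{\phi(m)(\sigma')}$; since on both sides we landed in a non-empty, non-fault outcome set, clause (ii) of Def.~\ref{def:state-iso} supplies the required $\rho\supseteq\pi$ with $\RprelT{\rho}{\tau}{\tau'}$.

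For (c), the only transitions that yield $\Fault$ are \rn{uLoadX}, \rn{uStoreX}, \rn{uCallX}, and their propagations through \rn{uSeqX}. For \rn{uLoadX}/\rn{uStoreX}, the premise is $\sigma(y)=\semNull$ (resp.\ $\sigma(x)=\semNull$); Lemma~\ref{lem:insensi} gives $\Rprel{\pi}{\sigma(y)}{\sigma'(y)}$, and since refperms are type-respecting and map references to non-null references, this forces $\sigma'(y)=\semNull$, symmetrically. For \rn{uCallX}, fault determinacy together with state determinacy yields $\Fault\in\phi(m)(\sigma)\iff\Fault\in\phi(m)(\sigma')$. Propagation through \rn{uSeqX} follows by a routine induction on the structure of the sequence, using (b) and (a) to line up the rules used on each side.

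The main obstacle is the allocation case \rn{uNew}: one must verify that extending $\pi$ by $(o,o')$ really produces a refperm of the fresh states (types match because $K$ is fixed), and that every field of the new object has a default value related by the extended refperm regardless of its type, so that full isomorphism of heaps is maintained. A secondary subtlety is that in rules that manipulate local environments (\rn{uVar}, \rn{uLet}) the coercion footnote~\ref{fn:coerce} is implicit, and one must check that the choice of fresh name (for \rn{uVar}) or binding extension (for \rn{uLet}) is syntactic and state-independent so that $\nu=\nu'$; this is exactly what (\ref{eq:varfresh}) and the hygiene assumption provide.
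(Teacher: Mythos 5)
Your proposal is correct and follows essentially the same route as the paper's proof: rule-by-rule inspection for (a) with fault determinacy disambiguating the context-call rules, a case analysis for (b) taking $\rho:=\pi$ except for allocation (where $\pi$ is extended by the pair of fresh references) and context calls (where state determinacy of the pre-model supplies $\rho$), and the same reasoning for (c). If anything, your treatment of \rn{uVar}, \rn{uLoadX}/\rn{uStoreX}, and the environment-manipulating rules is more explicit than the paper's, which handles (c) only by reference to (b).
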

\begin{proof}
(a) This is straightforward to check by inspection of the transition rules: 
for each command form, check that the applicable rules are mutually 
exclusive.
One subtlety is in the case of context call.
If there is $\tau\in\phi(m)(\sigma)$, and also $\Fault\in\phi(m)(\sigma)$,
then two transition rules can be used for $\configm{m()}{\sigma}{\mu}$.
This is disallowed by Def.~\ref{def:preinterp} (fault determinacy).
Also, Def.~\ref{def:preinterp} (state determinacy),
and condition (iii) in the definition of $\RprelTS{\pi}{}{}$
(Def.~\ref{def:state-iso})
distinguishes between the two transition rules for empty and non-empty $\phi(m)(\sigma)$
(see Fig.~\ref{fig:transSel}).

(b) Go by cases on $\Active(C)$.  For any command other than context call or allocation,
take $\rho=\pi$ and inspect the transition rules.  
For example, $x.f:=y$ changes the state by updating a field with values that are in agreement mod $\pi$.
For the case of $x:=E$ we need that expression evaluation respects isomorphism of states, Lemma~\ref{lem:insensi}.
For allocation, let $\rho=\{(o,o')\}\union \pi$ where $o,o'$ are the allocated objects.  
For context call we get the result by
the determinacy conditions of Def.~\ref{def:preinterp}.
The only commands that alter the environment are $\keyw{let}$ and $\Endlet$,
and we get $\nu=\nu'$ because their behavior is independent of the state.

(c) Similar to the proof of (b); using item (i) in the definition of $\RprelTS{\pi}{}{}$, for context calls.
\end{proof}
A consequence of (a) is that the transition relation is fault deterministic:
no configuration has both a fault and non-fault successor
(by inspection, no single rule yields both fault and non-fault).
We note these other corollaries: \\
(d) For all $i$, if $\RprelT{\pi}{\sigma}{\sigma'}$ and 
$\configm{C}{\sigma}{\mu} \trans{\phi}{\!\!}^i \configm{D}{\tau}{\nu}$ 
and 
$\configm{C}{\sigma'}{\mu} \trans{\phi}{\!\!}^i \configm{D'}{\tau'}{\nu'}$ 
then $D\equiv D'$, $\nu=\nu'$, and $\RprelT{\rho}{\tau}{\tau'}$ for some $\rho\supseteq\pi$
(by induction on $i$).
\\
(e) If $\RprelT{\pi}{\sigma}{\sigma'}$ and 
$\configm{C}{\sigma}{\mu} \trans{\phi} \configm{D}{\tau}{\nu}$ 
then $\configm{C}{\sigma'}{\mu} \trans{\phi} \configm{D}{\tau'}{\nu}$ 
and $\RprelT{\rho}{\tau}{\tau'}$, for some $\tau$ and some $\rho\supseteq\pi$
(because only $\skipc$ lacks a successor).\\
(f) From a given configuration $\configm{C}{\sigma}{\mu}$, exactly one of these three outcomes is possible:
normal termination, faulting termination, divergence.

\lemreadeff*
\begin{proof}
To prove the lemma we prove a stronger result.

\textbf{Claim:} Under the assumptions of Lemma~\ref{lem:readeff},
for any $i\geq 0$ and any $B,B',\mu,\mu'$ with 
\[ \configm{C}{\sigma}{\_}\trans{\phi}^i \configm{B}{\tau}{\mu} \mbox{ and }
\configm{C}{\sigma'}{\_}\trans{\phi}^i \configm{B'}{\tau'}{\mu'} \]
there is some $\rho\supseteq\pi$ such that $B\equiv B'$, $\mu=\mu'$, and 
\[ 
\begin{array}{l}
\Lagree(\tau,\tau',\rho,
(\freshLocs(\sigma,\tau)\union\rlocs(\sigma,\eff)\union\written(\sigma,\tau))\setminus\{\lloc\}) 
\\
\Lagree(\tau',\tau,\rho^{-1},
(\freshLocs(\sigma',\tau')\union\rlocs(\sigma',\eff)\union\written(\sigma',\tau'))\setminus\{\lloc\}) 
\\ 
\rho(\freshLocs(\sigma,\tau))\subseteq \freshLocs(\sigma',\tau') \\
\rho^{-1}(\freshLocs(\sigma',\tau'))\subseteq \freshLocs(\sigma,\tau) \\
\end{array}
\]
This directly implies the conclusion of the Lemma.

The claim is proved by induction on $i$.  
The base case holds because the fresh and written locations are empty, and agreement on 
$\rlocs(\sigma,\eff)$ is an assumption of the Lemma.
For the induction step, suppose the above holds and consider the next steps:
\[ \configm{B}{\tau}{\mu}\trans{\phi} \configm{D}{\upsilon}{\nu} \mbox{ and }
\configm{B}{\tau'}{\mu}\trans{\phi} \configm{D'}{\upsilon'}{\nu'} \]
Go by cases on whether $\Active(B)$ is a call.

\medskip

\underline{Case \Active(B) not a call.}
By judgment $\Phi\HVflowtr{\Gamma}{M}{P}{C}{Q}{\eff}$, the step from $\tau$ to $\upsilon$ respects $(\Phi,M,\phi,\eff,\sigma)$,
as does the step from $\tau'$ to $\upsilon'$.  
As this is not a call, the collective boundary is 
\[ \delta = \unioneff{N\in(\Phi,\mu),N\neq mod(B,M)}{\bnd(N)} \]
So by w-respect for each step we have 
$\agree(\tau,\upsilon,\delta)$ and $\agree(\tau',\upsilon',\delta)$.

We begin by proving the left-to-right agreement and inclusion for the induction step, i.e.,
we will find $\dot{\rho}$ such that 
$\Lagree(\upsilon,\upsilon',\dot{\rho},
(\freshLocs(\sigma,\upsilon)\union\rlocs(\sigma,\eff)\union\written(\sigma,\upsilon))\setminus\{\lloc\})$
and $\dot{\rho}(\freshLocs(\sigma,\upsilon))\subseteq \freshLocs(\sigma',\upsilon')$.

We will apply r-respect of the left step, instantiated with $\pi:=\rho$ and with the right step.  
The two antecedents in r-respect are 
$\agree(\tau',\upsilon',\delta)$, which we have, 
and 
\[ \Lagree(\tau,\tau',\rho, (\freshLocs(\sigma,\tau)\union\rlocs(\sigma,\eff))\setminus\rlocs(\tau,\delta^\oplus)) \]
which follows directly from the induction hypothesis.
So r-respect yields some $\dot{\rho}\supseteq\rho$ (and hence $\dot{\rho}\supseteq\pi$)
with $D\equiv D'$, $\nu=\nu'$,
and 
\begin{equation}\label{eq:readeffB}
\begin{array}{l}
\Lagree(\upsilon,\upsilon',\dot{\rho},
(\freshLocs(\tau,\upsilon)\union\written(\tau,\upsilon))\setminus\rlocs(\upsilon,\delta^\oplus)) 
\\
\dot{\rho}(\freshLocs(\tau,\upsilon))\subseteq\freshLocs(\tau',\upsilon')
\end{array}
\end{equation}
To conclude the left-to-right $\Lagree$ part of the induction step 
it remains to show the two conditions
\[ \begin{array}{l}
\Lagree(\upsilon,\upsilon',\dot{\rho}, \rlocs(\sigma,\eff)\setminus\{\lloc\}) \\
\Lagree(\upsilon,\upsilon',\dot{\rho},
(\freshLocs(\tau,\upsilon)\union\written(\tau,\upsilon))\intersect \rlocs(\upsilon,\delta^\oplus)) 
\end{array}\]
The latter holds because the intersection is empty, owing to 
$\agree(\tau,\upsilon,\delta)$ and $\agree(\tau',\upsilon',\delta)$
(noting that $\rlocs(\upsilon,\delta)=\rlocs(\tau,\delta)$ from those agreements and using
Eqn~(\ref{eq:footprintAgreement}) and the requirement that boundaries have framed reads).
For the same reasons, we have 
\[ \Lagree(\upsilon,\upsilon',\dot{\rho}, \rlocs(\sigma,\eff)\intersect\rlocs(\upsilon,\delta)) \]
So it remains to show 
\( \Lagree(\upsilon,\upsilon',\dot{\rho}, \rlocs(\sigma,\eff)\setminus\rlocs(\upsilon,\delta^\oplus)
) \).
This we get by applying Lemma~\ref{lem:selfframing-agreement2},
instantiated by $\pi,\rho := \rho,\dot{\rho}$ 
and $W:=\rlocs(\sigma,\eff)$
(fortunately, the other identifiers in the Lemma are just what we need here).  
The antecedents of the Lemma include allowed dependencies and agreements that
we have established above,
and also the reverse of (\ref{eq:readeffB}), for $\dot{\rho}^{-1}$,
which we get by symmetric arguments, using the reverse conditions in the induction hypothesis.
The Lemma yields exactly what we need:
$\Lagree(\upsilon,\upsilon',\dot{\rho}, \rlocs(\sigma,\eff)\setminus\rlocs(\upsilon,\delta^\oplus)$.

\begin{sloppypar}
Finally, we have 
$\dot{\rho}(\freshLocs(\sigma,\upsilon))
= \rho(\freshLocs(\sigma,\tau)) \union \dot{\rho}(\freshLocs(\tau,\upsilon)) 
\subseteq \freshLocs(\sigma',\tau') \union \dot{\rho}(\freshLocs(\tau,\upsilon)) 
\subseteq  \freshLocs(\sigma',\tau') \union \freshLocs(\tau',\upsilon') 
= \freshLocs(\sigma',\upsilon')$
by definitions, (\ref{eq:readeffB}), and the induction hypothesis.
\end{sloppypar}

The reverse agreement and containment in the induction step is proved symmetrically.

\medskip

\underline{Case \Active(B) is a call.}
Let the method be $m$ and suppose $\Phi(m) = \flowty{R}{S}{\effe}$.
By R-safe from the judgment $\Phi\HVflowtr{\Gamma}{M}{P}{C}{Q}{\eff}$, 
we have $\rlocs(\tau,\effe)\subseteq\rlocs(\sigma,\eff)\union\freshLocs(\sigma,\tau)$.
So by induction hypothesis we have 
$\Lagree(\tau,\tau',\rho,\rlocs(\tau,\eta)\setminus\{\lloc\})$.
So by $\phi\models\Phi$ and Def.~\ref{def:ctxinterp}(d),
there are two possibilities:
\begin{itemize}
\item $\phi(m)(\tau)=\emptyset=\phi(m)(\tau')$ and the steps both go by \rn{uCall0}.

\item $\phi(m)(\tau)\neq\emptyset\neq\phi(m)(\tau')$ and the steps both go by \rn{uCall}.
\end{itemize}
In the first case, $D\equiv B\equiv D'$, $\nu=\mu=\nu'$, and the states are unchanged so the agreements hold and we are done.

In the second case, we have $D\equiv B\equiv D'$, $\nu=\mu=\nu'$, 
$\upsilon\in\phi(m)(\tau)$ and $\upsilon'\in\phi(m)(\tau')$.
Moreover, by Def.~\ref{def:ctxinterp}(d) there is some $\dot{\rho}\supseteq\rho$
such that
\begin{equation}\label{eq:readeffC}
\begin{array}{l}
\Lagree(\upsilon,\upsilon',\dot{\rho},
(\freshLocs(\tau,\upsilon)\union\written(\tau,\upsilon))\setminus\{\lloc\}
\\
\dot{\rho}(\freshLocs(\tau,\upsilon))\subseteq\freshLocs(\tau',\upsilon')
\end{array}
\end{equation}
We also get reverse conditions, for $\dot{\rho}^{-1}$, by instantiating
Def.~\ref{def:ctxinterp}(d) with $\rho^{-1}$ and the states reversed.
We must show 
\[
\begin{array}{l}
\Lagree(\upsilon,\upsilon',\dot{\rho},
(\freshLocs(\sigma,\upsilon)\union\rlocs(\sigma,\eff)\union\written(\sigma,\upsilon))\setminus\{\lloc\}
\\
\dot{\rho}(\freshLocs(\sigma,\upsilon))\subseteq\freshLocs(\sigma',\upsilon')
\end{array}
\]
(and the reverse, which is by a symmetric argument).  
We get $\dot{\rho}(\freshLocs(\sigma,\upsilon))\subseteq\freshLocs(\sigma',\upsilon')$ 
using the induction hypothesis and (\ref{eq:readeffC}), similar to the proof above for the non-call case.
For the $\Lagree$ condition for $\upsilon,\upsilon'$, we have it for some locations by 
(\ref{eq:readeffC}).  It remains to show
$\upsilon,\upsilon'$ agree via $\dot{\rho}$ on the locations
$\freshLocs(\sigma,\tau)$, 
$\rlocs(\sigma,\eff)\setminus\written(\tau,\upsilon)$,
and 
$\written(\sigma,\upsilon)\setminus\written(\tau,\upsilon)$.
The latter simplifies to $\written(\sigma,\tau)$ 
because $\written(\sigma,\upsilon)\subseteq \written(\sigma,\tau)\union\written(\tau,\upsilon)$.
We obtain the agreements by applying Lemma~\ref{lem:selfframing-agreement2} 
with $\delta:=\emptyeff$, $\pi:=\rho$, and 
$W:= \freshLocs(\sigma,\tau) \union \rlocs(\sigma,\eff)\setminus\written(\tau,\upsilon)
\union \written(\sigma,\tau)$. 
To that end, observe that the above arguments have established 
$\AllowedDep{\tau}{\tau'}{\upsilon}{\upsilon'}{\eff}{\sigma}{\emptyeff}{\rho}$,
and symmetric arguments establish 
$\AllowedDep{\tau'}{\tau}{\upsilon'}{\upsilon}{\eff}{\sigma'}{\emptyeff}{\rho}$.
Moreover we have the antecedent agreements and $\dot{\rho}$ as witness.
So Lemma~\ref{lem:selfframing-agreement2} yields the requisite agreements and we are done.
\end{proof}

\begin{definition}[\textbf{denotation of command}, \ghostbox{$\means{\Gamma\proves C}$}]
Suppose $C$ is wf in $\Gamma$ and $\phi$ is a pre-model  
that includes all methods called in $C$ and not bound by $\keyw{let}$ in $C$.  
Define $\means{\Gamma\proves C}_\phi$
to be the function of type
$\means{\Gamma}\to\powerset(\means{\Gamma} \union \{\Fault\})$
given by 
\[
\means{\Gamma\proves C}_\phi(\sigma)
\eqdef
\{\tau \mid \configm{C}{\sigma}{\_}\tranStar{\phi} \configm{\skipc}{\tau}{\_} \} 
\;\union\; 
(\mifthenelse{ \{ \Fault \} }{ \configm{C}{\sigma}{\_}\tranStar{\phi} \Fault }{ \emptyset }) \]
\end{definition}
The denotation of a command can be used as a pre-model (Def.~\ref{def:preinterp}),
owing to this easily-proved property of the transition semantics:
if $\configm{C}{\sigma}{\_}\tranStar{\phi} \configm{D}{\tau}{\mu}$
then $\sigma\successorTo\tau$.
We define a pre-model suited to be a context model,
by taking into account a possible precondition:
Given $C$, $\phi$, formula $R$, and method name $m$ not in $\dom(\phi)$ and not called in $C$, one can extend $\phi$ to $\dot{\phi}$
that models $m$ by
\begin{equation}\label{eq:denotComm}
\dot{\phi}(m)(\sigma) \eqdef 
(\mifthenelse{ \{ \Fault \} }{ \sigma\not\models R }{ 
               \means{\Gamma\proves C}_\phi(\sigma) })
\end{equation}
The outcome is empty in case $C$ diverges.
The conditions of Def.~\ref{def:preinterp} hold owing to Lemma~\ref{lem:determinacy}, 
see corollaries (e) and (f) mentioned following that Lemma.
(Note that $\sigma\not\models R$ means there is no extension of $\sigma$ with values for spec-only variables in $R$ that make it hold.)

\begin{lemma}[context model denoted by command]
\label{lem:denotComm}
\upshape  
Suppose $\Phi\models_M^\Gamma C:\flowty{R}{S}{\effe}$ and $M=\mdl(m)$.  
Suppose $\phi$ is a $\Phi$-model.
Let $\dot{\Phi}$ be $\Phi$ extended with 
$ m:\flowty{R}{S}{\effe}$, where $m\notin\dom(\Phi)$ and $m$ not called in $C$.
Let $\dot{\phi}$ be the extension given by (\ref{eq:denotComm}).
If $N\in\Phi$ for all $N$ with $\mdl(m)\imports N$ then
$\dot{\phi}$ is a $\dot{\Phi}$-model.
\end{lemma}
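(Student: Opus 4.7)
The plan is to verify that $\dot\phi$ meets the pre-model conditions (Def.~\ref{def:preinterp}) and that the new entry $\dot\phi(m)$ satisfies clauses (a)--(d) of Def.~\ref{def:ctxinterp} for the spec $\flowty{R}{S}{\effe}$; the entries for methods in $\Phi$ need no checking since $\phi$ is already a $\Phi$-model, and the extension does not add methods called from the bodies denoted by $\phi$. To see that $\dot\phi$ is a pre-model, only the new entry requires work. Fault determinacy holds by construction: on states $\sigma\not\models R$ we return $\{\Fault\}$, and on other states we return $\means{\Gamma\proves C}_\phi(\sigma)$, which cannot contain $\Fault$ by Safety of the valid judgment (Def.~\ref{def:valid}). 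State determinacy follows from quasi-determinacy of $\trans{\phi}$ (Lemma~\ref{lem:determinacy} and its corollaries (d)--(f)): if $\RprelT{\pi}{\sigma}{\sigma'}$ then by Lemma~\ref{lem:insensi} satisfaction of $R$ is equivalent on the two states, and iterated application of (d) maps terminating traces to isomorphic terminating traces while (f) matches divergence with divergence.

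For clause (a), $\Fault\in\dot\phi(m)(\sigma)$ iff $\sigma\not\models R$ by construction, and the spec of $m$ in $\dot\Phi$ is exactly $\flowty{R}{S}{\effe}$ with its spec-only variables $\ol s$: by Lemma~\ref{lem:uniquespeconly} the witnessing values $\ol v$ in the statement of (a) are uniquely determined, so the clause reduces to the $\sigma\not\models R$ condition. Clause (b) is immediate from the Post and Write conditions of the valid judgment $\Phi\models_M C:\flowty{R}{S}{\effe}$ applied under the pre-model $\phi$: any $\tau\in\means{\Gamma\proves C}_\phi(\sigma)$ arises from a terminating trace starting in an $R$-state, so $\tau\models S$ and $\sigma\allowTo\tau\models\effe$. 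Clause (c) asks for boundary monotonicity $\rlocs(\sigma,\bnd(N))\subseteq\rlocs(\tau,\bnd(N))$ for every $N$ with $\mdl(m)\imports N$. The Encap component of validity includes per-step boundary monotonicity for every $N$ with $N\in\Phi$ or $N=M$, and the hypothesis ``$N\in\Phi$ for all $N$ with $\mdl(m)\imports N$'' together with $M=\mdl(m)$ covers exactly these modules; composing across the (finite) trace from $\sigma$ to $\tau$ gives the end-to-end inclusion.

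Clause (d) is the main obstacle and is discharged by Lemma~\ref{lem:readeff}. Given $\sigma,\sigma'$ with $\Lagree(\sigma,\sigma',\pi,\rlocs(\sigma,\effe)\setminus\{\lloc\})$, part (a) applied to both states shows $\sigma\models R$ iff $\sigma'\models R$ (since $R$ is framed by $\effe$ via the wf conditions on specs together with the Post/R-safe structure, and agreements on $\effe$-reads extend to $R$ by Lemma~\ref{lem:insensi}); outside the precondition both outcome sets are $\{\Fault\}$ and the clause is vacuous. Inside the precondition, the conclusion of Lemma~\ref{lem:readeff} gives exactly the divergence equivalence required by (d)(i) and, for matched terminating traces, the existence of $\rho\supseteq\pi$ with
\[
  \Lagree(\tau,\tau',\rho,(\freshLocs(\sigma,\tau)\cup\written(\sigma,\tau))\setminus\{\lloc\})
  \text{ and } \rho(\freshLocs(\sigma,\tau))\subseteq\freshLocs(\sigma',\tau'),
\]
which is (d)(ii). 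The only subtlety is matching the ``empty outcome'' disjunct in (d)(i) with divergence of $\langle C,\sigma,\_\rangle$: by construction of $\dot\phi(m)$ and the denotation, $\dot\phi(m)(\sigma)=\emptyset$ iff $\sigma\models R$ and every trace from $\configm{C}{\sigma}{\_}$ is infinite, so Lemma~\ref{lem:readeff}'s divergence equivalence is precisely what is needed. The hard part is really packaged inside Lemma~\ref{lem:readeff}; here we only need to feed it the right hypotheses and thread the snapshot values carefully through the quantification over spec-only variables.
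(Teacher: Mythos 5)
Your proof follows the paper's own argument almost exactly: the pre-model conditions come from quasi-determinacy (Lemma~\ref{lem:determinacy}), clauses (a) and (b) of Def.~\ref{def:ctxinterp} from Safety/Post/Write of the valid judgment, clause (c) from the per-step boundary monotonicity in Encap composed over the trace (using the import hypothesis together with $M=\mdl(m)$), and clause (d) from Lemma~\ref{lem:readeff}. Structurally there is nothing to distinguish the two.

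The one place you go beyond the paper is also the one place you go wrong: to handle clause (d) when only one of $\sigma,\sigma'$ satisfies $R$, you assert that $R$ is framed by $\effe$ ``via the wf conditions on specs''. That is false in this logic: Def.~\ref{def:wfspec} imposes no relation between the precondition's footprint and the frame condition, and a remark in the appendix explicitly gives $\flowty{x>0\land y>0}{\True}{\rw{x}}$ as a spec whose precondition can be falsified while preserving agreement on the locations designated by the reads. So you cannot conclude $\sigma\models R$ iff $\sigma'\models R$ from $\Lagree(\sigma,\sigma',\pi,\rlocs(\sigma,\effe)\setminus\{\lloc\})$. The paper's own proof simply invokes Lemma~\ref{lem:readeff} for part (d) and is silent about this mixed case; that lemma only applies when both states satisfy the precondition, so the scenario where exactly one state satisfies $R$ while $C$ diverges from it is not covered by either argument. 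Your instinct that this case needs to be addressed is sound; the justification you supply for dismissing it is not.
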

\begin{proof}
To check $\dot{\phi}(m)$ with respect to $\flowty{R}{S}{\effe}$,
observe that $C$ does not fault (via $\phi$) from states that satisfy $R$,
by $\Phi\models_M C:\flowty{R}{S}{\effe}$ and $\phi$ being a $\Phi$-model.  
So we get part (a) in Def.~\ref{def:ctxinterp}.
Part (b) is an immediate consequence of $\Phi\models_M C:\flowty{R}{S}{\effe}$.
Part (c) requires boundary monotonicity for every $N$ with $\mdl(m)\imports N$.
Encap for the judgment gives monotonicity for every $N\in\Phi$ and also for $M$ itself.
We're done owing to hypothesis $N\in\Phi$ for every $N$ with $M\irimports N$.
That condition is for single steps, but by simple induction on steps it implies 
$\rlocs(\sigma,\delta)\subseteq\rlocs(\tau,\delta)$ for any $\tau$ such that
$\configm{C}{\sigma}{\_}\tranStar{\phi} \configm{B}{\tau}{\mu}$ for some $B,\mu$.
Part (d) is by application of Lemma~\ref{lem:readeff}.
\end{proof}

\section{Appendix: Unary logic and its soundness (re Sect.~\ref{sec:unaryLog})} 

\subsection{Additional definitions and proof rules; soundness theorem}\label{sec:app:urules}

\begin{figure}
\begin{small}
\begin{mathpar}
\inferrule*[left=FieldAcc]
{z\not\equiv x}
{ \HPflowtr{}{\emptymod}{ y \neq \NULL\land z=y }{x := y.f}{x = z.f}{\wri{x},\rd{y},\rd{y.f}}}

\inferrule*[left=Assign]
{ y\not\equiv x }
{\HPflowtr{}{\emptymod}{ x = y }{ x:=F}{x=\subst{F}{x}{y}}{\wri{x},\ftpt(F)}}

\inferrule*[left=Seq]  
{ 
\Phi\HPflowtr{}{M}{P}{C_1}{P_1}{\eff_1}\\
\Phi\HPflowtr{}{M}{P_1}{C_2}{Q}{\eff_2,\rw{H\Img\ol{f}}}\\
P_1\imp H\#r\\
\eff_2 \mbox{ is } P/\eff_1\mbox{-immune}\\
\mathconst{spec-only}(r)
}
{
\Phi\HPflowtr{}{M}{P\wedge r=\lloc}{C_1;C_2}{Q}{\eff_1,\eff_2}
}

\inferrule*[left=While]  
{ 
\Phi\HPflowtr{}{M}{P\land E}{C}{P}{\eff,\rw{H\Img\ol{f}}}\\
\eff \mbox{ is } P/(\eff,\wri{H\Img\ol{f}})\mbox{-immune}\\
P\imp H\#r\\
\ind{\unioneff{N\in\Phi,N\neq M}{\bnd(N)}}{\rTow(\ftpt(E)) } \\
\mathconst{spec-only}(r)
} 
{
\Phi\HPflowtr{}{M}{P\land r=\lloc}{\whilec{E}{C}}{P\land \neg E}{\eff,\ftpt{E}}
}


\end{mathpar}
\end{small}
\caption{Syntax-directed proof rules not given in Fig.~\ref{fig:proofrulesU}.
}
\label{fig:proofrules}
\end{figure}

\begin{figure}
\begin{mathpar}
\ruleModIntro

\ruleCtxIntroInTwo

\ruleCtxIntroCall

\inferrule*[left=Conj]
{\Phi \HPflowtr{}{M}{ P }{ C }{ Q_0 }{\eff} \\
 \Phi \HPflowtr{}{M}{ P }{ C }{ Q_1 }{\eff} }
{\Phi \HPflowtr{}{M}{ P}{ C }{ Q_0\land Q_1 }{\eff} }

\inferrule*[left=Disj]
{\Phi \HPflowtr{}{M}{ P_0 }{ C }{ Q }{\eff} \\
 \Phi \HPflowtr{}{M}{ P_1 }{ C }{ Q }{\eff} }
{\Phi \HPflowtr{}{M}{ P_0 \lor P_1 }{ C }{ Q }{\eff} }

\inferrule*[left=Exist]
{ \Phi\HPflowtr{\Gamma,x:T}{M}{  P }{ C }{ Q }{\eff}}
{ \Phi\HPflowtr{\Gamma}{M}{ (\some{x:T}{P}) }{ C }{ Q }{\eff}  }



\end{mathpar}
\caption{Structural proof rules not given in Fig.~\ref{fig:proofrulesU}.
}
\label{fig:proofrulesStruct}
\end{figure}

Figures~\ref{fig:proofrules} and~\ref{fig:proofrulesStruct} present the proof rules omitted from Fig.~\ref{fig:proofrulesU}.
They are to be instantiated only with well-formed premises and conclusions.
To emphasize the point we make the following definitions.
A correctness judgment is \dt{derivable} iff it can be inferred 
using the proof rules instantiated with well-formed premises and conclusion.
A proof rule is \dt{sound} if for any instance 
with well-formed premises and conclusion,
the conclusion is valid if the premises are valid and the side conditions hold.

Expression $G$ is \dt{$P/\eff$-immune}
iff this is valid:
\( P\imp \ind{\ftpt(G)}{\eff}  \).
Effect  $\effe$ is \dt{$P/\eff$-immune} iff 
$G$ is $P/\eff$-immune for every $G$ with 
$\wri{G\Img f}$ or $\rd{G\Img f}$ in $\effe$ (see \RLI).
The key fact about immunity is that if $\effe$ is $P/\eff$-immune then 
\begin{equation}\label{eq:immuneE}
\sigma\models P \mbox{ and }
\sigma\allowTo\tau \models\eff
\mbox{ imply } 
\rlocs(\sigma,\effe) = \rlocs(\tau,\effe)
\mbox{ and }
\wlocs(\sigma,\effe) = \wlocs(\tau,\effe)
\end{equation}

\begin{definition}[\textbf{boundary monotonicity spec}]\label{def:bmonspec}
$BndMonSp(P,\eff,M)$ is $\flowty{P\land Bsnap_M}{Bmon_M}{\eff}$
where $Bsnap_M$ and $Bmon_M$
\index{$\Bsnap$}\index{$\Bmon$}
 are defined as follows.
Let $\delta$ be $\bnd(M)$,
normalized so that for each field $f$
for which $\rd{H\Img f}$ occurs in $\bnd(M)$ for some $H$,  
there a single region expression $G_f$ with $\rd{G_f\Img f}$ in $\delta$. 
Let $Bsnap_M$ (for ``boundary snap'') be the conjunction over fields $f$ of formulas $s_f=G_f$ where each $s_f$ is a fresh spec-only variable.
Let $Bmon_M$ be the conjunction over fields $f$ of formulas $s_f\subseteq G_f$.
\end{definition}

\begin{remark}\upshape
In case boundaries are empty, the postcondition becomes vacuously true.
As a result, the second premises in rules \rn{ModIntro} and \rn{CtxIntroCall}, for boundary monotonicity, become trivial consequences of the main premises. 
\end{remark}

\begin{remark}\upshape
The syntax directed rules in Fig.~\ref{fig:proofrules} are very similar to the unary proof rules in \RLIII.
Other than addition of modules, one noticeable 
difference is that in \RLIII\ rules \rn{Seq} and \rn{While} require the effects to be read framed.  This is not needed with the current definition of valid judgment which imposes a stronger condition for read effects (Def.~\ref{def:valid}).
\qed\end{remark}

\begin{remark}\label{rem:ctxIntro}
\upshape
Recall that rule \rn{CtxIntro} (Fig.~\ref{fig:proofrulesU}) allows the introduction of additional modules, by adding methods to the hypothesis context (see Sect.~\ref{sec:encap}).  It has side conditions which ensure encapsulation.
For method calls, \rn{CtxIntro} is useful to add context that is not imported by the method's module.
A separate rule, \rn{CtxIntroCall}, is needed to add context that is imported by the method's module (as it was in \RLII).
To add a method of the current module to the context, rule \rn{CtxIntroIn2} is used if the judgment is for a non-call; otherwise \rn{CtxIntroCall} is used.
To add a method to the context for a module already present in context, rule \rn{CtxIntroIn1} is used.
The context intro rules are not applicable to control structures, so requisite context should be introduced for their constituents before their proof rules are used.

The axioms for atomic commands (e.g., \rn{Alloc} in Fig.~\ref{fig:proofrulesU}) are for the default module $\emptymod$ and the empty context, or in the case of \rn{Call} the context with just the called method.
Rule \rn{ModIntro} changes the current module from $\emptymod$ to another one; 
this is not needed in \RLII\ because it's main significance is to enforce boundary monotonicity (Def.~\ref{def:valid}) which is not needed in \RLII.
For non-call atomic commands, the rule needs to be used before introducing methods of the current module into the context. 

Some of the rules use a second premise, the boundary monotonicity spec of Def.~\ref{def:bmonspec}, to enforce boundary monotonicity.\footnote{One can contrive a rule with  only one premise, subject to conditions that ensure it refines the second spec, but we prefer this way.}
In many cases, this judgment can be derived from the primary judgment of the rule,
by a simple use of the \rn{Frame} rule 
to get $\Bsnap$ in the postcondition, and then \rn{Conseq} to get $\Bmon$.
\qed\end{remark}


\unarysoundness*

The proofs comprise the following subsections~\ref{sec:app:call}--\ref{sec:app:link}.
We prove the R-safe and Encap conditions for all rules, since Encap differs from the definition in \RLII\ and R-safe is a new addition.
Otherwise, the proofs are mostly as in \RLII.
We give full proofs for the rules that have significantly changed from \RLII,\RLIII,
e.g., \rn{CtxIntro} and \rn{SOF}.

\subsection{Soundness of \rn{Call}}\label{sec:app:call}


To show soundness of the axiom
$m : \flowty{P}{Q}{\eff} \proves_{\emptymod} m(): \flowty{P}{Q}{\eff} $,
consider any $\sigma$ with $\hat{\sigma}\models P$
where $\hat{\sigma}\eqdef\extend{\sigma}{\ol{s}}{\ol{v}}$ and $\ol{s}$ are the spec-only variables of $P$.
Consider any $\phi$ that is an $(m:\flowty{P}{Q}{\eff})$-model.
Owing to $\hat{\sigma}\models P$ and Def.~\ref{def:ctxinterp} of context model,
there is no faulting transition.  
So either $\phi(m)(\sigma)$ is empty and the stuttering transition is taken (transition rule \rn{uCall0}),
or execution terminates in a single step $\configm{m()}{\sigma}{\_}\trans{\phi}\configm{\skipc}{\tau}{\_}$
with $\tau\in\phi(m)(\sigma)$ (transition rule \rn{uCall}).  
The stuttering transition repeats indefinitely, and Safety, Post, Write, R-safe, and Encap all hold because the configuration never changes.
In case execution terminates in $\configm{\skipc}{\tau}{\_}$,
Safety, Post, and Write are immediate from Def.~\ref{def:ctxinterp},
which in particular says $\hat{\tau}\models Q$ where $\tau\eqdef\extend{\tau}{\ol{s}}{\ol{v}}$.
For R-safe, there is only one configuration that is a call, the initial one, and it is r-safe 
because the frame condition in the judgment is exactly the frame condition of the method's spec.

Encap requires boundary monotonicity for the current module and every module in context.
Boundary monotonicity for module $\emptymod$ holds because $\bnd(\emptymod)=\emptymod$.
It holds for $\mdl(m)$, the one module in context, by Def.~\ref{def:ctxinterp}(c), since $\imports$ is reflexive.

Encap requires w-respect for every $N$ in context different from the current module,
which in this case means either $\mdl(m)$ or nothing, depending whether $\mdl(m)=\emptymod$.
The step w-respects $\mdl(m)$ because it is a call and $\mdl(m)\imports\mdl(m)$.

Encap considers $\sigma',\pi$ such that 
$\Lagree(\sigma,\sigma',\pi, \rlocs(\sigma,\effe)\setminus\rlocs(\sigma,\delta^\oplus))$
where collective boundary $\delta$  is the union of boundaries for $N$ in context and not imported by $\mdl(m)$; hence $\delta = \emptyeff$.  
By condition (d) in Def.~\ref{def:ctxinterp},
we have $\phi(m)(\sigma)=\emptyset$ iff $\phi(m)(\sigma')=\emptyset$,
so either both transition go via \rn{uCall0} to unchanged states, thus satisfying r-respect,
or both transition go via \rn{uCall} to states $\tau,\tau'$ with
$\tau\in\phi(m)(\sigma)$ and $\tau'\in\phi(m)(\sigma')$.
In the latter case, $\rlocs(\sigma,\emptyeff)^\oplus$ is $\{\lloc\}$ by definition of $\rlocs$, 
and the r-respect condition to be proved is exactly the 
condition (d) in Def.~\ref{def:ctxinterp}.
In a little more detail, we must show the final states agree on $\freshLocs(\sigma,\tau)\union\written(\sigma,\tau)\setminus\rlocs(\tau,\emptyeff^\oplus)$
which simplifies to
$\freshLocs(\sigma,\tau)\union\written(\sigma,\tau)\setminus\{\lloc \}$.
R-respects also requires 
a condition which simplies to  $\rho(\freshLocs(\sigma,\tau))\subseteq \freshLocs(\sigma',\tau')$
because $\rlocs(\tau,\emptyeff)=\emptyset$.


\subsection{Soundness of \rn{FieldUpd}}

This is an axiom: 
$ \HPflowtr{}{\emptymod}{ x \neq \NULL }{x.f := y}{x.f = y}{\wri{x.f},\rd{x},\rd{y}}$.
The Safety, Post, and Write conditions are straightforward and proved the same way as in \RLI.  
R-safe holds because there is no method call.
For Encap, the only steps to consider are the single terminating steps from states where $x$ is not null. 
So suppose 
$\configm{x.f:=e}{\sigma}{\_}\trans{\phi}\configm{\skipc}{\upsilon}{\_}$,
where $\upsilon=\update{\sigma}{\sigma(x).f}{\sigma(y)}$.
For Encap,  boundary monotonicity: the only relevant boundary is $\bnd(\emptymod)$ which is empty,
so monotonicity holds vacuously.
For Encap, w-respect is vacuously true for the empty boundary.
For r-respect, since the command is not a call the collective boundary 
is empty.  As we are considering the initial step and the boundary is empty, the antecedent of r-respect can be written

\begin{equation}\label{eq:soundUpd}
\Lagree(\sigma,\sigma',\pi, \rlocs(\sigma,\eff) \setminus\{\lloc\}) 
\mbox{ and }
\configm{x.f:=e}{\sigma'}{\_}\trans{\phi}\configm{\skipc}{\upsilon'}{\_}
\end{equation}
Since there is no allocation, extending $\pi$ is not relevant, and the condition about fresh locations is vacuous, so it remains to show that 
\(
\Lagree(\upsilon,\upsilon',\pi,
(\written(\sigma,\upsilon))\setminus\{\lloc\}) 
\).
What is written is the location $\sigma(x).f$, so this simplifies to
\(
\Lagree(\upsilon,\upsilon',\pi,\{ \sigma(x).f \})
\).
Given that $\rd{x}$ is in the frame condition, we have $x\in\rlocs(\sigma,\eff)$
so the assumption (\ref{eq:soundUpd}) gives agreement on which location is written.
It remains to show agreement on the value written, which is $\sigma(y)$ versus $\sigma'(y)$.
From the frame condition we have $y\in\rlocs(\sigma,\eff)$, 
so by (\ref{eq:soundUpd}) we have initial agreement on it 
and we are done.

\subsection{Soundness of \rn{If}}


Suppose the premises are valid:
$\Phi \HVflowtr{}{M}{ P \land E}{C_1}{Q}{\eff}$ and 
$\Phi \HVflowtr{}{M}{P \land \neg E}{C_2}{Q}{\eff}$.
Suppose the side condition is valid: 
$\ind{\unioneff{N\in\Phi,N\neq M}{\bnd(N)}}{\rTow(\ftpt(E)) }$.
To show $\Phi \HPflowtr{}{M}{P}{\ifc{E}{C_1}{C_2}}{Q}{\eff,\ftpt(E)} $,
we only consider R-safe and Encap, because the rest is straightforward and similar to previously published proofs.  
Consider any $\Phi$-model $\phi$, noting that the premises have the same context.
Consider and any $\sigma$ with $\sigma\models P$.
Consider the case that $\sigma(E)=true$ (the other case being symmetric).
So the first step is 
$\configm{\ifc{E}{C_1}{C_2}}{\sigma}{\_}
\trans{\phi}
\configm{C_1}{\sigma}{\_}$.
This is not a call, so the step (or rather, its starting configuration) satisfies r-safe. 
For Encap, the first step does not write, so it satisfies boundary monotonicity and w-respect.

\begin{sloppypar}
For r-respect, the requisite collective boundary is 
$\delta = \unioneff{N\in(\Phi,N\neq M}{\bnd(N)}$ because there is no $\Endcall$ and the environment is empty.
We show r-respect for the first step, i.e., instantiating r-respect with
$\tau,\upsilon := \sigma,\sigma$. 
The requisite condition for this step is that for any $\sigma'$, if 
\[ \configm{\ifc{E}{C_1}{C_2}}{\sigma'}{\_}
\trans{\phi}
\configm{D'}{\sigma'}{\_}
\]
and 
$\Lagree(\sigma,\sigma',\pi,
(\freshLocs(\sigma,\sigma)\union\rlocs(\extend{\sigma}{\ol{s}}{\ol{v}},(\eff,\ftpt(E)))\setminus\rlocs(\sigma,\delta^\oplus))
$
then $D'\equiv C_1$ and 
two agreement conditions about fresh and written locations.
(We omitted one antecedent, $\agree(\sigma',\sigma',\delta)$, which is vacuous.)
There are no fresh or written locations, so those two conditions hold.
It remains to prove $D'\equiv C_1$.
We can simplify the antecedent to 
\[ 
\Lagree(\sigma,\sigma',\pi,
(\rlocs(\sigma,(\eff,\ftpt(E)))\setminus\rlocs(\sigma,\delta^\oplus)))
\]
Because the side condition is true,
$ \ind{\unioneff{N\in\Phi,N\neq M}{\bnd(N)}}{\rTow(\ftpt(E)) }$,
we have $\rlocs(\sigma,\ftpt(E))$ disjoint from 
$\rlocs(\sigma,\delta^\oplus)$.
So 
$\Lagree(\sigma,\sigma',\pi,
(\rlocs(\sigma,(\eff,\ftpt(E)))\setminus\rlocs(\sigma,\delta^\oplus)))$
implies 
$\Lagree(\sigma,\sigma',\pi, \rlocs(\sigma,\ftpt(E)))$.
Hence  $\sigma(E)=\sigma'(E)$  by footprint agreement lemma.
By semantics, $D'\equiv C_1$ and we are done.
\end{sloppypar}

For subsequent steps in the case $\sigma(E)=true$, we can appeal to the premise for $C_1$
which applies to the trace starting from $\configm{C_1}{\sigma}{\_}$
since $\sigma\models P\land E$.  
This yields r-safe and respect (as well as the other conditions for validity).

\subsection{Soundness of \rn{Var}}


Suppose the premise is valid:
\( \Phi \HVflowtr{\Gamma,x:T}{M}{P\land x=\Default{T}}{C}{P'}{\rw{x},\eff} \).
To prove the R-safe and Encap conditions for 
$ \Phi \HVflowtr{\Gamma}{M}{P}{\varblock{x\scol T}{C}}{P'}{\eff} $,
let $\phi$ be a $\Phi$-model and $\hat{\sigma}\models P$ (where $\hat{\sigma}$ extends $\sigma$
with values for the spec-only variables of $P$).
The first step is 
$\configm{ \varblock {x\scol T}{C} }{\sigma}{\mu}
            \trans{\phi}
            \configm{ \subst{C}{x}{x'} ; \Endvar(x') 
                    }{\extend{\sigma}{x'}{\Default{T}}}{\mu}$
where $x'= \varfresh(\sigma)$.  
Let $\delta=\unioneff{N\in\Phi,N\neq M}{\bnd(N)}$.
This step satisfies w-respect because the variables in $\delta$ are already in scope, so are distinct from $x'$.  (Indeed, $x'$ is a local variable and boundaries cannot contain locals.)
The first configuration satisfies r-safe because it is not a call.
To show the first step satisfies r-respect,
note first that $\rlocs(\sigma,\delta)=\rlocs(\extend{\sigma}{x'}{\Default{T}}, \delta)$, again because $x'$ is not in $\delta$.
Consider taking the first step from an alternate state $\sigma'$ satisfying the requisite agreements with $\sigma$.
Now $\sigma'$ has the same variables as $\sigma$ 
(by definition of r-respect, including footnote~\ref{fn:r-respect}), 
and by assumption (\ref{eq:varfresh}) the choice of $x'$ depends only on the domain of $\sigma$, 
so the alternate step introduces the same local $x'$
and the same command $\subst{C}{x}{x'} ; \Endvar(x')$. 
We have $\freshLocs(\sigma,\extend{\sigma}{x'}{\Default{T}}) = \{x'\}$ by definition,
and the agreements for r-respect follow directly, noting 
that $\Default{T}$ is a fixed value dependent only on the type $T$.

If execution reaches the last step, that last step satisfies r-safe and respects because it merely removes $x'$ from the state.
For any other step, the result follows straightforwardly from R-safe and Encap for the premise:
The state $\extend{\sigma}{x'}{\Default{T}})$ satisfies 
$P\land x=\Default{T}$, and a trace of $\subst{C}{x}{x'} ; \Endvar(x')$
gives rise to a trace of $C$ (by dropping $\Endvar(x')$ and renaming),
for which the premise yields r-safe, respects, and indeed Safety etc.


\subsection{Soundness of \rn{ModIntro}}

\[\ruleModIntro\]






For Encap, as $A$ is an atomic command $A$, the only reachable step is 
the single step taken in a terminating execution 
\( \configm{A}{\sigma}{\_} \trans{\phi} \configm{\skipc}{\tau}{\_} \)
or the stutter step by \rn{uCall0}, which has the form
\( \configm{A}{\sigma}{\_} \trans{\phi} \configm{A}{\sigma}{\_} \).
(A stutter step may repeat, but no other state is reached.)
In either case, there is no $\Endcall$ in the configuration, and the environment is empty.

For Encap, boundary monotonicity for $N\in\Phi$ is from the first premise, 
and boundary monotonicity for $N=M$ is from the second premise.

For Encap, the w-respect condition quantifies over $N\in(\Phi,\_)$ different from the $mod(A,M)$.
Since the environment is empty,
$N\in(\Phi,\_)$ is the same as $N\in\Phi$.
Since $A$ has no $\Endcall$, $mod(A,M)$ is $M$.
So the condition quantifies over $N\in\Phi$ with $N\neq M$.
By side condition $M\notin\Phi$, this is the same as $N\in\Phi$.
So the condition for the conclusion is the same as for the first premise, from which we obtain Encap (a).  

For Encap r-respect, go by cases whether $A$ is a method call.  
If not, then the collective boundary for the premise is 
$\unioneff{N, N\in(\Phi,\_),N\neq mod(A,\emptymod)}{\bnd(N)}$,
and for the conclusion it is $\unioneff{N, N\in(\Phi,\_),N\neq mod(A,M)}{\bnd(N)}$.
These are the same, owing to side condition $M\notin\Phi$, and simplifying as above.
So r-respect is immediate by the first premise.

If $A$ is a call to some method $p$, the collective boundary is 
$\unioneff{N, N\in(\Phi,\_),\mdl(p)\not\imports N}{\bnd(N)}$.
This is independent of the current module, so again the conclusion is direct from the first premise.

\subsection{Soundness of \rn{CtxIntro}}

\[\ruleCtxIntro\]


\begin{proof}
Consider any $(\Phi,m\scol\flowty{R}{S}{\effe})$-model $\phi$.
By definitions, $\drop{\phi}{m}$ is a $\Phi$-model, with which we can instantiate the premise.  
The Safety, Post, Write, and R-safe conditions follow from those for the premise---it is only the Encap condition that has a different meaning for the conclusion than it does for the premise.
 
For Encap, as $A$ is an atomic command $A$, the only reachable step is 
a single step, either the terminating step
\( \configm{A}{\sigma}{\_} \trans{\phi} \configm{\skipc}{\tau}{\_} \)
given by \rn{uCall} or the stuttering step
by \rn{uCall0},
which is  \( \configm{A}{\sigma}{\_} \trans{\phi} \configm{A}{\tau}{\_} \) with $\tau=\sigma$.

For Encap, for boundary monotonicity we need $\rlocs(\sigma,\bnd(N))\subseteq\rlocs(\tau,\bnd(N))$ 
for all $N$ with $N\in (\Phi,m:\flowty{R}{S}{\effe})$ or $N=M$.
This holds for all $N\in\Phi$, and for $N=M$, by the same condition from the premise,
so it remains to consider $N=\mdl(m)$.
From the premise we have $\sigma\allowTo\tau\models \eff$.
By side condition (and $\sigma\models P$) we have $\sigma\models \ind{\bnd(N)}{\eff}$.
So we have $\agree(\sigma,\tau,\bnd(N)$ by separator property (\ref{eq:sepagree}).
Since boundaries are read framed (Def.~\ref{def:framedreadsDyn}),
we can apply footprint agreement (\ref{eq:footprintAgreement})
to get $\rlocs(\upsilon,\bnd(N))=\rlocs(\tau,\bnd(N))$. 

For Encap, we need w-respect of each $N$ with $N\in (\Phi,m:\flowty{R}{S}{\effe})$ and $N\neq mod(A,M)$. (simplified for the empty environment, as in the proof of \rn{ModIntro}).
Since $\Endcall$ does not occur in $A$, $N \neq mod(A,M)$ simplifies to $N\neq M$.
Again, we have this condition from the premise for all $N$ except $N=\mdl(m)$.
For that, in the case that $A$ is not a call to a method $m$ with $\mdl(m)\imports N$,
we must show $\agree(\sigma,\tau,\bnd(N))$; and it was shown already in the proof of (c).

For Encap, we show r-respect by cases:

\underline{Case: the step is not a call.}
Then the collective boundary 
is $\delta=\unioneff{N\in (\Phi,m:\flowty{R}{S}{\effe}), N\neq mod(A,M)}{\bnd(N)}$,
and $N\neq mod(A,M)$ is just $N\neq M$.

Let $\dot{\delta}$ be the collective boundary for the premise:
$\dot{\delta}=\unioneff{N\in \Phi, N\neq M }{\bnd(N)}$
(again, simplifying $N\neq mod(A,M)$ to $N\neq M$).
So $\delta$ is $\dot{\delta},\bnd(N)$.
If $N=M$, or $N\in\Phi$, or $\bnd(N)=\emptyeff$ then $\dot{\delta}$ is equivalent to $\delta$ 
and we get r-respect directly from the premise.
Otherwise, suppose \( \configm{A}{\sigma'}{\_} \trans{\phi} \configm{B}{\tau'}{\_} \) 
and $\agree(\sigma',\tau',\delta)$ 
and 
\begin{equation}\label{eq:CtxIntroA}
\Lagree(\sigma,\sigma',\pi, \rlocs(\sigma,\eff)\setminus\rlocs(\sigma,\delta^\oplus))
\end{equation}
(This is simplified from the general condition of r-respect, which includes fresh locations in the assumed agreement; here, because we consider the first step of computation, there are none.)
We must show 
\begin{equation}\label{eq:CtxIntro}
\begin{array}{l}
\Lagree(\tau,\tau',\rho,
(\freshLocs(\sigma,\tau)\union\written(\sigma,\tau))\setminus\rlocs(\tau,\delta^\oplus)) \\
\rho(\freshLocs(\sigma,\tau)\setminus\rlocs(\tau,\delta))\subseteq\freshLocs(\sigma',\tau')\setminus\rlocs(\tau',\delta)
\end{array}
\end{equation}
The premise gives an implication similar to (\ref{eq:CtxIntroA})$\imp$(\ref{eq:CtxIntro}) 
but for $\dot{\delta}$.
Now $\dot{\delta}$ may be a proper subeffect of $\delta$, so we only have
$\rlocs(\sigma,\dot{\delta})\subseteq\rlocs(\sigma,\delta)$ and thus
$\rlocs(\sigma,\eff)\setminus\rlocs(\sigma,\delta^\oplus)$ may be a proper subset of 
$\rlocs(\sigma,\eff)\setminus\rlocs(\sigma,\dot{\delta}^\oplus)$.
This means (\ref{eq:CtxIntroA}) does not imply the antecedent in r-respects for the premise so we 
cannot simply apply that.
Instead, we exploit the fact that the command $A$ is one of the assignment forms:
$x := F$, $x := \new{K}$, $x := x.f$, $x.f := x$.
Each of these has a minimal set of locations on which it depends in the relevant sense.

\textbf{Claim:} for each of the atomic, non-call commands, and for each $\sigma,\sigma',\mu,\mu'$, there is a
finite number of minimal sets $X\subseteq\locations(\sigma)$ 
such that 
if $\configm{A}{\sigma}{\mu} \trans{} \configm{\skipc}{\tau}{\mu}$, 
$\configm{A}{\sigma'}{\mu} \trans{} \configm{\skipc}{\tau'}{\mu}$, and
$\Lagree(\sigma,\sigma',\pi, X)$,
then there is $\rho\supseteq\pi$ with 
\[
\Lagree(\tau,\tau',\rho, \freshLocs(\sigma,\tau)\union\written(\sigma,\tau))
\mbox{ and }
\rho(\freshLocs(\sigma,\tau)) \subseteq \freshLocs(\sigma',\tau')
\]
(Here we omit the model for $\trans{}$, which is not relevant to semantics of non-call atomics.)
In fact the minimal sets are unique in most cases, but we do not need that.\footnote{It is only assignments $x:=F$ for which non-uniqueness is possible, owing to information loss in arithmetic expressions.  For example, with the assignment
$x:=y*z$ and for $\sigma$ with $\sigma(y)=0=\sigma(z)$ then agreement on either $y$ or $z$ is enough to ensure the values written to $x$ agree.  The minimal sets are $\{y\}$ and $\{z\}$.
This also happens with conditional branches, like ``if x or y''.
}

\begin{sloppypar}
Now, consider the antecedent of r-respect for the premise:
$\Lagree(\sigma,\sigma',\pi, \rlocs(\sigma,\eff)\setminus\rlocs(\sigma,\dot{\delta}^\oplus))$.
We must have $X\subseteq \rlocs(\sigma,\eff)\setminus\rlocs(\sigma,\dot{\delta}^\oplus)$,
as otherwise, according to the Claim, r-respect would not hold for the premise.
By side condition, we have $\hat{\sigma}\models \ind{\bnd(\mdl(m))}{\rTow(\eff)}$,
hence $\rlocs(\sigma,\bnd(N))$ is disjoint from $\rlocs(\sigma,\eff)$
by the basic separator property mentioned just before (\ref{eq:sepagree}).
By set theory, from 
$X\subseteq \rlocs(\sigma,\eff)\setminus\rlocs(\sigma,\dot{\delta}^\oplus)$
and $\rlocs(\sigma,\bnd(N)) \intersect \rlocs(\sigma,\eff) = \emptyset$
we get $X\subseteq \rlocs(\sigma,\eff)\setminus\rlocs(\sigma,\delta^\oplus)$.
By monotonicity of $\Lagree$, Eqn.~(\ref{eq:LagreeMono}), the agreement (\ref{eq:CtxIntroA})
implies by $X\subseteq \rlocs(\sigma,\eff)\setminus\rlocs(\sigma,\delta^\oplus)$
the antecedent agreement in the Claim.
Whence by the Claim we get agreement on everything fresh and written, 
which implies the agreement in (\ref{eq:CtxIntro}).
As for the second line of (\ref{eq:CtxIntro}),
what the Claim gives is 
$\rho(\freshLocs(\sigma,\tau)) \subseteq \freshLocs(\sigma',\tau')$.
This implies 
$\rho(\freshLocs(\sigma,\tau)\setminus\rlocs(\tau,\delta)) \subseteq \freshLocs(\sigma',\tau')$.
From  $\agree(\sigma',\tau',\delta)$ we have  $\rlocs(\tau',\delta)=\rlocs(\sigma',\delta$ 
so there are no fresh locations in $\rlocs(\tau',\delta)$.
Hence 
$\freshLocs(\sigma',\tau') = \freshLocs(\sigma',\tau')\setminus\rlocs(\tau',\delta)$
so we have 
$\rho(\freshLocs(\sigma,\tau)\setminus\rlocs(\tau,\delta)) \subseteq \freshLocs(\sigma',\tau')\setminus\rlocs(\tau',\delta)$ and we are done.
\end{sloppypar}

The Claim is a straightforward property of the semantics.
For each of the assignment forms, one defines the evident location set (which underlies the small axioms in the proof system)
and shows that it suffices for the final agreement. 
Then by counterexamples one shows that the location set is minimal.  

\underline{Case: the step is a call.}
We show r-respect in the case that $A$ is a call to some method $p$.  
Note that $p\neq m$, because rules can only be instantiated by wf judgments and $m$ is not in scope in the premise.
The primary step has the form 
$\configm{p()}{\sigma}{\_}\trans{\phi}\configm{A_0}{\tau}{\_}$ where
either $A_0\equiv\skipc$ and $\tau\in\phi(p)(\sigma)$ or
$A_0\equiv p()$, $\tau=\sigma$, and $\phi(p)(\sigma)=\emptyset$.  
It turns out that we do not need to distinguish between these cases.
We need r-respect for 
\[ \delta =  \unioneff{N\in(\Phi,m\scol\flowty{R}{S}{\effe}), \mdl(p)\not\imports N}{\bnd(N)} \]
(as the environment is empty).
The premise gives r-respect for 
$\dot{\delta} = \unioneff{N\in\Phi, \mdl(p)\not\imports N}{\bnd(N)}$.
If $\mdl(m)\in\Phi$ or  $\mdl(p)\imports\mdl(m)$ then $\delta$ is $\dot{\delta}$ and we have r-respect
from the premise.
It remains to consider the case that $\mdl(m)\notin\Phi$ and $\mdl(p)\not\imports\mdl(m)$,
in which case $\delta=\dot{\delta},\bnd(\mdl(m))$.
Let us spell out r-respect for the premise and this step.
The r-respect from the premise says that
\begin{equation}\label{eq:ctxA}
\Lagree(\sigma,\sigma',\pi,\rlocs(\sigma,\eff)\setminus\rlocs(\sigma,\dot{\delta}^\oplus)) 
\mbox{ and }
\agree(\sigma',\tau',\delta)
\end{equation}
implies there is $\rho$ with $\rho\supseteq\pi$ such that 
$\Lagree(\tau,\tau',\rho,
(\freshLocs(\sigma,\tau)\union\written(\sigma,\tau))\setminus\rlocs(\tau,\dot{\delta}^\oplus))$
and 
$\rho(\freshLocs(\sigma,\tau)\setminus\rlocs(\tau,\dot{\delta}))\subseteq
\freshLocs(\sigma',\tau')\setminus\rlocs(\tau',\dot{\delta})$.
(The antecedent is simplified from the definition of r-respect, by omitting the set of fresh locations which is empty in the initial state.)

\begin{sloppypar}
For the conclusion, the condition is the same except with $\delta$ in place of $\dot{\delta}$.  
So suppose 
\[ \Lagree(\sigma,\sigma',\pi,\rlocs(\sigma,\eff)\setminus\rlocs(\sigma,\delta^\oplus)) \]
This implies (\ref{eq:ctxA}) because $\rlocs(\sigma,\eff)$ is disjoint from $\bnd(\mdl(m))$
owing to the condition $\ind{\bnd(\mdl(p))}{\eff}$ in the rule.
So we get some $\rho$ as above,  
and the agreement 
$\Lagree(\tau,\tau',\rho,
(\freshLocs(\sigma,\tau)\union\written(\sigma,\tau))\setminus\rlocs(\tau,\dot{\delta}^\oplus))$
implies the needed agreement for $\delta$, since $\dot{\delta}$ is a subeffect of $\delta$ which is being subtracted.  
Finally, we need to show 
$\rho(\freshLocs(\sigma,\tau)\setminus\rlocs(\tau,\delta))\subseteq
\freshLocs(\sigma',\tau')\setminus\rlocs(\tau,\delta)$.
By w-respect for the $\sigma$-to-$\tau$ step and by assumption $\agree(\sigma',\tau',\delta)$,
there are no fresh locations in $\rlocs(\tau,\delta)$ or $\rlocs(\tau',\delta)$,
so this simplifies to 
$\rho(\freshLocs(\sigma,\tau)\subseteq \freshLocs(\sigma',\tau')$,
which for the same reasons is equivalent to the inclusion 
$\rho(\freshLocs(\sigma,\tau)\setminus\rlocs(\tau,\dot{\delta}))\subseteq
\freshLocs(\sigma',\tau')\setminus\rlocs(\tau',\dot{\delta})$
from the premise.
\end{sloppypar}
\end{proof}


\subsection{Soundness of other context introduction rules}

In \RLII\ the rule ``CtxIntroIn'' has a disjunctive antecedent.
In the present work we need additional side conditions, so we split the rule into multiple rules.

\[\ruleCtxIntroInOne\]

\begin{proof}
Given a model $\phi$ for the conclusion, $\drop{\phi}{m}$ is a model for the hypotheses of the premise. 
Owing to $\mdl(m)\in\Phi$, we have $N\in(\Phi,m:spec)$ iff $N\in\Phi$.
As a result, all the conditions of Encap (a--c) are have identical meaning for 
the conclusion as for the premise.  The same is true for Safety, Post, Write, and R-safe.  
\end{proof}

\[\ruleCtxIntroInTwo\]

\begin{proof}
Note that $A$ is an atomic command.
Given a model $\phi$ for the conclusion, $\drop{\phi}{m}$ is an model for the hypotheses of the premise. 
Validity of the premise implies validity of the conclusion, 
for all conditions except Encap.  
Boundary monotonicity is immediate, because the premise already requires boundary monotonicity for all $N\in\Phi$ and for $N=M$.  
For w-respect,  note that $A$ is not a call and there is only a single step which has no $\Endcall$ in the configuration.  The condition exempts the current module $M$ and is a direct consequence of Encap (a) of the premise, owing to $\mdl(m)=M$.
For r-respect, the current module is not included in the collective boundary for non-call commands, so again the addition of $m$ does not change the requirement.
\end{proof}

\[\ruleCtxIntroCall\]

\begin{proof}
We get Safety, Post, Write, and R-safe from the first premise.
For Encap, we get boundary monotonicity from the first premise, 
except for $N$ in the case that $N=\mdl(m)\neq M$ and $\mdl(m)\notin\Phi$.  
Boundary monotonicity for $N$ is directly checked by the second premise.

We get w-respect, by side condition $\mdl(p)\imports\mdl(m)$, as a consequence of the first premise.  

Finally, r-respect is also a consequence of the first premise, because the collective boundary for the premise is $\unioneff{N\in\Phi,\mdl(p)\not\imports N}{\bnd(N)}$ 
and by side condition $\mdl(p)\imports\mdl(m)$ this is the same set as for the conclusion.
\end{proof}

\subsection{Soundness of \rn{SOF}} 

\newcommand{\hackyu}{^{\ol{t}}_{\ol{u}}} 

\[ \ruleSOF \]

Observe that, because boundaries have no spec-only variables (Def.~\ref{def:framedreadsDyn}),
and $\bnd(N)$ frames $I$, the latter does not depend on any spec-only variables.
To prove validity of the conclusion, suppose $\psi^+$ is a $(\Phi,\Theta\conjInv I)$-model.
In order to use the premise, define $\psi^-(m)$ as follows.
For $m$ in $\Phi$, let $\psi^-(m) \eqdef \psi^+(m)$.
For $m$ in $\Theta$ with $\Theta(m) = \flowty{R}{S}{\effe}$ define, for any $\tau$
\[
\psi^-(m)(\tau)\eqdef\left\{
\begin{array}{ll}
\{\Fault\}&\tau\not\models R\\
\emptyset&\tau\models R\land\neg I\\
\psi^+(m)(\tau)& \tau\models R\land I
\end{array}
\right.
\]
The precondition $R$ may have spec-only variables, in which case $\tau\models R\land I$ abbreviates that there are some values for the spec-only variables so that $R\land I$ holds.  
Because $I$ has no spec-only variables, the clauses are exhaustive and mutually disjoint.
It is straightforward to check that $\psi^-$ is a $(\Phi,\Theta)$-model according to 
Definition~\ref{def:ctxinterp}.

For the rest of the proof we consider arbitrary $\sigma$ with $\hat{\sigma}\models P\land I$,
where $\hat{\sigma}\eqdef\extend{\sigma}{\ol{s}}{\ol{v}}$
is the extension of $\sigma$ uniquely determined by $P$ and $\sigma$ according to Lemma~\ref{lem:uniquespeconly}.

To finish the proof, we need the following.

\begin{quote}
\textbf{Claim.} 
If $\configm{C}{\sigma}{\_}\tranStar{\psi^+}\configm{B}{\tau}{\mu}$
then $\tau\models I$ and 
that sequence of configurations is also a trace
$\configm{C}{\sigma}{\_} \tranStar{\psi^-}\configm{B}{\tau}{\mu}$
via $\psi^-$.
\end{quote}

We also need the following observations, to prove the Claim and to prove the rule.
For  any $B,\tau,\mu$,
(a) If $\Active(B)$ is not a call to method in $\Theta$,
then the transitions from
$\configm{B}{\tau}{\mu}$ via $\trans{\psi^+}$,
to $\Fault$ or to a configuration, are the same as those via $\psi^-$.
Because: the model is only used for calls, and the models differ only on methods of $\Theta$.
\\
(b) If $\Active(B)$ is a call to some method $m$ of $\Theta$, and $\tau\models I$, then the transitions 
from $\configm{B}{\tau}{\mu}$ via $\trans{\psi^+}$ are the same as those via $\psi^-$.
Because: For faults, fault via $\trans{\psi^+}$ is when the precondition of the original spec $\Theta(m)$ does not hold,
and that is one conjunct of the precondition for $\psi^-$, the other being $I$.
For non-fault, $\psi^-(m)(\tau)$ is defined to be $\psi^+(m)(\tau)$ when $\tau\models I$.

Before proving the Claim, we use it to prove the conditions for validity of the conclusion of \rn{SOF}.

\emph{Safety.} 
Suppose $\configm{C}{\sigma}{\_}\tranStar{\psi^+}\configm{B}{\tau}{\mu} \trans{\psi^+}\Fault$.
By the Claim, $\configm{C}{\sigma}{\_}\tranStar{\psi^-}\configm{B}{\tau}{\mu}$ and $\tau\models I$.
So by observations (a) and (b), we get a faulting step from $\configm{B}{\tau}{\mu}$ via $\psi^-$,
whence $\configm{C}{\sigma}{\_}\tranStar{\psi^-}\Fault$  which contradicts the premise of \rn{SOF}.

\emph{Post.}
For all $\tau$ such that $\configm{C}{\sigma}{\_} \tranStar{\psi^+} \configm{\skipc}{\tau}{\_}$,
we have $\tau\models I$ and
$\configm{C}{\sigma}{\_} \tranStar{\psi^-}\configm{\skipc}{\tau}{\_}$
by the Claim.
By premise of the rule, we have $\tau\models \subst{Q}{\ol{s}}{\ol{v}}$. 
So we have $\tau\models \subst{(Q\land I)}{\ol{s}}{\ol{v}}$, because $I$ has no spec-only variables.

\emph{Write.} 
Direct consequence of the premise and the Claim.

\emph{R-safe.} 
For $m$ in $\Theta$, the frame condition of $(\Theta\conjInv I)(m)$ is the same as that of $\Theta(m)$, by definition of $\conjInv$.  So this is a direct consequence of the premise and the Claim.

\emph{Encap.} 
Boundary monotonicity is a direct consequence of the Claim, using the premise.
So too the w-respects condition: the condition for the conclusion is the same as for the premise,
because $\Phi,\Theta\conjInv I$ has the same methods, thus the same modules, as $\Phi,\Theta$ has.

For r-respects, consider any reachable step
\( \configm{C}{\sigma}{\_}\tranStar{\psi^+}\configm{B}{\tau}{\mu}
   \trans{\psi^+} \configm{D}{\upsilon}{\nu} \)
and an alternate step
\( 
   \configm{B}{\tau'}{\mu} \trans{\psi^+} \configm{D'}{\upsilon'}{\nu'} \)
where 
$\agree(\tau',\upsilon',\delta)$
and 
$\tau'$ agrees with $\tau$ according to the r-respect condition for $\delta$,
where the collective boundary $\delta$ is determined by $\Active(B)$, $\Phi,\Theta$, and $M$, 
in the same way for the conclusion as for the premise (i.e., $\delta$ is the same for both).

If the active command of $B$ is not a call to a method in $\Theta$, 
the steps can be taken via $\psi^-$ (see (a) above) and so r-respect from the premise can be applied.
If the active command of $B$ is a call to some method $m \in \Theta$, 
then we have $\tau\models I$ and $\tau'\models I$ by definition of $\psi^+(m)$. 
So the steps can both be taken via $\psi^-$ (see (b) above).
So we can appeal to r-respect from the premise and we are done.

\emph{Proof of Claim.}
By induction on steps.

\textbf{Base case} zero steps: immediate from $\hat{\sigma}\models P\land I$.

\textbf{Induction case:}
\( \configm{C}{\sigma}{\_} \tranStar{\psi^+}\configm{B}{\tau}{\mu}
\trans{\psi^+} \configm{D}{\upsilon}{\nu}
\).
The inductive hypothesis is that 
$\configm{C}{\sigma}{\_} \tranStar{\psi^-}\configm{B}{\tau}{\mu}$, by the same intermediate configurations, and $\tau\models I$.

Case $\Active(B)$ not a call to a method of $\Theta$:  
by observation (a) above, the step to $D$ can be taken via $\psi^-$.
So we can use Encap from the premise.
In particular, we get $\agree(\tau,\upsilon,\bnd(N))$ by w-respect,
owing to side condition $N\in\Theta$ and $M\neq N$ and also the fact that if the step calls $m$ in $\Phi$ then $\mdl(m)\not\imports N$ by side condition.
Moreover we use side condition that $C$ binds no $N$-method,
so that in the definition of w-respect we have that $\topm(B,M)$ is not $N$.
So from $\models \fra{ \bnd(N) }{ I }$ and induction hypothesis $\tau\models I$,
by definition (\ref{eq:frmAgree}) of the frames judgment we get $\upsilon\models I$.

Case $\Active(B)$ is a call to some $m\in\Theta$.
Suppose $\Theta(m) = \flowty{R}{S}{\effe}$.
By induction hypothesis
$\configm{C}{\sigma}{\_} \tranStar{\psi^-} \configm{B}{\tau}{\mu}$
we have $\tau\models R\hackyu$ (with $\ol{u}$ the uniquely determined values of $R$'s spec-only variables $\ol{t}$) because otherwise there would be a fault via $\psi^-$ contrary to the premise.
Because $\tau\models R\hackyu \land I$, we have $\psi^-(m)(\tau) = \psi^+(m)(\tau)$ by definition of $\psi^-(m)$, so the step can be taken via $\psi^-$ and moreover $\upsilon\models I$ because $\psi^+$ is a $\Phi,(\Theta\conjInv I)$-model.

\subsection{Soundness of \rn{Link}}\label{sec:app:link}

\[\ruleLink\]

\begin{remark}\upshape
It is sound to generalize the rule to allow any module $M$ for $C$ and for the linkage,
provided that $\bnd(M)=\emptyeff$. 
\qed\end{remark}

For clarity, the proof is specialized to case that $\Theta$ has a single method named $m$.
We spell out the proof in considerable detail, as there are a number of subtleties. 
However, we assume there are no recursive calls in the bodies of the linked method.
There is no difficulty with recursion, it just complicates the proof:
recursion can be handled using a fixpoint construction for the 
denotational semantics (as in proof of the linking rule in Sect.~A.1 of \RLIII,
and using quasi-determinacy) and an extra induction on calling depth (as in the linking proofs in both \RLII\ and \RLIII).

We use the following from \RLII: 
For method $m$ in the environment, a trace is called \dt{$m$-truncated} provided 
that $\Endcall(m)$ does not occur in the last configuration.
This means that a call to $m$ is not in progress, though it allows that a call may happen next.
In a trace that is not $m$-truncated, an environment call has been made to $m$,
making the transition from a command of the form $m();C$ 
to $B;\Endcall(m);C$ where $B$ is the method body, and then further steps may have been taken.  
Note that in an $m$-truncated trace, it is possible that 
the active command of the last configuration is $m()$.

To prove soundness of the rule, 
suppose $\Theta(m)$ is $\flowty{R}{S}{\effe}$ and let $N \eqdef \mdl(m)$.
Assume validity of the premises for $B$ and $C$:
\begin{equation}\label{eq:linkPremX}
\Phi,\Theta \models_N  B: \flowty{R}{S}{\effe} 
\quad\mbox{and}\quad
\Phi,\Theta \models_{\emptymod} C: \flowty{P}{Q}{\eff}
\end{equation}
To prove validity of the conclusion, i.e.,
\begin{equation}\label{eq:linkConcX} 
\Phi\models_{\emptymod} \letcom{m}{B}{C} : \flowty{P}{Q}{\eff}
\end{equation}
let $\phi$ be any $\Phi$-model.
Define $\theta$ to be the singleton mapping 
$[m\scol \means{B}_\phi ]$,  
using the denotation of $B$, 
so that $\phi\union\theta$ is a $(\Phi,\Theta)$-model, by Lemma~\ref{lem:denotComm}.
(To handle recursive methods, the generalization of Lemma~\ref{lem:denotComm}
is proved by induction as in Lemma~A.10 of \RLIII.)
For brevity we write $\phi,\theta$ for $\phi\union\theta$
and $\trans{\phi\theta}$ for $\trans{\phi\union\theta}$.

For any $\sigma$, the first step is 
$\configm{\letcom{m}{B}{C}}{\sigma}{\_} \trans{\phi} \configm{C;\Endlet(m)}{\sigma}{[m\scol B]}$,
and if the computation reaches a terminal configuration then the last step is the transition for $\Endlet(m)$ which removes $m$ from the environment but does not change the state.  
So to prove (\ref{eq:linkConcX}) we use facts about traces from 
$\configm{C}{\sigma}{[m\scol B]}$.

The following result is used not only to prove (\ref{eq:linkConcX}) but also used to prove soundness of the relational linking rule.  In its statement, we rely on Lemma~\ref{lem:uniquespeconly} about spec-only variables in wf preconditions.  
\begin{lemma}\label{lem:linkClaimX}
\upshape
Suppose we have valid judgments $\Phi,\Theta \models_N B: \Theta(m)$
and $\Phi,\Theta\HVflowtr{}{\emptymod}{P}{ C }{Q}{\eff}$, and also $m\notin B$.
Let $\phi$ be a $\Phi$-model and $\theta\eqdef [m\scol \means{B}_\phi]$.
Let $\sigma$ be any state such that $\sigma\models P$.
Suppose $\configm{C}{\sigma}{[m\scol B]} \tranStar{\phi} \configm{D}{\tau}{\dot{\mu}}$ is $m$-truncated
(for some $D,\tau,\dot{\mu}$).  
Then 
\begin{itemize}
\item $\configm{C}{\sigma}{\_} \tranStar{\phi\theta} \configm{D}{\tau}{\mu}$,
where $\mu = \drop{\dot{\mu}}{m}$.
\item If $D \equiv m();D_0$ for some $D_0$ then $\tau\models R$.
\end{itemize}
(Here the abbreviations $\sigma\models P$ and $\tau\models R$ mean satisfaction by the 
states extended with the uniquely determined values for spec-only variables.)
\end{lemma}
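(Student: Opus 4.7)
The plan is to prove the lemma by strong induction on the length $n$ of the given $m$-truncated trace $T$ from $\configm{C}{\sigma}{[m\scol B]}$ via $\trans{\phi}$. First I would observe that, by Lemma~\ref{lem:denotComm} applied to the premise $\Phi,\Theta \models_N B : \Theta(m)$, the extension $\phi\theta$ is (or can be coerced to) a $(\Phi,\Theta)$-model; this is what allows me to instantiate the second premise $\Phi,\Theta \models_\emptymod C : \flowty{P}{Q}{\eff}$ at $\phi\theta$ and, in particular, to use its Safety condition to rule out faulting $\phi\theta$-traces from $\configm{C}{\sigma}{\_}$. For the base case ($n=0$), the simulating $\phi\theta$-trace is the one-configuration trace $\configm{C}{\sigma}{\_}$ and the first bullet is immediate. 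If $\Active(C) \equiv m()$ but $\sigma \not\models R$, then a single additional $\trans{\phi\theta}$-step would fault by Def.~\ref{def:ctxinterp}(a); this contradicts Safety, establishing the second bullet.

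For the inductive step consider the last step of $T$, from $\configm{D'}{\tau'}{\dot{\mu}'}$ to $\configm{D}{\tau}{\dot{\mu}}$, and split on whether $\configm{D'}{\tau'}{\dot{\mu}'}$ is itself $m$-truncated. If yes, then $\Active(D')$ cannot be a bare environment-call to $m$ (such a step would leave $\Endcall(m)$ inside $D$ and violate $m$-truncation); so the step is an atomic command, a control or scoping construct, or a context call to some $p \in \Phi$, none of which depend on whether $m$ is bound in the environment or supplied by $\theta$. Applying the induction hypothesis to the prefix of length $n-1$ and then replaying this same step under $\trans{\phi\theta}$ yields the desired simulating trace, and the precondition claim for $D$ follows once again from Safety of the $\phi\theta$-trace.

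The substantive case is when $\configm{D'}{\tau'}{\dot{\mu}'}$ is not $m$-truncated. Since $m \notin B$, no recursive $m$-call can occur inside the body, so any non-$m$-truncated configuration along $T$ carries exactly one pending $\Endcall(m)$; to regain $m$-truncation, the last step must be the transition for that $\Endcall(m)$. I would take $j < n$ to be the largest index at which $T$ is $m$-truncated, so $\Active(D_j) \equiv m()$ with $D_j \equiv m();E$, position $j+1$ is $\configm{B;\Endcall(m);E}{\tau_j}{\dot{\mu}_{j+1}}$, positions $j+1,\ldots,n-1$ execute $B$ (with $m$-free body), and the final step takes $\configm{\Endcall(m);E}{\tau}{\dot{\mu}_{n-1}}$ to $\configm{E}{\tau}{\dot{\mu}_{n-1}}$; in particular $\tau \in \means{B}_\phi(\tau_j)$. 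By the induction hypothesis applied at $j$, there is a $\phi\theta$-trace reaching $\configm{m();E}{\tau_j}{\mu_j}$ and $\tau_j \models R$. A single rule-\rn{uCall} step via $\trans{\phi\theta}$, using $\theta(m)(\tau_j) = \means{B}_\phi(\tau_j) \ni \tau$, then extends the simulated trace to $\configm{E}{\tau}{\mu_j}$, which equals $\configm{D}{\tau}{\mu}$. The second bullet is, as before, a consequence of Safety of the $\phi\theta$-trace.

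The main obstacle will be the environment bookkeeping in the substantive case: verifying that the $\phi\theta$-side environment $\mu_j$ is indeed $\drop{\dot{\mu}_{n-1}}{m}$ despite possibly many intervening $\keyw{let}/\Endlet$ and $\keyw{var}/\Endvar$ steps inside the execution of $B$, and that restricting attention to the sub-trace of length $n-j-1$ really yields a well-formed computation of $B$ alone from $\tau_j$ to $\tau$ under $\phi$ (so that $\tau \in \means{B}_\phi(\tau_j)$ is justified). A secondary subtlety is the interaction between the two call-transition rules \rn{uCall} and \rn{uCall0}: for calls to methods of $\Phi$ they behave identically on $\phi$ and $\phi\theta$, but for the $m$-call we must know that $\means{B}_\phi(\tau_j)$ is non-empty precisely when the corresponding sub-trace of $T$ terminates, which is exactly what quasi-determinacy of pre-models (Lemma~\ref{lem:determinacy}) together with Lemma~\ref{lem:denotComm} delivers. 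Once these pieces are in place the case analysis becomes routine.
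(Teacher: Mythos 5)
Your proposal is correct and follows essentially the same route as the paper's own (sketched) proof: establish via Lemma~\ref{lem:denotComm} that $\phi,\theta$ is a $(\Phi,\Theta)$-model, factor the trace into $C$-code segments (identical under both transition relations) and completed environment calls to $m$ (each collapsing to a single $\trans{\phi\theta}$ step because $\theta(m)=\means{B}_\phi$), and derive $\tau\models R$ from the Safety condition of the premise for $C$ instantiated at $\phi\theta$. The only cosmetic difference is that you induct on trace length with a case split on $m$-truncation of the penultimate configuration, whereas the paper inducts on the number of completed calls; the environment-bookkeeping subtlety you flag is real and is exactly what the paper defers to the "somewhat intricate induction hypothesis" of the analogous claims in \RLII\ and \RLIII.
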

\begin{proof}
We refrain from giving a detailed proof; it requires a somewhat intricate induction hypothesis, similar to the one for impure methods in \RLIII\ (Sect.~A.2, Claim~B)
and the one in \RLII\ (Sect.~7.6).
The main ideas are as follows.  

The combination $\phi,\theta$ is a $(\Phi,\Theta)$-model, by Lemma~\ref{lem:denotComm}.
If $\configm{C}{\sigma}{[m\scol B]} \tranStar{\phi} \configm{D}{\tau}{\dot{\mu}}$ is $m$-truncated
then we can factor it into segments alternating between code of $C$ and code of $B$ during environment calls to $m$.  The steps taken in code of $C$ can be taken via $\trans{\phi\theta}$ because the two transition relations are identical except for calls to $m$.  
A completed call to $m$ amounts to a terminated execution of $B$ (with a continuation command and environment left unchanged).  A completed call gives rise to a single step via $\trans{\phi\theta}$ with the same outcome, because $\theta(m)$ is the denotation of $B$, which is defined directly in terms of executions of $B$.\footnote{A fine point: calls of $m$ may occur in the scope of local variable blocks, so the state may have locals in addition to the variables of the context $\Gamma$ of the judgment; this is handled using the implicit conversion of context models
is discussed in Sect.~\ref{sec:progSem} footnote~\ref{fn:coerce}.
} 
Reasoning by induction on the number of completed calls, we construct a trace via $\trans{\phi\theta}$.
At each call of $m$, we appeal to the premise for $C$ to conclude that the precondition of $m$ holds, as otherwise there would be a faulting trace of $C$ via $\trans{\phi\theta}$.  
\end{proof}


\textbf{Proof of \rn{Link}}.
Using Lemma~\ref{lem:linkClaimX} we prove (\ref{eq:linkConcX}), validity of 
the conclusion of rule \rn{Link}, as follows,
for any $\sigma$ such that $\hat{\sigma}\models P$
where $\hat{\sigma}$ is $\extend{\sigma}{\ol{s}}{\ol{v}}$ for the unique values $\ol{v}$ determined by $\sigma$.

\medskip
\emph{Post}.
An execution of $\configm{\letcom{m}{B}{C}}{\sigma}{\_}$ via $\phi$ that terminates in state $\tau$
gives an execution for $\configm{C}{\sigma}{[m\scol B]}$ via $\phi$ that ends in $\tau$.
It is $m$-truncated, so by Lemma~\ref{lem:linkClaimX}
we have $\configm{C}{\sigma}{\_}\tranStar{\phi\theta} \configm{\skipc}{\tau}{\_}$.
By validity of the premise for $C$, see (\ref{eq:linkPremX}), we get
$\tau\models\subst{Q}{\ol{s}}{\ol{v}}$.

\medskip
\emph{Write}.
By an argument very similar to the one for Post.

\medskip
\emph{Safety}.
By semantics of $\letcom{m}{B}{C}$ and of $\Endlet(m)$, a faulting execution
has the form 
\[ \configm{\letcom{m}{B}{C}}{\sigma}{\_}\trans{\phi}
 \configm{C;\Endlet(m)}{\sigma}{[m\scol B]} \tranStar{\phi} \configm{D;\Endlet(m)}{\tau}{\dot{\mu}} \trans{\phi} \Fault \]
for some $D,\tau,\dot{\mu}$ with $D\nequiv\skipc$.  
This yields a faulting execution 
\begin{equation}\label{eq:LetSafeX}
\configm{C}{\sigma}{[m\scol B]} \tranStar{\phi} \configm{D}{\tau}{\dot{\mu}} \trans{\phi} \Fault
\end{equation}
We show by two cases that this contradicts the premises
(\ref{eq:linkPremX}) of \rn{Link}.

\textbf{Case} The trace $\configm{C}{\sigma}{[m\scol B]} \tranStar{\phi} \configm{D}{\tau}{\dot{\mu}}$
is $m$-truncated.
Note that $\Active(D)$ is not a call to $m$, because that would be an environment call and would not fault next.  
By Lemma~\ref{lem:linkClaimX},
we get $\configm{C}{\sigma}{\_} \tranStar{\phi\theta} \configm{D}{\tau}{\mu}$
(where $\mu=\drop{\dot{\mu}}{m}$), 
and the transition from $\configm{D}{\tau}{\mu}$ to $\Fault$ can be taken via $\trans{\phi\theta}$ because it is the same relation as $\trans{\phi}$ except for calls to $m$.
But a faulting trace via $\phi,\theta$ contradicts the premise for $C$.

\textbf{Case} The trace  $\configm{C}{\sigma}{[m\scol B]} \tranStar{\phi} \configm{D}{\tau}{\dot{\mu}}$
is not $m$-truncated.
So (\ref{eq:LetSafeX}) can be factored as 
\[ 
\configm{C}{\sigma}{[m\scol B]} \tranStar{\phi} 
\configm{m();D_0}{\tau_0}{\dot{\mu}_0} \trans{\phi}
\configm{B;D_0}{\tau_0}{\dot{\mu}_0} \tranStar{\phi}
\configm{B_0;D_0}{\tau}{\dot{\mu}} \trans{\phi}
\Fault
\]
for some $D_0,B_0,\tau_0,\dot{\mu}_0$ where $D\equiv B_0;D_0$.
Applying Lemma~\ref{lem:linkClaimX} to the $m$-truncated prefix, we get
$\configm{C}{\sigma}{\_} \tranStar{\phi\,\theta} 
\configm{m();D_0}{\tau_0}{\mu_0}$ (where $\mu_0 = \drop{\dot{\mu}_0}{m}$)
and $\tau_0\models\subst{R}{\ol{t}}{\ol{u}'}$ for some $\ol{u}'$.
We also have a faulting execution of $B$ from $\tau_0$, i.e., 
$\configm{B}{\tau_0}{\mu_0} \tranStar{\phi}
\configm{B_0}{\tau}{\mu} \trans{\phi}\Fault$, 
which (because $m$ is not called in $B$) yields the same via $\phi,\theta$,
which contradict the premise for $B$      in (\ref{eq:linkPremX}).

\medskip
\emph{R-safe}.
The first step is not a call, nor is the $\Endlet$ step if reached.
Consider any other reachable configuration:
\(\configm{C}{\sigma}{[m\scol B]}\tranStar{\phi}
\configm{D}{\tau}{\dot{\mu}}
\).
If $\Active(D)$ is a call to some $p$ where $\Phi(p)$ is $\flowty{R_p}{S_p}{\effe_p}$,
we must show $\rlocs(\tau,\effe_p) \subseteq \freshLocs(\sigma,\tau)\union\rlocs(\sigma,\eff)$.
Depending on whether $\Active(D)$ is in code of $C$ or $B$, the conclusion follows from the premise of $C$ or $B$, similarly to the proof for Safety.
In the non-$m$-truncated case, i.e., steps of $B$, a called method $p$ is different from $m$ since we 
are assuming no recursion.
The R-safe condition refers to starting state of $B$ (which is $\tau_0$ in the Safety proof above).
The premise yields an inclusion of the $p$'s readable locations in those of $m$ in its starting state $\tau_0$.
Because the R-safe condition holds for the call of $m$ (by induction hypothesis),
its readable locations are included in $\rlocs(\sigma,\eff)$.
Moreover locations that are fresh relative to $\tau_0$ are also fresh relative to $\sigma$.
So the result follows using transitivity of inclusion.
A more detailed argument of this form can be found in the proof of Encap below.

\medskip
\emph{Encap}.
For boundary monotonicity,
we must prove, 
for every $N'$ with $N'=\emptymod$ or $N'\in\Phi$,
that every reachable step, say with states $\tau$ to $\upsilon$,
has $\rlocs(\tau,\bnd(N'))\subseteq\rlocs(\upsilon,\bnd(N'))$.
For steps of $C$ this is immediate from boundary monotonicity from the premise for $C$,
where boundary monotonicity is for all $N'\in(\Phi,\Theta)$ and $N'=\emptymod$.
For steps of $B$ and $N'\in \Phi$ this is immediate from Encap from the premise for $B$,
where boundary monotonicity is for all $N'\in(\Phi,\Theta)$ and $N'=N$.
However, the judgment for $B$ does not imply anything about the boundary of $\emptymod$ (unless $\emptymod$ happens to be in $\Phi,\Theta$).
But by wf we have $\bnd(\emptymod)=\emptyeff$, which makes boundary monotonicity for $\bnd(\emptymod)$ vacuous.

For w-respect and r-respect, we need to consider arbitrary reachable steps. 
The first step of $\letcom{m}{B}{C}$ deterministically steps to $C;\Endlet(m)$, putting $m:B$ into the environment without
changing or reading the state, so both w-respect and r-respect hold for that step.
Both conditions also hold for the step of $\Endlet(m)$ which again does not change or read the state.  
So it remains to consider reachable steps of the following form, in which 
we abbreviate $A\eqdef \Endlet(m)$.
\begin{equation}\label{eq:LetEncapAX}
\configm{\letcom{m}{B}{C}}{\sigma}{\_} \trans{\phi}
 \configm{C;A}{\sigma}{[m\scol B]} \tranStar{\phi} 
 \configm{D;A}{\tau}{\dot{\mu}} \trans{\phi}
 \configm{D_0;A}{\upsilon}{\dot{\nu}}
 \end{equation}
where $D\nequiv\skipc$.
Aside from the first step,
such traces correspond to traces of the form 
\[ 
\configm{C}{\sigma}{[m\scol B]}\tranStar{\phi}
\configm{D}{\tau}{\dot{\mu}}\trans{\phi}
\configm{D_0}{\upsilon}{\dot{\nu}}
\]
i.e., exactly the same sequence of configurations, but for lacking the trailing $\Endlet(m)$.

For w-respect, 
our obligation is to prove that the step $\configm{D}{\tau}{\dot{\mu}}\trans{\phi}\configm{D_0}{\upsilon}{\dot{\nu}}$ w-respects $L$ for every $L\in(\Phi,\dot{\mu})$ and $L\neq\topm(D,\emptymod)$. In the case of an $m$-truncated trace from $C$ to $D$, we appeal to Lemma~\ref{lem:linkClaimX}. 
In the case of a non $m$-truncated trace from $C$ to $D$, the above step is one arising from an environment call to $m$ and therefore occurs in the trace from  $B$. So we use w-respects for $B$. The result follows because the condition for w-respects $L$ for $B$ is $L\in(\Phi,\Theta,\mu)$ and $L\neq\topm(D,N)$ and this is equivalent to the w-respects condition for the step from $D$, because both conditions are equivalent to $L\in (\Phi,\mu)$.
In the case of an $m$-truncated trace from $C$ to $D$, we appeal to Lemma~\ref{lem:linkClaimX}. We can use w-respects for the premise $C$. In the case where $\Active(D)$ is not a context call this condition is $L\in (\Phi,\Theta,\mu)$ and $L\neq \topm(D, \emptymod)$ which is equivalent to $L\in(\Phi,\dot{\mu})$ and $L\neq\topm(D,\emptymod)$.
In the case where $\Active(D)$ is a context call to some $p\in \Phi$, the condition to be proved is $L\in(\Phi,\dot{\mu})$ and $L\neq\topm(D,\emptymod)$ and $\mdl(p) \imports L$. We obtain this from the w-respects condition for the premise which is $L\in (\Phi,\Theta,\mu)$ and $L\neq \topm(D, \emptymod)$ and $mdl(p)\imports L$.


For r-respect, we must show the step
\(
\configm{D}{\tau}{\dot{\mu}}\trans{\phi}
\configm{D_0}{\upsilon}{\dot{\nu}}
\)
r-respects $\delta$ for $(\phi,\eff,\sigma)$ where $\delta$ is defined by cases on $\Active(D)$:
\begin{itemize}
  \item if $\Active(D)$ is not a call, then 
      $\delta \eqdef \unioneff{L\in(\Phi,\dot{\mu}),L\neq \topm(D,\emptymod)}{\bnd(L)}$ 
  \item if $\Active(D)$ is a call to some $m$, then 
      $\delta \eqdef \unioneff{L\in(\Phi,\dot{\mu}),\mdl(m)\not\imports L}{\bnd(L)}$
\end{itemize}
Let us spell out the r-respect conditions for the given trace (\ref{eq:LetEncapAX}).
\begin{list}{}{}
\item[(*)]
For any $\pi,\tau',\upsilon'$, if $\agree(\tau',\upsilon',\delta)$
and 
\(
\configm{D}{\tau'}{\dot{\mu}}\trans{\phi}
\configm{D'_0}{\upsilon'}{\dot{\nu}}
\)      
and 
$\Lagree(\tau,\tau',\pi, \freshLocs(\sigma,\tau)\union \rlocs(\sigma,\eff)\setminus\rlocs(\tau,\delta^\oplus))$,
then $D'_0\equiv D_0$ and there is $\rho\supseteq\pi$ such that
\[ 
\begin{array}{l}
\Lagree(\upsilon,\upsilon',\rho, \freshLocs(\tau,\upsilon)\union\written(\tau,\upsilon)\setminus\rlocs(\upsilon,\delta^\oplus)) \\
\rho(\freshLocs(\tau,\upsilon)\setminus\rlocs(\upsilon,\delta)) \subseteq \freshLocs(\tau',\upsilon')\setminus\rlocs(\upsilon',\delta) 
\end{array}
\qquad\qquad (\dagger)
\]
\end{list}

To prove (*) we go by cases on whether the trace up to $D,\tau$ is $m$-truncated. 

\medskip
Suppose the antecedent of (*) holds: that is,
\[
\begin{array}{l}
\agree(\tau',\upsilon',\delta)
\mbox{ and } 
\configm{D}{\tau'}{\dot{\mu}}\trans{\phi}\configm{D'_0}{\upsilon'}{\dot{\nu}}
\mbox{ and } \\
\Lagree(\tau,\tau',\pi, (\freshLocs(\sigma,\tau)\union
 \rlocs(\sigma,\eff))\setminus\rlocs(\tau,\delta^\oplus))
\end{array}
\]

\medskip
\textbf{Case} $\configm{C}{\sigma}{[m\scol B]}\tranStar{\phi} \configm{D}{\tau}{\dot{\mu}}$ is $m$-truncated.

\smallskip
Then by Lemma~\ref{lem:linkClaimX} we have 
$\configm{C}{\sigma}{\_}\tranStar{\phi\theta} \configm{D}{\tau}{\mu}$ where $\mu=\drop{\dot{\mu}}{m}$.

If $\Active(D)$ is not a context call, the r-respect condition to be proved is for

\[
\begin{array}{lcl}
\delta&=&\unioneff{L\in(\Phi,\dot{\mu}),L\neq \topm(D,\emptymod)}{\bnd(L)}\\
      &=&\unioneff{L\in(\Phi,\mu),L\neq \topm(D,\emptymod)}{\bnd(L)},\bnd(N)
\end{array}
\]

We have the additional step $\configm{D}{\tau}{\mu}\trans{\phi\theta}\configm{D'_0}{\upsilon}{\nu}$ because in this case $\phi$ and $\phi\theta$ agree. For the same reason the
step $\configm{D}{\tau'}{\dot{\mu}}$ to $\configm{D'_0}{\upsilon'}{\dot{\nu}}$
can also be taken via $\phi\theta$, so 
$\configm{D}{\tau'}{\mu}\trans{\phi\theta}\configm{D'_0}{\upsilon'}{\nu}$, where
$\nu=\drop{\dot{\nu}}{m}$.
The Encap condition for the premise for $C$ says that
\[\configm{C}{\sigma}{\_}\tranStar{\phi\theta}
\configm{D}{\tau}{\mu} \trans{\phi\theta}
\configm{D'_0}{\upsilon}{\nu}
\]
respects $((\Phi,\Theta),\emptymod,(\phi\theta),\eff,\sigma)$.

Unpacking definitions, from r-respect we have that
the step $\configm{D}{\tau}{\mu}\trans{\phi\theta}\configm{D'_0}{\upsilon}{\nu}$ r-respects $\dot{\delta}$ for $(\phi\theta,\eff,\sigma)$ where 
\(
\begin{array}[t]{lcl}
\dot{\delta} &=& \unioneff{L\in(\Phi,\Theta,\mu), L\neq \topm(D,\emptymod)}{\bnd(L)}\\
&=& \unioneff{L\in(\Phi,\mu), L\neq \topm(D,\emptymod)}{\bnd(L)},\bnd(N)\\
&=& \delta
\end{array}
\)
\\
Now to establish $(\dagger)$ we show $\agree(\tau',\upsilon',\dot{\delta})$ and
$\Lagree(\tau,\tau',\pi, \freshLocs(\sigma,\tau)\union \rlocs(\sigma,\eff)\setminus\rlocs(\tau,\dot{\delta}^\oplus))$. Because $\dot{\delta} = \delta$, both hold by assumption. 

If $\Active(D)$ is a context call to $p\in\Phi$, the r-respect condition to be proved is for
\[
\begin{array}{lcl}
\delta &=&\unioneff{L\in(\Phi,\dot{\mu}),\mdl(p)\not\imports L}{\bnd(L)}\\
&=&\unioneff{L\in(\Phi,\mu),\mdl(p)\not\imports L}{\bnd(L)},\bnd(N)
\end{array}
\]
where the last equality follows because $\mdl(m)= N$ and $\mdl(p)\not\imports N$\ by side condition of \rn{Link}, and $\bnd(\emptymod)$ is empty.
For the premise for $C$, note that there is a step
$\configm{D}{\tau}{\mu} \trans{\phi\theta}\configm{D'_0}{\upsilon}{\nu}$ because $\phi$ and $\phi\theta$ agree on $p$. For the same reason the
step $\configm{D}{\tau'}{\dot{\mu}}$ to $\configm{D'_0}{\upsilon'}{\dot{\nu}}$
can also be taken via $\phi\theta$, so 
$\configm{D}{\tau'}{\mu}\trans{\phi\theta}\configm{D'_0}{\upsilon'}{\nu}$, where
$\nu=\drop{\dot{\nu}}{m}$. The r-respect condition for the premise is for collective boundary $\dot{\delta}$
where
\(
\begin{array}[t]{lcl}
\dot{\delta}&=&\unioneff{L\in(\Phi,\Theta,\mu), \mdl(p)\not\imports L}{\bnd(L)}\\
&=&
\unioneff{L\in(\Phi,\mu), \mdl(p)\not\imports L}{\bnd(L)},\bnd(N)\\
&=& \delta
\end{array}
\) 
\\
where the second equality follows because $\mdl(p)\not\imports N$ by the side condition of the \rn{Link} rule. From these we get an argument similar to above because $\Active(\tau',\upsilon',\delta)$ and $\Lagree(\tau,\tau',\pi, \freshLocs(\sigma,\tau)\union \rlocs(\sigma,\eff)\setminus\rlocs(\tau,\delta^\oplus))$ hold by assumption. 


This completes the proof of (*) for $m$-truncated traces.

\medskip
\textbf{Case} $\configm{C}{\sigma}{[m\scol B]}\tranStar{\phi} \configm{D}{\tau}{\dot{\mu}}$ is not $m$-truncated.  
As in the proof of Safety, we factor out the $m$-truncated prefix for the last call to $m$.  That is, there are $B_0,D_1,\tau_1,\dot{\mu}_1$ such that 
\[
\begin{array}{lll}
\configm{C}{\sigma}{[m\scol B]}\tranStar{\phi} \configm{m();D_1}{\tau_1}{\dot{\mu}_1}
   & \trans{\phi} \configm{B;\Endcall(m);D_1}{\tau_1}{\dot{\mu}_1} & \mbox{since $\dot{\mu}_1(m)=B$} \\
   & \tranStar{\phi} \configm{B_0;\Endcall(m);D_1}{\tau}{\dot{\mu}} & \mbox{with $D\equiv B_0;\Endcall(m);D_1$} \\
   & \trans{\phi} \configm{B_1;\Endcall(m);D_1}{\upsilon}{\dot{\nu}} & \mbox{with $D_0\equiv B_1;\Endcall(m);D_1$} 
   \end{array}
\]
So for just $B$ we have
\[ \configm{B}{\tau_1}{\dot{\mu}_1} \tranStar{\phi}
   \configm{B_0}{\tau}{\dot{\mu}} \trans{\phi} 
   \configm{B_1}{\upsilon}{\dot{\nu}} 
\]
and as in the proof of Safety we have $\hat{\tau_1}\models R$ by Lemma~\ref{lem:linkClaimX}. Note that $\Active(D) = \Active(B_0)$. Moreover, $m$ does not occur in $B, B_0, B_1$ because there is no recursion. Hence $\phi$ and $\phi\theta$ agree so that
\[ \configm{B}{\tau_1}{\mu_1} \tranStar{\phi\theta}
   \configm{B_0}{\tau}{\mu} \trans{\phi\theta} 
   \configm{B_1}{\upsilon}{\nu} 
\]
By assumption, $\configm{D}{\tau'}{\dot{\mu}}\trans{\phi}\configm{D'_0}{\upsilon'}{\dot{\nu}}$. That is,
\[
\configm{B_0;\Endcall(m);D_1}{\tau'}{\dot{\mu}}\trans{\phi}\configm{B'_1;\Endcall(m);D'_1}{\upsilon'}{\dot{\nu}}
\]
where $D'_0 \eqdef B'_1;\Endcall(m);D'_1$. There are no calls to $m$ so
\[
\configm{B_0}{\tau'}{\mu}\trans{\phi\theta}\configm{B'_1}{\upsilon'}{\nu}
\]
Because $\tau$ is reached from $\sigma$ via $\tau_1$, we have $\freshLocs(\sigma,\tau) = \freshLocs(\sigma,\tau_1)\union\freshLocs(\tau_1,\tau)$, whence
$\freshLocs(\tau_1,\tau)\subseteq\freshLocs(\sigma,\tau)$. Moreover, by the validity of premise for $C$ we can use its R-safe condition for the call to $m$ to obtain $\rlocs(\tau_1,\effe)\subseteq\rlocs(\sigma,\eff)$.

\smallskip
If $\Active(D)$ is a context call to some $p\in\Phi$, the r-respect condition to be proved is for collective boundary
\(
\begin{array}[t]{lcl}
\delta&=&\unioneff{L\in(\Phi,\dot{\mu}), \mdl(p)\not\imports L}{\bnd(L)}\\
&=&
\unioneff{L\in(\Phi,\mu), \mdl(p)\not\imports L}{\bnd(L)},\bnd(N)\\
\end{array}
\)
\\
(in which we omit $L=\emptymod$ because $\bnd(\emptymod)$ is empty).
For the premise for $B$, the r-respect condition is for collective boundary
$\dot{\delta}$ where
\(
\begin{array}[t]{lcl}
\dot{\delta} &=&
\unioneff{L\in(\Phi,\Theta,\mu), \mdl(p)\not\imports L}{\bnd(L)}\\
&=&
\unioneff{L\in(\Phi,\mu), \mdl(p)\not\imports L}{\bnd(L)},\bnd(N)\\
&=& \delta
\end{array}
\)
\\
where the second equality holds by side condition $\mdl(p)\not\imports N$ of the \rn{Link} rule.

Using the antecedent of (*) and noting $\dot{\delta}=\delta$ we get
\[\Lagree(\tau, \tau', \pi, (\freshLocs(\tau_1, \tau) \union \rlocs(\tau_1, \effe)\setminus \rlocs(\tau,\delta^\oplus)))
\]
Now by the r-respect condition for the premise for $B$ (and because $\agree(\tau',\upsilon',\delta)$ holds by assumption) we obtain $\rho\supseteq\pi$ such that
\[
\begin{array}{l}
\Lagree(\upsilon, \upsilon', \rho, (\freshLocs(\tau, \upsilon) \union \written(\tau, \upsilon))\setminus\rlocs(\upsilon,\delta^\oplus) \mbox{ and} \\
\rho(\freshLocs(\tau, \upsilon)\setminus\rlocs(\upsilon,\delta)) \subseteq \freshLocs(\tau', \upsilon')\setminus\rlocs(\upsilon',\delta)
\end{array}
\]
Furthermore, $B'_1 \equiv B_1$, whence $D'_1 \equiv D_1$ because $B_1$ in the source code has a unique continuation. Thus $D'_0 \equiv D_0$. 
Thus ($\dagger$) is established.

\medskip
If $\Active(D)$ is not a context call, note that $\topm(D, \emptymod) = \topm(B_0;\Endcall(m);D_1, \emptymod)$. Hence the r-respect condition to be proved is for collective boundary
\[
\begin{array}{lcl}
\delta &=& \unioneff{L\in(\Phi,\dot{\mu}), L\neq \topm(D,\emptymod)}{\bnd(L)}
\end{array}
\]
If $B_0$ doesn't contain an $\Endcall$, then $\topm(D,\emptymod) = N$. Then
\[
\begin{array}{lcl}
\delta
&=&\unioneff{L\in(\Phi,\dot{\mu}), L\neq N}{\bnd(L)}\\
&=& \unioneff{L\in(\Phi,\mu)}{\bnd(L)}
\end{array}
\]
where the second equality follows because $mdl(m)=N$ and $m\in\dom{\dot{\mu}}$.
\\
If $B_0$ contains an outermost $\Endcall(p)$, then $p\neq m$ and $\topm(D,\emptymod) = mdl(p)$. Then
\[
\begin{array}{lcl}
\delta
&=&\unioneff{L\in(\Phi,\dot{\mu}), L\neq mdl(p)}{\bnd(L)}\\
&=& \unioneff{L\in(\Phi,\mu), L\neq \emptymod}{\bnd(L)},\bnd(mdl(p)),\bnd(N)\\
&=& \unioneff{L\in(\Phi,\mu)}{\bnd(L)},\bnd(mdl(p)),\bnd(N)
\end{array}
\]

The premise for $B$ gives r-respect for the collective boundary
\[
\begin{array}{lcl}
\dot{\delta} &=& \unioneff{L\in(\Phi,\Theta,\mu), L\neq \topm(B_0,N)}{\bnd(L)}
\end{array}
\]
If $B_0$ has no $\Endcall$s, then $\topm(B_0,N)=N$. In this case
\[
\begin{array}{lcl}
\dot{\delta} &=& \unioneff{L\in(\Phi,\Theta,\mu), L\neq N}{\bnd(L)}\\
&=& \unioneff{L\in(\Phi,\mu)}{\bnd(L)}
\end{array}
\]
If $B_0$ contains an outermost $\Endcall(p)$ as above, then $p\neq m$ and $\topm(B_0,N) = mdl(p)$. Then
\[
\begin{array}{lcl}
\dot{\delta}
&=&
\unioneff{L\in(\Phi,\Theta,\mu), L\neq mdl(p)}{\bnd(L)}\\
&=& \unioneff{L\in(\Phi,\mu)}{\bnd(L)},\bnd(mdl(p)),\bnd(N)
\end{array}
\]
In either case $\dot{\delta}=\delta$.
To obtain ($\dagger$) we must show $\agree(\tau',\upsilon',\dot{\delta})$ and
\[
\begin{array}{l}
\Lagree(\tau, \tau', \pi, (\freshLocs(\tau_1, \tau) \union \rlocs(\tau_1, \effe))\setminus\rlocs(\tau,\dot{\delta}^\oplus)
\end{array}
\]
Since $\dot{\delta}$ = $\delta$, both of these hold by assumption.

\section{Appendix: Biprogram semantics and relational correctness (re Sect.~\ref{sec:biprog})}

\subsection{On relation formulas}

\begin{figure}
\begin{small}
\(
\begin{array}{l@{\hspace*{1ex}}l@{\hspace*{1ex}}l}
\sigma|\sigma'\models_\pi\rightF{P} 
  & \mbox{iff} & \sigma'\models P \\

\sigma|\sigma'\models_\pi  \P\land\Q 
  & \mbox{iff} & \sigma|\sigma'\models_\pi  \P \mbox{ and } \sigma|\sigma'\models_\pi \Q \\

\sigma|\sigma'\models_\pi  \P\lorbi\Q 
  & \mbox{iff} & \sigma|\sigma'\models_\pi  \P \mbox{ or } \sigma|\sigma'\models_\pi \Q \\

\sigma|\sigma'\models_\pi  \all{x\scol K|x'\scol K'}{\P} 
  & \mbox{iff} & 
\extend{\sigma}{x}{v}|\extend{\sigma'}{x'}{v'}\models_\pi  \P 
 \quad\mbox{for all $v\in\means{K}\sigma \setminus \{\semNull\}$ and             
                    $v'\in\means{K'}\sigma' \setminus \{\semNull\}$}
 \\

\sigma|\sigma'\models_\pi  \all{x\scol \Region|x'\scol \Region}{\P} 
  & \mbox{iff} & 
\extend{\sigma}{x}{v}|\extend{\sigma'}{x'}{v'}\models_\pi  \P 
\quad\mbox{for all $v\in\means{\Region}\sigma $ and             
                   $v'\in\means{\Region}\sigma'$}
\\

\sigma|\sigma'\models_\pi  \all{x\scol \INT|x'\scol \INT}{\P} 
  & \mbox{iff} & 
 \extend{\sigma}{x}{v}|\extend{\sigma'}{x'}{v'}\models_\pi  \P 
 \quad\mbox{for all $v\in\mathbb{Z}$ and $v'\in\mathbb{Z}$}
 \\

\sigma|\sigma'\models_\pi  R(F\!F) 
  & \mbox{iff} & \means{F\!F}(\sigma|\sigma') \in \means{R} 
\mbox{ (and similarly for list $\ol{F\!F}$)}
\\[1ex]
\end{array}\)
\end{small}
\vspace{-1ex}
\caption{Relation formula semantics 
cases omitted from Fig.~\ref{fig:relFmlaSem}.
See Fig.~\ref{fig:relFormulas} for syntax.
}
\label{fig:relFmlaSemA}
\end{figure}

Semantics of relation formulas is given in Figs.~\ref{fig:relFmlaSem} and~\ref{fig:relFmlaSemA}.
Omitted in the figures are the left and right typing contexts for the formula.
Semantics for quantifiers is written in a way to make clear there is no built-in connection between the left and right values.  
In particular, we allow one side to bind a reference type while the other binds a variable of integer type. 
This is useful when a variable is only needed on one side
(whereas using a dummy of reference type would make the formula vacuously true in states with no allocated references on that side).
For practical purposes we find little use for quantification at type $\Region$ and on the other hand it is convenient to exclude null at reference type.


The form $R(\ol{F\!F})$, where $\ol{F\!F}$ is a list of 2-expressions, is restricted for simplicity to heap-independent expressions of mathematical type (including integers but excluding references and  regions).  So the semantics can be defined in terms of given denotations $\means{R}$ that provide a 
fixed interpretation for atomic predicates $R$ in the signature,
as assumed already for semantics of unary formulas. 
The semantics of left and right expressions is
written using $\means{-}$ and defined as follows:
$\means{\leftex{F}}(\sigma|\sigma') = \sigma(F)$  
and $\means{\rightex{F}}(\sigma|\sigma') = \sigma'(F)$.

\begin{lemma}[unique snapshots]
\label{lem:uniquespeconlyR}
\upshape
If $\P$ is the precondition in a wf relational spec 
with spec-only variables $\ol{s}$ on the left and $\ol{s}'$ on the right,
then for all $\sigma,\sigma',\pi$ there is at most one valuation $\ol{v},\ol{v}'$ such that
$\sigma|\sigma'\models_\pi\subst{\P}{\ol{s},\ol{s}'}{\ol{v},\ol{v}'}$.
Moreover, they are independent from $\pi$, i.e., determined by $\sigma,\sigma'$ and $\Left{\P}\land\Right{\P}$.
\end{lemma}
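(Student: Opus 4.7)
\textbf{Proof plan for Lemma~\ref{lem:uniquespeconlyR}.} The plan is to reduce the relational statement to the unary snapshot result, Lemma~\ref{lem:uniquespeconly}, by projecting $\P$ onto its left and right sides. The key structural fact to exploit is the well-formedness requirement (stated just after Def.~\ref{def:wfjudgeRel} and echoed in Def.~\ref{def:wfspec}): the spec-only variables $\ol{s}$ on the left and $\ol{s}'$ on the right occur in $\P$ only inside embedding operators $\leftF{-}$ resp.\ $\rightF{-}$, and only as snapshot equations $s=F$ that are top-level conjuncts. Consequently, the snapshot equations for $\ol{s}$ appear as top-level conjuncts of $\Left{\P}$ (in the unary sense of Def.~\ref{def:wfspec}), and symmetrically for $\ol{s}'$ in $\Right{\P}$.

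First, I would establish the projection identity
\[
  \Left{\subst{\P}{\ol{s},\ol{s}'}{\ol{v},\ol{v}'}} \;=\; \subst{\Left{\P}}{\ol{s}}{\ol{v}},
\]
and symmetrically on the right. This is a straightforward induction on $\P$ using the definition of $\Left{-}$ in Fig.~\ref{fig:synProjFmla}, noting that spec-only variables sit only under $\leftF{-}$ on the left (so substitution commutes with projection) and that on the right they occur only under $\rightF{-}$ (so the left projection does not see them, in which case the substitution on the right leaves $\Left{\P}$ unchanged). Next, I would use weakening by projection, valid in general (noted in Sect.~\ref{sec:relform}): whenever $\sigma|\sigma'\models_\pi \Q$, then $\sigma\models\Left{\Q}$ and $\sigma'\models\Right{\Q}$.

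Now suppose $\sigma|\sigma'\models_\pi\subst{\P}{\ol{s},\ol{s}'}{\ol{v},\ol{v}'}$ and also $\sigma|\sigma'\models_\pi\subst{\P}{\ol{s},\ol{s}'}{\ol{w},\ol{w}'}$. Projecting to the left and using the identity above yields $\sigma\models\subst{\Left{\P}}{\ol{s}}{\ol{v}}$ and $\sigma\models\subst{\Left{\P}}{\ol{s}}{\ol{w}}$. Since the snapshot equations for $\ol{s}$ are top-level conjuncts of $\Left{\P}$ and its free spec-only variables are exactly $\ol{s}$, Lemma~\ref{lem:uniquespeconly} applied to $\Left{\P}$ gives $\ol{v}=\ol{w}$. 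Symmetrically, $\ol{v}'=\ol{w}'$. This establishes uniqueness.

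For independence from $\pi$, observe that the derivation above extracted $\ol{v}$ and $\ol{v}'$ purely from the unary conditions $\sigma\models\subst{\Left{\P}}{\ol{s}}{\ol{v}}$ and $\sigma'\models\subst{\Right{\P}}{\ol{s}'}{\ol{v}'}$, which make no mention of $\pi$; so if such $\ol{v},\ol{v}'$ exist for some $\pi$, they are the unique ones determined by $\sigma,\sigma'$ and $\Left{\P}\land\Right{\P}$, independent of $\pi$. I expect no real obstacle here: the only delicate point is verifying that substitution commutes with projection for the specific shape of formulas that a wf relational precondition allows, which hinges on the stipulation that spec-only variables appear only under embedding operators as top-level snapshot equations --- exactly the discipline baked into well-formedness.
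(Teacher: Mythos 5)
Your proposal is correct, and it matches the route the paper intends: the paper gives no written argument for Lemma~\ref{lem:uniquespeconlyR} beyond "straightforward," but the lemma's own phrasing ("determined by $\sigma,\sigma'$ and $\Left{\P}\land\Right{\P}$") signals exactly your reduction --- project via the weakening property $\P\imp\leftF{\Left{\P}}\land\rightF{\Right{\P}}$, commute substitution with projection using the wf discipline on snapshot conjuncts, and invoke the unary Lemma~\ref{lem:uniquespeconly} on each side, which also yields $\pi$-independence for free since the extracted conditions are purely unary.
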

The proof is straightforward.

\begin{restatable}[framing of region agreement]{lemma}{lemframeEqbi}
\label{lem:frameEqbi}
\upshape
$G\eqbi G\models\fra{\effe|\effe}{ \Agr G\Img f}$
where $\effe$ is $\ftpt(G),\rd{G\Img f}$.
\end{restatable}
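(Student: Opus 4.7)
Unfolding Definition~\ref{def:frmAgreeRel}, we must show that for any $\pi,\sigma,\sigma',\tau,\tau'$ satisfying
$\agree(\sigma,\tau,\effe)$, $\agree(\sigma',\tau',\effe)$, and
$\sigma|\sigma'\models_\pi G\eqbi G \land \Agr G\Img f$, it follows that $\tau|\tau'\models_\pi \Agr G\Img f$. By the semantics in Fig.~\ref{fig:relFmlaSem}, the latter amounts to the two $\agree$ conditions
$\agree(\tau,\tau',\pi,\rd{G\Img f})$ and $\agree(\tau',\tau,\pi^{-1},\rd{G\Img f})$; I will establish the first, the second being entirely symmetric.

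\emph{First step: $G$ is preserved.} Because $\effe$ contains $\ftpt(G)$, the unary agreement $\agree(\sigma,\tau,\ftpt(G))$ (which uses the identity refperm on $\sigma(\lloc)\intersect\tau(\lloc)$) together with the footprint agreement lemma~(\ref{eq:footprintAgreement}) yields $\sigma(G) = \tau(G)$; similarly $\sigma'(G) = \tau'(G)$. In particular both $\sigma(G)\subseteq\tau(\lloc)$ and $\sigma'(G)\subseteq\tau'(\lloc)$.

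\emph{Second step: transfer of pointwise agreement.} Fix $o\in\tau(G)$ with $o\neq\semNull$; I must exhibit $\pi(o)$ and establish $\Rprel{\pi}{\tau(o.f)}{\tau'(\pi(o).f)}$. By the first step, $o\in\sigma(G)$, so $o\in\rlocs(\sigma,\rd{G\Img f})$-generating objects, hence the assumption $\sigma|\sigma'\models_\pi \Agr G\Img f$ (via $\agree(\sigma,\sigma',\pi,\rd{G\Img f})$) gives $o\in\dom(\pi)$ and $\Rprel{\pi}{\sigma(o.f)}{\sigma'(\pi(o).f)}$. The remaining ingredient is that $\pi(o)\in\sigma'(G)$: this comes from $\sigma|\sigma'\models_\pi G\eqbi G$, which by the semantics of $\eqbi$ together with the definition of $\Rprel{\pi}{-}{-}$ on regions gives a total bijection between $\sigma(G)\setminus\{\semNull\}$ and $\sigma'(G)\setminus\{\semNull\}$ extending $\pi$. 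Now, since $\effe$ contains $\rd{G\Img f}$, the unary agreement $\agree(\sigma,\tau,\rd{G\Img f})$ yields $\sigma(o.f) = \tau(o.f)$ (identity refperm, and $o.f\in\rlocs(\sigma,\rd{G\Img f})$); symmetrically, because $\pi(o)\in\sigma'(G)=\tau'(G)$, $\agree(\sigma',\tau',\rd{G\Img f})$ yields $\sigma'(\pi(o).f) = \tau'(\pi(o).f)$. Substituting, $\Rprel{\pi}{\tau(o.f)}{\tau'(\pi(o).f)}$ as required.

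\emph{Anticipated obstacles.} The argument above is uniform across the three cases for the type of $f$ (primitive, class, $\Region$), because $\Lagree$ compares the contents $\sigma(o.f)$ and $\sigma'(\pi(o).f)$ as raw values via $\Rprel{\pi}{-}{-}$, which is already extended to integers and to region sets. The only subtle point is the left-skew of $\agree$: the antecedent $\Agr G\Img f$ is interpreted twice---once via $\pi$ and once via $\pi^{-1}$ with states swapped---and I use the first for the $\tau,\tau'$ direction and the second, together with the bijective nature of $G\eqbi G$, for the $\tau',\tau$ direction. The hypothesis $G\eqbi G$ is essential precisely to guarantee $\pi(\sigma(G)) = \sigma'(G)$, which in turn matches $\tau'(G)$ after the footprint step; without it, one would only have $\sigma(G)\subseteq\dom(\pi)$ from $\Agr G\Img f$ but no control over whether $\pi(o)\in\sigma'(G)$. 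I expect no further difficulty beyond carefully tracking these inclusions.
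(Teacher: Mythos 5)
Your proof is correct and follows essentially the same route as the paper's: establish $\sigma(G)=\tau(G)$ and $\sigma'(G)=\tau'(G)$ from the footprint part of $\effe$, then combine the relational agreement on $\rd{G\Img f}$ with $G\eqbi G$ (which yields $\pi(\sigma(G))=\sigma'(G)$, hence $\pi(o)\in\sigma'(G)=\tau'(G)$) and the unary agreements to transfer the pointwise field agreement to $\tau,\tau'$; the paper spells out both directions where you appeal to symmetry, which is legitimate here. One small wording slip: the bijection between $\sigma(G)$ and $\sigma'(G)$ given by $G\eqbi G$ is a \emph{restriction} of $\pi\union\{(\semNull,\semNull)\}$, not an extension of $\pi$, though the conclusion you draw from it is the right one.
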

\begin{proof}
Suppose $\sigma|\sigma'\models_\pi G\eqbi G \land \Agr G\Img f$
and $\agree(\sigma, \tau, \effe)$ and $\agree(\sigma', \tau', \effe)$.
By semantics, $\sigma|\sigma'\models_\pi \Agr G\Img f$ 
iff $\agree(\sigma,\sigma',\pi,\rd{G\Img f})$ and 
$\agree(\sigma',\sigma,\pi^{-1},\rd{G\Img f})$,
i.e., \[\Lagree(\sigma,\sigma',\pi,\rlocs(\sigma,\rd{G\Img f})) \mbox{ and }
\Lagree(\sigma',\sigma,\pi,\rlocs(\sigma',\rd{G\Img f}))\]
We must show $\Lagree(\tau,\tau',\pi,\rlocs(\tau,\rd{G\Img f}))$ and
$\Lagree(\tau',\tau,\pi^{-1},\rlocs(\tau',\rd{G\Img f}))$.

From $\agree(\sigma, \tau, \effe)$ 
we get $\sigma(G)=\tau(G)$, 
and from $\agree(\sigma', \tau', \effe)$ 
we get $\sigma'(G)=\tau'(G)$.
From $\sigma(G)=\tau(G)$ we get that $\rlocs(\sigma,\rd{G\Img f}) = \rlocs(\tau,\rd{G\Img f})$ and 
from $\sigma'(G)=\tau'(G)$ we get that $\rlocs(\sigma',\rd{G\Img f}) = \rlocs(\tau',\rd{G\Img f})$.
So it suffices to show 
\[\Lagree(\tau,\tau',\pi,\rlocs(\sigma,\rd{G\Img f})) \mbox{ and }
\Lagree(\tau',\tau,\pi^{-1},\rlocs(\sigma',\rd{G\Img f}))\]

First the left conjunct:
For any $o.f\in \rlocs(\sigma,\rd{G\Img f})$, we have from above that
$\tau(o.f) = \rprel{\sigma(o.f)}{\sigma'(\pi(o).f)}$
so it remains to show $\sigma'(\pi(o).f) = \tau'(\pi(o).f)$.
From $\sigma|\sigma'\models_\pi G\eqbi G$ we have
$\rprel{\sigma(G)}{\sigma'(G)}$, i.e., $\pi(\sigma(G)) = \sigma'(G)$.
So $\pi(o)\in\sigma'(G)$ and we get 
$\sigma'(\pi(o).f) = \tau'(\pi(o).f)$ from $\agree(\sigma', \tau', \rd{G\Img f})$.

Now the right conjunct:
For any $o.f\in \rlocs(\sigma',\rd{G\Img f})$,
$\rprel{\sigma(\pi^{-1}(o).f)}{\sigma'(o.f)} = \tau'(o.f)$
so it remains to show $\tau(\pi^{-1}(o).f) = \sigma(\pi^{-1}(o).f)$.
From $\sigma|\sigma'\models_\pi G\eqbi G$ we have
$\rprel{\sigma(G)}{\sigma'(G)}$, i.e., $\pi(\sigma(G)) = \sigma'(G)$.
So $\pi^{-1}(o)\in\sigma(G)$ and we get 
$\sigma(\pi^{-1}(o).f) = \tau(\pi^{-1}(o).f)$ from $\agree(\sigma, \tau, \rd{G\Img f})$.
\end{proof}

\begin{lemma}
\upshape
If $\RprelT{\pi,\pi'}{(\sigma|\sigma')}{(\tau|\tau')}$
then 
$\sigma|\sigma'\models_\rho \P$
implies  
$\tau|\tau'\models_{\pi^{-1};\rho;\pi'} \P$.
\end{lemma}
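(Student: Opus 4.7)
The plan is to proceed by induction on the structure of $\P$. The underlying intuition is that $\pi$ (resp.\ $\pi'$) furnishes a total bijection between $\sigma(\lloc)$ and $\tau(\lloc)$ (resp.\ $\sigma'(\lloc)$ and $\tau'(\lloc)$) by Def.~\ref{def:state-iso}, so that composition $\pi^{-1};\rho;\pi'$ is a well-defined partial bijection on $\tau(\lloc)\times\tau'(\lloc)$; and it is type-respecting because $\rho,\pi,\pi'$ all are. A useful preliminary step is Lemma~\ref{lem:insensi}, instantiated separately on each side: for every program or region expression $F$ we get $\Rprel{\pi}{\sigma(F)}{\tau(F)}$ and $\Rprel{\pi'}{\sigma'(F)}{\tau'(F)}$, and for every unary formula $P$ we get $\sigma\models P$ iff $\tau\models P$ (and symmetrically on the right).

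First I would dispatch the embedding cases. For $\leftF{P}$ (resp.\ $\rightF{P}$) apply Lemma~\ref{lem:insensi} on the left (resp.\ right). For $R(\ol{F\!F})$, use the same lemma on each left/right embedded expression; since these are heap-independent expressions of mathematical type (where $\Rprel{\pi}{}{}$ collapses to equality), the value of $\means{F\!F}(\sigma|\sigma')$ equals $\means{F\!F}(\tau|\tau')$ and membership in $\means{R}$ is preserved. For $F\eqbi F'$, observe that $\Rprel{\pi}{\sigma(F)}{\tau(F)}$ and $\Rprel{\pi'}{\sigma'(F')}{\tau'(F')}$, so if $\rho$ relates $\sigma(F)$ to $\sigma'(F')$ then the composite $\pi^{-1};\rho;\pi'$ relates $\tau(F)$ to $\tau'(F')$ (the null case is handled because null is fixed by every refperm). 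For the agreement form $\Agr\,LE$, unfold Def.~\ref{def:locagreement}: for a variable $x$, use $\Rprel{\pi}{\sigma(x)}{\tau(x)}$ etc.; for a heap location $o.f$ in $\rlocs(\tau, LE)$, let $o_0=\pi^{-1}(o)\in\rlocs(\sigma,LE)$ (using that $\rho(o_0)=\sigma'(G)$-values are transported via $\pi'$), and chain the three refperm relationships on field contents, again using Lemma~\ref{lem:insensi} to equate $\sigma(o_0.f)$ with the image of $\tau(o.f)$ and $\sigma'(\rho(o_0).f)$ with the pre-image of $\tau'(\pi'(\rho(o_0)).f)$.

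The propositional connectives $\land$, $\lor$, $\imp$ follow directly from the IH. For the quantifier $\all{x\scol T \smallSplitSym x'\scol T'}{\P}$, given witnesses $v,v'$ in $\tau,\tau'$ of the appropriate types, use the bijections $\pi,\pi'$ to pull back to witnesses $v_0,v'_0$ in $\sigma,\sigma'$ (for reference types, $v_0 = \pi^{-1}(v)$ and these are allocated and well-typed by Def.~\ref{def:state-iso}; for integer types, $v_0=v$), apply IH to the extended states, and note $\RprelT{\pi,\pi'}{(\extend{\sigma}{x}{v_0}|\extend{\sigma'}{x'}{v'_0})}{(\extend{\tau}{x}{v}|\extend{\tau'}{x'}{v'})}$.

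The main obstacle will be the $\later$ case. Suppose $\sigma|\sigma'\models_\rho \later\P$, so $\sigma|\sigma'\models_{\rho_0}\P$ for some $\rho_0\supseteq\rho$. Since $\rho_0$ is a refperm from $\sigma$ to $\sigma'$, its domain and range lie in $\sigma(\lloc)$ and $\sigma'(\lloc)$ respectively. Define $\rho_1 = \pi^{-1};\rho_0;\pi'$; this is a refperm from $\tau$ to $\tau'$, and one checks $\rho_1 \supseteq \pi^{-1};\rho;\pi'$ because $\rho_0 \supseteq \rho$. By IH applied to $\P$ with refperm $\rho_0$, we obtain $\tau|\tau'\models_{\rho_1}\P$, hence $\tau|\tau'\models_{\pi^{-1};\rho;\pi'}\later\P$. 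The subtle point here is confirming that extensions of $\rho$ transport cleanly through the conjugation by $\pi,\pi'$; this works precisely because $\pi,\pi'$ are total bijections between the allocated references of the two pairs of isomorphic states, so conjugation sets up an order-preserving correspondence between refperms from $\sigma$ to $\sigma'$ and refperms from $\tau$ to $\tau'$.
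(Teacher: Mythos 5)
Your proposal is correct and follows essentially the same route as the paper's proof: structural induction on $\P$, using Lemma~\ref{lem:insensi} for embedded unary formulas and expressions, and chaining/conjugating the three refperms in the $F\eqbi F'$ and $\Agr\,LE$ cases exactly as the paper's calculation does. The paper only works out two representative cases in detail (calling the rest ``similar or simpler''), so your explicit treatment of the $\later$ case---in particular the observation that $\rho_0\supseteq\rho$ implies $\pi^{-1};\rho_0;\pi'\supseteq\pi^{-1};\rho;\pi'$ because $\pi,\pi'$ are total bijections on the allocated references---is a worthwhile elaboration but not a different approach.
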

Here $\pi^{-1};\rho;\pi'$ denotes composition of refperms in diagrammatic order, so
$(\pi^{-1};\rho;\pi')(o)$ is 
$\pi'(\rho(\pi^{-1}(o)))$ if it is defined on $o$.
\begin{proof}
Proof by induction on $\P$. 
We consider two cases; the other cases are similar or simpler.

Consider the case of $F\eqbi F'$, where $F,F'$ are expressions of some class type $K$.  (The argument for type $\Region$ is similar and for base types $\INT$ and $\BOOL$ straightforward.)
Now suppose $\sigma|\sigma'\models_\rho F\eqbi F'$, i.e.,
$\Rprel{\rho}{\sigma(F)}{\sigma'(F')}$.
For the non-null case, this is equivalent to 
$\rho(\sigma(F)) = \sigma'(F')$.  (We leave the null case to the reader.)
We must show 
$\Rprel{\pi^{-1};\rho;\pi'}{\tau(F)}{\tau'(F)}$, 
i.e., 
$\pi'(\rho(\pi^{-1}(\tau(F)))) = \tau'(F')$.
From $\RprelT{\pi,\pi'}{(\sigma|\sigma')}{(\tau|\tau')}$
we have $\RprelT{\pi}{\sigma}{\tau}$ and $\RprelT{\pi'}{\sigma'}{\tau'}$ by definition.
By Lemma~\ref{lem:insensi}
we get $\Rprel{\pi}{\sigma(F)}{\tau(F)}$
and $\Rprel{\pi'}{\sigma'(F')}{\tau'(F')}$,
which for non-null values means 
$\pi(\sigma(F)) = \tau(F)$
and $\pi'(\sigma'(F')) = \tau'(F')$.
We conclude by using the equations to calculate
$\pi'(\rho(\pi^{-1}(\tau(F))))
= \pi'(\rho(\pi^{-1}(\pi(\sigma(F)))))
= \pi'(\rho(\sigma(F)))
= \pi'(\sigma'(F))
= \tau'(F')$.

Consider the case of $\Agr G\Img f$ where $f$ is a reference type field.
Suppose $\sigma|\sigma'\models_\rho \Agr G\Img f$.
By semantics and the definitions of $\agree$, $\rlocs$, and $\Lagree$, this is equivalent to
\begin{equation}\label{eq:hyp}
\all{o\in \sigma(G)}{ \Rprel{\rho}{\sigma(o.f)}{\sigma'(\rho(o).f)} } 
\end{equation}
In the rest of the proof we consider the non-null case, so the body can
be rephrased as $\rho(\sigma(o.f)) = \sigma'(\rho(o).f)$.
We must show 
\[ \all{p\in \tau(G)}{ \Rprel{\pi^{-1};\rho;\pi'}{
  \tau(p.f)}{\tau'(\pi'(\rho(\pi^{-1}(p))).f)} } \] 
i.e., $\pi'(\rho(\pi^{-1}(\tau(p.f)))) = \tau'(\pi'(\rho(\pi^{-1}(p))).f)$.
By $\RprelT{\pi}{\sigma}{\tau}$,
we have $p\in\tau(G)$ iff $\pi^{-1}(p)\in \sigma(G)$ so
we reformulate our obligation in terms of $\pi(o)$:
\begin{equation}\label{eq:hypA}
 \all{o\in \sigma(G)}{ 
    \pi'(\rho(\pi^{-1}(\tau(\pi(o).f)))) = \tau'(\pi'(\rho(\pi^{-1}(\pi(o)))).f) }
\end{equation}
By the isomorphisms $\Rprel{\pi}{\sigma(F)}{\tau(F)}$
and $\Rprel{\pi'}{\sigma'(F')}{\tau'(F')}$,
we have $\pi(\sigma(o.f)) = \tau(\pi(o).f)$
 and $\pi'(\sigma'(p.f)) = \tau'(\pi'(p).f)$ for any $o,p$.
We prove (\ref{eq:hypA}) by calculating for any $o\in\sigma(G)$: 
\[\begin{array}{ll}
\pi'(\rho(\pi^{-1}(\tau(\pi(o).f))))  \\
= \pi'(\rho(\pi^{-1}(\pi(\sigma(o.f))))) 
 & \mbox{by $\pi(\sigma(o.f)) = \tau(\pi(o).f)$} \\
= \pi'(\rho(\sigma(o.f))) 
 & \mbox{by $\pi$ bijective} \\
= \pi'(\sigma'(\rho(o).f))
 & \mbox{by $\rho(\sigma(o.f)) = \sigma'(\rho(o).f)$ from (\ref{eq:hyp})} \\
= \tau'(\pi'(\rho(o)).f)
 & \mbox{by $\pi'(\sigma'(p.f)) = \tau'(\pi'(p).f)$} \\ 
= \tau'(\pi'(\rho(\pi^{-1}(\pi(o)))).f)
 & \mbox{by $\pi$ bijective} 
\end{array}
\]
\end{proof}

\lemrefpermsupp*
\begin{proof}
(i)
To show $R$ is refperm monotonic we must show 
for all $\pi,\rho,\sigma,\sigma'$, if
$\sigma|\sigma'\models_\pi \R$ and $\rho\supseteq\pi$ then $\sigma|\sigma'\models_\rho \R$.
This is immediate in case $\R$ is refperm independent.

There are two general forms for agreement formulas.
For the form $F\eqbi F'$, we only need to consider $F$ (and thus $F'$) of reference or region type, as otherwise it is refperm independent.
For both reference type and region type we have
$\sigma|\sigma'\models_\pi F\eqbi F'$
iff $\rprel{\sigma(F)}{\sigma'(F')}$ (by semantics,
see Fig.~\ref{fig:relFmlaSem}).
The latter holds only if $\sigma(F)$ is in the domain of $\pi$ (for $F:K$)
or a subset of the domain (for $F:\Region$), and \emph{mut.\ mut.} for
$\sigma'(F')$ and the range of $\pi$.  
So $\sigma|\sigma'\models_\pi F\eqbi F'$ implies 
$\sigma|\sigma'\models_\rho F\eqbi F'$ for any $\rho\supseteq\pi$.

The other form of agreement formula is $\Agr LE$ where $LE$ may be a variable $x$ ---in which case the meaning is the same as $x\eqbi x$ and the above argument applies--- or $LE$ has the form $G\Img f$.  Suppose $\sigma|\sigma'\models_\pi G\Img f$.  
Unfolding the semantics, we have
$\agree(\sigma,\sigma',\pi,\rd{G\Img f})$ and 
$\agree(\sigma',\sigma,\pi^{-1},\rd{G\Img  f})$.
That is,
$\Lagree(\sigma,\sigma',\pi,\rlocs(\sigma,\rd{G\Img f}))$ and 
$\Lagree(\sigma',\sigma,\pi^{-1},\rlocs(\sigma',\rd{G\Img  f})$.
This does not entail $\rprel{\sigma(G)}{\sigma'(G)}$ 
(see 
Section~\ref{sec:relform}).
But it does entail that $\sigma(G) \subseteq \dom(\pi)$
and $\sigma'(G)\subseteq\rng(\pi)$ (as already remarked in Section~\ref{sec:relform}). 
So extending $\pi$ to some $\rho\supseteq\pi$ does not affect the agreements:
we have 
$\Lagree(\sigma,\sigma',\rho,\rlocs(\sigma,\rd{G\Img f}))$ and 
$\Lagree(\sigma',\sigma,\rho^{-1},\rlocs(\sigma',\rd{G\Img  f})$,
(cf.\ Eqn.~(\ref{eq:LagreeMono})

(ii) Conjunction and disjunction are straightforward by definitions.
For quantification at a reference type, suppose $\R$ is refperm monotonic
and suppose 
$\sigma|\sigma'\models_\pi 
   \all{x\scol K \smallSplitSym x'\scol K'}{ \R }$.
Thus by definition (see Fig.~\ref{fig:relFmlaSemA}) we have
$\extend{\sigma}{x}{o}|\extend{\sigma'}{x'}{o'}\models_\pi  \R$
for all $o\in\means{K}\sigma \setminus \{\semNull\}$ and             
                    $o'\in\means{K'}\sigma' \setminus \{\semNull\}$.
Now, if $\rho\supseteq\pi$ then for any 
$o\in\means{K}\sigma \setminus \{\semNull\}$ and             
$o'\in\means{K'}\sigma' \setminus \{\semNull\}$
we have 
$\extend{\sigma}{x}{o}|\extend{\sigma'}{x'}{o'}\models_\rho  \R$
by refperm monotonicity of $\R$.
Hence 
$\sigma|\sigma'\models_\rho
   \all{x\scol K \smallSplitSym x'\scol K'}{ \R }$.
For existential quantification, and quantification at type $\INT$ and type $\Region$,
the argument is the same.

(iii) Suppose 
\( \sigma|\sigma'\models_\pi 
G\eqbi G' \land (\all{x\scol K \in G\smallSplitSym x\scol K\in G'}{\Agr x \imp \R})
\).
So $\sigma|\sigma'\models_\pi G\eqbi G'$, i.e., by semantics $\rprel{\sigma(G)}{\sigma'(G')}$.
Thus each element of $\sigma(G)$
(resp.\ $\sigma'(G')$) is in the domain (resp.\ range) of $\pi$. 
Also by semantics we have 
$\extend{\sigma}{x}{o}|\extend{\sigma'}{x}{o'}\models_\pi  \R$, 
for every $(o,o')\in X$ where 
$X = \{(o,o') \mid o\in\sigma(G), o'\in\sigma'(G'), \mbox{ and } (o,o')\in\pi \}$.

Now suppose $\rho\supseteq\pi$. 
We have $\sigma|\sigma'\models_\rho G\eqbi G'$ --- As already noted, agreements are refperm monotonic.
For the second conjunct, 
we need 
$\extend{\sigma}{x}{o}|\extend{\sigma'}{x}{o'}\models_\rho  \R$
for every $(o,o')$ in the set $Y$ where 
$Y = \{(o,o') \mid o\in\sigma(G), o'\in\sigma'(G'), \mbox{ and } (o,o')\in\rho \}$.
But $Y=X$, owing to $\rprel{\sigma(G)}{\sigma'(G')}$ hence $o\in\dom(\pi)$ and $o'\in\rng(\pi)$.
So the result follows by refperm monotonicity of $\R$.
\end{proof}

\subsection{On biprogram semantics}

\begin{example}
Bi-coms deterministically dovetail unary steps, without regard to the unary control structure. For example, traces of 
$\Splitbi{ \whilec{1}{a;b;c} }{ \whilec{1}{d} }$ look like this:\footnote{The details depend on the unary transition semantics for loops, which is a standard one that takes a step to unfold the loop body.  An alternate semantics, e.g., using a stack of continuations, would work slightly differently but the point is the same: bi-com deterministically dovetails the unary executions without regard to unary control structure. 
}
\[
\begin{array}{l}
\configc{ \Splitbi{ \whilec{1}{(a;b;c)} }{ \whilec{1}{d} } } \\
\configc{ \Splitbir{ a;b;c;\whilec{1}{(a;b;c)} }{ \whilec{1}{d} } } \\
\configc{ \Splitbi{ a;b;c;\whilec{1}{(a;b;c)} }{ d;\whilec{1}{d} } } \\ 
\configc{ \Splitbir{ b;c;\whilec{1}{(a;b;c)} }{ d;\whilec{1}{d} } } \\
\configc{ \Splitbi{ b;c;\whilec{1}{(a;b;c)} }{ \whilec{1}{d} } } \\
\configc{ \Splitbir{ c;\whilec{1}{(a;b;c)} }{ \whilec{1}{d} } } \\
\configc{ \Splitbi{ c;\whilec{1}{(a;b;c)} }{ d;\whilec{1}{d} } } \\ 
\configc{ \Splitbir{ \whilec{1}{(a;b;c)} }{ d;\whilec{1}{d} } } \\
\configc{ \Splitbi{ \whilec{1}{(a;b;c)} }{ \whilec{1}{d} } } \\
\ldots
\end{array}
\]
The right side iterated twice, the left once.  
\qed
\end{example}

\begin{example}\label{ex:weaving}
In terms of operational semantics, the respective computations of the five
biprograms in Eqn.~(\ref{eq:weaveseq}) are as follows, 
where for clarity we underline the active command for the underlying unary transition,
and abbreviate $\skipc$ as $\emptyeff$.
\[
\begin{array}{l}
\configc{ \splitbi{\underline{a};b;c}{d;e;f}  }
\configc{ \splitbir{b;c}{\underline{d};e;f}  }
\configc{ \splitbi{\underline{b};c}{e;f}  }
\configc{ \splitbir{c}{\underline{e};f}  } 
\configc{ \splitbi{\underline{c}}{f}  } 
\configc{ \splitbir{\emptyeff}{\underline{f}}  } 
\configc{ \syncbi{\emptyeff}  }
\\
\configc{ \splitbi{\underline{a};b}{d} ; \splitbi{c}{e;f} }
\configc{ \splitbir{b}{\underline{d}} ; \splitbi{c}{e;f}  } 
\configc{ \splitbi{\underline{b}}{\emptyeff} ; \splitbi{c}{e;f}  }
\configc{ \splitbi{\underline{c}}{e;f}  }
\configc{ \splitbir{\emptyeff}{\underline{e};f}  }
\configc{ \splitbi{\emptyeff}{\underline{f}}  }
\configc{ \syncbi{\emptyeff}  }
\\
\configc{ \splitbi{\underline{a}}{d;e} ; \splitbi{b;c}{f}  }
\configc{ \splitbir{\emptyeff}{\underline{d};e} ; \splitbi{b;c}{f}  }
\configc{ \splitbi{\emptyeff}{\underline{e}} ; \splitbi{b;c}{f}  }
\configc{ \splitbi{\underline{b};c}{f} }
\configc{ \splitbir{c}{\underline{f}} }
\configc{ \splitbi{\underline{c}}{\emptyeff} }
\configc{ \syncbi{\emptyeff}  }
\\
\configc{ \splitbi{\underline{a};b;c}{\emptyeff} ; \splitbi{\emptyeff}{d;e;f} }
\configc{ \splitbi{\underline{b};c}{\emptyeff} ; \splitbi{\emptyeff}{d;e;f} }
\configc{ \splitbi{\underline{c}}{\emptyeff} ; \splitbi{\emptyeff}{d;e;f} }
\configc{ \splitbi{\emptyeff}{\underline{d};e;f}  }
\configc{ \splitbi{\emptyeff}{\underline{e};f}  }
\configc{ \splitbi{\emptyeff}{\underline{f}}  }
\configc{ \syncbi{\emptyeff}  }
\\
\configc{ \splitbi{\emptyeff}{\underline{d};e;f} ; \splitbi{a;b;c}{\emptyeff}  }
\configc{ \splitbir{\emptyeff}{\underline{e};f} ; \splitbi{a;b;c}{\emptyeff}  }
\configc{ \splitbir{\emptyeff}{\underline{f}} ; \splitbi{a;b;c}{\emptyeff}  }
\configc{ \splitbi{\underline{a};b;c}{\emptyeff}  }
\configc{ \splitbi{\underline{b};c}{\emptyeff}  }
\configc{ \splitbi{\underline{c}}{\emptyeff}  }
\configc{ \syncbi{\emptyeff}  }
\end{array}
\]
Note that $d$-steps of the last two examples go by rule \rn{bComR0}.
\qed
\end{example}

\begin{example}\upshape
In the preceding, we illustrate what happens when the commands do not fault.
Now suppose that the transition for $c$ faults but none of the others do.
(I.e., the $c$-transitions above do not exist.)
Thus there are unary traces completing actions $ab$ and $def$ which can be covered by 
$(\splitbi{a}{d;e} ; \splitbi{b;c}{f})$
and by $(\splitbi{\emptyeff}{d;e;f} ; \splitbi{a;b;c}{\emptyeff})$ 
but not by  
$\splitbi{a;b;c}{d;e;f}$ or the other rearrangements.

If instead both $c$ and $e$ fault, 
then both 
$\splitbi{a;b}{d} ; \splitbi{c}{e;f}$ and 
$\splitbi{a;b;c}{\skipc} ; \splitbi{\skipc}{d;e;f}$ fault trying to execute $c$,
while the others fault trying to execute $e$.

Here is an example of the weaving axiom for conditional:
\[ \splitbi{ \ifc{E}{a;b}{c;d} }{\ifc{E'}{e;f}{g;h} } 
   \weave \ifcbi{E\smallSplitSym E'}{ \splitbi{a;b}{e;f} }{ \splitbi{c;d}{g;h} } \]
Consider a trace of the lhs, where $E$ is true in the left state and $E'$ is false on the right.
Absent faults, the trace may look as follows:
\(
\begin{array}[t]{l} 
\configc{ \splitbi{ \ifc{E}{a;b}{c;d} }{\ifc{E'}{e;f}{g;h} } } \\
\configc{ \splitbir{ a;b }{\ifc{E'}{e;f}{g;h} } } \\
\configc{ \splitbi{ a;b }{ g;h } } \\
\configc{ \splitbir{ b }{ g;h } } \\
\configc{ \splitbi{ b }{ h } } \\
\configc{ \splitbir{ \skipc }{ h } } \\
\configc{ \syncbi{\skipc} }  \\
\end{array}
\)
\\
For the rhs, a trace from the same states has only the initial configuration:
\[ \configc{ \ifcbi{E\smallSplitSym E'}{ \splitbi{a;b}{e;f} }{ \splitbi{c;d}{g;h} } } \]
It faults next, an alignment fault due to test disagreement.
\qed\end{example}

\lemBiprojections*
\begin{proof}
  We need the fact that $\weave^*$ is a congruence.  This is proved by
induction on the reflexive-transitive closure, using the congruence rules for $\weave$ (Figure~\ref{fig:weave}).

The proof of the lemma proceeds by induction on $CC$ .  It's easy to check the
lemma holds when CC is of the form $\syncbi{A}$.  For the inductive cases, we
rely on congruence and transitivity of $\weave^*$.
For example, consider the case when $CC \equiv DD;EE$.  We need to show
$\splitbi{\Left{DD;EE}}{\Right{DD;EE}} \weave^* (DD; EE)$.
We have,
     \[\begin{array}{lll}
         & \splitbi{\Left{DD; EE}}{\Right{DD; EE}} \\
  \equiv & \splitbi{\Left{DD}; \Left{EE}}{\Right{DD}; \Right{EE}} & \mbox{def of projection} \\
  \weave &  \splitbi{\Left{DD}}{\Right{DD}} ; \splitbi{\Left{EE}}{\Right{EE}}   &  \mbox{using $\weave$ axiom for sequence} \\
\weave^* &  DD ; \splitbi{\Left{EE}}{\Right{EE}} & \mbox{congruence and ind hyp $\splitbi{\Left{DD}}{\Right{DD}} \weave^* DD$} \\
\weave^* &  DD ; EE & \mbox{congruence and ind hyp $\splitbi{\Left{EE}}{\Right{EE}} \weave^* EE$}
            \end{array}\]
So $\splitbi{\Left{DD; EE}}{\Right{DD; EE}} \weave^* DD ; EE$
by transitivity.
The other cases follow the same pattern.
\end{proof}

\begin{lemma}\label{lem:activeBi}
\upshape
For any $C$ we have $\Active(\Syncbi{C}) = \Syncbi{\Active(C)}$.
\end{lemma}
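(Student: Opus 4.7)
The plan is a routine structural induction on $C$, using the defining clauses of $\Syncbi{-}$ in Fig.~\ref{fig:fullAlign} together with the two definitions of $\Active$: for a command, $\Active(C)$ is the leftmost non-sequence subcommand obtained by reassociating $;$, and for a biprogram, $\Active(CC)$ is the unique $BB$ (not itself a sequence) with $CC\equiv BB;DD$. Syntactic identities from Figs.~\ref{fig:synident} (associativity of $;$ on both commands and biprograms, and $\syncbi{\skipc};CC\equiv CC$) will be used silently.

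I would dispatch the non-sequence cases first, as they are uniform. If $C$ is an atomic command $A$, then $\Active(A)=A$, $\Syncbi{A}=\syncbi{A}$, and $\syncbi{A}$ is not a biprogram sequence, so $\Active(\Syncbi{A})=\syncbi{A}=\Syncbi{\Active(A)}$. If $C$ is an $\keyw{if}$, $\keyw{while}$, $\keyw{let}$, or $\keyw{var}$ block, then $C$ is not a sequence, so $\Active(C)=C$; at the same time $\Syncbi{C}$ is a bi-if, bi-while, bi-let, or bi-var respectively, none of which is a biprogram sequence, so $\Active(\Syncbi{C})=\Syncbi{C}=\Syncbi{\Active(C)}$.

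The only nontrivial case is sequential composition $C\equiv C_1;C_2$, and this is the step on which the argument hinges. By definition $\Active(C_1;C_2)=\Active(C_1)$, and by Fig.~\ref{fig:fullAlign} we have $\Syncbi{C_1;C_2}=\Syncbi{C_1};\Syncbi{C_2}$. The induction hypothesis applied to $C_1$ gives $\Active(\Syncbi{C_1})=\Syncbi{\Active(C_1)}$, so writing $\Syncbi{C_1}\equiv \Active(\Syncbi{C_1});RR$ for an appropriate residual biprogram $RR$ (possibly $\syncbi{\skipc}$), we get
\[
\Syncbi{C_1};\Syncbi{C_2} \equiv \Active(\Syncbi{C_1});RR;\Syncbi{C_2},
\]
and $\Active(\Syncbi{C_1})$ is, by the induction hypothesis, equal to $\Syncbi{\Active(C_1)}$, which is not a sequence (since $\Active(C_1)$ is not a sequence and, by the clauses just handled, $\Syncbi{-}$ of a non-sequence is a non-sequence). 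Hence $\Active(\Syncbi{C_1;C_2}) = \Syncbi{\Active(C_1)} = \Syncbi{\Active(C_1;C_2)}$.

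The only subtlety worth pausing on is the need for the auxiliary observation that $\Syncbi{D}$ is a biprogram-sequence exactly when $D$ is a command-sequence; this is what lets us identify the head of the flattened biprogram $\Syncbi{C_1};\Syncbi{C_2}$ with $\Active(\Syncbi{C_1})$. That observation is itself immediate from Fig.~\ref{fig:fullAlign}, so no separate lemma is needed.
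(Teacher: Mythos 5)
Your proof is correct and follows exactly the route the paper intends: the paper's proof is the one-line remark ``by induction on $C$ using definitions,'' and your structural induction on $C$ --- with the sequence case as the only nontrivial step and the observation that $\Syncbi{-}$ maps non-sequences to non-sequences --- is precisely that argument spelled out.
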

The proof is by induction on $C$ using definitions.


\begin{restatable}[quasi-determinacy of biprogram transitions]{lemma}{lemRdeterminacy}
\label{lem:Rdeterminacy}
\upshape
Let $\phi$ be a relational pre-model.  Then
(a) $\biTrans{\phi}$ is rule-deterministic.
(b) If $\RprelT{\pi|\pi'}{(\sigma|\sigma')}{(\sigma_0|\sigma_0')}$ and 
$\configr{CC}{\sigma}{\sigma'}{\mu}{\mu'} 
\biTrans{\phi} 
\configr{DD}{\tau}{\tau'}{\nu}{\nu'}$
and 
$\configr{CC}{\sigma_0}{\sigma'_0}{\mu}{\mu'} 
\biTrans{\phi} 
\configr{DD_0}{\tau_0}{\tau'_0}{\nu_0}{\nu'_0}$
then $DD\equiv DD_0$, $\nu=\nu_0$, $\nu'=\nu'_0$,
and there are $\rho\supseteq\pi$ and $\rho'\supseteq\pi'$ such that
$\RprelT{\rho|\rho'}{(\tau|\tau')}{(\tau_0|\tau'_0)}$.
(c) If $\RprelT{\pi|\pi'}{(\sigma|\sigma')}{(\sigma_0|\sigma_0')}$ 
then 
$\configr{CC}{\sigma}{\sigma'}{\mu}{\mu'} 
\biTrans{\phi} \Fault$
iff $\configr{CC}{\sigma_0}{\sigma'_0}{\mu}{\mu'} 
\biTrans{\phi} \Fault$.
\end{restatable}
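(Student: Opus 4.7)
\medskip

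\noindent\textbf{Proof plan for Lemma~\ref{lem:Rdeterminacy}.}
The overall strategy is to mirror the proof of the unary Lemma~\ref{lem:determinacy}, lifting it through the biprogram transitions defined in Figs.~\ref{fig:biprogTrans} and~\ref{fig:biprogTransU}, and appealing to the determinacy conditions in Def.~\ref{def:interpRel} wherever a sync'd context call is taken. For (a), I will inspect the transition rules form by form and check pairwise mutual exclusion: within atomic sync transitions, \rn{bSync} and \rn{bSyncX} are exclusive by fault determinacy of the two unary pre-models $\phi_0,\phi_1$; within \rn{bCallS}, \rn{bCallX}, \rn{bCall0} and \rn{bCallE}, exclusiveness follows from fault determinacy of $\phi_2$, from \rn{bCallE} requiring $m\in\dom(\mu)\cap\dom(\mu')$, and from the convention that a context call to $m$ in the sync form only fires \rn{bCallE} if $m$ is env-bound. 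For \rn{bComL}, \rn{bComR}, \rn{bComR0} and their fault variants, exclusiveness is by the shape of the active biprogram ($\splitbi{-}{-}$ vs.\ $\splitbir{-}{-}$ vs.\ $\splitbi{\skipc}{-}$). For the bi-if rules \rn{bIfTT}, \rn{bIfFF}, \rn{bIfX}, the three conditions $\sigma(E)=\sigma'(E')=\True$, both $\False$, or a disagreement, cover disjoint cases. For the loop rules \rn{bWhL}, \rn{bWhR}, \rn{bWhTT}, \rn{bWhFF}, \rn{bWhX}, the side conditions (attached to \rn{bWhR}) and the explicit test-agreement cases in \rn{bWhTT}, \rn{bWhFF}, \rn{bWhX} yield pairwise exclusion by case analysis on $\sigma(E)$, $\sigma'(E')$, and satisfaction of $\P$, $\P'$.

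\medskip

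For (b), given the unique applicable rule, I would proceed case-by-case. For \rn{bComL}, \rn{bComR}, \rn{bComR0} and their fault variants, the left (resp.\ right) unary step is determined up to isomorphism by Lemma~\ref{lem:determinacy}(b) applied to $\trans{\phi_0}$ (resp.\ $\trans{\phi_1}$), and the unused side is carried across unchanged using $\RprelT{\pi'}{\sigma'}{\sigma'_0}$; the composed refperms $\rho\supseteq\pi$, $\rho'\supseteq\pi'$ are obtained directly from the unary witnesses. For \rn{bSync}, the same argument applies simultaneously to both sides. For \rn{bCallS} and \rn{bCall0}, the requisite uniqueness of the successor configuration up to isomorphism is exactly state determinacy of $\phi_2(m)$ together with divergence determinacy (Def.~\ref{def:interpRel}), applied to the pair $\RprelT{\pi|\pi'}{(\sigma|\sigma')}{(\sigma_0|\sigma'_0)}$; the result extracts $\rho,\rho'$ from item (ii) of Def.~\ref{def:state-iso-rel}. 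For \rn{bIfTT}/\rn{bIfFF}, insensitivity of unary expression evaluation under isomorphism (Lemma~\ref{lem:insensi}) forces the two runs to take the same branch. For the loop rules, in addition to Lemma~\ref{lem:insensi} for $E,E'$, I need that the alignment guards $\P,\P'$ take the same truth value in isomorphic state pairs; this is where wellformedness bites: Def.~\ref{def:wfjudgeRel} requires alignment-guard assertions to be agreement-free, hence refperm-independent, and agreement-free relation formulas are preserved by state-pair isomorphism (a direct induction on formula structure, using Lemma~\ref{lem:insensi} for the left- and right-embedded unary formulas). For \rn{bLet}, \rn{bCallE} and the endmarkers, the environment update is independent of the state, giving $\nu=\nu_0$ and $\nu'=\nu'_0$ syntactically. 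For \rn{bVar}, uniqueness of the chosen local $w,w'$ comes from assumption (\ref{eq:varfresh}), since isomorphic states have the same variable domain.

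\medskip

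For (c), the argument is parallel to (b) but for the faulting transitions. The alignment faults \rn{bIfX} and \rn{bWhX} are preserved under state-pair isomorphism because $E,E'$ and $\P,\P'$ have invariant truth values (again Lemma~\ref{lem:insensi} and refperm-independence of the agreement-free alignment guards). The unary-induced fault rules \rn{bComLX}, \rn{bComRX}, \rn{bSyncX} are preserved by Lemma~\ref{lem:determinacy}(c) applied side-wise. The call-fault rule \rn{bCallX} is preserved by condition (i) of $\RprelTS{\pi|\pi'}{}{}$ in Def.~\ref{def:state-iso-rel}, which comes from state determinacy of $\phi_2(m)$.

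\medskip

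The main obstacle, I expect, is the loop case in part (b). The five \rn{bWh-} rules hinge on the interaction of the alignment guards $\P,\P'$ with the tests $E,E'$, and verifying mutual exclusion plus correct successor identification requires a careful tabulation of all sign combinations. The supporting lemma — that agreement-free relation formulas are invariant under state-pair isomorphism — is not stated explicitly earlier in the excerpt, so I would prove it in passing by a short induction on formula structure, exploiting that the only clauses that actually use the refperm are the agreement formulas $F\eqbi F'$ and $\Agr LE$ (excluded by hypothesis) and the $\later$ modality, which composes well with refperm extensions exactly as in the analogous step of the unary proof.
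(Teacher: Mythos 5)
Your proposal is correct and follows essentially the same route as the paper's proof, which is stated much more tersely: a rule-by-rule mutual-exclusion check for (a), lifting the unary quasi-determinacy Lemma~\ref{lem:determinacy} through the one-sided and sync'd rules, and appealing to the (fault/state/divergence) determinacy conditions of Def.~\ref{def:interpRel} for sync'd context calls. The one point you develop that the paper leaves implicit is the invariance of the alignment guards under state-pair isomorphism; the paper covers this with an unnamed appendix lemma transporting $\sigma|\sigma'\models_\rho\P$ along $\RprelT{\pi,\pi'}{}{}$ to the composed refperm $\pi^{-1};\rho;\pi'$, which subsumes your agreement-free special case.
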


\begin{proof}
Similar to the proof of Lemma~\ref{lem:determinacy}.  For the one-sided
biprogram transition rules like \rn{bComL}, the argument makes direct use of
Lemma~\ref{lem:determinacy}.  Explicit side conditions of rules \rn{bSync} and
\rn{bSyncX} ensure that $\syncbi{m()}$ transitions only by \rn{bCall},
\rn{bCallX}, or \rn{bCall0}.

A configuration for $\splitbi{C}{D}$ with $C\nequiv\skipc$ takes a step via 
either \rn{bComL} or \rn{bComLX} depending whether $C$ faults or steps; and these are mutually exclusive according to a result about the unary transition relation.
A configuration for $\splitbi{\skipc}{D}$ with $D\nequiv\skipc$ 
goes via either \rn{bComR0} or \rn{bComRX}, depending on whether $D$ faults or not.
A configuration for $\splitbir{C}{D}$ goes via \rn{bComR} or \rn{bComRX}.  
The slightly intricate formulation of the rules for bi-com is necessitated by the need for determinacy and liveness.

Similarly, the rules for bi-while in Fig.~\ref{fig:biprogTransU}
are formulated to be rule deterministic, e.g.,
\rn{bWhR} is only enabled if \rn{bWhL} is not.  
\end{proof}

\paragraph{Projection and embedding: between unary and biprogram traces}

It is convenient to classify the biprogram transition rules as follows.
Leaving aside \rn{bSeq} and \rn{bSeqX},
all the other biprogram rules apply to a non-sequence biprogram of some form.
Rules \rn{bComL} and \rn{bWhL} take \dt{left-only} steps, leaving the right side unchanged, whereas \rn{bComR}, \rn{bComR0}, and \rn{bWhR} take \dt{right-only} steps.
All the other rules are for \dt{both-sides} steps or faulting steps.

\lembitounary*
\begin{proof}
Part (a) is by case analysis of the biprogram transition rules.
For the rules \rn{bCallS} and \rn{bCallX},
observe that the condition (unary compatibility) ensures that the unary steps can be taken.
For rule \rn{bCall0}, the biprogram transition is a stutter,
with both $\configm{\Left{BB}}{\sigma}{\mu} = \configm{\Left{CC}}{\tau}{\nu}$
and $\configm{\Right{BB}}{\sigma}{\mu} = \configm{\Right{CC}}{\tau}{\nu}$.
Indeed, either the left or right step is in the transition relation (or both),
via the unary rule \rn{uCall0} for empty model, owing to Lemma~\ref{lem:emptyOutcomes}.

In all other cases, it is straightforward to check that the rule corresponds to a unary step on one or both sides, 
and in case it is a step on just one side the other side remains unchanged.
Note that it can happen that a step changes nothing: in the unary transition relation, this happens for empty model of a context call, e.g., biprogram step via \rn{bComL} using
unary transition \rn{uCall0}.

For part (b) the proof goes by induction on $T$ and case analysis on the rule by which the last step was taken.
Recall that traces are indexed from 0.  
The base case is $T$ comprised of a single configuration, $T_0$.
Let $U$ be $\Left{T_0}$, $V$ be $\Right{T_0}$, and let both $l$ and $r$ be the singleton mapping $\{ 0\mapsto 0 \}$.
For the induction step, suppose $T$ has length $n+1$ and let $S$ be the prefix including all but the last configuration $T_n$.  
By induction hypothesis we get $l,r,U,V$ such that $\Align(l,r,S,U,V)$.
There are three sub-cases, depending on whether the step from $T_{n-1}$ to $T_n$ is a
left-only step (rule \rn{bComL} or \rn{bWhL}), or right-only, or both sides.
In the case of left-only, let $U'$ be $U \Left{T_n}$, 
let $l'$ be $l\union \{ n\mapsto len(U) \}$,
and let $r'$ be $r\union \{ n\mapsto len(V)-1 \}$.
Then $\Align(l',r',T,U',V)$.  
The other two sub-cases are similar. 

Part (c) holds because one-sided steps are taken only by transition rules
\rn{bComL}, \rn{bComR}, \rn{bComR0}, \rn{bWhL}, and \rn{bWhR}, none of which are applicable to fully aligned programs.
\end{proof}


\begin{restatable}[trace embedding]{lemma}{lemunarytobi}
\label{lem:unary-to-bi}
\upshape
Suppose $\phi$ is a pre-model.
Let $\cfg$ be a biprogram configuration.
Let $U$ be a trace via $\phi_0$ from $\Left{\cfg}$, 
and $V$ via $\phi_1$ from $\Right{\cfg}$.
Then there is trace $T$ via $\phi$ from $\cfg$
and traces $W$ from $\Left{\cfg}$ and
$X$ from $\Right{\cfg}$
and $l,r$ with $\Align(l,r,T,W,X)$, such that either
\begin{list}{}{}
\item[(a)] $U\leq W$ and $V\leq X$ 
\item[(b)] $U\leq W$ and $X < V$ and $W$ faults next and so does $T$, 
\item[(c)] $V\leq X$ and $W<U$ and $X$ faults next and so does $T$,
\item[(d)] $W < U$ or $X < V$ and the last configuration of $T$ faults, 
via one of the rules \rn{bCallX}, \rn{bIfX}, or \rn{bWhX}, i.e.,
alignment fault.
\end{list}
\end{restatable}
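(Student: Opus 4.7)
The plan is to proceed by induction on $|U| + |V|$. The base case $|U| = |V| = 1$ is immediate: take $T$ to be the singleton sequence $\cfg$, $W = U$, $X = V$, with trivial alignments $l = r = \{0 \mapsto 0\}$, yielding case (a).

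For the inductive step, let $\cfg = \configr{CC}{\sigma}{\sigma'}{\mu}{\mu'}$. By Lemma~\ref{lem:Rdeterminacy}(a) the biprogram transition relation is rule-deterministic, so either $\cfg$ is terminal (in which case both projections are also terminal, forcing $|U| = |V| = 1$, contradicting the induction hypothesis) or exactly one transition rule applies. I would do a case analysis on this transition, grouped by how Lemma~\ref{lem:bi-to-unary}(a) classifies it as left-only, right-only, or both-sides.

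For \emph{left-only} non-fault rules (\rn{bComL}, \rn{bWhL}) the biprogram step reflects a single left unary transition. By determinacy of $\trans{\phi_0}$ (Lemma~\ref{lem:determinacy}(a)), if $|U| > 1$ this left step agrees with $U_0 \trans{\phi_0} U_1$; I take the biprogram transition and invoke the inductive hypothesis on the successor configuration with $U$ shortened by its first step, then prepend the current step to the resulting $T$ and adjust $l$, extending $W$ accordingly. The faulting variant \rn{bComLX} forces $|U| = 1$ (a trace never contains $\Fault$), so I return $T = \cfg$, $W = U$, $X = V_0$ for case (b). Right-only rules (\rn{bComR}, \rn{bComR0}, \rn{bWhR}, \rn{bComRX}) are handled symmetrically, producing case (c) on fault. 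For \emph{both-sides} non-fault rules (\rn{bSync}, \rn{bCallS}, \rn{bIfTT}, \rn{bIfFF}, \rn{bLet}, \rn{bVar}, \rn{bWhTT}, \rn{bWhFF}) both projections step, and when $|U| > 1$ and $|V| > 1$ I consume one step on each side and recurse. The alignment-fault rules \rn{bIfX}, \rn{bWhX}, and \rn{bCallX} directly give case (d) with $T = \cfg$. For \rn{bSyncX} the faulting unary side forces its trace to have length one, giving case (b) or (c).

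The main obstacle is the stuttering rule \rn{bCall0}, whose biprogram step does not change $\cfg$ so naive recursion fails to decrease the induction measure. Here I would use Lemma~\ref{lem:emptyOutcomes} to conclude that at least one of $\phi_0(m)(\sigma)$, $\phi_1(m)(\sigma')$ is empty, so at least one projected unary side must also be stuttering via \rn{uCall0}; pairing the biprogram stutter with a stutter on that side decreases $|U| + |V|$, licensing the induction. A secondary subtlety is the bookkeeping for $l, r$: each recursive invocation returns an aligned triple, and I must extend those alignments by one index—either on the left, the right, or both—depending on which category of transition was taken, and verify monotonicity and surjectivity are preserved. The boundary cases where one of $U$, $V$ is already at length one while the other demands further progress on the opposite side are what determine whether the returned outcome is (a), (b), (c), or (d).
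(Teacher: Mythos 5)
Your overall strategy is the one the paper itself uses: construct $T$, $W$, $X$ step by step, classifying each biprogram transition as left-only, right-only, or both-sides via rule-determinacy, using relational compatibility to handle sync'd calls and Lemma~\ref{lem:emptyOutcomes} for \rn{bCall0}. The only real difference is how progress is measured — the paper runs an explicit iterative construction with the invariant $\Align(l,r,T,W,X)$, $U\leq W\lor W\leq U$, $V\leq X\lor X\leq V$, terminated by a measure that includes an extra term for whether the active command is a bi-com (or, as it notes, a lexicographic order) — and that is exactly where your proof has a genuine gap.

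The problem is that the construction must sometimes extend $W$ (or $X$) strictly \emph{beyond} the given $U$ (or $V$). Concretely, suppose the active biprogram is $\splitbi{C}{C'}$ with $C\not\equiv\skipc$, $|U|=1$, $|V|>1$, and no fault in sight. By the dovetailing semantics, \rn{bComR} is only enabled on the r-bi-com form, so a \rn{bComL} step is forced before any right step can cover $V_1$. That forced left step consumes nothing from $U$ or $V$: your recursive call would be on the successor configuration with the same $|U|+|V|$, so the induction on $|U|+|V|$ is not well-founded. You gesture at the ``boundary cases where one of $U$, $V$ is already at length one'' but treat them as merely selecting among (a)--(d); in this situation none of (a)--(d) is yet available (there is no fault and $V\not\leq X$), so the construction must continue while your measure is stuck. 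This is precisely what the paper's ``active command is a bi-com'' term (with its factor of $2$) is there to handle. The fix is routine — use a lexicographic measure on the uncovered portions of $U$ and $V$ together with a bi-com flag, or bundle the forced left step with the following right step — but as written the induction does not go through. A smaller point: rule-determinacy does not make the unary successor unique (allocation and context calls are only quasi-deterministic), so in the one-sided cases you must \emph{choose} the biprogram step whose projection matches the given $U_1$ rather than arguing that it ``agrees'' automatically; the paper is explicit about this choice.
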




\begin{proof}
First we make some preliminary observations about the possibilities for a single step.
Let $\cfg$ be $\configr{CC}{\sigma}{\sigma'}{\mu}{\mu'}$ such that $\cfg$
does not fault next and $CC\not\equiv \syncbi{\skipc}$ so there is a next step.
By rule determinacy (Lemma~\ref{lem:Rdeterminacy}(a)),
there is a unique applicable transition rule.
That rule may be a left-only, right-only, or both-sides step,
as per Lemma~\ref{lem:bi-to-unary}(a).  
For all but one of the biprogram transition rules, the form of the rule determines whether 
its transitions are left-, right-, or both-sides.
The one exception is \rn{bCall0}: in case of a transition by this rule, at least one of the unary parts can take a transition, owing to Lemma~\ref{lem:emptyOutcomes}, 
but whether it is left, right, or both depends on the unary models and the states.

For left-only transitions, the applicable rules are \rn{bComL} and \rn{bWhL}. 
In case of \rn{bWhL}, $\Left{CC}$ is a loop with test true in $\sigma$ 
and $\configm{\Left{CC}}{\sigma}{\mu}$ takes a deterministic step, unrolling the loop and leaving the state and environment unchanged.
In case of \rn{bComL}, $CC\equiv \splitbi{C}{C'}$ for some $C,C'$ with $C\not\equiv\skipc$, and $\configm{C}{\sigma}{\mu}$ can step via $\trans{\phi_0}$ 
to some $\configm{D}{\tau}{\nu}$ where $\tau$ may be nondeterministically chosen in case $C$ is an allocation or a context call.  
(If $\nu$ differs from $\mu$ it is because $C$ is a let command and 
its transition is deterministic.)
For any choice of $\tau$, rule \rn{bComL} allows 
$\configr{\splitbi{C}{C'}}{\sigma}{\sigma'}{\mu}{\mu'} 
\biTrans{\phi}
\configr{\splitbir{D}{C'}}{\tau}{\sigma'}{\nu}{\mu'}$
(or $\splitbi{D}{\skipc}$ if $C'$ is $\skipc$).
For right-only transitions, the applicable rules are \rn{bComR},
\rn{bComR0}, and \rn{bWhR}, which are similar to the left-only ones.  

The remaining transitions are both-sides. 
By cases on the many applicable both-sides rules, we find in each case that: 
(i) the left and right projections have successors under $\trans{\phi_0},\trans{\phi_1}$ and 
(ii) if
$\configm{\Left{CC}}{\sigma}{\mu} \trans{\phi_0} \configm{D}{\tau}{\nu} $ and
$\configm{\Right{CC}}{\sigma'}{\mu'} \trans{\phi_1} \configm{D'}{\tau'}{\nu'} $ 
then there is some $DD$ with $\Left{DD}\equiv D$, 
$\Right{DD}\equiv D'$,  and
$\configr{CC}{\sigma}{\sigma'}{\mu}{\mu'} \biTrans{\phi}
\configr{DD}{\tau}{\tau'}{\nu}{\nu'}$.
Note that, as in the one-sided cases, $\tau$ and/or $\tau'$ may be nondeterministically chosen (e.g., in the case of \rn{bSync}), and any such choices can also be used for the biprogram transition.
In case the active command of $\cfg$ is a sync'd conditional or loop,
the applicable rules include ones like \rn{bIfTT} that have corresponding unary transitions, but also the rules \rn{bIfX} and \rn{bWhX} in which the biprogram faults although the left and right projections can continue.  

For a both-sides step by rule \rn{bCallS} we rely on condition (relational compatibility)
in Def.~\ref{def:interpRel} of pre-model, 
to ensure that the two unary results $\tau,\tau'$ can be combined to an outcome $\tau|\tau'$ from $\phi_2(m)$---since otherwise the biprogram configuration faults via \rn{bCallX}, contrary to the hypothesis of our preliminary observation above that $\cfg$ does not fault.

To prove the lemma, we construct $T,W,X$ by iterating the preceding observations, choosing the left and right unary steps in accord with $U$ and $V$, unless and until those traces are exhausted.  If needed, $W$ (resp.\ $X$) is extended beyond $U$ (resp.\ $V$).

Let us describe the construction in more detail, as an iterative procedure in which 
$l,r,W,X,T$ are treated as mutable variables, and there is an additional variable $k$.  
Initialize $W,X,T$ to the singleton traces $\Left{\cfg}$, $\Right{\cfg}$, and $\cfg$ respectively. Initially let $k:=0$. 
Let $l$ and $r$ both be the singleton mapping $\{ 0\mapsto 0\}$.
The loop maintains this invariant: 
\[ 
\begin{array}{c}
\Align(l,r,T,W,X) \mbox{ and } (U\leq W \lor  W\leq U) 
                    \mbox{ and } (V\leq X \lor X\leq V)
\\
len(T) = k+1 \mbox{ and } len(W) = l(k)+1 \mbox{ and } len(X) = r(k)+1 
\end{array}
\]
Thus the last configurations of $T,W,X$ are indexed $k,l(k),r(k)$ respectively.

$\bullet$ While $(U\nleq W \mbox{ or } V\nleq X)$ and neither $W$, $X$, nor $T$ faults next, do the following updates, defined by cases on whether $T_k$ is left-only, right-only, or both-sides.

For left-only: update $l,r,W,T$ as follows:
\begin{itemize}
\item set $l(k+1):=l(k)+1$, $r(k+1):=r(k)$
\item if $W<U$, set $W:=W\cdot U_{l(k)}$; otherwise extend $W$ by a choosen successor of $W_{l(k)}$
\item set $T:=T\cdot \cfg'$ where $\cfg'$ is determined by 
the configuration added to $W$, in accord with the preliminary observations above.
Note in particular that $T_k$ does not fault due to failed alignment condition, i.e.,
by rules \rn{bIfX}, \rn{bCallX}, or \rn{bWhX}, because if it does the loop terminates.
\end{itemize}

For right-only: update $l,r,X,T$ as follows:
\begin{itemize}
\item set $l(k+1):=l(k)$, $r(k+1):=r(k)+1$
\item set $X:=X\cdot V_{r(k)}$ if $X<V$, otherwise extend $X$ with a choosen successor of $X_{r(k)}$
\item set $T:=T\cdot \cfg'$ where $\cfg'$ is determined by the configuration added to $X$.
\end{itemize}
 
For both-sides steps, 
set $l(k+1):=l(k)+1$, $r(k+1):=r(k)+1$, and update $W,X,T$ similarly to the preceding cases,
in accord with the preliminary observations.  

To see that the invariants hold following these updates, note that 
the invariant implies $\Left{T_k} = W_{l(k)}$ and $\Right{T_k} = X_{r(k)}$.
Then by construction 
we get a match for the new configuration: $\Left{T_{k+1}} = W_{l(k+1)}$ and $\Right{T_{k+1}} = X_{r(k+1)}$.

The loop terminates, because each iteration decreases the natural number 
\[ (2\times(len(W)\stackrel{.}{-}len(U))+(len(X)\stackrel{.}{-}len(V)) +
(\mifthenelse{1}{\mbox{``active cmd is bi-com''}}{0})
\] 
Here $n\stackrel{.}{-} m$ means subtraction but 0 if $m>n$.
The term $(\mifthenelse{1}{\mbox{``active cmd is bi-com''}}{0})$ is needed in case 
$len(W)>len(U)$ and a left-only step must be taken before the next step happens on the right.
The factor $2\times$ compensates for that term.  (Alternatively, a lexicographic order can be used.)

Now we can prove the lemma. 
If the loop terminates because condition 
$U\nleq W \lor V\nleq X$ is false then we have condition (a) of the Lemma.
If it terminates because $W$ faults next then we have (b),
using invariants $U\leq W\lor W\leq U$ and $V\leq X \lor X\leq V$,
noting that we cannot have $W<U$ if $W$ faults next,
owing to fault determinacy of unary transitions 
(a corollary mentioned following Lemma~\ref{lem:determinacy}).  
Similarly, we get (c) if it terminates because $X$ faults next.
If it terminates because $T$ faults, but the other cases do not hold, then we have (d) owing to the invariants $U\leq W \lor  W\leq U$ and $V\leq X \lor X\leq V$.
\end{proof}


\begin{definition}[\textbf{denotation of biprogram} \ghostbox{$\means{\Gamma|\Gamma'\proves CC}$}]
Suppose $CC$ is wf in $\Gamma|\Gamma'$ and $\phi$ is a pre-model 
that includes all methods called in $C$.
Let $\means{\Gamma|\Gamma'\proves CC}_\phi$
to be the function of type
$\means{\Gamma}\times\means{\Gamma'}\to\powerset(\means{\Gamma}\times\means{\Gamma'})\union\{\Fault\}$
defined by 
\[
\begin{array}{lcl}
\means{\Gamma|\Gamma'\proves CC}_\phi(\sigma|\sigma')
  & \eqdef & \{(\tau|\tau') \mid   
  \configr{CC}{\sigma}{\sigma'}{\_}{\_} \biTranStar{\phi} \configr{\syncbi{\skipc}}{\tau}{\tau'}{\_}{\_} \} 
\\
& & \union\; 
( \{ \Fault \} \mbox{ if } \configr{CC}{\sigma}{\sigma'}{\_}{\_}\biTranStar{\phi} \Fault 
               \mbox{ else } \emptyset )
\end{array}
\]
\end{definition}

Given a pre-model $\phi$, biprogram $CC$, and relational formula $\R$, 
and method  name $m$ not called in $CC$ and not in $\dom(\phi)$,
one can extend the bi-model $\phi_2$ by
\begin{equation}\label{eq:denotBiprog}
\dot{\phi}_2(m)(\sigma|\sigma') \eqdef 
(\mifthenelse{ \{ \Fault \} }{ \neg\some{\pi}{\sigma|\sigma'\models_\pi \R} 
                           }{ \means{CC}_\phi(\sigma|\sigma') }) 
\end{equation}
To be precise, if precondition $\R$ has spec-only variables $\ol{s},\ol{s}'$ on left and right,
the condition should say there are no values for these that satisfy:
$\neg\some{\pi,\ol{v},\ol{v}'}{\sigma|\sigma'\models_\pi \subst{\R}{\ol{s},\ol{s}'}{\ol{v},\ol{v}'}}$.

\begin{lemma}[denoted relational model]
\label{lem:denotBiprog}
\upshape
(i) Suppose $\phi$ is a relational pre-model that includes all the methods in context calls in $CC$, and suppose $m$ is not in $\phi$.
Suppose $\R\imp \leftF{R}\land\rightF{R'}$ is valid.
Let $\dot{\phi}$ extend $\phi$ with
$\dot{\phi}_2(m)$ given by (\ref{eq:denotBiprog}),
$\dot{\phi}_0(m)$ given by Equation~(\ref{eq:denotComm}) for $\Left{CC}, R$, and 
$\dot{\phi}_1(m)$ given by (\ref{eq:denotComm}) for $\Right{CC}, R'$.
Then $(\dot{\phi}_0,\dot{\phi}_1,\dot{\phi}_2)$ is a pre-model.

(ii) Suppose, in addition , that $\Phi \models CC:\rflowty{\R}{\S}{\effe|\effe'}$.
Suppose $\dot{\Phi}$ extends $\Phi$ with 
$\dot{\Phi}_0(m) = \flowty{R}{S}{\effe}$,
$\dot{\Phi}_1(m) = \flowty{R'}{S'}{\effe'}$, and
$\dot{\Phi}_2(m) = \rflowty{\R}{\S}{\effe|\effe'}$
such that $\dot{\Phi}$ is wf.
If $\dot{\phi}_0(m)$ and $\dot{\phi}_1(m)$ 
are models for $\flowty{R}{S}{\effe}$  and 
$\flowty{R'}{S'}{\effe'}$ respectively, 
then $\dot{\phi}$ is a $\dot{\Phi}$-model.
\end{lemma}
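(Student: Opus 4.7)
Both parts proceed by unpacking Defs.~\ref{def:preinterp},~\ref{def:interpRel},~and~\ref{def:ctxinterpRel} and verifying each clause using quasi-determinacy (Lemmas~\ref{lem:determinacy} and~\ref{lem:Rdeterminacy}) together with the projection and embedding lemmas (Lemmas~\ref{lem:bi-to-unary} and~\ref{lem:unary-to-bi}). For part (i), the unary pre-model conditions for $\dot{\phi}_0$ and $\dot{\phi}_1$ are direct from the construction (\ref{eq:denotComm}), which was already noted (immediately after that equation) to yield a pre-model. The three determinacy conditions on the bi-model $\dot{\phi}_2(m)$ all reduce to Lemma~\ref{lem:Rdeterminacy}: fault determinacy holds because the case split in (\ref{eq:denotBiprog}) is exclusive, and $\means{CC}_\phi(\sigma|\sigma')$ is (by quasi-determinacy) either $\{\Fault\}$, $\emptyset$, or a nonempty set of terminated pairs; state and divergence determinacy transfer along state-pair isomorphism, using the relational extension of Lemma~\ref{lem:insensi} showing that satisfaction of $\R$ is insensitive to isomorphic renaming.

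The three compatibility conditions of Def.~\ref{def:interpRel} form the substance of part (i). For \emph{unary compatibility}, if $\tau|\tau' \in \means{CC}_\phi(\sigma|\sigma')$, then Lemma~\ref{lem:bi-to-unary}(b) projects the witnessing terminating biprogram trace to unary traces terminating at $\tau$ and $\tau'$; combined with $\R \imp \leftF{R} \land \rightF{R'}$ (which ensures the unary preconditions hold whenever the relational one does), this yields $\tau \in \dot{\phi}_0(m)(\sigma)$ and $\tau' \in \dot{\phi}_1(m)(\sigma')$. For \emph{relational compatibility}, given unary terminated traces witnessing $\tau \in \dot{\phi}_0(m)(\sigma)$ and $\tau' \in \dot{\phi}_1(m)(\sigma')$, apply trace embedding (Lemma~\ref{lem:unary-to-bi}): case (a) yields a biprogram trace whose one-sided projections extend the given unary traces, and unary quasi-determinacy together with Lemma~\ref{lem:Rdeterminacy}(b) forces the biprogram trace to terminate at a pair isomorphic to $(\tau|\tau')$, placing that pair in $\dot{\phi}_2(m)(\sigma|\sigma')$; cases (b)--(d) all produce $\Fault \in \dot{\phi}_2(m)(\sigma|\sigma')$. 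For \emph{fault compatibility}, a unary fault either arises from precondition violation (in which case $\R$ fails by the assumed implication, so $\dot{\phi}_2(m)(\sigma|\sigma') = \{\Fault\}$ directly), or from a faulting unary trace of $\Left{CC}$ or $\Right{CC}$; in the latter case Lemma~\ref{lem:unary-to-bi} (cases (b)--(d)) delivers a biprogram fault as well.

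Part (ii) is short once part (i) is established. Conditions (a) and (b) of Def.~\ref{def:ctxinterpRel} for $\dot{\phi}_2(m)$ follow from the validity of $\Phi \models CC : \rflowty{\R}{\S}{\effe|\effe'}$: Safety says $\means{CC}_\phi(\sigma|\sigma')$ contains $\Fault$ exactly when no refperm and snapshot values witness $\R$, while Post and Write give the postcondition and frame behavior at terminated pairs. The unary spec conformance for $\dot{\phi}_0(m)$ and $\dot{\phi}_1(m)$ is assumed outright in the hypothesis of part (ii); other methods inherit their spec conformance from the standing assumption that $\phi$ is a $\Phi$-model.

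The chief technical obstacle is \emph{relational compatibility}: Lemma~\ref{lem:unary-to-bi} yields only \emph{some} biprogram trace, not necessarily maximal, so concluding that $(\tau|\tau')$ itself (or a refperm-equivalent pair) actually appears in $\means{CC}_\phi(\sigma|\sigma')$ requires combining the embedding with unary quasi-determinacy on both sides to extend the trace to $\syncbi{\skipc}$, and then invoking the isomorphism clause of Lemma~\ref{lem:Rdeterminacy}(b) to identify the resulting outcome with $(\tau|\tau')$ up to a refperm extension. The remaining clauses are routine once the lemma statements are instantiated.
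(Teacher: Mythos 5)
Your proposal is correct and follows essentially the same route as the paper's proof: the determinacy clauses come from quasi-determinacy (Lemma~\ref{lem:Rdeterminacy}), unary compatibility from the projection lemma plus validity of $\R\imp\leftF{R}\land\rightF{R'}$, relational and fault compatibility from the embedding lemma, and part (ii) directly from the validity of the judgment for $CC$. One small remark: in your "chief obstacle" paragraph the phrase "or a refperm-equivalent pair" would not actually suffice for relational compatibility (Def.~\ref{def:interpRel} demands $(\tau|\tau')$ itself), but no such identification step is needed---since the given unary traces are terminated, case (a) of Lemma~\ref{lem:unary-to-bi} forces $W=U$ and $X=V$, so the biprogram trace ends exactly at $\configr{\syncbi{\skipc}}{\tau}{\tau'}{\_}{\_}$.
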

\begin{proof}
(i) To show $\dot{\phi}_2(m)$ is a pre-model (Def.~\ref{def:interpRel}),
the fault, state, and divergence determinacy conditions follow from
quasi-determinacy Lemma~\ref{lem:Rdeterminacy} (cf.\ remark following 
projection Lemma~\ref{lem:bi-to-unary}).

Next we show unary compatibility, i.e.,
$\tau|\tau' \in \dot{\phi}_2(m)(\sigma|\sigma')$
implies 
$\tau \in \dot{\phi}_0(m)(\sigma)$.
and 
$\tau' \in \dot{\phi}_1(m)(\sigma')$.
Now $\tau|\tau' \in \dot{\phi}_2(m)(\sigma|\sigma')$
iff 
$\configr{CC}{\sigma}{\sigma'}{\_}{\_} \biTranStar{\phi} \configr{\syncbi{\skipc}}{\tau}{\tau'}{\_}{\_}$
and by projection Lemma~\ref{lem:bi-to-unary} that implies 
$\configm{\Left{CC}}{\sigma}{\_} 
\tranStar{\phi_0} 
\configm{\skipc}{\tau}{\_}$
whence 
$\tau\in\dot{\phi}_0(m)(\sigma)$ 
provided that $\sigma\models R$
(\emph{mut.\ mut.} for the right side).
Since $\tau|\tau' \in \dot{\phi}_2(m)(\sigma|\sigma')$,
there is some $\pi$ for which $(\sigma|\sigma')$ satisfies $\R$, 
and by validity of $\R\imp \leftF{R}\land\rightF{R'}$ 
this implies $\sigma\models R$.
Similarly for the right side.

For fault compatibility, 
suppose $\Fault\in\dot{\phi}_0(m)(\sigma)$ or $\Fault\in\dot{\phi}_1(m)(\sigma')$.
Then either $\sigma\not\models R$ or $\sigma'\not\models R'$, by definitions,
whence $\sigma|\sigma'\not\models\R$ owing to 
validity of $\R\imp \leftF{R}\land\rightF{R'}$.
So $\Fault \in \dot{\phi}_2(m)(\sigma|\sigma')$ as required.

To show relational compatibility, suppose
$\tau\in\dot{\phi}_0(m)(\sigma)$ and 
$\tau'\in\dot{\phi}_1(m)(\sigma')$.
We need $\dot{\phi}_2(m)$ to contain either
$\Fault$ or $(\tau|\tau')$.
If there is no $\pi$ with $\sigma|\sigma'\models_\pi \R$ then 
$\dot{\phi}_2(m)$ is $\{ \Fault \}$ and we are done.
Otherwise, 
from $\tau\in\dot{\phi}_0(m)(\sigma)$ and 
$\tau'\in\dot{\phi}_1(m)(\sigma')$ we have
traces 
$\configm{C}{\sigma}{\_}
\tranStar{\phi_0}
\configm{\skipc}{\tau}{\_}$
and 
$\configm{C'}{\sigma'}{\_}
\tranStar{\phi_1}
\configm{\skipc}{\tau'}{\_}$.
By embedding Lemma~\ref{lem:unary-to-bi}, 
we get that either 
$\configr{CC}{\sigma}{\sigma'}{\_}{\_} \biTranStar{\phi} \configr{\syncbi{\skipc}}{\tau}{\tau'}{\_}{\_}$ 
or else 
$\configr{CC}{\sigma}{\sigma'}{\_}{\_}$ faults due to alignment conditions.
Either way we are done showing that 
$(\dot{\phi}_0,\dot{\phi}_1,\dot{\phi}_2)$ is a pre-model.

(ii) Suppose that $\Phi \models CC:\rflowty{\R}{\S}{\effe|\effe'}$.
The conditions of Def.~\ref{def:ctxinterpRel}
for $\dot{\phi}_2(m)$ with respect to $\rflowty{\R}{\S}{\effe}$
are direct consequences of $\Phi \models CC:\rflowty{\R}{\S}{\effe|\effe'}$
and (\ref{eq:denotBiprog}).
\end{proof}

\thmbiprogramsoundness*
\begin{proof}
Let $U,V$ be the traces and let $T$ be the biprogram trace given by embedding Lemma~\ref{lem:unary-to-bi}.  The judgment for $CC$ is applicable to $T$, 
so cases (b), (c), and (d) in the Lemma are ruled out---$T$ cannot fault.
The remaining case is (a), that is, $T$ covers every step of $U$ and $V$.
If $U$ and $V$ are terminated then so is $T$, whence the postcondition holds,
and the Write condition holds, by validity of the judgment.
Regardless of termination, we also get the unary Safety and Encap conditions for $U$ and $V$,
by definitions since every step is covered by $T$.
\end{proof}

\section{Appendix: Relational logic and its soundness (re Sect.~\ref{sec:rellog})}

\thmsound*

Sect.~\ref{sec:app:rules} presents relational proof rules omitted from the body of the paper.
Sect.~\ref{sec:app:lemrloceq} proves the crucial lockstep alignment lemma.
The soundness proofs comprise subsections~\ref{sec:rLocEqSound}--\ref{sec:rWeaveSound};
these are largely independent and need not be read in any particular order. 

\subsection{Additional rules}\label{sec:app:rules}

\begin{figure}[t!]
\begin{footnotesize}
\begin{mathpar}
\inferrule*[left=rEmbS]{ 
   \Phi_0\HPflowtr{}{}{P}{A}{Q}{\eff}\\
   \Phi_1\HPflowtr{}{}{P'}{A}{Q'}{\eff'}\\
}{ \Phi\rHPflowtr{}{}{\leftF{P}\land\rightF{P'}}{\syncbi{A}}{\leftF{Q}\land\rightF{Q'}}{\eff|\eff'}
}

\inferrule*[left=rSeq]
{\Phi\rHPflowtr{}{}{\P}{CC_1}{\P_1}{\eff_1|\eff'_1} \\ 
\Phi\rHPflowtr{}{}{\P_1}{CC_2}{\Q}{\eff_2|\eff'_2} \\
\eff_2 \mbox{ is } \Left{\P}/\eff_1\mbox{-immune} \\
\eff'_2 \mbox{ is } \Right{P}/\eff'_1\mbox{-immune} 
}
{ \Phi\rHPflowtr{}{}{\P}{\seqc{CC_1}{CC_2}}{\Q}{\eff_1, \eff_2|\eff'_1,\eff'_2}
}

\rulerIf

\rulerWhile

\inferrule*[left=rIf4]
{
\Phi \rHPflowtr{}{M}{\P\land\leftF{E}\land\rightF{E'}}{\splitbi{C}{C'}}{\Q}{\eff|\eff'} \\
\Phi \rHPflowtr{}{M}{\P\land\leftF{E}\land\rightF{\neg E'}}{\splitbi{C}{D'}}{\Q}{\eff|\eff'} \\
\Phi \rHPflowtr{}{M}{\P\land\leftF{\neg E}\land\rightF{E'}}{\splitbi{D}{C'}}{\Q}{\eff|\eff'} \\
\Phi \rHPflowtr{}{M}{\P\land\leftF{\neg E}\land\rightF{\neg E'}}{\splitbi{D}{D'}}{\Q}{\eff|\eff'} \\
\delta=\unioneff{N\in\Phi,N\neq M}{\bnd(N)} \\
\ind{\delta}{\rTow(\ftpt(E))} \\
\ind{\delta}{\rTow(\ftpt(E'))} 
}{
\Phi\rHPflowtr{}{M}{\P}{\Splitbi{\ifc{E}{C}{D}}{\ifc{E'}{C'}{D'}}}{\Q}{\eff,\ftpt(E)|\eff',\ftpt(E')}
}

\inferrule*[left=rVar]
{ \Phi \proves^{\Gamma,x\scol T|\Gamma',x'\scol T'} 
          CC: \rflowty{\P}{\Q}{\eff|\eff'} 
}{ \Phi \proves^{\Gamma|\Gamma'} \varblockbi{x\scol T\smallSplitSym x'\scol T'}{CC} :
         \rflowty{\P\land\leftF{x=\Default{T}}\land\rightF{x'=\Default{T'}}}{\Q}{\eff|\eff'} 
}

\end{mathpar}
\end{footnotesize}
\caption{Relational proof rules omitted from Fig.~\ref{fig:proofrulesR}.
}
\label{fig:proofrulesRapp}
\end{figure}

Figure~\ref{fig:proofrulesRapp} presents the proof rules omitted in the body of the article.

Rule \rn{rIf} is typical of relational Hoare logics, with the addition of side conditions to ensure 
encapsulation.
Similarly, rules \rn{rSeq} and \rn{rWhile} have the same immunity conditions as their unary counterparts.
Rules \rn{rWhile} and \rn{rSeq} are slightly simplified from the general rules, for clarity.
The general rules should include an initial snapshot $r=\lloc$,
and region $H$ and field list $\ol{f}$,
with conditions to ensure that $H$ contains only freshly allocated objects so writes of $H\Img \ol{f}$ can be omitted from the frame condition.
This caters for writes to locations allocated in the first command of a sequence,
or previous iterations of a loop, just as it is done in the unary \rn{Seq} and \rn{While} rules (Fig.~\ref{fig:proofrules}).
(The details are justified in \RLI, though in \RLI\ the rules are slightly more succinct owing to use of freshness effect notation.)



\begin{remark}
\upshape
As in the unary \rn{While}, the frame condition in \rn{rWhile} needs to include the footprint of the loop tests ($\ftpt(E)$, $\ftpt(E')$) as the behavior depends on them.
Given that the alignment guards $\P$ and $\P'$ influence the bi-while transitions,
one may expect that their footprints should also be included.
But the dependency of r-respect (Encap) is about execution on one side.
The value of $E$ (resp.\ $E'$) determines the control state (i.e., unfold the loop body or terminate) at the unary level.
By contrast, the value of $\P$ (resp.\ $\P'$) determines the biprogram control state.
This is reflected in the unary control state, but during a one-sided iteration the other side stutters; and stuttering transitions are removed (by projection, see Lemma~\ref{lem:bi-to-unary}) 
according to the definition of Encap in Def.~\ref{def:validR}. 
\qed\end{remark}

\begin{remark}
\upshape
Rule \rn{rWhile} can be slightly strengthened to take into account that in our semantics, 
to ensure quasi-determinacy, a right iteration only happens when the left guard or test is false.
We prefer the more symmetric phrasing of the rule: what matters is that one-sided 
executions under their designated alignment guard maintain the invariant.
The deterministic scheduling is a technical artifact, just like the specific details of the dovetailed execution of the bi-com construct are not important for reasoning.  
\qed\end{remark}

\subsection{Proof of lockstep alignment lemma}\label{sec:app:lemrloceq}

\lemsnap*
\begin{proof}
\begin{sloppypar}
Assume $\tau\models\snap(\eff)$ and $\tau\allowTo\upsilon\models\eff$.
The equality
$\wlocs(\tau,\eff)\setminus\rlocs(\upsilon,\delta^\oplus)=
\rlocs(\upsilon,\Asnap(\eff)\setminus\delta)$
is between sets of locations, i.e., variables and heap locations.
We consider the two kinds of location in turn.
\end{sloppypar}

For variables, we have 
$x\in\wlocs(\tau,\eff)\setminus\rlocs(\upsilon,\delta^\oplus)$
iff $\wri{x}$ is in $\eff$ and $\rd{x}$ is not in $\delta^\oplus$,
by definitions.
On the other hand, by definition of $\Asnap$, we have 
$x\in\rlocs(\upsilon,\Asnap(\eff)\setminus\delta)$
iff $\rd{x}$ is not in $\delta$ and $\wri{x}$ is in $\eff$ and $x\nequiv\lloc$.
The conditions are equivalent. 

For a heap locations, w.l.o.g.\ we assume $\eff$ and $\delta$ are in normal form
and have exactly one read and one write effect for each field.
We are only concerned with writes in $\eff$ and reads in $\delta$.
Consider any field name $f$ and suppose $\eff$ contains $\wri{G\Img f}$ 
and $\delta$ contains $\rd{H\Img f}$ for some $G,H$.   
Now for location $o.f$ we have
\[\begin{array}{ll}
     & o.f \in\wlocs(\tau,\eff)\setminus\rlocs(\upsilon,\delta^\oplus) \\
\iff & o \in \tau(G) \setminus \upsilon(H) 
       \quad\mbox{by defs $\wlocs,\rlocs$ and normal form}\\
\iff & o \in \tau(s_{G,f}) \setminus \upsilon(H) \quad
     \mbox{by $\tau\models\snap(\eff)$ we have $\tau(s_{G,f})=\tau(G)$} \\
\iff & o \in \upsilon(s_{G,f}) \setminus \upsilon(H) \quad
     \mbox{by $\tau\allowTo\upsilon\models\eff$ and $\wri{s_{G,f}}\notin\eff$ have $\tau(s_{G,f})=\upsilon(s_{G,f})$ } \\
\iff & o \in \upsilon(s_{G,f}\setminus H) \quad\mbox{by semantics of subtraction}
\end{array}
\]
On the other hand,
\[\begin{array}{ll}
     & o.f\in\rlocs(\upsilon,\Asnap(\eff)\setminus\delta) \\
\iff & o.f\in\rlocs(\upsilon, (\rd{s_{G,f}}\Img f \setminus\rd{H\Img f})) 
     \quad\mbox{by def $\Asnap$ and assumption about $G,H$}\\
\iff & o.f\in\rlocs(\upsilon, \rd{(s_{G,f} \setminus H)\Img f}) 
     \quad\mbox{by effect subtraction} \\
\iff & o \in \upsilon(s_{G,f}\setminus H) \quad \mbox{by def $\rlocs$} 
\end{array}
\]
The conditions are equivalent.
\end{proof}

\lemrloceq* 
\begin{proof}
As usual write $\hat{\sigma},\hat{\sigma}'$ for the extensions of $\sigma,\sigma'$ for the spec only variables of the precondition, as per (ii).

We show that the conditions (v--vii) hold at every step within $T$,
by induction on steps.\footnote{We are glossing over the local variables introduced by local blocks.
To be precise, the initial states are both for $\Gamma$ and have no extra variables.  
The Lemma should have additional conclusion that $\Vars{\tau}=\Vars{\tau'}$,
which becomes part of the induction hypothesis,
to account for possible addition of locals,
which will be in $\freshLocs$.
}
One might expect that the lemma could be simplified to simply say the conditions hold at every reachable step, without mentioning traces, 
but we are assuming rather than proving that the r-safety and r-respect conditions hold, 
so the present formulation seems more clear.

\textbf{Base Case.} For initial configuration $\configr{\Syncbi{C}}{\sigma}{\sigma'}{\_}{\_}$, we have 
$\freshLocs(\sigma,\sigma)=\emptyset=\freshLocs(\sigma',\sigma')$ and 
$\written(\sigma,\sigma)=\emptyset=\written(\sigma',\sigma')$. 
From hypothesis (ii) of the Lemma, 
and the semantics of the agreement formulas in the precondition,
we get $\agree(\sigma,\sigma',\pi,\eff^\leftarrow_\delta)$ and
$\agree(\sigma',\sigma,\pi^{-1},\eff^\leftarrow_\delta)$. 
Unfolding definitions, we have proved the claim with $\rho,\tau,\tau':=\pi,\sigma,\sigma'$.

\textbf{Induction case.}
Suppose $\configr{\Syncbi{C}}{\sigma}{\sigma'}{\_}{\_} \biTranStar{\phi} 
\configr{BB}{\tau}{\tau'}{\mu}{\mu'} \biTrans{\phi} 
\configr{DD}{\upsilon}{\upsilon'}{\nu}{\nu'}$
as a prefix of $T$.
By induction hypothesis we have $\mu=\mu'$, 
$BB=\Syncbi{B}$ for some $B$ and 
for some $\rho\supseteq\pi$ we have 
\begin{equation}\label{eq:indhyp}
\begin{array}{c}
\Lagree(\tau,\tau',\rho, (\freshLocs(\sigma,\tau) \union \rlocs(\sigma,\eff)\union 
\written(\sigma,\tau))\setminus\rlocs(\tau,\delta^\oplus))\\
\Lagree(\tau',\tau,\rho^{-1}, (\freshLocs(\sigma',\tau') \union \rlocs(\sigma',\eff)\union 
\written(\sigma',\tau'))\setminus\rlocs(\tau',\delta^\oplus))\\
\end{array}
\end{equation}
Without loss of generality, we assume that $\Syncbi{B}\equiv\Syncbi{B_0};\Syncbi{B_1}$, where 
$\Active(B)\equiv B_0$. 
(Recall by Lemma~\ref{lem:activeBi} that $\Active{\Syncbi{B}} = \Syncbi{\Active{B}}$.)

To find $D$ and an extension of $\rho$, such that the agreements for $\upsilon|\upsilon'$ and other conditions hold for the step 
$\configr{BB}{\tau}{\tau'}{\mu}{\mu'} \biTrans{\phi} 
\configr{DD}{\upsilon}{\upsilon'}{\nu}{\nu'}$,
we go by cases on the possible transition rules.
The fault rules are not relevant.

\underline{Cases \rn{bComL}, \rn{bComR}, \rn{bComR0}, \rn{bWhL}, and \rn{bWhR}} are not applicable to $\Syncbi{B}$.


\underline{Case \rn{bSync}}.
So $B_0$ is an atomic command other than a method call and there are 
unary transitions $\configm{B_0}{\tau}{\mu}\trans{\phi_0}\configm{\skipc}{\upsilon}{\mu}$
and $\configm{B_0}{\tau'}{\mu'}\trans{\phi_1}\configm{\skipc}{\upsilon'}{\mu'}$.
The successor configuration has $DD\equiv\Syncbi{B_1}$ and $\nu=\mu=\mu'=\nu'$.
Because the step is not a method call, the same transitions can be taken via the other models, 
i.e.,
we have $\configm{B_0}{\tau}{\mu}\trans{\phi_1}\configm{\skipc}{\upsilon}{\mu}$
and $\configm{B_0}{\tau'}{\mu'}\trans{\phi_0}\configm{\skipc}{\upsilon'}{\mu'}$.
Moreover, owing to the agreements, we can instantiate the left and right trace's respect condition
(hypothesis (iv) of this Lemma). 
As we are considering a non-call command, 
the collective boundary for r-respect is 
$\dot{\delta} = \unioneff{N\in(\Psi,\mu),N\neq \topm(B,M)}{\bnd(N)} $.
By hypothesis (iii) of the Lemma, $C$ is let-free. So $\mu$ is empty.
Moreover, there is no $\Endcall$ in $B$, there being no environment calls (and as always the starting command has no end markers), so $\topm(B,M) = M$.
So the collective boundary for r-respect is the $\delta$ assumed in the Lemma, i.e.,
$\delta = \unioneff{N\in\Psi,N\neq M}{\bnd(N)}$.
Both steps satisfy w-respect, i.e., do not write inside the boundary, owing to hypothesis (iv) of the Lemma.
Instantiating r-respect twice (with $\tau,\tau',\phi_0,\rho$ and with $\tau',\tau,\phi_1,\rho^{-1}$),
we have the allowed dependences 
$\AllowedDep{\tau}{\tau'}{\upsilon}{\upsilon'}{\eff}{\sigma}{\delta}{\rho}$ and 
$\AllowedDep{\tau'}{\tau}{\upsilon'}{\upsilon}{\eff}{\sigma'}{\delta}{\rho^{-1}}$. 
Even more, r-respects applied to (\ref{eq:indhyp}) gives some $\dot{\rho}$ and $\dot{\rho}'$ with $\dot{\rho}\supseteq\rho$ 
and $\dot{\rho}'\supseteq\rho^{-1}$ and the following four conditions:
\begin{equation}\label{eq:rembeq1}
\begin{array}{l}
\Lagree(\upsilon,\upsilon',\dot{\rho},(\freshLocs(\tau,\upsilon) \union 
     \written(\tau,\upsilon))\setminus\rlocs(\upsilon,\delta^\oplus))\\
\dot{\rho}(\freshLocs(\tau,\upsilon)\setminus\rlocs(\upsilon,\delta))\subseteq
\freshLocs(\tau',\upsilon')\setminus\rlocs(\upsilon',\delta)
\\
\Lagree(\upsilon',\upsilon,\dot{\rho}',(\freshLocs(\tau',\upsilon') \union 
     \written(\tau',\upsilon'))\setminus\rlocs(\upsilon',\delta^\oplus))\\
\dot{\rho}'(\freshLocs(\tau',\upsilon')\setminus\rlocs(\upsilon',\delta))\subseteq
\freshLocs(\tau,\upsilon)\setminus\rlocs(\upsilon,\delta)
\end{array}
\end{equation}
\begin{sloppypar}
By balanced symmetry Lemma~\ref{lem:fresh-sym}, we get 
\[ 
\begin{array}{l}
\Lagree(\upsilon',\upsilon,\dot{\rho}^{-1},(\freshLocs(\tau',\upsilon') \union 
\written(\tau',\upsilon'))\setminus\rlocs(\upsilon',\delta^\oplus))\\
\dot{\rho}(\freshLocs(\tau,\upsilon)\setminus\rlocs(\upsilon,\delta))=
\freshLocs(\tau',\upsilon')\setminus\rlocs(\upsilon',\delta)
\end{array}
\]
We can use preservation Lemma~\ref{lem:selfframing-agreement2}
for these three sets of locations (which are subsets of $\locations(\tau)$): 
$\rlocs(\sigma,\eff)\setminus\rlocs(\tau,\delta^\oplus)$,
$\written(\sigma,\tau)\setminus\rlocs(\tau,\delta^\oplus)$, and
$\freshLocs(\sigma,\tau)\setminus\rlocs(\tau,\delta^\oplus)$.
By Lemma~\ref{lem:selfframing-agreement2} we get 
\end{sloppypar}
\[\Lagree(\upsilon,\upsilon',\dot{\rho},((\freshLocs(\sigma,\tau) \union \rlocs(\sigma,\eff)\union 
\written(\sigma,\tau))\setminus\rlocs(\tau,\delta^\oplus))
\setminus\rlocs(\upsilon,\delta^\oplus))\]
So by the boundary monotonicity condition of Encap we have
$\rlocs(\tau,\delta^\oplus)\subseteq\rlocs(\upsilon,\delta^\oplus)$.
Now from this and (\ref{eq:rembeq1}), using 
$\freshLocs(\sigma,\upsilon) = \freshLocs(\sigma,\tau)\union\freshLocs(\tau,\upsilon)$
and $\written(\sigma,\upsilon) \subseteq \written(\sigma,\tau)\union\written(\tau,\upsilon)$,
we can combine the agreements together to get 
\[\Lagree(\upsilon,\upsilon',\dot{\rho},(\freshLocs(\sigma,\upsilon) \union \rlocs(\sigma,\eff) \union 
\written(\sigma,\upsilon))\setminus\rlocs(\upsilon,\delta^\oplus))\]
With a similar argument we obtain the symmetric condition
\[\Lagree(\upsilon',\upsilon,\dot{\rho}^{-1},(\freshLocs(\sigma',\upsilon') \union 
\rlocs(\sigma',\eff) \union\written(\sigma',\upsilon'))\setminus\rlocs(\upsilon',\delta^\oplus))\]
which finishes this case for the induction step.

\begin{sloppypar}
\underline{Case \rn{bCallS}}.
So $B_0$ is $m()$ for some $m$, and $(\upsilon|\upsilon')\in\phi_2(m)(\tau|\tau')$.
The successor configuration has $DD\equiv\Syncbi{B_1}$ and $\nu=\mu=\mu'=\nu'$.
Suppose $\Psi(m)$ is $\flowty{R}{S}{\effe}$.
By the assumed r-safe condition (hypothesis (iv) of the Lemma), we have 
$\rlocs(\tau,\effe)\subseteq\freshLocs(\sigma,\tau)\union\rlocs(\sigma,\eff)$. 
Since $\phi_2(m)(\tau|\tau') \neq \Fault$,
there must be values for the spec-only variables $\ol{t}$ of $m$'s spec
for which $\tau|\tau'$ satisfy the method's precondition, which by hypothesis (i) of the lemma implies
the precondition of $\locEq_\delta(\Psi(m))$.
That is, there are $\ol{u}$ and $\ol{u}'$ such that
$\hat{\tau}|\hat{\tau}'\models_\rho  
\Both{R}\land\Agr(\reads(\effe)\setminus\delta^\oplus)\land\Both{(s_{\lloc}^m=\lloc \land \snap^m(\effe))}$,
where $\hat{\tau} = \extend{\tau}{\ol{t}}{\ol{u}}$
and $\hat{\tau}' = \extend{\tau'}{\ol{t}}{\ol{u}'}$.
(Apropos the identifier $s_{\lloc}^m$ see Footnote~\ref{fn:msnapshot}.)
Since $\phi\models\Phi$ and 
$(\upsilon|\upsilon')\in\phi_2(m)(\tau|\tau')$, we get the postcondition of $\Phi(m)$, which implies that of 
$\locEq_\delta(\Psi(m))$. 
Hence $\hat{\upsilon}|\hat{\upsilon}'\models_\rho{\later(\Both{Q}\land\Agr\effe^\rightarrow_\delta)}$,
where $\hat{\upsilon} = \extend{\upsilon}{\ol{t}}{\ol{u}}$,
$\hat{\upsilon}' = \extend{\upsilon'}{\ol{t}}{\ol{u}'}$,
and 
\begin{equation}\label{eq:rloceqEta}
\effe^\rightarrow_\delta\equiv(\rd{(\lloc\setminus s_{\lloc}^m)\Img\allfields},
\Asnap^m(\effe))\setminus\delta
\end{equation}
So by semantics of $\later$ and $\Agr$
there is $\dot{\rho}\supseteq\rho$ with 
$\agree(\hat{\upsilon},\hat{\upsilon}',\dot{\rho},\effe^\rightarrow_\delta)$ and
$\agree(\hat{\upsilon}',\hat{\upsilon},\dot{\rho}^{-1},\effe^\rightarrow_\delta)$. 
We have $\freshLocs(\tau,\upsilon)=\rlocs(\upsilon,\rd{(\lloc\setminus s_{\lloc}^m)\Img\allfields})$ and 
$\freshLocs(\tau',\upsilon')=\rlocs(\upsilon',\rd{(\lloc\setminus s_{\lloc}^m)\Img\allfields})$.
We also have $\written(\tau,\upsilon)\subseteq\wlocs(\tau,\effe)$ and 
$\written(\tau',\upsilon')\subseteq\wlocs(\tau',\effe)$, from $\tau\allowTo\upsilon\models\effe$ and
$\tau'\allowTo\hat\upsilon'\models\effe$. Furthermore, by Lemma~\ref{lem:snap}, we have 
\[ \begin{array}{l}
  \wlocs(\tau,\effe)\setminus\rlocs(\upsilon,\delta^\oplus)=
\rlocs(\upsilon,\Asnap^m(\effe)\setminus\delta)\subseteq\rlocs(\upsilon,\effe^\rightarrow_\delta) \\
\wlocs(\tau',\effe)\setminus\rlocs(\upsilon',\delta^\oplus)=
\rlocs(\upsilon',\Asnap^m(\effe)\setminus\delta)\subseteq\rlocs(\upsilon',\effe^\rightarrow_\delta)
\end{array}
\]
So we have
\begin{equation}\label{eq:callrembeq1} 
\Lagree(\upsilon,\upsilon',\dot{\rho},(\freshLocs(\tau,\upsilon) \union 
\written(\tau,\upsilon))\setminus\rlocs(\upsilon,\delta^\oplus))
\end{equation}
\begin{equation}\label{eq:callrembeq2}
\Lagree(\upsilon',\upsilon,\dot{\rho}^{-1},(\freshLocs(\tau',\upsilon') \union 
\written(\tau',\upsilon'))\setminus\rlocs(\upsilon',\delta^\oplus))
\end{equation}
Thus we have 
\mbox{$\AllowedDep{\tau}{\tau'}{\upsilon}{\upsilon'}{\effe}{\sigma}{\delta}{\rho}$}
and \mbox{$\AllowedDep{\tau'}{\tau}{\upsilon'}{\upsilon}{\effe}{\sigma'}{\delta}{\rho^{-1}}$}.
Since $\rlocs(\sigma,\eff)\setminus\rlocs(\tau,\delta^\oplus)$, 
$\written(\sigma,\tau)\setminus\rlocs(\tau,\delta^\oplus)$ and 
$\freshLocs(\sigma,\tau)\setminus\rlocs(\tau,\delta^\oplus)$ are subsets of 
$\locations(\tau)$, using 
Lemma~\ref{lem:selfframing-agreement2}, from (\ref{eq:indhyp}) we get 
\[\Lagree(\upsilon,\upsilon',\dot{\rho},((\freshLocs(\sigma,\tau) \union \rlocs(\sigma,\eff)\union 
\written(\sigma,\tau))\setminus\rlocs(\tau,\delta^\oplus))\setminus\rlocs(\upsilon,\delta^\oplus))\]
By hypothesis (iv) of the Lemma, the steps satisfy boundary monotonicity, i.e., 
$\rlocs(\tau,\delta)\subseteq\rlocs(\upsilon,\delta)$,
which implies 
$\rlocs(\tau,\delta^\oplus)\subseteq\rlocs(\upsilon,\delta^\oplus)$.
Combining this with the agreements of (\ref{eq:callrembeq1}), we get 
\[\Lagree(\upsilon,\upsilon',\dot{\rho},(\freshLocs(\sigma,\upsilon) \union \rlocs(\sigma,\eff) \union 
\written(\sigma,\upsilon))\setminus\rlocs(\upsilon,\delta^\oplus))\]
With a similar argument using (\ref{eq:callrembeq2}), we get the symmetric condition
\[\Lagree(\upsilon',\upsilon,\dot{\rho}^{-1},(\freshLocs(\sigma',\upsilon') \union 
\rlocs(\sigma',\eff) \union 
\written(\sigma',\upsilon'))\setminus\rlocs(\upsilon',\delta^\oplus))\]
which completes this case.
\end{sloppypar}

\underline{Case \rn{bCall0}}.
So $B_0$ is a context call $m()$ that stutters because the $\phi_2(m)$ is empty.
The agreements are maintained, as nothing changes.

\underline{Case \rn{bVar}}.
This relies on the additional condition that $\Vars{\tau}=\Vars{\tau'}$,
which can be included in the induction hypothesis but is omitted for readability.
We have that 
$B_0$ is $\varblock{x\scol T}{B_2}$ for some $x,T,B_2$,
so 
$\Syncbi{B_0} \equiv 
\varblockbi{x\scol T \smallSplitSym x\scol T}{\Syncbi{B_2}}$.
Because $\Vars{\tau}=\Vars{\tau'}$, and using the assumption that
$\varfresh$ depends only on $\Vars$ of the state
(Eqn.~(\ref{eq:varfresh})),
we have some $w$ with $w = \varfresh(\tau) = \varfresh(\tau')$.
This ensures $\Vars{\upsilon}=\Vars{\upsilon'}$, justifying the omitted induction hypothesis; the only other change to variables is by dropping them,
by \rn{bSync} transition for $\syncbi{\Endvar(w)}$.
The step from 
$\varblockbi{x\scol T \smallSplitSym x\scol T}{\Syncbi{B_2}}$
goes to 
\( \configr{\subst{\Syncbi{B_2}}{x,x}{w,w};\syncbi{\Endvar(w)};\Syncbi{B_1}}
{\upsilon}{\upsilon'}{\mu}{\mu'} \)
where 
$\upsilon = \extend{\tau}{w}{\Default{T}}$ and 
$\upsilon' = \extend{\tau'}{w'}{\Default{T'}}$.
We get the agreements because nothing changes except the addition of $w$ with default value.
We get the code alignment because 
$\subst{\Syncbi{B_2}}{x,x}{w,w} \equiv 
\Syncbi{\subst{B_2}{x,x}{w,w}}$ by definitions.

\underline{Cases \rn{bIfTT} and \rn{bIfFF}}.
So $B_0$ has the form $\ifc{E}{B_2}{B_3}$ and the successor configuration 
has the form either $\Syncbi{B_2};\Syncbi{B_1}$
or $\Syncbi{B_3};\Syncbi{B_1}$.
Nothing else changes so the agreements are maintained.


\underline{Cases \rn{bWhTT} and \rn{bWhFF}}.
So $B_0$ has the form $\whilec{E}{B_2}$ and the successor configuration has the form 
either $\Syncbi{B_2};\Syncbi{B_0};\Syncbi{B_1}$ (for \rn{bWhTT})
or $\Syncbi{B_1}$.  
Nothing else changes so the agreements are maintained.

\underline{Case \rn{bCallE}} does not occur, because $C$ is let-free.

\underline{Case \rn{bLet}} does not occur, because $C$ is let-free.
\end{proof}

\subsection{Soundness of \rn{rLocEq}}\label{sec:rLocEqSound}

\[\rulerLocEq\]

Let $\eff^\leftarrow_\delta\eqdef\reads(\eff)\setminus\delta^\oplus$
as in Def.~\ref{def:loceq} of $\locEq_\delta(\flowty{P}{Q}{\eff})$. 
Let $\phi$ be a $\LocEq_\delta(\Phi)$-model, 
i.e., $\phi_0$ and $\phi_1$ are $\Phi$-models and 
$\phi_2$ satisfies $\Phi_2$ which is given by applying the $\locEq_\delta$ construction to each spec in $\Phi$ as per Def.~\ref{def:loceq}.
In symbols: $(\phi_0,\phi_1,\phi_2)\models (\Phi,\Phi,\locEq_\delta(\Phi))$.
Suppose $\ol{s}$ are the spec-only variables of $\flowty{P}{Q}{\eff}$,
and suppose $\sigma,\sigma'$ satisfy the precondition, for the unique snapshot values $\ol{v}$ and $\ol{v}'$ of $\ol{s}$ on left and right (cf.\ Lemma~\ref{lem:uniquespeconlyR}).
That is, 
\begin{equation}\label{eq:loceqpre1}
\hat{\sigma}|\hat{\sigma}'\models_\pi \Both{P}\land\Agr\eff^\leftarrow_\delta
\land \Both{(r=\lloc\land\snap(\eff))} 
\mbox{ where }
\hat{\sigma} = \extend{\sigma}{\ol{s}}{\ol{v}} \mbox{ and }
\hat{\sigma}' = \extend{\sigma'}{\ol{s}}{\ol{v}'}
\end{equation}
Notice that these assumptions entail hypotheses (i) and (ii) of Lemma~\ref{lem:rloceq},
to which we will appeal repeatedly.
We instantiate $\Phi$ in the Lemma by $\LocEq_\delta(\Phi)$,
and the initial states $\sigma|\sigma'$ satisfy the requisite precondition.

\medskip
\emph{Encap}.
Consider any trace $T$ from $\configr{\Syncbi{C}}{\sigma}{\sigma'}{\_}{\_}$.
Recall that $(\LocEq_\delta(\Phi))_0 = \Phi$ and $(\LocEq_\delta(\Phi))_1 = \Phi$.
So according to Def.~\ref{def:validR},
we must prove that the projections $U$ (resp.\ $V$) of $T$ 
(by projection Lemma~\ref{lem:bi-to-unary})
satisfy r-safe for $(\Phi,\eff,\sigma)$ (resp.\ $(\Phi,\eff,\sigma')$),
and respect for $(\Phi,M,\phi_0,\eff,\sigma)$ (resp.\ $(\Phi,M,\phi_1,\eff,\sigma')$).
These are both traces of $C$ from $P$-states, 
and $\phi_0,\phi_1$ are $\Phi$-models, so we get r-safe and respect 
by two instantiations of the premise.

\medskip
\emph{Write}.
A terminated trace via $\phi$ provides terminated unary traces via $\phi_0$ and $\phi_1$
The initial states satisfy the precondition $P$ of the premise, 
and we get the Write property directly from two instantiations of the premise.

\medskip
\emph{Safety}.
Suppose
$\configr{\Syncbi{C}}{\sigma}{\sigma'}{\_}{\_}
\biTranStar{\phi}
\configr{BB}{\tau}{\tau'}{\mu}{\mu'}
\biTrans{\phi}
\Fault$.
We can apply Lemma~\ref{lem:rloceq} to the trace ending in $BB$.
The lemma requires the trace to satisfy exactly the r-safe and respects conditions that are established above for Encap.
By Lemma~\ref{lem:rloceq} 
there are $B,\rho$ with 
$BB\equiv\Syncbi{B}$, $\rho\supseteq\pi$, $\mu=\mu'$,
\begin{equation}\label{eq:safety1}
\begin{array}{l}
\Lagree(\tau,\tau',\rho, (\freshLocs(\sigma,\tau) \union \rlocs(\sigma,\eff)\union 
\written(\sigma,\tau))\setminus\rlocs(\tau,\delta^\oplus))\\
\Lagree(\tau',\tau,\rho^{-1}, (\freshLocs(\sigma',\tau') \union \rlocs(\sigma',\eff)\union 
\written(\sigma',\tau'))\setminus\rlocs(\tau',\delta^\oplus))\\
\end{array}
\end{equation}
We show that  $\configr{BB}{\tau}{\tau'}{\mu}{\mu'}$ does not fault,
by contradiction, going by cases on the possible transition rules that yield fault. 
\begin{itemize}
\item \rn{bSyncX} would give a unary fault via $\phi_0$ or $\phi_1$, contrary to the premise.

\item 
\begin{sloppypar}
\rn{bCallX} applies if $\Fault$ is returned by $\phi_2(m)$,
and because $\phi_2$ is a context model, that means
$\tau|\tau'$ falsifies the precondition for $m$. Suppose that $\Phi(m) = \flowty{R}{S}{\effe}$.
The precondition includes $\Both{(s_{\lloc}^m=\lloc \land\snap^m(\effe))}$,
which uses spec-only variables that do not occur in $R$, $\delta$, or $\effe$,
and which can be satisfied by values determined by $\tau|\tau'$.
So for the precondition to be false there must be 
no $\rho,\ol{u},\ol{u}'$ such that 
$\rho\supseteq\pi$
and $\hat{\tau}|\hat{\tau}'\models_\rho\Both{R}\land\Agr\reads(\effe)\setminus\delta^\oplus$
where 
$\hat{\tau} = \extend{\tau}{\ol{t}}{\ol{u}}$
and $\hat{\tau}' = \extend{\tau'}{\ol{t}}{\ol{u}'}$.
From fault and relational compatibility (Def.~\ref{def:interpRel}) we have 
\[\Fault\in\phi_0(m)(\tau)\lor\Fault\in\phi_1(m)(\tau')\lor(\upsilon\in\phi_0(m)(\tau)\land\upsilon'\in\phi_1(m)(\tau'))\]
From the premise, it is not the case that $\Fault\in\phi_0(m)(\tau)$ or $\Fault\in\phi_1(m)(\tau')$, so 
there must be $\ol{u}$ and $\ol{u}'$ such that 
$\hat{\tau}\models R\land\hat{\tau}'\models R$ (with $\hat{\tau},\hat{\tau},$ as above).
(Note that $\ol{u},\ol{u}'$ are uniquely determined, by Lemma~\ref{lem:uniquespeconly}.)
Thus there is no $\rho\supseteq\pi$ with  
$\hat{\tau}|\hat{\tau}'\models_\rho\Agr\reads(\effe)\setminus\delta^\oplus$.
But from R-safe condition of the premise we know that 
$\rlocs(\tau,\effe)\subseteq\freshLocs(\sigma,\tau)\union\rlocs(\sigma,\eff)$
and $\rlocs(\tau',\effe)\subseteq\freshLocs(\sigma',\tau')\union\rlocs(\sigma',\eff)$.
So (\ref{eq:safety1}) implies $\agree(\tau,\tau',\rho,\effe\setminus(\delta,\rd\lloc))$ and 
$\agree(\tau',\tau,\rho^{-1},\effe\setminus(\delta,\rd\lloc))$ which is a contradiction. 
\end{sloppypar}

\item In case \rn{bIfX},
$B$ has the form $(\ifc{E}{D_0}{D_1});D_2$ for some $D_0,D_1,D_2$.

To show that \rn{bIfX} does not apply, 
we show that $\tau(E)\neq\tau'(E)$ cannot happen, by contradiction.
Suppose $\tau(E)=\True$ and $\tau'(E)=\False$ (a symmetric argument handles the case
$\tau(E)=\False$ and $\tau'(E)=\True$).
By unary semantics we have
$\configm{\ifc{E}{D_0}{D_1};D_2}{\tau}{\mu} \trans{\phi_0} \configm{D_0;D_2}{\tau}{\mu}$
and 
$\configm{\ifc{E}{D_0}{D_1};D_2}{\tau'}{\mu} \trans{\phi_1} \configm{D_1;D_2}{\tau'}{\mu}$.
The latter step can also be taken via $\phi_0$ as it is not a call.
By  (\ref{eq:safety1}) we have
\[ \Lagree(\tau,\tau',\rho, (\freshLocs(\sigma,\tau) \union \rlocs(\sigma,\eff^\leftarrow_\delta))\setminus\rlocs(\tau,\delta^\oplus)) \]
The r-respects condition for the left step is for the collective boundary
$\unioneff{N\in(\Phi,\mu),N\neq \topm(B,M)}{\bnd(N)}$,
but because $C$ is let-free, $\mu$ is empty and $\topm(B,M)$ is $M$, so this simplifies to $\delta$.
So we have the agreement in the antecedent for r-respects,
and the other antecedent is $\agree(\tau',\tau',\delta)$ which holds.
So by r-respect from the premise, and instantiating the alternate step as the one from $\tau'$,
we can obtain $D_0;D_2\equiv D_1;D_2$.
This is false, because we assume all subcommands are uniquely labeled
and thus the label on $D_0$ is distinct from the one on $D_1$.
(See footnote~\ref{fn:label} in Def.~\ref{def:wfjudge}.)



\item For \rn{bWhX}, 
$B$ has the form $\whilec{E}{D_0};D_1$
so $\Syncbi{B}$ is $\whilecbiA{E|E}{\False|\False}{D_0};\Syncbi{D_1}$.
As the alignment guards are false, rule \rn{bWhX} applies  just if 
$\tau(E)\neq \tau'(E)$.
We can show this contradicts the premise for the same reasons
as in the argument above for \rn{bIfX} in the case $D_0\not\equiv D_1$ i.e.\ the conditional branches 
differ.
We do not have to consider the situation where the branches go different 
ways but the code is the same:
if $\tau(E)=\True$ and $\tau'(E)=\False$ then
$\configm{\whilec{E}{D_0};D_1}{\tau}{\mu}
\trans{\phi_0}
\configm{D_0;\whilec{E}{D_0};D_1}{\tau}{\mu}$
and 
$\configm{\whilec{E}{D_0};D_1}{\tau'}{\mu}
\trans{\phi_1} \configm{D_1}{\tau'}{\mu}$
---the code is different, as needed to contradict r-respects in the premise.
\end{itemize}

\medskip
\emph{Post}.
Consider terminated trace 
$\configr{CC}{\sigma}{\sigma'}{\_}{\_} \biTranStar{\phi} 
\configr{\syncbi{\skipc}}{\tau}{\tau'}{\_}{\_}$, for states $\tau,\tau'$. 
We must prove $\hat{\tau},\hat{\tau}'\models_\pi\later(\Both{Q}\land\Agr\eff^\rightarrow_\delta)$, where
$\eff^\rightarrow_\delta\eqdef (\rd{(\lloc\setminus r)\Img\allfields},\Asnap(\eff))\setminus\delta$
with $\hat{\tau} = \extend{\tau}{\ol{s}}{\ol{v}}$
and $\hat{\tau}' = \extend{\tau'}{\ol{s}}{\ol{v}'}$
(with $\ol{v},\ol{v}'$ as defined following (\ref{eq:loceqpre1})).

Recall that we have $\hat{\sigma}|\hat{\sigma}'\models_\pi \Both{P}\land\Agr\eff^\leftarrow_\delta\land \Both{(s_{\lloc}=\lloc\land\snap(\eff))}$,
where $\eff^\leftarrow_\delta\eqdef\reads(\eff)\setminus\delta^\oplus$
(see (\ref{eq:loceqpre1})).
From (\ref{eq:safety1})  we get allowed dependences 
\begin{equation}\label{eq:rEmbEqP}
\AllowedDep{\sigma}{\sigma'}{\tau}{\tau'}{\eff}{\sigma}{\delta}{\pi}
\mbox{ and }\AllowedDep{\sigma'}{\sigma}{\tau'}{\tau}{\eff}{\sigma'}{\delta}{\pi^{-1}}
\end{equation}
Also, from Lemma~\ref{lem:bi-to-unary} (projection lemma), we get two terminated traces of the premise. 
Thus we have $\hat{\tau}\models Q$ and $\hat{\tau}'\models Q$. From 
$\hat{\sigma}|\hat{\sigma}'\models_\pi \Agr\eff^\leftarrow_\delta$ and
$\hat{\sigma}|\hat{\sigma}'\models_\pi \Both{P}$ and
side condition $P\models \wTor(\eff)\leq\reads(\eff)$ we
get $\hat{\sigma}|\hat{\sigma}'\models_\pi\Agr\wTor(\eff)\setminus\delta^\oplus$. 
This means, by semantics of $\Agr$ and definitions (noting that spec-only variables are not among the agreeing locations) that 
\[
\begin{array}{l}
\Lagree(\sigma,\sigma',\pi, \wlocs(\sigma,\eff)\setminus\rlocs(\sigma,\delta^\oplus))\\
\Lagree(\sigma',\sigma,\pi^{-1}, \wlocs(\sigma',\eff)\setminus\rlocs(\sigma',\delta^\oplus))
\end{array}
\]
Now using (\ref{eq:rEmbEqP}), by preservation Lemma~\ref{lem:selfframing-agreement2}, we get
\[
\begin{array}{l}
\Lagree(\tau,\tau',\rho, 
\wlocs(\sigma,\eff)\setminus\rlocs(\sigma,\delta^\oplus)\setminus\rlocs(\tau,\delta^\oplus))\\
\Lagree(\tau',\tau,\rho^{-1},
\wlocs(\sigma',\eff)\setminus\rlocs(\sigma',\delta^\oplus)\setminus\rlocs(\tau',\delta^\oplus))
\end{array}
\]
From Encap boundary monotonicity condition of the premise 
we get 
$\rlocs(\sigma,\delta)\subseteq\rlocs(\tau,\delta)$ and $\rlocs(\sigma',\delta)\subseteq\rlocs(\tau',\delta)$.
Thus the preceding agreements simplify to 
\[
\begin{array}{l}
\Lagree(\tau,\tau',\rho,\wlocs(\sigma,\eff)\setminus\rlocs(\tau,\delta^\oplus))\\
\Lagree(\tau',\tau,\rho^{-1},\wlocs(\sigma',\eff)\setminus\rlocs(\tau',\delta^\oplus))
\end{array}
\]
Furthermore, by Lemma~\ref{lem:snap}, we have
$\wlocs(\sigma,\eff)\setminus\rlocs(\tau,\delta^\oplus)=
\rlocs(\tau,\Asnap(\eff)\setminus\delta)$ and also
$\wlocs(\sigma',\eff)\setminus\rlocs(\tau',\delta^\oplus)=
\rlocs(\tau',\Asnap(\eff)\setminus\delta)$.
Thus we get 
\[
\begin{array}{l}
\Lagree(\tau,\tau',\rho, \rlocs(\tau,\Asnap(\eff)\setminus\delta)) \\
\Lagree(\tau',\tau,\rho^{-1}, \rlocs(\tau',\Asnap(\eff)\setminus\delta))
\end{array}
\]
This means $\hat{\tau}|\hat{\tau}'\models_\rho\Agr\Asnap(\eff)\setminus\delta$.

Since $\freshLocs(\tau,\upsilon)=\rlocs(\upsilon,\rd{(\lloc\setminus r)\Img\allfields})$ and
$\freshLocs(\tau',\upsilon')=\rlocs(\upsilon',\rd{(\lloc\setminus r)\Img\allfields})$, 
we can use the agreements on fresh locations given by 
(\ref{eq:rEmbEqP}) to get 
$\hat{\tau}|\hat{\tau}'\models_\rho\Agr (\rd{(\lloc\setminus r)\Img\allfields})\setminus\delta$. 

Combining what is proved above and using $\rho$ as witness of the 
existential in the semantics of $\later$, 
we conclude the proof of Post:
$\hat{\tau}|\hat{\tau}'\models_\pi\later(\Both{Q}\land\Agr (\rd{(\lloc\setminus r)\Img\allfields}, 
\Asnap(\eff)\setminus\delta))$.

\medskip
\emph{R-safe}. By projection Lemma~\ref{lem:bi-to-unary}(c) there are unary executions that take the same unary steps. The R-safe condition from the premise applies on both sides and yields
R-safety for the conclusion.

\subsection{Soundness of \rn{rSOF}}

\newcommand{\hacky}{^{\ol{t}|\ol{t}}_{\ol{u}|\ol{u}'}} 

\[ \rulerSOF \]

Before studying the following, 
readers are advised to be familiar with Sections~\ref{sec:app:lemrloceq} 
and~\ref{sec:rLocEqSound}. 

To show soundness of \rn{rSOF}, suppose the side conditions hold and the premise of the rule is valid:
\begin{equation}\label{eq:rSOFpremise}
\LocEq_\delta(\Phi,\Theta) \models_M \Syncbi{C} : \locEq_\delta(\flowty{P}{Q}{\eff})
\end{equation}
We must prove validity of the conclusion:
\begin{equation}\label{eq:rSOFconcl}  \LocEq_\delta(\Phi), (\LocEq_\delta(\Theta)\conjInv\N)  \models_M 
    \Syncbi{C} \: : \: \locEq_\delta (\flowty{P}{Q}{\eff}) \conjInv \N
\end{equation}
To that end, consider an arbitrary model $\phi^+$ of
the relational context $\LocEq_\delta(\Phi),\LocEq_\delta(\Theta)\conjInv\N$.
To make use of the premise we define a model, $\phi^-$, of $\LocEq_\delta(\Phi,\Theta)$.

For $m$ in $\Phi$, the definition is unchanged: $\phi^-_i(m)=\phi^+_i(m)$ for $i\in\{0,1,2\}$.
For methods $m$ of $\Theta$, we first define $\phi_2^-(m)$.
For that, we need some notation.
Suppose $\Theta(m) = \flowty{R}{S}{\effe}$. Let $\R$ be the local equivalence precondition
\begin{equation}\label{eq:rSOFloceq}
   \R\eqdef\Both{R}\land\Agr\reads(\effe)\setminus\delta^\oplus\land 
\Both{(s_{\lloc}^m=\lloc \land\snap^m(\effe))} 
\end{equation}
Let $\ol{t}$ be the spec-only variables, including $s_{\lloc}^m$ and the $\snap^m$ ones.
Note that $\N$ depends on no spec-only variables, 
by the side condition that it is framed by dynamic boundary $\bnd(N)$.
For any states $\tau$ and $\tau'$, define  
\[
\phi_2^-(m)(\tau|\tau')\eqdef\left\{
\begin{array}{ll}
\{\Fault\}&\all{\pi,\ol{u},\ol{u}'}{\tau|\tau'\models_\pi\neg\R\hacky} \\
\emptyset&
(\some{\pi,\ol{u},\ol{u}'}{\tau|\tau'\models_\pi\R\hacky}) 
    \land(\all{\pi,\ol{u},\ol{u}'}{\tau|\tau'\models_\pi\R\hacky \imp \tau|\tau'\not\models_\pi\N})
\\ 
\phi_2^+(m)(\tau|\tau') & \some{\pi,\ol{u},\ol{u}'}{\tau|\tau'\models_\pi\R\hacky\land\N}
\end{array}
\right.
\]
One might hope that $(\phi_0^+,\phi_1^+,\phi_2^-)$ is a model for $\LocEq_\delta(\Phi,\Theta)$ but this may fail 
for $m$ in $\Phi$ 
if $\phi_0^+(m)(\tau)$ or $\phi_1^+(m)(\tau')$ is non-empty for $\tau|\tau'$ that satisfy $\R$ but not 
$\N$---because then the relational compatibility condition for pre-model fails (Definition~\ref{def:interpRel},
which is a pre-requisite for Definition~\ref{def:ctxinterpRel}).

To solve this problem, we define $\phi_0^-(m)$ and $\phi_1^-(m)$ like $\phi_0^+(m)$ and $\phi_1^+(m)$ but yielding empty outcome sets for such $\tau,\tau'$.
To see why this works we make the following observations about the definitions of pre-model and model for unary specs.
For any pre-model $\phi(m)$ and states $\tau,\sigma$, 
if $\tau\in\phi(m)(\sigma)$ and $\phi'(m)$ is defined identically to $\phi(m)$ except that
$\phi'(m)(\sigma) = (\phi(m)(\sigma))\setminus\{\tau\}$, then $\phi'$ is a pre-model.  
Moreover, if $\phi(m)$ is a context model for some spec and $\sigma$ satisfies the precondition, then $\phi'$ is a context model.
Now, for any $\tau$, define 
$\phi_0^-(m)(\tau) \eqdef \emptyset$ if there is $\tau'$ such that
the conditions of the second case for $\phi_2^-$ hold for $\tau|\tau'$, that is:
\[ (\some{\pi,\ol{u},\ol{u}'}{\tau|\tau'\models_\pi\R\hacky})
\mbox{ and }
(\all{\pi,\ol{u},\ol{u}'}{\tau|\tau'\models_\pi\R\hacky \imp\tau|\tau'\not\models_\pi\N})
\]
Otherwise define $\phi_0^-(m)(\tau) \eqdef \phi_0(m)(\tau)$.
The displayed condition implies that $\tau$ satisfies the unary precondition $R$, 
so $\phi_0^-(m)$ is a model for $\Theta(m)$ as observed above.
Define $\phi_1^-(m)$ the same way but existentially quantifying the left state:
$\phi_1^-(m)(\tau) \eqdef \emptyset$ if there is $\tau$ such that
$(\some{\pi,\ol{u},\ol{u}'}{\tau|\tau'\models_\pi\R\hacky})$
and $(\all{\pi,\ol{u},\ol{u}'}{\tau|\tau'\models_\pi\R\hacky \imp \tau|\tau'\not\models_\pi\N})$;
otherwise define $\phi_1^-(m)(\tau) \eqdef \phi_1(m)(\tau)$.
We leave it to the reader to check that $(\phi^-_0,\phi^-_1,\phi_2^-)$ satisfies all the conditions to be a 
relational pre-model and to be a context model of $\LocEq_\delta(\Phi,\Theta)$.
The latter means $\phi^-_0$ and $\phi^-_1$ are $(\Phi,\Theta)$-models,
and $\phi_2^-(m)$ models $\locEq_\delta(\Phi,\Theta)(m)$ for all $m$.

Now we return to the proof of validity of the conclusion, (\ref{eq:rSOFconcl}). 
Having fixed an arbitrary context model $\phi^+$ we now consider any
$\sigma,\sigma',\pi$ that satisfy the precondition of the conclusion,
i.e., the precondition of $\locEq_\delta (\flowty{P}{Q}{\eff}) \conjInv \N$.
That is, we assume
\begin{equation}\label{eq:RSOF0}
 \hat{\sigma}|\hat{\sigma}'\models_\pi 
\Both{P}\land\Agr\reads(\eff)\setminus\delta^\oplus\land 
\Both{(s_{\lloc}=\lloc\land\snap(\eff))}\land\N 
\end{equation}
where $\ol{s}$ are the spec-only variables (which are the same on both sides of these specs),
$\hat{\sigma} = \extend{\sigma}{\ol{s}}{\ol{v}}$,
$\hat{\sigma}' = \extend{\sigma'}{\ol{s}}{\ol{v}'}$ for some $\ol{v},\ol{v}'$.
(Recall that $\ol{v},\ol{v}'$ are uniquely determined, by Lemma~\ref{lem:uniquespeconlyR}.)

To finish the soundness proof, we need the following claim involving $\sigma,\sigma',\pi$ and the context model $\phi^-$ derived from $\phi^+$.
\begin{quote}
\textbf{Claim.} 
If $\configr{\Syncbi{C}}{\sigma}{\sigma'}{\_}{\_}
\biTranStar{\phi^+}
\configr{BB}{\tau}{\tau'}{\mu}{\mu'}$
then there are $B$ and $\rho$ such that 
\begin{list}{}{}
\item[(a)] 
$\configr{\Syncbi{C}}{\sigma}{\sigma'}{\_}{\_} 
\biTranStar{\phi^-}\configr{BB}{\tau}{\tau'}{\mu}{\mu'}$
\item[(b)] $\tau|\tau'\models_\rho\N$
\item[(c)] $\rho\supseteq\pi$ and $BB\equiv\Syncbi{B}$  and $\mu=\mu'$ 
\item[(d)] $\Lagree(\tau,\tau',\rho, (\freshLocs(\sigma,\tau) \union \rlocs(\sigma,\eff)\union 
\written(\sigma,\tau))\setminus\rlocs(\tau,\delta^\oplus))$, and 
\item[(e)]
$\Lagree(\tau',\tau,\rho^{-1},(\freshLocs(\sigma',\tau') \union \rlocs(\sigma',\eff)\union 
\written(\sigma',\tau'))\setminus\rlocs(\tau',\delta^\oplus))$.
\end{list}
\end{quote}
Item (a) says a trace via the conclusion's $\phi^+$ can be taken via the premise's $\phi^-$.
Item (b) says $\N$ holds at every step (outside context calls).
Items (c), (d), and (e) are the same as the conclusions (v), (vi), and (vii) of the 
lockstep alignment Lemma~\ref{lem:rloceq}, for refperm $\rho$ that additionally truthifies $\N$ according to item (b).  

We do not directly apply Lemma~\ref{lem:rloceq} in the following argument,

because it gives us no good way to establish $\tau|\tau'\models_\rho\N$.
However, we will establish (c)--(e) by similar arguments to the proof (Section~\ref{sec:app:lemrloceq}) of Lemma~\ref{lem:rloceq},
in which the conclusions (v)--(vii) are proved by induction on a given trace.
In short, we will apply the induction step of that proof.
Whereas the lemma connects an initial $\pi$ with a refperm $\rho\supseteq\pi$ for 
a given reachable configuration,
the proof of the induction step of the lemma does exactly what we need: Given a current $\rho$ with $\rho\supseteq\pi$, it yields a $\dot{\rho}$ with $\dot{\rho}\supseteq\rho$, for the next step of the trace.
We can reason the same way, for (c)--(e), but also add that $\dot{\rho}$ satisfies $\N$.

One could factor out the induction step of the lemma as a separate result, and then apply
it directly here.  We refrain from spelling that out explicitly, but we do need to be 
clear how we are instantiating the assumptions of Lemma~\ref{lem:rloceq}.
For the unary spec $\Psi$ in the Lemma we take $(\Phi,\Theta)$.
For the relational spec $\Phi$ in the Lemma we take $(\LocEq_\delta(\Phi),\LocEq_\delta(\Theta))$,
which is the same as $\LocEq_\delta(\Phi,\Theta)$.
For the context model $\phi$ we take $\phi^-$.
So we have assumption (i) of the Lemma.
We also have (ii), as direct consequence of (\ref{eq:RSOF0}).
For (iii), we will consider a trace via $\phi^-$ given by (a) in the Claim.
For (iv), i.e., r-safety and respect for that trace, we will appeal
to the premise (\ref{eq:rSOFpremise}).

\emph{Proof of Claim,}
by induction on steps.

\textbf{Base Case.} For initial configuration $\configr{\Syncbi{C}}{\sigma}{\sigma'}{\_}{\_}$, 
take $\rho:=\pi$.
We have 
$\sigma|\sigma'\models_\pi\N$ by assumption
(\ref{eq:RSOF0}); the rest follows.

\textbf{Induction Case.}
Suppose 
\begin{equation}\label{eq:rSOFtrace}
\configr{\Syncbi{C}}{\sigma}{\sigma'}{\_}{\_} 
\biTranStar{\phi^+} 
\configr{BB}{\tau}{\tau'}{\mu}{\mu'}
\biTrans{\phi^+}\configr{DD}{\upsilon}{\upsilon'}{\nu}{\nu'}
\end{equation}
By induction hypothesis there is $\rho$ 
such that the conditions (a)--(e) of the Claim hold for the configuration with $\tau,\tau'$ ---including 
$\rho\supseteq\pi$, 
$\tau|\tau'\models_\rho\N$, 
$BB$ has the form $\Syncbi{B}$ for some $B$,
and 
$\configr{\Syncbi{C}}{\sigma}{\sigma'}{\_}{\_} \biTranStar{\phi^-} 
\configr{BB}{\tau}{\tau'}{\mu}{\mu'}$.
We must show there is $\dot{\rho}$ such that $\dot{\rho}\supseteq\pi$,
$\upsilon|\upsilon'\models_{\dot{\rho}}\N$,
$\configr{\Syncbi{B}}{\tau}{\tau'}{\mu}{\mu'}\biTrans{\phi^-} 
\configr{DD}{\upsilon}{\upsilon'}{\nu}{\nu'}$,
and the other conditions of the Claim for $\dot{\rho},\upsilon,\upsilon'$.
We write (\.{a}), (\.{b}) etc. to indicate those conditions instantiated for 
$\dot{\rho},\upsilon,\upsilon'$.


To find $\dot{\rho}$ and show the conditions of the Claim for $\upsilon,\upsilon'$
we distinguish three cases:

\underline{Case $\Active(B)$ is not a context call.} 
Because the step is not a call, it is independent of model, so we have
\begin{equation}\label{eq:RSOF2}
\configr{\Syncbi{B}}{\tau}{\tau'}{\mu}{\mu'}\biTrans{\phi^-}\configr{DD}{\upsilon}{\upsilon'}{\nu}{\nu'}
\end{equation}
which takes care of part (\.{a}) of the Claim.
Moreover, this together with (\ref{eq:RSOF0}) 
lets us instantiate the premise (\ref{eq:rSOFpremise})
so (by Encap) we have that the left and right projections 
of the whole trace (\ref{eq:rSOFtrace}) 
satisfy respect for 
$((\Phi,\Theta),M,\phi_0^-,\eff,\sigma)$ and 
$((\Phi,\Theta),M,\phi_1^-,\eff,\sigma')$ respectively.
Thus we have the assumption (iv) of Lemma~\ref{lem:rloceq}
applied to the trace (\ref{eq:rSOFtrace}). 
By direct application of the Lemma we get that $\nu=\nu'$ 
and there is some $D$ with $DD\equiv\Syncbi{D}$.
Direct application would also yield agreements for some $\dot{\rho}\supseteq\pi$,
but that is not enough.  Instead we apply the induction step of the Lemma's proof,
which yields $\dot{\rho}$ such that $\dot{\rho}\supseteq\rho$ 
and (\.{d}) and (\.{e}) hold.  
Finally, from the Encap condition of premise of the rule, we also know that unary steps on left and right of (\ref{eq:RSOF2}) w-respect $\bnd(N)$, so we get $\agree(\tau,\upsilon,\bnd(N))$ and $\agree(\tau',\upsilon',\bnd(N))$. 
So from side condition $\models \fra{ \bnd(N) | \bnd(N) }{ \N }$, 
by Def.~\ref{def:frmAgreeRel} 
of the relational framing judgment,
using (b), we get $\upsilon|\upsilon'\models_\rho\N$.
By $\dot{\rho}\supseteq\rho$ and the side condition $\N\imp\always\N$ of \rn{rSOF},
we get $\upsilon|\upsilon'\models_{\dot{\rho}}\N$, proving (\.{b})
and concluding the induction step for this case.

Note that the induction step in the proof of Lemma~\ref{lem:rloceq} goes by cases on transition rules. The preceding paragraph covered all the transition rules except for context call.

\underline{Case $\Active(B)$ is a context call to some $m$ in $\Phi$.}
The step can be taken via $\phi^-$ because $\phi^{-}_{2}(m)$ is defined to be $\phi^{+}_{2}(m)$, so we have (\.{a}). 
As in the preceding case, 
we can apply the induction step of Lemma~\ref{lem:rloceq} to get $\dot{\rho}\supseteq\rho$ with (\.{c})--(\.{e}).
As in the preceding case, 
we appeal to w-respect for premise (\ref{eq:rSOFpremise}),
and $\models \fra{ \bnd(N) | \bnd(N) }{ \N }$, to get (\.{b}).

In our appeal to the proof of Lemma~\ref{lem:rloceq},
we are here using the cases of transition rules \rn{bCallS} and \rn{bCall0}.

\underline{Case $\Active(B)$ is a context call to some $m$ in $\Theta$.}
So $B$ has the form $B\equiv m();B_2$ for some $B_2$.
The transition can go by either \rn{bCall0} or \rn{bCallS}.  In the case of \rn{bCall0}, we get the Claim directly from the induction hypothesis: taking $\dot{\rho}:=\rho$
we get (\.{a})--(\.{e}) from (a)--(e).
  
Now consider the case of \rn{bCallS}.
Suppose $\Theta(m) = \flowty{R}{S}{\effe}$
and $\ol{t}$ is spec-only variables of $R$ and of the snapshot variables 
of $\locEq_\delta( \flowty{R}{S}{\effe})$ tagged for $m$.
Since we are in the case \rn{bCallS}, the precondition of $m$ for $\phi^+$ holds, for some refperm; $\phi^-(m)$ is defined the same way (last case in its definition) and the transition can be taken via $\phi^-$, so we have (\.{a}).  
It remains to find some $\dot{\rho}\supseteq\pi$ satisfying (\.{b})--(\.{e}) for $\upsilon,\upsilon'$. 
For (\.{c}), by \rn{bCallS} the method environments are unchanged and
$DD$ has the form $\Syncbi{B_2}$.

Let us spell out what it means that the precondition of $m$ for $\phi^+$
(i.e., the precondition of $\locEq_\delta( \flowty{R}{S}{\effe})$)
holds for some $\rho_1$: we have 
 \begin{equation}\label{eq:RSOFpre}
\hat{\tau}|\hat{\tau}'\models_{\rho_1}(\Both{R}\land\Agr\effe^\leftarrow_\delta \land
\Both{(s_{\lloc}^m=\lloc\land\snap^m(\effe))})\hacky  \land \N  
\end{equation}
where $\hat{\tau} \eqdef \extend{\tau}{\ol{s}}{\ol{v}}$ 
and  $\hat{\tau}' \eqdef \extend{\tau}{\ol{s}}{\ol{v}'}$ 
where $\ol{v},\ol{v}'$ are the unique values for the spec-only variables $\ol{s}$ defined in connection with (\ref{eq:RSOF0}),
and  $\ol{u},\ol{u}'$ are the unique values for the spec-only variables $\ol{t}$
for $\Theta(m)$.
We can write $\N$ outside the substitutions, because it has no spec-only variables,
but this is not important.
What is important is that  $\ol{v},\ol{v}',\ol{u},\ol{u}'$ are uniquely determined, independent of the refperm, by Lemma~\ref{lem:uniquespeconlyR}.
Let $\widehat{\tau} \eqdef \extend{\hat{\tau}}{\ol{t}}{\ol{u}}$ 
and $\widehat{\tau}' \eqdef \extend{\hat{\tau}}{\ol{t}}{\ol{u}'}$.
So (\ref{eq:RSOFpre}) can be written 
\begin{equation}\label{eq:RSOFpre1}
\widehat{\tau}|\widehat{\tau}'\models_{\rho_1}
   \Both{R}\land\Agr \effe^\leftarrow_\delta \land \Both{(s_{\lloc}^m=\lloc\land\snap^m(\effe))} \land \N 
\end{equation}
Now, $\Both{R} \land \Both{(s_{\lloc}^m=\lloc\land\snap^m(\effe))}$
 is refperm independent.  So using induction hypothesis (b) we have
$\widehat{\tau}|\widehat{\tau}'\models_{\rho} \Both{R} \land \Both{(s_{\lloc}^m=\lloc\land\snap^m(\effe))} \land\N$.
We can we get 
$\widehat{\tau}|\widehat{\tau}'\models_{\rho} \Agr \effe^\leftarrow_\delta $
from induction hypothesis (d) and (e), as follows.
First, we have Encap and r-safety for the trace up to $\tau,\tau'$,
by induction hypothesis (a) and the premise.  
Now $\effe^\leftarrow_\delta $ is $\reads(\effe)\setminus\delta^\oplus$,
i.e., $\reads(\effe)\setminus(\delta,\rd{\lloc})$.
By r-safety we have 
$\rlocs(\tau,\effe^\leftarrow_\delta)\subseteq 
(\freshLocs(\sigma,\tau) \union \rlocs(\sigma,\eff))\setminus\rlocs(\tau,\delta^\oplus))$
and 
$\rlocs(\tau',\effe^\leftarrow_\delta)\subseteq 
(\freshLocs(\sigma',\tau') \union \rlocs(\sigma',\eff))\setminus\rlocs(\tau',\delta^\oplus))$.
So by semantics of $\Agr \effe^\leftarrow_\delta$ and induction hypothesis (d) and (e)
we get $\widehat{\tau}|\widehat{\tau}'\models_{\rho} \Agr \effe^\leftarrow_\delta $.

Having established that the precondition (\ref{eq:RSOFpre1}) holds for $\rho_1:=\rho$,
we can instantiate the spec of $m$ with $\rho$ and obtain the postcondition
(in accord with Def.~\ref{def:ctxinterpRel} of relational context model):
\[  \widehat{\upsilon}|\widehat{\upsilon}'\models_{\rho} 
   \later( \Both{S}\land \Agr\effe^\rightarrow_\delta \land \N )
\] 
By semantics, this implies there is $\dot{\rho}\supseteq\rho$ with $\upsilon|\upsilon'\models_{\dot{\rho}} \Both{S}\land \Agr\effe^\rightarrow_\delta \land \N$.
So we have (\.{b}) and (\.{c}). 
Finally, $\dot{\rho}$ satisfies the agreements of (\.{d}) and (\.{e}); this follows from $\upsilon|\upsilon'\models_{\dot{\rho}} \Agr\effe^\rightarrow_\delta$
for reasons that are spelled out in detail in proving the induction step of Lemma~\ref{lem:rloceq} in the case of \rn{bCallS},
starting around the displayed formula (\ref{eq:rloceqEta}).

\medskip

Having proved the Claim, we prove validity of the conclusion (\ref{eq:rSOFconcl}) of \rn{rSOF}.

\medskip
\emph{Safety}.
Suppose 
$\configr{\Syncbi{C}}{\sigma}{\sigma'}{\_}{\_}
\biTranStar{\phi^+}
\configr{BB}{\tau}{\tau'}{\mu}{\mu'}$.
We show by contradiction the latter configuration cannot fault.

\underline{Case: fault by a non-call step.}
Then the faulting step can also be taken via $\phi^-$, 
and it is reached via $\phi^-$ owing to the Claim (a), 
but a faulting trace via $\phi^-$ contradicts the premise (\ref{eq:rSOFpremise}).  

\underline{Case: fault by a context call to some $m$ in $\Phi$.}  
Then the step can also be taken via $\phi^-$, again contradicting the premise. 

\underline{Case: fault by a context call to some $m$ in $\Theta$.}  
Let the spec of $m$ be $\flowty{R}{S}{\effe}$, so the relational precondition  
is $\R\land \N$ where $\R$ is given by (\ref{eq:rSOFloceq}).
Because $\phi^+$ is a context model, the call only faults 
if there are no $\dot{\rho},\ol{u},\ol{u}'$ such that 
$\tau|\tau'\models_{\dot{\rho}} \R\hacky \land \N$
(see transition rule \rn{bCallX}).
By the snapshot uniqueness Lemma~\ref{lem:uniquespeconlyR},
values $\ol{u},\ol{u}'$ exist and are uniquely determined by $\tau,\tau'$.  
So the call only faults if there is no $\dot{\rho}$ such that $\hat{\tau}|\hat{\tau}'\models_{\dot{\rho}} \R\land \N$
where $\hat{\tau},\hat{\tau}'$ are the states extended with 
$\ol{u},\ol{u}'$ for the snapshot variables.
But we have $\rho$ and 
can show $\hat{\tau}|\hat{\tau}'\models_\rho \R\land \N$ as follows.
We have $\hat{\tau}|\hat{\tau}'\models_\rho \N$ by Claim (b).
We have 
$\hat{\tau}|\hat{\tau}'\models_\rho \Both{(s_{\lloc}^m=\lloc \land\snap^m(\effe))}$ 
in accord with our choice of the correct snapshot values.
To show the conjunct 
$\hat{\tau}|\hat{\tau}'\models_\rho  \Both{R}$,
we can apply the premise, in particular Safety:
there must be some refperm for which 
$\hat{\tau}|\hat{\tau}'$ satisfy $\Both{R}$,
because otherwise the call would fault via $\phi^-$, contrary to the premise (\ref{eq:rSOFpremise}).
Now we get $\hat{\tau}|\hat{\tau}'\models_\rho \Both{R}$
because $\Both{R}$ is refperm independent.  
It remains to show the conjunct 
$\hat{\tau}|\hat{\tau}'\models_\rho \Agr \effe^\leftarrow_\delta$, 
that is, $\hat{\tau}|\hat{\tau}'\models_\rho \Agr\reads(\effe)\setminus\delta^\oplus$.
We have r-safety for the trace up to $\tau,\tau'$, by Claim (a) and the premise.  
By r-safety we have 
$\rlocs(\tau,\effe^\leftarrow_\delta)\subseteq 
(\freshLocs(\sigma,\tau) \union \rlocs(\sigma,\eff))\setminus\rlocs(\tau,\delta^\oplus))$
and 
$\rlocs(\tau',\effe^\leftarrow_\delta)\subseteq 
(\freshLocs(\sigma',\tau') \union \rlocs(\sigma',\eff))\setminus\rlocs(\tau',\delta^\oplus))$.
So by Claim (d) and (e)
we get $\hat{\tau}|\hat{\tau}'\models_{\rho} \Agr \effe^\leftarrow_\delta $.

\medskip
\emph{Post}.
For all $\tau,\tau'$ such that 
$\configr{\Syncbi{C}}{\sigma}{\sigma'}{\_}{\_} \biTranStar{\phi^+} \configr{\syncbi{\skipc}}{\tau}{\tau'}{\_}{\_}$,
we must show
$\tau|\tau'\models_\pi \later(\Both{Q}\land\Agr\eff^\rightarrow_\delta\land\N)$.
Applying the Claim to this trace we obtain $\rho$ such that conditions (a)--(e) 
hold for $\tau,\tau'$.
We will show
$\tau|\tau'\models_\rho \Both{Q}\land\Agr\eff^\rightarrow_\delta\land\N$;
our obligation then follows by semantics of $\later$,
using $\rho\supseteq\pi$ from (b).

We have $\tau|\tau'\models_\rho \N$ by (b).  
By (a) we can instantiate the premise (\ref{eq:rSOFpremise}) which yields 
$\tau|\tau'\models_\pi \later(\Both{Q}\land\Agr\eff^\rightarrow_\delta)$.
This implies 
$\tau|\tau'\models_\rho \Both{Q}$
because $\Both{Q}$ is refperm independent.
Finally, we get 
$\tau|\tau'\models_\rho \Agr\eff^\rightarrow_\delta$ 
as a consequence of (d) and (e) by essentially the same argument as 
the one spelled out in the proof of Post for rule \rn{rLocEq} (Sect.~\ref{sec:rLocEqSound}).

\medskip
\emph{Write, R-safe, and Encap}.  These are obtained directly from the premise,
using the Claim. 
Note that $\Phi,\Theta\conjInv \N$ has the same methods, and thus the same modules, as $\Phi,\Theta$ has, so the Encap conditions have exactly the same meaning for the conclusion of the rule as for the premise.

\subsection{Soundness of \rn{rPoss}, \rn{rDisj}, and \rn{rConj}}

For \rn{rPoss}, 
assume validity of the premise: $\Phi\rHVflowtr{}{M}{\P}{CC}{\Q}{\eff|\eff'}$.
To prove validity of the conclusion
$\Phi\rHVflowtr{}{M}{\later\P}{CC}{\later\Q}{\eff|\eff'}$,
consider any $\Phi$-model $\phi$.
Consider any $\sigma,\sigma',\pi$ such that $\sigma|\sigma'\models_\pi\later\P$.
By formula semantics, there is $\rho\supseteq\pi$ such that $\sigma|\sigma'\models_\rho\P$.
The Safety, Write, and Encap conditions now follow by instantiating the premise
with $\phi$ and $\rho$.
For Post, the premise yields that for terminal state pair $\tau|\tau'$ we have
$\tau|\tau'\models_\rho \Q$.  
This implies $\tau|\tau'\models_\pi\later \Q$ since $\rho\supseteq\pi$.  

For \rn{rDisj}, 
suppose $\phi$ is a $\Phi$-model and 
suppose 
$\sigma|\sigma'\models_\pi \P_0\lor\P_1$.
By semantics of formulas, 
either $\sigma|\sigma'\models_\pi \P_0$
or $\sigma|\sigma'\models_\pi \P_1$,
so we can instantiate one of the premises using $\phi$.
It is straightforward to check that the conditions of Def.~\ref{def:validR} 
for the conclusion follow directly from the premise.
Note that the propositional connectives have classical semantics in relational formulas, as they do in unary formulas.

For \rn{rConj} the argument is similar.

\subsection{Soundness of \rn{rFrame}}

All conditions except Post are easy consequences of the premise.
For Post, suppose $\sigma|\sigma'\models_\pi \P\land\R$ 
and $\configr{CC}{\sigma}{\sigma'}{\_}{\_} \biTranStar{\phi}
\configr{\syncbi{\skipc}}{\tau}{\tau'}{\_}{\_}$.
By Write we have $\sigma\allowTo\tau\models\eff$ and 
$\sigma'\allowTo\tau'\models\eff'$ (as well as $\sigma\successorTo\tau$
and $\sigma'\successorTo\tau'$ of course).
By the rule's condition
$\P\land\R \imp \leftF{\ind{\effe}{\eff}} \land \rightF{\ind{\effe'}{\eff'}}$,
we can use fact (\ref{eq:sepagree}) to get 
$\agree(\sigma,\tau,\effe)$ and $\agree(\sigma',\tau',\effe')$.
So by $\P\models \fra{\effe|\effe'}{\R}$ and
semantics of this judgment
we get 
$\tau|\tau'\models_\pi \R$.  
We have $\tau|\tau'\models_\pi \Q$ by Post  for the premise.

\subsection{Soundness of \rn{rEmb} and \rn{rEmbS}}

We prove \rn{rEmb} (Fig.~\ref{fig:proofrulesR}).  The argument for \rn{rEmbS} (Fig.~\ref{fig:proofrulesRapp}) is similar.

Suppose $\Phi_0\HVflowtr{}{M}{P}{C}{Q}{\eff}$
and $\Phi_1\HVflowtr{}{M}{P'}{C'}{Q'}{\eff'}$.
To show validity of the conclusion,  $\Phi\rHVflowtr{}{M}{\leftF{P}\land\rightF{P'}}{\splitbi{C}{C'}}{\leftF{Q}\land\rightF{Q'}}{\eff|\eff'}$,
consider any $\Phi$-model $\phi$ and any $\sigma,\sigma',\pi$ such that
$\sigma|\sigma'\models_\pi \leftF{\subst{P}{\bar{s}}{\bar{v}}}\land\rightF{\subst{P'}{\bar{s'}}{\bar{v'}}}$.
By biprogram semantics, $\splitbi{C}{C'}$ goes by dovetailed steps of $C$ via $\phi_0$ (rule \rn{bComL}) and steps of $C'$ via $\phi_1$ (rules \rn{bComR} and \rn{bComR0}).
All reached configurations are in the bi-com form. 
For Safety, observe that if fault is reached it is by \rn{bComLX} or \rn{bComRX},
so by projection we obtain a faulting trace either of $C$ or of $C'$, contrary to the premises.
For Post and Write, 
suppose $\configr{\splitbi{C}{C'}}{\sigma}{\sigma'}{\_}{\_} \biTranStar{\phi}
\configr{\syncbi{\skipc}}{\tau}{\tau'}{\_}{\_}$.
Then by projection we obtain terminated traces (via $\phi_0$ and $\phi_1$ respectively) to which the premises apply. This yields $\sigma\allowTo\tau \models \eff$ and $\sigma'\allowTo\tau' \models \eff'$ (proving Write) and $\tau\models \subst{Q}{\bar{s}}{\bar{v}}$ and $\tau'\models \subst{Q'}{\bar{s'}}{\bar{v'}}$ so that $\tau|\tau'\models_\pi \leftF{\subst{Q}{\bar{s}}{\bar{v}}}\land\rightF{\subst{Q'}{\bar{s'}}{\bar{v'}}}$ (proving Post).
For every trace from $\configr{\splitbi{C}{C'}}{\sigma}{\sigma'}{\_}{\_}$ consider its projections which are unary traces from $\configm{C}{\sigma}{\_}$ via $\phi_0$ and $\configm{C'}{\sigma'}{\_}$ via $\phi_1$. Then both R-safe and Encap follow using R-safe and Encap for the unary traces to which the premises apply.


\subsection{Soundness of \rn{rCall}}


\[\rulerCall\]
Let the current module be $N$ in all three judgments. 

Suppose $\Phi_2(m)$ is $\rflowtr{\P}{m}{\Q}{\eff}$.
Let $\phi$ be a $\Phi$-model and suppose $\sigma,\sigma'\models_\pi \P$.
Because $\phi$ is a $\Phi$-model (Def.~\ref{def:ctxinterpRel}),
$\phi_2(m)(\sigma|\sigma')$ does not contain $\Fault$.
Moreover, execution from $\configr{\syncbi{m()}}{\sigma}{\sigma'}{\_}{\_}$
either goes by \rn{bCallS} to a terminated state, or by \rn{bCall0} repeating the configuration
$\configr{\syncbi{m()}}{\sigma}{\sigma'}{\_}{\_}$ unboundedly.
So Safety holds.
We also get Post and Write by definition of context model.
R-safety requires $\rlocs(\sigma,\effe)\subseteq\rlocs(\sigma,\effe)$ and 
$\rlocs(\sigma',\effe')\subseteq\rlocs(\sigma',\effe')$ which hold.

Encap is more interesting, as it is not a direct consequence of $\phi$ being a context model.
Encap imposes conditions on the unary projections of every trace from 
$\configr{\syncbi{m()}}{\sigma}{\sigma'}{\_}{\_}$.
By projection Lemma~\ref{lem:bi-to-unary}, or indeed by unary compatibility of the context model,
the premises of \rn{rCall} apply to these traces---and yield all the Encap conditions.

\subsection{Soundness of \rn{rIf}}

\[\rulerIf\]

As in the unary rule \rn{If}, the separator
$\ind{\unioneff{N\in\Phi,N\neq M}{\bnd(N)}}{\rTow(\ftpt(E))}$ and its counterpart simplify 
to true or false.
In virtue of condition $\P\imp E\eqbi E'$, every biprogram trace from states satisfying $\P$
begins with a step going to $CC$ via \rn{bIfT} or a step going to $DD$ via \rn{bIfF}; it cannot fault via \rn{bIfX} 
which is for tests that disagree.
Subsequent steps satisfy all the conditions Safety, Post, Write, R-safe because these are the same as the conditions for the premises $CC$ and $DD$.  
Encap for the conclusion is almost the same condition as for the premise, the only difference being that the frame condition $\eff|\effe'$ for the premise is a subeffect of the one for the conclusion.
So Encap for the conclusion follows from the premises by an argument like that for soundness of rule \rn{rConseq}.  

The first step clearly satisfies Safety, Post, Write, and R-safe.
To show the first step satisfies Encap, boundary monotonicity and w-respect are immediate because the step does not change the state.  For r-respect, we need that alternate executions follow the same control path---and this is ensured by separator conditions, for reasons spelled out in detail in the proof of \rn{If}.

\subsection{Soundness of \rn{rLink}}\label{sec:app:rLink}

\begin{mathpar}
\rulerLink
\end{mathpar}

The rule caters for different specs on left and right, subject to the constraints of Def.~\ref{def:wfhypRel}.
For \rn{rMLink}, we instantiate $\Theta_2(m)$ to something of the form $locEq(...)\conjInv \M$,
for coupling relation $\M$, and the operation $\conjInv \M$ conjoins $\Left{\M}$ and $\Right{\M}$ to the unary specs.
Some unary ingredients appear in the premises and side conditions but are not directly used in the conclusion: $P$, $Q$, and $\dot{\Phi}$ and $\dot{\Theta}$.
These ensure that the specs are strengthenings of a local equivalence spec.

\begin{remark}\upshape
This version of the rule includes unary premises for $B$ and $B'$.  
These are used only to obtain unary models (of $\Theta_0(m)$ and $\Theta_1(m)$), which are formally required 
in order to define a full context model of $\Theta$ (using Lemma~\ref{lem:denotBiprog}).
As the proof shows, execution of $\Syncbi{C}$ remains fully aligned (except during environment calls to $m$) and all calls are sync'd, 
so the unary models have no influence on the traces used in the proof.
In future work we expect to eliminate these unary premises by revisiting the definitions of compatibility for context models (Def.~\ref{def:interpRel}),
and adjusting the well-formedness conditions for contexts (Def~\ref{def:wfhypRel}) 
and definition of covariant implication (Def.~\ref{def:prePostImply})
for a better fit with compatibility.
\qed\end{remark}

In the following proof of \rn{rLink} we assume there are no recursive calls in $B$ or $B'$.
To allow recursion, one should use a fixpoint construction for 
the denotational semantics (as in proof of linking for impure methods in \RLIII)
and an extra induction on calling depth (as in the linking proofs in \RLII\ and \RLIII).
This adds complication but does not shed light; and there are plenty other complications that do deserve to be spelled out carefully.

As in the unary semantics, we say a biprogram trace is \dt{$m$-truncated} iff the last configuration does not contain $\Endcall(m)$.  
In general, there may be unary environment calls
and $\Endcall(m)$ may occur inside a bi-com, as in $\splitbi{\skipc}{B;\Endcall(m);C};DD$.

Consider any $\Phi$-model $\phi$.   
Let $\theta_0(m)$ and $\theta_1(m)$ be the models of $\Theta_0(m)$ and $\Theta_1(m)$
from the denotations of $B$ and $B'$, by Lemma~\ref{lem:denotComm},
using the unary premises for $B$ and $B'$, and side conditions about imports.  
Let $\theta$ be the bi-model of $m$ given by Lemma~\ref{lem:denotBiprog}(i)
for the denotation of $\splitbi{B}{B'}$ in $\phi$, for which we use that 
each method's relational precondition implies its unary preconditions (which holds because $\Phi$ is wf, see Def.~\ref{def:wfhypRel}).
Owing to validity of $\Phi,\Theta \proves_N \splitbi{B}{B'} : \Theta_2(m)$, we have that 
$(\phi,\theta)$ is a $(\Phi,\Theta)$-model by Lemma~\ref{lem:denotBiprog}(ii).

In the rest of the proof, no further use is made of the unary premises for $B$ and $B'$.

To introduce identifiers for the relational spec of $m$, 
suppose $\Phi_2(m)$ is $\rflowty{\R}{\S}{\eta|\eta'}$.
For clarity we follow a convention also used the in proof of unary \rn{Link}: environments that contain $m$ have dotted names like $\dot{\mu}$ and the corresponding environment without 
$m$ has the same name without dot.

\textbf{Claim:} Let $\sigma, \sigma', \pi$ be such that
$\hat{\sigma}|\hat{\sigma}'\models_\pi \P$, where $\hat{\sigma}$  
is $\extend{\sigma}{\ol{s}}{\ol{v}}$ and $\hat{\sigma}'$  
is $\extend{\sigma}{\ol{s}'}{\ol{v}'}$ for the unique values $\ol{v},\ol{v}'$ determined by $\sigma,\sigma'$ for the spec-only variables $\ol{s},\ol{s}'$ of $\P$.
Suppose
\[\configr{\Syncbi{C}}{\sigma}{\sigma'}{[m\scol B]}{[m\scol B']} \biTranStar{\phi} \configr{DD}{\tau}{\tau'}{\dot{\mu}}{\dot{\mu'}}\]
is $m$-truncated (for some $DD,\tau,\tau',\dot{\mu},\dot{\mu}'$).  
Then
\(\configr{\Syncbi{C}}{\sigma}{\sigma'}{\_}{\_} \biTranStar{\phi\theta} \configr{DD}{\tau}{\tau'}{\mu}{\mu'}\), 
where $\mu = \drop{\dot{\mu}}{m}$ and $\mu' = \drop{\dot{\mu}'}{m}$, and
$DD = \Syncbi{D}$ for some $D$.
Moreover, if $D \equiv m();D_0$ for some $D_0$ then
there is $\rho$ such that 
$\tau|\tau'\models_\rho \R$.

\emph{Proof of Claim:}
by induction on the number of completed top-level calls of $m$.
(Since we are not considering recursion, all calls are top level.)
The steps taken in code of $\Syncbi{C}$ can be taken via $\biTrans{\phi\theta}$ because the two transition relations are identical except for calls to $m$.  
By induction hypothesis, any call is in sync'd form, and a completed call from $\syncbi{m()}$ amounts to a terminated execution of $\splitbi{B}{B'}$.
Thus a completed call gives rise to a single step via $(\phi,\theta)$ with the same outcome, because $\theta_2(m)$ is defined to be the denotation of $\splitbi{B}{B'}$, which is defined directly in terms of executions of $\splitbi{B}{B'}$---provided that the precondition $\R$ of $m$ holds.  
The premise for $\Syncbi{C}$ is applicable to the trace via $\phi,\theta$, 
so the precondition $\R$ must hold---because otherwise that trace could fault, contrary to the premise for $\Syncbi{C}$.
It remains to show that at $DD$ is $\Syncbi{D}$ for some $D$.  
For this we appeal to lockstep alignment Lemma~\ref{lem:rloceq}.  
Let $U$ and $V$ be the unary projections of this trace.
By validity of the premise for $\Syncbi{C}$ we get that 
$U$ (resp.\ $V$) satisfies r-safe for $((\Phi_0,\Theta_0),\eff,\sigma)$
(resp.\ $((\Phi_1,\Theta_1),\eff,\sigma')$)
and respect for $((\Phi_0,\Theta_0), \emptymod, (\phi_0,\theta_0), \eff, \sigma)$ 
(resp.\ $((\Phi_1,\Theta_1), \emptymod, (\phi_1,\theta_1), \eff, \sigma')$).
By side condition of \rn{rLink}, $C$ is let-free.
Thus the assumptions are satisfied for the instantiation 
$\Phi\eqdef(\Phi,\Theta)$ of Lemma~\ref{lem:rloceq}, which yields that 
$DD$ is $\Syncbi{D}$ for some $D$.  
The Claim is proved.

\medskip
\emph{Post}.
Consider any $\phi,\sigma,\sigma',\pi$ 
with $\hat{\sigma}|\hat{\sigma}'\models_\pi \P$ 
(where $\hat{\sigma}$ is $\extend{\sigma}{\ol{s}}{\ol{v}}$ 
and $\hat{\sigma}'$ is $\extend{\sigma}{\ol{s}'}{\ol{v}'}$ for the unique values $\ol{v},\ol{v}'$ determined by $\sigma,\sigma'$ for the spec-only variables $\ol{s},\ol{s}'$ of $\P$).
A terminated trace of the linked program has the form
\[
\begin{array}{ll}
\configr{\letcombi{m}{\splitbi{B}{B'}}{\Syncbi{C}}}{\sigma}{\sigma'}{\_}{\_}
& \biTrans{\phi}
\configr{\Syncbi{C};\syncbi{\Endlet(m)}}{\sigma}{\sigma'}{[m\scol B]}{[m\scol B']} \\
& \biTranStar{\phi}
\configr{\syncbi{\Endlet(m)}}{\tau}{\tau'}{[m\scol B]}{[m\scol B']} \\
& \biTrans{\phi}
\configr{\syncbi{\skipc}}{\tau}{\tau'}{\_}{\_}
\end{array}
\]
By semantics we obtain 
\(
\configr{\Syncbi{C}}{\sigma}{\sigma'}{[m\scol B]}{[m\scol B']} 
\biTranStar{\phi}
\configr{\syncbi{\skipc}}{\tau}{\tau'}{[m\scol B]}{[m\scol B']} 
\).
This is $m$-truncated.  By the Claim, we have
\(
\configr{\Syncbi{C}}{\sigma}{\sigma'}{\_}{\_} 
\biTranStar{\phi\theta}
\configr{\syncbi{\skipc}}{\tau}{\tau'}{\_}{\_} 
\).
By the premise for $\Syncbi{C}$ we get $\hat{\tau}|\hat{\tau}'\models_\pi \Q$
where $\hat{\tau},\hat{\tau}'$ are the extensions using $\ol{v},\ol{v}'$.

\medskip
\emph{Write}.
Very similar to the argument for Post.

\medskip
\emph{Safety}.
As the steps for $\keyw{let}$ and $\Endlet$ do not fault, a faulting execution gives one of the form
\[
\configr{\Syncbi{C}}{\sigma}{\sigma'}{[m\scol B]}{[m\scol B']} 
\biTranStar{\phi}
\configr{DD}{\tau}{\tau'}{\dot{\mu}}{\dot{\mu}'}
\biTrans{\phi} \Fault
\]
We show this contradicts the premises, by cases on whether the trace up to $DD$ is $m$-truncated.

\underline{Case m-truncated.} 
The active command of $D$ (equivalently, of $\Syncbi{D}$) is not a call to $m$ because an environment call does not fault on its first step; it goes by rule \rn{bCallE}.
By the Claim, we have
$\configr{\Syncbi{C}}{\sigma}{\sigma'}{\_}{\_} 
\biTranStar{\phi\theta}
\configr{DD}{\tau}{\tau'}{\mu}{\mu'}
$.
Because the active command is not a call to $m$, the step
$\configr{DD}{\tau}{\tau'}{\dot{\mu}}{\dot{\mu}'}\biTrans{\phi}\Fault$
can also be taken via $\biTrans{\phi\theta}$.
But then we have a faulting trace that contradicts the premise for $\Syncbi{C}$.

\underline{Case not m-truncated.} 
A trace with an incomplete call of $m$ has the following form.
(Here we rely on the Claim to write parts in fully aligned form.)
\[
\begin{array}{ll}
\configr{\Syncbi{C}}{\sigma}{\sigma'}{[m\scol B]}{[m\scol B']} 
&\biTranStar{\phi}
\configr{\syncbi{m()};\Syncbi{D_0}}{\tau_0}{\tau'_0}{\dot{\mu}}{\dot{\mu}'} \\
& \biTrans{\phi}
\configr{\splitbi{B}{B'};\Syncbi{D_0}}{\tau_0}{\tau'_0}{\dot{\mu}_0}{\dot{\mu}'_0} 
 \biTranStar{\phi}
\configr{BB_0;\Syncbi{D_0}}{\tau}{\tau'}{\dot{\mu}}{\dot{\mu}'} 
 \biTrans{\phi}
\Fault 
\end{array}
\]
with $BB_0\nequiv \syncbi{\skipc}$. 
Applying the Claim to the $m$-truncated prefix we get $\tau_0|\tau'_0\models_\rho\R$ for some $\rho$.  
By semantics we get
\(
\configr{\splitbi{B}{B'}}{\tau_0}{\tau'_0}{\dot{\mu}_0}{\dot{\mu}'_0} 
\biTranStar{\phi}
\configr{BB_0}{\tau}{\tau'}{\dot{\mu}}{\dot{\mu}'} 
\biTrans{\phi}
\Fault 
\).
Now, $\splitbi{B}{B'}$ has no calls to $m$---because we are proving soundness assuming there is no recursion.
So the same transitions can be taken via $\trans{\phi\theta}$.
But then we get a faulting trace that contradicts the premise for $\splitbi{B}{B'}$.

\medskip\emph{R-safety}.
For any trace $T$ of $\letcombi{m}{\splitbi{B}{B'}}{\Syncbi{C}}$
from $\sigma,\sigma'$ satisfying $\P$, we must show that the left projection $U$ and right projection $V$ is r-safe for $(\Phi_0,\eff,\sigma)$ and $(\Phi_1,\eff,\sigma')$ respectively.  
Observe that the premises for $\Syncbi{C}$ and for $\splitbi{B}{B'}$ give r-safety of their left projections,
for $((\Phi_0,\Theta_0),\eff,\sigma)$, and r-safety of their right projection for $((\Phi_1,\Theta_1),\eff,\sigma')$).
For methods of $\Phi$, by definition of r-safety, these are the same conditions as 
r-safety for $(\Phi_0,\eff,\sigma)$ and for $(\Phi_1,\eff,\sigma')$.
Let us consider $U$, as the argument for $V$ is symmetric.
We must show the r-safety condition for any configuration, say $U_i$.
Let $\dot{T}$ the prefix of $T$ such that $U_i$ is aligned (by projection Lemma) with the last configuration of $\dot{T}$.  
Now go by cases on whether $\dot{T}$ is $m$-truncated.  

\textbf{case} $\dot{T}$ is $m$-truncated.  If the last configuration is calling $m$ there is nothing to prove.  Otherwise, that configuration is not within a call of $m$, so by the Claim we get from $\dot{T}$ a trace
$\ddot{T}$ of $\Syncbi{C}$ via $\biTrans{\phi}$
that ends with the same configuration.
Now can appeal to r-safety from the premise for $\Syncbi{C}$ and we are done.
(The claim does not address the first step of 
$\letcombi{m}{\splitbi{B}{B'}}{\Syncbi{C}}$, but that satisfies r-safety by definition.)

\textbf{case} $\dot{T}$ is not $m$-truncated.  
So a suffix of $\dot{T}$ is an incomplete environment call of $m$, say at position $j$.
By the Claim, the call is sync'd (and $m$'s relational precondition holds),
so the code of $\dot{T}_j$ has the form $\syncbi{m()};DD$ for some continuation code $DD$,
and the following steps execute starting from $\splitbi{B}{B'};DD$
(by transition rule \rn{bCallE}).
By dropping ``$;DD$'' from each configuration we obtain a trace of $\splitbi{B}{B'}$ 
that includes configuration $\dot{T}_j$.  Now we can appeal to r-safety from the premise for $\splitbi{B}{B'}$ and we are done.

\medskip\emph{Encap}.
For any trace of $\letcombi{m}{\splitbi{B}{B'}}{\Syncbi{C}}$
from $\sigma,\sigma'$ satisfying $\P$, we must show that the left projection 
respects $(\Phi_0,\emptymod,\phi_0,\eff,\sigma)$ and the right 
respects $(\Phi_1,\emptymod,\phi_1,\eff,\sigma')$.
The proof is structured similarly to the proof of R-safe, 
though it is a bit more intricate.

Observe that the premises yield respect of 
$((\Phi_0,\Theta_0),\emptymod,(\phi_0,\theta_0),\eff,\sigma)$ and
$((\Phi_1,\Theta_1),\emptymod,(\phi_1,\theta_1),\eff,\sigma')$.
By contrast with the argument above for r-safety, where the meaning of the condition for the conclusion is very close to its meaning for the premises,
for respect there are two significant differences.
First, the respect condition depends on the current module $\emptymod$, and the judgment for $\splitbi{B}{B'}$ is for a possibly different module.  
Second, respect depends on the modules in context, and by side conditions of the rule 
the modules of $\Phi$ are not the same as those of $(\Phi,\Theta)$.
Fortunately, these differences are exactly the same in the setting of rule \rn{Link}.
The proof Encap for \rn{Link} (Sect.~\ref{sec:app:link})
shows in detail how respect, for traces of $\letcom{m}{B}{C}$,
follows from respect for traces of $B$ and for traces of $C$ in which calls to $m$ are context calls.

Now we proceed to prove Encap.
For any trace $T$ of $\letcombi{m}{\splitbi{B}{B'}}{\Syncbi{C}}$
from $\sigma,\sigma'$ satisfying $\P$, 
consider its left projection $U$ (the right having a symmetric proof),
which is a trace of $\letcom{m}{B}{C}$.
Consider any step in $U$, say $U_{i-1}$ to $U_i$.

If the step is an environment call to $m$,
i.e., the call is the active command of $U_{i-1}$, it satisfies respect
of $(\Phi_0,\emptymod,\phi_0,\eff,\sigma)$ by definitions and semantics. 
If the active command is $\Endcall(m)$ then again we get respect by definitions and semantics.
Otherwise, let $\dot{T}$ be the prefix of $T$ such that the last configuration corresponds with $U_i$,
and go by cases on whether $\dot{T}$ is $m$-truncated.

\textbf{case} $\dot{T}$ is $m$-truncated.
So the step is not within a call of $m$,
and is present in the trace $\ddot{T}$ given by the Claim.
So we can appeal to the premise for $\Syncbi{C}$. 
We get that the step respects $(\Phi_0,\emptymod,\phi_0,\eff,\sigma)$, using the arguments in the \rn{Link} proof
to connect with respect of $((\Phi_0,\Theta_0),\emptymod,(\phi_0,\theta_0),\eff,\sigma)$ in accord with the premise for $\Syncbi{C}$.

\textbf{case} $\dot{T}$ is not $m$-truncated.
As in the r-safety argument, 
we obtain a trace of $\splitbi{B}{B'}$ that includes the step in question, and it respects
$(\Phi_0,\emptymod,\phi_0,\eff,\sigma)$, using the arguments in the \rn{Link} proof to connect 
with respect of $((\Phi_0,\Theta_0),\mdl(m),(\phi_0,\theta_0),\eff,\sigma)$ in accord with the premise for $\splitbi{B}{B'}$.

\subsection{Soundness of \rn{rWeave}}\label{sec:rWeaveSound}

\[ 
\inferrule*[left=rWeave]{
  \Phi \rHPflowtr{}{}{\P}{DD}{\Q}{\eff|\eff'} \\
  CC \weave^* DD 
}{ 
  \Phi \rHPflowtr{}{}{\P}{CC}{\Q}{\eff|\eff'} 
}
\]

\begin{remark}\upshape
In general $\Phi \rHVflowtr{}{}{\P}{DD}{\Q}{\eff}$ and $DD \weave CC$ do not imply 
$\Phi \rHVflowtr{}{}{\P}{CC}{\Q}{\eff}$, for one reason:
$CC$ may assert additional test agreements that do not hold.
\qed\end{remark}

The crux of the soundness proof for rule \rn{rWeave} is soundness for a single weaving step,
$CC\weave DD$, which is Lemma~\ref{lem:weave-one} below.  Using the lemma, we can prove soundness of \rn{rWeave} by induction on the number of weaving steps $CC\weave^* DD$.  
In case of zero steps, $CC\equiv DD$ and the result is immediate.
In case of more than one steps, apply Lemma~\ref{lem:weave-one} and the induction hypothesis.

Before proving Lemma~\ref{lem:weave-one} we prove preliminary results.

\begin{lemma}[weave and project]\label{lem:weave-project}
\upshape
If $CC\weave DD$ then
$\Left{CC}\equiv \Left{DD}$ and 
$\Right{CC}\equiv \Right{DD}$.
\end{lemma}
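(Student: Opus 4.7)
My plan is to prove the lemma by induction on the derivation of $CC \weave DD$, following the definition of $\weave$ from Figure~\ref{fig:weave}, which consists of a collection of axioms plus a congruence rule with multiple conclusions. For each case I need only verify that $\Left{CC} \equiv \Left{DD}$, since the argument for $\Right{\cdot}$ is entirely symmetric.

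For the base cases (the axioms), the check is a direct computation using the syntactic projection definitions in Figure~\ref{fig:synProj}. For instance, for the axiom $\splitbi{A}{A} \weave \syncbi{A}$, both sides project on the left to $A$. For the sequence axiom $\Splitbi{C;D}{C';D'} \weave \splitbi{C}{C'};\splitbi{D}{D'}$, both sides project to $C;D$ (using that $\Left{\cdot}$ distributes over $;$). The bi-if axiom projects on both sides to $\ifc{E}{\Left{\splitbi{C}{C'}}}{\Left{\splitbi{D}{D'}}} \equiv \ifc{E}{C}{D}$. The bi-while axiom is similar, with the alignment guards $\P,\P'$ simply discarded by projection (they do not appear in the target of $\Left{\cdot}$). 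The let and var axioms likewise check out directly from the projection clauses.

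For the inductive cases, which come from the single congruence rule with its various conclusions, I use the IH: if $BB \weave CC$ then $\Left{BB} \equiv \Left{CC}$. Each conclusion embeds $BB$ (and correspondingly $CC$) into a larger context, and for each such context the projection distributes compositionally. For example, $BB;DD \weave CC;DD$ projects on the left to $\Left{BB};\Left{DD}$ and $\Left{CC};\Left{DD}$ respectively, which are identical by IH. The conditional, loop, let, and var contexts work analogously, each using exactly one clause of the projection definition and then the IH applied to the subterm in question.

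No step is hard: this is a straightforward structural induction. The only point requiring care is being consistent about the syntactic identifications of Figure~\ref{fig:synident} (e.g.\ $\splitbi{\skipc}{\skipc} \equiv \syncbi{\skipc}$), which interact with projection but do not break the invariant; and noting that alignment guards, while part of the bi-while syntax, are projected away and so pose no obstacle.
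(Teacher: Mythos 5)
Your proof is correct and follows exactly the paper's argument: a structural induction on the derivation of $\weave$, checking each axiom directly against the projection clauses of Fig.~\ref{fig:synProj} and handling the congruence rule's conclusions by distributing projection over the surrounding context and applying the induction hypothesis. The paper's own proof is the same sketch with fewer cases spelled out, so there is nothing to add.
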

\begin{proof}
By induction on the rules for $\weave$ (Fig.~\ref{fig:weave}), 
making straightforward use of the definitions of the syntactic projections.
As an example, 
for the if-else axiom we have\\
$\Left{ \Splitbi{ \ifc{E}{C}{D} }{ \ifc{E'}{C'}{D'} } }
\equiv \ifc{E}{C}{D}
\equiv \ifc{E}{\Left{\splitbi{C}{C'}}}{\Left{\splitbi{D}{D'}}}
\equiv \Left{ \ifcbi{E\smallSplitSym E'}{\splitbi{C}{C'} }{ \splitbi{D}{D'} }}$.
As an example inductive case, 
for the rule from $BB\weave CC$ infer $BB;DD \weave CC;DD$, we have
$\Left{BB;DD} \equiv \Left{BB};\Left{DD} \equiv \Left{CC};\Left{DD} \equiv \Left{CC;DD}$
where the middle step is by induction hypothesis.
\end{proof}

\begin{lemma}[trace coverage]\label{lem:coverage}
\upshape 
Suppose $\Phi \rHVflowtr{}{}{\P}{DD}{\Q}{\eff}$ and let $\phi$ be a $\Phi$-model.
Consider any $\pi$ and any $\sigma,\sigma'$ such that $\sigma|\sigma'\models_\pi \P$.
Let $U$ and $V$ be traces from $\configm{\Left{DD}}{\sigma}{\_}$
and $\configm{\Right{DD}}{\sigma'}{\_}$ respectively.
Then there is a trace $T$ from $\configr{DD}{\sigma}{\sigma'}{\_}{\_}$,
with projections $W,X$ such that $U\leq W$ and $V\leq X$.
\end{lemma}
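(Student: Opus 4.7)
The plan is to deduce this lemma from the trace embedding Lemma~\ref{lem:unary-to-bi} together with the Safety condition in the semantics of valid relational judgments (Def.~\ref{def:validR}). Specifically, I would apply Lemma~\ref{lem:unary-to-bi} to the biprogram configuration $\cfg \eqdef \configr{DD}{\sigma}{\sigma'}{\_}{\_}$ and the given unary traces $U$ (from $\configm{\Left{DD}}{\sigma}{\_}$ via $\phi_0$) and $V$ (from $\configm{\Right{DD}}{\sigma'}{\_}$ via $\phi_1$). The lemma yields a biprogram trace $T$ from $\cfg$, unary traces $W, X$ with $\Align(l,r,T,W,X)$ for some schedule $l,r$, and one of four cases (a)--(d).

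Case (a) is exactly the desired conclusion: $U \leq W$ and $V \leq X$. The plan is therefore to rule out cases (b), (c), and (d). In each of those cases, $T$'s last configuration faults next --- in (b) and (c) via one of the unary fault rules embedded in a one-sided biprogram step (\rn{bComLX} or \rn{bComRX}), and in (d) via an alignment fault (\rn{bCallX}, \rn{bIfX}, or \rn{bWhX}). In every case, we can extend $T$ by one further biprogram step to obtain $\configr{DD}{\sigma}{\sigma'}{\_}{\_} \biTranStar{\phi} \Fault$.

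Now I would invoke validity of the hypothesis $\Phi \rHVflowtr{}{}{\P}{DD}{\Q}{\eff}$: since $\phi$ is a $\Phi$-model and $\sigma|\sigma'\models_\pi \P$ (with the convention from Lemma~\ref{lem:uniquespeconlyR} for the unique values of spec-only variables), the Safety condition of Def.~\ref{def:validR} directly rules out any faulting trace from $\cfg$. This contradicts cases (b), (c), and (d), so only case (a) is possible, which gives the witness $T$ and its projections $W, X$ required by the lemma.

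The argument is essentially routine once one has Lemma~\ref{lem:unary-to-bi} and the semantics of validity in hand; the only subtle point to verify is that "faults next" in the embedding lemma is literally a single $\biTrans{\phi}$-step to $\Fault$, so that the extended computation $\cfg \biTranStar{\phi} T_{\mathit{last}} \biTrans{\phi} \Fault$ really is a faulting trace in the sense of Safety. The main substantive work was already done in proving Lemma~\ref{lem:unary-to-bi}; here one only has to harvest it.
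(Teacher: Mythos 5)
Your proof is correct and follows exactly the paper's argument: apply the embedding Lemma~\ref{lem:unary-to-bi} to $U,V$, then rule out cases (b), (c), and (d) because each would yield $\configr{DD}{\sigma}{\sigma'}{\_}{\_} \biTranStar{\phi} \Fault$, contradicting the Safety clause of Def.~\ref{def:validR} for the valid judgment on $DD$, leaving case (a). Your extra remark about ``faults next'' being a single $\biTrans{\phi}$-step to $\Fault$ is a correct reading of the definitions and only makes explicit what the paper leaves implicit.
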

\begin{proof}
Apply embedding Lemma~\ref{lem:unary-to-bi}
to $U,V$ to obtain $T,W,X$ satisfying one of the conditions (a), (b), (c), or (d)
in that Lemma.
Conditions (b), (c), and (d) contradict the premise, specifically Safety for $DD$.
That leaves condition (a) which completes the proof.
\end{proof}

\begin{lemma}[weave and trace]\label{lem:weave-trace}
\upshape
Suppose
$\Phi \rHVflowtr{}{}{\P}{DD}{\Q}{\eff}$
and $CC \weave DD$ or $DD \weave CC$.
Consider any $\Phi$-model $\phi$.
Consider any $\pi$ and any $\sigma,\sigma'$ such that $\sigma|\sigma'\models_\pi \P$.
Consider any trace $S$ from
$\configr{CC}{\sigma}{\sigma'}{\_}{\_}$
and let $U,V$ be the projections of $S$ according to the projection Lemma~\ref{lem:bi-to-unary}.
Then there is a trace $T$ from $\configr{DD}{\sigma}{\sigma'}{\_}{\_}$,
with projections $W,X$ such that $U\leq W$ and $V\leq X$.
\end{lemma}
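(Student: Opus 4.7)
The plan is to combine the projection-preservation property of weaving (Lemma~\ref{lem:weave-project}) with the trace coverage result (Lemma~\ref{lem:coverage}) for $DD$, with essentially no further work needed. The key observation is that weaving only rearranges biprogram-level alignment structure; it leaves the underlying unary code unchanged.

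First I would apply Lemma~\ref{lem:weave-project} to whichever direction of weaving holds (the statement is symmetric in $CC,DD$ and projection equality is symmetric, so the same conclusion follows in both cases): $\Left{CC} \equiv \Left{DD}$ and $\Right{CC} \equiv \Right{DD}$. Next, by trace projection (Lemma~\ref{lem:bi-to-unary}(b)) applied to the given trace $S$ from $\configr{CC}{\sigma}{\sigma'}{\_}{\_}$, we obtain unique unary traces $U$ via $\trans{\phi_0}$ from $\configm{\Left{CC}}{\sigma}{\_}$ and $V$ via $\trans{\phi_1}$ from $\configm{\Right{CC}}{\sigma'}{\_}$, together with a schedule aligning them to $S$. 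By the projection equality just noted, $U$ is also a trace from $\configm{\Left{DD}}{\sigma}{\_}$ and $V$ from $\configm{\Right{DD}}{\sigma'}{\_}$ under the same model $\phi$.

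Finally, I would invoke Lemma~\ref{lem:coverage} using the assumed judgment $\Phi \rHVflowtr{}{}{\P}{DD}{\Q}{\eff}$, the same $\Phi$-model $\phi$, the refperm $\pi$, the states $\sigma,\sigma'$ (which still satisfy $\P$), and these unary traces $U,V$. The lemma produces a biprogram trace $T$ from $\configr{DD}{\sigma}{\sigma'}{\_}{\_}$ whose projections $W,X$ satisfy $U \leq W$ and $V \leq X$, which is exactly the required conclusion.

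There is no substantive obstacle here; the argument is a clean composition of the two preparatory lemmas. The only point that needs a moment of care is checking that nothing in the argument depends on the direction of the weaving step: Lemma~\ref{lem:weave-project} is stated only for $CC \weave DD$, but the conclusion is an equality of syntactic projections, hence symmetric, so it transfers to the case $DD \weave CC$ without modification. Likewise, the coverage lemma is invoked with respect to $DD$ regardless of which direction one weaves, since that is the side carrying the validity hypothesis.
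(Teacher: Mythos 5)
Your proposal is correct and follows exactly the paper's own argument: apply Lemma~\ref{lem:weave-project} (noting the conclusion is symmetric in the weaving direction) to identify the unary projections of $CC$ and $DD$, then conclude by Lemma~\ref{lem:coverage} applied to $DD$ with the given $U,V$. No further comment is needed.
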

\begin{proof}
\begin{sloppypar}
Using $CC \weave DD$ or $DD \weave CC$, 
by Lemma~\ref{lem:weave-project} we have 
$\Left{\configr{DD}{\sigma}{\sigma'}{\_}{\_}} = \configm{\Left{CC}}{\sigma}{\_}$ and
$\Right{\configr{DD}{\sigma}{\sigma'}{\_}{\_}} = \configm{\Right{CC}}{\sigma}{\_}$
so we get the result by Lemma~\ref{lem:coverage}.
\end{sloppypar}
\end{proof}

Finally, we proceed to prove soundness for a single weaving step.
The hard case is Safety, for reasons explained in the proof.

\begin{lemma}[one weave soundness]\label{lem:weave-one}
\upshape
Suppose 
$\Phi \rHVflowtr{}{}{\P}{DD}{\Q}{\eff}$ and $CC \weave DD$.
Then $\Phi \rHVflowtr{}{}{\P}{CC}{\Q}{\eff}$.
\end{lemma}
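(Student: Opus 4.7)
Fix an arbitrary $\Phi$-model $\phi$, refperm $\pi$, and states $\sigma,\sigma'$ with $\sigma|\sigma'\models_\pi \P$ (extended with the unique snapshot values). I will check the five conditions of Def.~\ref{def:validR} for the conclusion one by one, leveraging two facts already in hand: by Lemma~\ref{lem:weave-project}, $\Left{CC}\equiv\Left{DD}$ and $\Right{CC}\equiv\Right{DD}$, so the two biprograms have the same unary starting configurations; and by Lemma~\ref{lem:weave-trace} (applied in the direction $CC\weave DD$), every trace $S$ from $\configr{CC}{\sigma}{\sigma'}{\_}{\_}$ has projections $U,V$ that are covered by the projections $W\geq U$, $X\geq V$ of some trace $T$ from $\configr{DD}{\sigma}{\sigma'}{\_}{\_}$.

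Post and Write are the easy cases. A terminating trace of $CC$ has projections $U,V$ ending in $\configm{\skipc}{\tau}{\_}$ and $\configm{\skipc}{\tau'}{\_}$; since $\skipc$ has no successor in the unary transition relation, any cover $W\geq U$ and $X\geq V$ must in fact equal them. So the $DD$-trace $T$ given by Lemma~\ref{lem:weave-trace} also terminates at $(\tau|\tau')$, and the premise's Post and Write clauses deliver $\tau|\tau'\models_\pi\Q$ and the write-effect inclusions. R-safe and Encap are also direct: each condition is a property of every configuration (resp.\ every step) of the unary projections of a trace, and since $W,X$ extend $U,V$, every configuration/step of $U,V$ already appears in $W,X$; so applying the premise's R-safe and Encap clauses to the projections of $T$ yields the same properties for the projections of $S$.

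The only substantive case is Safety. Suppose for contradiction that $\configr{CC}{\sigma}{\sigma'}{\_}{\_} \biTranStar{\phi} \Fault$, witnessed by a trace $S$ together with a final faulting step. Go by cases on which transition rule yields the fault. If the fault is a unary fault (\rn{bComLX}, \rn{bComRX}, or \rn{bSyncX}), then the last configuration of $U$ or $V$ has a faulting unary successor. The covering trace $T$ from $DD$ has a projection whose configuration at the appropriate index equals this one, so the same unary fault applies to $T$, and $T$ can be extended by the analogous biprogram fault rule to reach $\Fault$—contradicting Safety of $DD$. If the fault is \rn{bCallX} with $\phi_2(m)$ returning $\{\Fault\}$, the fault is determined by the paired state alone, so $T$'s configuration at the corresponding point faults likewise. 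If the fault is an alignment fault (\rn{bIfX} or \rn{bWhX}), then the active biprogram of $S_{last}$ is a bi-if or bi-while whose tests disagree in the current states; by inspection of the weaving axioms and congruence rules of Fig.~\ref{fig:weave}, every bi-if/bi-while subform of $CC$ is preserved in $DD$ with the same tests and alignment guards (weaving only transforms bi-coms to more aligned forms, never the other way), so the same alignment fault is reachable in $DD$, again contradicting Safety.

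\paragraph*{Main obstacle.}
The delicate point is the Safety case, specifically ensuring that a faulting step from a configuration $S_{last}$ of $CC$ can always be mirrored by a corresponding faulting step from the covering configuration in $T$. This requires a case analysis tied to the structure of the active biprogram and a careful use of Lemma~\ref{lem:weave-trace} to locate the right configuration in $T$'s projections. For unary faults one uses rule-determinism of $\trans{\phi}$ (Lemma~\ref{lem:determinacy}) together with the matched projection indices; for alignment faults one uses an auxiliary structural observation that weaving preserves every sync/bi-if/bi-while subform, which can be proved by a routine induction on the derivation of $CC\weave DD$ (parallel to Lemma~\ref{lem:weave-project}).
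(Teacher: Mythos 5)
Your treatment of Post, Write, R-safe, and Encap matches the paper's proof, as does the unary-fault branch of Safety. The gap is in the alignment-fault branch.

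There you argue: the bi-if/bi-while subforms of $CC$ survive into $DD$ with the same tests, hence ``the same alignment fault is reachable in $DD$,'' and you propose to back this with ``an auxiliary structural observation that weaving preserves every sync/bi-if/bi-while subform,'' proved by a routine syntactic induction parallel to Lemma~\ref{lem:weave-project}. That syntactic observation is true but does not yield the conclusion. An alignment fault via \rn{bIfX} or \rn{bWhX} is a property of the \emph{pair of states} $(\tau,\tau')$ at which the tests are evaluated, and weaving reorders the dovetailed interleaving of the underlying unary steps. Knowing that $DD$ contains the same bi-if does not tell you that when $DD$'s control reaches that bi-if, the left and right states are the same $\tau,\tau'$ that $CC$ reached --- nor even that $DD$'s execution gets that far, since $DD$ may alignment-fault earlier at one of the additional sync points the weaving introduced. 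What is needed is a \emph{semantic} correspondence between traces: from a trace $S$ of $CC$ one must construct a trace $T$ of $DD$ that either alignment-faults or matches $S$ at every sync point (position where the active biprogram is not a bi-com), with the \emph{same states, environments, and unary projections} at each such point. This is the paper's Lemma~\ref{lem:weave-sync}, stated and proved via the notion of alignment segmentation; its proof is an induction on the derivation of $CC\weave DD$, but the inductive cases (sequence, loop iterations, bi-let with non-empty environments) each require genuine work to re-segment the $DD$-trace, and the disjunctive conclusion (``or $T$ can alignment-fault'') is essential to make the induction go through. With that lemma in hand, the faulting configuration of $S$ is itself a sync point of the maximal segmentation, so the matching configuration of $T$ faults next, contradicting the premise --- which is how the paper closes the case. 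Your sketch identifies roughly where the induction lives but substitutes a purely syntactic invariant for the state-level one that actually carries the argument.
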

\begin{proof}
Suppose
$\Phi \rHVflowtr{}{}{\P}{DD}{\Q}{\eff}$ and $CC \weave DD$.
To show the conclusion $\Phi \rHVflowtr{}{}{\P}{CC}{\Q}{\eff}$,
consider any $\Phi$-model $\phi$.
Consider any $\pi$ and any $\sigma,\sigma'$ such that $\sigma|\sigma'\models_\pi \P$.

\emph{R-safe}.
Consider any trace $S$ of $CC$ from $\sigma,\sigma'$.
By Lemma~\ref{lem:weave-trace}, there is a trace $T$ of $DD$ such that 
every unary step in $S$ is covered by a step in $T$. So r-safety
follows from r-safety of the premise.

\emph{Encap}. Similar to R-safe.

\emph{Write and Post}
By Lemma~\ref{lem:weave-trace}, a terminated trace of $CC$ gives rise to one of $DD$ with the same final states, to which the premise applies.

\emph{Safety}. This requires additional definitions and results.
Faults by $CC$ may be alignment faults (rules \rn{bCallX}, \rn{bIfX}, \rn{bWhX})
or due to unary faults (\rn{bSyncX}, \rn{bComLX}, \rn{bComRX}).
The latter can be ruled out by reasoning similar to the above,
but alignment faults pose a challenge, because weaving rearranges 
the alignment of execution steps.
We proceed to develop some technical notions about alignment faults, 
and use them to prove Safety.

In most of this paper, we only need to consider traces from 
initial configurations $\configr{CC}{\sigma}{\sigma'}{\mu}{\mu'}$
where the environments are empty (written $\_$) and the code has no endmarkers.  
In the following definitions, we need to consider non-empty initial environments, and $CC$ may be an extended bi-program; in particular, $CC$ may include endmarkers.  (It turns out that we will not have occasion 
to consider an initial biprogram $CC$ that contains a right-bi-com.)
This is needed because, in the proof of Lemma~\ref{lem:weave-sync} below,
specifically the case of weaving the body of a bi-let, we apply the induction hypothesis to a trace in which the initial environments are non-empty.
The initial configuration of a trace must still be well formed: free variables in 
$CC$ should be in the states, and methods called in $CC$ must be in either the context or the environment and not in both.

Define a \dt{sync point} in a biprogram trace $T$ to be a position $i$,
$0\leq i < len(T)$,
such that one of the following holds:
\begin{itemize}
\item $i=0$ (i.e., $T_i$ is the initial configuration)
\item The configuration $T_i$ is terminal, i.e., has code $\syncbi{\skipc}$
\item $\Active(T_i)$ is not a bi-com, i.e., neither $\splitbi{-}{-}$ nor $\splitbir{-}{-}$.
Thus $\Active(T_i)$ may be $\syncbi{-}$, bi-if, bi-while, bi-let, or bi-var.
(By definition, the active biprogram is not a sequence.)

\item $i>0$ and the step from $T_{i-1}$ to $T_i$ completed the first part of a biprogram sequence.
That is, the code in $T_{i-1}$ has the form $CC;DD$ with $CC$ the active command, 
and the code in $T_i$ is $DD$.  
Such a transition is a transition from $CC$ to $\syncbi{\skipc}$ that is lifted to $CC;DD$ by
rule \rn{bSeq}.\footnote{One could make this more explicit by dropping the identification of $\syncbi{\skipc};DD$
with $DD$ and instead having a separate transition from $\syncbi{\skipc};DD$ to $DD$, but this would make extra cases in other proofs.}
Later we refer to this kind of step as a ``semi-colon removal''.
\end{itemize}

A \dt{segment} of a biprogram trace is just a list of configurations that occur contiguously in the trace.
A \dt{segmentation} of trace $T$ is a list $L$ of nonempty segments,
the catenation of which is $T$.
Thus, indexing the list $L$ from 0, the configuration 
$(L_i)_j$ is $T_{n+j}$ where $n = \Sigma_{0\leq k < i} len(L_k)$.
An \dt{alignment segmentation} of $T$ is a segmentation $L$ 
such that each segment in $L$ begins with a sync point of $T$.

For an example, using abbreviations
$A0\eqdef x:=0$, 
$A1\eqdef x:=1$, 
$A2\eqdef x:=2$
and omitting states/environments from the configurations, 
here is a trace with one of its alignment segmentations depicted by boxes:
\[
\begin{array}[t]{l}
\fbox{$\begin{array}{l} 
  \configc{\splitbi{A0}{A0};\ifcbi{x>0|x>0}{\splitbi{A1}{A1}}{\splitbi{A2}{A2}}} \\
  \configc{\splitbir{\skipc}{A0};\ifcbi{x>0|x>0}{\splitbi{A1}{A1}}{\splitbi{A2}{A2}}} \\
  \end{array}
$}\\
\fbox{$\begin{array}{l}
  \configc{\ifcbi{x>0|x>0}{\splitbi{A1}{A1}}{\splitbi{A2}{A2}}} \\
  \configc{\splitbi{A2}{A2}} \\
  \configc{\splitbir{\skipc}{A2}}
  \end{array}
$}\\
\fbox{$
  \configc{\syncbi{\skipc}}$}
\\
\end{array}
\]
Every trace has a minimal-length alignment segmentation consisting of the trace itself---a single segment---and also a maximal-length alignment segmentation (which has a segment for each sync point).  
(Keep in mind that we define traces to be finite.)
The above example, with three segments, is maximal.

As another example, here is a trace that faults next (because $x>0$ is false on the left but true on the right), with its maximal alignment segmentation.
\[ 
\begin{array}{l}
\fbox{$\begin{array}{l}
      \configc{\splitbi{x:=0}{x:=1};\ifcbi{x>0|x>0}{\syncbi{A1}}{\syncbi{A2}}} \\
      \configc{\splitbir{\skipc}{x:=1};\ifcbi{x>0|x>0}{\syncbi{A1}}{\syncbi{A2}}} 
     \end{array}$}
\\
\fbox{
$ \configc{\ifcbi{x>0|x>0}{\syncbi{A1}}{\syncbi{A2}}}$
}
\end{array}
\]
Note that a segment can begin with a configuration that contains end-markers whose beginning was in a previous segment.  
For example, 
\[
\begin{array}{l}
\fbox{$\begin{array}{l}
  \configc{ \varblockbi{x:T|x':T'}{ \splitbi{a}{b}; \splitbi{c}{d} } } \\
  \configc{ \splitbi{a}{b}; \splitbi{c}{d} ; \splitbi{\Endvar(x)}{\Endvar(x')} } \\
  \configc{ \splitbir{\skipc}{b}; \splitbi{c}{d} ; \splitbi{\Endvar(x)}{\Endvar(x')} } 
  \end{array}$} \\
\fbox{$\begin{array}{l}
   \configc{ \splitbi{c}{d} ; \splitbi{\Endvar(x)}{\Endvar(x')} } \\
\configc{ \splitbir{\skipc}{d} ; \splitbi{\Endvar(x)}{\Endvar(x')} } \\
\configc{ \splitbi{\Endvar(x)}{\Endvar(x')} } \\ 
\configc{ \splitbir{\skipc}{\Endvar(x')} } \\
\configc{ \syncbi{\skipc} } 
  \end{array}
$}
\end{array}
\]

In the following we sometimes refer to the left and right sides of a weaving as lhs and rhs.
A weaving $lhs\weave rhs$ introduces sync points in the biprogram's traces, but it does not remove sync points of $lhs$.   
Moreover, though it rearranges the order in which the underlying unary steps are taken, 
it does not change the states that appear at sync points.  
This is made precise in the following lemma which gives a sense in which 
weaving is directed (i.e., not commutative).

\begin{lemma}[weaving preserves sync points]\label{lem:weave-sync}
\upshape
Consider any pre-model $\phi$.
Consider any biprograms $CC$ and $DD$ such that $CC\weave DD$.
Let $S$ be a trace (via $\phi$) of $CC$ from some initial states and environments.
(No assumption is made about the initial states, and non-empty method environments are allowed.)
Let $L$ be the maximal alignment segmentation of $S$.  
Then there is a trace $T$ of $DD$ from the same states and environments, 
such that either 
\begin{list}{}{}
\item[(i)] the last configuration of $T$ can fault next, by alignment fault; or
\item[(ii)] there is an alignment segmentation $M$ of $T$ 
such that $M$ has the same length as $L$ and 
for all $i$, 
segment $M_i$ and segment $L_i$ begin with the same states, same environments, 
and same underlying unary programs, that is: 
\begin{equation}\label{eq:sameProj}
\Left{(L_i)_0} = \Left{(M_i)_0} \mbox{ and } \Right{(L_i)_0} = \Right{(M_i)_0} 
\end{equation}
\end{list}
\end{lemma}
Note that $M$ in Lemma~\ref{lem:weave-sync} need not be the maximal segmentation.
Typically $T$ will have additional sync points, but these are not relevant to the conclusion of the lemma.  What matters is that $T$ covers the sync points of $S$.
(Note that $T$ need not cover all the steps of $S$.)
As an example of the lemma, 
consider a biprogram of the form 
$\configc{\splitbi{A0}{A0};\ifcbi{x>0|x>0}{\splitbi{A1}{A1}}{\splitbi{A2}{A2}}}$.
It relates by $\weave$ to
\( \configc{\splitbi{A0}{A0};\ifcbi{x>0|x>0}{\splitbi{A1}{A1}}{\syncbi{A2}}} \) 
(by an axiom and the congruence rules for sequence and conditional).
From the same initial states (and empty environments), the latter biprogram 
has a shorter trace (owing to sync'd execution of $A2$) but that trace can still be segmented in accord with the lemma.
Its second segment has three configurations:
\[ \configc{\ifcbi{x>0|x>0}{\splitbi{A1}{A1}}{\syncbi{A2}}}
\configc{\syncbi{A2}}
\configc{\syncbi{\skipc}}
\]

We defer the proof of Lemma~\ref{lem:weave-sync} and use it to finish 
the proof of Lemma~\ref{lem:weave-one} by completing the proof of Safety.
As before, we assume
$\Phi \rHVflowtr{}{}{\P}{DD}{\Q}{\eff}$ and $CC \weave DD$.
To show the Safety condition for $\Phi \rHVflowtr{}{}{\P}{CC}{\Q}{\eff}$,
consider any $\Phi$-model $\phi$.
Consider any $\pi$ and any $\sigma,\sigma'$ such that $\sigma|\sigma'\models_\pi \P$.
Suppose $CC$ has a trace $S$ from $\sigma,\sigma'$ (and empty environments).
If $S$ faults next by a unary fault,
let its unary projections be $U,V$ (one of which faults next).
Then by Lemma~\ref{lem:weave-trace} the trace $T$ from $U,V$ must also fault next---and this contradicts the assumed judgment for $DD$.

Finally, suppose $S$ faults next by alignment fault.
Consider the maximal alignment segmentation of $S$ and let $T$ be the trace from $DD$ given by
Lemma~\ref{lem:weave-trace}.
By Lemma~\ref{lem:weave-sync} there is a segmentation of $T$ that covers each sync point of $S$, including the last configuration of $S$ which faults.  
But then $T$ faults next, contrary to the premise for $DD$.

This concludes the proof of Lemma~\ref{lem:weave-one} and thus soundness of \rn{rWeave}.
\end{proof}

\begin{proof} (Of Lemma~\ref{lem:weave-sync}.)
By induction on the derivation of the weaving relation $CC\weave DD$,
and by cases on the definition of $\weave$     starting with the axioms
(Fig.~\ref{fig:weave}).

\textbf{Case} weaving axiom $\splitbi{A}{A}\weave\syncbi{A}$.
For most atomic commands $A$, a trace $S$ of the lhs consists of an initial configuration $\configr{\splitbi{A}{A}}{\sigma}{\sigma'}{\mu}{\mu'}$,
possibly a second one with code
$\splitbir{\skipc}{A}$, and possibly a third one that is terminated (i.e., has code $\syncbi{\skipc}$).  
However, because the lemma allows non-empty environments, 
there is also the case that $A$ is an environment call to some $m$ in the domain of $\mu$ and of $\mu'$.
In that case, if $\mu(m)= B$ and $\mu'(m)=B'$,
then there are traces of the form 
$\configc{ \splitbi{m()}{m()} }
\configc{ \splitbir{B;\Endcall(m)}{m()} }
\configc{ \splitbi{B;\Endcall(m)}{B';\Endcall(m)} }
\ldots$.
Traces of the $\syncbi{m()}$ can have the form
$\configc{ \syncbi{m()} } \configc{ \splitbi{B}{B'} }\ldots$
but also, if $B'\equiv B$, the form 
$\configc{ \syncbi{m()} } \configc{ \Syncbi{B} }\ldots$ (see rule \rn{bCallE}
and Fig.~\ref{fig:fullAlign}).
The latter is susceptible to alignment faults.

In any case, the only sync points in $S$ are the initial configuration and, if present, the terminated one.
If $S$ is not terminated then it has only the initial sync point, 
so $L$ has only a single segment.  This can be matched
by the trace $T$ consisting of the one configuration $\configr{\syncbi{A}}{\sigma}{\sigma'}{\mu}{\mu'}$ which also serves as the single segment for $T$.  (The lemma does not require $T$ to cover all steps of $S$, only 
the sync points of $S$.)

If $S$ terminated, then by projection and then embedding Lemma~\ref{lem:unary-to-bi},
$\configr{\syncbi{A}}{\sigma}{\sigma'}{\mu}{\mu'}$ has a trace $T$ that either terminates, covering the steps of $S$, or faults.  It cannot have a unary fault because $S$ did not.
If it has an alignment fault, which would be via context call transition \rn{bCallX}
or by some step of an environment call executing $\Syncbi{B}$,
we are done.  Otherwise $T$ can be segmented to match the segmentation $L$:
One segment including all of $T$ except the last configuration, followed by that configuration as a segment.

\textbf{Case} weaving axiom 
$\Splitbi{C;D}{C';D'} \weave \splitbi{C}{C'};\splitbi{D}{D'}$.
A trace $S$ of the lhs may make several steps, and may eventually terminate.
If terminated, it has two sync points, initial and final; otherwise only the initial configuration is a sync point.  
If not terminated, the initial configuration for $\splitbi{C}{C'};\splitbi{D}{D'}$
provides the trace $T$ and its single segment.  
If $S$ terminated, then by projection and embedding we obtain a trace $T$ that either
terminates in the same states or has an alignment fault.  
So we either get a matching segmentation of $T$ or an alignment fault.

\textbf{Cases} the other weaving axioms.
The argument is the same as above, in all cases.  The rhs of weaving has additional sync points which are of no consequence except that they can give rise to alignment faults.  
Like the preceding cases, bi-if and bi-while introduce the possibility of alignment fault;
bi-let and bi-var weavings do not. 

Having dispensed with the base cases, we turn to the inductive cases which each have as premise that $BB\weave CC$ (Fig.~\ref{fig:weave}).
The induction hypothesis is that for any trace $S$ of $BB$
and any alignment segmentation $L$ of $S$,
there is a trace $T$ of $CC$ 
such that either its last configuration can alignment-fault
or there is a segmentation $M$ of $T$ that covers the segmentation of $S$.

\textbf{Case}
$BB;DD \weave CC;DD$.

A trace $S$ of $BB;DD$ may include only execution of $BB$ or may continue to execute $DD$. 
\begin{itemize}
\item
In case $S$ never starts $DD$, the trace $S$ determines a trace $S^+$ of $BB$ by removing the trailing ``$;DD$'' 
from every configuration.  (In the special case that $CC$ is run to completion in $S$, i.e., its last configuration has exactly the code $DD$, then the last configuration of $S^+$ has $\syncbi{\skipc}$.)
(Note that $S$ may have sync points besides the initial one, as $BB$ is an arbitrary biprogram.)
By induction we obtain trace $T$ of $CC$ and either alignment fault or segmentation of $T$ that covers the segmentation of $S$.  Adding $;DD$ to every configuration of $T$ yields the requisite segmentation of $S$.
\item 
Now consider the other case: $S$ includes at least one step of $DD$, 
so there is some $i>0$ such that $S_{i-1}$ has code $BB';DD$ for some $BB'$ 
that steps to $\syncbi{\skipc}$, and $S_i$ has code $DD$.  
Because $L$ is the maximal segmentation of $S$, it includes 
a segment that starts with the configuration $S_i$.
Now we can proceed as in the first bullet, to obtain trace $T$ of $CC$ and either alignment fault or segmentation for the part of $S$ up to but not including position $i$.  Catenating this segmentation with the one for the trace of $DD$ from $i$ yields the result.
\end{itemize}

\textbf{Case}
$DD;BB \weave DD;CC$.
For a trace $S$ that never reaches $BB$, the result is immediate by taking $T:=S$ and $M:=L$.
Otherwise, the given trace $S$ can be segmented into an execution of $DD$ that terminates, followed by a terminating execution of $BB$.
By maximality, the segmentation breaks at the semicolon, and we obtain the result using the induction hypothesis similarly to the preceding case.

\textbf{Case} 
$\ifcbi{E\smallSplitSym E'}{BB}{DD} \weave \ifcbi{E\smallSplitSym E'}{CC}{DD}$.
If the given trace $S$ has length one, we immediately obtain a 
length-one trace and segmentation that satisfies the same-projection condition (\ref{eq:sameProj}).

If $len(S)>1$ then the first step does not fault, i.e., the tests agree.
Let $S^+$ be the trace starting at position 1, which is a trace of $BB$ or of $DD$
depending on whether the tests are initially true or false.  
If the tests are false then 
catenating the initial configuration for $\ifcbi{E\smallSplitSym E'}{CC}{DD}$
with $S^+$ provides the requisite $T$, and also its segmentation.
If the tests are true, then apply the induction hypothesis to obtain a trace $T$ for $CC$, and segmentation (if not alignment fault);
and again, prefixing the initial configuration to $T$ and to its first segment yields the result.

\textbf{Case} 
$\ifcbi{E\smallSplitSym E'}{DD}{BB} \weave \ifcbi{E\smallSplitSym E'}{DD}{CC}$.
Symmetric to the preceding case.

\textbf{Case} 
$\whilecbiA{E\smallSplitSym E'}{\P\smallSplitSym \P'}{BB} \weave
  \whilecbiA{E\smallSplitSym E'}{\P\smallSplitSym \P'}{CC}$.
A trace $S$ of lhs can be factored into a series of zero or more iterations possibly followed by an incomplete iteration of left/right/both.
Note that a completed iteration ends with a ``semi-colon removal'' step (the left-, right-, or both-sides loop body finishes and was followed by the bi-loop).  
Because the segmentation $L$ is maximal, it has a separate segment for each iteration.

Now the argument goes by induction on the number of iterations.
The inner induction hypothesis yields segmentation for rhs up to the last iteration,
which in turn ensures that lhs and rhs agree on whether the last iteration is left-only, right-only, or both-sides.  In the one-sided cases there are no sync points.
In the both-sides case, the main induction hypothesis for $BB\weave CC$ 
can be used in a way similar to the argument for sequence weaving above.


\begin{sloppypar}
\textbf{Case} 
$\letcombi{m}{\splitbi{B}{B'}}{ BB } \weave \letcombi{m}{\splitbi{B}{B'}}{ CC }$.
Suppose $S$ is a trace from 
$\configr{\letcombi{m}{\splitbi{B}{B'}}{ BB }}{\sigma}{\sigma'}{\mu}{\mu'}$,
with segmentation $L$.
If $S$ has length one the rest is easy.
Otherwise, $S$ takes at least one step, to
$\configr{BB;\syncbi{\Endlet(m)}}{\sigma}{\sigma'}{\hat{\mu}}{\hat{\mu}'}$ where $\hat{\mu}$ and $\hat{\mu}'$
extend $\mu,\mu'$ with $m\scol B$ and $m\scol B'$ respectively.  
We obtain trace $S^+$ of $\configr{BB;\syncbi{\Endlet(m)}}{\sigma}{\sigma'}{\hat{\mu}}{\hat{\mu}'}$
by omitting the first configuration of $S$---and here we use a trace where the initial 
environments are non-empty.  
Applying the induction hypothesis, we obtain trace $T^+$ for $S^+$,
and either alignment fault or matching segmentation $M^+$.
Prefixing the configuration 
$\configr{\letcombi{m}{\splitbi{B}{B'}}{ CC }}{\sigma}{\sigma'}{\mu}{\mu'}$
yields the requisite trace $T$.  
If there is alignment fault, we are done.
Otherwise, if $BB$ begins with an aligning bi-program,
i.e., if $S_1$ is a sync point in $S$, then
let segmentation $M$ consist of the singleton 
$\configr{\letcombi{m}{\splitbi{B}{B'}}{ CC }}{\sigma}{\sigma'}{\mu}{\mu'}$
followed by the elements of $M^+$. 
Finally, if $S_1$ is not a sync point in $S$,
we  obtain $M$ by prefixing 
$\configr{\letcombi{m}{\splitbi{B}{B'}}{ CC }}{\sigma}{\sigma'}{\mu}{\mu'}$
to the first segment in $M^+$.
\end{sloppypar}

\textbf{Case}  
$\varblockbi{x\scol T \smallSplitSym x'\scol T'}{BB} 
\weave \varblock{x\scol T\smallSplitSym x'\scol T'}{CC}$.
By semantics and induction hypothesis, similar to the preceding case for bi-let.
\end{proof}

\section{Appendix: Guide to identifiers and notations} \label{sec:app:guide}

The prime symbol, like $\sigma'$, is consistently used for right side in a pair of commands, states, etc.
Other decorations, like $\dot{\sigma}$ and $\ddot{\tau}$, are used for fresh identifiers in general.

\begin{small}
\begin{table}[h!]
\begin{footnotesize}
\begin{tabular}{lll}
$A$ & atomic command & Fig.~\ref{fig:bnf} \\ 
$B,C,D$ & command & Fig.~\ref{fig:bnf} \\ 
$BB,CC,DD$ & biprogram & Fig.~\ref{fig:bnf} \\ 
$E$ & program expression & Fig.~\ref{fig:bnf} \\ 
$G,H$ & region expression & Fig.~\ref{fig:bnf} \\ 
$F$ & either program or region expression & Fig.~\ref{fig:bnf} \\ 
$f,g$ & field name & Fig.~\ref{fig:bnf}, Eqn.~(\ref{eq:effects}) \\
$K$ & reference type & Fig.~\ref{fig:bnf} \\ 
$M,N,L$ & module name \\
$T$       & data type & Fig.~\ref{fig:bnf} \\ 
$T,U,V,W$ & trace (unary or biprogram) \\
$P,Q,R$ & formula & Fig.~\ref{fig:preds} \\
$\P,\Q,\R,\M,\N$ & relation formula & Fig.~\ref{fig:relFormulas} \\
$x,y,z,r,s$ & program variable \\
$\eff,\effe,\delta$ & effect expression & Eqn.~(\ref{eq:effects}) \\
$\Gamma$ & typing context \\
$\Phi,\Theta,\Psi,$ & unary or relational hypothesis context & Sects.~\ref{sec:ucorr} and~\ref{sec:relspec}  \\
$\phi,\theta,\psi$ & unary or relational context model & Sects.~\ref{sec:ctxmodel} and~\ref{sec:rel-corr}    \\
$\Phi_0,\Phi_1,\Phi_2$ & components of relational context & see preceding Def.~\ref{def:wfjudgeRel} \\
$\sigma,\tau,\upsilon$ & state & Sect.~\ref{sec:states} \\
$\hat{\sigma}$ & state with spec-only vars &  \\
$\pi,\rho$ & refperm & Sect.~\ref{sec:effect} \\
\end{tabular}
\end{footnotesize}
\medskip
\caption{Use of identifiers}
\label{tab:ident}
\end{table}
\end{small}

\begin{small}
\begin{table}[h!]               
\begin{footnotesize}
\begin{tabular}{lll}
$\ind{}{}$ & separator function & Eqn.~(\ref{eq:sepagree}) \\
$\emptymod$ & default/main module & Sect.~\ref{sec:modules} \\
$\emptyeff$ & empty effect & Eqn.~(\ref{eq:effects}) \\
$\eff\setminus\effe$ & effect subtraction & following Def.~\ref{def:framedreadsDyn} \\ 
$\unioneff{-}{-}$ & combination of effects & following Def.~\ref{def:framedreadsDyn} \\ 
$\Img f$ & image in region expression or effect & Fig.~\ref{fig:bnf}, Eqn.~(\ref{eq:effects}) \\
$\disj{}{}$ & disjoint regions & Fig.~\ref{fig:preds} \\
$\imports \;\; \irimports$ & module import & Sect.~\ref{sec:modules} \\ 
$\eqbi$ & equal reference or region, modulo refperm & Fig.~\ref{fig:relFormulas},
Fig.~\ref{fig:relFmlaSem}
 \\
$\Agr x \;\; \Agr G\Img f$ & agreement formulas & Fig.~\ref{fig:relFormulas}, Fig.~\ref{fig:relFmlaSem} \\
$\leftF{-},\; \rightF{-},\; \Both{-} $ & embed unary formula (left, right, both)  & Fig.~\ref{fig:relFormulas}, Fig.~\ref{fig:relFmlaSem}  \\
$\leftex{-},\;  \rightex{-} $ & embed unary expression  & Fig.~\ref{fig:relFormulas}, Fig.~\ref{fig:relFmlaSem}  \\
$\later$ & possibly (in an extended refperm)   & Fig.~\ref{fig:relFormulas}, Fig.~\ref{fig:relFmlaSem} \\

$\conjInv$ & conjoin invariant & 
Def.~\ref{def:conjInv} \\
$\Syncbi{-}$ & full alignment of command & Fig.~\ref{fig:fullAlign} \\

$\weave$ & weave biprogram & Fig.~\ref{fig:weave} \\

$\extend{\sigma}{x}{v}$ & extend state to map $x$ to $v$ & Sect.~\ref{sec:states} \\
$\update{\sigma}{x}{v}$ & update value of $x$ & Sect.~\ref{sec:states} \\
$\drop{\sigma}{x}$ & drop variable $x$ from state & Sect.~\ref{sec:states} \\
$\successorTo$ & can succeed & Sect.~\ref{sec:effect} \\ 

$\delta^\oplus$ & abbreviates effect $\delta,\rd{\lloc}$ & preceding Def.~\ref{def:valid} \\

$\stackrel{\pi}{\sim}$ & equiv modulo refperm & Sect.~\ref{sec:effect} \\
$\RprelT{\pi\smallSplitSym \pi'}{}{} \;\; \RprelTS{\pi\smallSplitSym \pi'}{}{}$ &
   state pair isomorphism & Def.~\ref{def:state-iso-rel} \\
$\RprelT{\pi}{}{} \;\; \RprelTS{\pi}{}{}$ & state isomorphism, outcome equivalence & Def.~\ref{def:state-iso} \\

$\trans{\phi} \;\; \tranStar{\phi}$ & unary transitions & Figs.~\ref{fig:transSel} and~\ref{fig:trans} \\
$\biTrans{\phi} \;\; \biTranStar{\phi}$ & biprogram transitions & Figs.~\ref{fig:biprogTrans} and~\ref{fig:biprogTransU} \\

$\splitbir{C}{C'}$ & r-bi-com biprogram & Sect.~\ref{sec:bitrans} \\

$\sigma\allowTo\tau\models\eff$ & allows change &  Sect.~\ref{sec:effect} \\
$\AllowedDep{\tau}{\tau'}{\upsilon}{\upsilon'}{\eff}{\sigma}{\delta}{\pi}$ & allowed dependence & Def.~\ref{def:allowedDepEResB}\\
$P\models \eff\leq\effe $ & subeffect judgment & Eqn.~(\ref{eq:subeffect}) \\
$P\models\fra{P}{\eff}$ & framing of a formula & Eqn.~(\ref{eq:frmAgree}) \\
$\P\models \fra{\effe|\effe'}{\Q}$ & framing of a relation & Sect.~\ref{sec:biprog} \\ 

$\Phi \proves_M  C: \flowty{P}{Q}{\eff}$ & correctness judgment & Def.~\ref{def:wfjudge}, Def.~\ref{def:valid} \\
$\Phi\proves_M CC: \rflowty{\P}{\Q}{\eff|\eff'} $ & relational correctness judgment & Def.~\ref{def:wfjudgeRel}, Def.~\ref{def:validR} \\

$\locEq_\delta(\flowty{P}{Q}{\eff})$ \;\; $\LocEq_\delta(\Phi)$  & local equivalence specs & Def.~\ref{def:loceq}  \\

$\prePostImply$ & covariant spec implication & Def.~\ref{def:prePostImply} \\

\end{tabular}
\end{footnotesize}
\medskip
\caption{Use of symbols}
\end{table}
\end{small}

\bibliographystyle{ACM-Reference-Format}
\bibliography{paper}

\ifarxiv 
\newpage

\begin{tiny}
\printindex
\end{tiny}

\addcontentsline{toc}{section}{\protect\numberline{}{Index of defined terms}}

\fi 

\end{document}